\documentclass{amsart}

\usepackage{slashed}
\usepackage{mathrsfs}
\usepackage{amsmath,amsthm}
\usepackage{amssymb}
\usepackage{MnSymbol}
\usepackage[margin=1.4in,dvips]{geometry}

\usepackage{tikz}

\usepackage{hyperref}

\allowdisplaybreaks
\makeindex

\hyphenation{space-time}
\hyphenation{Christo-doulou}
\hyphenation{Schwarz-schild}
\hyphenation{Lor-entz-ian}

\newcommand{\NN}{\mathbb{N}}
\newcommand{\RR}{\mathbb{R}}
\newcommand{\ZZ}{\mathbb{Z}}

\DeclareMathOperator{\Sym}{Sym}


\newtheorem{definition}{Definition}[section]

\newtheorem{remark}{Remark}[section]
\newtheorem{lemma}{Lemma}[subsection]
\newtheorem{theorem}{Theorem}[section]
\newtheorem*{theorem*}{Theorem}
\newtheorem*{corollary*}{Corollary}
\newtheorem{proposition}{Proposition}[subsection]

\newtheorem{conjecture}{Conjecture}[section]

\newtheorem{corollary}{Corollary}[section]
\newtheorem{problem}{Open problem}

\title[Time-Periodic Einstein--Klein--Gordon Bifurcations Of Kerr]{Time-Periodic Einstein--Klein--Gordon Bifurcations Of Kerr}
\author{Otis Chodosh}
\author{Yakov Shlapentokh-Rothman}
\address{Princeton University, Department of Mathematics, Fine~Hall,~Washington~Road,~Princeton,~NJ~08544}
\date{\today}

\begin{document}

\renewcommand\indexname{Index of Notation}

\maketitle

\begin{abstract}We construct one-parameter families of solutions to the Einstein--Klein--Gordon equations bifurcating off the Kerr solution such that the underlying family of spacetimes are each an asymptotically flat, stationary, axisymmetric, black hole spacetime, and such that the corresponding scalar fields are non-zero and time-periodic. An immediate corollary is that for these Klein--Gordon masses, the Kerr family is not asymptotically stable as a solution to the Einstein--Klein--Gordon equations.
\end{abstract}

\section{Introduction}
 The sub-extremal Kerr family of spacetimes $\left(\mathcal{M},g_{a,M}\right)$ are a two-parameter family of Lorentzian manifolds which are solutions to the Einstein vacuum equations
\begin{equation}\label{eve}
Ric\left(g\right) = 0.
\end{equation}
The parameters $(a,M)$ are ``admissible'' if $|a| < M$; $M$ denotes the mass, and $a$ denotes the specific angular momentum. The so called ``exterior'' region of each these solutions is \emph{asymptotically flat}, \emph{stationary}, and is bounded by a \emph{non-degenerate event horizon}; in fact, the Kerr exteriors are expected to be the unique such solutions (see Conjecture~\ref{blackUnique} below), and thus they play a central role in physics via their representation of the possible end states of ``gravitational collapse'' (see the textbook~\cite{wald}).

 Surprisingly, despite the fundamental importance of the Kerr solution, some of the most basic questions have remained unanswered to this day. In particular, the question of the dynamics of small perturbations remains open. The following two conjectures represent two of the most fundamental open problems in classical general relativity.

\begin{conjecture}\label{blackStab} (Asymptotic Stability of the Kerr Family) The maximal Cauchy development of a small perturbation of sub-extremal Kerr initial data possesses a black hole and exterior region, and in the exterior region, the development remains close to the perturbed spacetime and asymptotically settles down to (a possibly different) Kerr exterior spacetime.
\end{conjecture}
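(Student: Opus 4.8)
Of course this is one of the central open problems in the field rather than a routine exercise, so what follows is the skeleton of a program rather than a complete argument. The plan is to run a continuity/bootstrap argument in a carefully chosen gauge, with a robust linear decay theory as the engine. The first task is to pin down the meaning of ``remains close'' and ``settles down'': because \eqref{eve} is diffeomorphism invariant and the Kerr family is finite-dimensional, the final state is only determined modulo gauge and modulo the limiting parameters (mass, angular momentum, and the Poincar\'e degrees of freedom). So I would fix a gauge — say a generalized harmonic/wave gauge, or a double-null gauge adapted to the future event horizon — turning \eqref{eve} into a quasilinear system of wave type, foliate the exterior by hyperboloidal or characteristic slices reaching future null infinity and the horizon, and introduce modulation parameters $a(\tau), M(\tau)$ (and the remaining Kerr parameters) tracking the best-fit Kerr solution along the foliation.

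The analytic heart is the linearized theory about a fixed sub-extremal Kerr. Here I would: (i) use the Teukolsky equations for the extreme curvature components together with a Chandrasekhar-type transformation to a Regge--Wheeler-type equation; (ii) invoke mode stability — Whiting's theorem and its quantitative strengthening ruling out real modes on the horizon and at null infinity — to exclude exponentially growing and stationary obstructions; (iii) prove integrated local energy decay (a Morawetz estimate), whose multiplier must degenerate at the trapped set, which for Kerr is a phase-space rather than a physical-space phenomenon and hence demands either frequency-localized multipliers or commutation with the Carter operator; (iv) adjoin the red-shift vector field near the horizon to control the solution up to and including $\mathcal{H}^+$; and (v) run an $r^p$-weighted hierarchy to upgrade boundedness to quantitative polynomial decay. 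These estimates then have to be transferred to the full linearized Einstein system, after quotienting out the linearized Kerr directions and the pure-gauge solutions, which is where the precise structure of the chosen gauge matters.

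For the nonlinear problem I would set up a bootstrap on the decay norms produced above: assume the perturbation and the modulation parameters decay at the linear rates, Duhamel against the linear estimates, and show the quadratic and higher error terms — which should exhibit good null / weak-null structure in the chosen gauge — are time-integrable, while the modulation equations are chosen precisely so that the projection of the solution onto the non-decaying (parameter $+$ gauge) directions vanishes identically; improving the bootstrap constants closes the argument, and a separate teleological construction then locates the future event horizon and exhibits the black hole region. The main obstacle, I expect, is exactly the tension between the slow, merely polynomial decay available from the linear theory — made worse by the weak decay of some metric components and by superradiance, since Kerr (unlike Schwarzschild) has no globally timelike Killing field — and the faster decay one would want in order to absorb the quasilinear self-interactions; balancing trapping, the horizon red-shift, and the null structure of \eqref{eve} against each other is the crux. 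It is worth noting in passing that the \emph{Einstein--Klein--Gordon} analogue of this statement is in fact \emph{false} for the Klein--Gordon masses considered in this paper, precisely because step (ii) fails — mode stability breaks down, and the time-periodic bifurcations constructed below are the obstruction.
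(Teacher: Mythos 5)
The statement you were asked about is Conjecture~\ref{blackStab}, not a theorem of the paper; the paper explicitly describes it as one of the most fundamental open problems in classical general relativity and offers no proof of it. There is therefore nothing in the paper against which to check your argument line by line. What the paper \emph{does} prove is Theorem~\ref{timeperiodicsoln}, whose immediate corollary is that the \emph{Einstein--Klein--Gordon} analogue of Conjecture~\ref{blackStab} is false for certain Klein--Gordon masses; you correctly flag this distinction in your final sentence, which is the one place your write-up genuinely intersects the content of the paper.

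As a sketch of the community's program for the vacuum conjecture, your outline is a fair summary of the standard ingredients: gauge choice (generalized wave or double-null), modulation over the Kerr family and gauge directions, Teukolsky plus Chandrasekhar transformation, mode stability, frequency-localized Morawetz estimates degenerating at the phase-space trapped set, the red-shift near the horizon, $r^p$-weighted hierarchies, and a bootstrap exploiting null structure. You also correctly identify the central tension between the slow polynomial decay forced by trapping and superradiance and the integrability needed to close the quasilinear bootstrap. But be clear-eyed that this remains a proposal, not a proof: the steps you list as if they were modular --- in particular (iii) and the transfer from the Teukolsky scalar to the full linearized system modulo gauge, and the closing of the bootstrap in the full sub-extremal range $|a|<M$ --- are precisely where the open difficulties live, and no amount of restating them resolves them. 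Had you presented this as a completed proof it would have a genuine gap at essentially every numbered step; as an honest program skeleton, consistent with the literature the paper cites (e.g.~\cite{waveKerr} and the surrounding discussion of trapping and superradiance), it is reasonable and correctly situated relative to the paper's actual contribution.
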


\begin{conjecture}\label{blackUnique} (Uniqueness of the Kerr Family) Any sufficiently regular, asymptotically flat, and stationary solution to~(\ref{eve}) which is bounded by a non-degenerate event horizon is isometric to a Kerr exterior spacetime.
\end{conjecture}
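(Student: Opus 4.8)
The plan is to follow the classical ``no-hair'' program, which reduces the conjecture to a sequence of rigidity statements; since the conjecture is in fact open, the honest goal is to outline this reduction and to isolate exactly where it fails to close. First I would invoke Hawking's rigidity theorem: a stationary, asymptotically flat, non-static solution of~\eqref{eve} with a non-degenerate event horizon admits an additional axial Killing field, so that the spacetime is stationary and axisymmetric, with horizon generators rotating at a constant angular velocity. In parallel one extracts from the causal structure that the domain of outer communication is globally hyperbolic and, via Hawking's topology theorem together with the Gauss--Bonnet argument in vacuum, that the horizon cross-sections are spheres. This step replaces the \emph{a priori} weak hypotheses of the conjecture with a rigid geometric normal form.

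Second, in the static case one shows the solution is Schwarzschild: either by Israel's original divergence-identity argument or, more robustly, by the Bunting--Masood-ul-Alam conformal doubling trick combined with the Riemannian positive mass theorem, which forces a spatial slice to be conformally flat with a single horizon and hence isometric to a Schwarzschild slice. Third, in the stationary--axisymmetric case one passes to Weyl--Papapetrou coordinates on the two-dimensional orbit-space quotient, encodes the metric in the complex Ernst potential $\mathcal{E}$, which solves a harmonic-map (sigma-model) equation into the hyperbolic plane, and sets up the associated boundary value problem on the boundary of the half-plane (axis segments together with the horizon ``rod''). The Carter--Robinson identity, or equivalently the Mazur/Bunting integral identities exploiting the $SL(2,\mathbb{R})$ symmetry of the target, then shows that two solutions with the same mass, angular momentum, and horizon structure must coincide, identifying the solution with a member of the Kerr family; one checks separately that non-degeneracy of the horizon places the parameters in the admissible range $|a|<M$.

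The main obstacle is the analyticity hypothesis hidden in the first step: Hawking's construction of the axial Killing field near the horizon is unconditional only in the real-analytic category, and removing it in general remains open --- the program of Ionescu--Klainerman and Alexakis--Ionescu--Klainerman establishes uniqueness only for spacetimes close to Kerr (or under further structural assumptions on the horizon bifurcation surface). A second, independent obstacle is that the uniqueness step as described presupposes a connected horizon: excluding equilibrium configurations of several rotating black holes is known in the static case but only partially understood in the rotating case. Consequently any rigorous conclusion at present is conditional --- one must either assume real-analyticity and connectedness of the horizon, or supply the missing unique-continuation argument at the horizon and the non-existence proof for multi-horizon balanced configurations --- which is precisely why the statement is phrased as Conjecture~\ref{blackUnique} rather than a theorem.
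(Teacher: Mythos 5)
The statement you were asked to prove is explicitly labeled as a \emph{conjecture} in the paper, and the paper does not prove it; it merely lists the known partial results (static case by Israel, axisymmetric case by Carter--Robinson, near-Kerr perturbations by Alexakis--Ionescu--Klainerman and Wong--Yu, and the real-analytic case by Hawking and Chru\'{s}ciel--Costa--Lopes) and uses the conjecture only as motivation. Your ``proposal'' correctly does not claim to close the conjecture: you reconstruct the standard no-hair program (Hawking rigidity producing the axial Killing field, Israel/Bunting--Masood-ul-Alam for the static case, Ernst potential plus Carter--Robinson/Mazur sigma-model identities for the stationary axisymmetric case) and then accurately isolate the two genuine open gaps --- the analyticity needed for Hawking's rigidity argument (only bypassed near Kerr via unique continuation) and the non-exclusion of multi-component rotating horizons. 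This is essentially the same survey the paper gives in its surrounding discussion, just fleshed out, so there is nothing to correct; the key point is simply that no one, including you, should present this as a theorem, and you have respected that.
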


While there has been little direct progress on Conjecture~\ref{blackStab},\footnote{There has however been a lot of progress concerning scalar model problems (see~\cite{waveKerr} and Section~\ref{linearscalar} below), and we do know that there \emph{exist} spacetimes which dynamically settle down to Kerr~\cite{bholescatter}.} Conjecture~\ref{blackUnique} is known to be true under various additional assumptions: if the spacetime is static~\cite{israel}, if there exists a suitable axisymmetric Killing vector field~\cite{carter3,robinson}, if the spacetime is assumed to be a small perturbation of a Kerr spacetime~\cite{aik,aik2,wongyu}, or if the spacetime is assumed to be real analytic~\cite{chrusciellopes,hawking}.\footnote{For the sake of brevity we have suppressed many important technical assumptions necessary for these results; we direct the interested reader to~\cite{reviewunique,costa} and the references therein for a very thorough discussion.}

Finally, it is important to note that a priori, as it would with any evolutionary PDE, an understanding of the long time behavior of solutions to~(\ref{eve}) also requires a classification of \emph{time-periodic} or other ``soliton'' like solutions. Here, heuristics suggest that due to the emission of ``gravitational waves'' all such solutions must in fact be stationary; in the time-periodic setting this issue has been recently addressed in the works~\cite{periodic,unique}.

Of course, one may also study the Einstein equations in the presence of matter:
\begin{equation}\label{einsteinmatter}
Ric\left(g\right) - \frac{1}{2}gR\left(g\right) = \mathbb{T}(g),
\end{equation}
where $R$ denotes the \emph{scalar curvature} of $g$ and $\mathbb{T}$ denotes the \emph{energy-momentum tensor} (which must be specified by the matter theory), and the equations~(\ref{einsteinmatter}) may require coupling to additional matter equations. It is of considerable mathematical and physical interest to understand the effect of matter on the validity of Conjectures~\ref{blackStab} and~\ref{blackUnique} (note that our universe is certainly not a vacuum!).

Scalar fields are essentially the simplest possible form of matter we can consider; a scalar field of mass $\mu^2 \geq 0$ is determined by a single function $\Psi : \mathcal{M} \to \mathbb{C}$ with energy-momentum tensor
\begin{equation}\label{emot}
\mathbb{T}_{\alpha\beta} \doteq \text{Re}\left(\partial_{\alpha}\Psi\overline{\partial_{\beta}\Psi}\right) - \frac{1}{2}g_{\alpha\beta}\left[g^{\gamma\delta}\text{Re}\left(\partial_{\gamma}\Psi\overline{\partial_{\delta}\Psi}\right) + \mu^2\left|\Psi\right|^2\right].
\end{equation}
Recalling that the twice contracted second Bianchi identity implies that the left hand side of~(\ref{einsteinmatter}) is always divergence free, we observe that any solution to~(\ref{einsteinmatter}) with $\mathbb{T}$ given by~(\ref{emot}) must satisfy
\begin{equation}\label{kg}
\Box_g\Psi - \mu^2\Psi = 0,
\end{equation}
i.e.~$\Psi$ must satisfy the Klein--Gordon equation. We thus refer to~(\ref{einsteinmatter}) with $\mathbb{T}$ given by~(\ref{emot}) as the Einstein--Klein--Gordon (EKG) equations.

In 1972, on the heels of the discovery of \emph{superradiant}\footnote{Cf.~the final two paragraphs of Section~\ref{linearscalar} below.} scattering~\cite{zeldovich}, Misner suggested the possibility of ``floating orbits'' where a massive ``particle'' orbiting a black hole produces an exact balance between energy extracted via superradiance and energy radiated to infinity. Soon after, Press and Teukolskly suggested that it was also possible for the energy extracted to dominate the energy radiated away and to thus produce an instability which they coined the ``Black-hole Bomb''~\cite{pressteuk}.

The instability can be naturally studied in the context of the linear massive Klein--Gordon equation~\eqref{kg} on a \underline{fixed} Kerr background, and following the pioneering heuristic and numerical works~\cite{ddr,det,ze}, the study of this (linear) instability reached a relatively refined state (see the recent book~\cite{super}); in particular, the instabilities were rigorously constructed in the work~\cite{shlapgrow}. However, the original question of Misner remained unresolved; do there exist time-periodic black hole solutions to the full Einstein--Klein--Gordon equations?\footnote{We note that the work~\cite{shlapgrow} produced exactly time-periodic solutions to the linear Klein--Gordon equation~\eqref{kg} on fixed Kerr backgrounds; however, it is not clear a priori if the necessary delicate balancing of energy extracted from the black hole and energy radiated to infinity can be maintained in the highly non-linear setting of Einstein--Klein--Gordon.}

Very recently, in a breakthrough numerical work~\cite{hairy}, Herdeiro and Radu have constructed the desired time-periodic black hole solutions to Einstein--Klein--Gordon. These solutions ``originate'' as bifurcations of the Kerr family, but the curve of solutions may be continued past the point where they are a ``small perturbation'' of Kerr. In fact, $1$-parameter families may be constructed which form a continuous bridge between Kerr black holes and boson stars (cf.~Section~\ref{uniqueblack} below). Following the paper~\cite{hairy}, there has been a series of works studying more refined properties of the time-periodic solutions and finding analogous constructions in different settings (see, e.g.,~\cite{hr,bhr,bchr,hrr,chrr}).

In this paper we initiate the mathematical study of these time-periodic black hole solutions to Einstein--Klein--Gordon and give a proof of their existence in a small neighborhood of the Kerr family. Our main result is the following:

\begin{theorem}\label{timeperiodicsoln} There exists Klein--Gordon masses $\mu^2 > 0$ such that there exists a $1$-parameter family of smooth spacetimes $(\mathcal{M},g_{\delta})$ and scalar fields $\Psi_{\delta} : \mathcal{M} \to \mathbb{C}$ indexed by $\delta \in [0,\epsilon)$ such that
\begin{enumerate}
    \item For each $\delta \geq 0$ the pair $(\mathcal{M},g_{\delta})$ and $\Psi_{\delta}$ yields a solution to the Einstein--Klein--Gordon equations with mass $\mu^2$.
    \item The spacetimes $(\mathcal{M},g_{\delta})$ are all stationary, axisymmetric, asymptotically flat, and posses a non-degenerate bifurcate event horizon.
    \item For $\delta > 0$ the scalar field $\Psi_{\delta}$ is non-zero, \underline{time-periodic}, and decays exponentially (in any asymptotically flat chart) along any asymptotically flat Cauchy hypersurface.
    \item The $1$-parameter family bifurcates off the Kerr family in the sense that $(\mathcal{M},g_0)$ is isometric to a sub-extremal Kerr exterior spacetime with $0<|a|<M$, the family is differentiable with respect to $\delta$ at $\delta = 0$, and $\lim_{\delta\to 0}\delta^{-1}\Psi_{\delta} = \hat{\Psi}$, where $\hat{\Psi}$ is a non-zero time-periodic solution to the Klein--Gordon equation~(\ref{kg}) on $(\mathcal{M},g_0)$.
\end{enumerate}
\end{theorem}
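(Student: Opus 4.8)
The plan is to realize the family as a curve bifurcating from Kerr by a Crandall--Rabinowitz-style argument (a Lyapunov--Schmidt reduction), exploiting that the energy-momentum tensor \eqref{emot} is \emph{quadratic} in $\Psi$. Hence $(g_{a,M},0)$ solves \eqref{einsteinmatter}--\eqref{kg} for every admissible $(a,M)$ and every $\mu^2$, and the linearization at such a point decouples: the metric variation solves the linearized \emph{vacuum} equations, while the scalar variation $\hat\Psi$ solves the Klein--Gordon equation \eqref{kg} on fixed Kerr. The seed of the construction is therefore a non-trivial solution of \eqref{kg} on $(\mathcal{M},g_{a,M})$ of separated form $\hat\Psi = e^{i(m\phi-\omega t)}S_{m\ell}(\theta)R(r)$ with $m\in\ZZ\setminus\{0\}$ and frequency $\omega = m\Omega_H$ equal to the \emph{superradiant threshold}, where $\Omega_H = a/(2Mr_+)$; this is the unique frequency making $e^{i(m\phi-\omega t)}$ smooth across the (future and past) event horizon, so that $\hat\Psi$ descends to the full bifurcate spacetime. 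One must tune the parameters $(a/M,\mu^2,m,\ell)$, along a curve in parameter space (Hod's resonance condition), so that the radial ODE has a solution regular at $r=r_+$ and exponentially decaying at infinity; the latter forces $\mu^2>m^2\Omega_H^2$, which requires $a\ne 0$ and produces the exponential decay in item (3). Because the phases cancel in \eqref{emot}, the stress-energy built from such $\hat\Psi$ is stationary and axisymmetric, so one seeks the nonlinear solutions among stationary, axisymmetric metrics carrying a scalar field $\Psi = e^{i(m\phi-\omega t)}\psi(r,\theta)$.

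First I would pass to a reduced elliptic problem: quotient $\mathcal{M}$ by the $\RR_t\times\mathrm{U}(1)_\phi$ isometry group, impose a gauge adapted to the symmetry (e.g.\ a DeTurck-type generalized harmonic gauge, or an explicit Weyl--Papapetrou-type parametrization of the metric), and obtain a quasilinear second-order elliptic system for finitely many functions of $(r,\theta)$ together with $\psi$ on the half-strip $\{r\ge r_+\}\times[0,\pi]$, degenerate along the horizon and along the axis $\{\theta\in\{0,\pi\}\}$. Regularity at the horizon and axis, asymptotic flatness as $r\to\infty$, and the threshold condition $\omega = m\Omega_H$ (equivalently, vanishing scalar flux through the horizon) enter as boundary conditions. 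One works in weighted H\"older or Sobolev spaces encoding conormal behaviour at the horizon and axis, standard asymptotically flat weights for the metric, and an exponential weight $e^{\sqrt{\mu^2-\omega^2}\,r}$ for $\psi$, writing the system as $\mathcal{F}(h,\psi;\lambda)=0$ with $\lambda$ collecting the finitely many free parameters (the Kerr moduli and $\omega$) and $\mathcal{F}(0,0;\lambda_0)=0$ the chosen Kerr seed.

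Next, the linearized analysis. After gauge fixing, $D_{(h,\psi)}\mathcal{F}(0,0;\lambda_0)$ should be Fredholm on these spaces, and its block-triangular structure splits the kernel into: (i) the stationary, axisymmetric, asymptotically flat, horizon-regular linearized vacuum solutions --- by (the linearized analogue of) perturbative black-hole uniqueness exactly the finite-dimensional span of the Kerr deformations $\partial_a g_{a,M},\partial_M g_{a,M}$; and (ii) the kernel of $\Box_{g_0}-\mu^2$ in the exponentially weighted space, which we have arranged to be $\RR\hat\Psi$ --- here one must also verify this simplicity and identify the cokernel with $\RR\,\overline{\hat\Psi}$, i.e.\ formal self-adjointness of the separated radial problem at the threshold. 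One then performs a Lyapunov--Schmidt reduction: the parameters $\lambda$ are used to span the Kerr-moduli directions and to absorb the finitely many solvability conditions of the metric block (met by adjusting the ADM mass and angular momentum --- where first-law/Smarr-type identities enter), while the scalar amplitude $\delta$ parametrizes the branch via the normalization $\langle\psi,\hat\Psi\rangle=\delta$; the one remaining obstruction, the projection of the $\psi$-equation onto $\overline{\hat\Psi}$, is removed by adjusting $\omega$ once one checks the Crandall--Rabinowitz transversality condition $\partial_\omega D_\psi\mathcal{F}(0,0;\lambda_0)\hat\Psi\notin\mathrm{Range}\,(\Box_{g_0}-\mu^2)$ --- equivalently, that the relevant radial eigenvalue crosses zero with nonzero speed in $\omega$ (a Hod-type non-degeneracy). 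The implicit function / Crandall--Rabinowitz theorem then yields a curve $\delta\mapsto(h_\delta,\psi_\delta;\lambda(\delta))$, smooth by elliptic regularity, with $\psi_\delta=\delta\hat\Psi+O(\delta^2)$ and $h_\delta=O(\delta^2)$; properties (1)--(4) follow, time-periodicity being built into the ansatz, exponential decay from the weighted space, the bifurcate non-degenerate horizon from smallness of $h_\delta$ together with the standard extension of a non-degenerate Killing horizon to a bifurcate one, and the differentiability and $\lim_{\delta\to 0}\delta^{-1}\Psi_\delta=\hat\Psi$ from the expansion.

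I expect the main obstacle to be two-fold. First, the seed mode: at the threshold $\omega = m\Omega_H$ the separated radial equation is at a critical configuration --- the horizon turns from an irregular into a regular singular point and its two indicial exponents collide, while at infinity one sits on the boundary of the exponentially bound regime --- so exhibiting an \emph{explicit} admissible $\mu^2$ (not merely a generic one), and establishing simplicity of the cloud, the precise Fredholm indices, and the transversality, all require delicate ODE asymptotics (confluent Heun / Whittaker function analysis) rather than soft arguments. Second, the global elliptic theory for the coupled quasilinear system, which is degenerate-elliptic simultaneously at the horizon and at the axis, for which asymptotic flatness is itself a nonlinear condition, and whose gauge must be verified a posteriori (the DeTurck ``soliton'' must be shown to vanish): proving that the linearized operator is Fredholm with computable kernel and cokernel on spaces in which the nonlinearity is also controlled, in a single gauge compatible with all three boundary behaviours, and that the resulting metrics are honestly (not merely formally) asymptotically flat, is the technical heart of the argument.
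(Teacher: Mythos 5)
Your proposal has a genuine gap precisely where you locate the ``one remaining obstruction.'' You propose to remove the scalar-sector solvability condition by adjusting the frequency $\omega$ and verifying a Crandall--Rabinowitz transversality condition $\partial_\omega D_\psi\mathcal{F}\,\hat\Psi \notin \mathrm{Range}$. But $\omega$ is \emph{not} a free parameter here: you yourself (correctly) imposed $\omega = m\Omega_H$ so that the mode descends smoothly to the bifurcate horizon, and this condition is forced for any time-periodic solution by the requirement of vanishing energy flux across the horizon. Once the azimuthal number $m\in\ZZ_{\neq 0}$ and the background are fixed, $\omega$ is uniquely determined; you cannot vary it independently to saturate a transversality condition. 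The paper's central structural observation is exactly that this obstruction forces a different choice of bifurcation parameter: one treats the Klein--Gordon mass $\mu^2$ as the eigenvalue of a variational problem for the reduced scalar field $\psi$ (the Lagrangian $\mathscr{L}_{\mu^2}$, solved by the direct method; Proposition~\ref{itexists}). The mass $\mu^2$ then naturally drifts along the branch, and it is pinned to a constant value only at the very last step by varying the background Kerr parameters $(a,M)$ and invoking a Lipschitz implicit function theorem (Section~\ref{arrange}). Your outline never confronts this: you list $\mu^2$ among the seed-mode tuning parameters, but then parametrize the branch by the scalar amplitude with $\omega$ and the Kerr moduli as the free variables, which leaves the Klein--Gordon mass either undercontrolled or illegally frozen.

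Beyond that gap, the route you sketch is genuinely different from the paper's, and it is worth noting what each buys. You propose a DeTurck-type gauge, a Fredholm analysis of the full linearized coupled system at Kerr, and an identification of the kernel with Kerr moduli plus $\hat\Psi$, which would require a linearized black hole uniqueness theorem in your chosen gauge, compatible simultaneously with the degenerate horizon, the degenerate axis, and the corner points where they meet, plus an a posteriori gauge-removal argument; this is essentially the hard part of the problem and the proposal leaves it as an expectation rather than proving it. The paper instead never characterizes the kernel or cokernel of a single big linearized operator. It exploits the Carter--Robinson decomposition of stationary axisymmetric metrics (Weinstein/Ionescu--Klainerman renormalization), which decouples the metric unknowns hierarchically: $\sigma$ is a forced Laplace problem on $\mathbb{R}^4$, the $(X,Y)$ pair is a perturbed harmonic map into $\mathbb{H}^2$ solved by a coercive variational functional, $W$ and $\lambda$ are integrated from first-order equations whose compatibility conditions follow from the previous stages, and $\psi$ is handled by its own eigenvalue problem. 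The coupling is closed by a nested sequence of contraction-mapping arguments (Lemmas~\ref{basic}--\ref{fixit}, Propositions~\ref{fixsig}--\ref{fixlam}), and the degeneracies at the axis/horizon/poles are absorbed by reinterpreting the $(\rho,z)$, $(s,\chi)$, $(s',\chi')$ equations as rotationally symmetric elliptic problems in higher-dimensional Euclidean spaces. The trade-off: the paper's approach requires no global Fredholm theory or linearized uniqueness, but pays for it in the bookkeeping of coordinate patches, hatted function spaces, weights, and compatibility conditions; your approach, if the Fredholm and transversality inputs could be supplied, would give a cleaner bifurcation picture with higher regularity in $\delta$ for free, but those inputs are not available as stated, and in particular the $\omega$-transversality input cannot be supplied at all.
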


\begin{remark}
In fact, we produce solutions for a set of $\mu^{2}>0$ with positive Lebesgue measure. 
\end{remark}

\begin{remark}\label{lintononlin}As we have already remarked, the existence of time-periodic solutions to the linear Klein--Gordon equation on fixed Kerr spacetimes was previously shown in the work~\cite{shlapgrow}. Theorem~\ref{timeperiodicsoln} can be interpreted as showing that this ``linear hair'' can be integrated to yield ``nonlinear hair'' (while maintaining the relevant symmetries of underlying spacetime).
\end{remark}
\begin{remark} The statement that the family is differentiable with respect to $\delta$ at $\delta=0$ should be understood in the sense of a $1$-parameter family of sections of $\Sym^{2}T^{*}\mathcal{M}$. Alternatively, our construction yields coordinates in which the metric $g_{\delta}$ has coefficients (depending on $\delta$) which are differentiable at $\delta=0$. One can show that the $\delta$-dependence of the $1$-parameter family is much more regular than claimed; we restrict ourselves to the stated regularity class in this paper for the sake of the exposition (increasing the regularity does not involve any significantly new ideas).
\end{remark}

\begin{remark}
We note that all of the solutions constructed in Theorem \ref{timeperiodicsoln} are rotating. It is not possible for the Schwarzschild solution to bifurcate in this manner, even at the linear level; see \cite[Theorem 1.3]{shlapgrow}.
\end{remark}

An immediate consequence is the following.
\begin{corollary}There exist Klein--Gordon masses, a sub-extremal Kerr spacetime, and a small Einstein--Klein--Gordon perturbation such that the scalar field does \underline{not} decay to a stationary solution. In particular, as a family of solutions to the Einstein--Klein--Gordon equations, asymptotic stability does \underline{not} hold for the Kerr family.
\end{corollary}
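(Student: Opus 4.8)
\emph{The Corollary is immediate from Theorem~\ref{timeperiodicsoln}.} Fix $\mu^{2}$ as in the theorem and a small $\delta>0$, and restrict the solution $(\mathcal{M},g_{\delta},\Psi_{\delta})$ to an asymptotically flat Cauchy hypersurface. By part~(4) (and the differentiability at $\delta=0$) the induced initial data converge to sub-extremal Kerr data as $\delta\to 0$, so for $\delta$ small this is a small Einstein--Klein--Gordon perturbation of Kerr; yet the maximal development contains the whole $(\mathcal{M},g_{\delta},\Psi_{\delta})$, in which by part~(3) the scalar field is non-zero and exactly time-periodic, hence cannot decay to any stationary solution. So the real content is a proof of Theorem~\ref{timeperiodicsoln}, which I plan as follows. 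First I would impose the ``stationary cloud'' ansatz $\Psi = e^{-i\omega t}e^{im\phi}\psi(r,\theta)$ with $m\in\mathbb{Z}\setminus\{0\}$, together with stationarity and axisymmetry of $(\mathcal{M},g)$. Since the energy--momentum tensor~\eqref{emot} is built from $|\Psi|^{2}$ and from the bilinears $\text{Re}(\partial_{\alpha}\Psi\,\overline{\partial_{\beta}\Psi})$, the two phases cancel and $\mathbb{T}$ is itself stationary and axisymmetric; after fixing a gauge (a generalised-harmonic/wave-map gauge, or Ernst-type potentials adapted to the two commuting Killing fields) the system~\eqref{einsteinmatter} collapses to a quasilinear elliptic system for $\psi$ and the metric functions on the two-dimensional quotient $\{r\ge r_{+},\ \theta\in[0,\pi]\}$. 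The frequency $\omega$ becomes an additional unknown, pinned down by smoothness of $\Psi$ across the future event horizon, which forces $\omega=m\Omega_{H}$ with $\Omega_{H}$ the (a priori unknown) horizon angular velocity; arranging $\omega^{2}<\mu^{2}$ then yields exponential decay of $\psi$ at spatial infinity. I would work in a scale of weighted H\"older spaces encoding polynomial fall-off of the metric and exponential fall-off of $\psi$ as $r\to\infty$, regularity up to the horizon boundary $r=r_{+}$, and the correct parity at the axis $\theta\in\{0,\pi\}$.

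Next I would linearise at $\delta=0$, where $\Psi_{0}\equiv 0$ and $g_{0}$ is a sub-extremal Kerr exterior with $0<|a|<M$. Because $\mathbb{T}$ is quadratic in $\Psi$, the linearised system decouples: the metric part is the linearised stationary axisymmetric vacuum equation about Kerr, whose only solutions in this class are the infinitesimal motions along the Kerr family (to be quotiented out or carried as extra parameters), while the scalar part is exactly the linear Klein--Gordon equation $\Box_{g_{0}}\hat\Psi-\mu^{2}\hat\Psi=0$ with the time-periodic ansatz. The essential input is~\cite{shlapgrow}: for a set of $\mu^{2}>0$ of positive Lebesgue measure (with $\mu$ small and $0<|a|<M$) this linear equation admits a non-trivial, bounded, time-periodic solution $\hat\Psi=e^{-i\omega t}e^{im\phi}\hat\psi(r,\theta)$ with $\omega=m\Omega_{H}<\mu$, smooth at the horizon and exponentially decaying at infinity. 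One then needs that the corresponding gauge-fixed linearised EKG operator is Fredholm of index zero on the spaces above, with one-dimensional kernel spanned by $\hat\psi$; simplicity should hold after discarding, if necessary, a further measure-zero set of masses.

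With this in hand, the plan is a Lyapunov--Schmidt / Crandall--Rabinowitz bifurcation-from-a-simple-eigenvalue argument: project onto the kernel and its complement, solve the infinite-dimensional part by the implicit function theorem (the nonlinearity is smooth, indeed analytic, on the weighted spaces), and reduce to a one-dimensional bifurcation equation; the required transversality condition --- that the relevant eigenvalue crosses zero with non-zero speed as the bifurcation parameter (naturally $\mu^{2}$, or a rescaled parameter along the curve on which linear clouds exist) is varied --- is checked via analyticity/monotonicity of the linear problem in $\mu^{2}$, away from a measure-zero set, which is also how a single good mass is upgraded to a positive-measure set. This produces a $C^{1}$ (in fact smoother) curve $\delta\mapsto(g_{\delta},\Psi_{\delta})$ through $(g_{0},0)$ with $\Psi_{\delta}=\delta\hat\Psi+o(\delta)$, from which (1)--(4) follow: stationarity and axisymmetry are built into the ansatz; non-degeneracy of the bifurcate event horizon is open under the perturbation since the surface gravity depends continuously on the solution and is non-zero at $\delta=0$ (as $|a|<M$); time-periodicity and exponential decay of $\Psi_{\delta}$ come from the ansatz and $\omega_{\delta}^{2}<\mu^{2}$; and the $\delta$-differentiability at $0$ together with $\lim_{\delta\to0}\delta^{-1}\Psi_{\delta}=\hat\Psi$ is exactly the Crandall--Rabinowitz output.

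I expect the main obstacle to lie in the functional-analytic framework underpinning the first two steps. One must choose coordinates regular across the future event horizon so that the problem becomes a genuine elliptic boundary value problem, with the nonlinear eigenvalue relation $\omega=m\Omega_{H}$ emerging as a solvability/regularity condition there; carry out elliptic regularity up to the coordinate singularity at the axis with the correct indicial behaviour; and control the asymptotically flat end, where the linearised Einstein operator is only asymptotically --- not uniformly --- elliptic and the metric components decay merely polynomially. Here the exponential decay of the scalar is essential, so that the coupling terms are genuinely subdominant there and the Einstein sector can be handled by a separate Weyl/Ernst-type analysis. Finally, one must break the diffeomorphism invariance of~\eqref{einsteinmatter} carefully enough --- propagation of the gauge and of the constraints --- that solutions of the gauge-fixed elliptic system are genuine Einstein--Klein--Gordon solutions, while simultaneously removing the residual infinitesimal Kerr deformations so that the linearised kernel is truly one-dimensional; verifying the resulting simplicity and transversality conditions for a positive-measure set of $\mu^{2}$ rather than a single value is a secondary but real difficulty, again handled by analyticity in $\mu^{2}$.
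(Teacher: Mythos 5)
Your deduction of the Corollary from Theorem~\ref{timeperiodicsoln} is correct and is essentially what the paper intends (the paper offers no explicit proof, calling the Corollary an ``immediate consequence''): restrict to an asymptotically flat Cauchy slice, observe that the induced data converge to Kerr data as $\delta\to 0$ by parts~(2) and~(4), and invoke part~(3) for the non-decaying, exactly time-periodic scalar field.

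However, the long sketch you give for Theorem~\ref{timeperiodicsoln} itself takes a genuinely different route from the paper's. You propose a Lyapunov--Schmidt / Crandall--Rabinowitz bifurcation from a simple eigenvalue, with $\mu^{2}$ (or a rescaled parameter) as the bifurcation parameter, pinned down by Fredholmness of the gauge-fixed linearised operator, a one-dimensional kernel, and a transversality condition checked via analyticity in $\mu^{2}$. The paper does none of these things: there is no Lyapunov--Schmidt reduction, no Fredholm/simplicity verification, and no transversality argument. Instead the paper introduces Carter--Robinson (Weyl--Papapetrou/Ernst) variables renormalized against Kerr and solves the reduced system by a cascade of contraction-mapping arguments (Propositions~\ref{fixsig}, \ref{fixb}, \ref{fixxy}, \ref{fixth}, \ref{fixpsi}, \ref{fixlam}); the crucial step for $\psi$ treats $\mu^{2}$ as the eigenvalue of a variational minimization problem (Proposition~\ref{itexists}), and closes the contraction via the Agmon-type estimate (Proposition~\ref{agmax}), using the bifurcation parameter $\delta = \Vert \psi\Vert$ rather than $\mu^{2}$. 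Fixing the Klein--Gordon mass is then arranged a posteriori by varying $(a,M)$ and invoking a weak implicit function theorem for Lipschitz functions (Section~\ref{arrange}); the positive-measure set of masses comes from that Lipschitz argument, not from analyticity in $\mu^{2}$. Your Crandall--Rabinowitz scheme buys conceptual clarity and would in principle yield higher $\delta$-regularity for free, but it pushes the real difficulties---Fredholm theory across the degenerating horizon and singular axis, simplicity of the eigenvalue modulo the infinitesimal Kerr deformations, gauge propagation, and the transversality crossing---into prerequisites that are at least as hard to verify as what they replace; the paper's nested fixed-point scheme is engineered precisely to avoid having to establish these global spectral facts, at the cost of working quantity-by-quantity in bespoke weighted H\"older spaces.
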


This shows that the addition of even a relatively simple matter model may completely change the expectations regarding black hole stability and uniqueness.
\subsection{Background}
In this section we review background material on scalar fields, boson stars, and hairy black holes which helps to provide context for Theorem~\ref{timeperiodicsoln}.
\subsubsection{Linear Scalar Fields on Kerr exterior Backgrounds}\label{linearscalar}

The behavior of linear scalar fields on Kerr exterior backgrounds plays a fundamental role in the proof of Theorem~\ref{timeperiodicsoln}. In this section we will quickly review the relevant theory.

The previous decade has witnessed an intense period of study of boundedness and decay properties for the \emph{wave equation},
\begin{equation}\label{wave}
\Box_g\Psi = 0,
\end{equation}
i.e.~(\ref{kg}) with $\mu = 0$, on black hole hole spacetimes. The primary motivation for this is the connection between understanding dispersive properties of waves in a sufficiently robust fashion and Conjecture~\ref{blackStab} (cf.~the role of the wave equation in the proof of the stability of Minkowski space~\cite{ck},~\cite{cklinear}).

In joint work with Dafermos and Rodnianski, the second author has proved the following theorem.
\begin{theorem}[\cite{waveKerr}]\label{waveKerrT} For the full sub-extremal range $|a| < M$, finite energy solutions to the wave equation~(\ref{wave}) have uniformly bounded energy and satisfy an integrated energy decay statement.
\end{theorem}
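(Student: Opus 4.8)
The plan is to establish two estimates for a finite-energy solution $\Psi$ of $\Box_g\Psi=0$ on the exterior of sub-extremal Kerr: (i) uniform boundedness of a non-degenerate energy flux through a spacelike-null foliation $\Sigma_\tau$ terminating on the future horizon $\mathcal{H}^+$, and (ii) an integrated local energy decay (Morawetz) estimate (necessarily degenerating at the trapped set, which after commutation with Killing fields and Carter's operator yields the stated conclusion). The two structural obstacles absent on Schwarzschild are \emph{superradiance} --- the stationary field $T=\partial_t$ is spacelike in the ergoregion, so the $T$-flux through $\mathcal{H}^+$ can have the wrong sign --- and \emph{trapping} --- trapped null geodesics fill a range of radii, so any Morawetz multiplier must degenerate there. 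The essential input is Carter's separability of $\Box_g$, together with the fact, valid precisely in the \emph{full} range $|a|<M$, that the superradiant and the trapped frequency ranges are quantitatively disjoint.

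First I would treat the regions near $\mathcal{H}^+$ and near null infinity by robust physical-space methods: the red-shift vector field of Dafermos--Rodnianski gives a coercive bulk near $\mathcal{H}^+$ and converts the horizon flux into a controllable quantity up to a zeroth-order spacetime error supported near $\mathcal{H}^+$, while an $r^p$-weighted hierarchy near infinity gives improved decay and reduces matters to a compact $r$-range. To handle this compact range I would pass to frequency space, which first requires cutting off in time: fix $\xi$ supported in $[0,\tau]$, equal to $1$ on $[1,\tau-1]$, set $\psi^\sharp=\xi\,(r^2+a^2)^{1/2}\Psi$, Fourier transform in $t$, and expand in $e^{im\phi}$ and the oblate spheroidal harmonics $S_{m\ell}(a\omega,\cos\theta)$; Carter's separation then reduces the equation to a family of radial ODEs in $r^*$ with parameters $(\omega,m,\Lambda_{m\ell})$ and an inhomogeneity $F$ supported where $\xi'\neq 0$, i.e.\ in $\{t\in[0,1]\cup[\tau-1,\tau]\}$.

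For the radial ODEs I would build frequency-localized multiplier currents from $f(r^*)\tfrac{d}{dr^*}+\tfrac12 f'(r^*)$ together with lower-order modifications $g(r^*)u$ and the conserved $T$- and $\Phi$-currents, with $f,g$ chosen according to the frequency regime: (a) for non-superradiant frequencies the horizon flux has the good sign after adding the red-shift, and $f$ is chosen so the Morawetz bulk is positive away from, and degenerates exactly at, the unique non-degenerate maximum $r_{\mathrm{trap}}(\omega,m,\Lambda)$ of the effective radial potential; (b) for superradiant frequencies one uses that the potential has no trapping in the relevant range, so $f$ can be chosen to produce a \emph{non-degenerate} Morawetz bulk which simultaneously returns control of the bad-sign horizon term; (c) a compact set of low and bounded frequencies is handled separately, relying on \emph{quantitative mode stability} --- the absence of real-frequency and exponentially growing mode solutions on Kerr, in the spirit of Whiting's and the second author's work --- to rule out the resonances that would otherwise obstruct the estimate. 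Crucially, every frequency threshold and every constant must be uniform as $|a|\to M$: the effective potential, the location of its maximum, and the size of the gap between the superradiant and trapped frequency ranges all have to be controlled up to the extremal limit.

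Finally, by Plancherel in $(t,\phi)$ and orthogonality of the $S_{m\ell}$, summing the frequency-localized estimates yields a spacetime Morawetz estimate for $\psi^\sharp$ with right-hand side bounded by the energy of $\Psi$ on $\{t\in[0,1]\cup[\tau-1,\tau]\}$ (from $F$) plus the horizon and infinity errors already controlled; combined with the red-shift and $r^p$ estimates and with conservation of the degenerate energy, a continuity/bootstrap argument in $\tau$ absorbs the cutoff errors and removes the $\tau$-dependence, giving uniform boundedness and integrated local energy decay on the whole exterior. The main obstacle --- and the new difficulty relative to the previously known $|a|\ll M$ case --- is exactly step (b) together with the uniformity-up-to-extremality of the frequency-localized construction: proving the quantitative separation of the trapped and superradiant frequency ranges throughout $|a|<M$, and building multipliers whose coercivity does not degenerate as $a\to M$.
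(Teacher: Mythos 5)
This theorem is not proven in the present paper; it is imported verbatim from the cited work \cite{waveKerr}, so there is no in-paper argument to compare against. Your outline faithfully reproduces the architecture of the proof in that reference --- red-shift near the horizon, $r^p$-weights near infinity, time cutoff and Carter separation, frequency-localized multiplier currents with the trichotomy of superradiant/trapped/bounded frequencies, quantitative mode stability for the bounded-frequency range, and the key structural fact that superradiant and trapped frequency ranges are disjoint uniformly up to $|a|\to M$ --- so it is essentially the same approach as the actual proof.
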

Theorem~\ref{waveKerrT} is easily seen to imply that any finite energy solution to the wave equation decays to $0$ on any compact set; in particular, the natural analogue of the type of solution we construct in Theorem~\ref{timeperiodicsoln} \emph{cannot} exist when $\mu = 0$.

The proof of Theorem~\ref{waveKerrT} required in an essential way the previous work~\cite{realmodes} of the second author as well as the works~\cite{icmp} and~\cite{stabi} by Dafermos and Rodnianski and was preceded by the works~\cite{dr7},~\cite{anblue}, and~\cite{tattoh} which established an analogue of Theorem~\ref{waveKerrT} under the additional assumption that $|a| \ll M$ (where certain fundamental difficulties like \emph{superradiance} and \emph{trapping} are considerably simpler to handle). For more discussion of the history behind and motivation for Theorem~\ref{waveKerrT} we direct the reader to the introduction of~\cite{waveKerr}.

Before attempting to prove Theorem~\ref{timeperiodicsoln}, it is natural to first determine if Theorem~\ref{waveKerrT} is true for the Klein--Gordon equation. Somewhat surprisingly (given Theorem~\ref{waveKerrT}), the second author showed in the work~\cite{shlapgrow} that the answer is emphatically ``no.''

\begin{theorem}[\cite{shlapgrow}]\label{KGgrow} For every choice of parameters $(a,M)$ satisfying $0 < |a| < M$ there exists an open family of masses $\mu^2 \in (0,\infty)$ such that there exist solutions to the corresponding Klein--Gordon equation~(\ref{kg}), arising from localized initial data, which \underline{grow exponentially} in time. Furthermore, for every choice of parameters $(a,M)$ satisfying $0 < |a| < M$ there exists a countable sequence of masses $\mu^2$ such that there exist exactly \underline{time-periodic} solutions to the corresponding Klein--Gordon equation~(\ref{kg}).
\end{theorem}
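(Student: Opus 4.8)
The plan is to produce the required solutions by separation of variables, using that $\Box_g-\mu^2$ separates on the Kerr exterior in Boyer--Lindquist coordinates $(t,r,\theta,\phi)$. One seeks $\Psi=e^{-i\omega t}e^{im\phi}S(\theta)R(r)$ with $m\in\ZZ$: the angular factor solves the spheroidal eigenvalue problem on $[0,\pi]$, whose eigenvalues $\lambda=\lambda_{m\ell}\big(a^2(\omega^2-\mu^2)\big)$ ($\ell\geq|m|$) depend analytically on the parameter $a^2(\omega^2-\mu^2)$, and the radial factor, after passing to the tortoise coordinate $r_*$ (with $dr_*/dr=(r^2+a^2)/\Delta$, $\Delta=r^2-2Mr+a^2$, $r_+$ the larger root of $\Delta$) and setting $u=\sqrt{r^2+a^2}\,R$, satisfies a Schr\"odinger-type ODE $\tfrac{d^2u}{dr_*^2}=\mathcal{V}_{\omega,\mu^2,m,\ell}(r_*)\,u$ on $\RR$ with $\mathcal{V}\to\mu^2-\omega^2$ as $r_*\to+\infty$ and $\mathcal{V}\to-(\omega-m\omega_+)^2$ as $r_*\to-\infty$, where $\omega_+=a/(2Mr_+)$ is the angular velocity of the horizon. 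A ``mode solution'' is a separated solution that is smooth across the future event horizon and decays at spatial infinity; in terms of the ODE, $u$ must be ``ingoing'' at the horizon, $u\sim e^{-i(\omega-m\omega_+)r_*}$ as $r_*\to-\infty$, and square-integrable at $+\infty$ --- which for real $\omega$ forces the bound-state condition $\omega^2<\mu^2$ and for $\operatorname{Im}\omega>0$ is automatic.

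For the time-periodic assertion I would work at the distinguished real frequency $\omega=m\omega_+$. There the horizon condition degenerates to mere boundedness --- the horizon flux vanishes, so there is no superradiance at this exact frequency --- the radial operator is self-adjoint on a natural weighted $L^2\big((r_+,\infty)\big)$, and, crucially, one regards $\mu^2$ rather than $\omega$ as the spectral parameter: the ODE becomes $\mathcal{L}u=\mu^2\mathcal{W}u$ with weight $\mathcal{W}>0$ on $(r_+,\infty)$, so a nontrivial $u$ bounded at $r_+$ and decaying at infinity is exactly a (singular) Sturm--Liouville eigenfunction. A suitable test function --- here $a\neq0$, hence $\omega_+\neq0$, is essential, since for $a=0$ the frequency is $\omega=0$, the field static, and the maximum principle forbids nontrivial static bound states --- shows the point spectrum is non-empty, and it lies in $(m^2\omega_+^2,\infty)$, so the condition $\omega^2<\mu^2$ is met automatically. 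Discreteness of the spectrum yields the claimed countable sequence of masses $\mu^2$, and for each of them $\operatorname{Re}\big(e^{-im\omega_+t}e^{im\phi}S(\theta)R(r)\big)$ is an exactly time-periodic solution, smooth across the horizon and exponentially decaying along Cauchy slices by the endpoint behaviour of $u$.

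For the exponentially growing solutions I would perturb off these real modes into the superradiant regime. Fix $(a,M,m,\ell)$ and one of the masses $\mu_0^2$ just found, with real mode frequency $\omega_0=m\omega_+$, and encode the mode condition as the vanishing of the Wronskian $W(\omega,\mu^2)$ of the horizon-ingoing and infinity-decaying solutions --- holomorphic in $\omega$ near $\omega_0$ (and along the real axis) and smooth in $\mu^2$ by a Jost-type construction, using that $\mathcal{V}$ converges exponentially fast to its limits. Green's identities (pairing the $\omega$- and $\mu^2$-differentiated ODE against the undifferentiated solutions) express $\partial_{\mu^2}W$ and $\partial_\omega W$ at the mode: one checks that $\partial_{\mu^2}W$ reduces to a sign-definite integral --- nonzero via the Feynman--Hellmann formula for $\partial_{\mu^2}\lambda_{m\ell}$ together with $r_+>|a|$ --- while $\operatorname{Im}\partial_\omega W\neq0$ because differentiating the ingoing horizon phase $e^{-i(\omega-m\omega_+)r_*}$ in $\omega$ contributes a boundary term that does not cancel. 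Granting these, $\partial_\omega W(\omega_0,\mu_0^2)\neq0$, so the implicit function theorem gives a smooth curve $\mu^2\mapsto\omega(\mu^2)$ with $\omega(\mu_0^2)=\omega_0$ and $\operatorname{Im}\big(\tfrac{d\omega}{d\mu^2}\big|_{\mu_0^2}\big)=-\operatorname{Im}\big(\partial_{\mu^2}W/\partial_\omega W\big)\neq0$; hence for $\mu^2$ in an open interval on the appropriate side of $\mu_0^2$ --- intuitively just below threshold, where the bound-state frequency drops into the superradiant band $0<\operatorname{Re}\omega<m\omega_+$ and the horizon feeds rather than absorbs energy --- the mode frequency lies in $\{\operatorname{Im}\omega>0\}$, producing an exponentially growing mode solution whose restriction to a Cauchy slice is the localized, finite-energy datum of the statement (once $\operatorname{Im}\omega>0$ the mode decays exponentially at both ends). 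I expect the assertion $\operatorname{Im}\partial_\omega W\neq0$ to be the main obstacle: the Green's identity for $\partial_\omega W$ produces several boundary contributions --- the $\omega$-derivatives of the exponential endpoint behaviours grow --- and extracting the imaginary part and showing it is nonzero is the precise mathematical content of superradiance here; secondary checks are that $\omega(\mu^2)$ does not immediately re-enter the real axis (no nearby real modes, by discreteness of the self-adjoint spectrum) and that the $\omega$-dependence of $\lambda_{m\ell}$ does not spoil the holomorphy of $W$.
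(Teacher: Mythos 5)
Your plan captures the key idea of the proof in \cite{shlapgrow} (as summarized in Section~2.2 and revisited for perturbed metrics in Section~\ref{secpsi}): fix the threshold frequency $\omega = m\omega_+$ so the horizon flux vanishes and the problem becomes formally self-adjoint, and then treat $\mu^2$ rather than $\omega$ as the spectral parameter. In the paper this is implemented by the direct method of the calculus of variations on the functional $\mathscr{L}_{\hat\mu^2}$ over a fixed $L^2$-sphere (cf.\ Proposition~\ref{itexists}), with the crucial negativity of the infimum coming from the $-2M\omega^2/r$ correction in the long-range potential (Lemma~\ref{neg}), exactly the mechanism you identify as requiring $a\neq 0$. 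Two things are worth flagging about your implementation. First, by separating the angular variable you introduce the spheroidal eigenvalue $\lambda_{m\ell}\bigl(a^2(\omega^2-\mu^2)\bigr)$, which makes $\mu^2$ enter the radial potential \emph{nonlinearly}; the paper (and, as far as the summary here indicates, \cite{shlapgrow}) sidesteps this by working directly with the two-dimensional $(\rho,z)$ problem, in which $\mu^2$ enters linearly and the Rayleigh-quotient argument is clean. Your Sturm--Liouville framing is therefore not quite a linear eigenvalue problem as stated, and would require either monotonicity of $\lambda_{m\ell}$ in $\mu^2$ or an additional fixed-point layer. Second, ``discreteness of the spectrum'' is not quite the right justification for the countable sequence: the weight degenerates exponentially at the horizon, so the essential spectrum is not obviously empty; what the variational argument actually produces is the lowest eigenvalue in each separated channel, and the countable family comes from varying $m$ (and $\ell$), not from listing eigenvalues of a fixed operator.

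For the exponential growth, your Wronskian-and-implicit-function-theorem route is a standard device and is plausible, and you have correctly isolated the delicate point, namely $\operatorname{Im}\,\partial_\omega W(\omega_0,\mu_0^2)\neq 0$ at the threshold $\omega_0 = m\omega_+$. This is the substantive gap as written: at threshold the horizon asymptotic $e^{-i(\omega-m\omega_+)r_*}$ degenerates to a constant, the $\omega$-derivative of the horizon Jost solution grows linearly in $r_*$, and the naive Wronskian--flux identity evaluated at $\omega_0$ reduces to $0=0$. The nonvanishing must be extracted from the boundary term at $r_*\to-\infty$ in the Green's identity for $W[\partial_\omega u_{\mathrm{hor}},u_\infty]$, together with the fact that the mode has a nonzero limiting value at the horizon; until that computation is done, the implicit function theorem step, and with it the existence of the open interval of unstable masses, is not secured. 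You acknowledge this, but it is precisely the content of the superradiant instability, not a routine verification, and without it the second half of the theorem is not proved.
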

The underlying mechanism for the exponential growth is that of \emph{superradiance}. This is a phenomenon by which a solution to~(\ref{kg}) or~(\ref{wave}) may extract energy from the black hole. Mathematically, the underlying reason this is possible is due to the lack of any globally timelike Killing vector field and the resulting loss of a positive definite conserved energy. For a more detailed discussion of superradiance we direct the reader to the introduction of~\cite{shlapgrow}. We will however take the chance to emphasize that, as shown in the recent work~\cite{scatter}, the phenomenon of superradiance also occurs for the wave equation even though in this case it does not lead to unbounded growth.\footnote{More precisely, it was shown that the energy of a solution to~(\ref{wave}) may strictly increase even though Theorem~\ref{waveKerrT} bounds the total amount of increase (cf.~the work~\cite{FKSY}).}

For this paper, it is the time-periodic solutions from Theorem~\ref{KGgrow} which are the most relevant. These solutions are precisely at the threshold of superradiance and have a vanishing energy flux along the event horizon (the technical ramifications of this is discussed further in Section \ref{subsubsec:compat-cond}). A key idea behind the construction of these solutions is to treat the Klein--Gordon mass as an eigenvalue. The proof of Theorem~\ref{timeperiodicsoln} will require us to revisit the construction of these time-periodic solutions for certain perturbations of the Kerr spacetime (see Section~\ref{secpsi}).
\subsubsection{Boson Stars and Hairy Black Holes}\label{uniqueblack}
The results of Israel, Carter, Hawking, and Robinson~\cite{israel,carter3,hawking,robinson} as well their corresponding extensions to the Einstein--Maxwell system~\cite{israel2,bunting,mazur} inspired an early belief in the so called ``generalized no hair conjecture.'' This informal conjecture stated that whenever the Einstein equations are coupled with any ``reasonable'' matter model, then the set of stationary and asymptotically flat solutions should be finite dimensional and continuously parameterized by mass, angular momentum, and various \emph{asymptotically defined} ``charges'' associated to the matter (see the discussion in~\cite{reviewunique}). As is well known, this conjecture turned out to be false. The most direct counter-examples consist of spherically symmetric infinite families of static \emph{globally regular} solutions to the Einstein--Yang--Mills equations, see~\cite{EYMnum,EYMprove}. Spherically symmetric infinite families of static \emph{black hole} solutions to the Einstein--Yang--Mills equations have also been constructed~\cite{EYMblack}.

More relevant to this paper, however, are so called \emph{boson stars}. These are solutions to the Einstein--Klein--Gordon equations where the metric is static, \emph{globally regular}, does not contain a black hole, and is asymptotically flat, and the scalar field is non-zero and time-periodic. Note that due to the scalar field not being stationary, boson stars are not, strictly speaking, counter-examples to the generalized no hair conjecture; however, they certainly violate the spirit of the conjecture. The first heuristic and numerical studies of spherically symmetric boson stars was carried out in 1968 and 1969 in the works~\cite{kaup,rufbon}, twenty years later, more involved heuristics, numerics, and a stability analysis were carried out in the works~\cite{flp,flp2,lp}, and finally, spherically symmetric boson stars were rigorously constructed in the work~\cite{bw}. We direct the reader to the review article~\cite{bosonreview} for a thorough discussion of the origins and role of boson stars in general relativity.

\subsection{Directions for Further Study}
In this section we will present some natural open problems and directions for further study which are suggested by Theorem~\ref{timeperiodicsoln} and~\cite{hairy}.

Let's first introduce some notation from~\cite{hairy} (see also \cite{rotbos}). Associated to the phase invariance of the Klein--Gordon Lagrangian is a conserved current:
\[j^{\alpha} \doteq 2\text{Im}\left(\overline{\psi}\partial^{\alpha}\psi\right).\]
Integrating this current along any Cauchy hypersurface yields the ``particle number'' $Q$. 

Let's now agree to specify our discussion to stationary and axisymmetric spacetimes of the form considered in this paper (see \ref{subsec:metric-ansatz}), and where the scalar field $\Psi$ takes the form
\begin{equation}\label{formscalar}
\Psi(t,\phi,\rho,z) = e^{-it\omega}e^{im\phi}\psi(\rho,z)
\end{equation}
with $\psi$ real. We will refer to $m$ as the ``azimuthal number'' and $\psi$ as the reduced scalar field. We will only be interested in the case when $m$ is a non-zero integer. In this case, the particle number takes the form (see \eqref{eq:metric-ansatz} below for the definition of $X$ and $W$)
\[Q = 4\pi\int_0^{\infty}\int_{-\infty}^{\infty}\left(\omega X - m W\right)\psi^2e^{2\lambda}\, d\rho\, dz.\]
We note that the ADM mass $M$, the angular momentum $J$, and the particle number $Q$ are the three natural conserved quantities associated to solutions of the Einstein--Klein--Gordon equations. Finally, we introduce the ratio
\begin{equation}\label{q}
q \doteq \frac{mQ}{J}.
\end{equation}
A straightforward and well-known calculation shows that rotating boson stars always satisfy $q = 1$~\cite{rotbos}.

Our first problem concerns extending the $1$-parameter family we found in Theorem~\ref{timeperiodicsoln}
\begin{problem}\label{qto1} Keeping in mind that the solutions constructed in Theorem~\ref{timeperiodicsoln} all satisfy $|q| \ll 1$, construct a $1$-parameter family of solutions to the Einstein--Klein--Gordon equations such that $q$ takes every value in $(0,1)$. Analyze the limit as $q \to 1$.
\end{problem}
\begin{remark} According to the numerical work~\cite{hairy}, the $q\to 1$ limit of the solutions should converge to a globally regular rotating boson star (in a suitable topology which allows in particular for the black hole to disappear in the limit). This would provide the first rigorous construction of a non-spherically symmetric boson star. Finally, based on \cite[Figure 3]{hairy}, we note that along such a $1$-parameter family, one expects to find solutions which violate the angular-momentum mass inequality $|J| \leq M^2$. See~\cite{dain1,dain2,schoenzhou} for proofs of the angular-momentum mass inequality for axisymmetric vacuum initial data sets.
\end{remark}

Next we consider the construction of ``excited states.''
\begin{problem}\label{excited} Construct solutions to the Einstein--Klein--Gordon equations for an arbitrary azimuthal number $m \in \mathbb{Z}_{\neq 0}$ and such that the reduced scalar field $\psi$ has an arbitrarily large number of nodal domains\footnote{Recall that the number of nodal domains of $\psi$ is the number of connected components of $\psi^{-1}(0)$. The solutions constructed in Theorem \ref{timeperiodicsoln} have a single nodal domain, corresponding to the fact that $\psi$ is constructed to be the \emph{first} eigenfunction of a suitable elliptic equation.}.
\end{problem}
\begin{remark}
Our proof of Theorem~\ref{timeperiodicsoln} works for any sufficiently large azimuthal number $m$ and always yields a reduced scalar field $\psi$ which is positive. However, we strongly believe that suitable modifications of the techniques used here would allow for $m$ to be any non-zero integer. The variational structure behind the construction of $\psi$ (see Section~\ref{secpsi}) should also naturally lead to the construction of $\psi$ with an arbitrarily large number of nodal domains.
\end{remark}

Now we turn the problem of uniqueness.
\begin{problem}\label{unique}Determine to what extent are solutions to the Einstein--Klein--Gordon equation with a stationary, axisymmetric spacetime and scalar field of the form~(\ref{formscalar}) determined by the values of the ADM mass $M$, the angular momentum $J$, the particle number $Q$, the azimuthal number $m$, and the number of nodal domains $n$ of $\psi$.
\end{problem}
\begin{remark}
The results of~\cite{hairy} suggest that at least within the class of solutions with $n = 1$, the values of $M$, $J$, $Q$, and $m$ do indeed determine the solution uniquely.
\end{remark}

Finally, we expect analogues of Theorem~\ref{timeperiodicsoln} to hold in many other settings.
\begin{problem}\label{analogue}Do analogues of Theorem~\ref{timeperiodicsoln} hold for extremal Kerr, Kerr-anti-de Sitter, and Kerr-de Sitter spacetimes?
\end{problem}
\begin{remark}
Of course, any analogue of Theorem~\ref{timeperiodicsoln} should be consistent with already established linear scalar field stability and instability results for these spacetimes, see e.g.~\cite{aretakisKerr,aretakisHor,holz-smul,holz-smul2,dyatlov1,dyatlov2,dyatlov-last,vasy}.
\end{remark}

\subsection{Acknowledgements}
OC was supported by an EPSRC Programme Grant entitled ÔSingularities of Geometric Partial Differential Equations,Õ number EP/K00865X/1 during part of the time this work was completed and is grateful to Simon Brendle for his encouragement concerning this work. YS acknowledges support from the NSF Postdoctoral Research Fellowship under award no.~1502569, and thanks Igor Rodnianski and Mihalis Dafermos for stimulating conversations about the paper. Finally, we are grateful to the referees whose many suggestions greatly improved the exposition and organization of the paper.

\section{Overview of the Proof}\label{sec:overview}
We now turn to a high level overview of the proof of Theorem~\ref{timeperiodicsoln}.
\subsection{Stationary Carter--Robinson Theory}
A fundamental role will be played by the structure behind Carter--Robinson's proof~\cite{carter3,robinson} of the uniqueness of Kerr as a stationary and axisymmetric solution to the Einstein vacuum equations. In this section we will review the salient points in their argument. Our discussion will be closely modeled on the approach of Weinstein from~\cite{weinstein,weinstein2}.

We first assume that the metric takes the form
\begin{equation}\label{astataximetric}
g = -Vdt^2 + 2Wdtd\phi + Xd\phi^2  + e^{2\lambda}\left(d\rho^2+dz^2\right),
\end{equation}
where $T \doteq \partial_t$ and $\Phi \doteq \partial_{\phi}$ are Killing vector fields. One can express the Einstein vacuum equations $Ric(g) = 0$ directly in terms of the metric functions $V$, $W$, $X$, and $e^{2\lambda}$, but the resulting system of quasilinear elliptic equations is quite complicated and does not readily admit a classification of solutions.

An important insight is that it instead pays to introduce a further unknown, the twist $1$-form $\theta$, defined by
\begin{equation}\label{twistthe1form}
\theta \doteq 2i_{\Phi}\left(*\nabla\Phi_{\flat}\right).
\end{equation}
Geometrically, $\theta$ is the obstruction to $\Phi$ being hypersurface orthogonal. It turns out that $Ric(g) = 0$ implies that
\[d\theta = 0,\]
and thus we can introduce a function $Y$ which vanishes at infinity and satisfies
\[dY = \theta.\]

The crucial and somewhat surprising fact is that $X$ and $Y$ satisfy an equation which decouples from the other metric components:
\begin{align}\label{harmonicmap1}
\rho^{-1}\partial_{\rho}\left(\rho\partial_{\rho}X\right) + \rho^{-1}\partial_z\left(\rho\partial_zX\right) &= \frac{\left(\partial_{\rho}X\right)^2 + \left(\partial_zX\right)^2 - \left(\partial_{\rho}Y\right)^2 - \left(\partial_zY\right)^2}{X},
\\ \label{harmonicmap2} \rho^{-1}\partial_{\rho}\left(\rho\partial_{\rho}Y\right) + \rho^{-1}\partial_z\left(\rho\partial_zY\right) &= \frac{2\left(\partial_{\rho}Y\right)\left(\partial_{\rho}X\right) + 2\left(\partial_zY\right)\left(\partial_zX\right)}{X}.
\end{align}
In fact, if $(\rho,z)$ are considered as being the $\rho$ and $z$ of $\mathbb{R}^3$ expressed in cylindrical coordinates, and $X\left(\rho,z\right)$ and $Y\left(\rho,z\right)$ are thus interpreted as axisymmetric functions on $\mathbb{R}^3$, then the above equations simply state that $(X,Y)$ forms a harmonic map from $\mathbb{R}^3$ to hyperbolic space $\mathbb{H}^2$. (We note that the decoupling of the equation for $(X,Y)$ from the other components is related to the fact that harmonic maps are conformally invariant in dimension $2$.) The requirements of asymptotic flatness and regular extensions to the axis and event horizon lead to natural boundary conditions for $X$ and $Y$.

The $2$-parameter Kerr family yields a $2$-parameter family of solutions to~\eqref{harmonicmap1} and~\eqref{harmonicmap2}. Relatively standard techniques allow one to show that these must in fact be the  unique $2$-parameter family of solutions to~\eqref{harmonicmap1} and~\eqref{harmonicmap2} and that this family is parametrized naturally by the boundary conditions of $X$ and $Y$. 

It remains to show that given a particular solution $(X,Y)$ to~\eqref{harmonicmap1} and~\eqref{harmonicmap2}, the rest of the metric coefficients are uniquely determined. For this we start with the non-obvious observation that
\[\sigma \doteq \sqrt{XV + W^2},\]
must be harmonic when considered as a function of $\rho$ and $z$:
\[\Delta_{\mathbb{R}^2}\sigma = 0.\]
The natural boundary conditions for $\sigma$ then force that $\sigma = \rho$. 

Next, the definition of the twist $\theta$ naturally leads to the following equation for $W$:
\[\partial_{\rho}\left(X^{-1}W\right)d\rho + \partial_z\left(X^{-1}W\right)dz = \frac{\rho}{X^2}\left(\left(\partial_{\rho}Y\right)dz - \left(\partial_zY\right)d\rho\right).\]
The necessary compatibility conditions to solve this equation is exactly~\eqref{harmonicmap2}. It then turns out that this equation and the natural boundary conditions uniquely determine $W$ in terms of $X$ and $Y$. 

Finally, for $\lambda$, one can derive the following equations
\begin{align*}
\partial_{\rho}\lambda &= \frac{1}{4}X^{-2}\left[\left(\partial_{\rho}X\right)^2 - \left(\partial_zX\right)^2 + \left(\partial_{\rho}Y\right)^2 - \left(\partial_zY\right)^2\right] -\frac{1}{2}\partial_{\rho}\log X,
\\ \nonumber \partial_z\lambda &= \frac{1}{2}X^{-2}\left(\left(\partial_{\rho}X\right)\left(\partial_zX\right) + \left(\partial_{\rho}Y\right)\left(\partial_zY\right)\right) - \frac{1}{2}\partial_z\log X.
\end{align*}
The compatibility conditions to solve this turns out to follow from the harmonic maps equation for $(X,Y)$. Then this equation as well as the natural boundary conditions for $\lambda$ show that $\lambda$ is uniquely specified in terms of $(X,Y)$. Thus, we conclude that the Kerr family is the unique set of stationary and axisymmetric solutions to the Einstein vacuum equations. 
\subsection{Time-periodic Solutions to the Linear Klein--Gordon Equation}
The second important ingredient in our proof comes from the work~\cite{shlapgrow} where Theorem~\ref{KGgrow} was proven. In this section we now briefly review how the existence of the time-periodic solution was obtained.

In view of later arguments, we will in fact survey how the proof works for metrics of the form~\eqref{astataximetric} which are small perturbations of a Kerr spacetime. Thus, we consider a metric of the form~\eqref{astataximetric} to be a fixed small perturbation of a Kerr metric with $a \neq 0$. We look for a solutions to the Klein--Gordon equation
\[\Box_g\Psi - \mu^2\Psi = 0,\]
of the form
\[\Psi\left(t,\phi,\rho,z\right) \doteq e^{-it\omega}e^{im\phi}\psi\left(\rho,z\right).\]
We refer to $\psi$ as the ``reduced scalar field''. In coordinates, the equation for $\psi$ becomes
\begin{equation}\label{ascalarfieldequation}
\sigma^{-1}\partial_{\rho}\left(\sigma\partial_{\rho}\psi\right) + \sigma^{-1}\partial_z\left(\sigma\partial_z\psi\right) + e^{2\lambda}\sigma^{-2}X^{-1}\left(X\omega+Wm\right)^2\psi - e^{2\lambda}m^2X^{-1}\psi- e^{2\lambda}\mu^2 \psi = 0,
\end{equation}
where $\sigma = \sqrt{XV + W^2}$ is defined as in the previous section. For the spacetimes under consideration one can take $\sigma \sim \rho$. 

The most naive approach to finding solutions to~\eqref{ascalarfieldequation} would be to consider the equation as an eigenvalue problem for the time frequency $\omega$. However, this has two fundamental problems:
\begin{enumerate}
	\item If $\omega$ is considered as a spectral parameter, one does not have a self-adjoint eingenvalue problem.
	\item A general argument shows that any time-periodic solution must have an exactly vanishing energy flux along the event horizon. This turns out to imply that $\omega = cm$ for a fixed constant $c$ which depends only on the metric~\eqref{astataximetric}. In particular, given $m$ (which must be integer valued) there is only one choice of $\omega$ which could possibly work.
\end{enumerate}

The key trick to overcome these difficulties is to instead consider~\eqref{ascalarfieldequation} as an eigenvalue problem for the mass $\mu^2$. The problem of constructing such solutions can be reduced to a minimization problem for an appropriate Lagrangian for which the direct method of the calculus of variations turns out to work. 

\subsection{The Nonlinear Coupling: An Easy Model Problem}
In order to prove Theorem~\ref{timeperiodicsoln} we will, of course, have to deal with the difficulty that the equations for the metric is coupled nonlinearly to the equations for the scalar field.

The following model problem indicates the general structure we hope to exploit.
\begin{proposition}\label{modelprop}Consider the following system for two functions $h_1,h_2: B_1 \to \mathbb{R}$ on the ball of radius $1$ and a real parameter $\kappa$:
\begin{align}\label{modelproblem1}
\Delta h_1 &= h_2^2,\qquad h_1|_{\partial B_1} = 0,
\\ \label{modelproblem2} \Delta h_2 - \left(\kappa + h_1\right)h_2 &= 0,\qquad \partial_nh_2|_{\partial B_1} = 0.
\end{align}

We have the trivial $1$-parameter family of solutions given by $(h_1,h_2) = (0,0)$ and $\kappa \in \mathbb{R}$. Furthermore, linearization around these trivial solutions with $\kappa = 0$ leads to 
\begin{align}\label{linmodelproblem1}
\Delta l_1 &= 0,\qquad l_1|_{\partial B_1} = 0,
\\ \label{linmodelproblem2} \Delta l_2 &= 0,\qquad \partial_nh_2|_{\partial B_1} = 0.
\end{align}
Because of the Neumann boundary conditions, the linear equations have a nontrivial kernel spanned by $l_1 = 0$ and $l_2 = 1$. 

The key fact is that this ``linear hair'' $(l_1,l_2)$ can be upgraded to ``nonlinear hair'' in the sense that there exists a $1$-parameter family $\left(h_1^{(\delta)},h_2^{(\delta)},\kappa^{(\delta)}\right)_{\delta \in[0,\delta_{0})}$ of solutions to~\eqref{modelproblem1} and~\eqref{modelproblem2} which equal the trivial solution when $\delta = 0$ and such that $\left\vert\left\vert h_2^{(\delta)}\right\vert\right\vert_{L^2} = \delta$.
\end{proposition}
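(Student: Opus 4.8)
The plan is a Lyapunov--Schmidt reduction that exploits the triangular structure of the system. For fixed $h_2$, equation~\eqref{modelproblem1} is a linear, uniquely solvable Dirichlet problem, so we may write its solution as $h_1 = \mathcal{L}[h_2] \doteq \Delta_D^{-1}(h_2^2)$, where $\Delta_D^{-1}$ is the inverse Dirichlet Laplacian on $B_1$; note that $\mathcal{L}$ is a smooth quadratic map. Substituting this back into~\eqref{modelproblem2} collapses the entire system to the single equation
\begin{equation*}
F(h_2,\kappa) \doteq \Delta h_2 - \kappa h_2 - \mathcal{L}[h_2]\,h_2 = 0,
\end{equation*}
with $h_2$ in the space of $C^{2,\alpha}(\overline{B_1})$ functions having vanishing normal derivative on $\partial B_1$ and values in $C^{0,\alpha}(\overline{B_1})$; here $\kappa$ plays the spectral role that the Klein--Gordon mass $\mu^2$ plays in Theorem~\ref{KGgrow}. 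Schauder theory makes $F$ a smooth (indeed polynomial) map, and $F(0,\kappa)=0$ records the trivial branch.

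The reduction works because the nonlinearity $\mathcal{L}[h_2]h_2$ is cubic in $h_2$, so $D_{h_2}F(0,0) = \Delta$. By the Fredholm alternative for the Neumann Laplacian this operator has one-dimensional kernel $\operatorname{span}\{1\}$ and one-dimensional cokernel, its range being $\{f : \int_{B_1} f\,dx = 0\}$; this is the analytic origin of the ``linear hair.'' I would then run the usual scheme: write $h_2 = s\cdot 1 + v$ with $s \in \mathbb{R}$ and $v$ in the complementary subspace $\{v : \int_{B_1} v\,dx = 0\}$ of mean-zero functions, let $Q$ be the projection onto $\operatorname{Range}(\Delta)$, and solve $QF(s\cdot 1 + v,\kappa) = 0$ for $v = v(s,\kappa)$ by the implicit function theorem (its $v$-linearization at the origin is the isomorphism $\Delta$ restricted to mean-zero functions). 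One checks immediately from uniqueness in the implicit function theorem, plus one differentiation, that $v(0,\kappa) = 0$ and $\partial_s v(0,0) = 0$.

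What remains is the scalar ``bifurcation equation'' obtained by pairing $F$ against the cokernel, i.e.\ integrating over $B_1$:
\begin{equation*}
\Phi(s,\kappa) \doteq -\frac{1}{|B_1|}\int_{B_1}\bigl(\kappa\, h_2 + \mathcal{L}[h_2]\,h_2\bigr)\,dx = 0, \qquad h_2 = s\cdot 1 + v(s,\kappa).
\end{equation*}
Since $v(0,\kappa) = 0$ we have $\Phi(0,\kappa) = 0$, hence $\Phi(s,\kappa) = s\,\widetilde\Phi(s,\kappa)$ with $\widetilde\Phi$ smooth; the factor $s$ is the trivial branch, and the nontrivial branch is the zero set of $\widetilde\Phi$. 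A short computation using $\int_{B_1} v(s,\kappa)\,dx = 0$ gives $\widetilde\Phi(0,\kappa) = -\kappa$, so $\partial_\kappa \widetilde\Phi(0,0) = -1 \neq 0$ --- this transversality, which is exactly the nondegenerate pairing of the kernel element $1$ against the cokernel through the parameter $\kappa$, is the crux of the argument. The implicit function theorem then produces a smooth curve $\kappa = \kappa(s)$ with $\kappa(0) = 0$ (in fact $\kappa(s) = O(s^2)$), hence a smooth one-parameter family $(h_1(s),h_2(s),\kappa(s))$ of genuine solutions of~\eqref{modelproblem1}--\eqref{modelproblem2} through the trivial solution at $s = 0$. (Equivalently one could quote the Crandall--Rabinowitz bifurcation theorem, whose transversality hypothesis reads $D_\kappa D_{h_2}F(0,0)[1] = -1 \notin \operatorname{Range}(\Delta)$.) Finally, since $\partial_s v(0,0) = 0$ and $\kappa(s) = O(s^2)$ force $h_2(s) = s\cdot 1 + O(s^2)$, the map $s \mapsto \|h_2(s)\|_{L^2} = |s|\,|B_1|^{1/2}(1+O(s))$ is continuous and strictly increasing on a right neighborhood of $0$; reparametrizing the branch by $\delta \doteq \|h_2(s)\|_{L^2}$ and relabeling yields the asserted family with $\|h_2^{(\delta)}\|_{L^2} = \delta$.

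Because the model is deliberately simple, no step here is a serious obstacle: Schauder estimates and the implicit function theorem do all the work, and the only thing one must be careful with is the bookkeeping that makes the scalar equation solvable for $\kappa$, namely $\partial_\kappa\widetilde\Phi(0,0) \neq 0$. In the proof of Theorem~\ref{timeperiodicsoln} it is precisely the analogue of the seemingly innocuous first step --- ``solve~\eqref{modelproblem1} for $h_1$ in terms of $h_2$'' --- that becomes genuinely hard: there the first equation is the full Carter--Robinson elliptic system, so recovering the metric from the reduced scalar field $\psi$ is the entire content of the stationary Carter--Robinson theory rather than a one-line Dirichlet inversion, and checking that the mass $\mu^2$ still removes the corresponding cokernel obstruction requires both the variational construction of the time-periodic linear mode and the horizon compatibility condition of Section~\ref{subsubsec:compat-cond}.
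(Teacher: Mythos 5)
Your proof is correct, but it takes a genuinely different route from the paper. You run a Lyapunov--Schmidt reduction (equivalently, invoke Crandall--Rabinowitz): solve the first equation to eliminate $h_{1}$, observe that $D_{h_{2}}F(0,0)=\Delta$ with Neumann conditions has one-dimensional kernel and cokernel both spanned by $1$, split off the range equation and solve it by the implicit function theorem, and then note that the transversality $D_{\kappa}D_{h_{2}}F(0,0)[1]=-1\notin\operatorname{Range}(\Delta)$ lets you solve the scalar bifurcation equation for $\kappa(s)$. All the claimed cancellations are right --- $v(0,\kappa)=0$ and $\partial_{s}v(0,\kappa)=0$ follow from uniqueness and from the fact that $Q\cdot 1=0$ kills the $\kappa$-term in the differentiated range equation, and the cubic nonlinearity contributes nothing to $\partial_{s}\Phi(0,\kappa)$ or $\partial_{s}^{2}\Phi(0,\kappa)$ --- so the branch exists, $\kappa(s)=O(s^{2})$, and the reparametrization by $\delta=\|h_{2}(s)\|_{L^{2}}$ goes through. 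The paper instead treats $\kappa$ as a variational eigenvalue: it defines a map $T$ sending a candidate potential $\Delta^{-1}(f^{2})$ to the $L^{2}$-normalized first Neumann eigenfunction of $\Delta-\Delta^{-1}(f^{2})$ and proves $T$ is a contraction for small $\delta$, with the key contraction estimate coming from multiplying the eigenfunction equation by $v^{2}/u_{2}$ and using Poincar\'{e} together with De Giorgi--Nash bounds. The payoff of the paper's more laborious route is explicitly pedagogical: that specific contraction estimate is precisely the one (Proposition~\ref{agmax}) reused in the genuine problem, where the linearized operator does not have a hand-identifiable one-dimensional kernel/cokernel pair and where the eigenvalue playing $\kappa$'s role (the Klein--Gordon mass $\mu^{2}$) must be captured variationally. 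Your Lyapunov--Schmidt argument is cleaner for the toy model --- shorter, conceptually transparent, and one can just cite Crandall--Rabinowitz --- but its transversality bookkeeping relies on knowing the kernel and cokernel exactly, which does not transfer directly to the full Einstein--Klein--Gordon system; you correctly flag this in your closing paragraph.
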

\begin{proof}This problem is sufficiently simple that there are many ways to proceed. We sketch a proof which is close in spirit to the approach we will take later in the proof of Theorem~\ref{timeperiodicsoln}. 

First of all, for $h_2$ sufficiently regular, we will always be able to invert the Laplacian and then set
\[h_1 \doteq \Delta^{-1}\left(h_2^2\right)\]
where $h_{1}$ has Dirichlet boundary conditions. 

We are thus lead to solve the nonlinear eigenvalue problem
\begin{equation}\label{nonlineeig}
\Delta h_2 - \left(\kappa + \Delta^{-1}\left(h_2^2\right)\right)h_2 = 0,\qquad \partial_nh_2|_{\partial B_1} = 0.
\end{equation}

Let\footnote{Many choices are possible here for the underlying Banach space.} $\mathcal{L} = W^{2,2}(B_{1})$ and let $\delta \geq 0$ be sufficiently small. Letting $\mathcal{L}_s$ denote the ball of radius $s$ in $\mathcal{L}$, we then take $s$ sufficiently small and define a map (we will check below that the image is really in $\mathcal{L}_{s}$)
\[T : \mathcal{L}_s \to \mathcal{L}_s\]
as follows:

For every $f \in \mathcal{L}$, we define $T\left(f\right)$ to be the smallest eigenfunction with Neumann boundary conditions for the operator $\Delta - \Delta^{-1}\left(f^2\right)$. We normalize $u=T\left(f\right)$ to be positive (considering the nodal domains of $u$, we see that it cannot change sign) and have $\Vert u \Vert_{L^{2}(B_{1})} = \delta$. Recall that the variational characterization of the smallest eigenfunction $\kappa$ and eigenfunction $u$ shows that
\[
- \kappa = \delta^{-2} \int_{B_{1}} |\nabla u|^{2} + (\Delta^{-1}(f^{2})) u^{2} = \inf_{\{\tilde u \in H^{1}(B_{1}) : \Vert \tilde \Vert_{L^{2}(B_{1})}=\delta\}}  \delta^{-2} \int_{B_{1}} |\nabla \tilde u|^{2} + (\Delta^{-1}(f^{2})) \tilde u^{2} .
\]

To show that $T$ is a contraction for $\delta$ sufficiently small, we consider $f_{1},f_{2} \in \mathcal{L}_{s}$, and $u_{i} = T(f_{i})$ with associated eigenvalue $\kappa_{i}$ for $i=1,2$. Let $w_{i}$ solve $\Delta w_{i} = f_{i}^{2}$ with Dirichlet boundary conditions. By taking $u_{2}$ in the variational characterization of $\kappa_{1}$, we find that 
\[
- \kappa_{1} \leq -\kappa_{2} + c \Vert w_{1}-w_{2}\Vert_{C^{0}(B_{1})} \leq -\kappa_{2} + c\Vert f_{1}^{2}-f_{2}^{2}\Vert_{\mathcal{L}}.
\]
The second inequality follows from a standard application of the maximum principle (cf.\ Theorem 3.7 in \cite{giltru}) and Sobolev embedding. Thus (by symmetry),
\[
|\kappa_{1}-\kappa_{2}| + \Vert w_{1} - w_{2}\Vert_{C^{0}(B_{1})} \leq cs \Vert f_{1}-f_{2}\Vert_{\mathcal{L}}.
\]
Note that a simpler argument gives $|\kappa_{i}| \leq cs^{2}$ and $\Vert w_{i}\Vert_{C^{0}(B_{1})}\leq cs^{2}$ (note that this step uses the quadratic nature of $h_{1}$'s equation). In particular, elliptic estimates imply that
\[
\Vert u_{i} \Vert_{\mathcal{L}} \leq C(\delta + c s^{2})
\]
Thus, for any $s$ chosen sufficiently small, we can choose $\delta>0$ even smaller so that $T$ indeed maps into $\mathcal{L}_{s}$. 

To finish the proof that $T$ is a contraction map, we now rely on continuity of the first eigenfunction of a Schrodinger operator as the potential varies. There are several possible ways to proceed here (including arguments based on the spectral gap, or of a more functional theoretic flavor). Here, we give a proof that is more faithful to the argument we use in the proof of Theorem \ref{timeperiodicsoln} (cf.\ Proposition \ref{agmax}).

Multiply the equation for $u_{2}$ by $\frac{v^{2}}{u_{2}}$ where $v=u_{1}-u_{2}$ and integrate by parts. We thus find that
\[
\int_{B_{1}} \nabla \left( \frac{v^{2}}{u_{2}}\right)\cdot \nabla u_{2} + (\kappa_{2}+ w_{2}) v^{2} = 0.
\]
Note that 
\[
\nabla \left( \frac{v^{2}}{u_{2}}\right)\cdot \nabla u_{2}  = |\nabla v|^{2} - (u_{2})^{2} \left|\nabla \left( \frac{u_{1}}{u_{2}} \right)\right|^{2}.
\]
Thus, using the equations satisfied by $u_{i}$,
\begin{align*}
\int_{B_{1}} (u_{2})^{2}  \left|\nabla \left( \frac{u_{1}}{u_{2}} \right)\right|^{2} & \leq \int_{B_{1}} |\nabla v|^{2} + (\kappa_{2} +w_{2}) v^{2}\\
& = \int_{B_{1}} (\kappa_{2}-\kappa_{1} + w_{2}-w_{1}) u_{1} v\\
& \leq c(q) \delta^{2} \Vert f_{1}-f_{2}\Vert_{\mathcal{L}}^{2} + q  \int v^{2} 
\end{align*}
for $q>0$ to be chosen below. 

On the other hand, De Giorgi--Nash iteration (using the Neumann boundary condition, the normalization $\Vert u_{i}\Vert_{L^{2}(B_{1})}=\delta$, and the $C^{0}$ bound for $w_{i}$ alluded to above) implies that there are $c,C$ uniform for $s,\delta$ small so that $u_{i} \in [c \delta, C\delta]$. Now, the Poincar\'e inequality implies that
\begin{align*}
C^{-1} & \int_{B_{1}} |u_{1}-A u_{2}|^{2} \leq \delta^{2}\int_{B_{1}} \left| \frac{u_{1}}{u_{2}}-A \right|^{2} \leq c\delta^{2} \int_{B_{1}}  \left|\nabla \left( \frac{u_{1}}{u_{2}} \right)\right|^{2}\\
&  \leq \int_{B_{1}} (u_{2})^{2}  \left|\nabla \left( \frac{u_{1}}{u_{2}} \right)\right|^{2}  \leq c(q) \delta^{2} \Vert f_{1}- f_{2}\Vert_{\mathcal{L}}^{2} + q  \int v^{2} 
\end{align*}
for 
\[
A = \frac{1}{|B_{1}|} \int_{B_{1}} \frac{u_{1}}{u_{2}}.
\]
Observe that the triangle inequality and the normalization of the $u_{i}$'s implies that 
\[
 \delta^{2} |1-A|^{2} = \left| \Vert u_{1}\Vert_{L^{2}(B_{1})} - A \Vert u_{2}\Vert_{L^{2}(B_{1})}\right|^{2} \leq  \int_{B_{1}} |u_{1}-A u_{2}|^{2} \leq c(q) \delta^{2} \Vert f_{1}- f_{2}\Vert_{\mathcal{L}}^{2} + c q  \int v^{2}
 \]
so putting this all together, we find that
\[
\int_{B_{1}} v^{2} \leq 2 \int_{B_{1}} (u_{1}-A u_{2})^{2} +  2\delta^{2}|1-A|^{2} \leq c(q) \delta^{2} \Vert f_{1}- f_{2}\Vert_{\mathcal{L}}^{2} + c q  \int v^{2}.
\]
Choosing $q$ appropriately, we can absorb the second term into the right hand side. Standard $W^{2,2}$-elliptic estimates allow us to upgrade the resulting $L^{2}$-bound to the $\mathcal{L}$-estimate,
\[
\Vert T(f_{1})-T(f_{2})\Vert_{\mathcal{L}} = \Vert u_{1}-u_{2}\Vert_{\mathcal{L}}\leq c\delta\Vert f_{1}-f_{2}\Vert_{\mathcal{L}}
\]  
Thus, for $\delta >0$ sufficiently small, $T$ is a contraction and thus has a unique fixed point. This fixed point yields the desired solution.
\end{proof}

We now make an analogy with Theorem~\ref{timeperiodicsoln} by letting $h_1$ correspond to the metric $g$, $h_2$ to the reduced scalar field $\psi$, and the parameter $\kappa$ to the Klein--Gordon mass $\mu^2$. In this analogy, the nonlinear coupling between $h_1$ and $h_2$ mimics the way that $\psi$ is coupled to $g$ through the Einstein equations and the way in which $g$ is coupled to $\psi$ in the Klein--Gordon equation, the ability to invert the Laplacian and solve for $h_1$ corresponds the availability of the Carter--Robinson theory, and finally the variational structure used in solving for $h_2$ corresponds to the variational structure we have for $\psi$. 
\subsection{The Fundamental Difficulties}
We close this overview with a discussion of the key difficulties we will have to overcome if we can hope to prove Theorem~\ref{timeperiodicsoln} in the fashion of the proof of Proposition~\ref{modelprop}.
\subsubsection{Approximate Carter--Robinson Reduction}
We will have to show that the decoupling behind the Carter--Robinson theory continues to hold in an approximate setting even when we do not have a vacuum spacetime. 

We give one illustrative example of a difficulty which arises. Once our spacetime is not exactly Ricci flat, then one will no longer have that $d\theta = 0$ (recall that $\theta$ is the ``twist'' $1$-form~\eqref{twistthe1form}). However, in order to exploit the harmonic maps structure behind the Carter--Robinson theory, we will need to pick some function $Y$ so that $dY = \theta$. Equivalently, we need to make a ``gauge'' choice of a function $B$ which will satisfy $dB = \theta$. There is no ``obvious'' choice for $B$, but if $B$ is not chosen correctly, it becomes difficult to close the necessary estimates on the nonlinear error terms.
\subsubsection{Loss of Ellipticity at the Event Horizons and Singularities on the Axis}
The biggest deviation from the mode problem solved by Proposition~\ref{modelprop} is that essentially none of the equations we shall study are uniformly elliptic. This occurs for two reasons. First of all, on the event horizon of a Kerr black hole, the space of $T$ and $\Phi$ is \emph{not} timelike; hence, we expect to lose ellipticity as we approach the event horizon. Secondly, near the axis of symmetry, i.e., the fixed points of $\Phi$, our equations will have singular coefficients. Among the most worrisome points are where the axis of symmetry meets the event horizon.

Ultimately, these difficulties will be dealt with in an ad-hoc fashion for each equation. However, the general philosophy is already apparent in the equations for $(X,Y)$ in the vacuum setting~\eqref{harmonicmap1} and~\eqref{harmonicmap2}. Considered in the $(\rho,z)$ plane, these equations look very degenerate as $\rho \to 0$. However, once they are interpreted as axisymmetric functions on $\mathbb{R}^3$ the equations become regular. This trick of interpreting singular equations in lower dimensions as regular equations for symmetric functions in higher dimensions will occur repeatedly in our proof.
\subsubsection{Compatibility Conditions}\label{subsubsec:compat-cond}
In order for a metric of the form~\eqref{astataximetric} to have a regular extension to a bifurcate event horizon and to the axis of symmetry there are various compatibility conditions in the boundary conditions for the metric components that must be satisfied. Furthermore, the time-frequency parameter $\omega$ must be chosen so that the scalar field $\Psi$ has a vanishing energy flux along the event horizon.\footnote{See Theorem 1.3 in \cite{shlapgrow}.} This is crucial to close the estimates for $\Psi$ and various other terms. Thus, in the setting of a fixed point argument like in Proposition~\ref{modelprop} the parameter $\omega$ must be continuously modified. These couplings between the various unknowns will cause technical annoyances.

\subsubsection{Fixing the Klein--Gordon Mass}

Finally, the key variational structure behind the existence theory of the scalar field $\psi$ requires that we treat $\mu^2$ as an eigenvalue. Thus, it would naively seem that $\mu$ will vary along our $1$-parameter family of bifurcating solutions. However, Theorem~\ref{timeperiodicsoln} concerns a fixed Klein--Gordon mass. In order to keep the Klein--Gordon mass fixed and thus prove Theorem~\ref{timeperiodicsoln}, we will need to exploit the freedom to also vary the underlying $2$-parameter Kerr family that we are perturbing off of.

\section{Preliminaries}

We begin by \emph{fixing} a reference Kerr solution \index{Metric Quantities!$g_{(a,M)}$}$g_{(a,M)}$ with $0 < |a| < M$. These parameters can safely be thought of as being fixed until the final section of the paper, Section \ref{arrange}. 

We will briefly discuss the standard Boyer--Lindquist coordinates for $g_{(a,M)}$, and then introduce the isothermal coordinates that will form the backbone for our metric ansatz used in the proof of Theorem \ref{timeperiodicsoln}. Define \index{Metric Quantities!$\tilde r_{\pm}$}$\tilde r_{\pm} = M \pm\sqrt{M^2-a^2}$. The domain of outer communication (minus the axis of symmetry) of the Kerr spacetime of mass $M$ and angular momentum $a$ can be covered by a \index{Coordinates!Boyer--Lindquist}Boyer--Lindquist coordinate chart\footnote{We have used $\tilde r$ in place of $r$, because we will frequently use the definition $r^{2}=1+\rho^{2}+z^{2}$, for $\rho,z$ isothermal coordinates to be defined below.} on
\[\index{Coordinates!$\tilde r$}\{(t,\tilde r,\theta,\phi) \in \mathbb{R}\times (\tilde r_{+},\infty)\times (0,\pi) \times (0,2\pi)\}\]
where the metric takes the form
\begin{equation}\label{kerrmetric}
g_{a,M} = -\left( 1-\frac{2M\tilde r}{\Sigma^{2}}\right) dt^{2} - \frac{4Ma\tilde r \sin^{2}\theta}{\Sigma^{2}} dtd\phi+ \sin^{2}\theta \frac{\Pi}{\Sigma^{2}}d\phi^{2}  +\frac{\Sigma^{2}}{\Delta} d\tilde r^{2}  + \Sigma^{2} d\theta^{2}.
\end{equation}
Here
\begin{align*}
 \index{Metric Quantities!$\Delta$}\Delta & \doteq  \tilde r^{2} - 2M \tilde r+a^{2}\\
 \index{Metric Quantities!$\Sigma^{2}$} \Sigma^{2} & \doteq\tilde r^{2}+a^{2}\cos^{2}\theta\\
 \index{Metric Quantities!$\Pi$} \Pi & \doteq (\tilde r^{2}+a^{2})^{2}-a^{2}\sin^{2}\theta \Delta.
\end{align*}
Clearly, \index{Metric Quantities!$T$}$T\doteq \frac{\partial}{\partial t}$ and \index{Metric Quantities!$\Phi$}$\Phi \doteq \frac{\partial}{\partial \phi}$ are Killing vectors.

In order to bring~(\ref{kerrmetric}) into the form considered below, we introduce isothermal coordinates in the $(\tilde r,\theta)$ plane by\index{Coordinates!$\rho$}\index{Coordinates!$z$}
\[\rho\left(\tilde r,\theta\right) \doteq \sqrt{\Delta}\sin\theta,\]
\[z\left(\tilde r,\theta\right) \doteq  \left(\tilde r-M\right)\cos\theta.\]
After some straightforward calculations, one finds that the metric in $(t,\phi,\rho,z)$ coordinates is in the form
\begin{equation}\label{kerrmetric2}
g_{a,M} =  -V_{K} dt^{2} + 2W_{K}dtd\phi+ X_{K}d\phi^{2}  +e^{2\lambda_{K}}\left(d\rho^2+dz^2\right).
\end{equation}
 with
\begin{align*}
\index{Metric Quantities!$V_{K}$}V_{K} & = 1 - \frac{2M\tilde r}{\Sigma^{2}}\\
\index{Metric Quantities!$W_{K}$}W_{K} & = - \frac{2Ma\tilde r\sin^{2}\theta}{\Sigma^{2}}\\
\index{Metric Quantities!$X_{K}$}X_{K} & = \sin^{2}\theta \frac{\Pi}{\Sigma^{2}}\\
\index{Metric Quantities!$\lambda_{K}$} e^{2\lambda_{K}} & = \Sigma^{2}\Delta^{-1}\left(\frac{(\tilde r-M)^{2}}{\Delta} \sin^{2}\theta + \cos^{2}\theta \right)^{-1}.
\end{align*}
Here, we are considering $\tilde r$ and $\theta$ as implicit functions of $\rho,z$. See \cite[Lemma 3.0.1]{HBH:geometric} for some related computations. 

Let us also take the opportunity to note the important fact that $X_KV_K + W_K^2 = \rho^2$ (indeed, based on this relation, it is not hard to derive the above definition for $\rho$). 

It is convenient to define a constant \index{Metric Quantities!$\gamma$}$\gamma^{2} \doteq M^{2}-a^{2}$. 

\subsection{Metric ansatz} \label{subsec:metric-ansatz} We now describe our metric ansatz (as well as recall certain results from our companion article \cite{HBH:geometric}) used in the proof of Theorem \ref{timeperiodicsoln}. Roughly speaking, we will search for solutions to the Einstein--Klein--Gordon equations whose metric is of the form \eqref{kerrmetric2}. As we will see, the fact that $\rho,z$ are isotropic in an appropriate sense will play an important role in our analysis. We note that our choice of ansatz is based on the work \cite{weinstein,weinstein2}, and in particular the construction of the manifold with corners $\overline{\mathscr{B}}$ (e.g., the choice of $s$ and $\chi$ coordinates) is directly motivated by \cite{weinstein2}. 

We define \index{Coordinates!$\mathcal{M}$}$\mathcal{M} \doteq \{(t,\phi,\rho,z)  \in \mathbb{R} \times (0,2\pi) \times \mathscr{B}\}$, where $\mathscr{B} \doteq \{(\rho,z) \in \mathbb{R}^{2} : \rho > 0\}$. We will assume that the spacetimes, minus the axis of symmetry and horizon, are given by $(\mathcal{M},g)$ where the Lorentzian metrics $g$ take the form
\begin{equation}\label{eq:metric-ansatz}
g \doteq -V dt^{2} + 2W dtd\phi + X d\phi^{2} + e^{2\lambda}(d\rho^{2}+dz^{2})
\end{equation}
for suitable functions $V,W,X,\lambda : \mathscr{B}\to\mathbb{R}$. 

In \cite{HBH:geometric} we examine the Einstein--Klein--Gordon equations under this metric ansatz---we will return to this discussion below. However, we emphasize that even if we know that the metric $g$ (along with a scalar field) is a solution to the Einstein--Klein--Gordon equations, there is no reason (in general) to expect that $g$ extends smoothly across the axis of symmetry or to the horizon. Furthermore, the equations corresponding to the Einstein--Klein--Gordon equations derived in \cite{HBH:geometric} behave in a highly singular manner at the axis and horizon (essentially, due to the fact that the $(t,\phi,\rho,z)$ coordinates are not regular at the axis and horizon---this can already be seen by examining the Kerr metric in these coordinates). 

To handle both of these issues we will construct an appropriate surface with corners \index{Coordinates!$\overline{\mathscr{B}}$}$\overline{\mathscr{B}}$ so that \index{Coordinates!$\mathscr{B}$}$\mathscr{B}$ is the interior of $\overline{\mathscr{B}}$. We will then solve for $V,W,X,\lambda : \overline{\mathscr{B}} \to \RR$ which lie in certain function spaces. This will then allow us to refer to \cite[Proposition 2.2.1]{HBH:geometric}, which describes sufficient conditions for $(\mathcal{M},g)$ to extend to a larger Lorentzian manifold with boundary $(\tilde{\mathcal{M}},\tilde g)$ which is asymptotically flat and has a boundary consisting of a bifurcate Killing event horizon. In particular, our function spaces will turn out to guarantee that the horizon is non-degenerate. 

Recall that\footnote{The parameter \index{Coordinates!$\beta$}$\beta$ in \cite{HBH:geometric} will always be equal to $\gamma$ here, as it is for Kerr.} $\gamma = \sqrt{M^{2}-a^{2}}$. We now describe the surface with corners $\overline{\mathscr{B}}$ by describing three coordinate charts covering it. The reader may find it useful to also refer to \cite[Section 2]{HBH:geometric} for further details of this construction. The \index{Coordinates!$\rho$}\index{Coordinates!$z$}$\rho,z$ coordinates extend to cover the majority of $\overline{\mathscr{B}}$ as
\[
\{(\rho,z) : \rho\geq 0, z \in \RR\setminus\{\pm \gamma\}\}
\]
The interval \index{Coordinates!$\mathscr{H}$}$\mathscr{H} : = \{(0,z) : z \in (-\gamma,\gamma)\}$ will correspond to the horizon, while the intervals \index{Coordinates!$\mathscr{A}$}$\mathscr{A} = \{(0,z): |z| > \gamma \}$ will correspond to the axis. We will label the top and bottom components of the axis by \index{Coordinates!$\mathscr{A}_{N}$}$\mathscr{A}_{N}$ and  \index{Coordinates!$\mathscr{A}_{S}$}$\mathscr{A}_{S}$, respectively. 

\begin{figure}[h!]
\begin{tikzpicture}
	\filldraw [opacity = .2] (0,-3) rectangle (2,2);
	\filldraw [white] (0,0) circle (.1);
	\filldraw [white] (0,-1) circle (.1);
	\draw [thick, ->] (0,.1) -- node [left] {$\mathscr{A}_N$}  (0,2.1) node [above] {$\scriptstyle \rho=0$};
	\draw [thick] (0,-.1) -- node [left] {$\mathscr{H}$}  (0,-.9);
    	\draw [thick, ->] (0,-1.1) -- node [left] {$\mathscr{A}_S$}  (0,-3.1);
	
	\node  at (-.5,0) {$\scriptstyle z=+\gamma$};
	\node  at (-.5,-1) {$\scriptstyle z=-\gamma$};	
	
	\draw [->] (1,-2.5) -- (1,-2) node [left] {$z$};
	\draw [->] (1,-2.5) -- (1.5,-2.5) node [below] {$\rho$};
	
	\draw  [dashed] (.05,.15) -- (3.45,1.85);
	\draw [dashed] (.05,-.15) -- (4,-.65);
	\draw [dashed] (3.95,.65) circle (1.3);

	\draw [thick, ->] (3.95,.65) -- (3.95,.65+1.3) node [below left] {$\chi$};
	\draw [thick, ->] (3.95,.65) -- (3.95 + 1.3,.65) node [below left] {$s$};
	\filldraw [opacity = .2] (3.95,.65) -- (3.95+1.3,.65) arc (0:90:1.3) -- cycle;
	\node at (3.9,.47) {$p_{N}$};
\end{tikzpicture}
\caption{A diagram of the manifold with corners $\overline{\mathscr{B}}$.}
\label{fig:conf-manifold}
\end{figure}
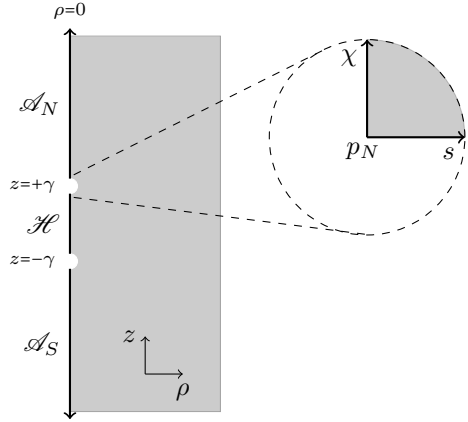

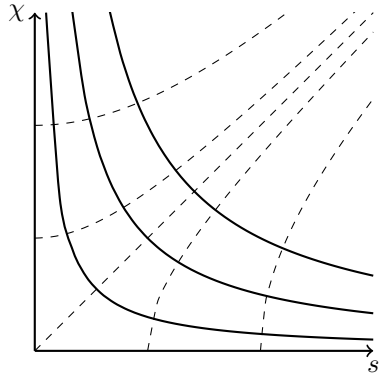
\begin{figure}[h!]
\begin{tikzpicture}[scale = 1.5]
	\begin{scope}
	\clip (0,0) rectangle (3,3);
	  \draw[thick,domain=.1:3,smooth,variable=\x] plot ({\x},{.3/\x});
	  \draw[thick, domain=.33:3,smooth,variable=\x] plot ({\x},{1/\x});
  	  \draw[thick, domain=.66:3,smooth,variable=\x] plot ({\x},{2/\x});
	   \draw[dashed, domain=0:3,smooth,variable=\x] plot ({\x},{(\x^2+1)^(1/2)});
   	   \draw[dashed, domain=0:3,smooth,variable=\x] plot ({\x},{(\x^2+4)^(1/2)});
	   \draw[dashed, domain=0:3,smooth,variable=\x] plot ({\x},{\x});
	   \draw[dashed, domain=1:3,smooth,variable=\x] plot ({\x},{(\x^2-1)^(1/2)});
   	   \draw[dashed, domain=2:3,smooth,variable=\x] plot ({\x},{(\x^2-4)^(1/2)});
	  \end{scope}
 	\draw [thick,->] (0,0) -- (3,0) node [below] {$s$};
	 \draw [thick,->] (0,0) -- (0,3) node [left] {$\chi$};
\end{tikzpicture}
\caption{An illustration of the lines of constant $\rho$ (the solid lines) and $z$ (the dashed lines) in the $s,\chi$ plane.}
\label{fig:conf-manifold-rho-z}
\end{figure}

A careful examination of the form of the Kerr metric placed into the form \eqref{kerrmetric2} suggests that the axis meets the horizon at a right angle. Thus, we cannot simply include the points $(0,\pm\gamma)$, but instead must introduce new coordinate charts adapted to those points. To cover the region where $\mathscr{A}_{N}$ meets $\mathscr{H}$, we define coordinates on the surface with corners (see Figure \ref{fig:conf-manifold})
\[
\index{Coordinates!$s$}\index{Coordinates!$\chi$}\{(s,\chi) : s \geq 0, \chi\geq 0\}
\]
and observe that the coordinate change (see Figure \ref{fig:conf-manifold-rho-z})
\[
\rho  \doteq s \chi
\]
\[
z  \doteq \frac 12 (\chi^{2}-s^{2}) +\gamma
\]
is a smooth change of coordinates on $\{(s,\chi) : s > 0 ,\chi > 0\}$. Observe that we can invert the coordinate change to find
\[
s = \sqrt{-(z-\gamma) + \sqrt{(z-\gamma)^{2} + \rho^{2}}} 
\]
\[
\chi = \sqrt{(z-\gamma) + \sqrt{(z-\gamma)^{2} + \rho^{2}}} ,
\]
so the $s,\chi$ coordinate patch actually covers the entire $\rho,z$ coordinate patch.\footnote{The $\rho,z$ coordinates will be very important to our analysis on $\overline{\mathscr{B}}$ as they are much simpler than the $s,\chi$ coordinates away from the axis and horizon.}
Note also that the coordinate change is conformal
\[
d\rho^{2} + dz^{2} = (s^{2}+\chi^{2}) \left( ds^{2}+d\chi^{2}\right).
\]
It is useful to record the following expressions linking $\partial_{s}$ and $\partial_{\chi}$ with $\partial_{\rho}$ and $\partial_{z}$
\[
\partial_{s} = \chi \partial_{\rho} - s \partial_{z}, \qquad \partial_{\chi} = s\partial_{\rho}+ \chi\partial_{z}
\]
\[
\partial_{\rho} = \frac{\chi}{\chi^{2}+s^{2}} \partial_{s} + \frac{s}{\chi^{2}+s^{2}} \partial_{\chi}, \qquad \partial_{z} = \frac{-s}{s^{2}+\chi^{2}} \partial_{s}+ \frac{\chi}{s^{2}+\chi^{2}} \partial_{\chi}.
\]
We define the point where $s=\chi=0$ to be \index{Coordinates!$p_{N}$}$p_{N}$ and similarly set $s'=\chi'=0$ to be \index{Coordinates!$p_{S}$}$p_{S}$.

Finally, to cover the region where $\mathscr{A}_{S}$ meets $\mathscr{H}$, we define completely a analogous coordinate chart \index{Coordinates!$s'$}\index{Coordinates!$\chi'$}$\{(s',\chi') : s'\geq 0,\chi'\geq 0\}$ by 
\[
\rho \doteq s'\chi'
\]
\[
z\doteq \frac 12 \left((\chi')^{2}-(s')^{2}\right) - \gamma.
\]
It is easy to derive formulas analogous to those for the $s,\chi$ coordinates discussed above. 

We have thus defined a conformal structure on the manifold with corners $\overline{\mathscr{B}}$. It is useful to define the following open sets of $\overline{\mathscr{B}}$ 
\begin{align*}
\index{Coordinates!$\overline{\mathscr{B}_{A}}$}\overline{\mathscr{B}_{A}}& \doteq \left\{ (\rho,z) \in \overline{\mathscr{B}} : \rho^{2} + (z-\gamma)^{2} > \frac{\gamma}{200}, \rho^{2} + (z+\gamma)^{2} > \frac{\gamma}{200} , |z| + |\rho| > \frac{499}{500} \gamma \right\} \\
\index{Coordinates!$\overline{\mathscr{B}_{H}}$} \overline{\mathscr{B}_{H}}& \doteq \left\{ (\rho,z) \in \overline{\mathscr{B}} : \rho^{2} + (z-\gamma)^{2} > \frac{\gamma}{200}, \rho^{2} + (z+\gamma)^{2} > \frac{\gamma}{200} , |z| + |\rho| < \frac{501}{500} \gamma \right\}  \\
\index{Coordinates!$\overline{\mathscr{B}_{N}}$} \overline{\mathscr{B}_{N}}& \doteq \left\{ (\rho,z) \in \overline{\mathscr{B}} : z\not = \gamma, \rho^{2}+(z-\gamma)^{2} < \frac{\gamma}{100} \right\} \cup \left\{(s,\chi) \in \overline{\mathscr{B}} : 0 \leq s,\chi < \left( \frac{\gamma}{25}\right)^{\frac 1 4} \right\}\\
\index{Coordinates!$\overline{\mathscr{B}_{S}}$}  \overline{\mathscr{B}_{S}}& \doteq \left\{ (\rho,z) \in \overline{\mathscr{B}} : z\not = -  \gamma, \rho^{2}+(z+\gamma)^{2} < \frac{\gamma}{100} \right\} \cup \left\{(s',\chi') \in \overline{\mathscr{B}} : 0 \leq s',\chi'< \left( \frac{\gamma}{25}\right)^{\frac 1 4} \right\}
\end{align*}
The precise form of the sets will not be of much importance, so the reader can simply think of $\overline{\mathscr{B}_{A}}$ as being an open neighborhood of (most of) the axis, $\overline{\mathscr{B}_{H}}$ a neighborhood of (most of) the horizon, and $\overline{\mathscr{B}_{N}}$ (resp.\ $\overline{\mathscr{B}_{S}}$) an open neighborhood of the north pole $p_{N}$ (resp.\ the south pole $p_{S}$). Observe that $\overline{\mathscr{B}}$ is covered by the union of these four open sets. 

\subsection{Function Spaces on $\overline{\mathscr{B}}$}\label{funcspace}
In this section we will introduce some function spaces naturally associated to $\overline{\mathscr{B}}$ which will be convenient for our analysis.

By considering $(\rho,z,\phi)$ as cylindrical coordinates on $\mathbb{R}^3$, to any function $f(\rho,z) : \overline{\mathscr{B}} \to \mathbb{R}$, we may associate a function $f_{\mathbb{R}^3} : \mathbb{R}^3 \to \mathbb{R}$ by setting
\[f_{\mathbb{R}^3}\left(\rho,z,\phi\right) \doteq f\left(\rho,z\right).\]

Using this, we define the following space of functions $f: \overline{\mathscr{B}} \to \mathbb{R}$.
\begin{definition}\label{Lp}We define
\[\index{Function Spaces!$\dot{W}^{k,p}_{\rm axi}\left(\overline{\mathscr{B}}\right)$} \dot{W}^{k,p}_{\rm axi}\left(\overline{\mathscr{B}}\right) \doteq \left\{ f(\rho,z) : f_{\mathbb{R}^3} \in \dot{W}^{k,p}\left(\mathbb{R}^3\right)\right\},\]
\[\index{Function Spaces!$W^{k,p}_{\rm axi}\left(\overline{\mathscr{B}}\right)$} W^{k,p}_{\rm axi}\left(\overline{\mathscr{B}}\right) \doteq \left\{ f(\rho,z) : f_{\mathbb{R}^3} \in W^{k,p}\left(\mathbb{R}^3\right)\right\},\]
\[\index{Function Spaces!$C^{k,\alpha}_{\rm axi}\left(\overline{\mathscr{B}}\right)$} C^{k,\alpha}_{\rm axi}\left(\overline{\mathscr{B}}\right) \doteq \left\{ f(\rho,z) : f_{\mathbb{R}^3} \in C^{k,\alpha}\left(\mathbb{R}^3\right)\right\},\]
\[\index{Function Spaces!$C^{k,\alpha}_{0,\rm axi}\left(\overline{\mathscr{B}}\right) $} C^{k,\alpha}_{0,\rm axi}\left(\overline{\mathscr{B}}\right) \doteq \left\{ f(\rho,z) : f_{\mathbb{R}^3} \in C_0^{k,\alpha}\left(\mathbb{R}^3\right)\right\}.\]
We extend these norms to $1$-forms by applying the corresponding norm to each of the $x$, $y$, $z$ components of the form.
\end{definition}

It turns out that another family of function spaces is also useful. First of all, by considering  $(s,\phi_1,\chi,\phi_2)$ as polar coordinates\footnote{It is potentially surprising that ``four dimensional coordinates'' are useful near the poles. We can (partially) justify this by the fact that certain quantities naturally arising out of the Kerr metric in this form are regular in precisely the sense described by these ``four dimensional coordinates.'' See Lemma \ref{lem:xK-regularity} for an example of this.} on $\mathbb{R}^2 \times \mathbb{R}^2 = \mathbb{R}^4$, to any function $f(\rho,z) : \overline{\mathscr{B}_N} \to \mathbb{R}$ or $f(\rho,z) : \overline{\mathscr{B}_S} \to \mathbb{R}$, we may associate a function \index{Coordinates!$f_{\mathbb{R}^{4}}$}$f_{\mathbb{R}^4} : \mathbb{R}^4 \to \mathbb{R}$. For example, for $s,\chi \in \overline{\mathscr{B}_{N}}$ we set
\[
f_{\RR^{4}}(s,\phi_{1},\chi,\phi_{2}) = f(s,\chi). 
\]

Next we introduce some useful cut-offs.
\begin{definition}\label{cutoffs}
\index{Miscellaneous!$\chi_{N}$}\index{Miscellaneous!$\chi_{S}$} Let $\xi_N,\xi_S:\overline{\mathscr{B}} \to \mathbb{R}$ be smooth\footnote{Observe that it may not be clear what it means to be \emph{smooth} on $\overline{\mathscr{B}}$ in general. In this case, since $\xi_{N}\equiv 1$ near $p_{N}$ and $\xi_{N}\equiv 0$ near $p_{S}$, we simply require that $\xi_{N}$ is smooth in the $\rho,z$ coordinates (where they are defined). A similar discussion holds for $\xi_{S}$. } cut-off functions such that $\rho$ sufficiently small implies that $\partial_{\rho}\xi_N = \partial_{\rho}\xi_S = 0$, supp$\left(\xi_N\right) \subset \overline{\mathscr{B}_N}$, supp$\left(\xi_S\right) \subset \overline{\mathscr{B}_S}$, and supp$\left(1-\xi_N-\xi_S\right) \subset \overline{\mathscr{B}_A} \cup \overline{\mathscr{B}_H}$.
\end{definition}

Now we are ready for the second family of useful function spaces.
\begin{definition}\label{norm}
We define
\[\index{Function Spaces!$\hat{\dot{W}}_{\rm axi}^{k,p}\left(\overline{\mathscr{B}}\right)$} \hat{\dot{W}}_{\rm axi}^{k,p}\left(\overline{\mathscr{B}}\right) = \left\{ f(\rho,z) : \left(\xi_Nf\right)_{\mathbb{R}^4},\left(\xi_Sf\right)_{\mathbb{R}^4} \in \dot{W}^{k,p}\left(\mathbb{R}^4\right)\text{ and }\left(\left(1-\xi_N-\xi_S\right)f\right)_{\mathbb{R}^3} \in \dot{W}^{k,p}\left(\mathbb{R}^3\right)\right\},\]
\[\index{Function Spaces!$\hat{W}^{k,p}_{\rm axi}\left(\overline{\mathscr{B}}\right)$} \hat{W}^{k,p}_{\rm axi}\left(\overline{\mathscr{B}}\right) = \left\{ f(\rho,z) :  \left(\xi_Nf\right)_{\mathbb{R}^4},\left(\xi_Sf\right)_{\mathbb{R}^4} \in W^{k,p}\left(\mathbb{R}^4\right)\text{ and }\left(\left(1-\xi_N-\xi_S\right)f\right)_{\mathbb{R}^3} \in W^{k,p}\left(\mathbb{R}^3\right)\right\},\]
\[\index{Function Spaces!$\hat{C}^{k,\alpha}_{\rm axi}\left(\overline{\mathscr{B}}\right)$}  \hat{C}^{k,\alpha}_{\rm axi}\left(\overline{\mathscr{B}}\right) = \left\{ f(\rho,z) : \left(\xi_Nf\right)_{\mathbb{R}^4},\left(\xi_Sf\right)_{\mathbb{R}^4} \in C^{k,\alpha}\left(\mathbb{R}^4\right)\text{ and }\left(\left(1-\xi_N-\xi_S\right)f\right)_{\mathbb{R}^3} \in C^{k,\alpha}\left(\mathbb{R}^3\right)\right\}.\]
\[\index{Function Spaces!$\hat{C}^{k,\alpha}_{0,\rm axi}\left(\overline{\mathscr{B}}\right)$} \hat{C}^{k,\alpha}_{0,\rm axi}\left(\overline{\mathscr{B}}\right) = \left\{ f(\rho,z) : \left(\xi_Nf\right)_{\mathbb{R}^4},\left(\xi_Sf\right)_{\mathbb{R}^4} \in C_0^{k,\alpha}\left(\mathbb{R}^4\right)\text{ and }\left(\left(1-\xi_N-\xi_S\right)f\right)_{\mathbb{R}^3} \in C_0^{k,\alpha}\left(\mathbb{R}^3\right)\right\}.\]
We extend these norms to $1$-forms by cutting the forms off to the appropriate regions and then applying the corresponding norm to each of the components of the form in the relevant Cartesian orthonormal basis.
\end{definition}

\begin{remark}
When there is unlikely to be any confusion, we will often drop the subscript ``axi.''
\end{remark}

Similarly, we define \index{Function Spaces!$\hat{C}^{k}$} $\hat{C}^k$ and \index{Function Spaces!$\hat{C}^{k}_{0}$}$\hat{C}_0^k$ spaces, and then set \index{Function Spaces!$\hat{C}^{\infty}$}$\hat{C}^{\infty} \doteq \cap_{k=1}^{\infty}\hat{C}^k$ and \index{Function Spaces!$\hat{C}^{\infty}_{0}$}$\hat{C}_0^{\infty} \doteq \cap_{k=1}^{\infty}\hat{C}_0^k$.

For any function $f: \mathscr{B} \to \mathbb{R}$ we denote the gradient of $f$ by \index{Miscellaneous!$\partial f$}$\partial f$. We also introduce the convention that $\left|\partial f\right|$ always refers to the Euclidean norm,~i.e.
\begin{equation}\label{eq:def-partial}
\left|\partial f\right|^2 = \left(\partial_{\rho}f\right)^2 + \left(\partial_zf\right)^2.
\end{equation}

It is also sometimes useful to use the following notion of a renormalized gradient norm \index{Miscellaneous!$\hat{\partial}f$}$\left|\hat{\partial}f\right|$:
\begin{equation}\label{reder}
\left|\hat{\partial}f\right| \doteq \left|\nabla_{\mathbb{R}^4}\left(\xi_Nf\right)_{\mathbb{R}^4}\right| + \left|\nabla_{\mathbb{R}^4}\left(\xi_Sf\right)_{\mathbb{R}^4}\right| + \left|\nabla_{\mathbb{R}^3}\left(\left(1-\xi_N-\xi_S\right)f\right)_{\mathbb{R}^3}\right|.
\end{equation}
In an analogous fashion we may define $\left|\hat{\partial}^kf\right|$ for any $k \geq 1$.

We will use the notation \index{Miscellaneous!$\underline{\partial}f$}$\underline{\partial}$ to refer to vector of $(s,\chi)$ or $(s',\chi')$ derivatives,~i.e.
\[\underline{\partial}f|_{\overline{\mathscr{B}_N}} = \left(\partial_sf,\partial_{\chi}f\right).\]

The following lemma which concerns spherically symmetric functions $f : \mathbb{R}^2 \to \mathbb{R}$ will be used repeatedly in the paper, generally without explicit mention. 
\begin{lemma}\label{firstdecay}Let $f : \mathbb{R}^2 \to \mathbb{R}$ be a $C^2$ spherically symmetric function and $(\rho,\phi)$ denote polar coordinates on $\mathbb{R}^2$.\footnote{Here, by spherically symmetric, we mean that written in polar coordinates $(\rho,\phi)$, $f(\rho,\phi) = f(\rho)$.} Then, for any point $(x_0,y_0) \in \mathbb{R}^2$ we have
\[(x_0^2+y_0^2)^{-1/2}\partial_{\rho}f|_{\left\{(x,y) = (x_0,y_0)\right\}} = \partial_x^2f|_{\left\{(x,y) = (0,(x_0^2+y_0^2)^{1/2})\right\}}.\]
\end{lemma}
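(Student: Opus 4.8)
The claim is a purely local computation identifying a one-dimensional radial Laplacian-type operator with a second Cartesian derivative evaluated at a ``north pole'' of the relevant sphere. The plan is to work in the plane $\RR^2$ with the spherically symmetric function $f = f(\rho)$, where $\rho = \sqrt{x^2+y^2}$, and simply differentiate. First I would fix the target point $(x_0,y_0)$, set $\rho_0 = (x_0^2+y_0^2)^{1/2}$, and note that by spherical symmetry both sides of the claimed identity are invariant under rotations about the origin; hence it suffices to verify the identity at the single representative point $(0,\rho_0)$ on the positive $y$-axis, i.e.\ to show
\[
\rho_0^{-1}\,\partial_\rho f\big|_{\rho = \rho_0} \;=\; \partial_x^2\big(f(\sqrt{x^2+y^2})\big)\Big|_{(x,y)=(0,\rho_0)}.
\]
Both sides are already evaluated at the same point, so the geometric content has been reduced to a one-variable calculus fact.

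Next I would carry out the chain-rule computation. Writing $F(x,y) = f(\sqrt{x^2+y^2})$, one has $\partial_x F = f'(\rho)\,\frac{x}{\rho}$ and then
\[
\partial_x^2 F = f''(\rho)\,\frac{x^2}{\rho^2} + f'(\rho)\left(\frac{1}{\rho} - \frac{x^2}{\rho^3}\right) = f''(\rho)\,\frac{x^2}{\rho^2} + f'(\rho)\,\frac{\rho^2 - x^2}{\rho^3}.
\]
Evaluating at $(x,y) = (0,\rho_0)$, where $x = 0$ and $\rho = \rho_0$, the first term vanishes and the second term becomes $f'(\rho_0)\,\frac{\rho_0^2}{\rho_0^3} = \rho_0^{-1} f'(\rho_0) = \rho_0^{-1}\partial_\rho f|_{\rho = \rho_0}$, which is exactly the left-hand side. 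This completes the argument.

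There is essentially no obstacle here; the only minor point to be careful about is the regularity hypothesis $f \in C^2$, which is exactly what is needed so that $f'$ and $f''$ exist and the chain rule is valid — note that no regularity of $F$ at the origin is required since we only evaluate at $\rho_0$, and the hypothesis $(x_0,y_0)\in\RR^2$ is arbitrary but of course the formula is only interesting (and the division by $\rho_0$ only meaningful) when $(x_0,y_0)\neq(0,0)$. One should also remark that the point $(0,(x_0^2+y_0^2)^{1/2})$ appearing on the right-hand side of the statement is precisely the image of $(x_0,y_0)$ under the rotation used in the reduction step, so the use of spherical symmetry of $f$ is what links the two sides when $(x_0,y_0)$ is not already on the $y$-axis. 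The upshot is that this lemma is just the statement that, for a radial function, the ``$\rho^{-1}\partial_\rho$'' piece of the flat Laplacian in one lower dimension equals a pure second Cartesian derivative taken in the transverse direction at the relevant point — the same dimensional-reduction philosophy flagged in Section~\ref{sec:overview}.
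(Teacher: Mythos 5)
Your proof is correct and follows essentially the same route as the paper's: compute $\partial_x F$ and $\partial_x^2 F$ by the chain rule for $F(x,y)=f(\sqrt{x^2+y^2})$ and evaluate at the point $(0,\rho_0)$, where the $f''$-term drops out and the $f'$-term reduces to $\rho_0^{-1}f'(\rho_0)$. The only cosmetic difference is that you invoke a rotation to move $(x_0,y_0)$ onto the $y$-axis, whereas the paper implicitly uses that $\partial_\rho f$ of a radial function already depends only on $\rho$, making the left-hand side manifestly rotation-invariant without a change of point.
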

\begin{proof}This follows immediately from the straightforward computations
\[\partial_xf = \frac{x}{(x^2+y^2)^{1/2}}\partial_{\rho}f,\]
\[\partial_x^2f = \frac{y^2}{(x^2+y^2)^{3/2}}\partial_{\rho}f + \frac{x^2}{x^2+y^2}\partial_{\rho}^2f.\qedhere\]
\end{proof}

We also have the following relationship between the hatted and non-hatted H\"older spaces.
\begin{lemma}\label{lem:hat-non-hat-Holder}
For $\alpha \in [0,1]$ and $k \in \NN$, we have the continuous inclusion $C_{axi}^{k,\alpha}(\overline{\mathscr{B}})\subset \hat C_{axi}^{k,\alpha}(\overline{\mathscr{B}})$
\end{lemma}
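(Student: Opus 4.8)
The plan is to bound separately each of the three building blocks that define the $\hat C^{k,\alpha}_{axi}(\overline{\mathscr{B}})$-norm, namely $(\xi_N f)_{\mathbb{R}^4}$, $(\xi_S f)_{\mathbb{R}^4}$, and $\big((1-\xi_N-\xi_S)f\big)_{\mathbb{R}^3}$, showing that each has $C^{k,\alpha}$-norm (on $\mathbb{R}^4$, resp.\ $\mathbb{R}^3$) controlled by $\|f_{\mathbb{R}^3}\|_{C^{k,\alpha}(\mathbb{R}^3)} = \|f\|_{C^{k,\alpha}_{axi}(\overline{\mathscr{B}})}$. The term living on $\mathbb{R}^3$ is the easy one: one writes $\big((1-\xi_N-\xi_S)f\big)_{\mathbb{R}^3} = (1-\xi_N-\xi_S)_{\mathbb{R}^3}\, f_{\mathbb{R}^3}$, and by Definition~\ref{cutoffs} the functions $\xi_N,\xi_S$ are smooth in $(\rho,z)$ and independent of $\rho$ near $\{\rho=0\}$, so $(\xi_N)_{\mathbb{R}^3}$ and $(\xi_S)_{\mathbb{R}^3}$ are smooth on $\mathbb{R}^3$ with all derivatives bounded (they are constant near the $z$-axis and smooth off it). The standard product estimate $\|uv\|_{C^{k,\alpha}}\le C\|u\|_{C^{k,\alpha}}\|v\|_{C^{k,\alpha}}$ then handles this term.

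For the two $\mathbb{R}^4$-terms the key observation is that the $(\rho,z)\leftrightarrow(s,\chi)$ change of coordinates, read in the four-dimensional sense, is composition with a \emph{genuinely smooth} map (even though the change of coordinates itself is singular at $p_N,p_S$). Identifying $\mathbb{R}^2\times\mathbb{R}^2=\mathbb{R}^4$ with $\mathbb{C}\times\mathbb{C}$ via $\zeta_1=a_1+ia_2$, $\zeta_2=b_1+ib_2$, I would introduce the polynomial map $\Phi:\mathbb{R}^4\to\mathbb{R}^3$,
\[
\Phi(\zeta_1,\zeta_2)=\Big(\operatorname{Re}(\zeta_1\zeta_2),\ \operatorname{Im}(\zeta_1\zeta_2),\ \tfrac12\big(|\zeta_2|^2-|\zeta_1|^2\big)+\gamma\Big).
\]
Since $s=|\zeta_1|$, $\chi=|\zeta_2|$, the first two components have Euclidean norm $|\zeta_1\zeta_2|=s\chi=\rho$ and the last equals $\tfrac12(\chi^2-s^2)+\gamma=z$; as $f_{\mathbb{R}^3}$ depends only on $(\rho,z)$, this gives $(\xi_N f)_{\mathbb{R}^4}=(\xi_N f)_{\mathbb{R}^3}\circ\Phi$, and similarly $(\xi_S f)_{\mathbb{R}^4}=(\xi_S f)_{\mathbb{R}^3}\circ\Phi'$ with $\Phi'$ the analogous map with $+\gamma$ replaced by $-\gamma$. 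By the product estimate above, $(\xi_N f)_{\mathbb{R}^3}=(\xi_N)_{\mathbb{R}^3}f_{\mathbb{R}^3}$ is a \emph{compactly supported} element of $C^{k,\alpha}(\mathbb{R}^3)$ with norm $\le C\|f_{\mathbb{R}^3}\|_{C^{k,\alpha}(\mathbb{R}^3)}$, and $\Phi$ is a polynomial whose preimage of any bounded set is bounded (from $s\chi$ and $\chi^2-s^2$ both bounded one deduces $s,\chi$ bounded, e.g.\ if $\chi\to\infty$ then $s\to 0$ forcing $\chi^2-s^2\to\infty$). Hence $(\xi_N f)_{\mathbb{R}^3}\circ\Phi$ is compactly supported, and the multivariate chain rule (Fa\`a di Bruno's formula) together with the elementary estimate $|p-q|\le(\operatorname{diam})^{1-\alpha}|p-q|^\alpha$ on a bounded set show that composition with the smooth map $\Phi$ sends $C^{k,\alpha}(\mathbb{R}^3)$ boundedly into $C^{k,\alpha}(\mathbb{R}^4)$. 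This yields $\|(\xi_N f)_{\mathbb{R}^4}\|_{C^{k,\alpha}(\mathbb{R}^4)}\le C\|f_{\mathbb{R}^3}\|_{C^{k,\alpha}(\mathbb{R}^3)}$, likewise for the $\overline{\mathscr{B}_S}$ term, and adding the three bounds gives $\|f\|_{\hat C^{k,\alpha}_{axi}(\overline{\mathscr{B}})}\le C\|f\|_{C^{k,\alpha}_{axi}(\overline{\mathscr{B}})}$, which is exactly the claimed continuous inclusion.

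The only step I expect to require genuine thought is the construction of the polynomial map $\Phi$ — once the coordinate change $\rho=s\chi$, $z=\tfrac12(\chi^2-s^2)+\gamma$ is recognized as (the modulus data of) complex multiplication, the map is manifestly polynomial and the rest is routine: $C^{k,\alpha}$ is stable under multiplication by fixed bounded smooth functions and under composition with smooth maps on bounded domains. It is worth emphasizing that the hypothesis in Definition~\ref{cutoffs} that $\partial_\rho\xi_N=\partial_\rho\xi_S=0$ for $\rho$ small is precisely what is used to guarantee that the cutoffs define smooth functions in the $\mathbb{R}^3$ sense (hence, after composing with $\Phi$, in the $\mathbb{R}^4$ sense); and that the strictness of the inclusion reflects the fact that $\Phi$ is degenerate along $\{\zeta_1=0\}\cup\{\zeta_2=0\}$, i.e.\ over the axis and horizon near the poles.
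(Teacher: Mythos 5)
Your argument is correct and rests on the same key idea as the paper's proof: the identity $\sqrt{x^{2}+y^{2}}\,\sqrt{\tilde x^{2}+\tilde y^{2}}=\sqrt{(x\tilde x+y\tilde y)^{2}+(x\tilde y-y\tilde x)^{2}}$, which exhibits $(\cdot)_{\mathbb{R}^{4}}$ near the poles as the composition of $(\cdot)_{\mathbb{R}^{3}}$ with a polynomial (hence smooth and proper) map. Your complex-multiplication packaging of that map, together with the explicit bookkeeping for the cutoffs and for preimages of bounded sets, is a slightly more detailed presentation of what the paper dispatches in a line, but the proof is essentially the same.
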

\begin{proof}
Consider $f \in C^{k,\alpha}_{axi}(\overline{\mathscr{B}})$. Recall that this means that 
\[
F(u,v,z) \doteq f(\sqrt{u^{2}+v^{2}},z)
\]
is in $C^{k,\alpha}(\RR^{3})$ where $(u,v,z)$ are Cartesian coordinates on $\RR^{3}$. We claim that $f \in \hat C^{k,\alpha}_{axi}(\overline{\mathscr{B}})$. Clearly the the only issues are at the poles. To this end, define
\[
g(s,\chi) \doteq f\left(s\chi,\frac 12 (\chi^{2}-s^{2})+\gamma\right).
\]
We would like to show that
\[
G(x,y,\tilde x,\tilde y) \doteq g(\sqrt{x^{2}+y^{2}},\sqrt{\tilde x^{2}+\tilde y^{2}})
\]
is in\footnote{Really, we want an estimate for the $C^{k,\alpha}(\RR^{4})$ norm of $G$ by the $C^{k,\alpha}(\RR^{3})$ norm of $F$, but this will follow immediately from the argument below as well.} $C^{k,\alpha}(\RR^{4})$. Observe that 
\[
\sqrt{x^{2}+y^{2}}\sqrt{\tilde x^{2}+\tilde y^{2}} 
= \sqrt{(x\tilde x + y\tilde y)^{2} + (x\tilde y - y\tilde x)^{2}},
\]
so we may write
\[
G(x,y,\tilde x,\tilde y) = F\left(x\tilde x + y\tilde y, x\tilde y-y\tilde x, \frac 12 (\tilde x^{2}+\tilde y^{2}-x^{2}-y^{2})+\gamma \right)
\]
Thus, we have written $G$ as the composition of a $C^{k,\alpha}$ function with a smooth function, which implies the claim. 
\end{proof}

Finally, in order to discuss decay towards the asymptotically flat end, we introduce\index{Coordinates!$r$}
\begin{equation}\label{rdef}
r\left(\rho,z\right) \doteq \sqrt{1+\rho^2+z^2}.
\end{equation}

\subsection{Further analytic properties of the Kerr metric}\label{kerr}

Recall that $\gamma \doteq \sqrt{M^2-a^2}$. The following function will play an important role in our analysis\index{Metric Quantities!$h$}
\[h\left(\rho,z\right) \doteq \log\left(\sqrt{\rho^2+\left(z-\gamma\right)^2} - \left(z-\gamma\right)\right) + \log\left(\sqrt{\rho^2 + \left(z+\gamma\right)^2} + \left(z+\gamma\right)\right).\]
Observe that considered as a function on $\mathbb{R}^3$, $h$ is harmonic:
\[\rho^{-1}\partial_{\rho}\left(\rho\partial_{\rho}h\right) + \partial_z^2h = 0.\]
The importance of $h$ is that it precisely captures the singular behavior of $\log(X_{K})$ (recall that the coefficient of Kerr in the metric ansatz discussed above, see \eqref{kerrmetric2}) at the horizon. 

\begin{lemma}\label{lem:xK-regularity}Define a function\index{Metric Quantities!$x_{k}$}
\[x_K\left(\rho,z\right) \doteq \log\left(X_K\right) - h.\]
Then $x_K \in \hat{C}^{\infty}\left(\overline{\mathscr{B}}\right)$.
\end{lemma}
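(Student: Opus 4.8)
The plan is to reduce the statement to an explicit algebraic simplification which exhibits $x_K$ as the logarithm of a strictly positive function built out of quantities that are manifestly regular in the function spaces of Section~\ref{funcspace}. First I would use the defining relations $\rho = \sqrt{\Delta}\,\sin\theta$, $z = (\tilde r - M)\cos\theta$ and $\Delta = (\tilde r - M)^2 - \gamma^2$ to compute
\[
\rho^2 + (z\mp\gamma)^2 = \big((\tilde r - M) \mp \gamma\cos\theta\big)^2 .
\]
Since $\tilde r - M \geq \tilde r_{+} - M = \gamma \geq \gamma\,|\cos\theta|$ on the closed exterior, this gives $\sqrt{\rho^2 + (z\mp\gamma)^2} = (\tilde r - M) \mp \gamma\cos\theta$, and hence
\[
\sqrt{\rho^2+(z-\gamma)^2} - (z-\gamma) = (\tilde r - M + \gamma)(1-\cos\theta), \qquad
\sqrt{\rho^2+(z+\gamma)^2} + (z+\gamma) = (\tilde r - M + \gamma)(1+\cos\theta).
\]
Substituting into the definition of $h$ yields $h = 2\log(\tilde r - M + \gamma) + \log(\sin^2\theta)$, while $\log X_K = \log(\sin^2\theta) + \log \Pi - \log\Sigma^2$ directly from the formula for $X_K$. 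The singular term $\log(\sin^2\theta)$ therefore cancels and
\[
x_K = \log \Pi - \log\Sigma^2 - 2\log(\tilde r - M + \gamma).
\]

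Next I would identify the regular building blocks. Setting $R_\pm(\rho,z) \doteq \sqrt{\rho^2 + (z\mp\gamma)^2}$, the identity above gives $\tilde r - M = \tfrac12(R_+ + R_-) \geq \gamma$, $\cos\theta = z/(\tilde r - M)$, and $\sin^2\theta\,\Delta = \rho^2$, so that each of $\Pi = (\tilde r^2+a^2)^2 - a^2\rho^2$, $\Sigma^2$, and $\tilde r - M+\gamma$ is a polynomial in $\tilde r = \tfrac12(R_+ + R_-)+M$, $\rho$, and $\cos^2\theta$. The key point is then that $\tfrac{1}{\tilde r - M}$ and $\cos\theta$ lie in $\hat{C}^{\infty}(\overline{\mathscr{B}})$: both are bounded (by $\gamma^{-1}$ and $1$), so only interior regularity is at issue, and this is governed by $R_\pm$. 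Indeed $R_+^2 = \rho^2 + (z-\gamma)^2$ is a polynomial in the Cartesian coordinates on $\mathbb{R}^3$ vanishing only at $p_N$, so $R_+ \in C^{\infty}_{\rm axi}$ away from $p_N$ (hence $\hat{C}^{\infty}$ there by Lemma~\ref{lem:hat-non-hat-Holder}), and likewise $R_-$ away from $p_S$; near $p_N$ the coordinate change $\rho = s\chi$, $z-\gamma = \tfrac12(\chi^2 - s^2)$ produces the crucial identity $R_+ = \tfrac12(s^2 + \chi^2)$, a polynomial in the Cartesian coordinates on $\mathbb{R}^4$, while $R_-^2 = \tfrac14(s^2+\chi^2)^2 + 2\gamma(\chi^2 - s^2) + 4\gamma^2$ is a polynomial that does not vanish near $p_N$, so both $R_\pm$ are smooth in the $\mathbb{R}^4$ sense near $p_N$; the $(s',\chi')$ chart handles $p_S$ symmetrically. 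Since $\tilde r - M = \tfrac12(R_+ + R_-)$ is bounded below by $\gamma$, I conclude that $\tfrac{1}{\tilde r - M}$ and $\cos\theta = z/(\tilde r - M)$ are smooth in the appropriate sense on each of $\overline{\mathscr{B}_N}$, $\overline{\mathscr{B}_S}$, $\overline{\mathscr{B}_A}\cup\overline{\mathscr{B}_H}$, with derivatives bounded uniformly (including toward the asymptotically flat end, where $\tilde r - M \to \infty$ while $|\partial^k R_\pm|$ stays bounded or decays), i.e.\ $\tfrac{1}{\tilde r - M},\,\cos\theta \in \hat{C}^{\infty}(\overline{\mathscr{B}})$.

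To finish I would factor the leading power of $u \doteq \tilde r - M$ out of the reduced expression: writing $\tilde u \doteq u^{-1}\in(0,\gamma^{-1}]$,
\[
\frac{\Pi}{\Sigma^2(u+\gamma)^2} = \frac{\big[(1+M\tilde u)^2 + a^2\tilde u^2\big]^2 - a^2\tilde u^2(1-\cos^2\theta)(1-\gamma^2\tilde u^2)}{\big[(1+M\tilde u)^2 + a^2\tilde u^2\cos^2\theta\big](1+\gamma\tilde u)^2},
\]
a rational function of $\tilde u$ and $\cos\theta$ whose denominator is bounded below by $1$; numerator and denominator are bounded (since $\tilde u,\cos\theta$ are), and since $\Pi,\Sigma^2,u+\gamma$ are strictly positive on the closed exterior while $\Pi/u^4,\,\Sigma^2/u^2,\,(u+\gamma)^2/u^2 \to 1$ as $u\to\infty$, the left-hand side is bounded between two positive constants on $\overline{\mathscr{B}}$. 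Hence it is a smooth function of the $\hat{C}^{\infty}(\overline{\mathscr{B}})$ quantities $\tilde u$ and $\cos\theta$, so it lies in $\hat{C}^{\infty}(\overline{\mathscr{B}})$, and composing with $\log$ (smooth on a compact subinterval of $(0,\infty)$) gives $x_K = \log\big(\Pi/(\Sigma^2(u+\gamma)^2)\big) \in \hat{C}^{\infty}(\overline{\mathscr{B}})$.

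I expect the main obstacle to be the behavior at the poles $p_N,p_S$: there $x_K$ is genuinely \emph{not} smooth in the $(\rho,z)$ (i.e.\ $\mathbb{R}^3$) coordinates --- already $\tilde r - M = \tfrac12(R_+ + R_-)$ is not --- and one must exploit the $\mathbb{R}^4$ structure of $\hat{C}^{\infty}(\overline{\mathscr{B}})$ via the identity $R_+ = \tfrac12(s^2 + \chi^2)$ and its $p_S$-analogue, exactly the ``lift a singular lower-dimensional quantity to a regular higher-dimensional one'' mechanism emphasized in Section~\ref{sec:overview}. The behavior toward spatial infinity is the secondary point, handled by the rescaling by powers of $\tilde r - M$ above; the remaining estimates are routine.
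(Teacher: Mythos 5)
Your proof is correct, and it takes a genuinely cleaner route than the paper's. The paper's argument treats the interior, the axis, the horizon, and the two poles as separate cases, in each region isolating the relevant singular term ($2\log\rho$ near the axis, $2\log s$ near $p_{N}$, etc.) by a local expansion, and then checking term-by-term that the remainder is regular; near $p_{N}$ it relies on the somewhat ad hoc factorization $\tilde r = \tilde r_{+} + \chi^{2}\tilde r_{E}(s,\chi)$ to control $\log(\chi^{2}/\Delta)$. You instead found a \emph{global} algebraic cancellation: the identity $\rho^{2}+(z\mp\gamma)^{2}=\bigl((\tilde r - M)\mp\gamma\cos\theta\bigr)^{2}$ gives $h = 2\log(\tilde r - M + \gamma) + \log\sin^{2}\theta$ exactly, so the singular factor $\log\sin^{2}\theta$ cancels against $\log X_{K}$ once and for all, leaving $x_{K}=\log\Pi - \log\Sigma^{2} - 2\log(\tilde r - M +\gamma)$. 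The remaining work is then just to check that $\tilde u = (\tilde r - M)^{-1} = 2/(R_{+}+R_{-})$ and $\cos\theta = z\tilde u$ lie in $\hat C^{\infty}(\overline{\mathscr{B}})$, which you do cleanly via the exact identity $R_{+}=\tfrac12(s^{2}+\chi^{2})$ (the same $\mathbb{R}^{3}\to\mathbb{R}^{4}$ lifting mechanism the paper uses, but applied to a simpler object), and that the resulting rational function of $(\tilde u,\cos\theta)$ is bounded away from $0$ and $\infty$, which also handles the asymptotic end. This buys a unified argument with a single regular building block ($R_{\pm}$) replacing the paper's case analysis, at the modest cost of verifying the global algebra up front. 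One tiny point worth making explicit in a writeup: to deduce $\sqrt{\rho^{2}+(z\mp\gamma)^{2}}=(\tilde r - M)\mp\gamma\cos\theta$ you need both $(\tilde r - M) - \gamma\cos\theta\geq 0$ and $(\tilde r - M)+\gamma\cos\theta\geq 0$ on all of $\overline{\mathscr{B}}$; both follow from $\tilde r - M\geq\gamma\geq\gamma|\cos\theta|$, which you state, but it is worth phrasing so it covers both signs.
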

\begin{proof}
Recall that (see \eqref{kerrmetric2})
\[
X_{K} = \sin^{2}\theta \frac{\Pi}{\Sigma^{2}} = \Delta \sin^{2}\theta \frac{\Pi}{\Sigma^{2}\Delta}
\]
where 
\[
 \Delta  =  \tilde r^{2} - 2M \tilde r+a^{2}, \qquad \Sigma^{2} = \tilde r^{2}+a^{2}\cos^{2}\theta,\qquad  \Pi  \doteq (\tilde r^{2}+a^{2})^{2}-a^{2}\sin^{2}\theta \Delta
\]
and 
\[
\rho =  \sqrt{\Delta}\sin\theta, \qquad z=  \left(\tilde r-M\right)\cos\theta.
\]
define $\tilde r$ and $\theta$ implicitly in terms of $\rho$ and $z$. Elementary algebra implies that
\[
\tilde r = M + \frac{1}{\sqrt{2}} \sqrt{ \rho^{2} + z^{2} + \gamma^{2} + \sqrt{\rho^{4} + 2\rho^{2}(z^{2}+\gamma^{2}) + (z^{2}-\gamma^{2})^{2} } }.
\]
Using this expression, it is not hard to check that $x_{k}$ is smooth in $\{\rho>0\}$ with all derivatives bounded away from $\{\rho =0\}$. As such, we turn to smoothness of $x_{K}$ near the axis $\mathscr{A}$. Considering points near $\mathscr{A}_{N}$ i.e., points in $\overline{\mathscr{B}_{A}}\setminus\mathscr{A}$ with $z > \gamma$, we write
\begin{align*}
h  & = \log\left(\frac{\rho^{2}}{\sqrt{\rho^2+\left(z-\gamma\right)^2} + \left(z-\gamma\right)}\right) + \log\left(\sqrt{\rho^2 + \left(z+\gamma\right)^2} + \left(z+\gamma\right)\right)\\
&  = 2 \log \rho + \log\left(\frac{\sqrt{\rho^2 + \left(z+\gamma\right)^2} + \left(z+\gamma\right)}{\sqrt{\rho^2+\left(z-\gamma\right)^2} + \left(z-\gamma\right)}\right).
\end{align*}
The final term is in $\hat C^{\infty}$ near $\mathscr{A}_{N}$. A similar argument for the points near $\mathscr{A}_{S}$ gives
\[
h - 2\log \rho \in \hat C^{\infty}(\overline{\mathscr{B}_{A}}). 
\]
On the other hand, because $\Delta \sin^{2}\theta = \rho^{2}$, we see that 
\[
\log (X_{K}) -2 \log \rho = \log\left( \frac{\Pi}{\Sigma^{2}\Delta} \right)
\]
It is not hard to see that this quantity is in $\hat C^{\infty}(\overline{\mathscr{B}_{A}})$, using the above expression for $\tilde r$, as well as the fact that none of $\Delta,\Sigma^{2}$ or $\Pi$ vanishes near $\mathscr{A}$. A related (but more involved) computation can be found in the expansion of $X_{K}$ near the axis in the proof of \cite[Lemma 3.0.1]{HBH:geometric}. A similar (but easier) argument shows that both $\log(X_{K})$ and $h$ are regular near the horizon, so we easily check that $\log(X_{K}) - h \in \hat C^{\infty}(\overline{\mathscr{B}_{H}})$. 

Finally, we must consider near $p_{N}$ and $p_{S}$. We consider $p_{N}$, as the argument near $p_{S}$ is identical. Observe that
\[
h(s,\chi) =2 \log s + \log \left( \frac 12 \sqrt{  (\chi^{2}+s^{2})^{2} + 8 \gamma(\chi^{2}-s^{2}) + 4\gamma^{2}} + \frac 12 (\chi^{2}-s^{2} + 2 \gamma) \right).
\]
Thus, we see that $h - 2\log s \in \hat C^{\infty}(\overline{\mathscr{B}_{N}})$. On the other hand, from the above expression for $X_{K}$ we derive
\[
\log X_{K} = 2 \log s + \log  \frac{\chi^{2}}{\Delta} + \log \frac{\Pi}{\Sigma^{2}}.
\]
The final term is easily seen to be in $\hat C^{\infty}(\overline{\mathscr{B}_{N}})$. Note that both $\chi$ and $\Delta$ vanish at $p_{N}$, so we must take some care with the middle term. From the above expression for $\tilde r(\rho,z)$, it is not hard to show (see also \cite[Lemma 3.0.1]{HBH:geometric}) that for $s,\chi \in \overline{\mathscr{B}_{N}}$ sufficiently close to $p_{N}$,
\[
\tilde r = \tilde r_{+} + \chi^{2}\tilde r_{E} (s,\chi),
\]
for some smooth function $\tilde r_{E}(s,\chi) \in \hat C^{\infty}(\overline{\mathscr{B}_{N}})$. Because $\Delta = (\tilde r - \tilde r_{+})(\tilde r - \tilde r_{-})$, this shows that $\log  \frac{\chi^{2}}{\Delta} \in \hat C^{\infty}(\overline{\mathscr{B}_{N}})$. This completes the proof. 
%
%
%
\end{proof}

It is useful to observe that the arguments in the previous proof (see also \cite[Lemma 3.0.1]{HBH:geometric}) prove
\begin{equation}\label{Xlikeh}
\frac{\left|\partial X_K\right|^2}{X_K^2} \sim \left|\partial h\right|^2,
\end{equation}
and that we have
\begin{equation}\label{hsing}
\partial h\big|_{\overline{\mathscr{B}_A}\cap \mathscr{A}} = \frac{2}{\rho}\partial_{\rho} + e_A,\qquad \underline{\partial} h\big|_{\overline{\mathscr{B}_N}} = \frac{2}{s}\underline{\partial}_s + e_N,\qquad \underline{\partial} h\big|_{\overline{\mathscr{B}_S}} = \frac{2}{s'}\underline{\partial}_{s'} + e_S,
\end{equation}
where the vector fields $e_A$, $e_N$, and $e_S$ all extend to smooth vector fields on $\overline{\mathscr{B}}$, and furthermore $\left|e_A\right| \leq Cr^{-2}$ as $r\to\infty$.

The \emph{twist} $\theta_{K}$ is a $1$-form defined on Kerr (see \cite[Section 2]{weinstein} and \cite[Section 4.2]{HBH:geometric}) that measures the failure of the axisymmetry to be hypersurface orthogonal. Let $\theta_K$ denote the twist form for the Kerr spacetime, which can be defined by
\[
\theta_{K} = \frac{X_{K}^{2}}{\rho} \left( \partial_{z}(X_{K}^{-1}W_{K}) d\rho - \partial_{\rho}(X_{K}^{-1}W_{K}) dz\right)
\]
Recall that (see \cite[p.\ 907]{weinstein}) $d\theta_K = 0$, so we may set $\theta_{K} =dY_{K}$ where $Y_{K}$ is uniquely determined by the requirement that $\lim_{r\to\infty}Y_K = 0$. This leads to the following lemma concerning various estimates for $Y_{K}$ (related computations can be found in Section 6.5.1 of \cite{chrusciellopes}).
\begin{lemma}\label{somestuffthatisuseful}
On $\overline{\mathscr{B}_{A}} \cup \overline{\mathscr{B}_{H}}$ we have the decay estimates
\[\frac{\rho}{X_K^2}\left|\partial Y_K\right| \leq Cr^{-4}, \qquad \left| \partial \left( X_{K}^{-1}\partial Y_{K}\right) \right| \leq C r^{-4}, \qquad \left| \partial^{2} \left( X_{K}^{-1}\partial Y_{K}\right) \right| \leq C r^{-5}.\]
In fact, we have the stronger bound for the $z$-component of the derivative of $Y_{K}$ on $\overline{\mathscr{B}_{A}}\cup\overline{\mathscr{B}_{H}}$,
\[\frac{1}{X_K^2}\left|\partial_{z} Y_K\right| \leq Cr^{-5}, \qquad |\partial (X_{K}^{-2}\partial_{z}Y_{K})| \leq C r^{-6}, \qquad |\partial^{2} (X_{K}^{-2}\partial_{z}Y_{K})| \leq C r^{-7} .\]
On $\overline{\mathscr{B}_{N}}$, we have the estimates
\[
|\underline{\partial} Y_{K}| \leq C s^{3}, \qquad |\underline\partial^{2}Y_{K}|\leq Cs^{2}, \qquad |\underline\partial^{3}Y_{K}| \leq C s
\]
Similar estimates hold on $\overline{\mathscr{B}_{S}}$. 
\end{lemma}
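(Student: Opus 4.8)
The plan is to reduce every claim to an explicit computation with the Boyer--Lindquist expressions, after one algebraic simplification. Since $\rho^{2}=\Delta\sin^{2}\theta$, the function $\Pi=(\tilde r^{2}+a^{2})^{2}-a^{2}\sin^{2}\theta\,\Delta$ equals $(\tilde r^{2}+a^{2})^{2}-a^{2}\rho^{2}$, so reading off $W_{K},X_{K}$ from \eqref{kerrmetric2} gives
\[
X_{K}^{-1}W_{K}=\frac{W_{K}}{X_{K}}=-\frac{2Ma\,\tilde r}{(\tilde r^{2}+a^{2})^{2}-a^{2}\rho^{2}}.
\]
Using the explicit formula for $\tilde r(\rho,z)$ from the proof of Lemma~\ref{lem:xK-regularity}, this is a smooth function of $\rho^{2}$ and $z$ away from the horizon and poles; near the horizon one has $\tilde r-\tilde r_{+}\sim\rho^{2}$, and near $p_{N}$ one has $\tilde r=\tilde r_{+}+\chi^{2}\tilde r_{E}(s,\chi)$ with $\tilde r_{E}\in\hat C^{\infty}$, so that $X_{K}^{-1}W_{K}$ is a smooth function of $s$ and $\chi^{2}$ near $p_{N}$ (and, by reflection, of $s'$ and $(\chi')^{2}$ near $p_{S}$). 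This parity in $\rho$ (resp.\ in $\chi$) is the structural point underlying the improved estimates for the $z$--component of $\partial Y_{K}$ and for the vanishing of $\underline\partial Y_{K}$ at the poles; it is equivalent to the familiar fact that $W_{K}/X_{K}$ is constant along $\mathscr{H}$.

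Second, I would record the basic identities. On the contractible surface $\overline{\mathscr{B}}$ we have $\theta_{K}=dY_{K}$, so that $\partial Y_{K}$ is literally the pair of components of $\theta_{K}$:
\[
\frac{\rho}{X_{K}^{2}}\,\partial Y_{K}=\bigl(\,\partial_{z}(X_{K}^{-1}W_{K}),\ -\partial_{\rho}(X_{K}^{-1}W_{K})\,\bigr),\qquad
X_{K}^{-2}\,\partial_{z}Y_{K}=-\frac{1}{\rho}\,\partial_{\rho}(X_{K}^{-1}W_{K}).
\]
Because the coordinate change to $(s,\chi)$ is conformal and $\theta_{K}$ is a genuine $1$--form, the same computation near $p_{N}$ gives $\partial_{s}Y_{K}=\tfrac{X_{K}^{2}}{\rho}\,\partial_{\chi}(X_{K}^{-1}W_{K})$ and $\partial_{\chi}Y_{K}=-\tfrac{X_{K}^{2}}{\rho}\,\partial_{s}(X_{K}^{-1}W_{K})$. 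In particular $Y_{K}$ itself never enters: only its first derivatives (which are explicit) and their further derivatives need to be estimated. Moreover, since $X_{K}^{-1}W_{K}$ is even in $\rho$, Lemma~\ref{firstdecay} shows that $\tfrac{1}{\rho}\partial_{\rho}(X_{K}^{-1}W_{K})$ is again a smooth function of $\rho^{2}$ and $z$; this is exactly the source of the extra power of $\rho$ in the $z$--component bounds.

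Third, one carries out a region-by-region asymptotic analysis. On $\overline{\mathscr{B}_{A}}\cup\overline{\mathscr{B}_{H}}$, which is all of $\overline{\mathscr{B}}$ minus fixed neighborhoods of $p_{N},p_{S}$, I would use that $\tilde r\sim r$ with $|\partial_{\rho,z}^{k}\tilde r|\leq Cr^{1-k}$ for $k\geq1$ as $r\to\infty$, together with $X_{K}/\rho^{2}=\Pi/(\Delta\Sigma^{2})\to1$, to get that $X_{K}^{-1}W_{K}$ and its $k$--th derivative are $O(r^{-3-k})$ and that $\tfrac1\rho\partial_{\rho}(X_{K}^{-1}W_{K})$ and its $k$--th derivative are $O(r^{-5-k})$; inserting these into the identities above produces all six displayed bounds on this region (on the bounded piece $\overline{\mathscr{B}_{H}}$ the claims reduce to boundedness, and the one point needing care is that the blow-up $X_{K}/\rho\sim\Delta^{-1/2}$ near the bulk of $\mathscr{H}$ is cancelled by $\partial_{z}(X_{K}^{-1}W_{K})=O(\Delta)$, using $\tilde r-\tilde r_{+}\sim\rho^{2}\sim\Delta$ there). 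On $\overline{\mathscr{B}_{N}}$ I would write $X_{K}=s^{2}Q$ with $Q\in\hat C^{\infty}$ positive (Lemma~\ref{lem:xK-regularity}); since $X_{K}^{-1}W_{K}$ is smooth in $(s,\chi^{2})$ and constant along $\mathscr{H}$, its $\chi$--derivative vanishes to order $\chi$ and its $s$--derivative to order $\chi^{2}$ at $\{\chi=0\}$, while $\tfrac{X_{K}^{2}}{\rho}=s^{3}Q^{2}/\chi$; hence $\partial_{s}Y_{K}$ and $\partial_{\chi}Y_{K}$ equal $s^{3}$ times functions smooth in $(s,\chi)$, giving $|\underline\partial Y_{K}|\leq Cs^{3}$, and differentiating yields $|\underline\partial^{2}Y_{K}|\leq Cs^{2}$ and $|\underline\partial^{3}Y_{K}|\leq Cs$. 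The bounds on $\overline{\mathscr{B}_{S}}$ follow by the reflection $z\mapsto-z$.

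The main obstacle is this last step near $p_{N}$ and $p_{S}$: one must simultaneously track the vanishing orders of $X_{K}$, of $\rho=s\chi$, and of the $s$-- and $\chi$--derivatives of $X_{K}^{-1}W_{K}$, and verify that the $1/\chi$ coming from $1/\rho$ is always absorbed with room to spare. The $\chi$--parity of $X_{K}^{-1}W_{K}$ is precisely what makes this work, but keeping the power counting honest---and likewise verifying the $\Delta^{-1/2}$ cancellation near $\mathscr{H}$---is the delicate part; everything else is routine differentiation of the explicit Boyer--Lindquist expressions.
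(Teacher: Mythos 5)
Your proposal is correct and follows essentially the same route as the paper's proof: both rewrite $\partial Y_K$ in terms of $\partial(W_K/X_K)$ via $\theta_K = dY_K$, invoke the explicit Boyer--Lindquist formula $W_K/X_K = -2Ma\tilde{r}/\Pi$, and track the fall-off of $\tilde{r}$. The only difference is one of presentation: where the paper quotes the estimates $|\partial_\rho \tilde{r}| \leq C\rho r^{-1}$ and $|\underline{\partial}(W_K/X_K)|\leq C\chi$ directly from the companion article, you derive the same extra factors of $\rho$ (resp.\ $\chi$) from the evenness of $W_K/X_K$ in $\rho$ (resp.\ $\chi$) and its constancy along $\mathscr{H}$, making the power-counting near $\mathscr{A}$, $\mathscr{H}$, $p_N$, and $p_S$ explicit via Lemma~\ref{firstdecay} and a Hadamard-type factorization.
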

\begin{proof}
From the definition of $dY_{K} = \theta_{K}$ given above,
\begin{equation}\label{equaltosomething}
\frac{\rho}{|X_{K}|^{2}}|\partial Y_{K}| = \left|\partial\left( \frac{W_{K}}{X_{K}}\right)\right|.
\end{equation}
In Boyer--Lindquist coordinates (see \eqref{kerrmetric}),
\[
\frac{W_{K}}{X_{K}} = - \frac{2Ma\tilde r}{\Pi},
\]
so arguing as in Lemma \ref{lem:xK-regularity} (see also Section 3 of \cite{HBH:geometric}), it is easy to see that the right hand side of~\eqref{equaltosomething} is bounded by $Cr^{-4}$ for $r$ sufficiently large. Differentiating the definition of $\theta_{K}$ allows us to bound $ \left| \partial \left( X_{K}^{-1}\partial Y_{K}\right) \right| $ and $ \left| \partial^{2} \left( X_{K}^{-1}\partial Y_{K}\right) \right| $ by a similar argument.

Moreover, by the asymptotic falloff of $\tilde r$ proven in \cite[p.\ 13]{HBH:geometric}, 
\[
\left|\frac{\partial \tilde r}{\partial \rho} \right| \leq C\rho r^{-1}
\]
on $\overline{\mathscr{B}_{A}}$. Using this in the definition of $\theta_{K}$, we have that
\[\frac{\rho}{X_K^2}\left|\partial_{z} Y_K\right| = \left|\partial_{\rho}\left( \frac{W_{K}}{X_{K}}\right)\right| \leq C\rho r^{-5}, \]
as claimed. The differentiated estimates follow similarly. 

From the behavior of $X_{K}$ and $W_{K}+\Omega X_{K}$ near $p_{N}$ proven in \cite[p.\ 13]{HBH:geometric}, we may check that
\[
\left|\underline \partial\left( \frac{W_{K}}{X_{K}}\right)\right| \leq C \chi.
\]
Hence, a similar argument as above, in $\overline{\mathscr{B}_{N}}$ yields
\[
|\underline{\partial} Y_{K}| \leq C s^{3}
\]
and similarly for the higher derivatives. 
\end{proof}

\section{Two Fixed Point Lemmas}
In this section we will explain the functional analytic context for our existence argument. First of all, let's agree that for any Banach space $\mathcal{B}$, we will use \index{Miscellaneous!$B_{r}(\mathcal{B})$}$B_r\left(\mathcal{B}\right)$ to denote the ball of radius $r$ about the origin.

The following is a basic fixed point lemma.
\begin{lemma}\label{basic}Suppose we have Banach spaces $\mathcal{L}$, $\mathcal{Q}$, and $\mathcal{P}$, $\epsilon > 0$, and a map
\[\mathfrak{T} : B_{\epsilon}\left(\mathcal{L}\right) \times B_{\epsilon}\left(\mathcal{Q}\right) \times B_{\epsilon}\left(\mathcal{P}\right) \to B_{\epsilon}\left(\mathcal{L}\right).\]
Furthermore, suppose that
\begin{enumerate}
    \item There exists a constant $D > 0$ such that $\left(l,q,p\right) \in B_{\epsilon}\left(\mathcal{L}\right) \times B_{\epsilon}\left(\mathcal{Q}\right) \times B_{\epsilon}\left(\mathcal{P}\right)$ implies
        \[\left\Vert \mathfrak{T}\left(l,q,p\right)\right\Vert_{\mathcal{L}} \leq D\left[\left\Vert l\right\Vert_{\mathcal{L}}^2 + \left\Vert q\right\Vert_Q^2\right].\]
    \item There exists a constant $D > 0$ such that $\left(l_1,q_1,p_1\right),\left(l_2,q_2,p_2\right) \in B_{\epsilon}\left(\mathcal{L}\right) \times B_{\epsilon}\left(\mathcal{Q}\right) \times B_{\epsilon}\left(\mathcal{P}\right)$ implies
        \begin{align*}
        &\left\Vert \mathfrak{T}\left(l_1,q_1,p_1\right) - \mathfrak{T}\left(l_2,q_2,p_2\right)\right\Vert_{\mathcal{L}}\leq
        \\ \nonumber &\qquad  D\left[\left(\left\Vert l_1\right\Vert_{\mathcal{L}} + \left\Vert l_2\right\Vert_{\mathcal{L}}\right)\left\Vert l_1-l_2\right\Vert_{\mathcal{L}} +\left(\left\Vert q_1\right\Vert_{\mathcal{Q}} + \left\Vert q_2\right\Vert_{\mathcal{Q}}\right)\left\Vert q_1-q_2\right\Vert_{\mathcal{Q}} + \left\Vert p_1-p_2\right\Vert_{\mathcal{P}}\right].
        \end{align*}
\end{enumerate}
Then, after possibly shrinking $\epsilon$, there exists a ``solution map'' $\mathfrak{S} : B_{\epsilon}\left(\mathcal{Q}\right) \times B_{\epsilon}\left(\mathcal{P}\right) \to B_{\epsilon}\left(\mathcal{L}\right)$ such that
\begin{enumerate}
    \item $\left(q,p\right) \in B_{\epsilon}\left(\mathcal{Q}\right) \times B_{\epsilon}\left(\mathcal{P}\right)$ implies
        \[\mathfrak{T}\left(\mathfrak{S}\left(q,p\right),q,p\right) = \mathfrak{S}\left(q,p\right).\]
    \item There exists a constant $D > 0$ such that $\left(q,p\right) \in B_{\epsilon}\left(\mathcal{Q}\right) \times B_{\epsilon}\left(\mathcal{P}\right)$ implies
        \[\left\Vert \mathfrak{S}\left(q,p\right)\right\Vert_{\mathcal{L}}\leq D\left\Vert q\right\Vert_{\mathcal{Q}}^2.\]
    \item There exists a constant $D > 0$ such that $\left(q_1,p_1\right),\left(q_2,p_2\right) \in B_{\epsilon}\left(\mathcal{Q}\right) \times B_{\epsilon}\left(\mathcal{P}\right)$ implies
        \[\left\Vert \mathfrak{S}\left(q_1,p_1\right)-\mathfrak{S}\left(q_2,p_2\right)\right\Vert_{\mathcal{L}}\leq D\left[\left(\left\Vert q_1\right\Vert_{\mathcal{Q}} + \left\Vert q_2\right\Vert_{\mathcal{Q}}\right)\left\Vert q_1-q_2\right\Vert_{\mathcal{Q}} + \left\Vert p_1-p_2\right\Vert_{\mathcal{P}}\right].\]
\end{enumerate}
\end{lemma}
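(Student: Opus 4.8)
The plan is to apply the Banach fixed point theorem to the map $l \mapsto \mathfrak{T}(l,q,p)$ for each fixed $(q,p) \in B_{\epsilon}(\mathcal{Q}) \times B_{\epsilon}(\mathcal{P})$, after first shrinking $\epsilon$ once and for all (depending only on the constant $D$) so that $2D\epsilon \leq \tfrac12$. By hypothesis this map already sends $B_{\epsilon}(\mathcal{L})$ into itself, and specializing estimate (2) to $q_1 = q_2 = q$ and $p_1 = p_2 = p$ gives
\[
\left\Vert \mathfrak{T}(l_1,q,p) - \mathfrak{T}(l_2,q,p)\right\Vert_{\mathcal{L}} \leq D\left(\left\Vert l_1\right\Vert_{\mathcal{L}} + \left\Vert l_2\right\Vert_{\mathcal{L}}\right)\left\Vert l_1 - l_2\right\Vert_{\mathcal{L}} \leq 2D\epsilon \left\Vert l_1 - l_2\right\Vert_{\mathcal{L}},
\]
so this is a contraction with constant at most $\tfrac12$. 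Hence it has a unique fixed point in $B_{\epsilon}(\mathcal{L})$, which I define to be $\mathfrak{S}(q,p)$; conclusion (1) then holds by construction.

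For the quadratic bound, conclusion (2), I would substitute $l = \mathfrak{S}(q,p)$ into hypothesis (1), using $\mathfrak{S}(q,p) = \mathfrak{T}(\mathfrak{S}(q,p),q,p)$, to get
\[
\left\Vert \mathfrak{S}(q,p)\right\Vert_{\mathcal{L}} \leq D\left[\left\Vert \mathfrak{S}(q,p)\right\Vert_{\mathcal{L}}^2 + \left\Vert q\right\Vert_{\mathcal{Q}}^2\right] \leq D\epsilon \left\Vert \mathfrak{S}(q,p)\right\Vert_{\mathcal{L}} + D\left\Vert q\right\Vert_{\mathcal{Q}}^2,
\]
and then absorb the first term on the right (using $D\epsilon \leq \tfrac12$) to conclude $\left\Vert \mathfrak{S}(q,p)\right\Vert_{\mathcal{L}} \leq 2D\left\Vert q\right\Vert_{\mathcal{Q}}^2$, which is the stated estimate with the new constant $2D$.

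Finally, for the Lipschitz estimate, conclusion (3), I would set $l_i = \mathfrak{S}(q_i,p_i)$ and apply hypothesis (2) to the two triples $(l_1,q_1,p_1)$, $(l_2,q_2,p_2)$, using again that each $l_i$ is the corresponding fixed point:
\[
\left\Vert l_1 - l_2\right\Vert_{\mathcal{L}} \leq 2D\epsilon \left\Vert l_1 - l_2\right\Vert_{\mathcal{L}} + D\left[\left(\left\Vert q_1\right\Vert_{\mathcal{Q}} + \left\Vert q_2\right\Vert_{\mathcal{Q}}\right)\left\Vert q_1 - q_2\right\Vert_{\mathcal{Q}} + \left\Vert p_1 - p_2\right\Vert_{\mathcal{P}}\right].
\]
Absorbing the first term on the right again yields the claimed bound with constant $2D$. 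The argument is entirely routine; the only point that needs a little care is the order of the quantifiers — one fixes $\epsilon$ small depending only on $D$ at the outset, and then all three steps go through simultaneously, with the constant $D$ in the conclusion taken to be (a suitable multiple of) the $D$ from the hypotheses. There is no genuine obstacle here: the lemma is deliberately a packaging of the contraction mapping principle, and all the real difficulty will lie in later verifying that the maps $\mathfrak{T}$ arising from the Einstein--Klein--Gordon system actually satisfy the quadratic estimates (1) and (2).
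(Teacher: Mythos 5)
Your proof is correct and is precisely the ``standard fixed point argument'' that the paper itself invokes and declines to spell out: fix $\epsilon$ small depending only on $D$, observe that hypothesis (2) with $q_1=q_2$, $p_1=p_2$ gives contractivity, obtain $\mathfrak{S}$ from the Banach fixed point theorem, and then derive conclusions (2) and (3) by substituting the fixed point back into hypotheses (1) and (2) and absorbing the quadratic self-terms. The only point worth noting explicitly (though it costs nothing) is that after shrinking to $\epsilon'$ with $2D\epsilon'\leq\tfrac12$, one should re-verify that $\mathfrak{T}$ still maps $B_{\epsilon'}(\mathcal{L})$ into itself, which follows immediately from hypothesis (1) since $D\bigl[\Vert l\Vert_{\mathcal{L}}^{2}+\Vert q\Vert_{\mathcal{Q}}^{2}\bigr]\leq 2D(\epsilon')^{2}\leq\epsilon'$.
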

\begin{proof}This is a standard fixed point argument. For $\epsilon$ sufficiently small, the map $\mathfrak{T}(\cdot,p,q): B_{\epsilon}(\mathcal{L}) \to B_{\epsilon}(\mathcal{L})$ is easily seen to be a contraction map and thus has a fixed point $\mathfrak{S}(p,q)$. The asserted properties of $\mathfrak{S}$ are immediate consequences of the corresponding properties of $\mathfrak{T}$. 
\end{proof}
\begin{remark}\label{paramcontdepend}We refer to the Banach spaces $\mathcal{Q}$ and $\mathcal{P}$ as spaces of ``parameters.'' We refer specifically to $\mathcal{Q}$ as the space of ``genuinely nonlinear parameters.'' We refer to final two properties of the map $\mathfrak{S}$ as statements of ``continuous dependence on the parameters.''
\end{remark}

We will often apply this fixed point lemma in the context of linearizing a nonlinear equation we wish to solve. The basic set-up is as follows: We have Banach spaces $\mathcal{L}$, $\mathcal{Q}$, and $\tilde{\mathcal{L}}$ and, for some $\epsilon > 0$, a map
\[\mathfrak{E} : B_{\epsilon}\left(\mathcal{L}\right)\times B_{\epsilon}\left(\mathcal{Q}\right)\to \tilde{\mathcal{L}}.\]
As in Lemma~\ref{basic} the Banach spaces $\mathcal{Q}$ plays the role of a set of ``genuinely nonlinear parameters.'' For $\epsilon > 0$ sufficiently small, we are interested in finding a ``solution map''
\[\mathfrak{S} : B_{\epsilon}\left(\mathcal{Q}\right)  \to B_{\epsilon}\left(\mathcal{L}\right)\]
such that
\[\mathfrak{E}\left(\mathfrak{S}\left(q\right),q\right) = 0.\]

The following lemma provides a general framework for applying Lemma~\ref{basic} towards this problem.
\begin{lemma}\label{fixit}Suppose we have a linear operator $L : \mathcal{L} \to \tilde{\mathcal{L}}$ an operator $N : \mathcal{L}\times \mathcal{Q} \to \tilde{\mathcal{L}}$, and an operator $\mathcal{Q} : B_{\epsilon}\left(\mathcal{L}\right)\times B_{\epsilon}\left(\mathcal{Q}\right) \to \tilde{\mathcal{L}}$ for some $\epsilon>0$, such that
\begin{enumerate}
\item For all $\left(l,q\right) \in B_{\epsilon}\left(\mathcal{L}\right)\times B_{\epsilon}\left(\mathcal{Q}\right)$, we have
\[\mathfrak{E}\left(l,q\right) = L\left(l\right) - N\left(l,q\right).\]
\item We have a Banach space $\mathcal{N} \subset \tilde{\mathcal{L}}$ and a bounded map $L^{-1} : \mathcal{N} \to \mathcal{L}$ such that $H \in \mathcal{N}$ implies
\[L\left(L^{-1}\left(H\right)\right) = H.\]
\item We have $N\left(B_{\epsilon}\left(\mathcal{L}\right)\times B_{\epsilon}\left(\mathcal{Q}\right)\right) \subset \mathcal{N}$ and there exists a constant $D > 0$ such that $(l,q) \in B_{\epsilon}\left(\mathcal{L}\right)\times B_{\epsilon}\left(\mathcal{Q}\right)$ implies
    \[\left\Vert N\left(l,q\right)\right\Vert_{\mathcal{N}} \leq D\left[\left\Vert l\right\Vert_{\mathcal{L}}^2 + \left\Vert q\right\Vert_{\mathcal{Q}}^2\right].\]
\item There exists a constant $D > 0$ such that $(l_1,q_1),\left(l_2,q_2\right) \in B_{\epsilon}\left(\mathcal{L}\right)\times B_{\epsilon}\left(\mathcal{Q}\right)$ implies
    \[\left\Vert N\left(l_1,q_1\right)- N\left(l_2,q_2\right)\right\Vert_{\mathcal{N}} \leq D\left[\left(\left\Vert l_1\right\Vert + \left\Vert l_2\right\Vert\right)\left\Vert l_1-l_2\right\Vert_{\mathcal{L}} + \left(\left\Vert q_1\right\Vert + \left\Vert q_2\right\Vert\right)\left\Vert q_1-q_2\right\Vert_{\mathcal{Q}}\right].\]
\end{enumerate}
Then, after possibly shrinking $\epsilon$, there exists a ``solution map'' $\mathfrak{S} : B_{\epsilon}\left(\mathcal{Q}\right) \to B_{\epsilon}\left(\mathcal{L}\right)$ such that
\begin{enumerate}
    \item $q \in B_{\epsilon}\left(\mathcal{Q}\right)$ implies
        \[\mathfrak{E}\left(\mathfrak{S}\left(q\right),q\right) = 0.\]
    \item There exists a constant $D > 0$ such that $q \in B_{\epsilon}\left(\mathcal{Q}\right)$ implies
        \[\left\Vert \mathfrak{S}\left(q\right)\right\Vert_{\mathcal{L}}\leq D\left\Vert q\right\Vert_{\mathcal{Q}}^2.\]
    \item There exists a constant $D > 0$ such that $q_1,q_2 \in B_{\epsilon}\left(\mathcal{Q}\right)$ implies
        \[\left\Vert \mathfrak{S}\left(q_1\right)-\mathfrak{S}\left(q_2\right)\right\Vert_{\mathcal{L}}\leq D\left(\left\Vert q_1\right\Vert_{\mathcal{Q}} + \left\Vert q_2\right\Vert_{\mathcal{Q}}\right)\left\Vert q_1-q_2\right\Vert_{\mathcal{Q}}.\]
\end{enumerate}
\end{lemma}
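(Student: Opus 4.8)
The plan is to reduce Lemma~\ref{fixit} directly to Lemma~\ref{basic} by inverting the linear part. Define
\[
\mathfrak{T} \doteq L^{-1} \circ N : B_{\epsilon}\left(\mathcal{L}\right) \times B_{\epsilon}\left(\mathcal{Q}\right) \to \mathcal{L},
\]
which makes sense by hypotheses (2) and (3): indeed $N\left(l,q\right) \in \mathcal{N}$ and $L^{-1} : \mathcal{N} \to \mathcal{L}$ is bounded. The point of this definition is that, by hypothesis (1), any fixed point $l = \mathfrak{T}\left(l,q\right)$ automatically solves the equation of interest: applying $L$ and using the left-inverse property $L \circ L^{-1} = \operatorname{id}_{\mathcal{N}}$ (valid since $N\left(l,q\right) \in \mathcal{N}$) gives $L\left(l\right) = L\left(L^{-1}\left(N\left(l,q\right)\right)\right) = N\left(l,q\right)$, hence $\mathfrak{E}\left(l,q\right) = L\left(l\right) - N\left(l,q\right) = 0$. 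Conversely a zero of $\mathfrak{E}$ is a fixed point of $\mathfrak{T}\left(\cdot,q\right)$. So it suffices to produce a solution map $q \mapsto \mathfrak{S}\left(q\right)$ with $\mathfrak{T}\left(\mathfrak{S}\left(q\right),q\right) = \mathfrak{S}\left(q\right)$.

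To invoke Lemma~\ref{basic} we take the auxiliary parameter space $\mathcal{P}$ there to be the trivial space $\{0\}$, so that $\mathfrak{T}$ is a map in two genuine arguments. First I would check that, after shrinking $\epsilon$, $\mathfrak{T}$ maps $B_{\epsilon}\left(\mathcal{L}\right) \times B_{\epsilon}\left(\mathcal{Q}\right)$ into $B_{\epsilon}\left(\mathcal{L}\right)$: by hypothesis (3) and boundedness of $L^{-1}$,
\[
\left\Vert \mathfrak{T}\left(l,q\right)\right\Vert_{\mathcal{L}} \leq \left\Vert L^{-1}\right\Vert \left\Vert N\left(l,q\right)\right\Vert_{\mathcal{N}} \leq \left\Vert L^{-1}\right\Vert D \left(\left\Vert l\right\Vert_{\mathcal{L}}^2 + \left\Vert q\right\Vert_{\mathcal{Q}}^2\right) \leq 2\left\Vert L^{-1}\right\Vert D\,\epsilon^2,
\]
which is $\leq \epsilon$ as soon as $\epsilon \leq \left(2\left\Vert L^{-1}\right\Vert D\right)^{-1}$. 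The two structural hypotheses of Lemma~\ref{basic} are then immediate: the displayed estimate is exactly its hypothesis (1) with constant $\left\Vert L^{-1}\right\Vert D$, while composing hypothesis (4) of Lemma~\ref{fixit} with the bounded operator $L^{-1}$ yields its hypothesis (2) with the same constant (the $\left\Vert p_1 - p_2\right\Vert_{\mathcal{P}}$ term being vacuous since $\mathcal{P} = \{0\}$).

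Lemma~\ref{basic} then furnishes, after possibly shrinking $\epsilon$ once more, a solution map $\mathfrak{S} : B_{\epsilon}\left(\mathcal{Q}\right) \to B_{\epsilon}\left(\mathcal{L}\right)$ with $\mathfrak{T}\left(\mathfrak{S}\left(q\right),q\right) = \mathfrak{S}\left(q\right)$, with $\left\Vert \mathfrak{S}\left(q\right)\right\Vert_{\mathcal{L}} \leq D\left\Vert q\right\Vert_{\mathcal{Q}}^2$, and with $\left\Vert \mathfrak{S}\left(q_1\right) - \mathfrak{S}\left(q_2\right)\right\Vert_{\mathcal{L}} \leq D\left(\left\Vert q_1\right\Vert_{\mathcal{Q}} + \left\Vert q_2\right\Vert_{\mathcal{Q}}\right)\left\Vert q_1 - q_2\right\Vert_{\mathcal{Q}}$. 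By the observation of the first paragraph the fixed-point identity is equivalent to $\mathfrak{E}\left(\mathfrak{S}\left(q\right),q\right) = 0$, so $\mathfrak{S}$ is exactly the desired solution map and the two displayed estimates are precisely conclusions (2) and (3). There is no serious obstacle here; the only points demanding a moment's care are the shrinking of $\epsilon$ needed to make $\mathfrak{T}$ a self-map of $B_{\epsilon}\left(\mathcal{L}\right)$, and the use of $N\left(l,q\right) \in \mathcal{N}$ when applying $L$ to the $\mathfrak{T}$-fixed-point equation in order to recover $\mathfrak{E} = 0$.
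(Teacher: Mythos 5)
Your proposal is correct and takes essentially the same approach as the paper: define $\mathfrak{T}(l,q) \doteq L^{-1}(N(l,q))$, verify the hypotheses of Lemma~\ref{basic} (with $\mathcal{P}$ trivial), and translate the fixed point of $\mathfrak{T}$ back into a zero of $\mathfrak{E}$. You have simply spelled out the verification steps that the paper leaves as an exercise.
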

\begin{proof}This is standard: one defines a map $\mathfrak{T} : B_{\epsilon}\left(\mathcal{L}\right) \times B_{\epsilon}\left(\mathcal{Q}\right) \to B_{\epsilon}\left(\mathcal{L}\right)$ by
\[\mathfrak{T}\left(l,q\right) \doteq L^{-1}\left(N\left(l,q\right)\right),\]
and checks that Lemma~\ref{basic} applies.
\end{proof}
\begin{remark}\label{contnonlindep} We will refer to the last two properties of the solutions as ``continuous nonlinear dependence on parameters.''
\end{remark}

\section{Metric quantities and Equations}
Recall that our metric ansatz \eqref{eq:metric-ansatz} takes the form 
\[
g = -V dt^{2} + 2W dtd\phi + X d\phi^{2} + e^{2\lambda}(d\rho^{2}+dz^{2}). 
\]
for \index{Metric Quantities!$V$}\index{Metric Quantities!$W$}\index{Metric Quantities!$X$}\index{Metric Quantities!$\lambda$}$V,W,X,\lambda$ functions $\mathscr{B}\to\RR$. As discussed in Section \ref{sec:overview}, for the study of such metrics solving the vacuum Einstein equations it is convenient to consider ``metric data'' which is not exactly $V,W,X$ and $\lambda$. In particular, motivated by the Carter--Robinson theory, we introduce the ``twist $1$-form''\index{Metric Quantities!$\theta$}
\[
\theta \doteq 2i_{\Phi}\left(*\nabla\Phi_{\flat}\right).
\]
Moreover, instead of $V$, we will consider the quantity \index{Metric Quantities!$\sigma$}$\sigma \doteq \sqrt{XV+W^{2}}$. We emphasize that unlike the vacuum case, $\sigma$ will not be equal to $\rho$. 

As such, we will refer to the data $(X,W,\theta,\sigma,\lambda)$ as ``metric data.'' It is clear that given metric data, we can solve for $V$ and thus reconstruct the metric $g$. Moreover, the reason for the choice of this exact form of the metric data is that the Einstein--Klein--Gordon equations with a scalar field of the form\index{Metric Quantities!$\Psi$}\index{Metric Quantities!$\omega$}\index{Metric Quantities!$m$}
\begin{equation}\label{formscalar-disp2}
\Psi(t,\phi,\rho,z) = e^{-it\omega} e^{im\phi} \psi(\rho,z)
\end{equation}
are equivalent to a system of equations for the metric data and $\psi$ (which we will call the ``reduced scalar field'') which will turn out to be possible to analyze. In \cite[Theorem 1.3]{HBH:geometric} we prove the following result\footnote{The analogous analysis for the vacuum equations was carried out in \cite{weinstein}; in particular, \cite{weinstein} contains the observation that $\lambda$ satisfies a first order equation in addition to the Liouville equation. This will be useful in our analysis of $\lambda$ below.}
\begin{theorem}\label{thm:HBH-geometric-main}
Suppose that $(\mathcal{M},g)$ solves the Einstein--Klein--Gordon equations with a scalar field $\Psi$ of the form \eqref{formscalar-disp2}. Then the metric data and reduced scalar field satisfy the following equations on $\mathscr{B} = \{(\rho,z) \in \mathbb{R}^{2} : \rho > 0 \}$:
\begin{enumerate}
\item $X$ satisfies
\[
 \sigma^{-1}\partial_{\rho}(\sigma\partial_{\rho}X) + \sigma^{-1}\partial_{z}(\sigma\partial_{z}X) = -e^{2\lambda}(2 m^{2} + \mu^{2} X) \psi^{2} + \frac{(\partial_{\rho}X)^{2}+(\partial_{z}X)^{2} -\theta_{\rho}^{2}-\theta_{z}^{2}}{X}.
\]
\item $W$ satisfies
\[
\partial_{\rho}(X^{-1}W) d\rho + \partial_{z} (X^{-1}W) dz = \frac{\sigma}{X^{2}}[\theta_{\rho}dz - \theta_{z} d\rho].
\]
\item $\theta$ satisfies
\[
d\theta =(\partial_{\rho}\theta_{z} - \partial_{z}\theta_{\rho}) d\rho\wedge dz = 2\sigma^{-1}e^{2\lambda} \left( X\omega m + W m^{2}\right) \psi^{2} d\rho\wedge dz,
\]
as well as
\[
\sigma^{-1}\partial_{\rho}(\sigma \theta_{\rho})+\sigma^{-1}\partial_{z}(\sigma \theta_{z}) = \frac{2\theta_{\rho}\partial_{\rho} X + 2 \theta_{z}\partial_{z}X}{X}.
\]
\item $\sigma$ satisfies
\[
X^{-1}e^{-2\lambda}\sigma\left(\partial_{\rho}^2\sigma + \partial_z^2\sigma\right) = \left(\left(\omega + X^{-1}Wm\right)^2 - \sigma^2\left(\frac{\mu^2}{X} + \frac{m^2}{X^2}\right)\right)\psi^2.
\]
\item $\lambda$ satisfies the following equations at the points where $|\partial\sigma|\not = 0$
\[
\partial_{\rho}\lambda = \alpha_{\rho} - \frac 12 \partial_{\rho}\log X, \qquad \partial_{z}\lambda = \alpha_{z} - \frac 12 \partial_{z} \log X
\]
where
\begin{align*}
& \left( (\partial_{\rho}\sigma)^{2}+(\partial_{z}\sigma)^{2}\right)  \alpha_{\rho}\\
& = \frac 12 (\partial_{\rho}\sigma) \sigma \left( (\partial_{\rho}\psi)^{2} - (\partial_{z}\psi)^{2} + \frac 12 X^{-2} \left[(\partial_{\rho}X)^{2} - (\partial_{z}X)^{2} + (\theta_{\rho})^{2} - (\theta_{z})^{2} \right] \right)\\
& + \partial_{\rho}\sigma (\partial^{2}_{\rho}\sigma - \partial^{2}_{z}\sigma) + \partial_{z}\sigma (\partial^{2}_{\rho,z}\sigma)\\
& + (\partial_{z}\sigma)\sigma \left[ (\partial_{\rho}\psi)(\partial_{z}\psi) + \frac 12 X^{-2}\left((\partial_{\rho}X)(\partial_{z}X) + (\theta_{\rho})(\theta_{z}) \right) \right],
\end{align*}
and
\begin{align*}
& \left( (\partial_{\rho}\sigma)^{2}+(\partial_{z}\sigma)^{2}\right)  \alpha_{z}\\
& = -\frac 12 (\partial_{z}\sigma)\sigma \left(  (\partial_{\rho}\psi)^{2} - (\partial_{z}\psi)^{2} + \frac 12 X^{-2} \left[(\partial_{\rho}X)^{2} - (\partial_{z}X)^{2} + (\theta_{\rho})^{2} - (\theta_{z})^{2} \right]  \right)\\
& - \partial_{z}\sigma (\partial^{2}_{\rho}\sigma - \partial^{2}_{z}\sigma) + \partial_{\rho}\sigma (\partial^{2}_{\rho,z}\sigma)\\
& + (\partial_{\rho}\sigma)\sigma \left[ (\partial_{\rho}\psi)(\partial_{z}\psi) + \frac 12 X^{-2} \left( (\partial_{\rho}X)(\partial_{z}X) + (\theta_{\rho})(\theta_{z}) \right)\right].
\end{align*}
Independent of the behavior of $\sigma$, $\lambda$ satisfies
\begin{align*}
2\partial^{2}_{\rho}\lambda + 2\partial^{2}_{z}\lambda & = - \partial^{2}_{\rho}\log X - \partial^{2}_{z}\log X +  \sigma^{-1}(\partial_{\rho}^{2}\sigma + \partial^{2}_{z}\sigma)\\
& - e^{2\lambda} \mu^{2} \psi^{2} - (\partial_{\rho}\psi)^{2} - (\partial_{z}\psi)^{2} - X^{-1}\left( 2m^{2} + \mu^{2} X\right)e^{2\lambda}\psi^2\\
& - \frac 12 X^{-2}\left[ (\partial_{\rho}X)^{2} + (\partial_{z}X)^{2} + (\theta_{\rho})^{2} + (\theta_{z})^{2} \right].
\end{align*}
\item $\psi$ satisfies
\[
\sigma^{-1}\partial_{\rho}\left(\sigma\partial_{\rho}\psi\right) + \sigma^{-1}\partial_z\left(\sigma\partial_z\psi\right) + e^{2\lambda}\sigma^{-2}X^{-1}\left(X\omega+Wm\right)^2\psi - e^{2\lambda}m^2X^{-1}\psi- e^{2\lambda}\mu^2 \psi = 0.
\]
\end{enumerate}
Conversely, if the metric data and reduced scalar field solves each of these equations, and $|\partial\sigma|\not = 0$ on $\mathscr{B}$, then we may recover the metric and scalar field $(\mathcal{M},g,\Psi)$, solving the Einstein--Klein--Gordon equations.
\end{theorem}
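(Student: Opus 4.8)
The plan is to follow the matter analogue of the Carter--Robinson computation in the form given by Weinstein \cite{weinstein}: express the Einstein equations through the Gram matrix of the Killing fields, the twist, and the conformal factor, and then reorganize the individual curvature components into the six stated equations. The first step is to trace \eqref{einsteinmatter} with $\mathbb{T}$ as in \eqref{emot}, which gives $R = g^{\gamma\delta}\text{Re}(\partial_\gamma\Psi\,\overline{\partial_\delta\Psi}) + 2\mu^{2}|\Psi|^{2}$, so that the Einstein--Klein--Gordon system becomes
\[
Ric(g)_{\alpha\beta} = \text{Re}\bigl(\partial_\alpha\Psi\,\overline{\partial_\beta\Psi}\bigr) + \tfrac12\mu^{2}|\Psi|^{2}\, g_{\alpha\beta}
\]
together with \eqref{kg}. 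For $\Psi$ of the form \eqref{formscalar-disp2} one has $|\Psi|^{2}=\psi^{2}$, and expanding the phases shows that $\text{Re}(\partial_\alpha\Psi\,\overline{\partial_\beta\Psi})$ vanishes whenever exactly one of $\alpha,\beta$ lies in $\{t,\phi\}$ (equivalently, this is forced by invariance of $\mathbb{T}$ under $(t,\phi)\mapsto(-t,-\phi)$). Hence the ``circularity'' structure behind the ansatz \eqref{eq:metric-ansatz} is automatically consistent: the mixed Einstein equations $R_{t\rho}=R_{tz}=R_{\phi\rho}=R_{\phi z}=0$ hold with both sides vanishing.

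I would obtain the scalar field equation by a direct computation of $\Box_g\Psi$. Since $\det g = -(XV+W^{2})e^{4\lambda}=-\sigma^{2}e^{4\lambda}$ we have $\sqrt{-g}=\sigma e^{2\lambda}$, and the $(t,\phi)$ block of the inverse metric has entries $g^{tt}=-X\sigma^{-2}$, $g^{t\phi}=W\sigma^{-2}$, $g^{\phi\phi}=V\sigma^{-2}$. Inserting $\Psi=e^{-it\omega}e^{im\phi}\psi$ into $\Box_g\Psi=(\sqrt{-g})^{-1}\partial_\alpha(\sqrt{-g}\,g^{\alpha\beta}\partial_\beta\Psi)$, the $t,\phi$ derivatives contribute the algebraic factor $\sigma^{-2}X^{-1}(X\omega+Wm)^{2}-m^{2}X^{-1}$ (using $XV+W^{2}=\sigma^{2}$) and the $\rho,z$ derivatives contribute $e^{-2\lambda}\sigma^{-1}\partial_\rho(\sigma\partial_\rho\psi)+e^{-2\lambda}\sigma^{-1}\partial_z(\sigma\partial_z\psi)$; dividing by the common phase and multiplying by $e^{2\lambda}$ gives exactly equation (6).

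The bulk of the work is the metric sector, where I would compute $Ric(g)$ for \eqref{eq:metric-ansatz} and package it following Weinstein. The Bochner identity for the Killing field $\Phi=\partial_\phi$, namely $\tfrac12\Box(|\Phi|^{2}) = |\nabla\Phi|^{2} - Ric(\Phi,\Phi)$, together with $Ric(\Phi,\Phi)=R_{\phi\phi}=m^{2}\psi^{2}+\tfrac12\mu^{2}\psi^{2}X$ from the Einstein equation and the expression of $|\nabla\Phi|^{2}$ in terms of $X$ and the twist $1$-form $\theta=2i_\Phi(*\nabla\Phi_\flat)$, gives the $X$-equation (1). The Killing identity $\nabla_a\nabla_b\Phi_c = R_{cbad}\Phi^{d}$ implies $d\theta = 2*(\Phi_\flat\wedge Ric(\Phi))$ up to sign, which through the Einstein equation becomes the stated $\psi^{2}$ source for $d\theta$. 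Expanding the definition of $\theta$ in the coordinates of \eqref{eq:metric-ansatz}, using $W=g(T,\Phi)$, yields the first-order relation (2) between $W$ and $\theta$, and the divergence-type $\theta$-equation is precisely the integrability condition of (2). The trace of the Einstein equation over the $(t,\phi)$ block gives the $\sigma$-equation (4) for $\sigma=\sqrt{-\det(\text{Killing block})}$ --- harmonic when $\psi\equiv0$ and carrying a $\psi^{2}$ source here. Finally, on the base, conformal flatness of $e^{2\lambda}(d\rho^{2}+dz^{2})$ makes $R_{\rho\rho}+R_{zz}$ a second-order (Liouville-type) equation for $\lambda$ involving $\partial^{2}\log X$, the term $\sigma^{-1}(\partial^{2}_\rho\sigma+\partial^{2}_z\sigma)$, and $\psi$ and twist sources, while $R_{\rho\rho}-R_{zz}$ and $R_{\rho z}$, after using the $\sigma$-equation to remove the second derivatives of $\sigma$, give the first-order equations $\partial_\rho\lambda=\alpha_\rho-\tfrac12\partial_\rho\log X$ and $\partial_z\lambda=\alpha_z-\tfrac12\partial_z\log X$ wherever one may divide by $|\partial\sigma|^{2}$ (the extra first-order structure for $\lambda$ noted in \cite{weinstein}).

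For the converse, each manipulation is reversible: given metric data and $\psi$ solving (1)--(6) with $|\partial\sigma|\neq0$ on $\mathscr{B}$, set $V\doteq(\sigma^{2}-W^{2})/X$ and $\Psi\doteq e^{-it\omega}e^{im\phi}\psi$, and read the computations backwards to recover a solution of Einstein--Klein--Gordon. The one delicate point is the integrability of the first-order $\lambda$-system: one must verify $\partial_z\partial_\rho\lambda=\partial_\rho\partial_z\lambda$, which follows by differentiating the two first-order equations and substituting (1), (3), (4), and (6), after which the Laplacian of $\lambda$ produced this way matches the Liouville equation. I expect the main obstacle to be bookkeeping rather than anything conceptual: identifying the correct linear combinations of the curvature components that produce the (apparently partly decoupled) $X$-, $\theta$-, and $\sigma$-equations and the first-order $\lambda$-system --- the raw components do not individually resemble the targets --- and carefully tracking how the determinant factor $\log X$ and the twist enter the $\lambda$ equations. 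This routine but lengthy verification is what is carried out in \cite{HBH:geometric}.
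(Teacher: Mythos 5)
The paper does not actually prove Theorem~\ref{thm:HBH-geometric-main}: it is quoted verbatim from the companion article \cite{HBH:geometric} (whose Theorem~1.3 is cited), and only the surrounding overview of Carter--Robinson theory in Section~2.1 indicates the intended framework. Your sketch follows exactly that Weyl/Papapetrou/Carter--Robinson/Weinstein route — trace-reversing \eqref{einsteinmatter} to $Ric = \mathrm{Re}(\partial\Psi\,\overline{\partial\Psi}) + \tfrac12\mu^{2}|\Psi|^{2}g$, using circularity of the matter term, Bochner for $\Phi$ to get (1), the Killing/Frobenius identity for $d\theta$ to get (3), the definition of the twist for (2) with (the divergence half of) (3) as its integrability condition, the fiber trace for (4), and the base Ricci components for the Liouville and first-order $\lambda$ system (5) — and the pieces you spell out (the $\sqrt{-g}=\sigma e^{2\lambda}$ reduction of $\Box_g\Psi$ to (6); the computation that (2)'s integrability is the stated divergence equation for $\theta$; the vacuum specialization recovering \eqref{harmonicmap1}--\eqref{harmonicmap2} and $\Delta_{\mathbb{R}^2}\sigma=0$) all check out. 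One small imprecision: you say the first-order $\lambda$ equations come from $R_{\rho\rho}-R_{zz}$ and $R_{\rho z}$ ``after using the $\sigma$-equation to remove the second derivatives of $\sigma$,'' but the displayed $\alpha_\rho,\alpha_z$ manifestly retain the trace-free combinations $\partial^{2}_{\rho}\sigma - \partial^{2}_{z}\sigma$ and $\partial^{2}_{\rho,z}\sigma$ — the $\sigma$-equation can only eliminate the Laplacian $\partial^{2}_{\rho}\sigma+\partial^{2}_{z}\sigma$, not all of $\partial^2\sigma$. In short, the proposal is a correct outline of the same computation the paper outsources; the remaining content is the bookkeeping you correctly anticipate and which \cite{HBH:geometric} carries out.
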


Of course, here we are primarily interested in the final conclusion. We emphasize that because the metrics we will construct will be a small perturbation of an appropriate Kerr metric (where we have that $\sigma = \rho$), the condition that $|\partial\sigma| \not = 0$ will be automatically satisfied. 

Finally, we emphasize that simply solving the above equations on $\mathscr{B}$ will be far from sufficient to prove Theorem \ref{timeperiodicsoln}, in particular because of the need to show that the metric extends to the axis and horizon, where the coordinates $(t,\phi,\rho,z)$ break down. Note, however, that once each of the metric data solves its relevant equations and we show that the metric extends to the axis and horizon, then the entire space-time will solve the Einstein--Klein--Gordon equations, since the above result guarantees that this holds on a dense set (namely, everything besides the axis and horizon). 

\section{Renormalized Quantities, Equations, and Norms}

In this section we will write the reduced Einstein--Klein--Gordon equations from the previous section in a form to which we can potentially apply Lemma~\ref{basic} or~\ref{fixit}. Recall that the basic unknowns are the set of ``metric data'' $(X,W,\theta,\sigma,\lambda)$ and the ``reduced scalar field'' $\psi$. We begin with a discussion of a gauge choice for the ``twist,'' $\theta$.

\subsection{The Equation for the Ernst Potential $Y$ and the $1$-form $B$}
In the vacuum case, the twist $\theta$ satisfies $d\theta =0$, allowing us to introduce the ``Ernst potential'' \index{Metric Quantities!$Y$}$Y$ by setting $\theta = dY$. It turns out that even when we are not studying the vacuum Einstein equations, it is useful to split the twist $1$-form $\theta$ into an Ernst potential piece $dY$ and another $1$-form $B$ which measures how far $\theta$ is from being closed. This choice of $B$ represents a gauge ambiguity in our problem. We will want $B$ to statisfy \eqref{dthatB}, but are free to choose appropriate boundary conditions. In general, there is a trade-off between decay at the axis and decay at infinity. It turns out that the following definition suffices for our purposes.

\begin{definition}\label{defB} \index{Metric Quantities!$B$} Let $(X,W,\theta,\sigma,\lambda)$ be a set of unknown metric data and $\psi$ be an unknown reduced scalar field. We define three $1$-forms $B^{(N)}$, $B^{(S)}$, and $B^{(A)}$ by the following formulas (interpreted in $(s,\chi)$ coordinates, $(s',\chi')$ coordinates, and $(\rho,z)$ coordinates respectively):
\begin{align*}
B^{(N)}_{\chi}(0,\chi) & \doteq 0,\\
B^{(N)}_{\chi}(s,0) & \doteq 0,\\
\partial_sB^{(N)}_{\chi}\left(s,\chi\right) &\doteq 2\left(\chi^2+s^2\right)\xi_N\sigma^{-1}e^{2\lambda}\left(X\omega m + Wm^2\right)\psi^2,
\\ \nonumber B^{(N)}_s\left(s,\chi\right) &\doteq 0,
\\  B^{(S)}_{\chi'}(0,\chi') & \doteq 0
\\  B^{(S)}_{\chi'}(s',0) & \doteq 0
\\ \nonumber \partial_{s'}B^{(S)}_{\chi'}\left(s',\chi'\right) &\doteq 2\left((\chi')^2+(s')^2\right)\xi_S\sigma^{-1}e^{2\lambda}\left(X\omega m + Wm^2\right)\psi^2,
\\ \nonumber B^{(S)}_{s'}\left(s',\chi'\right) &\doteq 0,
\\ B^{(A)}_{z}(0,z) & \doteq 0,
\\ \nonumber \partial_{\rho}B^{(A)}_z\left(\rho,z\right) &\doteq 2\left(1-\xi_N - \xi_S\right)\sigma^{-1}e^{2\lambda}\left(X\omega m + Wm^2\right)\psi^2
\\ \nonumber &\qquad \qquad -\left(\partial_{\rho}\xi_N\right)B^{(N)}_z + (\partial_z\xi_N)B_{\rho}^{(N)}- \left(\partial_{\rho}\xi_N\right)B^{(S)}_z + (\partial_z\xi_N)B_{\rho}^{(S)},
\\ \nonumber B^{(A)}_{\rho}(\rho,z) &\doteq 0.
\end{align*}
\begin{remark}\label{howdefB} Let us emphasize that the above definition should be thought of as defining a $1$-form in the coordinates $(s,\chi)$ and $(s',\chi')$. We then define functions functions $B^{(N)}_{\rho}$, $B^{(N)}_z$, $B^{(S)}_{\rho}$, and $B^{(S)}_z$ as the coefficients of the form after changing to $(\rho,z)$ coordinates
\[B^{(N)}_{\rho}d\rho + B^{(N)}_zdz = B^{(N)}_{\chi}d\chi,\qquad B^{(S)}_{\rho}d\rho + B^{(S)}_zdz = B^{(S)}_{\chi'}d\chi'.\]
This remark should be used in interpreting the expression for $\partial_{\rho}B^{(A)}_{z}$. Note that, for example, we have chosen to set $B^{(N)}_{s}=0$, which is why there is only a $d\chi$ component. 
\end{remark}

Then we define the $1$-form $B$ by
\begin{equation}\label{thisisB}
B \doteq B^{(A)} + \xi_NB^{(N)} + \xi_SB^{(S)}.
\end{equation}
\end{definition}
\begin{remark}A straightforward calculation yields
\begin{equation}\label{dthatB}
dB = 2\sigma^{-1}e^{2\lambda}\left(X\omega m + Wm^2\right)\psi^2d\rho\wedge dz
\end{equation}
Thus, using the formula for $d\theta$ in Theorem \ref{thm:HBH-geometric-main}, $\theta-B$ is a closed $1$-form.
\end{remark}
\begin{remark}\label{Btriple} Note that Definition~\ref{defB} is really an equation for the the three functions $B^{(N)}_{\chi}$, $B^{(S)}_{\chi'}$, and $B^{(A)}_z$. In our fixed point argument, it is these three functions which we will work with as the actual unknowns. Given the triple $\left(B^{(N)}_{\chi},B^{(S)}_{\chi'},B^{(A)}_z\right)$, we will always consider $B$ to automatically be defined by~(\ref{thisisB}).
\end{remark}

\begin{definition}The Ernst potential $Y : \mathscr{B} \to \mathbb{R}$ is defined (up to a constant) by the formula\index{Metric Quantities!$Y$}
\[dY = \theta - B.\]
\end{definition}

\subsection{The Renormalized Unknowns and the Key Parameters}
An important point will be that we do not work directly at the level of the metric quantities $X$, $Y$, $W$, and $\sigma$, but instead apply certain ``renormalizations.'' Our definitions are chosen so as to subtract off the leading order singular behavior. Related renormalizations have been considered in \cite{weinstein,weinstein2,chrusciellopes,ionkla} and other places.

We begin by introducing the set of unknowns and parameters and explaining how they correspond to the metric quantities $(X,W,\theta,\sigma,\lambda)$ introduced in the previous section.

\begin{definition}The ``renormalized unknowns'' refer to the following set of functions and a $1$-form, which are all assumed to be continuous on $\overline{\mathscr{B}}$ and satisfy the following definitions on $\mathscr{B}$:
\begin{enumerate}
    \item The first renormalized unknown is \index{Metric Quantities!$\mathring{\sigma}$} $\mathring{\sigma}$, which is related to the metric quantity $\sigma$ by
        \[\mathring{\sigma} \doteq \frac{\sigma-\rho}{\rho}.\]
    \item The second renormalized unknown is the triple \index{Metric Quantities!$B^{(N)}$}\index{Metric Quantities!$B^{(S)}$}\index{Metric Quantities!$B^{(A)}$}   $\left(B^{(N)}_{\chi},B^{(S)}_{\chi'},B^{(A)}_z\right)$. (The reader may want to recall Remark~\ref{Btriple}).
    \item The third renormalized unknown is the pair of functions \index{Metric Quantities!$\mathring X$} $\mathring{X}$ and \index{Metric Quantities!$\mathring Y$} $\mathring{Y}$, which are related to the metric quantities $X$ and $Y$ by
        \[\mathring{X} \doteq X_K^{-1}\left(X - X_K\right),\qquad \mathring{Y} \doteq X_K^{-1}\left(Y-Y_K\right).\]
    \item The fourth renormalized unknown is the function \index{Metric Quantities!$\mathring \Theta$} $\mathring{\Theta}$, which is related to the metric quantity $W$ by
        \[\mathring{\Theta} \doteq X^{-1}W - X_K^{-1}W_K.\]
    \item The fifth renormalized unknown is the reduced scalar field \index{Metric Quantities!$\psi$} $\psi$ and the Klein--Gordon mass \index{Metric Quantities!$\mu^{2}$} $\mathring{\mu}^2$. The constant $\mathring{\mu}^2$ is related to $\mu^2$ by
        \[\mu^2 = \mathring{\mu}^2 + \mu_K^2,\]
        where \index{Metric Quantities!$\mu_{K}$} $\mu_K$ is a constant which will be fixed in Remark~\ref{defmuk}.
   \item The sixth renormalized unknown is the function $\mathring{\lambda}$, which is related to the metric quantity $\lambda$ by
        \[\mathring{\lambda} \doteq \lambda - \lambda_K.\]
\end{enumerate}
\end{definition}
\begin{remark}The order we have presented the renormalized unknowns reflects the order in which we will treat their equations. This order will be important for various reasons. In particular, it is important to solve $\mathring Y$ (and $B$)'s equation before treating $\Theta$, since $\mathring Y$'s equation provides a necessary compatibility condition to integrate $\Theta$'s equation (see Proposition \ref{fixth}). Moreover, it is important to solve $\Theta$ before considering $\psi$, since this will allow us to show $\mathring\omega$ is constant (see Corollary \ref{omconst}). 

The choice of the order of the other equations is also important, but somewhat more subtle. The key reason for the given order is that we will (for most of the unknowns) have to verify the quadratic estimates needed for Lemma \ref{fixit}; the given order allows for this. For example, when solving for $\mathring \sigma$, the nonlinear term will \underline{all} be multiplied by $\psi^{2}$ which will automatically guarantee such estimates (see the proof of Proposition \ref{fixsig}). This then implies (via Lemma \ref{fixit}) that $\mathring \sigma$ is \underline{quadratically} bounded by the remaining unknowns. This is useful, for example, when solving for $\Theta$ (see Proposition \ref{fixth}). 
\end{remark}
\begin{remark}\label{unrenormalize}Note that given renormalized unknowns \[\left(\mathring{\sigma},\left(B^{(N)}_{\chi},B^{(S)}_{\chi'},B^{(A)}_z\right),\left(\mathring{X},\mathring{Y}\right),\mathring{\Theta},\psi,\mathring{\mu}^2,\mathring{\lambda}\right),\] we can easily compute the corresponding quantities
\begin{equation}\label{unrenormalize2}
\left(\sigma,B,\left(X,Y\right),W,\psi,\mu^2,\lambda \right),
\end{equation}
from which we obtain the metric data $(X,W,\theta,\sigma,\lambda)$. In what follows, given a set of renormalized unknowns, we will always consider the quantities~(\ref{unrenormalize2}) to be automatically defined by reversing the renormalization process.
\end{remark}

Finally, we have four important parameters. First there is a small parameter \index{Miscellaneous!$\delta$} $0 \leq \delta \ll 1$ which will eventually be defined by the requirement that
\[\int_{\mathscr{B}}\psi^2e^{2\lambda}\sigma\, d\rho\, dz = \delta^2.\]
Note that $\delta = 0$ implies that the scalar field vanishes\footnote{Note that $\sigma$ cannot vanish away from the horizon/axis as at such points if $\sigma=0$, it is easy to see that the metric $g$ cannot be Lorentzian}. We will eventually consider $\delta$ as a ``bifurcation parameter.''

Second, we have a parameter \index{Metric Quantities!$m$} $m \in \mathbb{Z}_{\neq 0}$ which determines the azimuthal number of the scalar field. We will fix the value of $m$ in Remark~\ref{fixmalpha0}.

Lastly, we have the parameters \index{Metric Quantities!$a$}\index{Metric Quantities!$M$} $(a,M)$ which determine the Kerr metric quantities. Recall that we have picked any choice of  $(a,M)$ satisfying $0 < |a| < M$ and then consider $(a,M)$ fixed until Section~\ref{arrange}. As remarked in the introduction, the fact that $|a| > 0$ will be necessary for our construction. Until Section~\ref{arrange} the dependence of various constants on $(a,M)$ will be suppressed.

\subsection{The Renormalized Equations}
In this section we will explicitly define the equations associated to the renormalized unknowns. In certain cases we will also define the ``linear'' and ``nonlinear'' parts of the equations $L_i$ and $N_i$.

Before we dive into the equations, it is useful to introduce the following function \index{Metric Quantities!$\mathring \omega$} $\mathring{\omega}$ as follows. We first define $\mathring{\omega}$ on the horizon $\mathscr{H}$ by
\begin{equation}\label{mathringomega}
 \mathring{\omega}\left(0,z\right)|_{\mathscr{H}} = -m\left(\mathring{\Theta} + X_K^{-1}W_K\right)|_{\mathscr{H}}.
 \end{equation}
If $\mathring{\Theta}$ is in $\hat C^{2,\alpha}$ near $\mathscr{H}$ (which we will arrange), then $\mathring{\omega}$ will be $C^{2,\alpha}$ on $\mathscr{H}$. Now let $E : C^2[-\gamma,\gamma] \to C^2\left(\mathbb{R}\right)$ be a linear bounded ``extension map,'' which sends the constant function to the constant function,\footnote{One can easily build such an extension operator from the half-line extension operator $\tilde E : C^2(-\infty,0) \to C^2(-\infty,\infty)$ defined, for $x > 0$, by  $\tilde Ef(x) = 6f(-x) - 32f(-x/2) + 27f(-x/3)$.} and use this to extend $\mathring{\omega}$ to a function $\mathring{\omega}\left(0,z\right) : \mathbb{R}\to\mathbb{R}$. Lastly, we extend $\mathring{\omega}$ to all $\overline{\mathscr{B}}$ by letting it be independent of $\rho$.


We will see that if $\mathring \Theta$ solves an appropriate equation (see \eqref{thetaringeqnz}), then $\mathring \omega$ is indeed a constant. However, until we have arranged for $\mathring\Theta$ to solve its equation, it is useful to allow for a non-constant $\mathring \omega$. As such, in the preliminary portion of our fixed point argument, we will \underline{change} the equations in Theorem \ref{thm:HBH-geometric-main} by writing $\mathring\omega$ instead of $\omega$. Once we have solved this system, we will conclude \emph{a posteriori} that $\mathring\omega$ was \underline{constant} and thus we have indeed solved the equations considered in Theorem \ref{thm:HBH-geometric-main}. 

\subsubsection{The Equation for $\mathring{\sigma}$}
We start with the equation for $\mathring{\sigma}$

If we write the flat metric on $\mathbb{R}^4$ as $d\rho^2 + \rho^2d\mathbb{S}^2 + dz^2$, then, the equation for $\sigma$ from Theorem 1.3 in \cite{HBH:geometric} can be rewritten as
\begin{equation}\label{sigmaeqn2}
\Delta_{\mathbb{R}^4}\left(\frac{\sigma}{\rho}\right) = \rho^{-1}\sigma^{-1}Xe^{2\lambda}\left(\left(\mathring{\omega}+X^{-1}Wm\right)^2 - \sigma^2\left(\frac{\mu^2}{X} + \frac{m^2}{X^2}\right)\right)\psi^2.
\end{equation}
\begin{remark}
The reason this equation has a four dimensional Laplacian is simply because when going from the $\partial^{2}_{\rho}\sigma + \partial^{2}_{z}\sigma$ expression in Theorem 1.3 in \cite{HBH:geometric} to a PDE in terms of $\frac{\sigma}{\rho}$, we use the expression
\[
\partial^{2}_{\rho} \sigma + \partial^{2}_{z}\sigma = \rho \left( \partial^{2}_{\rho} \left( \frac{\sigma}{\rho}\right)+ \frac2 \rho \partial_{\rho} \left( \frac{\sigma}{\rho}\right)   + \partial^{2}_{z} \left( \frac{\sigma}{\rho}\right)  \right),
\]
which is precisely the Laplacian on $\mathbb{R}^{4}$ in the above described coordinates. We will repeatedly use this argument below without comment. We emphasize that the four dimensionality here is an artifact of the equations, rather than a deep physical or geometric manifestation. A similar idea is used in \cite{weinstein,weinstein2} among other places. 
\end{remark}

In particular, we obtain (keeping Remark~\ref{unrenormalize} in mind)
\begin{equation}\label{renormalsigmaeqn}
\Delta_{\mathbb{R}^4}\mathring{\sigma} = \rho^{-1}\sigma^{-1}Xe^{2\lambda}\left(\left(\mathring{\omega}+X^{-1}Wm\right)^2 - \sigma^2\left(\frac{\mu^2}{X} + \frac{m^2}{X^2}\right)\right)\psi^2.
\end{equation}
The left hand side of the equation is defined to be $L_{\sigma}$, the right hand side of the equation is defined to $N_{\sigma}$.
\subsubsection{The Equations for $\left(B^{(N)}_{\chi},B^{(S)}_{\chi'},B^{(A)}_z\right)$}
Definition~\ref{defB} yields the following equations for $\left(B^{(N)}_{\chi},B^{(S)}_{\chi'},B^{(A)}_z\right)$.
\begin{align}\label{Btripleeqn}
\nonumber \partial_sB^{(N)}_{\chi}\left(s,\chi\right) &= 2\left(\chi^2+s^2\right)\xi_N\sigma^{-1}e^{2\lambda}\left(X\mathring{\omega} m + Wm^2\right)\psi^2,
\\  \partial_{s'}B^{(S)}_{\chi'}\left(s',\chi'\right) &= 2\left((\chi')^2+(s')^2\right)\xi_S\sigma^{-1}e^{2\lambda}\left(X\mathring{\omega} m + Wm^2\right)\psi^2,
\\ \nonumber \partial_{\rho}B^{(A)}_z\left(\rho,z\right)&+\left(\partial_{\rho}\xi_N\right)B^{(N)}_z + (\partial_z\xi_N)B_{\rho}^{(N)}- \left(\partial_{\rho}\xi_N\right)B^{(S)}_z + (\partial_z\xi_N)B_{\rho}^{(S)}
\\ \nonumber &=2\left(1-\xi_N - \xi_S\right)\sigma^{-1}e^{2\lambda}\left(X\mathring{\omega} m + Wm^2\right)\psi^2.
\end{align}
Recall (see Remark~\ref{defB}) that we have
\[B^{(N)}_{\rho}d\rho + B^{(N)}_zdz \doteq B^{(N)}_{\chi}d\chi,\qquad B^{(S)}_{\rho}d\rho + B^{(S)}_zdz \doteq B^{(S)}_{\chi'}d\chi'.\]

The operator $L_B$ will simply denote the left hand side of~(\ref{Btripleeqn}), and $N_B$ will denote the right hand side.
\subsubsection{The Equations for $\mathring{X}$ and $\mathring{Y}$}\label{subsec:XY-renorm-eqn}
We now turn to the equations for $\mathring{X}$ and $\mathring{Y}$. To compute their equations, one simply replaces $X$ by $X_{K}(1+\mathring X)$ and $Y$ by $Y = Y_{K}+X_{K}\mathring Y$ and expands the equations for $X$ and $Y$ (the equations are stated for $\theta$, but we use $\theta = dY + B$) from Theorem \ref{thm:HBH-geometric-main}. A completely analogous computation has been preformed in \cite{ionkla} (see the derivation of equation (2.1) in \cite{ionkla}) for the matter-free case (we emphasize that we have used the same renormalization as \cite{ionkla}), so we omit the details. We obtain
\begin{align}\label{X1}
\Delta_{\mathbb{R}^3}\mathring{X} + \frac{2\partial Y_K\cdot\partial\mathring{Y}}{X_K} - \frac{2\left|\partial Y_K\right|^2}{X_K^2}\mathring{X} + 2\frac{\partial X_K\cdot\partial Y_K}{X_K^2}\mathring{Y} &=  N_X \doteq N^{(1)}_X + N^{(2)}_X,
\end{align}
\begin{align}\label{Y1}
\Delta_{\mathbb{R}^3}\mathring{Y} - \frac{2\partial Y_K\cdot\partial\mathring{X}}{X_K} - \frac{\left[\left|\partial X_K\right|^2 + \left|\partial Y_K\right|^2\right]}{X_K^2}\mathring{Y} &= N_Y \doteq N^{(1)}_Y + N^{(2)}_Y,
\end{align}
where for a function $f(\rho,z)$ we define
\[\Delta_{\mathbb{R}^3}f \doteq \rho^{-1}\partial_{\rho}\left(\rho\partial_{\rho}f\right) + \partial_z^2f,\]
as well as

\begin{align*}
N^{(1)}_X &\doteq \frac{X_K^2\left(\left|\partial\mathring{X}\right|^2 - \left|\partial\mathring{Y}\right|^2\right) + \left(\mathring{X}\partial Y_K - \mathring{Y}\partial X_K\right)\cdot\left(2X_K\partial\mathring{Y} - \mathring{X}\partial Y_K + \mathring{Y}\partial X_K\right)}{X_K^2\left(1+\mathring{X}\right)}
\\ \nonumber N^{(2)}_X &\doteq \left(\rho^{-1}-\sigma^{-1}\partial_{\rho}\sigma\right)\frac{\partial_{\rho}\left(X_K\left(1+\mathring{X}\right)\right)}{X_K} - \sigma^{-1}\partial_z\sigma\frac{\partial_z\left(X_K\left(1+\mathring{X}\right)\right)}{X_K}- \frac{2\partial_{\rho}\left(Y_K+X_K\mathring{Y}\right)B_{\rho}}{X_K^2\left(1+\mathring{X}\right)}
\\ \nonumber &\qquad - \frac{2\partial_z\left(Y_K+X_K\mathring{Y}\right)B_z}{X_K^2\left(1+\mathring{X}\right)} - \frac{B_{\rho}^2 + B_z^2}{X_K^2\left(1+\mathring{X}\right)} - X_K^{-1}e^{2\lambda}\left(2m^2+X_K\left(1+\mathring{X}\right)\mu^2\right)\psi^2,
\\ \nonumber N^{(1)}_Y &\doteq \frac{2X_K^2\partial\mathring{X}\cdot\partial\mathring{Y} + 2X_K\left(\mathring{Y}\partial X_K - \mathring{X}\partial Y_K\right)\cdot\partial \mathring{X}}{X_K^2\left(1+\mathring{X}\right)},
\\ \nonumber N^{(2)}_Y &\doteq \left(\rho^{-1}-\sigma^{-1}\partial_{\rho}\sigma\right)\partial_{\rho}\left(Y_K+X_K\mathring{Y}\right)X_K^{-1} - \sigma^{-1}\partial_{\rho}\sigma B_{\rho}X_K^{-1} - \sigma^{-1}\partial_z\sigma\left(\partial_z\left(Y_K + X_K\mathring{Y}\right) + B_z\right)X_K^{-1}
\\ \nonumber &\qquad +\frac{B_{\rho}\partial_{\rho}\left(X_K\left(1+\mathring{X}\right)\right) + B_z\partial_z\left(X_K\left(1+\mathring{X}\right)\right)}{X_K^2\left(1+\mathring{X}\right)}.
\end{align*}

Here $L_{X,Y}$ denotes the operator corresponding to the left hand side of equations~(\ref{X1}) and~(\ref{Y1}), and $N_{X,Y}$ denotes the operator corresponding to $N_X$ and $N_Y$ on the right hand side of~(\ref{X1}) and~(\ref{Y1}).
\subsubsection{The Equation for $\mathring{\Theta}$}
Now, starting from the equation for $W$ in Theorem \ref{thm:HBH-geometric-main}, a straightforward calculation yields the following pair of equations for $\mathring{\Theta}$:
\begin{align}\label{thetaringeqnrho}
\partial_{\rho}\mathring{\Theta} = -\frac{\sigma}{X^2}\left(\partial_zY + B_z\right) + \frac{\rho}{X_K^2}\partial_zY_K,
\end{align}
\begin{align}\label{thetaringeqnz}
\partial_z\mathring{\Theta} = \frac{\sigma}{X^2}\left(\partial_{\rho}Y + B_{\rho}\right) - \frac{\rho}{X_K^2}\partial_{\rho}Y_K,
\end{align}

The left hand side is $L_{\Theta}$, and the right hand side is $N_{\Theta}$.

\subsubsection{The Equation for $\psi$ and $\mu^2$}
Finally, we turn to the equation we will use for $\psi$ and $\mu^2$ in our fixed point argument.

The equation for $\psi$ and $\mu^2$ from Theorem \ref{thm:HBH-geometric-main} becomes
\begin{equation}\label{theeqn10}
\sigma^{-1}\partial_{\rho}\left(\sigma\partial_{\rho}\psi\right) + \sigma^{-1}\partial_z\left(\sigma\partial_z\psi\right) + e^{2\lambda}\sigma^{-2}X^{-1}\left(X\mathring{\omega}+Wm\right)^2\psi - e^{2\lambda}m^2X^{-1}\psi- e^{2\lambda}\mu^2 \psi = 0.
\end{equation}

In addition to~(\ref{theeqn10}), we will require that
\begin{equation}\label{howbigpsi}
\int_{\mathscr{B}}\psi^2e^{2\lambda}\sigma\, d\rho\, dz = \delta^2,
\end{equation}
and that
\begin{equation}\label{wantpos}
\psi |_{\mathscr{B}}> 0.
\end{equation}
(Note that we will not need to specifically pose boundary conditions at the horizon.)

\subsubsection{The Equation for $\mathring{\lambda}$}\label{subsubsection-mathring-lambda}
The equation for $\lambda$ from Theorem \ref{thm:HBH-geometric-main} yields
\begin{align}
\label{lamdef-rho} \partial_{\rho}\mathring\lambda & = \alpha_{\rho} - (\alpha_{K})_{\rho} - \frac 12 \partial_{\rho}\log (1+\mathring X), \\
 \label{lambdef-z} \partial_{z}\mathring\lambda & = \alpha_{z} - (\alpha_{K})_{z} - \frac 12 \partial_{z} \log(1+\mathring X).
\end{align}
We denote the left hand side of~(\ref{lamdef-rho}) and~(\ref{lambdef-z}) by $L_{\lambda}$, and we denote all terms of the right hand side by $N_{\lambda}$. Recall that, by Theorem \ref{thm:HBH-geometric-main},
\begin{align}\label{eq:alpha-rho-defn}
& \left((\partial_{\rho}\sigma)^2 + (\partial_z\sigma)^2\right) \alpha_{\rho} \\
& \nonumber = \frac{1}{2}(\partial_{\rho}\sigma)\sigma\left((\partial_{\rho}\psi)^2 - (\partial_z\psi)^2 + \frac{1}{2}X^{-2}\left[(\partial_{\rho}X)^2 - (\partial_zX)^2 + (\theta_{\rho})^2 - (\theta_z)^2\right]\right)
\\ \nonumber &+ \partial_{\rho}\sigma(\partial_{\rho}^2\sigma - \partial_z^2\sigma) + \partial_z\sigma(\partial_{\rho,z}^2\sigma)
\\ \nonumber &+(\partial_z\sigma)\sigma\left[(\partial_{\rho}\psi)(\partial_z\psi) + \frac{1}{2}X^{-2}\left((\partial_{\rho}X)(\partial_zX) + (\theta_{\rho})(\theta_z)\right)\right],
\end{align} and \begin{align}\label{eq:alpha-z-defn}
& \left((\partial_{\rho}\sigma)^2 + (\partial_z\sigma)^2\right) \alpha_{z}  \\
& \nonumber = \frac{1}{2}(\partial_z\sigma)\sigma\left((\partial_z\psi)^2 - (\partial_{\rho}\psi)^2 + \frac{1}{2}X^{-2}\left[(\partial_zX)^2 - (\partial_{\rho}X)^2 + (\theta_z)^2 - (\theta_{\rho})^2\right]\right)
\\ \nonumber &+ \partial_z\sigma(\partial_z^2\sigma - \partial_{\rho}^2\sigma) + \partial_{\rho}\sigma(\partial_{\rho,z}^2\sigma)
\\ \nonumber &+(\partial_{\rho}\sigma)\sigma\left[(\partial_{\rho}\psi)(\partial_z\psi) + \frac{1}{2}X^{-2}\left((\partial_{\rho}X)(\partial_zX) + (\theta_{\rho})(\theta_z)\right)\right].
\end{align}
Moreover, we record
\begin{align*}
(\alpha_{K})_{\rho}  &= \frac{1}{4}\rho X_{K}^{-2}\left((\partial_{\rho}X_K)^2 - (\partial_zX_K)^2 + (\partial_{\rho}Y_{K})^2 - (\partial_zY_{K})^2\right),\\
(\alpha_{K})_{z}  &= \frac 12 \rho X_{K}^{-2}\left((\partial_{\rho}X_{K})(\partial_zX_{K}) + (\partial_{\rho}Y_{K})(\partial_zY_{K})\right).
\end{align*}

\subsection{The Function Spaces}\label{functhespace}
In this section we will introduce the relevant function spaces for each renormalized quantity. Most of our function spaces will involve H\"{o}lder spaces of a certain order \index{Miscellaneous!$\alpha_{0}$}$\alpha_0 \in (0,1)$. The parameter $\alpha_0$ will remain unfixed until we fix it in Remark~\ref{fixmalpha0}.

Most (but not all) of the renormalized quantities will be solved via a fixed point argument (i.e., using Lemma \ref{fixit}). To apply Lemma \ref{fixit} we will first solve a inhomogeneous linear problem, showing that for inhomogoneneities in $\mathcal{N}$ we can solve the linear problem in $\mathcal{L}$. We then will show that the inhomogeneities in the relevant equations satisfy appropriate quadratic estimates in terms of the other renormalized quantities, as required for Lemma \ref{fixit}. This process is perhaps most cleanly illustrated when we solve for $\mathring \sigma$ (see Section \ref{section:solving-for-sigma}). 

The exact nature of the function spaces has been chosen to balance both of these processes: solving the linear problem and establishing the non-linear estimates. The majority of the function spaces are H\"older spaces of certain orders, with particular decay assumptions enforced. 
\subsubsection{Function Spaces for $\mathring{\sigma}$}
The relevant Banach spaces are 
\begin{definition}The Banach space \index{Function Spaces!$\mathcal{L}_{\sigma}$} $\left(\mathcal{L}_{\sigma},\left\Vert\cdot\right\Vert_{\mathcal{L}_{\sigma}}\right)$ is defined to be the completion of smooth functions $f \in \hat{C}_0^{\infty}\left(\overline{\mathscr{B}}\right)$ under the norm
    \begin{align*}
    \left\Vert f\right\Vert_{\mathcal{L}_\sigma} \doteq &\left\Vert f\right\Vert_{C^{3,\alpha_0}\left(\overline{\mathscr{B}}\right)} + \left\Vert r^2f\right\Vert_{L^{\infty}\left(\overline{\mathscr{B}}\right)} + \left\Vert r^3\partial f\right\Vert_{L^{\infty}\left(\overline{\mathscr{B}}\right)} +  \left\Vert r^4\log^{-1}\left(4r\right)\partial^2f\right\Vert_{L^{\infty}\left(\overline{\mathscr{B}}\right)} +
    \\ \nonumber &\left\Vert r^4\log^{-1}\left(4r\right)\partial^3f\right\Vert_{C^{0,\alpha_0}\left(\overline{\mathscr{B}}\right)}.
    \end{align*}
\end{definition}

\begin{definition}The Banach space \index{Function Spaces!$\mathcal{N}_{\sigma}$} $\left(\mathcal{N}_{\sigma},\left\Vert\cdot\right\Vert\right)$ is defined to be the completion of smooth functions $f \in \hat{C}_0^{\infty}\left(\overline{\mathscr{B}}\right)$ under the norm
\begin{equation*}
    \left\Vert f\right\Vert_{\mathcal{N}_\sigma} \doteq \left\Vert r^4f\right\Vert_{C^{1,\alpha_0}\left(\overline{\mathscr{B}}\right)}.
    \end{equation*}
\end{definition}

\subsubsection{Function Spaces for $\left(B^{(N)}_{\chi},B^{(S)}_{\chi'},B^{(A)}_z\right)$}

The relevant Banach space for $\left(B^{(N)}_{\chi},B^{(S)}_{\chi'},B^{(A)}_z\right)$ is given by the following,.
\begin{definition}The Banach space \index{Function Spaces!$\mathcal{L}_{B}$} $\left(\mathcal{L}_B,\left\Vert\cdot\right\Vert_{\mathcal{L}_B}\right)$ is defined to be the completion of triples $\left(F^{(A)}_z,F^{(N)}_{\chi},F^{(N)}_{\chi'}\right) \in \left(\hat{C}^{\infty}\left(\overline{\mathscr{B}}\right)\right)^3$ under the norm
    \begin{align*}
    \left\Vert \left(F^{(A)}_z,F^{(N)}_{\chi},F^{(N)}_{\chi'}\right)\right\Vert_{\mathcal{L}_B} &\doteq \left\Vert \left(\frac{(1+\rho^{10})(1+r^{10})}{\rho^{10}}\right)F^{(A)}_z\right\Vert_{\hat{C}^{1,\alpha_0}\left(\overline{\mathscr{B}_A} \cup \overline{\mathscr{B}_H}\right)}
    \\ \nonumber &\ \ + \left\Vert s^{-10}F^{(N)}_{\chi}\right\Vert_{\hat{C}^{1,\alpha_0}\left(\overline{\mathscr{B}_N}\right)}+ \left\Vert (s')^{-10}F^{(S)}_{\chi'}\right\Vert_{\hat{C}^{1,\alpha_0}\left(\overline{\mathscr{B}_S}\right)}.
    \end{align*}
\end{definition}

\begin{definition}The Banach space \index{Function Spaces!$\mathcal{N}_{B}$}. $\left(\mathcal{N}_B,\left\Vert\cdot\right\Vert_{\mathcal{N}_B}\right)$ is defined to be the completion of triples $\left(F^{(A)}_z,F^{(N)}_{\chi},F^{(S)}_{\chi'}\right) \in \left(\hat{C}^{\infty}\left(\overline{\mathscr{B}}\right)\right)^3$ under the norm
    \begin{align*}
    \left\Vert \left(F^{(A)}_z,F^{(N)}_{\chi},F^{(S)}_{\chi'}\right)\right\Vert_{\mathcal{N}_B} &\doteq \left\Vert \left(\frac{(1+\rho^{15})(1+r^{10})}{\rho^{15}}\right)F^{(A)}_z\right\Vert_{\hat{C}^{1,\alpha_0}\left(\overline{\mathscr{B}_A} \cup \overline{\mathscr{B}_H}\right)}
    \\ \nonumber &\ \ + \left\Vert s^{-15}F^{(N)}_{\chi}\right\Vert_{\hat{C}^{1,\alpha_0}\left(\overline{\mathscr{B}_N}\right)} + \left\Vert (s')^{-15}F^{(S)}_{\chi'}\right\Vert_{\hat{C}^{1,\alpha_0}\left(\overline{\mathscr{B}_S}\right)}.
    \end{align*}

We also set $\hat{\mathcal{N}}_B \doteq \mathcal{N}_B$.
\end{definition}

\subsubsection{Function Spaces for $\mathring{X}$ and $\mathring{Y}$}
The relevant Banach spaces are
\begin{definition}The Banach space \index{Function Spaces!$\mathcal{L}_{X}$} $\left(\mathcal{L}_X,\left\Vert\cdot\right\Vert_{\mathcal{L}_X}\right)$ is defined to be the completion of smooth functions $f \in \hat{C}_0^{\infty}\left(\overline{\mathscr{B}}\right)$ under the norm
    \begin{align*}
    \left\Vert f\right\Vert_{\mathcal{L}_X} \doteq &\left\Vert f\right\Vert_{\dot{H}^1_{\rm axi}\left(\mathbb{R}^3\right)}+ \left\Vert f\right\Vert_{\hat{C}_0^{2,\alpha_0}\left(\overline{\mathscr{B}}\right)}+ \left\Vert rf\right\Vert_{L^{\infty}\left(\overline{\mathscr{B}}\right)} + \left\Vert r^2\hat{\partial}f\right\Vert_{L^{\infty}\left(\overline{\mathscr{B}}\right)} + \left\Vert r^3\log^{-1}\left(4r\right)\hat{\partial}^2f\right\Vert_{C^{0,\alpha_0}\left(\overline{\mathscr{B}}\right)}.
    \end{align*}
\end{definition}
\begin{definition}The Banach space \index{Function Spaces!$\mathcal{L}_{Y}$} $\left(\mathcal{L}_Y,\left\Vert\cdot\right\Vert_{\mathcal{L}_Y}\right)$ is defined to be the completion of smooth functions $f \in \hat{C}_0^{\infty}\left(\overline{\mathscr{B}}\right)$ under the norm
    \begin{align*}
    \left\Vert f\right\Vert_{\mathcal{L}_Y} \doteq &\left\Vert f\right\Vert_{\dot{H}^1_{\rm axi}\left(\mathbb{R}^3\right)} + \left\Vert \left|\partial h\right|f\right\Vert_{L^2\left(\mathbb{R}^3\right)}  + \left\Vert f\right\Vert_{\hat{C}_0^{2,\alpha_0}\left(\overline{\mathscr{B}}\right)} + \left\Vert X_K^{-1}f\right\Vert_{\hat{C}_0^{2,\alpha_0}\left(\overline{\mathscr{B}}\right)}+
    \\ \nonumber &\left\Vert r^3X_K^{-1}f\right\Vert_{L^{\infty}\left(\overline{\mathscr{B}}\right)} + \left\Vert r^4\hat{\partial}\left(X_K^{-1}f\right)\right\Vert_{L^{\infty}\left(\overline{\mathscr{B}}\right)} + \left\Vert r^5\log^{-1}\left(4r\right)\hat{\partial}^2\left(X_K^{-1}f\right)\right\Vert_{C^{0,\alpha_0}\left(\overline{\mathscr{B}}\right)}.
    \end{align*}

    Recall that the function $h$ is defined in Section~\ref{kerr}.
\end{definition}

We then define\index{Function Spaces!$\mathcal{L}_{X,Y}$}
\[\mathcal{L}_{X,Y} \doteq \mathcal{L}_X \times \mathcal{L}_Y.\]
\begin{definition}The Banach space \index{Function Spaces!$\mathcal{N}_{X}$} $\left(\mathcal{N}_X,\left\Vert\cdot\right\Vert_{\mathcal{N}_X}\right)$ is defined to be the completion of smooth functions $f \in \hat{C}_0^{\infty}\left(\overline{\mathscr{B}}\right)$ under the norm
\begin{equation*}
\left\Vert f\right\Vert_{\mathcal{N}_X} \doteq \left\Vert r^3\left(1-\xi_N-\xi_S\right)f\right\Vert_{C^{0,\alpha_0}\left(\mathbb{R}^3\right)} + \left\Vert \left(\chi^2+s^2\right)\xi_Nf\right\Vert_{C^{0,\alpha_0}\left(\overline{\mathscr{B}_{N}}\right)} + \left\Vert \left((\chi')^2+(s')^2\right)\xi_Sf\right\Vert_{C^{0,\alpha_0}\left(\overline{\mathscr{B}_{S}}\right)}
\end{equation*}
\end{definition}

\begin{definition}The Banach space \index{Function Spaces!$\mathcal{N}_{Y}$} $\left(\mathcal{N}_Y,\left\Vert\cdot\right\Vert_{\mathcal{N}_Y}\right)$ is defined to be the completion of smooth functions $f \in \hat{C}_0^{\infty}\left(\overline{\mathscr{B}}\right)$ under the norm
\begin{align*}
 \Vert f\Vert_{\mathcal{N}_{Y}} \doteq &\left\Vert fr^5X_K^{-1}\right\Vert_{\hat{C}_0^{0,\alpha_0}\left(
\left(\overline{\mathscr{B}_A} \cup\overline{\mathscr{B}_H}\right)\cap \{\rho \leq 1\}\right)} + \left\Vert fr^4\right\Vert_{\hat{C}_0^{0,\alpha_0}\left(\overline{\mathscr{B}}\cap \{\rho \geq 1\}\right)}
 \\ \nonumber &\qquad +\left\Vert \left(\chi^2+s^2\right)X_K^{-1}f\right\Vert_{C^{0,\alpha_0}\left(\overline{\mathscr{B}_N}\right)}+ \left\Vert \left((\chi')^2+(s')^2\right)X_K^{-1}f\right\Vert_{C^{0,\alpha_0}\left(\overline{\mathscr{B}_S}\right)}.
\end{align*}
\end{definition}

We then define\index{Function Spaces!$\mathcal{N}_{X,Y}$}
\[\mathcal{N}_{X,Y} \doteq \mathcal{N}_X \oplus \mathcal{N}_Y,\]
\subsubsection{Function Spaces for $\mathring{\Theta}$}

\begin{definition}The Banach space \index{Function Spaces!$\mathcal{L}_{\Theta}$}$\left(\mathcal{L}_{\Theta},\left\Vert\cdot\right\Vert_{\mathcal{L}_{\Theta}}\right)$ is defined to be the completion of smooth functions $f \in \hat{C}_0^{\infty}\left(\overline{\mathscr{B}}\right)$ under the norm
    \begin{align*}
    \left\Vert f\right\Vert_{\mathcal{L}_{\Theta}} \doteq &\left\Vert r^2f\right\Vert_{\hat{C}^{2,\alpha_0}\left(\overline{\mathscr{B}}\right)}.
    \end{align*}
\end{definition}

\begin{definition}The Banach space \index{Function Spaces!$\mathcal{N}_{\Theta}$}$\left(\mathcal{N}_{\Theta},\left\Vert\cdot\right\Vert_{\mathcal{L}_{\Theta}}\right)$ is defined to be the completion of pairs of smooth compactly supported \underline{closed} $1$-forms $F$ under the norm
    \begin{align*}
    \left\Vert F\right\Vert_{\mathcal{N}_{\Theta}} \doteq &\left\Vert r^3\left(1+\rho^{-1}\right)F_{\rho}\right\Vert_{\hat{C}^{1,\alpha_0}\left(\overline{\mathscr{B}_A}\cup\overline{\mathscr{B}_H}\right)}+\left\Vert r^3F_z\right\Vert_{\hat{C}^{1,\alpha_0}\left(\overline{\mathscr{B}_A}\cup\overline{\mathscr{B}_H}\right)}
    \\ \nonumber &+\left\Vert s^{-1}F_s\right\Vert_{\hat{C}^{1,\alpha_0}\left(\overline{\mathscr{B}_N}\right)}+\left\Vert F_{\chi}\right\Vert_{\hat{C}^{1,\alpha_0}\left(\overline{\mathscr{B}_N}\right)}
    \\ \nonumber &+\left\Vert (s')^{-1}F_{s'}\right\Vert_{\hat{C}^{1,\alpha_0}\left(\overline{\mathscr{B}_S}\right)}+\left\Vert F_{\chi'}\right\Vert_{\hat{C}^{1,\alpha_0}\left(\overline{\mathscr{B}_S}\right)}.
    \end{align*}
\end{definition}

\subsubsection{Function Spaces for $\psi$ and $\mu^2$}

The relevant function spaces are given by the following.
\begin{definition}\label{functhepsi} The Banach space \index{Function Spaces!$\mathcal{L}_{\psi}$}$\left(\mathcal{L}_{\psi},\left\Vert\cdot\right\Vert_{\mathcal{L}_{\psi}}\right)$ is defined to be the completion of functions $f \in \hat{C}_0^{\infty}\left(\overline{\mathscr{B}}\right)\times \mathbb{R}$ under the norm
   \begin{align*}
    \left\Vert f\right\Vert_{\mathcal{L}_{\psi}} &\doteq \left\Vert \left(\frac{(1+\rho^{10})(1+r^{10})}{\rho^{10}}\right)\psi\right\Vert_{\hat{C}^{2,\alpha_0}\left(\overline{\mathscr{B}_A}\right)} + \left\Vert \psi\right\Vert_{\hat{C}^{2,\alpha_0}\left(\overline{\mathscr{B}_H}\right)}
    \\ \nonumber &\qquad + \left\Vert s^{-10}\psi\right\Vert_{\hat{C}^{2,\alpha_0}\left(\overline{\mathscr{B}_N}\right)} + \left\Vert (s')^{-10}\psi\right\Vert_{\hat{C}^{2,\alpha_0}\left(\overline{\mathscr{B}_S}\right)}.
    \end{align*}
\end{definition}

\begin{definition}\label{functhemu} We define \index{Function Spaces!$\mathcal{L}_{\mu^{2}}$}$\mathcal{L}_{\mu^2} \doteq \left(\mathbb{R},|\cdot|\right)$.
\end{definition}

\subsubsection{Function Space for $\mathring{\lambda}$}
\begin{definition}
The Banach space \index{Function Spaces!$\mathcal{L}_{\lambda}$}$(\mathcal{L}_{\lambda},\Vert \cdot \Vert_{\mathcal{L}_{\lambda}})$ is defined to be the completion of smooth functions $f\in \hat C^{\infty}(\overline{\mathscr{B}})$ under the norm
\[
\Vert f\Vert_{\mathcal{L}_{\lambda}} = \Vert f\Vert_{\hat C^{1,\alpha_{0}}(\overline{\mathscr{B}})}.
\]
\end{definition}
\section{Solving for $\mathring{\sigma}$}\label{section:solving-for-sigma}
In this section will solve for $\mathring{\sigma}$ in terms of the renormalized unknowns
\[\left(\left(B^{(N)}_{\chi},B^{(S)}_{\chi'},B^{(A)}_z\right),\left(\mathring{X},\mathring{Y}\right),\mathring{\Theta},\psi,\mathring{\mu}^2,\mathring{\lambda}\right).\]
\subsection{Linear Estimates}
The linear problem we need to study is\index{Metric Quantities!$H_{\sigma}$}
\begin{equation}\label{linsig}
\Delta_{\mathbb{R}^4}\mathring{\sigma} = H_{\sigma}.
\end{equation}

We have
\begin{proposition}\label{prop:lin-est-sigma}
Suppose that $H_{\sigma} \in \mathcal{N}_{\sigma}$. Then, if we let $\Delta_{\mathbb{R}^4}^{-1}$ denote convolution with the fundamental solution of $\Delta_{\mathbb{R}^4}$, and set $\mathring{\sigma} \doteq \Delta_{\mathbb{R}^4}^{-1}H_{\sigma}$. Then there exists a constant $D(\alpha_0)$, independent of $H_{\sigma}$ and depending on $\alpha_0$, such that
\[\left\Vert \mathring{\sigma}\right\Vert_{\mathcal{L}_{\sigma}} \leq D\left(\alpha_0\right)\left\Vert H_{\sigma}\right\Vert_{\mathcal{N}_{\sigma}}.\]
\end{proposition}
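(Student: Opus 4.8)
The plan is to prove a weighted Schauder-type estimate for the Newtonian potential on $\mathbb{R}^4$, exploiting the fact that $\mathring\sigma$ is required to be an axisymmetric function (in the sense of the $\hat C$ spaces attached to $\overline{\mathscr{B}}$, i.e.\ a function on $\mathbb{R}^2\times\mathbb{R}^2$ that depends only on the radii $s=\rho$ and $\chi$), so that convolution with the fundamental solution of $\Delta_{\mathbb{R}^4}$ produces a function of the same type. The target norm $\Vert\cdot\Vert_{\mathcal{L}_\sigma}$ decomposes into: (i) a global $C^{3,\alpha_0}$ piece, and (ii) a family of decay-weighted $L^\infty$ (and one $C^{0,\alpha_0}$) bounds at orders $r^{-2}, r^{-3}, r^{-4}\log(4r), r^{-4}\log(4r)$ on $\mathring\sigma$ and its first three derivatives. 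The source lies in $\mathcal{N}_\sigma = \{ f : r^4 f \in C^{1,\alpha_0}\}$, so $H_\sigma$ decays like $r^{-4}$ with one derivative of Hölder regularity.

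First I would establish interior Schauder estimates: on any unit ball $B$, the standard Calderón--Zygmund/Schauder theory for the flat Laplacian gives $\Vert \mathring\sigma\Vert_{C^{3,\alpha_0}(B)} \le C(\Vert H_\sigma\Vert_{C^{1,\alpha_0}(2B)} + \Vert \mathring\sigma\Vert_{L^\infty(2B)})$; combined with the decay of $H_\sigma$ and the pointwise decay of $\mathring\sigma$ established below, this controls the $C^{3,\alpha_0}(\overline{\mathscr{B}})$ part. (Near the corners $p_N, p_S$ this must be phrased in the four-dimensional picture, which is exactly what the $\hat C$ spaces are designed for; away from the axis, three-dimensional Schauder suffices, and the conformal factors relating the two are smooth and bounded by Lemma~\ref{lem:hat-non-hat-Holder}-type arguments.) Second, and this is the heart of the matter, I would prove the pointwise decay. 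Writing $\mathring\sigma(x) = c_4\int_{\mathbb{R}^4} |x-y|^{-2} H_\sigma(y)\,dy$ and splitting the integral into the regions $|y|\le |x|/2$, $|y|\ge 2|x|$, and $|x|/2 \le |y| \le 2|x|$, the bound $|H_\sigma(y)| \le C\Vert H_\sigma\Vert_{\mathcal{N}_\sigma}\langle y\rangle^{-4}$ yields in dimension $4$: the far region contributes $O(r^{-2})$, the near region contributes $O(r^{-2})$, and the intermediate region contributes $O(r^{-2}\log(4r))$ — this logarithm is precisely why the top two derivative weights in $\mathcal{L}_\sigma$ carry a $\log(4r)$. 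Wait — one must check the orders carefully: $\mathring\sigma\sim r^{-2}$, and each derivative of the kernel gains a power, so $\partial\mathring\sigma \sim r^{-3}$ and $\partial^2\mathring\sigma, \partial^3\mathring\sigma \sim r^{-4}\log(4r)$, matching the norm exactly (the derivatives falling on the convolution can be distributed onto $H_\sigma$ using its $C^{1,\alpha_0}$ control, and the remaining kernel estimates are elementary). For $\partial^3\mathring\sigma$ the $C^{0,\alpha_0}$ weighted bound follows by the same region-splitting together with the standard singular-integral Hölder estimate applied to $r^4\log^{-1}(4r)\partial^3\mathring\sigma$ on unit balls, rescaled.

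The main obstacle I anticipate is bookkeeping the logarithms and the borderline decay rate in dimension four: the convolution $|x|^{-2}*\langle\cdot\rangle^{-4}$ sits exactly at the threshold where the intermediate annulus produces a logarithmic loss, and one must verify that this loss does not propagate further upon differentiating — i.e.\ that $\partial^k\mathring\sigma$ for $k=1,2,3$ picks up at most a single $\log$ and at the claimed polynomial rate, with the Hölder seminorm of the top derivative controlled with the same weight. A secondary technical point is handling the corner charts $\overline{\mathscr{B}_N},\overline{\mathscr{B}_S}$: there the natural radial variable is $s$ (with $\rho = s\chi$), and one must check that the $\mathbb{R}^4$ convolution respects the $\hat C$-regularity near $p_N, p_S$ — but since $\mathring\sigma$ is genuinely a function on $\mathbb{R}^4$ invariant under the $SO(2)\times SO(2)$ action and $\Delta_{\mathbb{R}^4}$ commutes with this action, the fundamental solution convolution preserves the symmetry and hence the $\hat C^{3,\alpha_0}$ membership, so this reduces to ordinary Schauder theory on $\mathbb{R}^4$. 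The remaining estimates ($\Vert r^2 f\Vert_{L^\infty}$ etc.) are then just the pointwise bounds already obtained, and assembling them gives the stated inequality with $D$ depending only on $\alpha_0$ (through the Schauder constants).
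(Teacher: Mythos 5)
Your proof takes essentially the same route as the paper's, which is ``local Schauder estimates plus pointwise decay of the $\mathbb{R}^{4}$ Newton potential,'' the latter packaged in Lemmas~\ref{newt1} and~\ref{newt2} in the appendix; you are in effect re-deriving those decay estimates via a three-annulus split and then invoking Schauder. So the strategy is right, but your bookkeeping contains a genuine gap.

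The annular analysis is inverted, and the resolution matters. For $|H_{\sigma}(y)|\le C\langle y\rangle^{-4}$ in $\mathbb{R}^{4}$, the source of a possible logarithm at zeroth order is the \emph{near} region $\{|y|\le|x|/2\}$, not the intermediate one: there $|x-y|\ge|x|/2$, so its contribution is at most $C|x|^{-2}\int_{|y|\le|x|/2}\langle y\rangle^{-4}\,dy$, and in dimension four $\int_{|y|\le R}\langle y\rangle^{-4}\,dy$ grows like $\log R$. The intermediate annulus $\{|x|/2\le|y|\le 2|x|\}$ gives only $O(|x|^{-2})$ (there $\langle y\rangle^{-4}\sim|x|^{-4}$ while $\int_{|w|\le 3|x|}|w|^{-2}\,dw\sim|x|^{2}$), and the far region is $O(|x|^{-2})$. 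Having thus produced a logarithm at order zero, you then assert ``$\mathring\sigma\sim r^{-2}$'' without addressing it; but the $\mathcal{L}_{\sigma}$-norm carries no logarithm on the $\Vert r^{2}f\Vert_{L^{\infty}}$ and $\Vert r^{3}\partial f\Vert_{L^{\infty}}$ terms, so a genuine $\log$ at orders zero or one would defeat the estimate. This borderline $k=n=4$ case is exactly the delicate point; the proof must either establish that no logarithm appears at orders $0,1$ (which the crude bound $|H_{\sigma}|\le Cr^{-4}$ alone does not show) or appeal to the appendix Lemma~\ref{newt1}, which is the black box doing this work in the paper's proof. Your write-up does neither: it implicitly claims the lemma while the computation you present contradicts it.

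A smaller structural slip: the $\mathbb{R}^{4}$ in $\Delta_{\mathbb{R}^{4}}\mathring\sigma$ is $\mathbb{R}^{3}_{\rho}\times\mathbb{R}_{z}$ with metric $d\rho^{2}+\rho^{2}d\mathbb{S}^{2}+dz^{2}$, so the relevant invariance is $SO(3)$ acting on the first three coordinates, not $SO(2)\times SO(2)$; the double-polar $(s,\phi_{1},\chi,\phi_{2})$ picture is a different conformal model used only for the $\hat C$-spaces near $p_{N},p_{S}$. Moreover the H\"older piece of $\mathcal{L}_{\sigma}$ is the unhatted $C^{3,\alpha_{0}}_{\rm axi}$, built on $\mathbb{R}^{3}$ in cylindrical coordinates, so the corner-chart concern you raise is not the relevant one for this proposition: $C^{3,\alpha_{0}}$-regularity of the $SO(3)$-invariant potential on $\mathbb{R}^{4}$ restricts to $C^{3,\alpha_{0}}_{\rm axi}$ on $\mathbb{R}^{3}$ simply by slicing with a coordinate $3$-plane through the $z$-axis.
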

\begin{proof}This follows immediately from Lemmas~\ref{newt1} and~\ref{newt2} in the Appendix and local Schauder estimates.
\end{proof}
\subsection{The Fixed Point}
We are ready to run the fixed point argument.
\begin{proposition}\label{fixsig}For each $m \in \mathbb{Z}_{\neq 0}$ and $\alpha_0 \in (0,1)$, there exists $\epsilon > 0$ sufficiently small so that given \[\left(\left(B^{(N)}_{\chi},B^{(S)}_{\chi'},B^{(A)}_z\right),\left(\mathring{X},\mathring{Y}\right),\mathring{\Theta},\psi,\mathring{\mu}^2,\mathring{\lambda}\right) \in B_{\epsilon}\left(\mathcal{L}_B\right)\times \cdots  \times B_{\epsilon}\left(\mathcal{L}_{\mathring\lambda}\right),\]
we may find $\mathring{\sigma} \in B_{\epsilon}\left(\mathcal{L}_{\sigma}\right)$ which solves~(\ref{renormalsigmaeqn}). Furthermore, $\mathring{\sigma}$ has ``continuous nonlinear dependence on parameters''
\[\left(\left(B^{(N)}_{\chi},B^{(S)}_{\chi'},B^{(A)}_z\right),\left(\mathring{X},\mathring{Y}\right),\mathring{\Theta},\psi,\mathring{\mu}^2,\mathring{\lambda}\right),\]
in the sense of Lemma~\ref{fixit} and Remark~\ref{contnonlindep}.
\end{proposition}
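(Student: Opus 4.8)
The plan is to apply Lemma~\ref{fixit} with $\mathcal{L} = \mathcal{L}_\sigma$, with the space of genuinely nonlinear parameters
\[
\mathcal{Q} \doteq \mathcal{L}_B \times \mathcal{L}_{X,Y} \times \mathcal{L}_{\Theta} \times \mathcal{L}_{\psi} \times \mathcal{L}_{\mu^2} \times \mathcal{L}_{\lambda},
\]
with $L = L_\sigma = \Delta_{\mathbb{R}^4}$ read in the coordinates of the remark following~\eqref{sigmaeqn2}, with $N = N_\sigma$ the right hand side of~\eqref{renormalsigmaeqn}, and with $\mathcal{N} = \mathcal{N}_\sigma$. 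Hypothesis~(1) of Lemma~\ref{fixit} is just the definition $\mathfrak{E}_\sigma(\mathring\sigma,q) = L_\sigma(\mathring\sigma) - N_\sigma(\mathring\sigma,q)$, and hypothesis~(2) --- a bounded right inverse $L_\sigma^{-1}\colon\mathcal{N}_\sigma\to\mathcal{L}_\sigma$ with $L_\sigma L_\sigma^{-1} = \mathrm{id}$ --- is exactly Proposition~\ref{prop:lin-est-sigma}, with $L_\sigma^{-1}$ the convolution with the Newtonian potential on $\mathbb{R}^4$. So everything reduces to verifying hypotheses~(3) and~(4): the quadratic bound and the Lipschitz-type bound for $N_\sigma$ in the $\mathcal{N}_\sigma$-norm when all the renormalized unknowns lie in small balls.

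The observation that drives these estimates is that, after reversing the renormalization (Remark~\ref{unrenormalize}) and inserting $\sigma = \rho(1+\mathring\sigma)$, $X = X_K(1+\mathring X)$, $X^{-1}W = \mathring\Theta + X_K^{-1}W_K$, $\lambda = \lambda_K + \mathring\lambda$ and $\mu^2 = \mu_K^2 + \mathring\mu^2$, together with the defining relation for $\mathring\omega$, the quantity $N_\sigma$ takes the schematic form
\[
N_\sigma = \psi^2\cdot\mathcal{F}\!\left(\rho^{-2}X_K e^{2\lambda_K},\ X_K^{-1}W_K,\ \mathring\sigma,\ \mathring X,\ \mathring\Theta,\ \mathring\omega,\ \mathring\lambda,\ \mathring\mu^2\right),
\]
where $\mathcal{F}$ is a rational expression which, once the renormalized unknowns lie in small balls, is smooth (in the weighted sense on $\overline{\mathscr{B}}$) and uniformly bounded, together with its derivatives, against the relevant weighted $\hat C^{1,\alpha_0}$ structures; the only genuinely singular ingredient is the explicit Kerr factor $\rho^{-2}X_K e^{2\lambda_K}$ near the axis, horizon and poles. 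Since every term of $N_\sigma$ carries the overall factor $\psi^2$, the quadratic bound is automatic: $\left\Vert N_\sigma\right\Vert_{\mathcal{N}_\sigma} = \left\Vert r^4 N_\sigma\right\Vert_{\hat C^{1,\alpha_0}(\overline{\mathscr{B}})} \leq C\left\Vert\psi\right\Vert_{\mathcal{L}_\psi}^2 \leq C\left\Vert q\right\Vert_{\mathcal{Q}}^2$, with no contribution from $\left\Vert\mathring\sigma\right\Vert_{\mathcal{L}_\sigma}$ needed --- this is the ``automatic'' quadratic estimate anticipated in the discussion of the ordering of the renormalized unknowns, and it is the source of the quadratic dependence of $\mathring\sigma$ on the remaining unknowns. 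For hypothesis~(4) one writes $N_\sigma(\mathring\sigma_1,q_1) - N_\sigma(\mathring\sigma_2,q_2) = (\psi_1^2-\psi_2^2)\mathcal{F}_1 + \psi_2^2(\mathcal{F}_1 - \mathcal{F}_2)$: the first term factors as $(\psi_1+\psi_2)(\psi_1-\psi_2)\mathcal{F}_1$ and is bounded by $C(\left\Vert q_1\right\Vert_{\mathcal{Q}}+\left\Vert q_2\right\Vert_{\mathcal{Q}})\left\Vert q_1-q_2\right\Vert_{\mathcal{Q}}$, while for the second one telescopes $\mathcal{F}_1 - \mathcal{F}_2$ over its finitely many arguments, uses Lipschitz continuity of $\mathcal{F}$ and the fact that its $\mathring\sigma$-dependence itself carries a factor $\mathring\sigma$, and multiplies by the small factor $\left\Vert\psi_2\right\Vert_{\mathcal{L}_\psi}^2$; since the resulting $\mathring\sigma$-Lipschitz constant of $N_\sigma$ is then of size $\left\Vert\psi\right\Vert_{\mathcal{L}_\psi}^2\leq\epsilon^2$, the induced map $\mathfrak{T}$ of Lemma~\ref{fixit} is a contraction in $\mathring\sigma$. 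With~(3) and~(4) verified, Lemma~\ref{fixit} produces the solution map $\mathfrak{S}$ and the asserted continuous nonlinear dependence on parameters.

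The substance of the argument --- and the step I expect to be the main obstacle --- is the weighted bookkeeping that makes $r^4 N_\sigma\in\hat C^{1,\alpha_0}(\overline{\mathscr{B}})$. Three features collide: the factor $\rho^{-1}\sigma^{-1}$ produces a $\rho^{-2}$ that is singular along the axis $\mathscr{A}$; the Kerr coefficients $X_K\sim e^h$ and $e^{2\lambda_K}$ (equivalently $\Delta^{-1}$) are singular near $\mathscr{A}$, near $\mathscr{H}$, and --- most delicately --- near the corner points $p_N$, $p_S$, where one must pass to the $\mathbb{R}^4$-picture; and the inhomogeneity must decay like $r^{-4}$ at infinity. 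All three are reconciled by the definition of $\mathcal{L}_\psi$: the weight $\tfrac{(1+\rho^{10})(1+r^{10})}{\rho^{10}}$ on $\overline{\mathscr{B}_A}$ and the weights $s^{-10}$, $(s')^{-10}$ near the poles force $\psi^2$ to vanish to high order along the axis and at $p_N$, $p_S$ --- with room to spare against $\rho^{-2}X_K e^{2\lambda_K}$ --- while the $r^{10}$-weight gives $\psi^2 = O(r^{-20})$ at infinity, far better than the required $r^{-4}$, the Kerr prefactor being bounded there. Near $\mathscr{H}$, which is compact, one uses instead the plain $\hat C^{2,\alpha_0}(\overline{\mathscr{B}_H})$ control of $\psi$ together with the regularity of the Kerr quantities recorded in Lemmas~\ref{lem:xK-regularity} and~\ref{somestuffthatisuseful} and the asymptotics~\eqref{Xlikeh}--\eqref{hsing}. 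Once one checks that the relevant products of weighted $\hat C^{1,\alpha_0}$- and $\hat C^{2,\alpha_0}$-spaces close --- using Lemma~\ref{lem:hat-non-hat-Holder} and the explicit formulas for $\tilde r(\rho,z)$ and for the Kerr metric functions --- hypotheses~(3) and~(4) reduce to routine, if lengthy, multiplicative H\"older estimates carried out separately on $\overline{\mathscr{B}_A}$, $\overline{\mathscr{B}_H}$, $\overline{\mathscr{B}_N}$ and $\overline{\mathscr{B}_S}$.
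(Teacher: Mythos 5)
You correctly set up the application of Lemma~\ref{fixit}: the choice of spaces, the identification $L_\sigma = \Delta_{\mathbb{R}^4}$, $N_\sigma$, $\mathcal{N}_\sigma$, the invocation of Proposition~\ref{prop:lin-est-sigma} for hypothesis~(2), and the observation that the overall factor $\psi^2$ drives the quadratic and Lipschitz estimates all match the paper's proof. Your treatment of the axis, the poles, and the large-$r$ regime via the weights built into $\mathcal{L}_\psi$ is also the right mechanism.

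However, your account of how the $\mathcal{N}_\sigma$-estimate closes near the horizon $\mathscr{H}$ (away from the poles) is wrong, and this is exactly the point the paper is careful to flag. There the prefactor $\rho^{-1}\sigma^{-1}Xe^{2\lambda}$ behaves like $\rho^{-2}$ times something bounded and bounded away from zero: $X_K$ and $e^{2\lambda_K}$ are \emph{not} singular on $\mathscr{H}\cap\overline{\mathscr{B}_H}$ (the $\Delta$'s cancel in $e^{2\lambda_K}$), so ``regularity of the Kerr quantities'' resolves nothing there, and crucially the $\mathcal{L}_\psi$-norm gives no decay of $\psi$ as $\rho\to 0$ near $\mathscr{H}$ (only a plain $\hat C^{2,\alpha_0}(\overline{\mathscr{B}_H})$ bound), so ``plain H\"older control of $\psi$'' cannot absorb $\rho^{-2}$ either. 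What actually kills the $\rho^{-2}$ is structural: the second term of the bracket carries $\sigma^2\sim\rho^2$, and the first term is $(\mathring\omega + X^{-1}Wm)^2$, which vanishes to order $\rho^4$ precisely because $\mathring\omega$ was defined in~\eqref{mathringomega} so that $\mathring\omega + X^{-1}Wm\equiv 0$ on $\mathscr{H}$, with both quantities regular in $\rho^2$ there. This is the opening sentence of the paper's proof (``Recalling that $\mathring\omega$ is defined so that $\mathring\omega + X^{-1}Wm$ vanishes on the horizon, \ldots''). You name ``the defining relation for $\mathring\omega$'' in passing but never use it; without it the estimates near $\mathscr{H}$ (and also near $p_N,p_S$, where the analogous vanishing $(\mathring\omega + X^{-1}Wm)^2 = O(\chi^4)$ offsets the $\chi^{-2}$ coming from $\rho^{-2}X_Ke^{2\lambda_K}$) do not close.
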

\begin{proof}This follows easily from Lemma~\ref{fixit}. 

Recalling that $\mathring{\omega}$ is defined so that $\mathring{\omega} + X^{-1}Wm$ vanishes on the horizon, the required nonlinear estimates of $N_{\sigma}$ are easily verified. Because we will use an argument of this flavor repeatedly in the subsequent sections, we give some of the details of this proof, as a guide for the reader. We begin by defining the spaces and operators that we will use to apply Lemma~\ref{fixit}. First, we set $\mathcal{L} = \mathcal{L}_{\sigma}$ and $\tilde{\mathcal{L}} =  \mathcal{N}_{\sigma}$. We also set
\[
\mathcal{Q} = \mathcal{L}_{B} \times \mathcal{L}_{X}\times \mathcal{L}_{Y}\times \mathcal{L}_{\Theta}\times \mathcal{L}_{\psi} \times \mathcal{L}_{\mu^{2}}\times \mathcal{L}_{\lambda}
\]
This allows us to define $\mathfrak{E}:B_{\epsilon}(\mathcal{L}) \times B_{\epsilon}(\mathcal{Q})\to\tilde{\mathcal{L}}$ by
\[
\mathfrak{E}(\mathring \sigma,( B, \mathring X,\mathring Y, \mathring \Theta,\psi,\mathring\mu^{2}, \mathring \lambda)) = \Delta_{\RR^{4}} \mathring \sigma - N_{\sigma},
\]
where 
\[
N_{\sigma} = \rho^{-1}\sigma^{-1}Xe^{2\lambda}\left(\left(\mathring{\omega}+X^{-1}Wm\right)^2 - \sigma^2\left(\frac{\mu^2}{X} + \frac{m^2}{X^2}\right)\right)\psi^2
\]
(see \eqref{renormalsigmaeqn}). This is well defined (we will check below that the image is indeed contained in $\mathcal{N}_{\sigma}$), as long as we take $\epsilon>0$ sufficiently small. Note that $N_{\sigma}$ is, in particular, a map $N:\mathcal{L}\times \mathcal{Q}\to\tilde{\mathcal{L}}$ so that $\mathfrak{E} = L - N$, as desired. We have proven the linear estimates in Proposition \ref{prop:lin-est-sigma} above, which establishes part (2) of Lemma~\ref{fixit} for $\mathcal{N} = \mathcal{N}_{\sigma}$. 

Now, we come to part (3) of Lemma~\ref{fixit}, namely the non-linear estimates on $N$. Reviewing the definition of the norm for $\mathcal{N}_{\sigma}$, we would like that 
\[
\Vert r^{4} N_{\sigma} \Vert_{C^{1,\alpha_{0}}(\overline{\mathscr{B}})} \leq D \left[ \Vert \mathring \sigma \Vert_{\mathcal{L}_{\sigma}}^{2} + \Vert(B,\mathring X,\mathring Y,\mathring\Theta,\psi,\mathring\mu^{2},\mathring\sigma)\Vert_{\mathcal{Q}}^{2}\right].
\]
Before establishing these estimates, we make several remarks. To obtain such quadratic bounds, it is useful to note that everything is multiplied by $\psi^{2}$, and as such it basically suffices to bound the remaining terms linearly. Note also that $\psi$ vanishes to high order at the poles and the axis, by the definition of $\mathcal{L}_{\psi}$. For example, we have that $\psi = O(\rho^{10})$ in $\hat C^{2,\alpha_{0}}(\overline{\mathscr{B}_{A}})$. 

To prove the quadratic estimate, we consider various regimes. We begin with points near the axis, at a bounded distance away from the horizon. Thanks to the bounds on $\mathring X$ from the $\mathcal{L}_{X}$ norm, we have that $X \sim \rho^{2}$ in $C^{2,\alpha_{0}}$ near these points. Writing
\[
\rho^{-1}\sigma^{-1} = \rho^{-2}(1+\mathring\sigma)^{-1},
\]
these terms combine into a quantity that we can bound in $C^{1,\alpha_{0}}$. It is easy to bound
\[
\left(\left(\mathring{\omega}+X^{-1}Wm\right)^2 - \sigma^2\left(\frac{\mu^2}{X} + \frac{m^2}{X^2}\right)\right)
\]
in $C^{1,\alpha_{0}}$ (note that $X^{-1}W = \mathring \Theta + X^{-1}_{K}W_{K}$ and the second (purely Kerr) quantity is smooth and bounded at the points we are considering by Lemma 3.0.1 in \cite{HBH:geometric}). Finally, the $\psi^{2}$ term in $C^{1,\alpha}$ can be bounded by the $\mathcal{L}_{\psi}$ norm. This term multiplies the other quantities, leading to the quadratic form of the estimates. 

A more difficult regime to consider is near where the axis meets the horizon. Essentially, the main issue here corresponds to the fact that we control $\psi$ in $\hat C^{2,\alpha_{0}}$, but we would like estimates for $N_{\sigma}$ in $C^{1,\alpha_{0}}$. However, this is easily handled thanks to the rapid decay of $\psi$ as $\rho\to0$. For example, we note that
\[
|\rho \partial_{z} \psi| \lesssim s | \partial_{s}\psi |+ \chi|\partial_{\chi}\psi|
\] 
Both of these quantities are controlled by the $\hat C^{1}$ norm of $\psi$ (we can argue similarly for the H\"older norm). Using the remaining $\rho$ decay of $\psi$, we can control the other terms similarly.

The other regimes follow along analogous (but simpler arguments). This proves (3) in Lemma~\ref{fixit}. Finally, item (4) in Lemma~\ref{fixit} follows from a nearly identical argument. Thus, the proof is completed by applying Lemma~\ref{fixit}. 
\end{proof}
\section{Solving for $\left(B^{(N)}_{\chi},B^{(S)}_{\chi'},B^{(A)}_z\right)$}\label{solvingforb}
In this section will solve for $\left(B^{(N)}_{\chi},B^{(S)}_{\chi'},B^{(A)}_z\right)$ in terms of the renormalized unknowns
\[\left(\left(\mathring{X},\mathring{Y}\right),\mathring{\Theta},\psi,\mathring{\mu}^2,\mathring{\lambda}\right).\]

\subsection{Linear Estimates for $B$}
In this section will provide the necessary estimates for the $1$-form $B$'s equation. The linear problem we need to study is
\begin{align}\label{linB}\index{Metric Quantities!$H^{(1)}_{B}$}\index{Metric Quantities!$H^{(2)}_{B}$}\index{Metric Quantities!$H^{(3)}_{B}$}
\nonumber \partial_sB^{(N)}_{\chi}\left(s,\chi\right) &= H^{(1)}_B,
\\  \partial_{s'}B^{(S)}_{\chi'}\left(s',\chi'\right)&= H^{(2)}_B,
\\ \nonumber \partial_{\rho}B^{(A)}_z\left(\rho,z\right) &= \left(1-\xi_N - \xi_S\right)H^{(3)}_B
\\ \nonumber &\qquad \qquad -\left(\partial_{\rho}\xi_N\right)B^{(N)}_z + (\partial_z\xi_N)B_{\rho}^{(N)}- \left(\partial_{\rho}\xi_N\right)B^{(S)}_z + (\partial_z\xi_N)B_{\rho}^{(S)}.
\end{align}
These equations are essentially those from Definition~\ref{defB}, with the dependence on the other parameters suppressed for now. 

We have
\begin{proposition}Suppose that $(H^{(1)}_B,H^{(2)}_B,H^{(3)}_B) \in \mathcal{N}_B$ and we define the solution \[\left(B^{(N)}_{\chi},B^{(S)}_{\chi'},B^{(A)}_z\right)\]
to~(\ref{linB}) simply by integrating the equations in~(\ref{linB}) from $0$ to $s$, $0$ to $s'$, and $0$ to $\rho$ respectively, then there exists a constant $D(\alpha_0)$, independent of $(H^{(1)}_B,H^{(2)}_B,H^{(3)}_B)$ and depending on $\alpha_0$, such that
\[\left\Vert \left(B^{(N)}_{\chi},B^{(S)}_{\chi'},B^{(A)}_z\right)\right\Vert_{\mathcal{L}_B} \leq D(\alpha_0)\left\Vert\left(H^{(1)}_B,H^{(2)}_B,H^{(3)}_B\right)\right\Vert_{\mathcal{N}_B}.\]
\end{proposition}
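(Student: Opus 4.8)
The statement asserts an elementary linear estimate: once the inhomogeneities $(H_B^{(1)},H_B^{(2)},H_B^{(3)})$ lie in $\mathcal{N}_B$, integrating the defining ODEs of Definition~\ref{defB} produces a triple lying in $\mathcal{L}_B$ with a controlled norm. Since these are first-order ordinary differential equations in one variable (the $s$-variable near $p_N$, the $s'$-variable near $p_S$, and the $\rho$-variable in the axis/horizon region), there is no genuine PDE content; the content is purely bookkeeping of weighted H\"older norms across the three coordinate charts. The plan is to treat the three components in the order $B_\chi^{(N)}$, then $B_{\chi'}^{(S)}$, then $B_z^{(A)}$, since the equation for the last one has source terms involving the first two (through the cut-off derivative terms $(\partial_\rho\xi_N)B_z^{(N)}$ etc.).

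\textbf{Step 1: the pole components.} For $B_\chi^{(N)}$ we have $\partial_s B_\chi^{(N)}(s,\chi)=H_B^{(1)}$ with $B_\chi^{(N)}(0,\chi)=0$, so $B_\chi^{(N)}(s,\chi)=\int_0^s H_B^{(1)}(t,\chi)\,dt$. The $\mathcal{N}_B$ norm controls $\|s^{-15}H_B^{(1)}\|_{\hat C^{1,\alpha_0}(\overline{\mathscr{B}_N})}$, so $H_B^{(1)}=O(s^{15})$ in $\hat C^{1,\alpha_0}$, and integration in $s$ from $0$ gains one power, giving $B_\chi^{(N)}=O(s^{16})$, which is comfortably better than the $O(s^{10})$ demanded by $\mathcal{L}_B$. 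The one subtlety is that one must check the weighted estimate is compatible with the four-dimensional regularity built into the $\hat C^{k,\alpha}_{\mathrm{axi}}$ spaces near $p_N$: here I would observe that integration of a function in $\hat C^{1,\alpha_0}(\overline{\mathscr{B}_N})$ vanishing like $s^{15}$ along the $s$-coordinate direction — which, via $\rho=s\chi$, $z=\tfrac12(\chi^2-s^2)+\gamma$, corresponds to integrating along a natural Cartesian-type direction of $\mathbb{R}^4$ — preserves $\hat C^{1,\alpha_0}$ regularity, since the antiderivative of a $C^{0,\alpha_0}$ function is $C^{1,\alpha_0}$ and the weight powers are large enough to absorb any loss. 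The argument for $B_{\chi'}^{(S)}$ is identical with $(s',\chi')$ in place of $(s,\chi)$. These two estimates give control of the first two summands of the $\mathcal{L}_B$ norm by the last two summands of the $\mathcal{N}_B$ norm.

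\textbf{Step 2: the axis/horizon component.} For $B_z^{(A)}$ we integrate $\partial_\rho B_z^{(A)}=(1-\xi_N-\xi_S)H_B^{(3)}-(\partial_\rho\xi_N)B_z^{(N)}+(\partial_z\xi_N)B_\rho^{(N)}-(\partial_\rho\xi_N)B_z^{(S)}+(\partial_z\xi_N)B_\rho^{(S)}$ from $\rho=0$, using $B_z^{(A)}(0,z)=0$. The first term: $\mathcal{N}_B$ controls $\|\tfrac{(1+\rho^{15})(1+r^{10})}{\rho^{15}}H_B^{(3)}\|_{\hat C^{1,\alpha_0}}$, so $H_B^{(3)}=O(\rho^{15}r^{-10})$ near the axis, and integrating in $\rho$ from $0$ yields $B_z^{(A)}=O(\rho^{16}r^{-10})$, better than the $O(\rho^{10}r^{-10})$ required by $\mathcal{L}_B$; away from the axis (where $\partial_\rho\xi_N=0$ is not assumed but $(1-\xi_N-\xi_S)$ is supported in $\overline{\mathscr{B}_A}\cup\overline{\mathscr{B}_H}$) one just uses the $r$-decay weight and that the $\rho$-integration is over a bounded region. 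The cut-off terms are supported where $\partial\xi_N$, $\partial\xi_S$ are nonzero, i.e.\ in the transition annuli near $p_N$, $p_S$, which are compactly contained in $\overline{\mathscr{B}_N}\cap\overline{\mathscr{B}_A}$ (resp.\ $\overline{\mathscr{B}_S}\cap\overline{\mathscr{B}_A}$); there $B_z^{(N)},B_\rho^{(N)}$ are already bounded in $\hat C^{1,\alpha_0}$ by Step 1 — one must pass from the $(\chi)$-component to the $(\rho,z)$-components via the smooth change of variables recorded in Section~\ref{subsec:metric-ansatz} and Remark~\ref{howdefB}, which is harmless on the annulus since it is bounded away from $p_N$. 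A key point, used implicitly, is the requirement in Definition~\ref{cutoffs} that $\partial_\rho\xi_N=\partial_\rho\xi_S=0$ for $\rho$ small, which ensures the source terms in the $B_z^{(A)}$ equation vanish near the axis itself, so the $\rho^{-15}$-weighted bound is not spoiled.

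\textbf{Step 3: Assembly and main obstacle.} Summing the three weighted estimates and using that each contribution is dominated by (a piece of) $\|(H_B^{(1)},H_B^{(2)},H_B^{(3)})\|_{\mathcal{N}_B}$ gives the claimed bound with $D(\alpha_0)$ depending only on $\alpha_0$ (through the H\"older-seminorm constant for antiderivatives) and on the fixed cut-off functions. I expect no deep obstacle; the only genuinely delicate bookkeeping is Step 2's treatment of the cut-off cross-terms, where one is integrating in $\rho$ a quantity originally defined via $(s,\chi)$ or $(s',\chi')$ coordinates, and one must confirm that the $\hat C^{1,\alpha_0}$-regularity across the chart overlap is not lost — this is handled exactly as in Lemma~\ref{lem:hat-non-hat-Holder}, by writing the relevant compositions explicitly and noting the transition map is smooth on the support of $\partial\xi_N$ (which avoids $p_N$). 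The large exponents ($15$ versus $10$) are precisely what give the slack needed to absorb the one derivative loss incurred when moving from the $\mathcal{N}_B$ regularity to the $\mathcal{L}_B$ regularity after integration.
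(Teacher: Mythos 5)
Your approach is the same as the paper's: the published proof is a single sentence, asserting the result follows from the fundamental theorem of calculus together with the change-of-coordinate formulas $\partial_x = \frac{x}{\sqrt{x^2+y^2}}\partial_{\rho}$, $\partial_y = \frac{y}{\sqrt{x^2+y^2}}\partial_{\rho}$ (and their analogues in $(s,\chi)$ and $(s',\chi')$), which convert the one-dimensional antiderivative estimate into the $\hat{C}^{1,\alpha_0}$ bound in the higher-dimensional axisymmetric picture. Your Step 1 is substantively correct, but the phrase "integrating along a natural Cartesian-type direction of $\mathbb{R}^4$" misdescribes what is happening: the $s$-integral runs along the radial direction of one $\mathbb{R}^2$-factor, not a Cartesian direction, and it is precisely the bounded conversion factors $\tfrac{x}{s}$, $\tfrac{y}{s}$ in the formulas above — together with the surplus weight ($s^{-15}$ in $\mathcal{N}_B$ versus $s^{-10}$ in $\mathcal{L}_B$) — that recover the $\hat{C}^{1,\alpha_0}$ regularity after differentiating the antiderivative.

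There is, however, a genuine gap in your Step 2. You assert that "away from the axis \dots the $\rho$-integration is over a bounded region," but this is false: $\overline{\mathscr{B}_A}$ is unbounded, and for $(\rho,z)$ with $\rho$ large one integrates over all of $[0,\rho]$. The source term $(1-\xi_N-\xi_S)H_B^{(3)}$ is supported throughout $\overline{\mathscr{B}_A}\cup\overline{\mathscr{B}_H}$, and the $\mathcal{N}_B$ norm only guarantees $H_B^{(3)} = O(r^{-10})$. For $z$ bounded and $\rho\to\infty$ the antiderivative $\int_0^\rho H_B^{(3)}(\tau,z)\,d\tau$ then tends to a generically nonzero constant, whereas the $\mathcal{L}_B$ weight $\tfrac{(1+\rho^{10})(1+r^{10})}{\rho^{10}} \sim r^{10}$ requires $B_z^{(A)}=O(r^{-10})$ in this regime. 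So your estimate for the large-$r$ behavior does not follow from the stated reasoning and needs to be filled in. (I note that the paper's own one-line proof does not explicitly address this either; in the nonlinear application the actual source $H_B^{(3)}$ is proportional to $\psi^2$ and decays much faster than $r^{-10}$, so the issue is harmless there, but a blind proof of the stated linear estimate should flag rather than elide this point.)
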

\begin{proof}This estimate is a trivial consequence of the fundamental theorem of calculus, and the formulas
\[\partial_x = \frac{x}{\sqrt{x^2+y^2}}\partial_{\rho},\qquad \partial_y = \frac{y}{\sqrt{x^2+y^2}},\]
which arise when switching from cylindrical coordinates $(\rho,\phi,z)$ to Cartesian coordinates $(x,y,z)$ (as well as the analogous formulas in $(s,\chi)$ and $(s',\chi')$ coordinates).
\end{proof}
\subsection{The Fixed Point}
We are ready to run the fixed point argument.
\begin{proposition}\label{fixb}For each $m \in \mathbb{Z}_{\neq 0}$ and $\alpha_0 \in (0,1)$, there exists $\epsilon > 0$ sufficiently small so that given \[\left(\left(\mathring{X},\mathring{Y}\right),\mathring{\Theta},\psi,\mathring{\mu}^2,\mathring{\lambda}\right) \in B_{\epsilon}\left(\mathcal{L}_{X,Y}\right)\times \cdots\times B_{\epsilon}\left(\mathcal{L}_{\psi}\right) \times B_{\epsilon}\left(\mathcal{L}_{\mu^2}\right)\times B_{\epsilon}(\mathcal{L}_{\lambda}),\]
we may find $\left(\mathring{\sigma},\left(B^{(N)}_{\chi},B^{(S)}_{\chi'},B^{(A)}_z\right)\right) \in B_{\epsilon}\left(\mathcal{L}_{\sigma}\right)\times B_{\epsilon}\left(\mathcal{L}_B\right)$ which solve~(\ref{renormalsigmaeqn}) and~(\ref{Btripleeqn}) respectively. Furthermore, $\mathring{\sigma}$ and $\left(B^{(N)}_{\chi},B^{(S)}_{\chi'},B^{(A)}_z\right)$ have ``continuous nonlinear dependence on parameters''
\[\left(\left(\mathring{X},\mathring{Y}\right),\mathring{\Theta},\psi,\mathring{\mu}^2,\mathring{\lambda}\right),\]
in the sense of Lemma~\ref{fixit} and Remark~\ref{contnonlindep}.
\end{proposition}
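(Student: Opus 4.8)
The plan is to solve for $\bigl(\mathring\sigma,(B^{(N)}_\chi,B^{(S)}_{\chi'},B^{(A)}_z)\bigr)$ simultaneously by a single application of Lemma~\ref{fixit}, in direct parallel with the proof of Proposition~\ref{fixsig}. I would take $\mathcal{L}=\mathcal{L}_\sigma\times\mathcal{L}_B$, $\tilde{\mathcal{L}}=\mathcal{N}_\sigma\times\mathcal{N}_B$, let the space of genuinely nonlinear parameters be $\mathcal{Q}=\mathcal{L}_{X,Y}\times\mathcal{L}_\Theta\times\mathcal{L}_\psi\times\mathcal{L}_{\mu^2}\times\mathcal{L}_\lambda$, and set
\[
\mathfrak{E}\bigl(\mathring\sigma,B,(\mathring X,\mathring Y,\mathring\Theta,\psi,\mathring\mu^2,\mathring\lambda)\bigr)\doteq\bigl(\Delta_{\mathbb{R}^4}\mathring\sigma,\,L_B(B)\bigr)-\bigl(N_\sigma,\,N_B\bigr),
\]
where $L_B$ is the triangular first-order linear operator on the triple given by the left-hand sides of~\eqref{Btripleeqn} and $N_\sigma$, $N_B$ are the right-hand sides of~\eqref{renormalsigmaeqn} and~\eqref{Btripleeqn}. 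The bounded right inverse $L^{-1}:\mathcal{N}_\sigma\times\mathcal{N}_B\to\mathcal{L}_\sigma\times\mathcal{L}_B$ of $L=\Delta_{\mathbb{R}^4}\oplus L_B$ is supplied by Proposition~\ref{prop:lin-est-sigma} for the $\mathring\sigma$-component and by the linear estimate for $B$ proved just above (which is simply successive integration, first in $s$ for $B^{(N)}_\chi$, then in $s'$ for $B^{(S)}_{\chi'}$, then in $\rho$ for $B^{(A)}_z$) for the $B$-component. Since neither $N_\sigma$ nor $N_B$ actually involves $B$, the system is in fact triangular and one could equivalently invoke Proposition~\ref{fixsig} to produce $\mathring\sigma$ and then solve the now-explicit equations for $B$; I will phrase the argument through Lemma~\ref{fixit} for uniformity.

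The real content is verifying hypotheses (3) and (4) of Lemma~\ref{fixit} for $N=(N_\sigma,N_B)$. For $N_\sigma$ the bound $\Vert r^4 N_\sigma\Vert_{C^{1,\alpha_0}(\overline{\mathscr{B}})}\lesssim\Vert\mathring\sigma\Vert_{\mathcal{L}_\sigma}^2+\Vert q\Vert_{\mathcal{Q}}^2$ is exactly what was checked in the proof of Proposition~\ref{fixsig} — every term carries a factor $\psi^2$ (recalling that $\mathring\omega+X^{-1}Wm$ vanishes on $\mathscr{H}$), and $\psi$ vanishes to high order along the axis and at the poles. For $N_B$ one simply inspects~\eqref{Btripleeqn}: each component is $\psi^2$ times $\sigma^{-1}e^{2\lambda}(X\mathring\omega m+Wm^2)$ and a cutoff, with an extra conformal weight $\chi^2+s^2$ (resp.\ $(\chi')^2+(s')^2$) near $p_N$ (resp.\ $p_S$), so it is manifestly quadratically small in the unknowns once $\epsilon$ is chosen small. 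The only care needed is to match this against the weighted $\hat C^{1,\alpha_0}$ norms defining $\mathcal{N}_B$: near $p_N$ the $\mathcal{L}_\psi$ norm gives $\psi=O(s^{10})$, hence $\psi^2=O(s^{20})$, which dominates the $s^{-15}$ weight (and similarly at $p_S$), while on $\overline{\mathscr{B}_A}\cup\overline{\mathscr{B}_H}$ we have $\psi=O(\rho^{10})$ against the weight $\tfrac{(1+\rho^{15})(1+r^{10})}{\rho^{15}}$, and the $r$-decay at infinity is again controlled by the decay of $\psi$. The Lipschitz-in-parameters estimate (4) follows by differencing two data sets term by term, just as for $N_\sigma$.

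The main technical obstacle is the presence of the cutoff-derivative terms $-(\partial_\rho\xi_N)B^{(N)}_z+(\partial_z\xi_N)B_\rho^{(N)}-(\partial_\rho\xi_N)B^{(S)}_z+(\partial_z\xi_N)B_\rho^{(S)}$ in the equation for $B^{(A)}_z$: these are \emph{linear} in the unknown triple and so cannot be absorbed into $N_B$ without ruining the quadratic estimate. The resolution, already encoded in the statement of the linear estimate for $B$, is to keep them on the left, i.e.\ as part of $L_B$, and to invert the resulting triangular system by integrating for $B^{(N)}_\chi$ and $B^{(S)}_{\chi'}$ first; since $\partial_\rho\xi_N$ and $\partial_z\xi_N$ are supported in a fixed compact region bounded away from the axis and the poles, this introduces no new singular weights and the estimate $\Vert(B^{(N)}_\chi,B^{(S)}_{\chi'},B^{(A)}_z)\Vert_{\mathcal{L}_B}\lesssim\Vert(H^{(1)}_B,H^{(2)}_B,H^{(3)}_B)\Vert_{\mathcal{N}_B}$ holds. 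With $L^{-1}$ bounded and $N$ quadratic and Lipschitz in the parameters, Lemma~\ref{fixit} produces the solution map on $B_\epsilon(\mathcal{Q})$ with the quadratic bounds and continuous nonlinear dependence on $((\mathring X,\mathring Y),\mathring\Theta,\psi,\mathring\mu^2,\mathring\lambda)$ asserted in the proposition.
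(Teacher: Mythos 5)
Your argument follows the paper's proof: one first recasts~\eqref{Btripleeqn} in the form $L_B(B) = N_B$, observes that the source terms all carry the factor $\psi^2$ (together with $\mathring\omega + X^{-1}Wm$ vanishing on $\mathscr{H}$) so that the quadratic estimates of Lemma~\ref{fixit} hold via the high-order vanishing of $\psi$ built into $\mathcal{L}_\psi$, and invokes the linear estimate of Section~\ref{solvingforb}; solving $(\mathring\sigma,B)$ jointly versus first producing $\mathring\sigma$ via Proposition~\ref{fixsig} and then solving for $B$ is the cosmetic distinction you already flag, and the two are indeed equivalent since neither $N_\sigma$ nor $N_B$ depends on $B$. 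One small imprecision: Definition~\ref{cutoffs} only forces $\partial_\rho\xi_N$ (not $\partial_z\xi_N$) to vanish for $\rho$ small, so the support of $\partial_z\xi_N$ does reach $\{\rho=0\}$; the cutoff terms remain harmless not because they are supported away from the axis and horizon, but because the $(\rho,z)$-components $B^{(N)}_\rho = \tfrac{s}{s^2+\chi^2}B^{(N)}_\chi$, $B^{(N)}_z = \tfrac{\chi}{s^2+\chi^2}B^{(N)}_\chi$ inherit rapid vanishing from $B^{(N)}_\chi$ and its boundary conditions at $s=0$ and $\chi=0$.
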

\begin{proof}First we apply Proposition~\ref{fixsig} to solve for $\sigma$ in terms of the other unknowns. Then the proof follows easily from Lemma~\ref{fixit}. In particular, we take $\mathcal{L} = \mathcal{L}_{B}$, $\tilde{\mathcal{L}}= \mathcal{N}_{B}$ and 
\[
\mathcal{Q} = \mathcal{L}_{X}\times\mathcal{L}_{Y} \times \mathcal{L}_{\Theta} \times \mathcal{L}_{\psi}\times\mathcal{L}_{\mu^{2}}\times\mathcal{L}_{\lambda}.
\]
Then, we define $\mathfrak{E}:B_{\epsilon}(\mathcal{L}) \times B_{\epsilon}(\mathcal{Q})\to\tilde{\mathcal{L}}$ by integrating the equations \eqref{linB} with the source terms given by
\begin{align*}
H_{B}^{(1)} &\doteq 2\left(\chi^2+s^2\right)\xi_N\sigma^{-1}e^{2\lambda}\left(X\omega m + Wm^2\right)\psi^2
\\ H_{B}^{(2)} &\doteq 2\left((\chi')^2+(s')^2\right)\xi_S\sigma^{-1}e^{2\lambda}\left(X\omega m + Wm^2\right)\psi^2,
\\ H_{B}^{(3)} &\doteq 2\sigma^{-1}e^{2\lambda}\left(X\omega m + Wm^2\right)\psi^2
\end{align*}
Note that $\mathring\sigma$ (and thus $\sigma$) is a function of the other parameters (and satisfies ``continuous nonlinear dependence on parameters'' in the sense of Lemma \ref{fixit}). 

We now explain how to check the nonlinear estimates required for Lemma \ref{fixit}. Near $p_{N}$, we claim that
\[
\Vert s^{-15} H_{B}^{(1)} \Vert_{\hat C^{1,\alpha_{0}}(\overline{\mathscr{B}_{N}})} \leq D \Vert ((B^{N}_{\chi},B^{S}_{\chi'},B^{A}_{z}),(\mathring X,\mathring Y),\mathring \Theta,\psi,\mathring\mu^{2},\mathring\lambda)\Vert_{\mathcal{L}_{B}\times\dots\times\mathcal{L}_{\lambda}}^{2}.
\]
This follows easily from the fact that the $\hat C^{2,\alpha}$ norm of $\psi$ decays like $s^{10}$ near $p_{N}$ and in the definition of $H^{(1)}_{B}$, $\psi$ appears squared (and multiplying the rest of the terms); note that we can bound $\Vert \mathring\sigma\Vert_{\mathcal{L}_{\sigma}}$ by the right hand side of the above expression, by Proposition \ref{fixsig}. The remaining estimates for $H_{B}$ in $\mathcal{N}_{B}$ as well as for norm of the difference follow similar arguments. Having this verified the nonlinear estimates (3) and (4) in the hypothesis of Lemma \ref{fixit}, this completes the proof. 
\end{proof}
\section{Solving for $\left(\mathring{X},\mathring{Y}\right)$}
In this section will solve for $\left(\mathring{X},\mathring{Y}\right)$ in terms of the renormalized unknowns
\[\left(\mathring{\Theta},\psi,\mathring{\mu}^2,\mathring{\lambda}\right).\]

Throughout this section we will identify functions of $\rho$ and $z$ as functions on $\mathbb{R}^3$ via cylindrical coordinates, often without saying so explicitly.
\subsection{Linear Estimates for $X$ and $Y$}
In this section we will provide the necessary linear estimates for $\mathring{X}$ and $\mathring{Y}$.

The linear equations we need to study are \index{Metric Quantities!$H_{X}$}\index{Metric Quantities!$H_{Y}$}

\begin{equation}\label{linX}
\Delta_{\mathbb{R}^3}\mathring{X} + \frac{2\partial Y_K\cdot\partial\mathring{Y}}{X_K} - \frac{2\left|\partial Y_K\right|^2}{X_K^2}\mathring{X} + 2\frac{\partial X_K\cdot\partial Y_K}{X_K^2}\mathring{Y} = H_X,
\end{equation}
\begin{equation}\label{linY}
\Delta_{\mathbb{R}^3}\mathring{Y} - \frac{2\partial Y_K\cdot\partial\mathring{X}}{X_K} - \frac{\left[\left|\partial X_K\right|^2 + \left|\partial Y_K\right|^2\right]}{X_K^2}\mathring{Y} = H_Y.
\end{equation}
These equations were derived in Section \ref{subsec:XY-renorm-eqn}.

\subsubsection{Existence of an Inverse: The Variational Structure}\label{variational}
In this section we will show that the equations~(\ref{linX}) and~(\ref{linY}) have a useful variational structure. Using this, we will show that~(\ref{linX}) and~(\ref{linY}) are (in an appropriate sense) uniquely solvable in terms of $H_X$ and $H_Y$.

The following proposition is taken from~\cite{ionkla}.\index{Miscellaneous!$\mathcal{L}(\mathring X,\mathring Y)$}  
\begin{proposition}The system~(\ref{linX}) and~(\ref{linY}) are formally the Euler--Lagrange equations of
\begin{align*}
\mathcal{L}\left(\mathring{X},\mathring{Y}\right) = \int_{\mathbb{R}^3}\Big[&\left|\partial \mathring{X} + X_K^{-1}\left(\partial Y_k\right) \mathring{Y}\right|^2 + \left|\partial \mathring{Y} - X_K^{-1}\left(\partial Y_K\right)\mathring{X}\right|^2
\\ \nonumber &+ X_K^{-2}\left|\mathring{X}\partial Y_K - \mathring{Y}\partial X_K\right|^2 + 2\mathring{X}H_X + 2\mathring{Y}H_Y\Big].
\end{align*}
\end{proposition}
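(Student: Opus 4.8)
The statement asks only for a \emph{formal} identification with the Euler--Lagrange system, so there is nothing analytic to track: we may ignore convergence of the integrals and freely discard every boundary term produced by integration by parts (on the axis, at the horizon, and at infinity). The plan is simply to compute the first variation of $\mathcal{L}$ in each of its two arguments and match the result against \eqref{linX}--\eqref{linY}, the only genuine input being the Kerr harmonic map equations \eqref{harmonicmap1}--\eqref{harmonicmap2}.

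Concretely, I would abbreviate $A \doteq \partial\mathring{X} + X_K^{-1}(\partial Y_K)\mathring{Y}$, $B \doteq \partial\mathring{Y} - X_K^{-1}(\partial Y_K)\mathring{X}$, and $C \doteq \mathring{X}\,\partial Y_K - \mathring{Y}\,\partial X_K$, so that the quadratic part of the integrand of $\mathcal{L}$ is $|A|^2 + |B|^2 + X_K^{-2}|C|^2$. Replacing $\mathring{X}$ by $\mathring{X}+t\phi$ for a test function $\phi$ and differentiating at $t=0$ gives, up to the overall factor $2$,
\[
\int_{\mathbb{R}^3}\Big[\,A\cdot\partial\phi \;-\; X_K^{-1}(\partial Y_K)\cdot B\,\phi \;+\; X_K^{-2}(\partial Y_K)\cdot C\,\phi \;+\; H_X\,\phi\,\Big].
\]
Integrating the first term by parts (using $\int_{\mathbb{R}^3}\partial f\cdot\partial g = -\int_{\mathbb{R}^3}g\,\Delta_{\mathbb{R}^3}f$ for axisymmetric $f,g$, and writing $\operatorname{div}$ for the divergence on $\mathbb{R}^3$, so that $\operatorname{div}(\partial f)=\Delta_{\mathbb{R}^3}f$) and demanding that this vanish for all $\phi$ yields
\[
\Delta_{\mathbb{R}^3}\mathring{X} \;+\; \operatorname{div}\!\big(X_K^{-1}(\partial Y_K)\,\mathring{Y}\big) \;+\; X_K^{-1}(\partial Y_K)\cdot B \;-\; X_K^{-2}(\partial Y_K)\cdot C \;=\; H_X ,
\]
and the completely parallel computation varying in $\mathring{Y}$ with a test function $\psi$ produces
\[
\Delta_{\mathbb{R}^3}\mathring{Y} \;-\; \operatorname{div}\!\big(X_K^{-1}(\partial Y_K)\,\mathring{X}\big) \;-\; X_K^{-1}(\partial Y_K)\cdot A \;+\; X_K^{-2}(\partial X_K)\cdot C \;=\; H_Y .
\]

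To bring these into the stated form I would invoke the Kerr harmonic map structure: since $(X_K,Y_K)$ solves \eqref{harmonicmap1}--\eqref{harmonicmap2}, equivalently $\Delta_{\mathbb{R}^3}X_K = X_K^{-1}(|\partial X_K|^2-|\partial Y_K|^2)$ and $\Delta_{\mathbb{R}^3}Y_K = 2X_K^{-1}\,\partial X_K\cdot\partial Y_K$, the Leibniz rule gives the key identity
\[
\operatorname{div}\!\big(X_K^{-1}\partial Y_K\big) \;=\; X_K^{-1}\Delta_{\mathbb{R}^3}Y_K - X_K^{-2}\,\partial X_K\cdot\partial Y_K \;=\; X_K^{-2}\,\partial X_K\cdot\partial Y_K .
\]
Plugging this into the two Euler--Lagrange equations, expanding $\operatorname{div}\!\big(X_K^{-1}(\partial Y_K)\mathring{Y}\big)=\mathring{Y}\,X_K^{-2}\,\partial X_K\cdot\partial Y_K + X_K^{-1}(\partial Y_K)\cdot\partial\mathring{Y}$ (and the analogous expansion in the $\mathring{X}$ term), and expanding the dot products involving $A$, $B$, $C$, all the $X_K^{-1}(\partial Y_K)\cdot\partial\mathring{Y}$, $X_K^{-2}|\partial Y_K|^2\mathring{X}$, and $X_K^{-2}(\partial X_K\cdot\partial Y_K)\mathring{Y}$ pieces combine with the correct multiplicities, recovering exactly \eqref{linX} and \eqref{linY}. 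The only obstacle here is bookkeeping --- one must track the several first- and zeroth-order terms carefully and verify that their coefficients assemble into the factors $2$ and $-2$ appearing in \eqref{linX}--\eqref{linY}. Since the proposition is asserted only formally (and is moreover quoted from \cite{ionkla}), I would record the variational steps above and leave the routine expansion to the reader.
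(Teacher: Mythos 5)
Your proposal is correct and takes essentially the same approach as the paper: the paper itself only asserts the result as a ``straightforward (if tedious) computation'' and defers to~\cite{ionkla}, whereas you carry out the first variation directly, identifying the harmonic map identity $\operatorname{div}(X_K^{-1}\partial Y_K)=X_K^{-2}\,\partial X_K\cdot\partial Y_K$ as the key input. I verified that expanding $\operatorname{div}(X_K^{-1}(\partial Y_K)\mathring{Y})$, $X_K^{-1}(\partial Y_K)\cdot B$, and $-X_K^{-2}(\partial Y_K)\cdot C$ (resp.\ the parallel terms for the $\mathring Y$ variation, where the $\mathring{X}\,\partial X_K\cdot\partial Y_K$ pieces cancel) indeed reproduces the stated coefficients in~\eqref{linX}--\eqref{linY}, so the bookkeeping you defer does close.
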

\begin{proof}This is a straightforward (if tedious) computation. See (1.15) and Section 2.1 in  \cite{ionkla}.
\end{proof}
\begin{remark}
We will specify below (in Lemma \ref{solve}) precisely what we mean to solve \eqref{linX} and \eqref{linY}. In this setting, it will be clear that indeed the system is the Euler--Lagrange equations for $\mathcal{L}$. Until then, we may simply consider this as a formal computation. 
\end{remark}

\begin{remark}It is possible to derive this Lagrangian using the second variation of energy formula for a harmonic map~\cite{eelslemaire}. In this context, the non-negativity of the lower order term is a result of the negative sectional curvature of hyperbolic space $\mathbb{H}^2$. One would then need to derive a Lagrangian for the renormalized quantities (which must still be non-negative). We refer instead to \cite{ionkla} for a direct computation proving this result.
\end{remark}
Our goal is to apply the direct method of the calculus of variations. We start with a sequence of preliminary lemmas.

This lemma establish a Poincar\'{e} inequality in the spirit of Lemma 1 from~\cite{weinstein2}.
\begin{lemma}\label{poin} Let $f(\rho,z) \in C^{\infty}_0\left(\mathbb{R}^3\right)$  Then
\[\int_{\mathbb{R}^3}\left|\partial X_K\right|^2f^2 \leq C\int_{\mathbb{R}^3}X_K^2\left|\partial f\right|^2.\]
\end{lemma}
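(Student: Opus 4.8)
The plan is to integrate by parts, exploiting the Ernst/harmonic‑map structure carried by $X_K$, and then to absorb the resulting error using the decay of the Kerr twist potential $Y_K$; this is the strategy of Lemma~1 in \cite{weinstein2}, adapted to the present setting.

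First I would write the left‑hand side as $\int_{\mathbb{R}^3}|\partial X_K|^2 f^2 = \int_{\mathbb{R}^3}X_K^2|\partial\log X_K|^2 f^2$. The $X_K$‑component of the harmonic map system \eqref{harmonicmap1} gives, away from the axis,
\[
\Delta_{\mathbb{R}^3}\log X_K = \frac{\Delta_{\mathbb{R}^3}X_K}{X_K} - \frac{|\partial X_K|^2}{X_K^2} = -\frac{|\partial Y_K|^2}{X_K^2},
\]
where $\Delta_{\mathbb{R}^3}u = \rho^{-1}\partial_\rho(\rho\partial_\rho u) + \partial_z^2 u$. By Lemma~\ref{lem:xK-regularity} we have $\log X_K = h + x_K$ with $x_K\in\hat C^\infty(\overline{\mathscr{B}})$ and $h = 2\log\rho + (\text{smooth})$ near the axis, so $X_K^2 f^2$ vanishes to fourth order on the axis; consequently, in the integrations by parts below there is no contribution from the flux through a thin tube about the axis, nor from the distributional line source carried by $\Delta_{\mathbb{R}^3}\log X_K$ there. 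Using $\partial(X_K^2 f^2) = 2X_K^2 f^2\,\partial\log X_K + 2X_K^2 f\,\partial f$, I would compute
\[
2\int_{\mathbb{R}^3}X_K^2|\partial\log X_K|^2 f^2 = \int_{\mathbb{R}^3}\partial(\log X_K)\cdot\partial(X_K^2 f^2) - 2\int_{\mathbb{R}^3}X_K^2 f\,\partial f\cdot\partial\log X_K,
\]
integrate by parts in the first term (which becomes $-\int_{\mathbb{R}^3}X_K^2 f^2\,\Delta_{\mathbb{R}^3}\log X_K = \int_{\mathbb{R}^3}|\partial Y_K|^2 f^2$), and apply Cauchy--Schwarz to the second. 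Absorbing $\tfrac12\int_{\mathbb{R}^3}X_K^2|\partial\log X_K|^2 f^2$ yields
\[
\int_{\mathbb{R}^3}|\partial X_K|^2 f^2 \;\le\; \int_{\mathbb{R}^3}X_K^2|\partial f|^2 + \int_{\mathbb{R}^3}|\partial Y_K|^2 f^2,
\]
so the problem reduces to estimating the error $\int_{\mathbb{R}^3}|\partial Y_K|^2 f^2$ by $C\int_{\mathbb{R}^3}X_K^2|\partial f|^2$.

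For the error term I would split $\overline{\mathscr{B}}$ into a neighborhood of the axis $\overline{\mathscr{B}_A}$, neighborhoods of the poles $\overline{\mathscr{B}_N},\overline{\mathscr{B}_S}$, a neighborhood of spatial infinity, and a fixed compact remainder $K$ (containing $\overline{\mathscr{B}_H}$). On $\overline{\mathscr{B}_A}$ the estimates of Lemma~\ref{somestuffthatisuseful} give $|\partial Y_K|\le C\rho^{-1}r^{-4}X_K^2$ and $X_K\sim\rho^2$, whence $|\partial Y_K|^2 \le C\rho^4 r^{-8}|\partial h|^2 X_K^2 \lesssim \rho^4 r^{-8}|\partial X_K|^2$ (using \eqref{Xlikeh}); since the weight $\rho^4 r^{-8}$ is bounded and tends to $0$ both as $\rho\to0$ and as $r\to\infty$, it is $\le\epsilon$ off a compact set and $O(1)$ on that set. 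Analogous bounds hold near the poles via the $|\underline\partial Y_K|\le Cs^3$ estimates (working with the $\RR^4$ lift), and near infinity using the $r^{-4}$ decay directly. This produces, for any $\epsilon>0$,
\[
\int_{\mathbb{R}^3}|\partial Y_K|^2 f^2 \;\le\; \epsilon\int_{\mathbb{R}^3}|\partial X_K|^2 f^2 + C_\epsilon\int_K f^2,
\]
and one feeds the first term back into the preceding inequality and absorbs it for $\epsilon$ small. For the compact piece I would use that $X_K$ is bounded below on $K$, so that $\int_K f^2 \le C\int_{\mathbb{R}^3}X_K^2|\partial f|^2$ reduces to a Poincar\'e‑type inequality: control $f$ minus its average over $K$ by the usual Poincar\'e inequality, then estimate the average using that the $X_K^2$‑weighted capacity of $K$ in $\RR^3$ is strictly positive (any competitor must transition across the bulk, where $X_K\gtrsim1$). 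Combining the pieces gives the lemma with $C=C(a,M)$.

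The step I expect to be the main obstacle is the bookkeeping in the error term near the axis and the horizon: one must simultaneously (i) justify the integrations by parts in the presence of the logarithmic blow‑up of $\log X_K$ and its line source on the axis---which works only because the renormalization forces $X_K^2 f^2$ to vanish there to fourth order---and (ii) verify that the decay rates of $Y_K$ from Lemma~\ref{somestuffthatisuseful} beat the degeneracy $X_K\sim\rho^2$ of the weight by a fixed margin in every subregion, so that the error genuinely splits into an absorbable multiple of $\int_{\mathbb{R}^3}|\partial X_K|^2 f^2$ plus a compactly supported term controlled by the weighted Poincar\'e/capacity estimate. The remaining regimes (smooth interior away from the axis, and spatial infinity away from the axis) are routine.
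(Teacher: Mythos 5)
Your argument is correct, but it takes a genuinely longer route than the paper's. The paper proves the inequality by testing against the explicit function $h$ from Section~\ref{kerr}, which is \emph{exactly} harmonic on $\RR^3$ and satisfies $X_K \sim e^h$ and $\left|\partial X_K\right| \sim e^h\left|\partial h\right|$; the identity ${\rm div}_{\RR^3}\left(e^{2h}f^2\partial h\right) = 2e^{2h}f^2\left|\partial h\right|^2 + 2e^{2h}f\,\partial f\cdot\partial h$ then has no zeroth-order term at all, the flux through $\{\rho=\epsilon\}$ vanishes (since $e^{2h}\partial_\rho h \sim \rho^3$ on $\mathscr{A}$ and $\partial_\rho h\to 0$ on $\mathscr{H}$), and a single Cauchy--Schwarz/absorption step finishes the proof. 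You instead test against $\log X_K$, which is only harmonic up to the harmonic-map source $-\left|\partial Y_K\right|^2/X_K^2$, so after integrating by parts you are left with the error $\int\left|\partial Y_K\right|^2 f^2$. Controlling this costs you two additional nontrivial inputs the paper does not invoke for this lemma: the decay estimates of Lemma~\ref{somestuffthatisuseful} for $Y_K$ (including the pole estimates in the $\RR^4$ lift), and a weighted Poincar\'e/capacity estimate of the form $\int_K f^2 \lesssim \int_{\RR^3}X_K^2\left|\partial f\right|^2$ for the compact remainder. That last estimate is true and can be made rigorous (for instance, for $(\rho_0,z_0)\in K$ with $\rho_0\geq 1$ write $f(\rho_0,z_0) = -\int_{\rho_0}^\infty\partial_\rho f\,d\rho$ and Cauchy--Schwarz against the weight $\rho X_K^2$, using $X_K^2\gtrsim\rho^4$ for $\rho\geq 1$ to see that $\int_{\rho_0}^\infty(\rho X_K^2)^{-1}d\rho<\infty$), but it is a genuinely extra lemma, and the sketch you give via ``average plus capacity'' would need to be fleshed out carefully---in particular, the naive $z$-direction version of this argument fails because $X_K^{-2}$ is not integrable in $z$ at fixed $\rho$. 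In short: both proofs rest on the same integration-by-parts skeleton from \cite{weinstein2}, but the paper's choice of the exactly-harmonic $h$ as the test-weight kills the error term at the source, whereas your choice of $\log X_K$ keeps it and forces you to estimate it afterward. Nothing is wrong, but if you notice that $h$ is harmonic \emph{before} integrating by parts rather than only using $\log X_K = h + x_K$ to control boundary fluxes, the entire second half of your argument disappears.
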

\begin{proof}First of all, by Lemma \ref{lem:xK-regularity}, we have
\[\left|\partial X_K\right| \sim e^h\left|\partial h\right|.\]
Thus, it suffices to establish
\[\int_{\mathbb{R}^3}e^{2h}\left|\partial h\right|^2f^2 \leq C\int_{\mathbb{R}^3}e^{2h}\left|\partial f\right|^2.\]

Next, using that $h$ is harmonic, we note that
\[{\rm div}_{\mathbb{R}^3}\left(e^{2h}f^2\partial h\right) = 2e^{2h}f^2\left|\partial h\right|^2 + 2e^{2h}f\partial f\cdot\partial h.\]

Applying the divergence theorem, we obtain
\begin{align*}
2 \int_{\mathbb{R}^3}\left[e^{2h}f^2\left|\partial h\right|^2 + e^{2h}f\partial f\cdot\partial h\right] &= -\liminf_{\epsilon\to 0}\int_{-\infty}^{\infty}\left(e^{2h}f^2\partial_{\rho}h\right)|_{(\rho,z) = (\epsilon,\tilde z)}\, d\tilde z
\\ \nonumber &= 0.
\end{align*}
Thus,
\[\int_{\mathbb{R}^3}e^{2h}f^2\left|\partial h\right|^2 \leq \int_{\mathbb{R}^3}e^{2h}\left|f\right|\left|\partial f\right|\left|\partial h\right|.\]
The proof then concludes via an absorbing inequality (as in Lemma 1 in~\cite{weinstein2}).
\end{proof}

Next, we quote a Hardy inequality from Theorem 1.3 in \cite{bartnik}.
\begin{lemma}\label{bar}Let $f \in C^{\infty}_0\left(\mathbb{R}^3\right)$. Set \index{Miscellaneous!$\langle r \rangle$} $\langle r\rangle \doteq (1+x^2+y^2+z^2)^{1/2}$. Then
\[\int_{\mathbb{R}^3}\langle r\rangle^{-2}f^2 \leq C\int_{\mathbb{R}^3}\left|\partial f\right|^2.\]
\end{lemma}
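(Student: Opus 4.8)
The plan is to deduce Lemma~\ref{bar} from the classical Hardy inequality on $\mathbb{R}^3$ together with the elementary pointwise bound
\[
\langle r\rangle^{-2} = (1+x^2+y^2+z^2)^{-1} \leq (x^2+y^2+z^2)^{-1} = |x|^{-2}.
\]
So the entire content reduces to the statement that for $f \in C_0^\infty(\mathbb{R}^3)$ one has $\int_{\mathbb{R}^3} f^2 |x|^{-2} \leq 4 \int_{\mathbb{R}^3} |\partial f|^2$, which is exactly the sort of inequality quoted as Theorem~1.3 of \cite{bartnik}; for a self-contained argument I would prove it by the vector field method as follows.

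First I would introduce the vector field $V \doteq x/|x|^2$ on $\mathbb{R}^3\setminus\{0\}$, which satisfies ${\rm div}_{\mathbb{R}^3} V = |x|^{-2}$. Integrating ${\rm div}(f^2 V)$ over $\mathbb{R}^3\setminus B_\epsilon(0)$, the boundary contribution over $\partial B_\epsilon(0)$ has magnitude $\leq \epsilon^{-1}\|f\|_{L^\infty}^2\,|\partial B_\epsilon| = O(\epsilon)$ and hence vanishes as $\epsilon\to 0$, leaving
\[
\int_{\mathbb{R}^3}\frac{f^2}{|x|^2} = -2\int_{\mathbb{R}^3}\frac{f\,(x\cdot\partial f)}{|x|^2}.
\]
By Cauchy--Schwarz and $|x\cdot\partial f|\leq |x|\,|\partial f|$, the right-hand side is bounded by $2\bigl(\int f^2|x|^{-2}\bigr)^{1/2}\bigl(\int |\partial f|^2\bigr)^{1/2}$; dividing through by $\bigl(\int f^2|x|^{-2}\bigr)^{1/2}$ (the quantity is finite since $|x|^{-2}$ is locally integrable in $\mathbb{R}^3$) gives the constant $4$.

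Finally, combining the two displays with the pointwise inequality above yields
\[
\int_{\mathbb{R}^3}\langle r\rangle^{-2}f^2 \leq \int_{\mathbb{R}^3}\frac{f^2}{|x|^2} \leq 4\int_{\mathbb{R}^3}|\partial f|^2,
\]
which is the assertion with $C = 4$. There is no real obstacle here; the only point deserving a word of care is the justification of the integration by parts at the singularity $x = 0$, handled by the excision-and-limit step above. Since the lemma is only needed qualitatively (to get a weighted Poincar\'e-type control), one could equally well just invoke \cite[Theorem~1.3]{bartnik} directly.
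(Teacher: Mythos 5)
Your proof is correct. The paper itself offers no proof for this lemma---it simply cites Theorem~1.3 of Bartnik---so you have in effect supplied the missing elementary argument. Your reduction $\langle r\rangle^{-2}\leq |x|^{-2}$ followed by the standard vector-field proof of the three-dimensional Hardy inequality (with $V = x/|x|^2$, ${\rm div}\,V = |x|^{-2}$, the $\partial B_\epsilon$ boundary term decaying like $O(\epsilon)$, then Cauchy--Schwarz and absorption) is exactly right and yields the explicit constant $C=4$. The one small point you correctly flagged---justifying division by $\bigl(\int f^2|x|^{-2}\bigr)^{1/2}$ via local integrability of $|x|^{-2}$ in $\mathbb{R}^3$---is handled; and in the degenerate case $\int f^2|x|^{-2}=0$ the conclusion is trivial. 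No gaps. The trade-off versus the paper's approach is just economy of exposition: the citation to Bartnik suffices for the paper's purposes since only the qualitative Poincar\'e-type control is needed, but your self-contained version makes the constant explicit and avoids importing machinery from weighted Sobolev space theory.
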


Now we come to the key lemma of the section.
\begin{lemma}\label{coercive}There exists a constant $C$ such that if
\begin{equation}\label{finitestuff}
\mathring{X} \in C^{\infty}_0\left(\mathbb{R}^3\right),\qquad \mathring{Y} \in C^{\infty}_0\left(\mathbb{R}^3\right),\qquad \int_{\mathbb{R}^3}\frac{\left|\partial X_K\right|^2}{X_K^2}\mathring{Y}^2 < \infty,
\end{equation}
then
\begin{equation}\label{tocontrol}
\int_{\mathbb{R}^3}\left[\left|\partial\mathring{X}\right|^2 + \left|\partial \mathring{Y}\right|^2 + \left|\partial h\right|^2\mathring{Y}^2\right] \leq C\left[\mathcal{L}\left(\mathring{X},\mathring{Y}\right) + 2\int_{\mathbb{R}^3}\left(\left|\mathring{X}H_X\right| + \left|\mathring{Y}H_Y\right|\right)\right].
\end{equation}
\end{lemma}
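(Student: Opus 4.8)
The goal is a coercivity estimate: the Dirichlet-type energy of $(\mathring X, \mathring Y)$, together with the weighted $L^2$-bound on $\mathring Y$, is controlled by the Lagrangian $\mathcal{L}$ plus the inhomogeneity pairings. The strategy is to unpack $\mathcal{L}$ and absorb cross terms using the Poincaré-type inequality of Lemma~\ref{poin} and the Hardy inequality of Lemma~\ref{bar}.

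First I would expand the first two squares in $\mathcal{L}(\mathring X, \mathring Y)$:
\[
\left|\partial \mathring X + X_K^{-1}(\partial Y_K)\mathring Y\right|^2 + \left|\partial \mathring Y - X_K^{-1}(\partial Y_K)\mathring X\right|^2 = |\partial \mathring X|^2 + |\partial \mathring Y|^2 + X_K^{-2}|\partial Y_K|^2(\mathring X^2 + \mathring Y^2) + (\text{cross terms}),
\]
where the cross terms are $2X_K^{-1}(\partial Y_K)\cdot(\mathring Y\, \partial\mathring X - \mathring X\, \partial \mathring Y)$. So, writing $\mathcal{L}$ out fully and discarding the manifestly non-negative term $X_K^{-2}|\mathring X \partial Y_K - \mathring Y \partial X_K|^2$, we get
\[
\int_{\mathbb{R}^3}\left[|\partial \mathring X|^2 + |\partial \mathring Y|^2 + X_K^{-2}|\partial Y_K|^2(\mathring X^2 + \mathring Y^2)\right] \leq \mathcal{L}(\mathring X, \mathring Y) + 2\int_{\mathbb{R}^3}\left(|\mathring X H_X| + |\mathring Y H_Y|\right) - 2\int_{\mathbb{R}^3} X_K^{-1}(\partial Y_K)\cdot(\mathring Y\,\partial\mathring X - \mathring X\,\partial\mathring Y).
\]
The task then reduces to absorbing the last integral. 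By Lemma~\ref{somestuffthatisuseful} (applied on $\overline{\mathscr{B}_A}\cup\overline{\mathscr{B}_H}$ and near the poles), $X_K^{-1}|\partial Y_K|$ is bounded (indeed decays), so the cross term is bounded pointwise by $C|\partial \mathring X|\,|\mathring Y| + C|\partial \mathring Y|\,|\mathring X|$; the first piece is absorbed using Young's inequality into $\int |\partial \mathring X|^2$ at the cost of $C\int |\mathring Y|^2$, and similarly the second into $\int|\partial \mathring Y|^2$ at the cost of $C\int|\mathring X|^2$. It then remains to control $\int \mathring X^2$ and $\int \mathring Y^2$ (over the relevant compact-ish region where $X_K^{-1}\partial Y_K$ is not decaying) by the left-hand side: $\int \mathring X^2$ follows from the Hardy inequality Lemma~\ref{bar} bounded by $C\int|\partial \mathring X|^2$ (with the $\langle r\rangle^{-2}$ weight being harmless where needed, or one combines with the far-region decay), and for $\int|\mathring Y|^2$ one uses that $|\partial h|^2 \gtrsim 1$ away from infinity together with the term $\int |\partial h|^2 \mathring Y^2$ we are trying to produce—so one should instead first establish $\int |\partial h|^2 \mathring Y^2 \leq (\text{RHS})$ and bootstrap. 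Concretely, since $|\partial X_K|^2/X_K^2 \sim |\partial h|^2$ by Lemma~\ref{lem:xK-regularity} and \eqref{Xlikeh}, the term $\int X_K^{-2}|\partial Y_K|^2 \mathring X^2$ is already controlled, and to handle $\mathring Y$ we additionally invoke Lemma~\ref{poin} in the form $\int |\partial X_K|^2 X_K^{-2}\mathring Y^2 \leq$ (after multiplying through by $X_K^{-2}$, or rather using its proof adapted to the weight $e^{-2h}$ is not what we want)—more carefully, Lemma~\ref{poin} gives $\int |\partial h|^2 f^2 \lesssim \int |\partial f|^2$ for the \emph{unweighted} $L^2$, applied with $f = \mathring Y$ directly, yielding $\int|\partial h|^2 \mathring Y^2 \leq C\int |\partial \mathring Y|^2$, which is exactly the missing ingredient.

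So the clean order is: (i) expand $\mathcal{L}$, drop the non-negative square, isolate the cross term; (ii) apply Young to split the cross term, producing errors $C\int\mathring X^2 + C\int\mathring Y^2$ localized to where $X_K^{-1}\partial Y_K$ is non-negligible; (iii) absorb $\int \mathring X^2$ via Hardy (Lemma~\ref{bar}) into $\epsilon\int|\partial\mathring X|^2$; (iv) absorb $\int\mathring Y^2$ via the Poincaré inequality Lemma~\ref{poin} into $\epsilon\int|\partial\mathring Y|^2$; (v) choose $\epsilon$ small to move everything back to the left. The finiteness hypothesis \eqref{finitestuff} is what guarantees all these integrals converge so the manipulations (integration by parts, divergence theorem in the proof of Lemma~\ref{poin}) are legitimate, and the density of $C_0^\infty$ lets one reduce to that case. \textbf{The main obstacle} I anticipate is bookkeeping the decay of $X_K^{-1}\partial Y_K$ and $|\partial h|$ carefully enough near the axis, the horizon, and the corner points $p_N, p_S$—this is where Lemma~\ref{somestuffthatisuseful} and Lemma~\ref{lem:xK-regularity} must be deployed precisely, since the naive pointwise bound on the cross term is only useful where the coefficient is genuinely bounded, and one must confirm that Lemma~\ref{poin}'s proof (which uses harmonicity of $h$ and a boundary-term vanishing as $\rho\to 0$) indeed applies to give the needed weighted estimate on $\mathring Y$ without any competing singular contribution.
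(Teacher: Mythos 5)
There is a genuine gap in step (iv): Lemma~\ref{poin} does \emph{not} say $\int |\partial h|^2 f^2 \leq C \int |\partial f|^2$. What it actually gives, after using $|\partial X_K| \sim e^h |\partial h|$ and $X_K \sim e^h$, is the \emph{weighted} inequality $\int e^{2h}|\partial h|^2 f^2 \leq C \int e^{2h}|\partial f|^2$ — the weight $e^{2h} \sim X_K^2 \sim \rho^4$ (near the axis) appears on both sides and cannot simply be stripped. The unweighted version you invoke, applied to $f = \mathring Y$, is in fact false: since $|\partial h| \sim \rho^{-1}$ near $\mathscr{A}$ and the axis is codimension~$2$ in $\mathbb{R}^3$, a Hardy inequality of the form $\int_{\mathbb{R}^3} \rho^{-2} f^2 \leq C \int |\nabla f|^2$ does not hold (the two-dimensional Hardy inequality has a logarithmic failure), and there is no reason for $\mathring Y \in C_0^\infty(\mathbb{R}^3)$ to vanish at the axis. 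The correct way to deploy Lemma~\ref{poin} is to apply it to $\tilde Y \doteq X_K^{-1}\mathring Y$, not to $\mathring Y$: then the left side becomes $\int |\partial X_K|^2 X_K^{-2} \mathring Y^2 \sim \int |\partial h|^2 \mathring Y^2$, exactly what is wanted, and the right side becomes $\int X_K^2 |\partial \tilde Y|^2 = \int |\partial\mathring Y - X_K^{-1}(\partial X_K)\mathring Y|^2$, which one must first control. The paper controls this last quantity by a triangle inequality involving the second and third squares in $\mathcal{L}$ (notice $\partial\mathring Y - X_K^{-1}(\partial X_K)\mathring Y = \bigl(\partial\mathring Y - X_K^{-1}(\partial Y_K)\mathring X\bigr) + X_K^{-1}(\mathring X\partial Y_K - \mathring Y\partial X_K)$), which keeps everything manifestly non-negative.

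Beyond this, your strategy of expanding the two squares and trying to absorb the cross term $2\int X_K^{-1}\partial Y_K\cdot(\mathring Y\,\partial\mathring X - \mathring X\,\partial\mathring Y)$ introduces a circularity you flag but do not resolve: absorbing the piece involving $\mathring X\,\partial\mathring Y$ requires $\int \mathring X^2$ weighted by the (decaying) coefficient, which needs $\int|\partial\mathring X|^2$; but the term $|\partial\mathring X|^2$ you are hoping to isolate on the left has itself been contaminated by the same cross term, so the error terms do not cleanly close without a careful ordering. The paper sidesteps this entirely by never expanding the squares: it bounds the target quantities one at a time in a fixed order ($\int X_K^2|\partial\tilde Y|^2$, then $\int|\partial h|^2\mathring Y^2$ via the $\tilde Y$-substitution and Lemma~\ref{poin}, then $\int X_K^{-2}\mathring X^2|\partial Y_K|^2$, then $\int|\partial\mathring Y|^2$, then $\int X_K^{-2}\mathring Y^2|\partial Y_K|^2$ via Lemma~\ref{bar} and Lemma~\ref{somestuffthatisuseful}, then $\int|\partial\mathring X|^2$) using at each stage only the triangle inequality and quantities already controlled. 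You should rework the argument along those lines — the substitution $\tilde Y = X_K^{-1}\mathring Y$ is the key device you are missing.
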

\begin{proof}Let's introduce the notation\index{Miscellaneous!$\mathcal{L}_{0}(\mathring X,\mathring Y)$}
\[\mathcal{L}_0\left(\mathring{X},\mathring{Y}\right) \doteq \mathcal{L}\left(\mathring{X},\mathring{Y}\right) - 2\int_{\mathbb{R}^3}\left(\mathring{X}H_X + \mathring{Y}H_Y\right).\]

We start by observing that
\begin{align*}
\left|\partial \mathring{Y} - X_K^{-1}\partial X_K \mathring{Y}\right|^2 &= \left|\partial \mathring{Y} - X_K^{-1}\partial Y_K\mathring{X} + X_K^{-1}\partial Y_K \mathring{X} - X_K^{-1}\partial X_K \mathring{Y}\right|^2
\\ \nonumber &\leq C\left|\partial \mathring{Y} - X_K^{-1}\partial Y_K\mathring{X}\right|^2 + C\left|X_K^{-1}\partial Y_K \mathring{X} - X_K^{-1}\partial X_K \mathring{Y}\right|^2.
\end{align*}
Thus,
\[\int_{\mathbb{R}^3}\left|\partial \mathring{Y} - X_K^{-1}\partial X_K \mathring{Y}\right|^2 \leq C\mathcal{L}_0\left(\mathring{X},\mathring{Y}\right).\]
Now, define $\tilde Y$ by
\[\mathring{Y} \doteq X_K\tilde{Y},\]
and observe that
\[\partial \mathring{Y} - X_K^{-1}\partial X_K \mathring{Y} = X_K\partial\tilde{Y} +\partial X_K \tilde{Y} - \partial X_K \tilde{Y} = X_K\partial\tilde{Y}.\]
Thus,
\[\int_{\mathbb{R}^3}X_K^2\left|\partial\tilde{Y}\right|^2 \leq C\mathcal{L}_0\left(\mathring{X},\mathring{Y}\right).\]

Now, Lemma~\ref{poin} (and the assumptions~(\ref{finitestuff})) imply that
\[\int_{\mathbb{R}^3}\left|\partial X_K\right|^2\tilde{Y}^2 \leq C\int_{\mathbb{R}^3}X_K^2\left|\partial\tilde{Y}\right|^2 \leq C\mathcal{L}_0\left(\mathring{X},\mathring{Y}\right).\]
This already yields the necessary estimate for the third term on the left hand side of~(\ref{tocontrol}).

Next, combining this with the third term in $\mathcal{L}_0$, we find that
\[\int_{\mathbb{R}^3}X_K^{-2}\mathring{X}^2\left|\partial Y_K\right|^2 \leq C\mathcal{L}_0\left(\mathring{X},\mathring{Y}\right).\]

Combining this with the second term in $\mathcal{L}_0$ yields the desired $\dot{H}^1_{\rm axi}$ bound\footnote{Recall that $\dot H_{\rm axi}^{1}$ (as defined in Definition \ref{Lp}) simply means that $\mathring Y(\rho,z)$ thought of as an equation on $\mathbb{R}^{3}$ in cylindrical coordinates, has first derivative in $L^{2}$. } for $\mathring{Y}$:
\[\int_{\mathbb{R}^3}\left|\partial \mathring{Y}\right|^2 \leq C\mathcal{L}_0\left(\mathring{X},\mathring{Y}\right).\]

Next, we may apply Lemma~\ref{bar} to conclude that
\[\int_{\mathbb{R}^3}\frac{\left|\partial Y_K\right|^2}{X_K^2}\mathring{Y}^2 \leq C \int_{\mathbb{R}^3}\langle r\rangle^{-2}\mathring{Y}^2 \leq C \int_{\mathbb{R}^3}\left|\partial\mathring{Y}\right|^2 \leq C\mathcal{L}_0\left(\mathring{X},\mathring{Y}\right).\]
Here, we have used Lemma \ref{somestuffthatisuseful} in the first inequality to obtain decay of $\frac{|\partial Y_{K}|}{X_{K}}$ for  $r$ large. 

Combining this with the first term in $\mathcal{L}_0$ immediately yields that
\[\int_{\mathbb{R}^3}\left|\partial \mathring{X}\right|^2 \leq C\mathcal{L}_0\left(\mathring{X},\mathring{Y}\right),\]
and finishes the proof.
\end{proof}

The $\left|\partial h\right|$ weight which appears in the previous lemma motivates the following definitions.
\begin{definition}Let $u$ be a function and $v$ be a vector field on $\mathbb{R}^{n}$. We introduce the weighted norms \index{Function Spaces!$L_{u}^{p}$}\index{Function Spaces!$L_{v}^{p}$}$L^p_u\left(\mathbb{R}^n\right)$ and $L^p_v\left(\mathbb{R}^n\right)$ by
\[\left\Vert f\right\Vert_{L^p_u\left(\mathbb{R}^n\right)} \doteq \left\Vert  \left|u\right| f\right\Vert_{L^p\left(\mathbb{R}^n\right)}, \qquad \left\Vert f\right\Vert_{L^p_v\left(\mathbb{R}^n\right)} \doteq \left\Vert  \left|v\right| f\right\Vert_{L^p\left(\mathbb{R}^n\right)}.\]

\end{definition}

Lemma~\ref{coercive} allows us to solve the system~(\ref{linX}) and~(\ref{linY}) variationally, when $H_X$ and $H_Y$ are smooth and compactly supported.

\begin{lemma}\label{solve}Suppose that $H_X, H_Y \in C^{\infty}_0\left(\mathbb{R}^3\right)$. Then~(\ref{linX}) and~(\ref{linY}) has a unique weak solution $\left(\mathring{X},\mathring{Y}\right) \in \dot{H}^1_{\rm axi}\left(\mathbb{R}^3\right) \times \left(\dot{H}^1_{\rm axi}\left(\mathbb{R}^3\right)\cap L^2_{\nabla h}\left(\mathbb{R}^3\right)\right)$. By weak solution we mean that for any $(\varphi_1,\varphi_2) \in \dot{H}^1_{\rm axi}\left(\mathbb{R}^3\right) \times \left( \dot{H}^1_{\rm axi}\left(\mathbb{R}^3\right)\cap L^2_{\nabla h}\left(\mathbb{R}^3\right)\right)$ we have
\[\int_{\mathbb{R}^3}\Bigg(\partial\mathring{X}\cdot\partial\varphi_1 - \frac{2\partial Y_K\cdot\partial\mathring{Y}}{X_K}\varphi_1 + \frac{2\left|\partial Y_K\right|^2}{X_K^2}\mathring{X}\varphi_1 - 2\frac{\partial X_K\cdot\partial Y_K}{X_K^2}\mathring{Y}\varphi_1 + H_X\varphi_1\]
\[+ \partial\mathring{Y}\cdot\partial\varphi_2  + \frac{2\partial Y_K\cdot\partial\mathring{X}}{X_K}\varphi_2 + \frac{\left[\left|\partial X_K\right|^2 + \left|\partial Y_K\right|^2\right]}{X_K^2}\mathring{Y}\varphi_2 + H_Y\varphi_2\Bigg)= 0.\]
Finally, the solution is uniquely determined in the class $\dot{H}^1_{\rm axi}\left(\mathbb{R}^3\right)\times \left(\dot{H}^1_{\rm axi}\left(\mathbb{R}^3\right)\cap L^2_{\nabla h}\left(\mathbb{R}^3\right)\right)$.
\end{lemma}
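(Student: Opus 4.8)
The plan is to establish Lemma~\ref{solve} by the direct method of the calculus of variations, minimizing the functional $\mathcal{L}$ over the Hilbert space
\[
\mathcal{H} \doteq \dot H^1_{\rm axi}\left(\mathbb{R}^3\right) \times \left( \dot H^1_{\rm axi}\left(\mathbb{R}^3\right)\cap L^2_{\nabla h}\left(\mathbb{R}^3\right)\right),
\]
realized as the completion of axisymmetric pairs $(\mathring X,\mathring Y)\in C^\infty_0\times C^\infty_0$ under $\|\partial\mathring X\|_{L^2}+\|\partial\mathring Y\|_{L^2}+\||\partial h|\mathring Y\|_{L^2}$. First I would check that $\mathcal{L}$ is well-defined and bounded below on $\mathcal{H}$: since $H_X,H_Y\in C^\infty_0(\mathbb{R}^3)$, H\"older's inequality together with the Sobolev embedding $\dot H^1(\mathbb{R}^3)\hookrightarrow L^6(\mathbb{R}^3)$ gives, for every $\eta>0$, a bound $2\int_{\mathbb{R}^3}\left(\left|\mathring X H_X\right|+\left|\mathring Y H_Y\right|\right)\le \eta\left(\|\partial\mathring X\|_{L^2}^2+\|\partial\mathring Y\|_{L^2}^2\right)+C_\eta$; combining this with the coercivity estimate of Lemma~\ref{coercive} — whose hypothesis~\eqref{finitestuff} is automatically satisfied on $\mathcal{H}$ because $\left|\partial X_K\right|^2/X_K^2\sim\left|\partial h\right|^2$ by~\eqref{Xlikeh}, and which extends from $C^\infty_0$ to all of $\mathcal{H}$ by density — yields, after absorbing,
\[
\|\partial\mathring X\|_{L^2}^2+\|\partial\mathring Y\|_{L^2}^2+\||\partial h|\mathring Y\|_{L^2}^2 \le C\,\mathcal{L}\left(\mathring X,\mathring Y\right)+C.
\]
In particular $\inf_{\mathcal{H}}\mathcal{L}>-\infty$ and every minimizing sequence is bounded in $\mathcal{H}$.

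Next I would run the direct method. Let $(\mathring X_n,\mathring Y_n)$ be a minimizing sequence in $C^\infty_0\times C^\infty_0$; by the displayed estimate it is bounded in $\mathcal{H}$, so after passing to a subsequence it converges weakly in $\mathcal{H}$ to some $(\mathring X,\mathring Y)$, and, by Rellich's theorem, $\mathring X_n\to\mathring X$ and $\mathring Y_n\to\mathring Y$ strongly in $L^2_{\rm loc}(\mathbb{R}^3)$ and almost everywhere. The linear part $2\int(\mathring X_n H_X+\mathring Y_n H_Y)$ then converges to its value at $(\mathring X,\mathring Y)$ since $H_X,H_Y$ have compact support.

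The heart of the argument is the weak lower semicontinuity of the quadratic part $\mathcal{L}_0$ of $\mathcal{L}$ (the functional from the proof of Lemma~\ref{coercive}). Each of the three terms of $\mathcal{L}_0$ is $\|\cdot\|_{L^2(\mathbb{R}^3)}^2$ of an expression affine-linear in $(\partial\mathring X,\partial\mathring Y,\mathring X,\mathring Y)$ with fixed Kerr coefficients, so it suffices to show that each such expression converges weakly in $L^2$. For the plain gradient pieces $\partial\mathring X_n,\partial\mathring Y_n$ this is the weak convergence in $\mathcal{H}$. For the zeroth-order pieces multiplied by $X_K^{-1}\partial Y_K$: Lemma~\ref{somestuffthatisuseful}, together with the behaviour of $X_K$ near the axis, horizon, and poles recorded in Lemmas~\ref{lem:xK-regularity} and~\ref{somestuffthatisuseful}, shows $X_K^{-1}\partial Y_K\in L^3(\mathbb{R}^3)$, so $X_K^{-1}(\partial Y_K)\mathring X_n$ and $X_K^{-1}(\partial Y_K)\mathring Y_n$ are bounded in $L^2$ by the $L^6$ bound from Sobolev and converge almost everywhere, hence weakly in $L^2$. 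For the remaining piece $X_K^{-1}(\partial X_K)\mathring Y_n$, which does not lie in $L^3$ near the axis, I would instead write it as $\big(|\partial h|^{-1}X_K^{-1}\partial X_K\big)\cdot\big(|\partial h|\mathring Y_n\big)$: by~\eqref{Xlikeh} the first factor is a bounded function, while $|\partial h|\mathring Y_n$ is bounded in $L^2$ (the $L^2_{\nabla h}$ bound) and converges almost everywhere, so again the product converges weakly in $L^2$. Weak lower semicontinuity of the $L^2$-norm then gives $\mathcal{L}_0(\mathring X,\mathring Y)\le\liminf_n\mathcal{L}_0(\mathring X_n,\mathring Y_n)$, so $(\mathring X,\mathring Y)$ minimizes $\mathcal{L}$.

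Finally, since $\mathcal{L}$ is quadratic its first variation at the minimizer vanishes in every direction of $\mathcal{H}$, and integrating by parts exactly as in the Proposition above identifying~\eqref{linX}--\eqref{linY} as the Euler--Lagrange equations of $\mathcal{L}$ shows that this is precisely the weak formulation in the statement, first for $(\varphi_1,\varphi_2)\in C^\infty_0\times C^\infty_0$ and then, by density, for all $(\varphi_1,\varphi_2)\in\mathcal{H}$ (the coefficient bounds above guarantee every term is continuous in the test function). For uniqueness, if $(\mathring X,\mathring Y)\in\mathcal{H}$ solves the homogeneous system, inserting $(\varphi_1,\varphi_2)=(\mathring X,\mathring Y)$ gives $\mathcal{L}_0(\mathring X,\mathring Y)=0$, whence Lemma~\ref{coercive} with $H_X=H_Y=0$ forces $\partial\mathring X=\partial\mathring Y=0$; being $\dot H^1(\mathbb{R}^3)$ functions with vanishing gradient (hence $L^6$ functions equal to a constant), $\mathring X$ and $\mathring Y$ are identically $0$. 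I expect the main obstacle to be the weak lower semicontinuity step — specifically, controlling the mixed Kerr coefficients $X_K^{-1}\partial Y_K$ and $X_K^{-1}\partial X_K$ uniformly near the horizon, the axis, and the poles $p_N,p_S$, where $|\partial h|$ degenerates, so that the two product structures above genuinely yield weak $L^2$ convergence; verifying that the admissible test-function class enlarges from $C^\infty_0$ to all of $\mathcal{H}$ is a secondary point requiring the same estimates.
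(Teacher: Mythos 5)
Your proposal follows the same route as the paper: realize $\mathcal{H}=\dot H^1_{\rm axi}\times(\dot H^1_{\rm axi}\cap L^2_{\nabla h})$, use Lemma~\ref{coercive} and Lemma~\ref{bar} to get coercivity and boundedness below, run the direct method to produce a minimizer, take variations to identify it with the weak formulation, and obtain uniqueness by testing the homogeneous system with $(\mathring X,\mathring Y)$ itself and invoking coercivity again. The one place you expand on the paper — which dispatches the lower semicontinuity step with ``a standard argument'' — is your careful decomposition of $\mathcal{L}_0$ into three $L^2$-norm-squared terms with affine arguments, tracking $X_K^{-1}\partial Y_K\in L^3$ and the product structure $(|\partial h|^{-1}X_K^{-1}\partial X_K)\cdot(|\partial h|\mathring Y_n)$ to get weak $L^2$ convergence of each; this is a legitimate and correct filling-in of detail rather than a genuinely different argument.
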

\begin{remark}Note that it follows immediately from elliptic theory that $\mathring{X}$ and $\mathring{Y}$ are smooth classical solutions in $\mathbb{R}^3\setminus \{ \{\rho = 0\} \cap z\not\in (-\gamma,\gamma)\}$.
\end{remark}
\begin{proof}It easily follows from Lemma~\ref{bar}, Lemma~\ref{coercive}, and the compact support of $H_X$ and $H_Y$ that
\[\hat{\alpha} \doteq \inf_{\left(f,g\right) \in \dot{H}^1_{\rm axi} \times \left(\dot{H}^1_{\rm axi}\left(\mathbb{R}^3\right)\cap L^2_{\nabla h}\left(\mathbb{R}^3\right)\right)}\mathcal{L}\left(f,g\right),\]
exists and is finite. Let $\{(f_k,g_k)\}_{k=1}^{\infty}$ be a minimizing sequence. Passing to a subsequence we can assume that $(f_k,g_k)$ converge pointwise almost everywhere to some $(\mathring{X},\mathring{Y})$. Furthermore, we can assume the convergence happens weakly in $\dot{H}^1_{\rm axi}$ and strongly in $L^2$ on compact subsets of $\mathbb{R}^3$. A standard argument implies that
\[\mathcal{L}\left(\mathring{X},\mathring{Y}\right) = \hat{\alpha}.\]

Let $(\varphi_1,\varphi_2) \in \dot{H}^1_{\rm axi}\left(\mathbb{R}^3\right) \times \left(\dot{H}^1_{\rm axi}\left(\mathbb{R}^3\right)\cap L^2_{\nabla h}\left(\mathbb{R}^3\right)\right)$. By definition of $\hat\alpha$, we must have
\[\frac{d}{ds}\mathcal{L}\left(\mathring{X} + s\varphi_1,\mathring{Y} + s\varphi_2\right) = 0.\]
Since~(\ref{linX}) and~(\ref{linY}) are the Euler--Lagrange equations of $\mathcal{L}$, the existence statement of the lemma immediately follows.

Now we turn to uniqueness. Since the equations are linear, it suffices to show that $H_X = H_Y = 0$ implies that $(\mathring{X},\mathring{Y}) = 0$. In this case, plugging in $\left(\varphi_1,\varphi_2\right) = (\mathring{X},\mathring{Y})$ into the weak formulation of the equations, it follows immediately that $\mathcal{L}\left(\mathring{X},\mathring{Y}\right) = 0$. Then Lemma~\ref{coercive} implies that $(\mathring{X},\mathring{Y})$ vanishes.
\end{proof}
\subsubsection{A Second Renormalization and a Conformal Change of Coordinates near $p_N$ and $p_S$}
In the equation~(\ref{linY}) the term multiplying $\mathring{Y}$ is very singular along $\{\rho = 0\}$ (the reader may wish to recall~(\ref{Xlikeh}) and~(\ref{hsing})). Before we can proceed with further estimates with $\mathring{Y}$, we will require an additional renormalization and function spaces.

We define\index{Metric Quantities!$\tilde Y$}
\[\tilde Y \doteq X_K^{-1}\mathring{Y}.\]
(This renormalization has in fact already appeared in the proof of Lemma~\ref{coercive}).

A straightforward calculation shows that $\tilde Y$ weakly satisfies the following equation on $\mathbb{R}^3\setminus\{\rho =0\}$:
\begin{equation}\label{tildeeqnY}
\Delta_{\mathbb{R}^3}\tilde Y + \frac{2\partial X_K\cdot\partial \tilde Y}{X_K} - \frac{2\left|\partial Y_K\right|^2}{X_K^2}\tilde Y - \frac{2\partial Y_K\cdot\partial\mathring{X}}{X_K^2} = H_YX_K^{-1}.
\end{equation}
Next, see~(\ref{hsing}), we note that
\[\frac{2\partial X_K\cdot\partial \tilde Y}{X_K} = \frac{4}{\rho}\partial_{\rho}\tilde Y + \tilde{e}_A\cdot \partial \tilde Y,\]
for some vector field \index{Metric Quantities!$\tilde{e}_{A}$}$\tilde{e}_A$ which is smooth in $\overline{\mathscr{B}_A}$.

In particular, if we write the flat metric on $\mathbb{R}^7$ as
\[d\rho^2 + \rho^2d\mathbb{S}^5 + dz^2,\]
and identify $\tilde Y\left(\rho,z\right)$ as a function on $\mathbb{R}^7$ in the obvious way, then we may write~(\ref{tildeeqnY}) as
\begin{equation}\label{tildeeqnYA}
\Delta_{\mathbb{R}^7}\tilde Y + \tilde{e}_A\cdot\partial \tilde Y - \frac{2\left|\partial Y_K\right|^2}{X_K^2}\tilde Y - \frac{2\partial Y_K\cdot\partial\mathring{X}}{X_K^2} = H_YX_K^{-1}.
\end{equation}
Furthermore, note that
\begin{equation}\label{xterm}
\frac{\partial Y_K\cdot\partial \mathring{X}}{X_K^2} = \tilde{d}_A\cdot \partial\mathring{X} + \frac{\check{d}_A}{\rho}\partial_{\rho}\mathring{X},
\end{equation}
where in $\overline{\mathscr{B}_A}$, the vector field \index{Metric Quantities!$\tilde{d}_{A}$}$\tilde{d}_A$ and function \index{Metric Quantities!$\check{d}_{A}$}$\check{d}_A$ are smooth and all of their derivatives are bounded by a constant times $r^{-4}$. Thus, within the set $\overline{\mathscr{B}_A}$, the equation~(\ref{tildeeqnYA}) has smooth coefficients multiplying $\tilde Y$ (note also that $\rho^{-1}\partial_{\rho}\mathring{X}$ is bounded by the $\mathbb{R}^3$ Hessian of $\mathring{X}$ via Lemma~\ref{firstdecay}).

It is this equation that we will use to ultimately establish the $C^{2,\alpha_0}\left(\overline{\mathscr{B}_A}\right)$ estimates for $\tilde Y$. To emphasize that this identification with $\mathbb{R}^7$ only makes sense within the set $\overline{\mathscr{B}_A}\cup\overline{\mathscr{B}_H}$, we will write \index{Miscellaneous!$\mathbb{R}^7_A$}$\mathbb{R}^7_A$. Then, for any function $f(\rho,z) : \overline{\mathscr{B}_A}\cup\overline{\mathscr{B}_H} \to \mathbb{R}$, we define Sobolev norms
\[\left\Vert f\right\Vert_{\dot{W}^{k,p}\left(\mathbb{R}^7_A\right)} \doteq \left(\int\int_{\overline{\mathscr{B}_A}}\left|\partial^kf\right|^p\, \rho^5\, dz\, d\rho\right)^{1/p}.\]
We similarly define $W^{k,p}\left(\mathbb{R}^7_A\right)$ and $C^{k,\alpha_0}\left(\mathbb{R}^7_A\right)$ norms.

It also turns out\footnote{See \cite{chrusciellopes} for a thorough discussion of this point.} that both $\mathring{X}$'s and $\mathring{Y}$'s (and $\tilde Y$'s) equation have singularities near the points $p_N$ and $p_S$. Geometrically, this singularities are due to the fact that in the actual black hole spacetime, the axis of symmetry should meet the event horizon orthogonally, but in $(\rho,z)$ coordinates, they appear to be parallel. This difficulty is resolved by the use of $(s,\chi)$ coordinates. Indeed, using the conformal invariance of $\mathring{Y}$'s equation, and writing the flat metric on $\mathbb{R}^4$ in double polar coordinates:
\[ds^2 + s^2d\phi_1^2 + d\chi^2 + \chi^2d\phi_2^2,\]
we find that
\begin{equation}\label{tildeeqnYp}
\underline{\Delta}_{\mathbb{R}^4_N}\tilde Y + \frac{2\underline{\partial} X_K\cdot\underline{\partial} \tilde Y}{X_K} - \frac{2\left|\underline{\partial} Y_K\right|^2}{X_K^2}\tilde Y - \frac{2\underline{\partial} Y_K\cdot\underline{\partial}\mathring{X}}{X_K^2} = \left(\chi^2+s^2\right)H_YX_K^{-1}.
\end{equation}
We underline the Laplacian to emphasize that it arises from $(s,\chi)$ coordinates. Also, similarly to the above we shall write \index{Miscellaneous!$\mathbb{R}^{4}_{N}$}$\mathbb{R}^4_N$ to emphasize that this only makes sense on the set $\overline{\mathscr{B}_N}$.

This time we have
\[\frac{2\underline{\partial} X_K\cdot\underline{\partial} \tilde Y}{X_K} = \frac{4}{s}\underline{\partial}_s\tilde Y + \tilde{e}_N\cdot \underline{\partial} \tilde Y,\]
for a vector field \index{Metric Quantities!$\tilde{e}_N$}$\tilde{e}_N$ which is smooth in $\overline{\mathscr{B}_N}$. Thus, analogously to before, we obtain
\begin{equation}\label{tildeeqnYN}
\underline{\Delta}_{\mathbb{R}^8}\tilde Y + \tilde{e}_N\cdot\underline{\partial} \tilde Y - \frac{2\left|\underline{\partial} Y_K\right|^2}{X_K^2}\tilde Y - \frac{2\underline{\partial} Y_K\cdot\underline{\partial}\mathring{X}}{X_K^2} = \left(\chi^2+s^2\right)H_YX_K^{-1}.
\end{equation}
Finally, we also may introduce the $\dot{W}^{k,p}\left(\mathbb{R}^8_N\right)$ and $C^{k,\alpha_0}\left(\mathbb{R}^8_N\right)$ norms.

Of course, the exact same procedure may be carried out in the set $\overline{\mathscr{B}_S}$. The relevant equation is

\begin{equation}\label{tildeeqnYS}
\underline{\Delta}_{\mathbb{R}^8}\tilde Y + \tilde{e}_S\cdot\underline{\partial} \tilde Y - \frac{2\left|\underline{\partial} Y_K\right|^2}{X_K^2}\tilde Y - \frac{2\underline{\partial} Y_K\cdot\underline{\partial}\mathring{X}}{X_K^2} = \left((\chi')^2+(s')^2\right)H_YX_K^{-1},
\end{equation}
and we obtain the $\dot{W}^{k,p}\left(\mathbb{R}^8_S\right)$ and $C^{k,\alpha_0}\left(\mathbb{R}^8_S\right)$ norms.

Finally, we may also carry out the same procedure for $\mathring{X}$'s equation. We obtain
\begin{equation}\label{linXcorrect}
\underline{\Delta}_{\mathbb{R}^4}\mathring{X} + \frac{2\underline{\partial} Y_K\cdot\partial\mathring{Y}}{X_K} - \frac{2\left|\underline{\partial} Y_K\right|^2}{X_K^2}\mathring{X} + 2\frac{\underline{\partial} X_K\cdot\underline{\partial} Y_K}{X_K^2}\mathring{Y} = \left(\chi^2+s^2\right)H_X,
\end{equation}
and we obtain the $\dot{W}^{k,p}\left(\mathbb{R}^4_N\right)$, $\dot{W}^{k,p}\left(\mathbb{R}^4_S\right)$, etc., norms.
\subsubsection{The Estimates}
In this section we will establish estimates for solutions to~(\ref{linX}) and~(\ref{linY}). We start with the most basic $L^2$ estimates and work our way up to $C^{2,\alpha_0}$ estimates. Throughout this section we will suppose that $H_X,H_Y \in C^{\infty}_0\left(\mathbb{R}^3\right)$ (at the very end we will easily be able to weaken these assumptions on $H_X$ and $H_Y$ with a standard density argument), and let $(\mathring{X},\mathring{Y}) \in \dot{H}^1_{\rm axi}\left(\mathbb{R}^3\right) \times \left(\dot{H}^1_{\rm axi}\left(\mathbb{R}^3\right) \cap L^2_{\nabla h}\left(\mathbb{R}^3\right)\right)$ be the unique solution to~(\ref{linX}) and~(\ref{linY}).

We start with the ``free'' energy estimate which comes from the variational structure of Section~\ref{variational}.
\begin{lemma}\label{energy}There exists a constant $C > 0$ such that we have
\[\left\Vert \mathring{X}\right\Vert_{\dot{H}^1_{\rm axi}\left(\mathbb{R}^3\right)} + \left\Vert \mathring{Y}\right\Vert_{\dot{H}^1_{\rm axi}\left(\mathbb{R}^3\right) \cap L^2_{\nabla h}\left(\mathbb{R}^3\right)} \leq C\left[\left\Vert H_X\mathring{X}\right\Vert_{L^1\left(\mathbb{R}^3\right)}^{1/2} + \left\Vert H_Y \mathring{Y}\right\Vert_{L^1\left(\mathbb{R}^3\right)}^{1/2}\right].\]
\end{lemma}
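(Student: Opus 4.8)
The plan is to obtain this "free energy estimate" directly from the coercivity Lemma~\ref{coercive} together with the fact that the minimizer produced by Lemma~\ref{solve} has a non-positive Lagrangian value. First I would note that $(\mathring X,\mathring Y)$ being a weak solution means that $\mathcal{L}(\mathring X, \mathring Y) = \hat\alpha \leq \mathcal{L}(0,0) = 0$, since $(0,0)$ is an admissible competitor. Hence $\mathcal{L}(\mathring X, \mathring Y) \leq 0$, which rearranges to
\[
\mathcal{L}_0(\mathring X, \mathring Y) = \mathcal{L}(\mathring X, \mathring Y) - 2\int_{\mathbb{R}^3}\left(\mathring X H_X + \mathring Y H_Y\right) \leq -2\int_{\mathbb{R}^3}\left(\mathring X H_X + \mathring Y H_Y\right) \leq 2\left(\left\Vert H_X \mathring X\right\Vert_{L^1} + \left\Vert H_Y\mathring Y\right\Vert_{L^1}\right).
\]
Then Lemma~\ref{coercive} (applied first to smooth compactly supported approximations and passed to the limit, or directly since the finiteness hypotheses~(\ref{finitestuff}) hold for the minimizer) gives
\[
\left\Vert \mathring X\right\Vert_{\dot H^1_{\rm axi}}^2 + \left\Vert \mathring Y\right\Vert_{\dot H^1_{\rm axi}}^2 + \left\Vert |\partial h|\mathring Y\right\Vert_{L^2}^2 \leq C\left[\mathcal{L}(\mathring X,\mathring Y) + 2\int_{\mathbb{R}^3}\left(|\mathring X H_X| + |\mathring Y H_Y|\right)\right] \leq C\left(\left\Vert H_X\mathring X\right\Vert_{L^1} + \left\Vert H_Y\mathring Y\right\Vert_{L^1}\right),
\]
where in the last step I used $\mathcal{L}(\mathring X,\mathring Y)\leq 0$ again. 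Taking square roots and using $|\partial h| \sim |\partial X_K|/X_K$ up to constants (via Lemma~\ref{lem:xK-regularity}, which identifies the $L^2_{\nabla h}$ weight with the natural weight in the $\mathcal{L}_Y$/coercivity framework) yields the claimed inequality, since the $L^2_{\nabla h}(\mathbb{R}^3)$ norm of $\mathring Y$ is by definition $\left\Vert |\partial h|\mathring Y\right\Vert_{L^2}$.

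The one point requiring a little care is the passage from the smooth compactly supported setting of Lemma~\ref{coercive} to the weak solution $(\mathring X,\mathring Y)$ itself, which a priori only lies in $\dot H^1_{\rm axi}\times(\dot H^1_{\rm axi}\cap L^2_{\nabla h})$. I would handle this by a standard truncation-and-mollification argument: approximate $(\mathring X,\mathring Y)$ by $(\mathring X_k, \mathring Y_k)\in C^\infty_0(\mathbb{R}^3)^2$ converging in the relevant norms (the density being built into the definitions of the function spaces, cf.\ Definitions~\ref{Lp} and~\ref{norm}), apply Lemma~\ref{coercive} to each $(\mathring X_k,\mathring Y_k)$, and observe that all the integrals appearing in $\mathcal{L}_0$ are continuous with respect to this convergence — the gradient terms and the weighted $|\partial h|^2\mathring Y^2$ term by construction, the cross terms $\int \partial Y_K\cdot\partial \mathring Y\, X_K^{-1}\mathring X$ etc.\ using the decay of $|\partial Y_K|/X_K$ from Lemma~\ref{somestuffthatisuseful}, and the source terms $\int \mathring X H_X$, $\int \mathring Y H_Y$ using compact support of $H_X, H_Y$. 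Alternatively, one can simply observe that the inequality in Lemma~\ref{coercive} is stated with a universal constant and its proof uses only the divergence theorem and absorbing inequalities, both of which survive the limit directly once one checks the boundary term on $\{\rho = \epsilon\}$ still vanishes for $\mathring Y \in L^2_{\nabla h}$.

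I do not expect any genuine obstacle here — this is essentially a bookkeeping lemma extracting the "coercivity + minimizer has non-positive energy" mechanism. The only mildly delicate step is confirming that $\mathcal{L}(\mathring X,\mathring Y)$ is actually finite and non-positive for the weak solution (so that the quantity $\mathcal{L}_0$ appearing in Lemma~\ref{coercive} makes sense), which follows from the minimization characterization in the proof of Lemma~\ref{solve} — the minimizer realizes the infimum $\hat\alpha$, which is finite by the argument there and is $\leq \mathcal{L}(0,0) = 0$. Everything else is then a direct chain of inequalities.
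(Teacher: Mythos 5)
Your proof is correct and takes essentially the same route as the paper, which also derives the estimate as an immediate consequence of Lemma~\ref{coercive} together with the observation that $\mathcal{L}(\mathring X,\mathring Y)\leq \mathcal{L}(0,0)=0$. Your intermediate computation bounding $\mathcal{L}_0$ is a harmless detour (Lemma~\ref{coercive} already bounds the gradient norms by $\mathcal{L}+2\int(|\mathring X H_X|+|\mathring Y H_Y|)$, so $\mathcal{L}\leq 0$ finishes it directly), and your remarks on the density/approximation step and the definition of the $L^2_{\nabla h}$ norm are accurate, if more explicit than what the paper records.
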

\begin{proof}This is an immediate consequence of Lemma~\ref{coercive} and the observation that we must have $\mathcal{L}\left(\mathring{X},\mathring{Y}\right) \leq 0$.
\end{proof}
\begin{remark}Note that due to the compact support of $H_X$ and $H_Y$ and Lemma~\ref{bar}, the right hand side of this estimate is finite.
\end{remark}

Next, working away from the points $p_N$ and $p_S$ where $\mathring{X}$'s equation is singular, we apply a standard elliptic Hessian estimate to~(\ref{linX}), and apply Lemma~\ref{bar} to handle the lower order terms.

\begin{lemma}There exists a constant $C > 0$ such that
\[\left\Vert \left(1-\xi_N-\xi_S\right)\mathring{X}\right\Vert_{\dot{H}^2\left(\mathbb{R}^3\right)} \leq C\left[\left\Vert \left(1-\xi_N-\xi_S\right)H_X\right\Vert_{L^2\left(\mathbb{R}^3\right)}+ \left\Vert H_X\mathring{X}\right\Vert_{L^1\left(\mathbb{R}^3\right)}^{1/2} + \left\Vert H_Y \mathring{Y}\right\Vert_{L^1\left(\mathbb{R}^3\right)}^{1/2}\right].\]
\end{lemma}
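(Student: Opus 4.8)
The plan is to cut off the equation \eqref{linX} away from the two bad points $p_N,p_S$ and apply the standard interior/global $\dot H^2$ elliptic estimate for the flat Laplacian on $\mathbb R^3$, absorbing the first-order and zeroth-order terms using the weighted energy bound of Lemma~\ref{energy} together with the Hardy inequality of Lemma~\ref{bar}. First I would set $u \doteq (1-\xi_N-\xi_S)\mathring X$ and compute $\Delta_{\mathbb R^3}u$ from \eqref{linX}: besides the good term $(1-\xi_N-\xi_S)H_X$, one picks up (i) commutator terms $[\Delta_{\mathbb R^3},1-\xi_N-\xi_S]\mathring X$, which are supported in a fixed compact region bounded away from the axis and involve at most one derivative of $\mathring X$, hence are controlled by $\Vert \mathring X\Vert_{\dot H^1(\mathbb R^3)}$ from Lemma~\ref{energy}; and (ii) the lower-order terms of \eqref{linX}, namely $-2X_K^{-1}\partial Y_K\cdot\partial \mathring Y$, $2|\partial Y_K|^2 X_K^{-2}\mathring X$, and $-2X_K^{-2}\partial X_K\cdot\partial Y_K\,\mathring Y$, all multiplied by $(1-\xi_N-\xi_S)$.

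The key point is that on the support of $1-\xi_N-\xi_S$ we are at a bounded distance from $p_N$ and $p_S$, and Lemma~\ref{somestuffthatisuseful} (together with \eqref{Xlikeh}) gives that $|\partial Y_K|/X_K$ and $|\partial X_K|\,|\partial Y_K|/X_K^2 \sim |\partial h|\,|\partial Y_K|/X_K$ decay like $r^{-4}$ as $r\to\infty$ and are bounded on the rest of the relevant region; in particular these coefficients are bounded (indeed in $L^\infty \cap L^2$ against the measure $\rho\,dz\,d\rho$). Hence the $\mathring Y$-gradient term is bounded in $L^2(\mathbb R^3)$ by $\Vert\partial\mathring Y\Vert_{L^2(\mathbb R^3)}$, and the two zeroth-order terms are bounded in $L^2(\mathbb R^3)$ by $\Vert \langle r\rangle^{-1}\mathring X\Vert_{L^2}$ and $\Vert \langle r\rangle^{-1}\mathring Y\Vert_{L^2}$ respectively — these in turn are controlled by $\Vert\partial\mathring X\Vert_{L^2(\mathbb R^3)}$ and $\Vert\partial\mathring Y\Vert_{L^2(\mathbb R^3)}$ via Lemma~\ref{bar}. (For the $\mathring Y$ zeroth-order term one may alternatively use the $L^2_{\nabla h}$ control of $\mathring Y$ from Lemma~\ref{energy}, since $|\partial X_K|\,|\partial Y_K|/X_K^2 \lesssim |\partial h|$ on this region.) Putting these together, $\Delta_{\mathbb R^3}u \in L^2(\mathbb R^3)$ with the asserted bound on its norm; since $u$ is an axisymmetric function, $u_{\mathbb R^3}$ solves $\Delta u_{\mathbb R^3} = (\Delta_{\mathbb R^3}u)_{\mathbb R^3}$ on all of $\mathbb R^3$ with $u_{\mathbb R^3}\in \dot H^1(\mathbb R^3)$, so the Calder\'on--Zygmund estimate $\Vert u_{\mathbb R^3}\Vert_{\dot H^2(\mathbb R^3)}\le C\Vert\Delta u_{\mathbb R^3}\Vert_{L^2(\mathbb R^3)}$ closes the argument, after invoking Lemma~\ref{energy} once more to bound the right-hand side by $\Vert H_X\mathring X\Vert_{L^1}^{1/2}+\Vert H_Y\mathring Y\Vert_{L^1}^{1/2}$.

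The main subtlety — and what makes the cutoff by $1-\xi_N-\xi_S$ essential — is precisely the singular behavior near $p_N,p_S$: there the identification of \eqref{linX} with a flat equation is on $\mathbb R^4$ (in $s,\chi$ coordinates, cf.~\eqref{linXcorrect}), the coefficients $|\partial Y_K|^2/X_K^2$ blow up along $\{\rho=0\}$, and one cannot directly run an $\mathbb R^3$ Hessian estimate; handling those regions is deferred to subsequent lemmas using the $\mathbb R^8_N,\mathbb R^8_S$ formulations \eqref{tildeeqnYN},\eqref{tildeeqnYS}, \eqref{linXcorrect}. Away from those points everything is genuinely elliptic with locally bounded coefficients, so the only real work is bookkeeping the decay rates from Lemma~\ref{somestuffthatisuseful} to see that the lower-order terms land in $L^2$ with the right constant; I expect no serious obstacle here beyond that bookkeeping.
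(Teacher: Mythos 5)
Your proposal is correct and takes the same route the paper takes — the paper gives this lemma essentially as a one-line remark (``apply a standard elliptic Hessian estimate to \eqref{linX} and apply Lemma~\ref{bar} to handle the lower-order terms''), and your write-up fills in exactly those details: cut off with $1-\xi_N-\xi_S$, use the Calder\'on--Zygmund bound $\Vert u\Vert_{\dot H^2}\lesssim\Vert\Delta u\Vert_{L^2}$, and estimate the commutator and the three lower-order terms via Lemma~\ref{somestuffthatisuseful}, Lemma~\ref{bar}, and the energy estimate of Lemma~\ref{energy}.

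One minor imprecision: the commutator terms $[\Delta_{\mathbb{R}^3},1-\xi_N-\xi_S]\mathring X$ are \emph{not} supported in a region bounded away from the axis — the support of $\nabla\xi_N$ includes portions of $\mathscr{A}_N$ near $p_N$ (and likewise for $\xi_S$). What saves the argument is the constraint in Definition~\ref{cutoffs} that $\partial_\rho\xi_N=\partial_\rho\xi_S=0$ for $\rho$ small: this makes $\Delta_{\mathbb{R}^3}\xi_N$ bounded (no $\rho^{-1}$ singularity), and since the commutator involves only $\mathring X$ (controlled in $\dot H^1_{\rm axi}$ and, via Hardy, in weighted $L^2$) rather than the singular coefficients multiplying $\mathring Y$, everything still lands in $L^2(\mathbb{R}^3)$. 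Worth restating the justification this way; the conclusion is unaffected.
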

Recall that the cut-offs $\xi_N$, and $\xi_S$ where defined in Definition~\ref{cutoffs}.

Near the points $p_N$ and $p_S$ we use the equation~(\ref{linXcorrect}) to carry out a Hessian estimate. We obtain
\begin{lemma}\label{nearpnps} There exists a constant $C > 0$ such that
\[\left\Vert \xi_N\mathring{X}\right\Vert_{\dot{H}^2\left(\mathbb{R}^4_N\right)} + \left\Vert \xi_S\mathring{X}\right\Vert_{\dot{H}^2\left(\mathbb{R}^4_S\right)} \leq \]
\[C\left[\left\Vert \xi_N\left(\chi^2+s^2\right)H_X\right\Vert_{L^2\left(\mathbb{R}^4_N\right)}+ \left\Vert \xi_S\left(\chi^2+s^2\right)H_X\right\Vert_{L^2\left(\mathbb{R}^4_S\right)} + \left\Vert H_X\mathring{X}\right\Vert_{L^1\left(\mathbb{R}^3\right)}^{1/2} + \left\Vert H_Y \mathring{Y}\right\Vert_{L^1\left(\mathbb{R}^3\right)}^{1/2}\right].\]
\end{lemma}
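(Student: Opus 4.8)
The plan is to localize near $p_N$ (the argument near $p_S$ being identical) and perform a second-order elliptic estimate using the equation in the ``good'' coordinates, namely \eqref{linXcorrect} with the artificial $\mathbb{R}^4_N$ structure coming from the $(s,\chi)$ double-polar coordinates. First I would fix a smooth cut-off $\tilde\xi_N$ which is $\equiv 1$ on $\mathrm{supp}(\xi_N)$ and is supported in a slightly larger neighborhood of $p_N$ inside $\overline{\mathscr{B}_N}$; multiplying \eqref{linXcorrect} by $\xi_N$ and commuting derivatives, I obtain an equation for $\xi_N\mathring X$ on $\mathbb{R}^4_N$ of the schematic form
\[
\underline{\Delta}_{\mathbb{R}^4_N}(\xi_N\mathring X) = \xi_N(\chi^2+s^2)H_X + (\text{lower-order terms}),
\]
where the lower-order terms involve the zeroth-order coefficients $\tfrac{|\underline\partial Y_K|^2}{X_K^2}$ and $\tfrac{\underline\partial X_K\cdot\underline\partial Y_K}{X_K^2}$ multiplying $\mathring X$, $\mathring Y$, and first-order terms from $[\underline\Delta,\xi_N]$ and from the $\tfrac{\underline\partial Y_K\cdot\underline\partial\mathring Y}{X_K}$ coupling, all supported on the compact annular region where $\tilde\xi_N$ cuts off. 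The key point, already implicitly established in Lemma~\ref{lem:xK-regularity} and \eqref{hsing} together with Lemma~\ref{somestuffthatisuseful}, is that all these coefficients are \emph{bounded} on $\overline{\mathscr{B}_N}$ (in particular $\tfrac{|\underline\partial Y_K|}{X_K}\le C\chi$ and $\tfrac{\underline\partial X_K}{X_K}$ has only the $\tfrac{2}{s}\underline\partial_s$ singularity which is exactly absorbed into the higher-dimensional Laplacian). Then standard $\dot H^2$ (i.e.\ Calder\'on--Zygmund) estimates on $\mathbb{R}^4$ give
\[
\Vert \xi_N\mathring X\Vert_{\dot H^2(\mathbb{R}^4_N)} \le C\big[\Vert \xi_N(\chi^2+s^2)H_X\Vert_{L^2(\mathbb{R}^4_N)} + \Vert \tilde\xi_N \mathring X\Vert_{H^1(\mathbb{R}^4_N)} + \Vert\tilde\xi_N\underline\partial\mathring Y\Vert_{L^2(\mathbb{R}^4_N)} + \Vert\tilde\xi_N\mathring Y\Vert_{L^2(\mathbb{R}^4_N)}\big].
\]

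The second step is to control the right-hand side by the free-energy quantities. The gradient terms $\Vert\tilde\xi_N\underline\partial\mathring X\Vert_{L^2}$ and $\Vert\tilde\xi_N\underline\partial\mathring Y\Vert_{L^2}$ are dominated by $\Vert\mathring X\Vert_{\dot H^1_{\mathrm{axi}}(\mathbb{R}^3)}$ and $\Vert\mathring Y\Vert_{\dot H^1_{\mathrm{axi}}(\mathbb{R}^3)}$ respectively — here one uses that on $\mathrm{supp}(\tilde\xi_N)$ the weights relating the $\mathbb{R}^3$ and $\mathbb{R}^4_N$ volume elements and gradients are comparable to constants (the $\rho = s\chi$ degeneration is harmless since we are in a fixed compact region and the conformal factor $s^2+\chi^2$ is bounded above and below there away from $p_N$, while exactly at $p_N$ the higher-dimensional reinterpretation removes the degeneracy). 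The zeroth-order term $\Vert\tilde\xi_N\mathring Y\Vert_{L^2(\mathbb{R}^4_N)}$ needs slightly more care: I would bound it using Lemma~\ref{bar} (the Hardy inequality) after noting that on the compact support region $\langle r\rangle^{-2}$ is bounded below, so $\Vert\tilde\xi_N\mathring Y\Vert_{L^2(\mathbb{R}^3)}^2\le C\int\langle r\rangle^{-2}\mathring Y^2\le C\int|\partial\mathring Y|^2$, and then convert back to the $\mathbb{R}^4_N$ measure over the compact patch. Chaining with Lemma~\ref{energy} turns $\Vert\mathring X\Vert_{\dot H^1_{\mathrm{axi}}}+\Vert\mathring Y\Vert_{\dot H^1_{\mathrm{axi}}\cap L^2_{\nabla h}}$ into $\Vert H_X\mathring X\Vert_{L^1}^{1/2}+\Vert H_Y\mathring Y\Vert_{L^1}^{1/2}$, which is the stated bound. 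For $\xi_S\mathring X$ one repeats verbatim with $(s',\chi')$ and $\mathbb{R}^4_S$, and adds the two estimates.

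The main obstacle I anticipate is \emph{not} the interior elliptic estimate itself but the bookkeeping of the two coordinate systems near $p_N$: one must be sure that multiplying the $\mathbb{R}^3$-adapted equation \eqref{linX} by $\xi_N$ and reinterpreting as an $\mathbb{R}^4_N$ equation genuinely produces \eqref{linXcorrect} with \emph{smooth bounded} coefficients all the way down to $p_N$, and that the commutator terms $[\,\underline\Delta_{\mathbb{R}^4_N},\xi_N\,]\mathring X$ — which live on $\mathrm{supp}(\underline\partial\xi_N)$, a region bounded away from both the axis and the horizon where the $(\rho,z)$ and $(s,\chi)$ pictures are both regular and mutually comparable — are controlled purely by $H^1$ data. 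This is exactly the ``interpret singular lower-dimensional equations as regular higher-dimensional ones'' philosophy flagged in Section~\ref{sec:overview}, and the regularity inputs needed (boundedness of $\tfrac{|\underline\partial Y_K|^2}{X_K^2}$, $\tfrac{\underline\partial X_K\cdot\underline\partial Y_K}{X_K^2}$ near $p_N$, and the structure \eqref{hsing} of $\underline\partial h$) are all already available from Lemmas~\ref{lem:xK-regularity} and~\ref{somestuffthatisuseful}. A secondary (minor) subtlety is that $H_X$ enters weighted by $(\chi^2+s^2)$ in the $\mathbb{R}^4_N$ equation, so the natural norm on the source is $\Vert\xi_N(\chi^2+s^2)H_X\Vert_{L^2(\mathbb{R}^4_N)}$ rather than an unweighted one — this is precisely why the statement is phrased with that weight, and no further work is needed once the equation is written in the form \eqref{linXcorrect}.
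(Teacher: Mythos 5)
Your overall strategy is the correct one and is the one the paper (implicitly) uses: multiply \eqref{linXcorrect} by a cutoff, apply Calder\'on--Zygmund on $\mathbb{R}^4_N$, and control the resulting source terms by the free energy. Your verification that the coefficients $\tfrac{|\underline\partial Y_K|^2}{X_K^2}$, $\tfrac{\underline\partial X_K\cdot\underline\partial Y_K}{X_K^2}$, $\tfrac{\underline\partial Y_K}{X_K}$ are bounded near $p_N$ is correct (using $X_K\sim s^2$ and $|\underline\partial Y_K|\lesssim s^3$), and your observation that $\Vert\tilde\xi_N\underline\partial\mathring X\Vert_{L^2(\mathbb{R}^4_N)}$ and $\Vert\tilde\xi_N\underline\partial\mathring Y\Vert_{L^2(\mathbb{R}^4_N)}$ match the $\dot H^1_{\rm axi}(\mathbb{R}^3)$ norms is correct because the conformal factor cancels exactly in the Dirichlet energy.

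However there is a genuine gap at the step where you control $\Vert\tilde\xi_N\mathring Y\Vert_{L^2(\mathbb{R}^4_N)}$: you bound $\Vert\tilde\xi_N\mathring Y\Vert_{L^2(\mathbb{R}^3)}$ via Lemma~\ref{bar} and then propose to ``convert back to the $\mathbb{R}^4_N$ measure over the compact patch.'' This conversion fails at $p_N$. Writing both measures in $(s,\chi)$ coordinates, $dV_{\mathbb{R}^3}\propto s\chi(s^2+\chi^2)\,ds\,d\chi$ while $dV_{\mathbb{R}^4_N}\propto s\chi\,ds\,d\chi$, so $dV_{\mathbb{R}^4_N} = (s^2+\chi^2)^{-1}dV_{\mathbb{R}^3}$ up to constants, and this ratio diverges as $(s,\chi)\to(0,0)$. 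Thus an $L^2(\mathbb{R}^3)$ bound on $\tilde\xi_N\mathring Y$ does \emph{not} yield an $L^2(\mathbb{R}^4_N)$ bound; the latter is a strictly stronger requirement near $p_N$, and is genuinely needed because the coefficient $\tfrac{\underline\partial X_K\cdot\underline\partial Y_K}{X_K^2}$ is bounded but does not vanish at $p_N$ (the factor $\tfrac{2}{s}$ from $\underline\partial\log X_K$ is exactly cancelled by the $s$ decay of $\tfrac{\underline\partial Y_K}{X_K}$, leaving a nonzero constant). The correct tool here is the $L^2_{\nabla h}$ piece of the energy, which Lemma~\ref{energy} controls; since $|\partial h|^2\sim (s^2+\chi^2)^{-1}s^{-2}$ near $p_N$, one gets $\Vert|\partial h|\mathring Y\Vert_{L^2(\mathbb{R}^3)}^2\gtrsim\int\tilde\xi_N^2\mathring Y^2\,s^{-1}\chi\,ds\,d\chi\gtrsim\int\tilde\xi_N^2\mathring Y^2\,s\chi\,ds\,d\chi$, precisely the $L^2(\mathbb{R}^4_N)$ quantity you need. (Alternatively, a four-dimensional Hardy inequality on $\mathbb{R}^4_N$ for the bi-radially symmetric function $\tilde\xi_N\mathring Y$ would also close this step, using that $\Vert\tilde\xi_N\mathring Y\Vert_{\dot H^1(\mathbb{R}^4_N)}$ coincides with the localized $\dot H^1_{\rm axi}(\mathbb{R}^3)$ norm; but Lemma~\ref{bar} as stated is the three-dimensional inequality and does not transfer.) This is precisely why the $\mathcal{L}_Y$ norm includes the $\Vert|\partial h|f\Vert_{L^2(\mathbb{R}^3)}$ term and why Lemma~\ref{coercive} is proved with that weight. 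A secondary, harmless imprecision: the support of $\underline\partial\xi_N$ is not bounded away from the axis or horizon (it is only bounded away from $p_N$), but that suffices, since away from $p_N$ the two measures are comparable.
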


Naively, one would expect the next estimate to be a Hessian estimate for $\mathring{Y}$ using~(\ref{linY}). However, we cannot proceed with this directly since we do not yet know that $\frac{\left|\partial X_K\right|^2}{X_K^2}\mathring{Y} \in L^2\left(\mathbb{R}^3\right)$. Instead, using the equations~(\ref{tildeeqnYA}),~(\ref{tildeeqnYN}), and~(\ref{tildeeqnYS}), we will do separate Hessian estimates in each of the regions $\overline{\mathscr{B}_A}\cup\overline{\mathscr{B}_H}$, $\overline{\mathscr{B}_N}$, and $\overline{\mathscr{B}_S}$.

We start with an estimate away from the axis where the dangerous lower order term is not actually singular. Let $\xi_{A}$ be a smooth cut-off function which is identically $1$ in a sufficiently small neighborhood of the axis $\mathscr{A}$ and vanishes in a slightly larger neighborhood. 
\begin{lemma}\label{hessYE}There exists a constant $C > 0$ such that
\[\left\Vert \left(1-\xi_N-\xi_S\right)\left(1-\xi_{A}\right)\mathring{Y}\right\Vert_{\dot{H}^2\left(\mathbb{R}^3\right)} \leq \]
\[C\left[\left\Vert \left(1-\xi_N-\xi_S\right)\left(1-\xi_{A}\right)H_Y\right\Vert_{L^2\left(\mathbb{R}^3\right)} + \left\Vert H_X\mathring{X}\right\Vert_{L^1\left(\mathbb{R}^3\right)}^{1/2} + \left\Vert H_Y \mathring{Y}\right\Vert_{L^1\left(\mathbb{R}^3\right)}^{1/2}\right].\]
\end{lemma}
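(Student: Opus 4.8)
The plan is to treat \eqref{linY} as a uniformly elliptic equation for $\mathring Y$ in the region away from the axis and the poles, where the dangerous lower order potential $\frac{|\partial X_K|^2 + |\partial Y_K|^2}{X_K^2}\mathring Y$ is bounded (since $X_K$ is bounded below away from the horizon and axis, and $\partial X_K, \partial Y_K$ are bounded). Concretely, write $u \doteq (1-\xi_N-\xi_S)(1-\xi_A)\mathring Y$ and derive the equation satisfied by $u$ on $\mathbb{R}^3$ (in cylindrical coordinates), which will be $\Delta_{\mathbb{R}^3} u = (1-\xi_N-\xi_S)(1-\xi_A) H_Y + (\text{lower order and commutator terms})$, where the commutator terms involve $\mathring Y$, $\partial \mathring Y$, and $\partial\mathring X$ multiplied by derivatives of the cutoffs — all supported in a region where everything is smooth and, crucially, bounded away from both the axis and the poles.

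The key steps, in order: (i) compute the equation for $u$, carefully collecting all terms where $\Delta_{\mathbb{R}^3}$ hits the cutoffs, producing a right-hand side that is a sum of $(1-\xi_N-\xi_S)(1-\xi_A)H_Y$ plus terms of the form (derivative of cutoff)$\times(\partial\mathring Y)$, (second derivative of cutoff)$\times \mathring Y$, and the coupling term $\frac{2\partial Y_K \cdot \partial\mathring X}{X_K}$ times the cutoffs; (ii) observe that on the support of these extra terms, $X_K$ is bounded above and below, and $\partial X_K, \partial Y_K, Y_K$ are smooth and bounded, so the zeroth-order potential term $\frac{[|\partial X_K|^2 + |\partial Y_K|^2]}{X_K^2}\mathring Y$ restricted to the relevant region is controlled in $L^2$ by $\|\mathring Y\|_{L^2_{\mathrm{loc}}}$, hence by $\|\mathring Y\|_{\dot H^1_{\mathrm{axi}}}$ via Lemma~\ref{bar}; (iii) apply the standard interior/global $\dot H^2(\mathbb{R}^3)$ elliptic estimate for $\Delta_{\mathbb{R}^3}$ (i.e. $\|\partial^2 u\|_{L^2} \leq C\|\Delta_{\mathbb{R}^3} u\|_{L^2}$ for compactly supported $u$), which gives $\|u\|_{\dot H^2(\mathbb{R}^3)}$ bounded by the $L^2$ norm of the right-hand side; (iv) bound each piece of the right-hand side: the main term $(1-\xi_N-\xi_S)(1-\xi_A)H_Y$ appears directly in the conclusion, while the commutator terms involving $\partial\mathring Y$ and $\mathring Y$ are absorbed using Lemma~\ref{energy} (which controls $\|\mathring Y\|_{\dot H^1_{\mathrm{axi}} \cap L^2_{\nabla h}}$ by $\|H_X\mathring X\|_{L^1}^{1/2} + \|H_Y\mathring Y\|_{L^1}^{1/2}$) together with Lemma~\ref{bar} to pass from weighted-$L^2$ control of $\mathring Y$ to the plain $L^2$ control needed here; (v) bound the coupling term $\frac{2\partial Y_K\cdot\partial\mathring X}{X_K}$ on the relevant region using $\|\partial\mathring X\|_{L^2(\mathbb{R}^3)}$ from Lemma~\ref{energy} and the smoothness and decay of $\frac{\partial Y_K}{X_K}$ (from Lemma~\ref{somestuffthatisuseful}).

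I expect the main obstacle to be bookkeeping rather than conceptual: one must be careful that all the commutator terms generated by moving the cutoffs past the differential operators are genuinely supported in the ``good'' region (bounded away from axis, horizon-axis corners $p_N, p_S$), so that no hidden singular factor of $X_K^{-1}$ or $\rho^{-1}$ survives. In particular, one should check that $\partial_\rho\xi_A$ and $\partial_\rho\xi_N, \partial_\rho\xi_S$ vanish for $\rho$ small (which is guaranteed for $\xi_N,\xi_S$ by Definition~\ref{cutoffs}, and can be arranged for $\xi_A$), so that the cylindrical Laplacian's $\rho^{-1}\partial_\rho$ term does not create an unbounded contribution where a cutoff derivative lands. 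A secondary subtlety is ensuring the $\dot H^2(\mathbb{R}^3)$ elliptic estimate genuinely applies to the compactly-supported axisymmetric function $u$ — this is standard since $\Delta_{\mathbb{R}^3}$ acting on axisymmetric functions in cylindrical coordinates is just the ordinary flat Laplacian restricted to axisymmetric functions, and $u$ is compactly supported away from $\{\rho=0\}$ issues in the relevant sense (though of course $\mathring Y$ itself extends across the horizon portion of the axis, the cutoff $1-\xi_A$ removes a neighborhood of the full axis, so $u$ is supported where the coordinates are regular). Once these points are handled, the estimate follows by combining the elliptic bound with Lemmas~\ref{energy}, \ref{bar}, and~\ref{somestuffthatisuseful}.
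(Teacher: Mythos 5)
Your proposal is correct and takes essentially the same approach as the paper, whose proof is a single sentence invoking uniform ellipticity with smooth coefficients on the cutoff support; your expanded version (commute cutoffs past $\Delta_{\mathbb{R}^3}$, then absorb the commutator and lower-order terms via Lemmas~\ref{energy} and~\ref{bar}) is the expected unpacking. One small correction to your parenthetical: $1-\xi_A$ does \emph{not} remove a neighborhood of all of $\{\rho=0\}$ --- $\xi_A$ is identically $1$ only near the axis $\mathscr{A}=\{(0,z):|z|>\gamma\}$, so $u$ can still be nonzero on the horizon $\mathscr{H}$ where $\rho=0$; but this is harmless, since your primary justification (that $\Delta_{\mathbb{R}^3}$ on compactly supported axisymmetric functions is the flat Laplacian on $\mathbb{R}^{3}$, and the lower-order coefficients of~\eqref{linY} are smooth on $\mathbb{R}^3$ near $\mathscr{H}$ because $X_K$ is bounded away from zero there) already carries the estimate.
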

\begin{proof}On the support of $\left(1-\xi_N-\xi_S\right)\left(1-\xi_{A}\right)$,~(\ref{linY}) is a uniformly elliptic equation with smooth coefficients, so the lemma follows from a standard elliptic estimate.
\end{proof}

Next, we carry out a Hessian estimate for $\tilde Y$ in $\mathbb{R}^7_A$.
\begin{lemma}\label{hessYA}There exists a constant $C > 0$ such that

\[\left\Vert \left(1-\xi_N-\xi_S\right)\xi_{A}\tilde Y\right\Vert_{\dot{H}^2\left(\mathbb{R}^7_A\right)} \leq \]
\[C\left[\left\Vert \left(1-\xi_N-\xi_S\right)\xi_{A}H_YX_K^{-1}\right\Vert_{L^2\left(\mathbb{R}^7_A\right)}+ \left\Vert H_X\mathring{X}\right\Vert_{L^1\left(\mathbb{R}^3\right)}^{1/2} + \left\Vert H_Y \mathring{Y}\right\Vert_{L^1\left(\mathbb{R}^3\right)}^{1/2}\right].\]
\end{lemma}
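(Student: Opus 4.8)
The plan is to run a standard interior $L^{2}$ (Calder\'on--Zygmund) elliptic estimate for~\eqref{tildeeqnYA}, read in the $\mathbb{R}^{7}_{A}$ coordinates, after localizing with the cutoff $\Xi \doteq (1-\xi_{N}-\xi_{S})\xi_{A}$. The first point is that on $\operatorname{supp}\Xi$ the operator in~\eqref{tildeeqnYA} is uniformly elliptic with smooth, bounded coefficients: the factor $1-\xi_{N}-\xi_{S}$ excises neighborhoods of $p_{N}$ and $p_{S}$, the only points where $X_{K}\sim\rho^{2}$ fails, so there $\tilde e_{A}$ is smooth and, by Lemma~\ref{somestuffthatisuseful}, $|\partial Y_{K}|^{2}/X_{K}^{2}\le C\rho^{2}r^{-8}$ is bounded. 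Writing $u\doteq\Xi\tilde Y$, one computes
\[
\Delta_{\mathbb{R}^{7}}u = \Xi\left(H_{Y}X_{K}^{-1} - \tilde e_{A}\cdot\partial\tilde Y + \frac{2|\partial Y_{K}|^{2}}{X_{K}^{2}}\tilde Y + \frac{2\partial Y_{K}\cdot\partial\mathring X}{X_{K}^{2}}\right) + 2\,\partial\Xi\cdot\partial\tilde Y + (\Delta_{\mathbb{R}^{7}}\Xi)\tilde Y ,
\]
and the global bound $\|u\|_{\dot H^{2}(\mathbb{R}^{7}_{A})}\le C\|\Delta_{\mathbb{R}^{7}}u\|_{L^{2}(\mathbb{R}^{7}_{A})}$ reduces matters to estimating each term on the right in $L^{2}(\mathbb{R}^{7}_{A})$.

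Now estimate the terms one by one. The term $\Xi H_{Y}X_{K}^{-1}$ is exactly the first term on the right of the asserted inequality. For the zeroth-order term, boundedness of $|\partial Y_{K}|^{2}/X_{K}^{2}$ on $\operatorname{supp}\Xi$ and the bound $\|\tilde Y\|_{L^{2}(\mathbb{R}^{7}_{A})}\le C\|\mathring Y\|_{L^{2}_{\nabla h}(\mathbb{R}^{3})}$ --- which follows from $X_{K}\sim\rho^{2}$, $|\partial h|\sim\rho^{-1}$ near the axis, and $\rho$ bounded on $\operatorname{supp}\xi_{A}$ --- together with Lemma~\ref{energy} give the claim. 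The first-order term $\tilde e_{A}\cdot\partial\tilde Y$ is dominated by $\|\partial\tilde Y\|_{L^{2}(\mathbb{R}^{7}_{A})}$; from the inequality $\int_{\mathbb{R}^{3}}X_{K}^{2}|\partial\tilde Y|^{2}\le C\mathcal{L}_{0}(\mathring X,\mathring Y)$ established in the proof of Lemma~\ref{coercive} and the comparison $X_{K}\sim\rho^{2}$ one gets $\|\partial\tilde Y\|_{L^{2}(\mathbb{R}^{7}_{A})}\le C\mathcal{L}_{0}^{1/2}$, controlled by the right-hand side via Lemma~\ref{energy}. For the source $2\partial Y_{K}\cdot\partial\mathring X/X_{K}^{2}$, the sharper estimate $|\partial Y_{K}|/X_{K}^{2}\le Cr^{-4}/\rho$ from Lemma~\ref{somestuffthatisuseful} gives $\int_{\mathbb{R}^{7}_{A}}|2\partial Y_{K}\cdot\partial\mathring X/X_{K}^{2}|^{2}\le C\int r^{-8}\rho^{3}|\partial\mathring X|^{2}\,d\rho\,dz\le C\|\mathring X\|_{\dot H^{1}_{\rm axi}(\mathbb{R}^{3})}^{2}$ (using $\rho\le 1$ on $\operatorname{supp}\xi_{A}$), again controlled by Lemma~\ref{energy}; alternatively one splits this term via~\eqref{xterm} and bounds $\rho^{-1}\partial_{\rho}\mathring X$ by the $\mathbb{R}^{3}$-Hessian of $\mathring X$ using Lemma~\ref{firstdecay}. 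Finally, the two commutator terms are supported where $\partial\Xi\neq 0$, i.e.\ either away from the axis or at a fixed positive distance from $p_{N}$ and $p_{S}$; with the cutoffs chosen so that $\partial_{\rho}$ of each vanishes near the axis (no $1/\rho$ singularity in $\Delta_{\mathbb{R}^{7}}\Xi$), these terms have bounded coefficients and are dominated by $\|\partial\tilde Y\|_{L^{2}(\mathbb{R}^{7}_{A})}+\|\tilde Y\|_{L^{2}(\mathbb{R}^{7}_{A})}$, hence handled as above.

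The one genuinely technical point is the a priori regularity needed to legitimize the estimate: $\tilde Y$ is so far only in a weighted $\dot H^{1}$ space (and smooth away from the axis, by the remark following Lemma~\ref{solve}), so $u\in\dot H^{2}(\mathbb{R}^{7})$ is not immediate; this is dealt with in the usual way, by running the estimate on difference quotients in the $z$- and $\mathbb{R}^{7}$-radial directions, or by first taking $H_{X},H_{Y}\in C^{\infty}_{0}$, proving the bound, and passing to the limit using the earlier linear estimates. The main obstacle, such as it is, is purely the bookkeeping of competing weights --- the singular factors $X_{K}^{-1}$ and $|\partial Y_{K}|/X_{K}^{2}$ near the axis (both made quantitative by Lemma~\ref{somestuffthatisuseful}) against the $\rho^{5}$ volume density of $\mathbb{R}^{7}_{A}$ --- together with the verification that passing to $\mathbb{R}^{7}_{A}$ coordinates genuinely regularizes~\eqref{tildeeqnYA} on $\operatorname{supp}\Xi$, which is precisely why the neighborhoods of $p_{N}$ and $p_{S}$ are first removed, the estimates near the poles being carried out separately in the $\mathbb{R}^{8}_{N}$ and $\mathbb{R}^{8}_{S}$ coordinates of~\eqref{tildeeqnYN} and~\eqref{tildeeqnYS}.
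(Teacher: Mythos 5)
Your proposal is correct and follows essentially the same route as the paper: the paper first records the $\dot{H}^1_{\rm axi}(\mathbb{R}^7_A)\cap L^2_{\rho^{-1}}(\mathbb{R}^7_A)$ bound for $\tilde Y$ via the energy estimate, bounds $\partial Y_K\cdot\partial\mathring{X}/X_K^2$ in $L^2(\mathbb{R}^7_A)$ using \eqref{xterm} and the boundedness of $\rho$ on $\operatorname{supp}\xi_A$, and then invokes ``a straightforward elliptic estimate'' together with the remark that $\tilde Y$'s equation is weakly satisfied across $\{\rho=0\}$. You have simply expanded that elliptic estimate (localizing, itemizing the commutator and lower-order terms, and addressing the a priori regularity), but the underlying ingredients --- the energy/coercivity estimate, the decay bounds of Lemma~\ref{somestuffthatisuseful}/\eqref{xterm}, and the regularization of the equation in $\mathbb{R}^7_A$ away from $p_N,p_S$ --- are identical.
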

\begin{proof}A straightforward calculation using that the volume form on $\mathbb{R}^7_A$ is $\rho^5\, d\rho\, dz$ yields
\begin{align*}
\left\Vert \tilde Y\right\Vert_{\dot{H}^1_{\rm axi}\left(\mathbb{R}^7_A\right) \cap L^2_{\rho^{-1}}\left(\mathbb{R}^7_A\right)} &\leq C\left\Vert \mathring{Y}\right\Vert_{\dot{H}^1_{\rm axi}\left(\overline{\mathscr{B}_A}\right) \cap L^2_{\nabla h}\left(\overline{\mathscr{B}_A}\right)}
\\ \nonumber &\leq C\left[\left\Vert H_X\mathring{X}\right\Vert_{L^1\left(\mathbb{R}^3\right)}^{1/2} + \left\Vert H_Y \mathring{Y}\right\Vert_{L^1\left(\mathbb{R}^3\right)}^{1/2}\right].
\end{align*}
Next, keeping in mind that $\rho$ is bounded in the set $\mathbb{R}^7_A$, the expression~(\ref{xterm}) implies that
\begin{align*}
\left\Vert \frac{\partial Y_K\cdot \partial\mathring{X}}{X_K^2}\right\Vert_{L^2\left(\mathbb{R}^7_A\right)} &\leq C\left\Vert \mathring{X}\right\Vert_{\dot{H}^1_{\rm axi}\left(\mathbb{R}^3\right)}
\\ \nonumber &\leq  C\left[\left\Vert H_X\mathring{X}\right\Vert_{L^1\left(\mathbb{R}^3\right)}^{1/2} + \left\Vert H_Y \mathring{Y}\right\Vert_{L^1\left(\mathbb{R}^3\right)}^{1/2}\right].
\end{align*}
Conversely, we could also use the fact that $\left|\rho^{-1}\partial_{\rho}\mathring{X}\right| \leq C\left|\hat{\partial}^2\mathring{X}\right|$ (this approach is useful when we do higher $L^p$ estimates).

Thus, the lemma follows from a straightforward elliptic estimate (note that a standard argument yields that $\tilde Y$'s equation is weakly satisfied on $\mathbb{R}^7_A$ rather than just $\mathbb{R}^7_A\setminus \{\rho = 0\}$).
\end{proof}

Repeating the same procedure in the sets $\overline{\mathscr{B}_N}$ and $\overline{\mathscr{B}_S}$ immediately yields the following.
\begin{lemma}\label{hessYA}There exists a constant $C > 0$ such that
\[\left\Vert \xi_N\tilde{Y}\right\Vert_{\dot{H}^2\left(\mathbb{R}^8_N\right)} \leq C\left[\left\Vert H_X\mathring{X}\right\Vert_{L^1\left(\mathbb{R}^3\right)}^{1/2} + \left\Vert H_Y \mathring{Y}\right\Vert_{L^1\left(\mathbb{R}^3\right)}^{1/2} + \left\Vert \left(\chi^2+s^2\right)H_Y\right\Vert_{L^2\left(\overline{\mathscr{B}_N}\right)}\right],\]

\[\left\Vert \xi_S\tilde{Y}\right\Vert_{\dot{H}^2\left(\mathbb{R}^8_S\right)} \leq C\left[\left\Vert H_X\mathring{X}\right\Vert_{L^1\left(\mathbb{R}^3\right)}^{1/2} + \left\Vert H_Y \mathring{Y}\right\Vert_{L^1\left(\mathbb{R}^3\right)}^{1/2} + \left\Vert \left((\chi')^2+(s')^2\right)H_Y\right\Vert_{L^2\left(\overline{\mathscr{B}_S}\right)}\right].\]
\end{lemma}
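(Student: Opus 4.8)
The plan is to repeat, essentially verbatim, the argument used to prove the preceding lemma (the Hessian estimate for $\tilde Y$ on $\mathbb{R}^7_A$), now working in the $s,\chi$ coordinate chart near $p_N$ (resp.\ the $s',\chi'$ chart near $p_S$) and using the form~\eqref{tildeeqnYN} (resp.~\eqref{tildeeqnYS}) of $\tilde Y$'s equation, in which the principal operator is the flat Laplacian $\underline{\Delta}_{\mathbb{R}^8}$ in the $(s,\chi)$ coordinates. I will describe the $\overline{\mathscr{B}_N}$ case; $\overline{\mathscr{B}_S}$ is completely analogous. The structural facts I would invoke at the outset: $\mathbb{R}^8_N \cong \mathbb{R}^6\times\mathbb{R}^2$ with $s$ radial on the first factor and $\chi$ radial on the second, so that its volume form is a constant multiple of $s^5\chi\,ds\,d\chi$; by Lemma~\ref{lem:xK-regularity} one has $X_K\sim s^2$ near $p_N$; by~\eqref{hsing}, $\left|\underline{\partial}h\right|\sim s^{-1}$ near $p_N$; by Lemma~\ref{somestuffthatisuseful}, $\left|\underline{\partial}Y_K\right|\le Cs^3$ on $\overline{\mathscr{B}_N}$; and crucially $s,\chi$ are \emph{bounded} on $\overline{\mathscr{B}_N}$, which is what makes the higher-dimensional measure weights harmless.

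First I would record the ``free'' $\dot{H}^1$-type control coming from the variational structure. Since $\tilde Y = X_K^{-1}\mathring{Y}$, a change-of-measure computation as in the proof of the preceding lemma (using the volume form above together with $X_K\sim s^2$, $\left|\underline{\partial}h\right|\sim s^{-1}$) gives $\left\Vert \xi_N\tilde Y\right\Vert_{\dot{H}^1(\mathbb{R}^8_N)\cap L^2_{s^{-1}}(\mathbb{R}^8_N)} \le C\left\Vert \mathring{Y}\right\Vert_{\dot{H}^1_{\rm axi}(\overline{\mathscr{B}_N})\cap L^2_{\nabla h}(\overline{\mathscr{B}_N})}$, and the latter is bounded by $\left\Vert H_X\mathring{X}\right\Vert_{L^1(\mathbb{R}^3)}^{1/2}+\left\Vert H_Y\mathring{Y}\right\Vert_{L^1(\mathbb{R}^3)}^{1/2}$ via Lemma~\ref{energy}. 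Next I would handle the two inhomogeneous terms on the right of~\eqref{tildeeqnYN}. For $\frac{2\underline{\partial}Y_K\cdot\underline{\partial}\mathring{X}}{X_K^2}$, the bounds $\left|\underline{\partial}Y_K\right|\le Cs^3$ and $X_K\sim s^2$ give $\left|\tfrac{\underline{\partial}Y_K\cdot\underline{\partial}\mathring{X}}{X_K^2}\right|\le Cs^{-1}\left|\underline{\partial}\mathring{X}\right|$, and since $s$ is bounded, the passage from the $\mathbb{R}^3$-measure $\rho\,d\rho\,dz$ to the $\mathbb{R}^8_N$-measure absorbs the singular factor $s^{-2}$ (exactly as $\rho^{-1}\partial_\rho\mathring X$ was treated in the $\mathbb{R}^7_A$ case), so that $\left\Vert \tfrac{\underline{\partial}Y_K\cdot\underline{\partial}\mathring{X}}{X_K^2}\right\Vert_{L^2(\mathbb{R}^8_N)}\le C\left\Vert\mathring{X}\right\Vert_{\dot{H}^1_{\rm axi}(\mathbb{R}^3)}$, again controlled by the right-hand side via Lemma~\ref{energy}. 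For the genuine source $(\chi^2+s^2)H_YX_K^{-1}$, a direct change-of-measure computation (using $X_K\sim s^2$ and the boundedness of $\overline{\mathscr{B}_N}$) bounds $\left\Vert(\chi^2+s^2)H_YX_K^{-1}\right\Vert_{L^2(\mathbb{R}^8_N)}$ by $\left\Vert(\chi^2+s^2)H_Y\right\Vert_{L^2(\overline{\mathscr{B}_N})}$.

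With these in hand I would conclude exactly as in the $\mathbb{R}^7_A$ case: apply a standard interior $\dot{H}^2$ elliptic (Calderón--Zygmund) estimate for the operator $\underline{\Delta}_{\mathbb{R}^8}+\tilde{e}_N\cdot\underline{\partial}$ (noting $\tilde{e}_N$ is smooth on $\overline{\mathscr{B}_N}$), moving both inhomogeneous terms to the right-hand side and absorbing the zeroth-order term $-\tfrac{2\left|\underline{\partial}Y_K\right|^2}{X_K^2}\tilde Y$, whose coefficient is bounded near $p_N$ (it is $\le Cs^6/s^4=Cs^2$ by Lemma~\ref{somestuffthatisuseful}), using the $\dot{H}^1$ bound above together with the embedding $L^2_{s^{-1}}(\mathbb{R}^8_N)\hookrightarrow L^2(\mathbb{R}^8_N)$ valid on the bounded set $\overline{\mathscr{B}_N}$. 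A preliminary point to check is that~\eqref{tildeeqnYN} holds weakly on all of $\mathbb{R}^8_N$, not merely on $\mathbb{R}^8_N\setminus\{s\chi=0\}$; this follows from a standard removable-singularity argument since $\{s\chi=0\}$ has codimension at least two in $\mathbb{R}^8$ and $\tilde Y\in \dot H^1_{\rm loc}$. I expect the only real work — and hence the ``main obstacle'' — to be the weighted-measure bookkeeping relating the $\overline{\mathscr{B}_N}$ norms to the $\mathbb{R}^8_N$ norms; but, as in the $\mathbb{R}^7_A$ case, this is routine given Lemmas~\ref{lem:xK-regularity} and~\ref{somestuffthatisuseful} and the boundedness of $\overline{\mathscr{B}_N}$.
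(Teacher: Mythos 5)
Your proposal is correct and is precisely what the paper intends: the paper gives no explicit proof for this lemma, instead stating only that ``repeating the same procedure in the sets $\overline{\mathscr{B}_N}$ and $\overline{\mathscr{B}_S}$ immediately yields the following.'' Your write-up carries out exactly that repetition, with the right structural inputs (the $\mathbb{R}^8_N\cong\mathbb{R}^6\times\mathbb{R}^2$ volume weight $s^5\chi$, $X_K\sim s^2$ from Lemma~\ref{lem:xK-regularity}, $|\underline\partial h|\sim s^{-1}$ from~\eqref{hsing}, $|\underline\partial Y_K|\le Cs^3$ from Lemma~\ref{somestuffthatisuseful}, and the codimension-$\ge 2$ removable-singularity step), and the measure bookkeeping you describe matches the $\mathbb{R}^7_A$ computation term by term.
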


Next we rephrase the estimates for $\tilde Y$ in terms of $\mathring{Y}$.
\begin{lemma}There exists a constant $C > 0$ such that
\[\left\Vert \left(1-\xi_N-\xi_S\right)\xi_A\mathring{Y} \right\Vert_{\dot{H}^2\left(\mathbb{R}^3\right)} \leq C\left[\left\Vert H_X\mathring{X}\right\Vert_{L^1\left(\mathbb{R}^3\right)}^{1/2} + \left\Vert H_Y \mathring{Y}\right\Vert_{L^1\left(\mathbb{R}^3\right)}^{1/2} + \left\Vert \left(1-\xi_N-\xi_S\right)\xi_{A}H_Y\right\Vert_{L^2\left(\mathbb{R}^3\right)}\right],\]

\[\left\Vert \xi_N\mathring{Y}\right\Vert_{\dot{H}^2\left(\mathbb{R}^4_N\right)} \leq C\left[\left\Vert H_X\mathring{X}\right\Vert_{L^1\left(\mathbb{R}^3\right)}^{1/2} + \left\Vert H_Y \mathring{Y}\right\Vert_{L^1\left(\mathbb{R}^3\right)}^{1/2} + \left\Vert \left(\chi^2+s^2\right)H_Y\right\Vert_{L^2\left(\overline{\mathscr{B}_N}\right)}\right],\]

\[\left\Vert \xi_S\mathring{Y}\right\Vert_{\dot{H}^2\left(\mathbb{R}^4_S\right)} \leq C\left[\left\Vert H_X\mathring{X}\right\Vert_{L^1\left(\mathbb{R}^3\right)}^{1/2} + \left\Vert H_Y \mathring{Y}\right\Vert_{L^1\left(\mathbb{R}^3\right)}^{1/2} + \left\Vert \left((\chi')^2+(s')^2\right)H_Y\right\Vert_{L^2\left(\overline{\mathscr{B}_S}\right)}\right].\]
\end{lemma}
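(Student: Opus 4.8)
The plan is to deduce all three estimates from the relation $\mathring Y = X_K\tilde Y$ together with the Hessian bounds for $\tilde Y$ in the charts $\mathbb{R}^7_A$, $\mathbb{R}^8_N$, $\mathbb{R}^8_S$ proved in Lemma~\ref{hessYA}. First I would apply the product rule,
\[
\partial^2\mathring Y = X_K\,\partial^2\tilde Y + 2\,\partial X_K\odot\partial\tilde Y + (\partial^2 X_K)\,\tilde Y,
\]
and its analogue with $\underline\partial$ in the pole charts. By Lemma~\ref{lem:xK-regularity} we may write $X_K = e^{h}e^{x_K}$ with $x_K\in\hat C^\infty$, so $\partial\log X_K = \partial h + \partial x_K$ and $\partial^2\log X_K = \partial^2 h + \partial^2 x_K$; combined with \eqref{Xlikeh}, \eqref{hsing}, and the fact that $h-2\log\rho\in\hat C^\infty$ near the axis (so $|\partial h|\lesssim\rho^{-1}$, $|\partial^2 h|\lesssim\rho^{-2}$ there), this gives the pointwise bounds $|\partial X_K|\lesssim\rho^{-1}X_K$ and $|\partial^2 X_K|\lesssim\rho^{-2}X_K$ on $\operatorname{supp}\xi_A$, and likewise $|\underline\partial X_K|\lesssim s^{-1}X_K$, $|\underline\partial^2 X_K|\lesssim s^{-2}X_K$ on $\operatorname{supp}\xi_N$, with the evident statements on $\operatorname{supp}\xi_S$. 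The key observation is that the cut-off $\xi_A$ isolates a region where $X_K\sim\rho^2$: it avoids both the horizon (where $\rho\to0$ but $X_K$ does not) and the poles, and on the axis itself $\tilde r$ stays bounded away from $\tilde r_+$; similarly $X_K\sim s^2$ on $\operatorname{supp}\xi_N$ and $X_K\sim(s')^2$ on $\operatorname{supp}\xi_S$.

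Next I would exploit the matching of weights. Since $X_K^2\sim\rho^4$ on $\operatorname{supp}\xi_A$, one has $X_K^2\cdot(\rho\,d\rho\,dz)\sim\rho^5\,d\rho\,dz$, which is exactly the volume element of $\mathbb{R}^7_A$; and since $X_K^2\sim s^4$ on $\operatorname{supp}\xi_N$, $X_K^2\cdot(s\chi\,ds\,d\chi)\sim s^5\chi\,ds\,d\chi$, the volume element of $\mathbb{R}^8_N$ (and likewise near $p_S$). Feeding the product rule and the pointwise bounds on $X_K$ into this, I obtain, on $\operatorname{supp}\big((1-\xi_N-\xi_S)\xi_A\big)$,
\[
\int\big|\partial^2\big((1-\xi_N-\xi_S)\xi_A\mathring Y\big)\big|^2\,\rho\,d\rho\,dz
\lesssim \int_{\overline{\mathscr{B}_A}}\Big(\rho^5\,\big|\partial^2\tilde Y\big|^2 + \rho^3\,\big|\partial\tilde Y\big|^2 + \rho\,\tilde Y^2\Big)\,d\rho\,dz + (\text{commutator terms}),
\]
and the analogous bounds near $p_N$, $p_S$ with $\rho$ replaced by $s$, $s'$ and the corresponding volume elements; the commutator terms produced by differentiating the cut-offs are supported where all weights are bounded, hence controlled by the free-energy estimate of Lemma~\ref{energy} (and of Lemma~\ref{hessYE} in the transition to the interior). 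The first term on the right is (on the relevant support) $\Vert\tilde Y\Vert^2_{\dot H^2(\mathbb{R}^7_A)}$, respectively $\Vert\xi_N\tilde Y\Vert^2_{\dot H^2(\mathbb{R}^8_N)}$ and $\Vert\xi_S\tilde Y\Vert^2_{\dot H^2(\mathbb{R}^8_S)}$, which Lemma~\ref{hessYA} bounds by the right-hand sides in the statement.

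The remaining point, which I regard as the main (though still routine) obstacle, is to absorb the two weighted lower-order terms, which in the axis chart read $\Vert\partial\tilde Y\Vert^2_{L^2_{\rho^{-1}}(\mathbb{R}^7_A)}$ and $\Vert\tilde Y\Vert^2_{L^2_{\rho^{-2}}(\mathbb{R}^7_A)}$. I would control these by the Hardy inequality in $\mathbb{R}^7_A$: the singular locus $\{\rho=0\}$ is a line, of codimension $6\geq 3$, so both the unweighted Hardy inequality $\int\rho^{-2}u^2\lesssim\int|\partial u|^2$ and the weighted one $\int\rho^{-4}u^2\lesssim\int\rho^{-2}|\partial u|^2$ hold (their exponents avoiding the critical value). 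Applying the first to each component $u=\partial_i\tilde Y$, and the weighted one followed by the unweighted one to $u=\tilde Y$, bounds both lower-order terms by $\Vert\tilde Y\Vert^2_{\dot H^2(\mathbb{R}^7_A)}$, again up to commutators localized where the weights are harmless; alternatively, Lemma~\ref{firstdecay} estimates the $\partial_\rho$-contribution pointwise by the Hessian. The identical argument with $\mathbb{R}^7$ replaced by $\mathbb{R}^8$ handles the pole charts. Putting this together, the first estimate (which, thanks to $\xi_A$, only involves the axis region, where $X_K\sim\rho^2$ holds cleanly) and the $\mathbb{R}^4_N$- and $\mathbb{R}^4_S$-estimates (which use the pole charts throughout) follow from Lemma~\ref{hessYA}.
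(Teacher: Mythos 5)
Your proposal is correct and follows essentially the same path as the paper: you convert $\mathring Y$ to $\tilde Y = X_K^{-1}\mathring Y$ via the product rule and the pointwise bounds $|\partial^jX_K|\lesssim\rho^{-j}X_K$ (equivalently, the paper's pointwise Hessian inequality $|\partial^2\tilde Y|^2 \geq \rho^{-4}|\partial^2\mathring Y|^2 - C(\rho^{-2}|\partial\tilde Y|^2 + \rho^{-4}\tilde Y^2)$), match the $\mathbb{R}^3$/$\mathbb{R}^4_{N,S}$ volume weights against those of $\mathbb{R}^7_A$/$\mathbb{R}^8_{N,S}$ via $X_K^2\sim\rho^4$ (resp.\ $s^4$, $(s')^4$), and absorb the weighted lower-order terms by Hardy inequalities in the higher-dimensional charts. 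The Hardy inequalities you invoke in $\mathbb{R}^7_A$ and $\mathbb{R}^8_{N,S}$ are precisely the paper's one-dimensional weighted inequalities $\int f^2\rho^{3}\,d\rho\,dz\lesssim\int(\partial_\rho f)^2\rho^{5}\,d\rho\,dz$ and $\int f^2\rho\,d\rho\,dz\lesssim\int(\partial_\rho f)^2\rho^{3}\,d\rho\,dz$, applied iteratively, so the two presentations are interchangeable.
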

\begin{proof}First of all, by repeated use of the $1$-dimensional inequalities
\[\int_0^{\infty}f^2\rho^3\, d\rho\, dz \leq C\int_0^{\infty}(\partial_{\rho}f)^2\rho^5\, d\rho\, dz,\qquad \int_0^{\infty}f^2\rho\, d\rho\, dz \leq C\int_0^{\infty}(\partial_{\rho}f)^2\rho^3\, d\rho\, dz,\]
which hold for any smooth compactly supported function $f$ which vanishes for large $\rho$, one easily obtains
\[\left\Vert \left(1-\xi_N-\xi_S\right)\xi_A\partial\tilde{Y}\right\Vert_{L^2_{\rho^{-1}}\left(\mathbb{R}^7_A\right)} +
\left\Vert \left(1-\xi_N-\xi_S\right)\xi_A\tilde{Y}\right\Vert_{L^2_{\rho^{-2}}\left(\mathbb{R}^7_A\right)} \leq\]
\[C\left[\left\Vert H_X\mathring{X}\right\Vert_{L^1\left(\mathbb{R}^3\right)}^{1/2} + \left\Vert H_Y \mathring{Y}\right\Vert_{L^1\left(\mathbb{R}^3\right)}^{1/2} + \left\Vert \left(1-\xi_N-\xi_S\right)\xi_{A}H_Y\right\Vert_{L^2\left(\mathbb{R}^3\right)}\right].\]

Next, we observe that a straightforward calculation yields
\[\left|\partial^2\tilde Y\right|^2 \geq \left|\partial^2\mathring{Y}\right|^2\rho^{-4} - C\left(\left|\partial\tilde Y\right|^2\rho^{-2} + \tilde Y^2\rho^{-4}\right). \]
Thus,
\[\left\Vert \left(1-\xi_N-\xi_S\right)\xi_A\mathring{Y} \right\Vert_{\dot{H}^2\left(\overline{\mathscr{B}_A}\right)} \leq C\left[\left\Vert H_X\mathring{X}\right\Vert_{L^1\left(\mathbb{R}^3\right)}^{1/2} + \left\Vert H_Y \mathring{Y}\right\Vert_{L^1\left(\mathbb{R}^3\right)}^{1/2} + \left\Vert \left(1-\xi_N-\xi_S\right)\xi_{A}H_Y\right\Vert_{L^2\left(\mathbb{R}^3\right)}\right].\]

The other two estimates follow similarly.
\end{proof}

Next, using Sobolev estimates, we obtain $L^p$ bounds for $\mathring{X}$, $\mathring{Y}$, and $\tilde Y$. Carrying out elliptic estimates as above \emph{mutatis mutandis} then yields $\dot{W}^{2,p}$ estimates. Iterating yields $\dot{W}^{2,p}$ estimates for any $p \geq 2$ and then, via Sobolev inequalities, we obtain $\hat{C}^1$ estimates. We record this in the following lemma.
\begin{lemma}\label{w1infty}There exists a constant $C > 0$ such that
\begin{align*}
&\left\Vert \mathring{X}\right\Vert_{\hat{C}^1\left(\overline{\mathscr{B}}\right)} + \left\Vert \mathring{Y}\right\Vert_{\hat{C}^1\left(\overline{\mathscr{B}}\right)} + \left\Vert \tilde{Y}\right\Vert_{\hat{C}^1\left(\overline{\mathscr{B}}\right)}
\\ \nonumber &\qquad \qquad \leq C\Big[\left\Vert H_X\mathring{X}\right\Vert_{L^1\left(\mathbb{R}^3\right)}^{1/2} + \left\Vert H_Y \mathring{Y}\right\Vert_{L^1\left(\mathbb{R}^3\right)}^{1/2}+ \left\Vert  \left(1-\xi_N-\xi_S\right)H_X\right\Vert_{L^2\left(\mathbb{R}^3\right) \cap L^{\infty}\left(\mathbb{R}^3\right)} +
\\ \nonumber &\qquad \qquad \qquad  \left\Vert \left(1-\xi_N-\xi_S\right)\left(1-\xi_A\right)H_Y\right\Vert_{L^2\left(\mathbb{R}^3\right) \cap L^{\infty}\left(\mathbb{R}^3\right)}  + \left\Vert\left(1-\xi_N-\xi_S\right)\xi_A X_K^{-1}H_Y\right\Vert_{L^2\left(\mathbb{R}^7_A\right) \cap L^{\infty}\left(\mathbb{R}^7_A\right)} +
\\ \nonumber &\qquad \qquad \qquad \left\Vert \left(\chi^2+s^2\right)\xi_NX_K^{-1}H_Y\right\Vert_{L^{\infty}\left(\mathbb{R}^8_N\right)} + \left\Vert \left((\chi')^2+(s')^2\right)X_K^{-1}\xi_SH_Y\right\Vert_{L^{\infty}\left(\mathbb{R}^8_S\right)}
\\ \nonumber &\qquad \qquad \qquad \left\Vert \left(\chi^2+s^2\right)\xi_NH_X\right\Vert_{L^{\infty}\left(\mathbb{R}^4_N\right)} + \left\Vert \left((\chi')^2+(s')^2\right)X_K^{-1}\xi_SH_X\right\Vert_{L^{\infty}\left(\mathbb{R}^4_S\right)}\Big].
\end{align*}
\end{lemma}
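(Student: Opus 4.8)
The plan is a standard elliptic bootstrap: iterate the $\dot W^{2,p}$ estimates established above, raising $p$ by a fixed factor at each step via Sobolev embedding, until Morrey's embedding applies. First I would collect the $\dot H^2=\dot W^{2,2}$ bounds already proven, namely the interior and near-pole Hessian estimates for $\mathring X$ (in $\mathbb{R}^3$ and in $\mathbb{R}^4_N,\mathbb{R}^4_S$), the analogous estimates for $\mathring Y$, and the estimates for $\tilde Y$ in $\mathbb{R}^7_A$ and in $\mathbb{R}^8_N,\mathbb{R}^8_S$, together with the free-energy bound of Lemma~\ref{energy}. By the Sobolev inequality in the relevant dimension ($\dot W^{2,2}\hookrightarrow \dot W^{1,6}$ in $\mathbb{R}^3$, $\dot W^{2,2}\hookrightarrow \dot W^{1,4}$ in $\mathbb{R}^4$, and the corresponding statements in $\mathbb{R}^7,\mathbb{R}^8$), combined with the energy bound, this yields $L^p$ control of $\mathring X$, $\mathring Y$, $\tilde Y$ and their first derivatives for some $p>2$, with the right-hand side controlled by the norms appearing in the statement.

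Next I would feed these bounds back into the elliptic systems: away from the poles one uses \eqref{linX}, \eqref{linY}, and the $\tilde Y$ equation in $\mathbb{R}^7_A$; near the poles one uses the conformally rescaled versions \eqref{linXcorrect}, \eqref{tildeeqnYN}, \eqref{tildeeqnYS}. The crucial point is that in each of these formulations the coefficients of the lower-order and first-order terms are smooth and bounded — in the interior by direct computation, near the axis by the regularity statements \eqref{Xlikeh}, \eqref{hsing} and Lemmas~\ref{lem:xK-regularity} and~\ref{somestuffthatisuseful}, and near $p_N,p_S$ precisely because of the passage to the higher-dimensional Laplacian — and moreover they decay like $r^{-4}$ at infinity by Lemma~\ref{somestuffthatisuseful}. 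Hence the cross terms (such as $X_K^{-1}\partial Y_K\cdot\partial\mathring X$ and $|\partial h|^2\tilde Y$) and the inhomogeneities are controlled by the $\dot W^{1,p}$ norms just obtained. Applying the constant-coefficient Calderón--Zygmund estimate for the flat Laplacian in the appropriate dimension, absorbing lower-order terms with a cutoff/covering argument and the decay, gives $\dot W^{2,p}$ bounds for $\mathring X$, $\mathring Y$, $\tilde Y$ in all regions. One must transfer information across chart overlaps: in particular one converts $\dot W^{2,p}(\mathbb{R}^7_A)$ control of $\tilde Y$ into $\dot W^{2,p}$ control of $\mathring Y=X_K\tilde Y$ near the axis using the pointwise comparison $|\partial^2\tilde Y|^2\gtrsim \rho^{-4}|\partial^2\mathring Y|^2 - C(\rho^{-2}|\partial\tilde Y|^2+\rho^{-4}\tilde Y^2)$ and the one-dimensional weighted Hardy inequalities, exactly as in the previous lemma.

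Iterating this finitely many times reaches $\dot W^{2,p}$ for some $p$ exceeding the largest relevant dimension (namely $8$). At that stage Morrey's embedding $\dot W^{2,p}\hookrightarrow C^{1}$ in $\mathbb{R}^3$, in $\mathbb{R}^4_N$, in $\mathbb{R}^4_S$, and in the auxiliary higher-dimensional spaces (which in turn control the $\rho,z$- and $s,\chi$-derivatives on the original two-dimensional slices) produces the claimed $\hat C^1(\overline{\mathscr{B}})$ bounds; recall that by Definition~\ref{norm} the $\hat C^1$ norm is built out of exactly the $\mathbb{R}^3$ norm away from the poles and the $\mathbb{R}^4$ norms near $p_N,p_S$, so this is precisely the output. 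Finally, since $(H_X,H_Y)\mapsto(\mathring X,\mathring Y,\tilde Y)$ is linear, the estimate for general $H_X,H_Y$ lying in the norms on the right-hand side follows from the $C^\infty_0$ case by density. I expect the main obstacle to be the bookkeeping across the many function spaces — the effective dimensions $3,4,7,8$ and their weighted volume forms — keeping each bootstrap step self-consistent and ensuring that genuinely axis-singular terms like $|\partial h|^2\tilde Y$ become regular coefficients in the higher-dimensional reformulations so that Calderón--Zygmund applies uniformly; no individual step is difficult, but the combination must be arranged so that no decay is lost.
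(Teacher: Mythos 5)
Your proposal is correct and follows essentially the same route as the paper, which only sketches the argument in the paragraph preceding the lemma (``Next, using Sobolev estimates, we obtain $L^p$ bounds\ldots\ Carrying out elliptic estimates as above \emph{mutatis mutandis} then yields $\dot W^{2,p}$ estimates. Iterating yields $\dot W^{2,p}$ estimates for any $p\geq 2$ and then, via Sobolev inequalities, we obtain $\hat C^1$ estimates.''). You have correctly identified the bootstrap mechanism, the role of the several effective dimensions $3,4,7,8$ and their chart-dependent elliptic reformulations, the use of the preceding Hessian lemmas as the $p=2$ base case, and the conversion $\tilde Y\mapsto\mathring Y = X_K\tilde Y$; the density step at the end is also the one the paper invokes.
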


It is now easy to add in Schauder estimates, pointwise decay estimates, and use a standard density argument in order to finally conclude with the following proposition.
\begin{proposition}\label{c2alpha}Suppose that $\left(H_X,H_Y\right) \in \mathcal{N}_{X,Y}$. Then, if we let $\left(\mathring{X},\mathring{Y}\right)$ denote the unique solutions to~(\ref{linX}) and~(\ref{linY}), then there exists a constant $D(\alpha_0)$, independent of $\left(H_X,H_Y\right)$ and depending on $\alpha_0$, such that
\[\left\Vert \left(\mathring{X},\mathring{Y}\right)\right\Vert_{\mathcal{L}_{X,Y}} \leq D(\alpha_0)\left\Vert \left(H_X,H_Y\right)\right\Vert_{\mathcal{N}_{X,Y}}.\]

\end{proposition}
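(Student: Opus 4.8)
The plan is to start from the variational solution of Lemma~\ref{solve} for $H_X,H_Y\in C^\infty_0(\mathbb{R}^3)$, bootstrap the already-established energy and Hessian bounds up to $\hat C^{2,\alpha_0}$, establish the weighted pointwise decay separately, and finally pass to general $(H_X,H_Y)\in\mathcal{N}_{X,Y}$ by density. The $\dot H^1_{\rm axi}$ contributions to the $\mathcal{L}_X$ and $\mathcal{L}_Y$ norms, together with the $\bigl\Vert|\partial h|\,\mathring Y\bigr\Vert_{L^2(\mathbb{R}^3)}$ contribution to $\mathcal{L}_Y$, are controlled directly by Lemmas~\ref{energy} and~\ref{coercive}, once one bounds $\Vert H_X\mathring X\Vert_{L^1}$ and $\Vert H_Y\mathring Y\Vert_{L^1}$ by $\Vert(H_X,H_Y)\Vert_{\mathcal{N}_{X,Y}}$ using Lemma~\ref{bar} and the decay built into $\mathcal{N}_{X,Y}$.

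For the H\"older part, the structural point (already exploited in the lemmas leading up to Lemma~\ref{w1infty}) is that in every chart covering $\overline{\mathscr{B}}$ the relevant equation has been arranged, via the dimension-raising trick, to be uniformly elliptic with smooth coefficients: \eqref{linX} (or \eqref{linXcorrect} near $p_N,p_S$) for $\mathring X$, and \eqref{linY} away from the axis together with \eqref{tildeeqnYA}, \eqref{tildeeqnYN}, \eqref{tildeeqnYS} for $\tilde Y=X_K^{-1}\mathring Y$ near the axis and poles. Having $\dot W^{2,p}$ bounds for all $p$, hence $\hat C^1$ bounds from Lemma~\ref{w1infty}, I would apply interior and boundary Schauder estimates in each chart to the corresponding equation, treating the zeroth-order coefficients $|\partial Y_K|^2 X_K^{-2}$, $(\partial X_K\cdot\partial Y_K)X_K^{-2}$, etc.\ as bounded (in fact decaying) lower-order terms by \eqref{Xlikeh}, \eqref{hsing}, and Lemma~\ref{somestuffthatisuseful}. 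Patching the charts yields the $\hat C^{2,\alpha_0}_0(\overline{\mathscr{B}})$ bounds for $\mathring X$, $\mathring Y$, and $X_K^{-1}\mathring Y$ required by $\mathcal{L}_X$ and $\mathcal{L}_Y$.

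The weighted decay estimates at the asymptotically flat end would be obtained exactly as for $\mathring\sigma$ in Proposition~\ref{prop:lin-est-sigma}: move the lower-order terms of \eqref{linX} (resp.\ of \eqref{linY}, or of \eqref{tildeeqnYA} near the axis) to the right-hand side, so that $\mathring X$ solves $\Delta_{\mathbb{R}^3}\mathring X = H_X-\tfrac{2\partial Y_K\cdot\partial\mathring Y}{X_K}+\tfrac{2|\partial Y_K|^2}{X_K^2}\mathring X-2\tfrac{\partial X_K\cdot\partial Y_K}{X_K^2}\mathring Y$, and similarly for $\tilde Y$, then apply the Newtonian-potential decay estimates of Lemmas~\ref{newt1} and~\ref{newt2} in the Appendix. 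By \eqref{Xlikeh}, \eqref{hsing}, and Lemma~\ref{somestuffthatisuseful} the transferred coefficients decay fast (at least like $r^{-4}$), so after feeding in the crude decay of $\mathring X,\mathring Y$ the new right-hand sides have the $C^{0,\alpha_0}$ decay needed to conclude $\mathring X=O(r^{-1})$, $\hat\partial\mathring X=O(r^{-2})$, $\hat\partial^2\mathring X=O(r^{-3}\log(4r))$, and the analogous $r^{-3},r^{-4},r^{-5}\log(4r)$ rates for $X_K^{-1}\mathring Y$ (a short bootstrap iteration upgrading crude decay to the sharp rates is needed). Finally, for general $(H_X,H_Y)\in\mathcal{N}_{X,Y}$ one approximates by $\hat C^\infty_0$ data (to which the construction of Lemma~\ref{solve} extends verbatim, and which can in turn be approximated by $C^\infty_0(\mathbb{R}^3)$ data), applies the uniform estimate just proved, and passes to the limit; since $\mathcal{L}_{X,Y}$ embeds in the uniqueness class $\dot H^1_{\rm axi}(\mathbb{R}^3)\times\bigl(\dot H^1_{\rm axi}(\mathbb{R}^3)\cap L^2_{\nabla h}(\mathbb{R}^3)\bigr)$ of Lemma~\ref{solve}, the limit is the desired solution. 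I expect the main obstacle to be the decay step, specifically the bookkeeping needed to check that the transferred $\mathring X$–$\mathring Y$ coupling terms carry enough decay and H\"older regularity to be admissible inhomogeneities for the sharp potential estimates, since the coupling prevents estimating $\mathring X$ and $\mathring Y$ independently and the target Hessian rates $r^{-3}\log(4r)$, $r^{-5}\log(4r)$ are borderline.
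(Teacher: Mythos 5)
Your proposal follows essentially the paper's route — energy/coercivity estimates for the $\dot H^1_{\rm axi}$ and $L^2_{\nabla h}$ pieces, chart-by-chart Schauder in the dimension-raised coordinates for $\hat C^{2,\alpha_0}$, Newton-potential bootstrap for the weighted decay, density for general data — but there is a concrete error in the decay step that you half-anticipated. You assert that after moving the coupling terms to the right-hand side the ``transferred coefficients decay fast (at least like $r^{-4}$).'' This is true for $\mathring X$'s equation, where the worst coefficient $(\partial X_K\cdot\partial Y_K)X_K^{-2}$ decays like $r^{-4}$ by Lemma~\ref{somestuffthatisuseful}, so the iteration seeds directly off the $W^{1,\infty}$ bound from Lemma~\ref{w1infty}. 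It fails for $\tilde Y$'s equation \eqref{tildeeqnYA}: by \eqref{hsing}, the coefficient vector field $\tilde e_A$ of $\partial\tilde Y$ decays only like $r^{-2}$, so if $\partial\tilde Y$ is merely bounded the term $\tilde e_A\cdot\partial\tilde Y$ has $k=2$, failing the strict hypothesis $k>2$ of Lemma~\ref{newt1}, and the bootstrap has nothing to feed on. The paper supplies the missing seed: $\tilde Y\in\dot H^1(\mathbb{R}^7_A)$ by Lemma~\ref{coercive}, so Sobolev embedding gives $\tilde Y\in L^{14/5}(\mathbb{R}^7_A)$, and Moser's mean-value inequality (Theorem 8.17 in Gilbarg--Trudinger) with $n=7$, $p=14/5$ then yields the initial pointwise decay $\tilde Y=O(r^{-5/2})$; local Schauder propagates this to $\hat\partial\tilde Y$ and $\hat\partial^2\tilde Y$, after which the Newton-potential iteration (two applications of Lemma~\ref{newt1} plus one of Lemma~\ref{newt2}) reaches the target $r^{-3}$, $r^{-4}$, $r^{-5}\log(4r)$ rates. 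This Moser step is the ingredient missing from your plan.

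A smaller remark: you claim to ``bound $\Vert H_X\mathring X\Vert_{L^1}$ and $\Vert H_Y\mathring Y\Vert_{L^1}$ by $\Vert(H_X,H_Y)\Vert_{\mathcal{N}_{X,Y}}$ using Lemma~\ref{bar},'' but these quantities involve the unknowns $\mathring X,\mathring Y$ and cannot be closed by the source alone. The paper handles them with Cauchy--Schwarz (Sobolev into $L^6$ away from the poles, $L^\infty$ near them) with a free small parameter $q$, absorbing the $q\Vert\mathring X\Vert_{\dot H^1}$- and $q\Vert\mathring X\Vert_{L^\infty}$-type terms into quantities estimated earlier in Lemma~\ref{w1infty}. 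This is routine but needs to be said; ``directly controlled'' is not quite right.
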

\begin{proof}
First of all, we observe that other than the terms $\left\Vert H_X\mathring{X}\right\Vert_{L^1\left(\mathbb{R}^3\right)}^{1/2} + \left\Vert H_Y \mathring{Y}\right\Vert_{L^1\left(\mathbb{R}^3\right)}^{1/2}$, some straightforward calculations show that the right hand side of the estimate in Lemma~\ref{w1infty} is easily seen to be controlled by $\left\Vert \left(H_X,H_Y\right)\right\Vert_{N_{X,Y}}$. For the terms $\left\Vert H_X\mathring{X}\right\Vert_{L^1\left(\mathbb{R}^3\right)}^{1/2} + \left\Vert H_Y \mathring{Y}\right\Vert_{L^1\left(\mathbb{R}^3\right)}^{1/2}$, we use the estimate
\begin{align*}
\left\Vert H_X\mathring{X}\right\Vert_{L^1\left(\mathbb{R}^3\right)}^{1/2} \leq C\Big[&q^{-1}\left\Vert \left(1-\xi_N-\xi_S\right)H_X\right\Vert_{L^{6/5}\left(\mathbb{R}^3\right)} + q\left\Vert \left(1-\xi_N-\xi_S\right)\mathring{X}\right\Vert_{\dot{H}^1\left(\mathbb{R}^3\right)} +
\\ \nonumber &q^{-1}\left\Vert \xi_N H_X\right\Vert_{L^1\left(\mathbb{R}^3\right)} + q\left\Vert \xi_N\mathring{X}\right\Vert_{L^{\infty}\left(\mathbb{R}^3\right)}
\\ \nonumber &q^{-1}\left\Vert \xi_S H_X\right\Vert_{L^1\left(\mathbb{R}^3\right)} + q\left\Vert \xi_S\mathring{X}\right\Vert_{L^{\infty}\left(\mathbb{R}^3\right)}\Big],
\end{align*}
which holds for any $q > 0$. Here we have used the Sobolev embedding $L^6\left(\mathbb{R}^3\right) \hookrightarrow \dot{H}^1\left(\mathbb{R}^3\right)$. The terms with the small parameter $q$ may be absorbed into terms we have estimated previously and the other terms may be controlled by $\left\Vert \left(H_X,H_Y\right)\right\Vert_{N_{X,Y}}$.\footnote{It may be useful for the reader to keep in mind that $\rho = s\chi$ in $(s,\chi)$ coordinates and thus one may easily check that $\left\Vert \xi_N f\right\Vert_{L^1\left(\mathbb{R}^3\right)} \leq C \left\Vert \xi_N \left(\chi^2+s^2\right)f\right\Vert_{L^{\infty}\left(\mathbb{R}^3\right)}$.}A similar argument works for the term $\left\Vert H_Y \mathring{Y}\right\Vert_{L^1\left(\mathbb{R}^3\right)}^{1/2}$.

Next, we note that the $C^{2,\alpha_0}$ estimates follows immediately from  Lemma~\ref{w1infty} and local Schauder estimates on balls of radius $1$. More precisely, we apply interior Schauder estimates (cf.\ Theorem 6.6 in~\cite{giltru}) in balls of a fixed radius. The coordinates in which we apply these estimates vary with the location of the balls, but we use the coordinates in which the equation is uniformly elliptic with regular lower order terms. For example, near (and on) the axis and horizon but away from the points $p_{N}$ and $p_{S}$, we use equations \eqref{linX} and \eqref{tildeeqnYA}. Similarly, near the points $p_{N}$ and $p_{S}$, we use \eqref{linXcorrect} and \eqref{tildeeqnYN}. 

The pointwise decay arguments are standard so we will be brief in our presentation:

Pointwise decay for $\mathring{X}$ follows immediately from Lemma~\ref{newt1} and Lemma~\ref{newt2} from the Appendix, the already established $W^{1,\infty}$ estimates for $\mathring{X}$ and $\mathring{Y}$, and the estimates for $(X_K,Y_K)$ from Section \ref{kerr}.

For $\tilde{Y}$, we first note that $\left\Vert \tilde Y\right\Vert_{\dot{H}^1_{\rm axi}\left(\mathbb{R}^7_A\right)} \leq C\left\Vert \mathring{Y}\right\Vert_{\dot{H}^1_{\rm axi}\left(\mathbb{R}^3\right) \cap L^2_{\nabla h}\left(\mathbb{R}^3\right)}$. A Sobolev inequality then yields control of $\tilde Y$ in $L^{\frac{14}{5}}\left(\mathbb{R}^7_A\right)$. Next, we note that
\[\left|\frac{2\partial Y_K\cdot\partial \mathring{X}}{X_K^2}\right|\leq  Cr^{-6}\left[\left\Vert r^2\hat{\partial}\mathring{X}\right\Vert_{L^{\infty}\left(\overline{\mathscr{B}}\right)} +\left\Vert r^2\hat{\partial}^2\mathring{X}\right\Vert_{L^{\infty}\left(\overline{\mathscr{B}}\right)}\right].\]
Since, we also have $|\tilde e_A| \leq Cr^{-2}$ and $\frac{|\partial Y_K|^2}{X_K^2}\leq Cr^{-6}$, we can apply the mean value inequality (Theorem 8.17 of~\cite{giltru}) with $n = 7$ and $p = \frac{14}{5}$, to obtain a decay rate of $r^{-5/2}$ for $\tilde Y$. Local Schauder estimates then yield a decay rate of $r^{-5/2}$ for both $\hat{\partial}\tilde Y$ and $\hat{\partial}^2\tilde Y$. In particular, we have obtained a decay rate of $r^{-9/2}$ for $\tilde e_A \cdot\partial\tilde Y$ and a decay rate $r^{-17/2}$ for $\frac{|\partial Y_K|^2}{X_K^2}\tilde Y$. Thus, treating these two terms as errors, we can use Lemma~\ref{newt1} from the Appendix to obtain a decay rate of $-7/2$ for $\partial\tilde Y$. Using this improved estimate for $\partial\tilde Y$ and applying the Newton potential estimate again yields a decay rate of $-7/2$ for $\tilde Y$ and a decay rate of $-9/2$ for $\partial\tilde Y$. One final iteration using also Lemma~\ref{newt2} yields the desired decay rate of $r^{-3}$ for $\tilde Y$, $r^{-4}$ for $\partial \tilde Y$, and $r^{-5}\log\left(4r\right)$ for $\hat{\partial}^2\tilde Y$.
\end{proof}
\subsection{The Fixed Point}
We are ready to run the fixed point argument.
\begin{proposition}\label{fixxy}For each $m \in \mathbb{Z}_{\neq 0}$ and $\alpha_0 \in (0,1)$, there exists $\epsilon > 0$ sufficiently small so that given \[\left(\mathring{\Theta},\psi,\mathring{\mu}^2,\mathring{\lambda}\right) \in B_{\epsilon}\left(\mathcal{L}_{\Theta}\right)\times \cdots\times B_{\epsilon}\left(\mathcal{L}_{\lambda}\right),\]
we may find
\[\left(\mathring{\sigma},\left(B^{(N)}_{\chi},B^{(S)}_{\chi'},B^{(A)}_z\right),\left(\mathring{X},\mathring{Y}\right) \right)\in B_{\epsilon}\left(\mathcal{L}_{\sigma}\right)\times \cdots \times B_{\epsilon}\left(\mathcal{L}_{X,Y}\right),\]
which solve~(\ref{renormalsigmaeqn}), (\ref{Btripleeqn}), and~(\ref{X1}) and~(\ref{Y1}) respectively.

Furthermore, $\mathring{\sigma}$, $\left(B^{(N)}_{\chi},B^{(S)}_{\chi'},B^{(A)}_z\right)$, and $(\mathring{X},\mathring{Y})$ have ``continuous nonlinear dependence on parameters''
\[\left(\mathring{\Theta},\psi,\mathring{\mu}^2,\mathring{\lambda}\right),\]
in the sense of Lemma~\ref{fixit} and Remark~\ref{contnonlindep}.
\end{proposition}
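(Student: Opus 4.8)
The plan is to eliminate $\mathring\sigma$ and the $B$-triple and then apply the fixed point Lemma~\ref{fixit} to $(\mathring X,\mathring Y)$. Given a candidate $(\mathring X,\mathring Y)\in B_\epsilon(\mathcal{L}_{X,Y})$ and parameters $(\mathring\Theta,\psi,\mathring\mu^2,\mathring\lambda)\in B_\epsilon(\mathcal{L}_\Theta)\times\cdots\times B_\epsilon(\mathcal{L}_\lambda)$, I would first invoke Proposition~\ref{fixb} to produce $\mathring\sigma\in B_\epsilon(\mathcal{L}_\sigma)$ and $(B^{(N)}_\chi,B^{(S)}_{\chi'},B^{(A)}_z)\in B_\epsilon(\mathcal{L}_B)$ solving~\eqref{renormalsigmaeqn} and~\eqref{Btripleeqn}, depending continuously and nonlinearly on $\big((\mathring X,\mathring Y),\mathring\Theta,\psi,\mathring\mu^2,\mathring\lambda\big)$; in particular $\|\mathring\sigma\|_{\mathcal{L}_\sigma}$ and $\|(B^{(N)}_\chi,B^{(S)}_{\chi'},B^{(A)}_z)\|_{\mathcal{L}_B}$ are bounded quadratically in terms of these quantities. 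I then take $\mathcal{L}=\mathcal{L}_{X,Y}$, $\tilde{\mathcal{L}}=\mathcal{N}_{X,Y}$, $\mathcal{Q}=\mathcal{L}_\Theta\times\mathcal{L}_\psi\times\mathcal{L}_{\mu^2}\times\mathcal{L}_\lambda$, and define $\mathfrak{E}\big((\mathring X,\mathring Y),(\mathring\Theta,\psi,\mathring\mu^2,\mathring\lambda)\big)\doteq L_{X,Y}(\mathring X,\mathring Y)-N_{X,Y}$, where $L_{X,Y}$ is the left-hand side of~\eqref{X1}--\eqref{Y1} and $N_{X,Y}=(N_X,N_Y)$ with $\mathring\sigma$ and $B$ substituted as above. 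Part~(2) of Lemma~\ref{fixit}, with $\mathcal{N}=\mathcal{N}_{X,Y}$ and $L^{-1}$ the variational solution operator from Lemma~\ref{solve}, is precisely Proposition~\ref{c2alpha}.

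The heart of the argument is verifying the quadratic estimates~(3) and~(4) of Lemma~\ref{fixit}, namely that $N_{X,Y}$ maps $B_\epsilon(\mathcal{L}_{X,Y})\times B_\epsilon(\mathcal{Q})$ into $\mathcal{N}_{X,Y}$ with $\|N_{X,Y}\|_{\mathcal{N}_{X,Y}}\le D\big[\|(\mathring X,\mathring Y)\|_{\mathcal{L}_{X,Y}}^2+\|(\mathring\Theta,\psi,\mathring\mu^2,\mathring\lambda)\|_{\mathcal{Q}}^2\big]$ together with the corresponding bound for differences. I would split $N_X=N_X^{(1)}+N_X^{(2)}$ and $N_Y=N_Y^{(1)}+N_Y^{(2)}$. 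The terms in $N_X^{(1)}$, $N_Y^{(1)}$ are, after inspection, each a product of two quantities small in $\mathcal{L}_{X,Y}$: the genuinely quadratic pieces $|\partial\mathring X|^2$, $|\partial\mathring Y|^2$, $\partial\mathring X\cdot\partial\mathring Y$, and the pieces such as $(\mathring X\,\partial Y_K-\mathring Y\,\partial X_K)\cdot(2X_K\partial\mathring Y-\mathring X\,\partial Y_K+\mathring Y\,\partial X_K)$ in which both factors are linear in $(\mathring X,\mathring Y)$ or its derivatives; the denominator $X_K^2(1+\mathring X)$ is harmless for $\epsilon$ small once combined with the numerators, and the decay of $X_K^{-1}\partial Y_K$, $X_K^{-1}\partial X_K$ and of $|\partial h|$ from Lemmas~\ref{lem:xK-regularity} and~\ref{somestuffthatisuseful} supplies the $r^{-3}$, $X_K^{-1}$, and pole weights built into $\mathcal{N}_{X,Y}$. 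For $N_X^{(2)}$, $N_Y^{(2)}$ I would use the identity $\rho^{-1}-\sigma^{-1}\partial_\rho\sigma=-(1+\mathring\sigma)^{-1}\partial_\rho\mathring\sigma$, which shows the $\sigma$-dependent pieces are controlled by $\mathring\sigma$ (quadratically small by Proposition~\ref{fixb}), the $B$-dependent pieces are controlled by $B$ (also quadratically small, since $B$ carries a factor $\psi^2$), and the leftover term $-X_K^{-1}e^{2\lambda}(2m^2+X_K(1+\mathring X)\mu^2)\psi^2$ is quadratic in $\psi$. The rapid vanishing of $\psi$ coded into $\mathcal{L}_\psi$ (heuristically $\psi=O(\rho^{10})$ near $\mathscr{A}$, $\psi=O(s^{10})$ near $p_N$, $\psi=O(r^{-10})$ at infinity), together with $X_K\sim\rho^2\sim s^2\chi^2$ near the axis and poles (Lemma~\ref{lem:xK-regularity}) and $\rho=s\chi$, then yields the weighted H\"older estimates demanded by $\mathcal{N}_X$ and $\mathcal{N}_Y$; near $p_N$ and $p_S$ I would work throughout in the $(s,\chi)$-coordinate form of the equations, e.g.~\eqref{linXcorrect} and~\eqref{tildeeqnYp}, whose conformal factor $(\chi^2+s^2)$ matches the weights in $\mathcal{N}_{X,Y}$.

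The Lipschitz bound~(4) of Lemma~\ref{fixit} follows by the same term-by-term analysis applied to differences, telescoping each product of two small quantities and invoking the continuous nonlinear dependence of $\mathring\sigma$ and $B$ on all parameters from Proposition~\ref{fixb}. I expect the main obstacle to be the bookkeeping of the weighted H\"older norms near $p_N$ and $p_S$, where the effective dimension of the equations jumps and the $X_K^{-1}$ weight in $\mathcal{N}_Y$ must be reconciled with the renormalization $\tilde Y=X_K^{-1}\mathring Y$ used in Proposition~\ref{c2alpha}; a second delicate point is the borderline contribution to $N_X^{(2)}$ of $\big(\rho^{-1}-\sigma^{-1}\partial_\rho\sigma\big)X_K^{-1}\partial_\rho X_K$, where $\partial_\rho\log X_K\sim 2/\rho$ near $\mathscr{A}$; this is tamed using the boundedness of $\rho^{-1}\partial_\rho\mathring\sigma$ (Lemma~\ref{firstdecay}), which comes from the $C^{3,\alpha_0}$-regularity of $\mathring\sigma$ as a function on $\mathbb{R}^4$. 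With these estimates in hand, Lemma~\ref{fixit} produces the solution map with continuous nonlinear dependence on $(\mathring\Theta,\psi,\mathring\mu^2,\mathring\lambda)$, and composing with the map from Proposition~\ref{fixb} gives the stated conclusion.
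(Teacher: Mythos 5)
Your proposal matches the paper's proof: it eliminates $\mathring\sigma$ and the $B$-triple via Propositions~\ref{fixsig}/\ref{fixb}, sets up Lemma~\ref{fixit} with $\mathcal{L}=\mathcal{L}_{X,Y}$, $\mathcal{N}=\mathcal{N}_{X,Y}$ and the inverse from Proposition~\ref{c2alpha}, and verifies the quadratic estimates for $N_X^{(1)},N_X^{(2)},N_Y^{(1)},N_Y^{(2)}$ using the same key ingredients (the identity $\rho^{-1}-\sigma^{-1}\partial_\rho\sigma=-(1+\mathring\sigma)^{-1}\partial_\rho\mathring\sigma$, Lemma~\ref{firstdecay} for $\rho^{-1}\partial_\rho\mathring\sigma$, Lemma~\ref{somestuffthatisuseful} for $\partial Y_K$, the rapid decay of $\psi$ and $B$, and the $(s,\chi)$-coordinate form near $p_N,p_S$). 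The only minor imprecision is attributing the $C^{3,\alpha_0}$-control of $\mathring\sigma$ to ``$\mathbb{R}^4$'' regularity, whereas $\mathcal{L}_\sigma$ is defined via the $\mathbb{R}^3$-axisymmetric H\"older norm (the four-dimensional Laplacian appears only in the equation, not the norm), but this does not affect the argument.
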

\begin{proof}We first apply Propositions~\ref{fixsig} and~\ref{fixb} to solve for $\mathring{\sigma}$ and $\left(B^{(N)}_{\chi},B^{(S)}_{\chi'},B^{(A)}_z\right)$ in terms of the other unknowns.

The remaining part of the proof will follow from Lemma~\ref{fixit} once we establish the necessary nonlinear estimates for $N_X$ and $N_Y$; these estimates follow from a straightforward (but somewhat lengthy) analysis of each term, which we now describe. 

Recall that $N_{X}=N_{X}^{(1)}+N_{X}^{(2)}$ where 
\[
N_{X}^{(1)} = \frac{|\partial\mathring X|^{2}-|\partial \mathring Y|^{2}}{1+\mathring X} + \frac{(\mathring X \partial Y_{K} - \mathring Y \partial X_{K})(2X_{K}\partial \mathring Y - \mathring X \partial Y_{K} + \mathring Y \partial X_{K})}{X_{K}^{2}(1+\mathring X)} 
\]
and 
\begin{align*}
N_{X}^{(2)} & = (\rho^{-1}-\sigma^{-1}\partial_{\rho}\sigma) \frac{\partial_{\rho}(X_{K}(1+\mathring X))}{X_{K}} - \sigma^{-1}\partial_{z}\sigma \frac{\partial_{z}(X_{K}(1+\mathring X))}{X_{K}} - 2 \frac{\partial_{\rho}(Y_{K}+X_{K}\mathring Y)B_{\rho}}{X_{K}^{2}(1+\mathring X)}\\
& \qquad - 2 \frac{\partial_{z}(Y_{K}+X_{K}\mathring Y)B_{z}}{X_{K}^{2}(1+\mathring X)} - \frac{B_{\rho}^{2}+ B_{z}^{2}}{X_{K}^{2}(1+\mathring X)} - X_{K}^{-1}e^{2\lambda}(2m^{2} + X_{K}(1+\mathring X)\mu^{2})\psi^{2}
\end{align*}
To establish the non-linear estimates for these terms (corresponding to (3) in Proposition~\ref{fixit}), we must show that 
\begin{equation}\label{eq:NX-nonlinear-estimate-quadratic}
\Vert N_{X}\Vert_{\mathcal{N}_{X}} \leq D \Vert (\mathring X, \mathring Y)\Vert_{\mathcal{L}_{X}\times \mathcal{L}_{Y}}^{2} + D \Vert \mathring \Theta, \psi,\mathring \mu^{2},\mathring \lambda\Vert_{\mathcal{L}_{\Theta}\times \dots\times \mathcal{L}_{\lambda}}^{2}.
\end{equation}
Note that here we are thinking of $\mathring \sigma$ and $\left(B^{(N)}_{\chi},B^{(S)}_{\chi'},B^{(A)}_z\right)$ as functions on the other unknowns, thanks to Propositions~\ref{fixsig} and~\ref{fixb} (these propositions include, among other things, the fact that $\mathring \sigma$ and $B$ have ``nonlinear dependence on parameters,'' in the sense of Lemma~\ref{fixit}). 

Recall that 
\begin{align*}
& \Vert N_{X}\Vert_{\mathcal{N}_{X}} \\
& = \Vert r^{3}(1-\xi_{N}-\xi_{S}) N_{X}\Vert_{C^{0,\alpha_{0}}(\RR^{3})} + \Vert (\chi^{2}+s^{2})\xi_{N}N_{X}\Vert_{C^{0,\alpha_{0}}(\RR^{3})} + \Vert ((\chi')^{2}+(s')^{2})\xi_{S}N_{X}\Vert_{C^{0,\alpha_{0}}(\RR^{3})}.
\end{align*}
The first term $\frac{|\partial\mathring X|^{2}-|\partial \mathring Y|^{2}}{1+\mathring X}$ in $N_{X}^{(1)}$ is easily bounded away from $p_{N},p_{S}$ using the fact that $|\partial\mathring X| \leq \Vert\mathring X\Vert_{\mathcal{L}_{X}} r^{-2}$ (with a similar estimate for the H\"older norm) and $|\partial\mathring Y| \leq D \Vert \mathring Y\Vert_{\mathcal{L}_{Y}} r^{-2}$ (and similarly for the H\"older norm).\footnote{Note that to derive the estimate for $\mathring Y$, we should write $\partial \mathring Y = \partial(X_{K}X_{K}^{-1}\mathring Y) = (\partial X_{K})(X_{K}^{-1}\mathring Y) + X_{K} \partial (X_{K}^{-1}\mathring Y)$. Now, recall that away from $p_{N},p_{S}$, $X_{K}/\rho^{2}$ is a smooth function (on $\RR^{3}$) with all derivatives bounded. Moreover, the $\mathcal{L}_{Y}$-norm bounds $r^{3} X_{K}^{-1}\mathring Y$ and $r^{4} \partial(X_{K}^{-1}\mathring Y)$ (away from $p_{N},p_{S}$, there is no difference between $\partial$ and $\hat\partial$, see \eqref{reder}). Thus, we obtain $|\partial \mathring Y| \leq D \Vert \mathring Y\Vert_{\mathcal{L}_{Y}} (\rho r^{-3} + \rho^{2}r^{-4})$. This yields the desired estimate. A similar argument handles the H\"older norm.} Now, we consider the behavior of this term near $p_{N}$, i.e., in the $(s,\chi)$ coordinates. Recall that 
\[
|\partial  f|^{2} = (\chi^{2}+s^{2})^{-1} |\underline \partial f|^{2} 
\]
(see \eqref{eq:def-partial}). Thus, we find that, for example, $(s^{2}+\chi^{2})|\partial \mathring X|^{2}$ is bounded by the $\hat C^{1}(\overline{\mathscr{B}})$ norm of $\mathring X$, which is bounded by the $\mathcal{L}_{X}$-norm of $\mathring X$. We may similarly bound the H\"older norm, and a similar argument controls the $\mathring Y$ term. This yields the desired bounds near $p_{N}$, a similar argument handles $p_{S}$. 

We now turn to the second term in $N_{X}^{(1)}$, which is similar although slightly more involved. Here, it is useful to recall that $|\partial Y_{K}| \leq \rho^{3}r^{-4}$ away from $p_{N},p_{S}$, by Lemma~\ref{somestuffthatisuseful}. Then,
\[
|\mathring X \partial Y_{K} - \mathring Y \partial X_{K}| + | 2X_{K} \partial\mathring Y - \mathring X \partial Y_{K} + \mathring Y \partial X_{K}| \leq D ( \Vert \mathring X\Vert_{\mathcal{L}_{X}} \rho^{3} r^{-5} + \Vert \mathring Y\Vert_{\mathcal{L}_{Y}}\rho^{3}r^{-3})
\]
Recalling that $X_{K}\sim \rho^{2}$, the second term in $N^{(1)}_{X}$ is thus bounded (away from $p_{N},p_{S}$) by $D \Vert (\mathring X,\mathring Y)\Vert_{\mathcal{L}_{X}\times \mathcal{L}_{Y}}^{2} \rho^{2} r^{-6}$. A similar argument bounds the H\"older norm. An argument similar to that used for the first term provides the requisite bounds near $p_{N},p_{S}$; we omit the details. 

We now turn to the estimates for $N_{X}^{(2)}$. It is useful to note that
\[
\rho^{-1} - \sigma^{-1} \partial_{\rho}\sigma = - \frac{\partial_{\rho}\mathring \sigma}{1+\mathring \sigma}
\]
(we will use this again later). We begin with the estimate for the first term in $N_{X}^{(2)}$ away from $p_{N},p_{S}$. The only difficult case is when the derivative hits $X_{K}$, yielding the expression
\[
-\frac{1+\mathring X}{1+\mathring\sigma} (\partial_{\rho} \log X_{K})\partial_{\rho}\mathring\sigma.
\]
Because $\partial_{\rho} \log X_{K} \sim \frac 1 \rho$, this term looks potentially worrisome. However, at this point, we make use of the rotational symmetry in the form of Lemma \ref{firstdecay}, which implies that $\rho^{-1}\partial_{\rho}\mathring \sigma|_{(x,y)} = \partial^{2}_{\rho}\mathring \sigma|_{(0,(x^{2}+y^{2})^{1/2})}$, where $x,y$ are the polar coordinates associated to $\rho$. As such, this expression is controlled in $C^{0,\alpha_{0}}$ away from $p_{N},p_{S}$ by $\Vert \mathring\sigma\Vert_{\mathcal{L}_{\sigma}}$. This, in turn, is bounded by the terms on the right hand side of \eqref{eq:NX-nonlinear-estimate-quadratic}, by the ``nonlinear dependence on parameters'' of $\mathring\sigma$. Note that the second term in $N_{X}^{(2)}$ is bounded similarly. 

We now explain how to control the first and second term near $p_{N}$ (i.e., in $s,\chi$ coordinates). As we will see, it is necessary to consider both terms simultaneously. As above, the difficult case will be when the derivatives hit the $X_{K}$ terms. We will ignore the other terms (which are easily bounded) and explain how to bound 
\[
- (\chi^{2}+s^{2}) \frac{1+\mathring X}{1+\mathring \sigma} ((\partial_{\rho}\log X_{K}) \partial_{\rho} \mathring \sigma+ (\partial_{z}\log X_{K} ) \partial_{z}\mathring \sigma).  
\]
At this point, we may use the following transformation rule: for any two $1$-forms $C = C_{\rho}d\rho + C_zdz$ and $E = E_{\rho}d\rho + E_zdz$, we have
\begin{equation}\label{rhotos}
C_{\rho}E_{\rho} + C_zE_z = \left(\chi^2+s^2\right)^{-1}\left(C_sE_s + C_{\chi}E_{\chi}\right) = \left((\chi')^2+(s')^2\right)^{-1}\left(C_{s'}E_{s'} + C_{\chi'}E_{\chi'}\right) .
\end{equation}
In particular, the above expression becomes 
\[
-  \frac{1+\mathring X}{1+\mathring \sigma} ((\partial_{s}\log X_{K}) \partial_{s} \mathring \sigma+ (\partial_{\chi}\log X_{K} ) \partial_{\chi}\mathring \sigma).  
\]
The potentially troublesome term $\partial_{s}\log X_{K}\sim s^{-1}$ is bounded (as above) using Lemma \ref{firstdecay} applied to $\partial_{s}\mathring\sigma$. The remaining terms are easily bounded. Estimates at $p_{S}$ follow similarly.

We now turn to the third term in $N_{X}^{(2)}$. Away from $p_{N},p_{S}$, the desired estimates follow immediately from the rapid decay of $B$ as $\rho \to 0$ and $r\to\infty$ (the term involving $\partial Y_{K}$ may be estimated using Lemma \ref{somestuffthatisuseful}). Similarly, near $p_{N}$, in $(s,\chi)$ coordinates this term (along with an extra factor of $s^{2}+\chi^{2}$, which appears in $\mathcal{N}_{X}$) becomes 
\[
- \frac{2}{1+\mathring X} (\chi^{2}+s^{2})^{-1}X_{K}^{-2} (\chi \partial_{s}(Y_{K} + X_{K}\mathring Y) + s\partial_{\chi}(Y_{K}+X_{K}\mathring Y))(sB_{\chi} + \chi B_{s})
\]
Because the $C^{0,\alpha_{0}}$ norm of $B$ in the $(s,\chi)$ coordinates vanishes like $s^{10}$, the $C^{0,\alpha_{0}}$ norm of this quantity near $p_{N}$ is readily bounded by $D \Vert B, (\mathring X,\mathring Y)\Vert_{\mathcal{L}_{B}\times\dots\times\mathcal{L}_{Y}}$, as desired. The fourth and fifth terms in $N_{X}^{(2)}$ are bounded in a similar manner. Finally, we turn to the last term in $N_{X}^{(2)}$. Away from $p_{N},p_{S}$, the rapid decay of $\psi$ as $\rho\to0,r\to\infty$ yields the desired bounds. A similar argument works near $p_{N},p_{S}$.

We now consider similar bounds for $N_{Y}$. We begin with $N_{Y}^{(1)}$. We would first like $\hat C^{0,\alpha_{0}}$ bounds for $r^{5}X_{K}^{-1}N_{Y}^{(1)}$ in $(\overline{\mathscr{B}_{A}}\cup\overline{\mathscr{B}_{H}} )\cap \{\rho\leq 1\}$. Note that
\[
X_{K}^{-1}N_{Y}^{(1)} =2 \frac{\partial \mathring X \cdot \partial (X_{K}^{-1}\mathring Y)}{1+\mathring X} + 4\frac{(X_{K}^{-1}\mathring Y) \partial X_{K}\cdot \partial \mathring X}{X_{K}(1+\mathring X)} - 2\frac{\mathring X \partial Y_{K} \cdot \partial\mathring X}{X_{K}^{2}(1+\mathring X)}
\]
The first term is clearly bounded as asserted (thanks to the decay enforced by $\mathcal{L}_{X},\mathcal{L}_{Y}$). The second term is slightly more complicated, in particular the $\rho$-component, since $\partial_{\rho}\log X_{K}\sim \rho^{-1}$. Here, we bound $\rho^{-1}\partial_{\rho}\mathring X$ in $\hat C^{0,\alpha_{0}}$ using Lemma \ref{firstdecay}, since $\partial^{2}\mathring X$ has H\"older norm decaying at least like $r^{-2}$. For the last term, we use Lemma \ref{somestuffthatisuseful} to bound $\partial Y_{K}$. We explain how to obtain point-wise decay, as the H\"older estimate follows similarly. As before, we must consider the $\rho$ and $z$ terms separately. We rewrite the third term as
\[
2(1+\mathring X)^{-1}\frac{\rho \partial_{\rho} Y_{K}}{X_{K}^{2}} \rho^{-1}\partial_{\rho}\mathring X + 2(1+\mathring X)^{-1}\frac{ \partial_{z} Y_{K}}{X_{K}^{2}} \partial_{z}\mathring X.
\]
Using Lemma \ref{firstdecay}, we may estimate $\rho^{-1}\partial_{\rho}\mathring X$ by $\partial^{2}_{\rho}\mathring X$, as above. The fractions involving $Y_{K}$ are bounded by Lemma \ref{somestuffthatisuseful}. The $C^{0,\alpha_{0}}$ estimate for $N_{Y}^{(1)}$ in $\{\rho\geq 1\}$ is similar, but more straightforward (since we do not need to consider the behavior as $\rho\to0$). We now consider the estimate near $p_{N}$. We will use the above expression for $X_{K}^{-1}N_{Y}^{(1)}$ (and then multiply by a single factor of $\chi^{2}+s^{2}$). The first term is easily bounded, since using \eqref{rhotos},
\[
(\chi^{2}+s^{2})\partial \mathring X \cdot \partial (X_{K}^{-1}\mathring Y) = (\partial_{s} \mathring X)(\partial _{s}(X_{K}^{-1}\mathring Y)) + (\partial_{\chi} \mathring X)(\partial _{\chi}(X_{K}^{-1}\mathring Y)) 
\]
and both of these terms are bounded in $C^{0,\alpha_{0}}(\overline{\mathscr{B}_{N}})$ by the $\mathcal{L}_{X}$ and $\mathcal{L}_{Y}$ norms. The second term is similarly bounded, except we must use Lemma \ref{firstdecay} to handle the term of the form $\partial_{s}\log X_{K}\partial_{s}\mathring X$. The third term is bounded using Lemma \ref{somestuffthatisuseful}. A similar argument works near $p_{S}$. 

Thus, we turn to $N_{Y}^{(2)}$. These terms are bounded in a similar manner to those above so we will be brief. Firstly, the terms involving $B$ will be bounded in a straightforward manner, thanks to the rapid decay of $B$ in the relevant regimes. Secondly, the terms involving $\partial Y_{K}$ are bounded via Lemma \ref{somestuffthatisuseful} and Lemma \ref{firstdecay}. For example, the first term in $X_{K}^{-1}N^{(2)}_{Y}$ contains a term of the form
\[
-\frac{1}{1+\mathring \sigma} \frac{\partial_{\rho}Y_{K}}{X_{K}^{2}} \partial_{\rho}\mathring \sigma = -\frac{1}{1+\mathring \sigma} \frac{\rho \partial_{\rho}Y_{K}}{X_{K}^{2}} \rho^{-1}\partial_{\rho}\mathring \sigma.
\] 
This is bounded as above. The remaining terms are similar. 

Putting this together, we have established the non-linear estimates corresponding to (3) in Proposition \ref{fixit}. As explained above, property (4) in Proposition \ref{fixit} follows from a similar argument. Thus, the assertion follows from an application of Proposition \ref{fixit}. 
\end{proof}

\section{Solving for $\mathring{\Theta}$}
In this section will solve for $\mathring{\Theta}$ in terms of the renormalized unknowns
\[\left(\psi,\mathring{\mu}^2,\mathring{\lambda}\right).\]
\subsection{Linear Estimates}
The linear equation we need to study is\index{Metric Quantities!$H_{\Theta}$}
\begin{equation}\label{linTh}
d\mathring{\Theta} = H_{\Theta},
\end{equation}
where $H_{\Theta}$ is a closed $1$-form.

\begin{proposition}Suppose that $H_{\Theta} \in \mathcal{N}_{\Theta}$. Set
\[\mathring{\Theta}\left(\rho,z\right) \doteq -\int_{\rho}^{\infty}\left(H_{\Theta}\right)_{\rho}\left(\tau,z\right)\, d\tau.\]

Then $\mathring{\Theta}$ solves~(\ref{linTh}) and there exists a constant $D$, depending on $\alpha_0$ and independent of $H_{\Theta}$, such that
\[\left\Vert \mathring{\Theta}\right\Vert_{\mathcal{L}_{\Theta}} \leq D(\alpha_0)\left\Vert H_{\Theta}\right\Vert_{\mathcal{N}_{\Theta}}.\]
\end{proposition}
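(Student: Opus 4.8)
The plan is to verify directly that the displayed formula produces an antiderivative of the closed $1$-form $H_\Theta$, and then to estimate the resulting function chart by chart, with the poles $p_N,p_S$ being the delicate case.

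\emph{Step 1: $\mathring\Theta$ solves \eqref{linTh}.} Write $H_\Theta = (H_\Theta)_\rho\,d\rho + (H_\Theta)_z\,dz$; membership in $\mathcal{N}_\Theta$ forces the components and their first derivatives to decay like $r^{-3}$, which justifies differentiating under the integral. From the definition, $\partial_\rho\mathring\Theta = (H_\Theta)_\rho$. For the $z$-derivative, closedness of $H_\Theta$ gives $\partial_z(H_\Theta)_\rho = \partial_\rho(H_\Theta)_z$, so
\[
\partial_z\mathring\Theta(\rho,z) = -\int_\rho^\infty\partial_z(H_\Theta)_\rho(\tau,z)\,d\tau = -\int_\rho^\infty\partial_\tau(H_\Theta)_z(\tau,z)\,d\tau = (H_\Theta)_z(\rho,z),
\]
the contribution at $\tau=\infty$ vanishing by the decay. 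Hence $d\mathring\Theta = H_\Theta$.

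\emph{Step 2: estimates away from the poles.} On $\overline{\mathscr{B}_A}\cup\overline{\mathscr{B}_H}$ the pointwise decay $|\mathring\Theta|\lesssim r^{-2}\|H_\Theta\|_{\mathcal{N}_\Theta}$ follows by inserting the bound $|(H_\Theta)_\rho(\tau,z)|\lesssim r(\tau,z)^{-3}\min(\tau,1)\|H_\Theta\|_{\mathcal{N}_\Theta}$ into the defining integral and splitting the $\tau$-integration at $\tau=\max(\rho,\langle z\rangle)$. That $\mathring\Theta_{\mathbb{R}^3}$ extends as a regular (even) function of $\rho^2$ across the axis is seen by writing $(H_\Theta)_\rho=\rho\,g$ with $g:=\rho^{-1}(H_\Theta)_\rho$ controlled in $\hat C^{1,\alpha_0}$ (this is exactly what the weight $(1+\rho^{-1})$ in $\mathcal{N}_\Theta$ provides) and substituting $u=\tau^2$, so that $\mathring\Theta=-\tfrac12\int_{\rho^2}^\infty g(\sqrt u,z)\,du$. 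The $\hat C^{2,\alpha_0}$-bound for $r^2\mathring\Theta$ is then obtained from local Schauder estimates on unit balls (in $\mathbb{R}^3$) applied to the Poisson equation $\Delta_{\mathbb{R}^3}\mathring\Theta_{\mathbb{R}^3}=\rho^{-1}(H_\Theta)_\rho+\partial_\rho(H_\Theta)_\rho+\partial_z(H_\Theta)_z$, whose right-hand side lies in $\hat C^{0,\alpha_0}$ with $r^{-3}$ decay by definition of $\mathcal{N}_\Theta$, together with the decay of $\mathring\Theta$ and its first derivatives just recorded.

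\emph{Step 3: the poles.} Near $p_N$ the weight $r^2$ is a smooth positive factor, so the task is to show $(\xi_N\mathring\Theta)_{\mathbb{R}^4}\in C^{2,\alpha_0}(\mathbb{R}^4)$ with the appropriate bound, even though the defining formula integrates only in $\rho$ while the natural chart is $(s,\chi)$. The key structural point is that the weight $r^3(1+\rho^{-1})$ on $(H_\Theta)_\rho$ forces $(H_\Theta)_\rho$ to vanish to first order along $\{\rho=0\}$; combined with the transition law $F_\chi=s(H_\Theta)_\rho+\chi(H_\Theta)_z$ near $p_N$, this forces $F_\chi$ to vanish on the horizon $\{\chi=0\}$, so that $\chi^{-1}F_\chi$ is itself a genuinely $\hat C^{0,\alpha_0}$-controlled quantity there. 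Granting this, $\mathring\Theta_{\mathbb{R}^4}$ satisfies $\Delta_{\mathbb{R}^4}\mathring\Theta_{\mathbb{R}^4}=\delta H_\Theta=-\partial_sF_s-s^{-1}F_s-\partial_\chi F_\chi-\chi^{-1}F_\chi$ near $p_N$, whose right-hand side is in $\hat C^{0,\alpha_0}$ and bounded by $\|H_\Theta\|_{\mathcal{N}_\Theta}$; local Schauder estimates on unit balls in $\mathbb{R}^4$, together with an $\hat C^1$ bound for $\mathring\Theta$ coming from re-expressing the defining integral along the constant-$z$ hyperbolae $\chi^2-s^2=2(z-\gamma)$ (along which, using the formulas for $\partial_\rho$ in $(s,\chi)$-coordinates, $(H_\Theta)_\rho\,d\tau=(F_s+\tfrac{s}{\chi}F_\chi)\,ds$ and one gains a derivative via the substitution $u=s^2$), then yield the claimed bound. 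The pole $p_S$ is handled identically.

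I expect Step 3 to be the only genuinely delicate point: everything turns on checking that the $\mathcal{N}_\Theta$-weights — especially the $(1+\rho^{-1})$-factor — are calibrated so that the effective source $\delta H_\Theta$ stays controlled uniformly up to the poles, with closedness of $H_\Theta$ used to convert the weighted control on $(H_\Theta)_\rho,(H_\Theta)_z$ near the axis and horizon into control of the $\chi^{-1}F_\chi$-type terms appearing near $p_N$ and $p_S$. The remaining steps are routine weighted Hölder bookkeeping.
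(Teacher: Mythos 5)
Your overall plan — check $d\mathring\Theta = H_\Theta$ directly, then estimate chart by chart — is sound, and Steps 1 and 2 are correct, though slightly heavier than necessary. The paper's own proof does not pass through a Poisson equation at all: once $d\mathring\Theta = H_\Theta$ is established, the first partials $\partial_\rho\mathring\Theta = (H_\Theta)_\rho$, $\partial_z\mathring\Theta = (H_\Theta)_z$ (and their $(s,\chi)$-analogues) are known functions of controlled Hölder class, so the Cartesian formulas $\partial_x = \frac{x}{\rho}\partial_\rho$, $\partial_{x_2}=\frac{x_2}{\chi}\partial_\chi$, etc., give the $\hat C^{2,\alpha_0}$-estimate directly — the relevant $\hat C^{1,\alpha_0}$-control of the right-hand sides being exactly what $\mathcal{N}_\Theta$ encodes. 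Your Schauder detour is valid but does more work than needed, and the hyperbola reparametrization for the $\hat C^1$-bound is superfluous since $\partial_s\mathring\Theta = F_s$ and $\partial_\chi\mathring\Theta = F_\chi$ are already available.

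The genuine issue is the justification in Step 3 for why $\chi^{-1}F_\chi$ is controlled near $p_N$. You appeal to the $(1+\rho^{-1})$-weight on $F_\rho$, but that weight appears in the $\mathcal{N}_\Theta$-norm \emph{only} on $\overline{\mathscr{B}_A}\cup\overline{\mathscr{B}_H}$, and those sets by construction exclude a full neighborhood of $p_N$ (the conditions $\rho^2+(z-\gamma)^2 > \gamma/200$). Near $p_N$, the only available control is $\|s^{-1}F_s\|_{\hat C^{1,\alpha_0}(\overline{\mathscr{B}_N})}$ and $\|F_\chi\|_{\hat C^{1,\alpha_0}(\overline{\mathscr{B}_N})}$, which on its own does \emph{not} force $F_\chi$ to vanish on the horizon piece $\{\chi=0\}$. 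The vanishing is nevertheless true, but it has to be extracted differently: $\mathcal{N}_\Theta$ is defined as a completion of \emph{closed} compactly supported $1$-forms, and closedness gives $\partial_s F_\chi = \partial_\chi F_s = \partial_\chi\bigl(s\,(s^{-1}F_s)\bigr)$. Since $s^{-1}F_s\in\hat C^{1,\alpha_0}(\overline{\mathscr{B}_N})$ is a regular axisymmetric function on $\mathbb{R}^4$, $\partial_\chi(s^{-1}F_s)|_{\chi=0}=0$; hence $\partial_s F_\chi|_{\chi=0}=0$, so $F_\chi|_{\chi=0}$ is constant in $s$, and it equals $0$ either because of compact support or because its value at the overlap with $\overline{\mathscr{B}_H}$ is forced to $0$ by the $(1+\rho^{-1})$-weight there. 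Only after this, together with $\partial_\chi F_\chi|_{\chi=0}=0$ (which \emph{is} automatic from $F_\chi\in\hat C^{1,\alpha_0}(\overline{\mathscr{B}_N})$), do you obtain $F_\chi = O(\chi^{1+\alpha_0})$ and hence $\chi^{-1}F_\chi\in\hat C^{0,\alpha_0}$. As written, your Step 3 asserts this control from a weight that is simply not in force where you need it, so that step is a gap; the fix above closes it without changing your overall strategy.
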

\begin{proof}The assumed pointwise decay of $H_{\Theta}$ is easily seen to justify the following computation
\begin{align*}
\partial_z\mathring{\Theta} &= -\int_{\rho}^{\infty}\partial_z\left(H_{\Theta}\right)_{\rho}\, d\tau
\\ \nonumber &= -\int_{\rho}^{\infty}\partial_{\rho}\left(H_{\Theta}\right)_z\, d\tau
\\ \nonumber &= \left(H_{\Theta}\right)_z.
\end{align*}
Thus, $\mathring{\Theta}$ solves~(\ref{linTh}).

Next, we discuss $\hat{C}^{2,\alpha_0}$ estimates. First of all, away from $\{\rho = 0 \}$, the desired estimates follow immediately from the given expression for $\Theta$. Near the axis, the desired estimates for $\partial_z\mathring{\Theta}$ are similarly easy to obtain. For $\partial_x\mathring{\Theta}$ and $\partial_y\mathring{\Theta}$ we simply use the formula for $\partial_{\rho}\mathring{\Theta}$ and the formulas
\[\partial_x = \frac{x}{\sqrt{x^2+y^2}}\partial_{\rho},\qquad \partial_y = \frac{y}{x\sqrt{x^2+y^2}}\partial_{\rho}.\]
Near where the axis and horizon meet, one simply writes the equation~(\ref{linTh}) in either $(s,\chi)$ or $(s',\chi')$ coordinates and argues similarly.
\end{proof}
\subsection{The Fixed Point}
We are ready to run the fixed point argument.
\begin{proposition}\label{fixth}For each $m \in \mathbb{Z}_{\neq 0}$ and $\alpha_0 \in (0,1)$, there exists $\epsilon > 0$ sufficiently small so that given \[\left(\psi,\mathring{\mu}^2,\mathring{\lambda}\right) \in B_{\epsilon}\left(\mathcal{L}_{\lambda}\right)\times \cdots\times B_{\epsilon}\left(\mathcal{L}_{\psi}\right)\times B_{\epsilon}\left(\mathcal{L}_{\mu^2}\right),\]
we may find
\[\left(\mathring{\sigma},\left(B^{(N)}_{\chi},B^{(S)}_{\chi'},B^{(A)}_z\right),\left(\mathring{X},\mathring{Y}\right),\mathring{\Theta} \right)\in B_{\epsilon}\left(\mathcal{L}_{\sigma}\right)\times \cdots \times B_{\epsilon}\left(\mathcal{L}_{\Theta}\right),\]
which solve~(\ref{renormalsigmaeqn}), (\ref{Btripleeqn}), ~(\ref{X1}) and~(\ref{Y1}), and~(\ref{thetaringeqnrho}) and~(\ref{thetaringeqnz}) respectively.

Furthermore, $\mathring{\sigma}$, $\left(B^{(N)}_{\chi},B^{(S)}_{\chi'},B^{(A)}_z\right)$, $(\mathring{X},\mathring{Y})$, and $\mathring{\Theta}$  have ``continuous nonlinear dependence on parameters''
\[\left(\psi,\mathring{\mu}^2,\mathring{\lambda}\right),\]
in the sense of Lemma~\ref{fixit} and Remark~\ref{contnonlindep}.
\end{proposition}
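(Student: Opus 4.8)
The plan is to cast \eqref{thetaringeqnrho}--\eqref{thetaringeqnz} as an instance of Lemma~\ref{fixit}. First I would invoke Propositions~\ref{fixsig},~\ref{fixb}, and~\ref{fixxy} to view $\mathring{\sigma}$, the triple $\left(B^{(N)}_{\chi},B^{(S)}_{\chi'},B^{(A)}_z\right)$, and $\left(\mathring{X},\mathring{Y}\right)$ (and hence $\sigma$, $B$, $X$, $Y$) as functions of $\left(\mathring{\Theta},\psi,\mathring{\mu}^2,\mathring{\lambda}\right)$ enjoying ``continuous nonlinear dependence on parameters''; in particular each of $\mathring{\sigma}$, $B$, $\mathring{X}$, $\mathring{Y}$ is \emph{quadratically} bounded by $\big\|\big(\mathring{\Theta},\psi,\mathring{\mu}^2,\mathring{\lambda}\big)\big\|$. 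I would then set $\mathcal{L}=\mathcal{L}_{\Theta}$, $\tilde{\mathcal{L}}=\mathcal{N}=\mathcal{N}_{\Theta}$, $\mathcal{Q}=\mathcal{L}_{\psi}\times\mathcal{L}_{\mu^2}\times\mathcal{L}_{\lambda}$, take $L$ to be the exterior derivative $d$ with $L^{-1}$ the integration operator of the linear proposition above, and define $\mathfrak{E}\left(\mathring{\Theta},q\right)=d\mathring{\Theta}-N_{\Theta}$, where $N_{\Theta}$ is the $1$-form on the right-hand side of \eqref{thetaringeqnrho}--\eqref{thetaringeqnz}. The preceding linear estimate gives hypothesis~(2) of Lemma~\ref{fixit}, so everything reduces to checking that $N_{\Theta}$ lies in $\mathcal{N}_{\Theta}$ and satisfies the quadratic bounds (3) and (4).

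The one genuinely structural point is that $N_{\Theta}$ must be a \emph{closed} $1$-form, since $\mathcal{N}_{\Theta}$ consists of closed forms and $L^{-1}$ only inverts $d$ there. Writing $\theta=dY+B$ and computing $dN_{\Theta}$: the Kerr piece $\tfrac{\rho}{X_K^2}\partial Y_K$ is closed because $X_K^{-1}W_K$ is a bona fide function on Kerr, so closedness reduces to $\partial_{\rho}\!\left(\tfrac{\sigma}{X^2}\theta_{\rho}\right)+\partial_z\!\left(\tfrac{\sigma}{X^2}\theta_z\right)=0$. Expanding this as $\tfrac{1}{X^2}\big[\partial_{\rho}(\sigma\theta_{\rho})+\partial_z(\sigma\theta_z)\big]-\tfrac{2\sigma}{X^3}\big[\theta_{\rho}\partial_{\rho}X+\theta_z\partial_z X\big]$ and inserting the second equation for $\theta$ from Theorem~\ref{thm:HBH-geometric-main}, namely $\sigma^{-1}\partial_{\rho}(\sigma\theta_{\rho})+\sigma^{-1}\partial_z(\sigma\theta_z)=\tfrac{2(\theta_{\rho}\partial_{\rho}X+\theta_z\partial_z X)}{X}$, the two contributions cancel identically. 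But that second $\theta$-equation is exactly what, after the renormalizations $X=X_K(1+\mathring X)$, $Y=Y_K+X_K\mathring{Y}$, $\theta=dY+B$, and using $dB=d\theta$ from \eqref{dthatB}, becomes equation \eqref{Y1} for $\mathring{Y}$. Hence, since $\mathring{Y}$ was constructed in Proposition~\ref{fixxy} to solve \eqref{Y1} (and by elliptic regularity together with the $\hat C^{2,\alpha_0}$ control in $\mathcal{L}_Y$ this holds classically on the simply connected set $\mathscr{B}=\{\rho>0\}$), the form $N_{\Theta}$ is closed --- which is precisely why $\mathring{Y}$'s equation had to be solved before $\mathring{\Theta}$'s. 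I expect this closedness verification to be the main obstacle, in that it is the step genuinely relying on the global Carter--Robinson structure.

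The remaining estimates are routine given the earlier work. After subtracting the Kerr term, $(N_{\Theta})_{\rho}$ and $(N_{\Theta})_z$ become sums of products of Kerr quantities --- estimated in the weighted $\hat C^{1,\alpha_0}$ norms of $\mathcal{N}_{\Theta}$ using $X_K\sim\rho^2$ and Lemma~\ref{lem:xK-regularity}, and the decay of $\tfrac{\rho}{X_K^2}\partial Y_K$ and of its $z$-component in Lemma~\ref{somestuffthatisuseful} --- times the perturbations $\mathring{\sigma}$, $\mathring{X}$, $\mathring{Y}$, $B$; since each of the latter is quadratically controlled by $\big\|\big(\mathring{\Theta},\psi,\mathring{\mu}^2,\mathring{\lambda}\big)\big\|$, one obtains $\|N_{\Theta}\|_{\mathcal{N}_{\Theta}}\leq D\big[\|\mathring{\Theta}\|_{\mathcal{L}_{\Theta}}^2+\|(\psi,\mathring{\mu}^2,\mathring{\lambda})\|_{\mathcal{Q}}^2\big]$, giving (3), and the analogous difference bound (4) follows from the continuous-dependence statements of Propositions~\ref{fixsig}--\ref{fixxy}. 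Some care is needed near $p_N$ and $p_S$, where one passes to $(s,\chi)$ and $(s',\chi')$ coordinates and uses \eqref{rhotos} to rewrite $(\rho,z)$-contractions, and near the axis and horizon, where the weight $(1+\rho^{-1})$ in $\mathcal{N}_{\Theta}$ is absorbed using the extra $\rho$-decay of $(N_{\Theta})_{\rho}$ furnished by the $\partial_z Y_K$ bound in Lemma~\ref{somestuffthatisuseful}. With (3) and (4) verified, Lemma~\ref{fixit} produces the solution map for $\mathring{\Theta}$ with the asserted continuous nonlinear dependence, completing the proof.
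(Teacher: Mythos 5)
Your proposal is correct and takes essentially the same approach as the paper's proof. Both arguments invoke Propositions~\ref{fixsig},~\ref{fixb},~\ref{fixxy} to eliminate the preceding unknowns, apply Lemma~\ref{fixit} with $L=d$ and $L^{-1}$ the one-dimensional integration operator, identify that the one genuinely structural point is that $N_\Theta$ is closed (which reduces to the second $\theta$-equation from Theorem~\ref{thm:HBH-geometric-main}, itself equivalent after renormalization to~\eqref{Y1}, explaining why $\mathring Y$ must be solved first), and then defer to term-by-term estimates in the weighted $\hat C^{1,\alpha_0}$ norms using Lemmas~\ref{somestuffthatisuseful} and~\ref{firstdecay} together with the $(s,\chi)$/$(s',\chi')$ coordinates and~\eqref{rhotos} near $p_N$, $p_S$. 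You are slightly more explicit than the paper in spelling out the algebraic cancellation $\frac{1}{X^2}\big[\partial_\rho(\sigma\theta_\rho)+\partial_z(\sigma\theta_z)\big]-\frac{2\sigma}{X^3}\big[\theta_\rho\partial_\rho X+\theta_z\partial_z X\big]=0$, but this is the same mechanism the paper invokes.
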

\begin{proof}We first apply Propositions~\ref{fixsig},~\ref{fixb}, and~\ref{fixxy} to solve for $\mathring{\sigma}$ and $\left(B^{(N)}_{\chi},B^{(S)}_{\chi'},B^{(A)}_z\right)$, and $(\mathring{X},\mathring{Y})$ in terms of the other unknowns.

The remaining part of the proof will follow from Lemma~\ref{fixit} once we establish the necessary properties of $N_{\Theta}$. First of all, we note that $\mathring{Y}$'s equation immediately implies that $N_{\Theta}\left(\mathring{\Theta},\psi,\mathring{\mu}^2,e^{2\mathring{\lambda}}\right)$ is always a closed $1$-form (cf.\ \cite[p.\ 917]{weinstein}). Indeed, it is easiest to check this starting from the equation for $W$ in Theorem \ref{thm:HBH-geometric-main} (which will then immediately imply that $N_{\Theta}$ is closed, since it is the same quantity just expressed in terms of the renormalized variables). Now, that $\frac{\sigma}{X^{2}} [\theta_{\rho}dz - \theta_{z}d\rho]$ is closed follows immediately from the second equation for $\theta$ in Theorem \ref{thm:HBH-geometric-main} (which is satisfied, since $\mathring Y$ and $B$ solve their respective equations at this point). 

We now must verify the nonlinear estimates for $N_{\Theta}$. We begin with the estimates for 
\[
(N_{\Theta})_{\rho} = \left(  \frac{2\mathring X + \mathring X^{2}  - \mathring\sigma}{(1+\mathring X)^{2}} \right)  \frac{\rho}{X_{K}^{2}}  \partial_{z}Y_{K} - \frac{\sigma }{(1+\mathring X)^{2}} \partial_{z}(X_{K}^{-1}\mathring Y)- 2 \frac{\sigma}{(1+\mathring X)^{2}} X_{K}^{-1} \mathring Y \partial_{z}\log X_{K}  - \frac{\sigma}{X^{2}}B_{z}
\]
We would like to prove than $r^{3}(1+\rho^{-1})$ times this quantity is bounded in $\hat C^{1,\alpha_{0}}$ away from $p_{N},p_{S}$. Note that since $\mathring X,\mathring Y,\mathring \sigma$ obey ``nonlinear dependence on parameters'' in the sense of Lemma \ref{fixit}, we only need to bound this by $D\Vert \mathring X, \mathring Y,\mathring\sigma \Vert_{\mathcal{L}_{X}\times\dots\times \mathcal{L}_{\sigma}}$ to establish (3) in Lemma \ref{fixit} (establishing (4) follows similarly). Using Lemma \ref{somestuffthatisuseful} (and, in particular, the improved estimates for $\partial_{z}Y_{K}$), the first term in $(N_{\Theta})_{\rho}$ is bounded as asserted. The remaining terms are bounded in a straightforward manner. The analogous estimate for $(N_{\theta})_{z}$ follows similarly (note we are not asserting the added decay as $\rho\to 0$, which allows us to use the weaker bounds present in Lemma \ref{somestuffthatisuseful}). 

We now turn to estimates near $p_{N}$. A straightforward computation (for example, using the fact that the Hodge star operator is conformally invariant for one forms in two variables) implies that 
\[
(N_{\Theta})_{s} = \left(  \frac{2\mathring X + \mathring X^{2}  - \mathring\sigma}{(1+\mathring X)^{2}} \right)  \frac{\chi s}{X_{K}^{2}}  \partial_{\chi}Y_{K} - \frac{\sigma}{(1+\mathring X)^{2}} \partial_{\chi}(X_{K}^{-1}\mathring Y)- 2 \frac{\sigma}{(1+\mathring X)^{2}} X_{K}^{-1} \mathring Y \partial_{\chi}\log X_{K}  - \frac{\sigma}{X^{2}}B_{\chi}
\]
We would like to estimate $s^{-1}$ times this quantity. The only non-trivial term is the first one. As in Lemma \ref{somestuffthatisuseful} and Lemma \ref{firstdecay}, if $(x,y,\tilde x,\tilde y)$ denote Cartesian coordinates on $\mathbb{R}^{4}$ associated to $(s,\phi_{1},\chi,\phi_{2})$, we compute at $(x,y,\tilde x,\tilde y)$
\[
\frac{\chi s}{X_{K}^{2}}  \partial_{\chi}Y_{K} = \partial_{s}\left( \frac {W_{K}}{X_{K}}\right) = s \partial_{x}^{2}\left( \frac {W_{K}}{X_{K}}\right)\Big|_{(0,\sqrt{x^{2}+y^{2}},\tilde x,\tilde y)}
\]
After dividing by $s$, this is seen to be the composition of a smooth function with a $C^{1,1}$ function, which yields the desired $C^{1,\alpha_{0}}$ estimates. The remaining terms are easily bounded. The other component, $(N_{\Theta})_{\chi}$, is bounded similarly. The estimates at $p_{S}$ are identical. This completes the proof of (3) in Lemma \ref{fixit}, and as noted above, (4) in Lemma \ref{fixit} follows similarly. Thus, we may apply Lemma \ref{fixit} to complete the proof.
\end{proof}

\begin{corollary}\label{omconst}Let $m \in \mathbb{Z}_{\neq 0}$ and $\alpha_0 \in (0,1)$ and choose $\epsilon > 0$ sufficiently small. Then, given any \[\left(\psi,\mathring{\mu}^2,\mathring{\lambda}\right) \in B_{\epsilon}\left(\mathcal{L}_{\lambda}\right)\times \cdots\times B_{\epsilon}\left(\mathcal{L}_{\psi}\right)\times B_{\epsilon}\left(\mathcal{L}_{\mu^2}\right),\]
we may solve for
\[\left(\mathring{\sigma},\left(B^{(N)}_{\chi},B^{(S)}_{\chi'},B^{(A)}_z\right),\left(\mathring{X},\mathring{Y}\right),\mathring{\Theta} \right)\in B_{\epsilon}\left(\mathcal{L}_{\sigma}\right)\times \cdots \times B_{\epsilon}\left(\mathcal{L}_{\Theta}\right),\]
using Proposition~\ref{fixth}.

Then, the function \index{Metric Quantities!$\mathring \omega$}$\mathring \omega$ is constant along $\mathscr{H}$.
\end{corollary}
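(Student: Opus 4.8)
Recall from \eqref{mathringomega} that along $\mathscr{H}$ we have
\[
\mathring{\omega}(0,z) = -m\left(\mathring{\Theta} + X_K^{-1}W_K\right)(0,z) = -m\left(X^{-1}W\right)(0,z),
\]
the second equality being the definition of the renormalized unknown $\mathring{\Theta}$. Since $\mathscr{H}$ is a connected interval, the plan is to show that $\partial_z\left(X^{-1}W\right)$ vanishes identically on $\mathscr{H}$, which then forces $X^{-1}W|_{\mathscr{H}}$, hence $\mathring\omega|_{\mathscr{H}}$, to be constant.

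First I would record that $X^{-1}W$ is $C^1$ up to $\mathscr{H}$. Indeed $\mathring{\Theta}\in B_\epsilon(\mathcal{L}_{\Theta})$ by Proposition~\ref{fixth}, and dividing by the smooth strictly positive function $r^{-2}$ shows $\mathring{\Theta}\in\hat{C}^{2,\alpha_0}(\overline{\mathscr{B}})$; moreover $X_K^{-1}W_K=-2Ma\tilde r/\Pi$ is smooth up to $\mathscr{H}$, since the explicit formula for $\tilde r(\rho,z)$ in the proof of Lemma~\ref{lem:xK-regularity} together with $\rho^2=\Delta\sin^2\theta$ exhibit $\tilde r$ and $\sin^2\theta$ -- hence $X_K^{-1}W_K$ -- as smooth functions of $\rho^2$ and $z$ near $\mathscr{H}$. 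In particular $\partial_z(X^{-1}W)=\partial_z\mathring{\Theta}+\partial_z(X_K^{-1}W_K)$ extends continuously to $\mathscr{H}$. By the $W$-equation of Theorem~\ref{thm:HBH-geometric-main} (equivalently, by adding $\partial_z(X_K^{-1}W_K)=\tfrac{\rho}{X_K^2}\partial_\rho Y_K$ back to \eqref{thetaringeqnz} and using $\theta=dY+B$) this quantity equals $\tfrac{\sigma}{X^2}\theta_\rho=\tfrac{\sigma}{X^2}\left(\partial_\rho Y+B_\rho\right)$ on $\mathscr{B}$.

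The main point is then to let $\rho\to0$ in this identity. Since $\sigma=\rho(1+\mathring{\sigma})\to0$ while $X=X_K(1+\mathring X)$ and $1+\mathring\sigma$ are bounded away from $0$ near $\mathscr{H}$, it suffices to check that $\partial_\rho Y+B_\rho$ stays bounded near $\mathscr{H}$. Writing $Y=Y_K+X_K\mathring{Y}$: the term $\partial_\rho Y_K=\tfrac{X_K^2}{\rho}\partial_z(X_K^{-1}W_K)$ is bounded (in fact $O(\rho)$) near $\mathscr{H}$, because the restriction of $X_K^{-1}W_K$ to $\mathscr{H}$ is the constant $-2Ma\tilde r_+/(\tilde r_+^2+a^2)^2$ -- the classical rigidity of the Kerr horizon, immediate from $\Delta=0\Rightarrow\tilde r=\tilde r_+,\ \Pi=(\tilde r_+^2+a^2)^2$ there -- so $\partial_z(X_K^{-1}W_K)=O(\rho^2)$ by the smoothness just recorded; the term $\partial_\rho(X_K\mathring{Y})=(\partial_\rho X_K)\mathring{Y}+X_K\partial_\rho\mathring{Y}$ is bounded near $\mathscr{H}$ since $X_K$ is smooth there (Lemma~\ref{lem:xK-regularity}) and $\mathring{Y}\in\mathcal{L}_Y$ is $C^1$ up to $\mathscr{H}$; and $B_\rho$ is bounded near $\mathscr{H}$, being identically $0$ away from the poles (by Definition~\ref{defB}, $B^{(A)}_\rho\equiv0$ and $\xi_N,\xi_S$ vanish there) and controlled near the poles by the $\mathcal{L}_B$-norm. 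Hence $\partial_z(X^{-1}W)=\tfrac{\sigma}{X^2}(\partial_\rho Y+B_\rho)\to0$ as $\rho\to0$ at every interior point of $\mathscr{H}$, so its continuous extension to $\mathscr{H}$ vanishes. As $X^{-1}W$ is $C^1$ up to the connected set $\mathscr{H}$ with vanishing $z$-derivative there, $X^{-1}W|_{\mathscr{H}}$ is constant, and therefore so is $\mathring{\omega}|_{\mathscr{H}}$.

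The only real subtlety is in the previous paragraph: one must not bound $\partial_\rho Y_K$ directly via Lemma~\ref{somestuffthatisuseful}, which only yields $\partial_\rho Y_K=O(1/\rho)$ near $\mathscr{H}$ and would leave $\tfrac{\sigma}{X^2}\partial_\rho Y_K$ merely bounded rather than vanishing; instead one exploits the constancy of $X_K^{-1}W_K$ on $\mathscr{H}$ to gain the extra power of $\rho$. Beyond that, everything is a matter of reading off the regularity of $\mathring{\Theta}$, $\mathring{Y}$ and $B$ up to $\mathscr{H}$ from the function spaces $\mathcal{L}_{\Theta}$, $\mathcal{L}_Y$ and $\mathcal{L}_B$ and taking a limit.
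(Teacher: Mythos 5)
Your proof is correct and takes essentially the same route as the paper: you derive the identity \eqref{dzth}, $\partial_z(\mathring\Theta + X_K^{-1}W_K) = \tfrac{\sigma}{X^2}(\partial_\rho Y + B_\rho)$, from the $W$-equation and then show the right-hand side vanishes on $\mathscr{H}$. The paper's proof simply asserts that the right-hand side vanishes at the horizon; you usefully spell out the subtlety that one cannot rely on the crude bound $|\partial_\rho Y_K|\lesssim\rho^{-1}$ from Lemma~\ref{somestuffthatisuseful} (which would only give boundedness, not vanishing) but must instead exploit the constancy of $X_K^{-1}W_K$ on $\mathscr{H}$ to gain the extra power of $\rho$.
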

\begin{proof}We have
\begin{equation}\label{dzth}
\partial_z\left(\mathring{\Theta} + \frac{W_K}{X_K}\right) = \frac{\sigma}{X^2}\left(\partial_{\rho}Y + B_{\rho}\right).
\end{equation}
The proof concludes by noting that the right hand side of~(\ref{dzth}) vanishes on the horizon.
\end{proof}

From now on we will replace $\mathring{\omega}$ with the constant $\omega$. It is important to observe that for $\epsilon$ sufficiently small, $\omega$ cannot vanish, as long as we started with a Kerr solution with non-zero angular momentum. This follows by direct computation of the value of $\omega$ on Kerr, cf.~Lemma 3.0.1 in \cite{HBH:geometric}. 

\section{Solving for $\psi$}\label{secpsi}
The main goal of the section is to prove the following proposition.
\begin{proposition}\label{solveforpsi}Let $\delta \geq 0$. Then there exists $\alpha_0 \in (0,1)$ and $m \in \mathbb{Z}_{\neq 0}$ such that if \[\left(\mathring{\sigma},\left(B^{(N)}_{\chi},B^{(S)}_{\chi'},B^{(A)}_z\right),\left(\mathring{X},\mathring{Y}\right),\mathring{\Theta},\mathring{\lambda}\right) \in B_{\epsilon}\left(\mathcal{L}_{\sigma}\right)\times\cdots\times B_{\epsilon}\left(\mathcal{L}_{\mathring{\lambda}}\right),\]
for sufficiently small $\epsilon > 0$, then there exists $\psi : \overline{\mathscr{B}} \to \mathbb{R}$ and $\mu^2 \in \mathbb{R}$ solving~(\ref{theeqn10}) with
\[\int_0^{\infty}\int_{-\infty}^{\infty}\psi^2(\rho,z)e^{2\lambda}\sigma\, \ dz\ d\rho = \delta^2,\]
\[\left\Vert \psi\right\Vert_{\mathcal{L}_{\psi}}\leq C\delta\left[1 + \left\Vert \left(\mathring{\sigma},\left(B^{(N)}_{\chi},B^{(S)}_{\chi'},B^{(A)}_z\right),\left(\mathring{X},\mathring{Y}\right),\mathring{\Theta},{\mathring{\lambda}}\right) \right\Vert_{\mathcal{L}_{\sigma}\times\cdots\times\mathcal{L}_{\lambda}}\right],\]
\[\mu^2 \leq C\left[1 + \left\Vert \left(\mathring{\sigma},\left(B^{(N)}_{\chi},B^{(S)}_{\chi'},B^{(A)}_z\right),\left(\mathring{X},\mathring{Y}\right),\mathring{\Theta},{\mathring{\lambda}}\right) \right\Vert_{\mathcal{L}_{\sigma}\times\cdots\times\mathcal{L}_{\lambda}}\right].\]

Furthermore, we can arrange for $\psi$ and $\mu^2$ to depend continuously on the renormalized quantities in the sense that if $(\psi_1,\mu_1^2)$ and $(\psi_2,\mu_2^2)$ are two pairs of scalar fields and Klein--Gordon masses associated to two sets of renormalized quantities \index{Metric Quantities!$\mathfrak{a}$}\[\mathfrak{a}_1 \doteq \left(\mathring{\sigma}^{(1)},\left(B^{(N),(1)}_{\chi},B^{(S),(1)}_{\chi'},B^{(A),(1)}_z\right),\left(\mathring{X}^{(1)},\mathring{Y}^{(1)}\right),\mathring{\Theta}^{(1)},\mathring{\lambda}^{(1)}\right)\] and \[\mathfrak{a}_2 \doteq \left(\mathring{\sigma}^{(2)},\left(B^{(N),(2)}_{\chi},B^{(S),(2)}_{\chi'},B^{(A),(2)}_z\right),\left(\mathring{X}^{(2)},\mathring{Y}^{(2)}\right),\mathring{\Theta}^{(2)},\mathring{\lambda}^{(2)}\right),\] then we have
\[\left\Vert \left(\psi_1-\psi_2\right)\right\Vert_{\mathcal{L}_{\psi}} \leq D\delta\left\Vert \mathfrak{a}_1-\mathfrak{a}_2\right\Vert_{\mathcal{L}_{\sigma}\times\cdots\times\mathcal{L}_{\lambda}},\]
\[\left|\mu_1^2-\mu_2^2\right| \leq D\left\Vert \mathfrak{a}_1-\mathfrak{a}_2\right\Vert_{\mathcal{L}_{\sigma}\times\cdots\times\mathcal{L}_{\lambda}}.\]

\end{proposition}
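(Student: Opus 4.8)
The plan is to construct $\psi$ and $\mu^2$ variationally, treating $\mu^2$ as the eigenvalue of a suitable Schr\"odinger-type operator, and then upgrade the resulting first eigenfunction estimates to the stated norms and continuity statements — exactly mimicking the model argument of Proposition~\ref{modelprop} but now in the degenerate-elliptic, weighted setting of $\overline{\mathscr{B}}$. Concretely, given the data $\mathfrak{a}$, the metric quantities $(\sigma, B, X, Y, W, \lambda)$ (and the constant $\omega$ from Corollary~\ref{omconst}) are all determined; rewriting \eqref{theeqn10} in divergence form with respect to the weight $e^{2\lambda}\sigma$, we are looking for $\psi$ solving $\mathrm{div}(e^{2\lambda}\sigma \cdot (\ast)) + (\text{potential})\,\psi = \mu^2 e^{2\lambda}\sigma\,\psi$, where the potential is $e^{2\lambda}\sigma^{-1} X^{-1}(X\omega + Wm)^2 - e^{2\lambda}m^2 X^{-1}$. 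Note crucially that $X\omega + Wm = X(\omega + X^{-1}Wm)$ vanishes at the horizon (this is the role of $\omega$), so the first term degenerates there and the second term $-e^{2\lambda}m^2 X^{-1}$ is the dangerous piece near the axis; one checks that for the reference Kerr metric this is precisely the operator whose first eigenvalue is $\mu_K^2$ (fixed in Remark~\ref{defmuk}), and for the perturbed data it remains a small perturbation.

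The key steps, in order, would be: (i) Set up the constrained minimization problem $\mu^2 = \inf\{\int |\partial\tilde\psi|^2 e^{2\lambda}\sigma - (\text{potential})\tilde\psi^2 e^{2\lambda}\sigma : \int \tilde\psi^2 e^{2\lambda}\sigma = \delta^2\}$ over the appropriate weighted Sobolev space on $\overline{\mathscr{B}}$ (interpreting everything via the $\mathbb{R}^3$/$\mathbb{R}^4$ identifications so the degeneracies at the axis/poles become regularity of axisymmetric functions — the trick emphasized in Section~\ref{sec:overview}). Apply the direct method: coercivity here is not automatic because of the $-e^{2\lambda}m^2X^{-1}\psi$ term, so one needs a Hardy/Poincar\'e-type inequality (in the spirit of Lemmas~\ref{poin} and~\ref{bar}, now with the $X^{-1}\sim\rho^{-2}$ weight near the axis) to control the negative term by the Dirichlet energy with a constant that is favorable for large $|m|$; this is where the freedom to choose $m$ large (and $\alpha_0$) enters. (ii) Show the minimizer $\psi$ exists, is the first eigenfunction, can be normalized positive (nodal-domain argument, as in Proposition~\ref{modelprop}), and satisfies \eqref{theeqn10}. (iii) Run elliptic regularity — De Giorgi–Nash for the $L^\infty$ lower/upper bounds $\psi \in [c\delta, C\delta]$ away from the boundary, then Schauder estimates in the $\rho,z$ / $s,\chi$ coordinate charts and pointwise decay estimates near the asymptotically flat end — to establish $\psi \in \mathcal{L}_\psi$ with the asserted linear-in-$\delta$ bound, absorbing the dependence on $\|\mathfrak{a}\|$; the rapid vanishing ($s^{10}$, etc.) enforced in $\mathcal{L}_\psi$ has to be checked from the structure of the potential near the poles and axis. (iv) For the bound on $\mu^2$, compare with the trivial competitor (e.g.\ a fixed Kerr first eigenfunction rescaled to have mass $\delta^2$), and use that the potential depends boundedly on $\mathfrak{a}$.

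For the continuity statement, the plan is to adapt the argument given in the proof of Proposition~\ref{modelprop} essentially verbatim: given two data sets $\mathfrak{a}_1, \mathfrak{a}_2$ with associated $(\psi_i, \mu_i^2)$, first get $|\mu_1^2 - \mu_2^2| \leq D\|\mathfrak{a}_1 - \mathfrak{a}_2\|$ by inserting $\psi_2$ (suitably renormalized to have the right mass with respect to the weight of problem $1$) into the variational characterization of $\mu_1^2$ and symmetrizing, using that the weights and potential depend Lipschitz-continuously on $\mathfrak{a}$ in the relevant norms. Then, to control $\psi_1 - \psi_2$ in $L^2$, multiply $\psi_2$'s equation by $v^2/\psi_2$ with $v = \psi_1 - \psi_2$, integrate by parts, use the Picone-type identity $\nabla(v^2/\psi_2)\cdot\nabla\psi_2 = |\nabla v|^2 - \psi_2^2|\nabla(\psi_1/\psi_2)|^2$, combine with the equations and the eigenvalue-difference bound to get $\int \psi_2^2 |\nabla(\psi_1/\psi_2)|^2 \lesssim \delta^2\|\mathfrak{a}_1 - \mathfrak{a}_2\|^2 + q\int v^2$, then feed this into a Poincar\'e inequality for $\psi_1/\psi_2 - A$ (with $A$ the average), use the normalizations to bound $\delta^2|1 - A|^2$, and absorb the $q\int v^2$ term; finally upgrade the resulting $\delta\|\mathfrak{a}_1-\mathfrak{a}_2\|$ bound on $\|v\|_{L^2}$ to the $\mathcal{L}_\psi$-norm via Schauder estimates. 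The main obstacle I expect is step (i): establishing the weighted Poincar\'e/Hardy inequality with a good enough constant on the manifold-with-corners $\overline{\mathscr{B}}$, uniformly in the perturbation, so that the functional is coercive and the direct method closes — this is really the place where the choice of large $m$ is forced and where the interplay between the horizon degeneracy (which helps, killing the positive potential term there) and the axis singularity (which hurts) has to be balanced carefully; the continuity argument, by contrast, is a robust adaptation of the model computation already spelled out in the excerpt.
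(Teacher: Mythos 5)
Your overall architecture — constrained minimization with $\mu^{2}$ as eigenvalue, positivity via the $|\psi|$-competitor, De Giorgi--Nash and Schauder regularity, and the Picone/Poincar\'e continuity argument from Proposition~\ref{modelprop} — matches the paper's proof. But there is a concrete sign error in your analysis of coercivity that would send you down the wrong path. In the functional $\mathscr{L}_{\mu^{2}}$ the term $e^{2\lambda}X^{-1}m^{2}\,f^{2}$ appears with a \emph{positive} sign; it is a confining potential, and it is precisely what makes the direct method benign near the axis. The only negative part of the potential is the superradiance term $-\sigma^{-2}X^{-1}(X\omega+Wm)^{2}$, which is uniformly bounded (and, because $\omega+X^{-1}Wm$ vanishes at the horizon, also degenerates there); the paper handles it by the trivial bound $\mathscr{L}_{\hat\mu^{2}}(f)\geq -C\int f^{2}e^{2\lambda}\sigma = -C\delta^{2}$ rather than by a Hardy/Poincar\'e inequality. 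So your assertion that ``coercivity is not automatic because of the $-e^{2\lambda}m^{2}X^{-1}\psi$ term, so one needs a Hardy inequality favorable for large $|m|$'' inverts the roles of the two potential terms. Chasing a Hardy-type absorption of the $m^{2}/X$ piece is unnecessary (it already has the good sign) and would not, on its own, deliver the crucial step you still need, which is the negativity of $\mathscr{L}_{\hat\mu^{2}}$ for some competitor $f$ and some $\hat\mu^{2}>\omega^{2}$ (Lemma~\ref{neg}) — this negativity comes from the $-2M\omega^{2}/r$ expansion at infinity, not from the axis term.

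Relatedly, large $m$ is \emph{not} what makes the existence/coercivity step close; it enters only in Lemma~\ref{decayax}, via a weighted $L^{2}$ integration-by-parts, to force the $\rho^{10}$, $s^{10}$, $(s')^{10}$ decay at the axis and poles required for membership in $\mathcal{L}_{\psi}$. If you recast step (i) with the correct sign convention — positive confinement $+m^{2}/X$ near the axis, bounded negative superradiance term, existence of a negative-energy competitor from Lemma~\ref{neg}, and the boundedness $-C<\nu<0$ used to set $\mu^{2}=\hat\mu^{2}-\nu$ — then your outline lines up with the paper's proof, including the mass-nonconcentration argument at $\rho\to 0$ and at $r\to\infty$ and the Harnack-inequality positivity argument. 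Your continuity argument (eigenvalue comparison by competitor insertion, the Picone-type identity in Proposition~\ref{agmax}, Poincar\'e on the compact set $K_{\minus}$ where the lower bound on $\psi_{2}$ holds, and the normalization-to-control-$A$ trick) is an accurate rendering of the paper's argument.
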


In Section~\ref{fixthepsinow} we will use Proposition~\ref{solveforpsi} to solve for $\psi$ (and the other renormalized quantities) all in terms of ${\mathring{\lambda}}$.

\subsection{Existence}

We start by defining a useful functional.
\begin{definition}For every $\mu^2 > 0$ we define the following functional \index{Miscellaneous!$\mathscr{L}_{\mu^{2}}$}$\mathscr{L}_{\mu^2}$ on smooth functions $f(\rho,z) : \mathscr{B} \to \mathbb{R}$:
\begin{align}\label{func}
\mathscr{L}_{\mu^2}\left(f\right) = \int_0^{\infty}\int_{-\infty}^{\infty}\left[(\partial_{\rho}f)^2+(\partial_zf)^2 + e^{2\lambda}\left(\mu^2 + X^{-1}m^2 - \sigma^{-2}X^{-1}\left(X\omega + Wm\right)^2\right)f^2\right]\sigma\ dz\ d\rho.
\end{align}
\end{definition}
\begin{remark}Note that we have $X\omega + Wm = O\left(\rho^2\right)$ as $\rho \to 0$.
\end{remark}
\begin{remark}
Note that critical points of $\mathscr{L}_{\mu^2}$ correspond to solutions of~(\ref{theeqn10}).
\end{remark}

The following lemma shows if $\epsilon$ is sufficiently small and $\mu^2$ is taken sufficiently close to but larger than $\omega^2$, then there exists a function $f$ which is compactly supported in the region $(\rho,z) \in (0,\infty)\times (0,\infty)$ such that $\mathscr{L}_{\mu^2}\left(f\right) < 0$.
\begin{lemma}\label{neg}There exists $\hat{\mu}^2$ satisfying $\hat{\mu}^2 > \omega^2$ and a function $f(\rho,z)$ which is compactly supported in the region $(\rho,z) \in (0,\infty)\times (0,\infty)$ such that for all $\epsilon \geq 0$ sufficiently small
\[\mathscr{L}_{\hat{\mu}^2}\left(f\right) < 0.\]
\end{lemma}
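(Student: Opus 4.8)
The statement is essentially the same kind of assertion as Lemma~1.4 (or the analogous negativity lemma) in~\cite{shlapgrow}, adapted to the present perturbative setting, and I would prove it by exhibiting an explicit test function. The key point is the sign of the potential term
\[
\mathcal{V} \doteq e^{2\lambda}\left(\mu^2 + X^{-1}m^2 - \sigma^{-2}X^{-1}(X\omega + Wm)^2\right).
\]
First I would examine this expression on the Kerr background (i.e.\ with all ring-variables set to zero, so $X = X_K$, $W = W_K$, $\sigma = \rho$, $\lambda = \lambda_K$, and $\omega = \omega_K$ the Kerr horizon frequency), in the region away from the horizon and the axis where $X_K$ grows like $\rho^2$. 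There the term $X_K^{-1}m^2$ is strictly positive but \emph{bounded} on compact subsets of $(0,\infty)\times(0,\infty)$, while the superradiant term $\rho^{-2}X_K^{-1}(X_K\omega + W_K m)^2$ is also bounded there. Crucially, the latter term can be made to dominate: in \cite{shlapgrow} (see also the discussion surrounding Theorem~\ref{KGgrow}) one knows that for the right relationship $\omega = cm$ there is an open region in which $\mu^2 + X_K^{-1}m^2 - \rho^{-2}X_K^{-1}(X_K\omega+W_K m)^2$ is negative provided $\mu^2$ is taken just slightly above $\omega^2$ but below the value at which superradiance shuts off. I would isolate such an open set $U \subset (0,\infty)\times(0,\infty)$ and a value $\hat\mu_K^2 > \omega_K^2$ for which $\mathcal{V}_K < -c_0 < 0$ uniformly on a slightly smaller open set $U' \Subset U$.

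Next I would construct the test function $f$: take $f$ a smooth bump supported in $U'$, rescaled (a "long, shallow" profile, or equivalently a function concentrated near a point of $U'$ with a small gradient) so that the kinetic term
\[
\int_0^\infty\int_{-\infty}^\infty \left((\partial_\rho f)^2 + (\partial_z f)^2\right)\rho\, dz\, d\rho
\]
is small compared with $\int \mathcal{V}_K f^2 \rho\, dz\, d\rho < 0$. Concretely, if $f_R(\rho,z) = \phi((\rho-\rho_0)/R,(z-z_0)/R)$ for a fixed bump $\phi$ and $(\rho_0,z_0)\in U'$, then the kinetic term scales like $R^{0}\cdot R^{-2}\cdot R^2 = O(1)$ — so a pure rescaling does not gain; instead I would use the standard trick of taking $f$ to be the (truncated) ground state of $-\Delta + \mathcal{V}_K$ restricted to a large ball inside $U'$, or simply fix a single $f$ for which the inequality $\int \mathcal{V}_K f^2\rho + \int|\partial f|^2\rho < 0$ already holds after first shrinking $\hat\mu_K^2 - \omega_K^2$ if necessary (this is exactly the content of the analogous step in \cite{shlapgrow}, which guarantees such an $f$ exists once $\mu^2$ is in the superradiant window). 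Having fixed $f$ and $\hat\mu^2 = \hat\mu_K^2$ with $\mathscr{L}^{(K)}_{\hat\mu^2}(f) < 0$ strictly, I would then let the ring-variables vary: since $f$ is compactly supported \emph{away} from the horizon and axis, all of $X, W, \sigma, \lambda, \omega$ differ from their Kerr values by $O(\epsilon)$ in $C^0$ on $\mathrm{supp}(f)$ (using the norms $\mathcal{L}_\sigma, \mathcal{L}_{X,Y}, \mathcal{L}_\Theta, \mathcal{L}_\lambda$ and Corollary~\ref{omconst} for $\omega$), hence $\mathscr{L}_{\hat\mu^2}(f) = \mathscr{L}^{(K)}_{\hat\mu^2}(f) + O(\epsilon) < 0$ for $\epsilon$ small. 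Note one must check $X\omega + Wm = O(\rho^2)$ near $\rho = 0$ to see that the functional is finite, but since $f$ vanishes near $\rho = 0$ this is not even needed for this lemma.

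The main obstacle — really the only substantive point — is verifying that the Kerr potential $\mathcal{V}_K$ genuinely takes negative values on an open set when $\hat\mu_K^2$ is just above $\omega_K^2$; everything else (the $O(\epsilon)$ perturbation of the functional on a fixed compact set, the finiteness of the integral, the compact support away from the bad sets) is routine. This negativity is, however, precisely the mechanism of superradiance/the black-hole bomb and is established in \cite{shlapgrow} (it is where the hypothesis $0<|a|<M$ and the choice of the horizon frequency $\omega_K$ enter); I would cite that work for the existence of the superradiant window $(\omega_K^2, \omega_K^2 + \eta)$ of admissible $\hat\mu^2$ and the associated negative-energy test function on Kerr, and then simply quote that $f$ here. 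I expect the author's proof to do exactly this — take an explicit $f$ supported where the rotational "potential well" $-\rho^{-2}X_K^{-1}(X_K\omega_K + W_K m)^2$ beats $X_K^{-1}m^2 + \hat\mu^2$, observe the $\epsilon$-stability, and conclude.
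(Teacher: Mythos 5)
Your overall strategy --- reduce to exact Kerr by continuity, exploit the superradiant structure of the potential, then perturb by $O(\epsilon)$ on the fixed compact support of $f$ --- matches the paper's, and your $\epsilon$-stability reasoning (the metric quantities and $\omega$ differ from their Kerr values by $O(\epsilon)$ in $C^0$ on any compact set away from the axis and horizon) is indeed the correct way to expand the paper's one-line ``by continuity.'' Your reading of the underlying mechanism (rotation, $\omega = cm$, $\hat\mu^2$ just above $\omega^2$) is also right.

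The gap is in the explicit test-function construction, which you end up deferring to \cite{shlapgrow} or to a gestured-at ground-state argument. The paper is self-contained here: it computes the asymptotic expansion
\[
e^{2\lambda_K}\left(\hat\mu^2 + X_K^{-1}m^2 - \rho^{-2}X_K^{-1}(X_K\omega + W_Km)^2\right) \approx \hat\mu^2 - \omega^2 - \frac{2M\omega^2}{r} + \frac{m^2}{\rho^2} + O(r^{-2}),
\]
and then takes $f_\nu(\rho,z) \doteq \chi\left(\frac{\rho-\nu}{\nu},\frac{z}{\nu}\right)$, a bump at distance $\sim\nu$ from the axis with width $\sim\nu$. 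You correctly observe that rescaling around a \emph{fixed} center gives an $O(1)$ kinetic term, but the paper's move is to send the center to infinity proportionally with the scale: then the kinetic term $\int|\partial f_\nu|^2\sigma\,d\rho\,dz$ in fact \emph{grows} like $\nu$ (the $\sigma\sim\rho\sim\nu$ weight contributes an extra power of $\nu$), while the potential term is approximately $\left(-b/\nu + |\hat\mu^2-\omega^2|\right)\nu^3$, which is $\leq -c\nu^2$ once $\hat\mu^2-\omega^2 < b/(2\nu)$; so the potential term dominates for $\nu$ large, with the quantifier order being ``pick $\nu$ first, then $\hat\mu^2 \in (\omega^2,\omega^2 + b/(2\nu))$.'' Your phrasing ``isolate an open set $U'$ on which $\mathcal{V}_K < -c_0 < 0$ uniformly'' suppresses the fact that $c_0\to 0$ as $U'$ moves outward, so you cannot simply fix $U'$ and then choose $f$ independently; the depth-versus-width tradeoff is exactly what the $f_\nu$ scaling resolves and what your sketch, as written, leaves unresolved.
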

\begin{proof}First of all, it is clear that by continuity, it suffices to prove the lemma for $\epsilon = 0$, i.e.~the case when we are exactly on the Kerr spacetime. In this case, straightforward calculations yield
\[X_K = \rho^2\left(1 + \frac{2M}{r} + O\left(r^{-2}\right)\right)\text{ as }r\to\infty,\]
\[X_K^{-1} = \rho^{-2}\left(1-\frac{2M}{r} + O\left(r^{-2}\right)\right)\text{ as }r\to\infty,\]
\[W_K = \rho^2\left(-\frac{2Ma}{r^3} + O\left(r^{-4}\right)\right)\text{ as }r\to\infty.\]
Keeping in mind that on the Kerr spacetime we have $\sigma = \rho$, we then obtain
\begin{align*}\label{keystruct1}
\mu^2 &+ X_K^{-1}m^2 - \rho^{-2}X_K^{-1}\left(X_K\omega + W_Km\right)^2
\\ \nonumber &=
\mu^2 +\frac{m^2}{\rho^2\left(1+\frac{2M}{r} + O\left(r^{-2}\right)\right)}- \omega^2 - \frac{2M\omega^2}{r} + O\left(r^{-2}\right)\text{ as }r\to\infty.
\end{align*}
The negative sign in front of the $\frac{2M\omega^2}{r}$ term is the key structure behind the proof of this lemma (note that this term vanishes for vanishing angular momentum, i.e. $a=0$).

Since we also have
\begin{equation*}\label{keystruct2b}
e^{2\lambda_K} = 1 + O\left(r^{-1}\right)\text{ as }r\to\infty,
\end{equation*}
we conclude
\begin{align}\label{keystruct3}
e^{2\lambda_K}&\left(\mu^2 + X_K^{-1}m^2 - \rho^{-2}X_K^{-1}\left(X_K\omega + W_Km\right)^2\right)
\\ \nonumber &= \left(1 + O\left(r^{-1}\right)\right)\left(\mu^2 +\frac{m^2}{\rho^2\left(1+\frac{2M}{r} + O\left(r^{-2}\right)\right)}- \omega^2 - \frac{2M\omega^2}{r} + O\left(r^{-2}\right)\right)\text{ as }r\to\infty.
\end{align}

Now let $\chi(\rho,z)$ be a bump function which is identically $1$ in the ball of radius $1$ around $(\rho,z) = (10,0)$ and identically $0$ outside a ball of radius $2$ around $(\rho,z) = (10,0)$. Then, for every $\nu > 0$ we set $f_{\nu} \doteq \chi\left(\frac{\rho-\nu}{\nu},\frac{z}{\nu}\right)$. Note that for sufficiently large $\nu$, on the support of the ball of radius $\nu$ around $(\rho,z) = (10+\nu,0)$,~(\ref{keystruct3}) implies
\[e^{2\lambda_K}\left(\mu^2 + X_K^{-1}m^2 - \rho^{-2}X_K^{-1}\left(X_K\omega + W_Km\right)^2\right) \leq \frac{-b_1}{r} + \left|\mu^2 - \omega^2\right| + O\left(r^{-2}\right) \leq \frac{-b_2}{\nu} + \left|\mu^2 - \omega^2\right|,\]
for some small constant $b_1,b_2 > 0$. Furthermore, one easily finds that
\[\left(\partial_{\rho}f_{\nu}\right)^2 + \left(\partial_zf_{\nu}\right)^2 \leq C\nu^{-2}.\]
Using these facts, one easily finds that $\mathscr{L}_{\mu^2}\left(f_{\nu}\right) < 0$ if $\mu^2$ is taken sufficiently close to $\omega^2$ and if $\nu$ is taken sufficiently large.
\end{proof}
\begin{definition}Using Lemma~\ref{neg} we pick and fix a choice of \index{Metric Quantities!$\hat{\mu}^{2}$}$\hat{\mu}^2$ such that for all $\epsilon \geq 0$ sufficiently small, there exists a smooth compactly supported $f : \mathscr{B} \to \mathbb{R}$ with $\mathscr{L}_{\hat{\mu}^2}\left(f\right) < 0$. Note that we can (and shall) pick $\hat{\mu}^2$ to depend in a Lipschitz manner on the Kerr parameters $a$ and $M$.
\end{definition}

We are now ready for the key result of the section.
\begin{proposition}\label{itexists} Let $\delta \geq 0$. Then there exists $\mu^2 > 0$ and a function $\psi(\rho,z) \in \dot{H}^1_{\rm axi}\left(\mathbb{R}^3\right)$ satisfying
\begin{equation}\label{normalization}
\int_0^{\infty}\int_{-\infty}^{\infty}\psi^2(\rho,z)e^{2\lambda}\sigma\, dz\ d\rho = \delta^2,
\end{equation}
such that $\psi$ is a weak solution to~(\ref{theeqn10}) in the sense that for every smooth compactly supported function $\varphi : \overline{\mathscr{B}} \to \mathbb{R}$, we have
\[\int_0^{\infty}\int_{-\infty}^{\infty}\left[\partial_{\rho}\psi\partial_{\rho}\varphi +\partial_z\psi\partial_z\varphi - e^{2\lambda}\left(\mu^2 + X^{-1}m^2 - \sigma^{-2}X^{-1}\left(X\omega + Wm\right)^2\right)\psi\varphi\right]\sigma\, dz\, d\rho = 0.\]

Furthermore, we can (and do) take $\psi|_{\mathscr{B}} > 0$.
\end{proposition}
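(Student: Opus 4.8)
The plan is to construct $\psi$ and $\mu^2$ via the direct method of the calculus of variations, treating $\mu^2$ as a free parameter rather than a spectral parameter. The starting point is Lemma~\ref{neg}, which hands us a fixed value $\hat\mu^2 > \omega^2$ and a smooth compactly supported competitor $f$ with $\mathscr{L}_{\hat\mu^2}(f) < 0$ for all sufficiently small $\epsilon$. The idea is then to \emph{decrease} $\mu^2$ from $\hat\mu^2$ until the infimum of $\mathscr{L}_{\mu^2}$ over the constraint set $\{g \in \dot H^1_{\rm axi}(\mathbb{R}^3) : \int g^2 e^{2\lambda}\sigma = \delta^2\}$ is exactly zero, and to take $\psi$ to be a minimizer at that critical mass. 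More precisely, set
\[
\mu_*^2 \doteq \sup\left\{\mu^2 : \inf_{\{g : \int g^2 e^{2\lambda}\sigma = \delta^2\}} \mathscr{L}_{\mu^2}(g) < 0 \right\},
\]
and define $\mu^2 \doteq \mu_*^2$. By Lemma~\ref{neg} this supremum is over a nonempty set, and since the potential term in $\mathscr{L}_{\mu^2}$ is monotone increasing in $\mu^2$ while the gradient term is unaffected, the map $\mu^2 \mapsto \inf \mathscr{L}_{\mu^2}$ is monotone nondecreasing; for $\mu^2$ large and negative one checks (using the lower bound on the potential coming from the renormalized quantities being $\epsilon$-small and the Kerr asymptotics) that no such competitor exists, so $\mu_*^2$ is finite.

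The next step is to run the direct method at $\mu^2 = \mu_*^2$. First I would establish coercivity: the quadratic form $\mathscr{L}_{\mu_*^2}$, restricted to the $L^2(e^{2\lambda}\sigma\,dz\,d\rho)$-sphere of radius $\delta$, controls $\int(\partial_\rho g)^2 + (\partial_z g)^2$ up to a constant multiple of $\delta^2$ — this is because $\inf \mathscr{L}_{\mu_*^2} \geq 0$ by definition of the supremum, so the negative part of the potential (which is bounded, with the dangerous $X^{-1}m^2$ term having a favorable sign and the superradiant threshold condition $X\omega + Wm = O(\rho^2)$ ensuring the singular-looking term is in fact bounded near the axis) can be absorbed. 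Then a minimizing sequence $\{g_k\}$ for $\mathscr{L}_{\mu_*^2}$ on the constraint sphere is bounded in $\dot H^1_{\rm axi}(\mathbb{R}^3)$; passing to a subsequence, $g_k \rightharpoonup \psi$ weakly in $\dot H^1$, strongly in $L^2_{\rm loc}$, and pointwise a.e. Lower semicontinuity of the gradient term under weak convergence, together with control on the potential term at spatial infinity (the potential approaches $\mu_*^2 - \omega^2 > 0$ for large $r$, which prevents mass escaping to infinity — this is where one uses $\mu_*^2 \geq \omega^2$, which itself follows from Lemma~\ref{neg} since competitors exist down to $\mu^2$ near $\omega^2$), gives $\mathscr{L}_{\mu_*^2}(\psi) \leq \liminf \mathscr{L}_{\mu_*^2}(g_k) = \inf = 0$, and the constraint $\int \psi^2 e^{2\lambda}\sigma = \delta^2$ passes to the limit by the no-loss-of-mass argument. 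Since $0$ is the infimum, $\psi$ is a minimizer, hence (by the Lagrange multiplier structure) a weak solution of~\eqref{theeqn10} with the multiplier absorbed into $\mu_*^2$ — one must check the multiplier is consistent, i.e. that the Euler--Lagrange equation is exactly~\eqref{theeqn10} and not~\eqref{theeqn10} with a shifted mass; this is arranged precisely because we chose $\mu_*^2$ to be the value where the infimum transitions through zero, so the minimizer has $\mathscr{L}_{\mu_*^2}(\psi) = 0$ and the multiplier vanishes.

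Finally, positivity: if $\psi$ is a minimizer then so is $|\psi|$ (the functional depends only on $|\partial\psi|$ and $\psi^2$, and $|\partial|\psi|| = |\partial\psi|$ a.e.), so we may assume $\psi \geq 0$; then the strong maximum principle applied to the elliptic equation~\eqref{theeqn10} on $\mathscr{B}$ (which is uniformly elliptic on compact subsets of the open set $\{\rho > 0\}$, with $\sigma \sim \rho$ bounded away from zero there) forces $\psi > 0$ throughout $\mathscr{B}$, since $\psi$ cannot vanish at an interior point without vanishing identically, and it is not identically zero when $\delta > 0$; when $\delta = 0$ we simply take $\psi \equiv 0$. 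The main obstacle I anticipate is the no-loss-of-mass / no-loss-of-compactness argument for the minimizing sequence — ensuring that neither the $L^2$ mass nor the energy leaks to spatial infinity. This is delicate because the functional is only coercive on the sphere and the potential well at infinity is shallow (height $\mu_*^2 - \omega^2$, which could be small); the resolution is the careful choice of $\mu_*^2$ as the threshold value, which guarantees $\mu_*^2 - \omega^2 \geq 0$ with the strict positivity needed coming from the fact that Lemma~\ref{neg} produces competitors for an open range of $\mu^2$ strictly above $\omega^2$, so $\mu_*^2 > \omega^2$ strictly, giving a genuine (if possibly small, but positive) binding potential at infinity and hence concentration-compactness in the non-dichotomous regime.
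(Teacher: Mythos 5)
Your approach is essentially the paper's, up to a reparametrization. Since $\mathscr{L}_{\mu^2}$ is quadratic and the constraint fixes $\int g^2 e^{2\lambda}\sigma = \delta^2$, the quantity $\inf_{\|g\|=\delta}\mathscr{L}_{\mu^2}$ is affine in $\mu^2$ with slope $\delta^2$; consequently your threshold $\mu_*^2$ is exactly the paper's $\hat\mu^2 - \nu$ (taking $\delta = 1$). The genuine difference is \emph{where} the direct method runs. The paper minimizes at the fixed $\hat\mu^2$, where the infimum $\nu$ is strictly negative, and the rescaling argument around \eqref{tildefcontradiction} obtains a contradiction precisely from $\nu < 0$: escaping mass $\tilde\epsilon$ forces the renormalized cutoff competitor to have energy $\leq (1-\tilde\epsilon)^{-1}\nu < \nu$. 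You minimize at $\mu_*^2$, where the infimum is zero, and that rescaling step is vacuous ($(1-\tilde\epsilon)^{-1}\cdot 0 = 0$); you must instead charge the escaping mass a positive amount $\gtrsim (\mu_*^2-\omega^2)\tilde\epsilon$, which requires $\mu_*^2 > \omega^2$ strictly. Your justification of this is fine (Lemma~\ref{neg} gives one $\hat\mu^2 > \omega^2$ in the sub-level set and monotonicity pushes $\mu_*^2 \geq \hat\mu^2$). Both routes ultimately rely on the same binding-at-infinity fact; yours has the presentational advantage of making the vanishing of the Lagrange multiplier automatic, while the paper's makes the compactness step more mechanical because the strict inequality $\nu < 0$ does the work.

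Two smaller points. First, in arguing $\mu_*^2 < \infty$ you wrote ``for $\mu^2$ large and negative''; it should be large and \emph{positive}: for $\mu^2$ large the contribution $\mu^2\delta^2$ dominates the bounded negative part of the potential, so $\inf\mathscr{L}_{\mu^2} \geq 0$, capping the supremum. Second, the paper's direct method separately rules out mass escaping through the \emph{axis} $\{\rho = 0\}$, where the measure $e^{2\lambda}\sigma\,d\rho\,dz \sim \rho\,d\rho\,dz$ degenerates, using a 1D trace inequality that produces a logarithmic bound on constant-$\rho$ slices. You gesture at the confining term $e^{2\lambda}m^2X^{-1} \sim \rho^{-2}$ near $\mathscr{A}$ having a favorable sign, and indeed this gives an alternative (Hardy-type) mechanism, but the argument should be carried out explicitly rather than left implicit, since otherwise the limit could a priori lose mass at the axis and the constraint would fail to pass to the limit.
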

\begin{proof}
Since the Klein--Gordon equation is linear, it suffices here to take $\delta = 1$ without loss of generality (in general, we can replace $\psi$ by $\delta \psi$).
Set\index{Miscellaneous!$\nu$}
\[\nu \doteq \inf\left\{\mathcal{L}_{\hat{\mu}^2}\left(f\right) : f\in \dot{H}^1_{\rm axi}\left(\mathbb{R}^3\right)\text{ and } \int_0^{\infty}\int_{-\infty}^{\infty}f^2e^{2\lambda}\sigma\ dz\ d\rho = 1\right\}.\]
Of course, by the definition of $\hat{\mu}^2$ and the previous lemma, we must have $\nu < 0$.

We will also be interested in a lower bound on $\nu$. The key point is that the only negative term in the integrand of $\mathscr{L}_{\hat{\mu}^2}\left(f\right)$ is the term $- \sigma^{-2}X^{-1}\left(X\omega + Wm\right)^2f^2$, and that we have
\begin{equation}\label{boundneg}
\left|\sigma^{-2}X^{-1}\left(X\omega + Wm\right)^2\right| \leq C < \infty,
\end{equation}
for some constant $C$ which depends on the choice of renormalized quantities.

In particular,  we obtain
\[\int_0^{\infty}\int_{-\infty}^{\infty}f^2e^{2\lambda}\sigma\ dz\ d\rho = 1 \Rightarrow \mathscr{L}_{\hat{\mu}^2}\left(f\right) \geq -C.\]
Consequently,
\begin{equation}\label{nuControl}
-C < \nu < 0.
\end{equation}

Our next goal is to show that the infimum $\nu$ is actually achieved on a function $\psi$. We proceed via the direct method. Let $\{\psi_j\}_{j=1}^{\infty}$ be a sequence of smooth functions vanishing for sufficiently large $r$, which satisfy
\[\int_0^{\infty}\int_{-\infty}^{\infty}\psi_j^2e^{2\lambda}\sigma\, dz\, d\rho = 1,\]
and such that
\[\lim_{j\to\infty}\mathscr{L}_{\hat{\mu}^2}\left(\psi_j\right) = \nu.\]
Next, we note that the bound~(\ref{boundneg}) on the negative terms in the integrand of $\mathscr{L}_{\hat{\mu}^2}$, and the fact that $\sigma$ is comparable to $\rho$, are easily seen to imply that for each element of the minimizing sequence we have
\[\int_0^{\infty}\int_{-\infty}^{\infty}\left[(\partial_{\rho}\psi_j)^2 + (\partial_z\psi_j)^2\right]\rho\ dz\ d\rho \leq \mathscr{L}_{\hat{\mu}^2}\left(\psi_i\right) + C\int_0^{\infty}\int_{-\infty}^{\infty}f^2e^{2\lambda}\sigma\ dz\ d\rho \leq C.\]
In particular, defining a sequence of rectangles $\mathscr{C}_i = \{(\rho,z) \in [1/i,i] \times [-i,i]\}$, applying the Rellich embedding theorem, a standard diagonalization argument, and relabeling the corresponding subsequence, we may find a $\psi \in \dot{H}^1_{\rm axi}\left(\mathbb{R}^3\right)$ such that the $\psi_j$ converge weakly to $\psi$ in $H^1_{\rm axi}\left(\mathbb{R}^3\right)$ and, for each fixed $i$, converge strongly to $\psi$ in $L^2\left(\mathscr{C}_i\right)$.

Now, it is easy to see that the support of the negative part of
\[e^{2\lambda_K}\left(\mu^2 + X_K^{-1}m^2 - \rho^{-2}X_K^{-1}\left(X_K\omega + W_Km\right)^2\right)\]
is contained in some $\mathscr{C}_i$ for $i$ sufficiently large. Thus, using the lower-semicontinuity of any Hilbert space norm under weak convergence and the strong convergence of $\psi_j$ to $\psi$ in $L^2\left(\mathscr{C}_i\right)$ for each $i$, we easily obtain
\begin{equation}\label{Llessnu}
\mathscr{L}_{\hat{\mu}^2}\left(\psi\right) \leq \nu.
\end{equation}

Now we want to argue that $\psi$ does not ``lose any mass in the limit'', i.e.
\begin{equation}\label{nomasslost}
\int_0^{\infty}\int_{-\infty}^{\infty}\psi^2e^{2\lambda}\sigma\, dz\, d\rho = 1.
\end{equation}

For the sake of contradiction, suppose that~(\ref{nomasslost}) is false. Then there exists $\hat{\epsilon} > 0$ such that we can find a subsequence $\{\psi_{j_k}\}_{k=1}^{\infty}$ satisfying
\begin{equation}\label{tocontradict}
\int_{\mathbb{R}\setminus[1/k,k]}\int_{\mathbb{R}\setminus [-k,k]}\psi_{j_k}^2e^{\lambda}\sigma\ dz\ d\rho \geq \hat{\epsilon} \qquad \forall\ k \in \mathbb{Z}_{> 0}.
\end{equation}

First we will show that no mass can escape through the boundary $\{\rho = 0\}$. Let $\chi\left(\rho\right)$ be a non-negative cut-off function which is identically $1$ for $\rho \leq 1$ and vanishes for $\rho \geq 2$. Now, for all $\tilde\rho \in (0,1/2)$ and arbitrarily small $\tilde\epsilon$ we have
\begin{align}\label{escapetorho0}
\int_{\mathbb{R}}\psi_{j_k}^2\left(\tilde\rho,z\right)\, dz &= \int_{\mathbb{R}}\chi\left(\tilde\rho\right)\psi_{j_k}^2\left(\tilde\rho,z\right)\, dz
\\ \nonumber &\leq \int_{\tilde\rho}^{\infty}\int_{\mathbb{R}}\left|\partial_{\rho}\left(\chi\left(\rho\right)\psi_{j_k}^2\right)\right|\, dz\, d\rho
\\ \nonumber &\leq C\int_{\tilde\rho}^2\int_{\mathbb{R}}\left|\psi_{j_k}\partial_{\rho}\psi_{j_k}\right|\, dz\, d\rho + C
\\ \nonumber &\leq C\tilde\epsilon \left|\log^{-1}\left(\tilde\rho\right)\right|\int_{\tilde\rho}^2\int_{\mathbb{R}}\psi_{j_k}^2\rho^{-1}\, dz\, d\rho + C\tilde\epsilon^{-1}\left|\log\left(\tilde\rho\right)\right|\int_{\tilde\rho}^2\int_{\mathbb{R}}\left(\partial_{\rho}\psi_{j_k}\right)^2\rho\, dz\, d\rho + C
\\ \nonumber &\leq C\tilde\epsilon\sup_{\rho \in (\tilde\rho,1/2)}\int_{\mathbb{R}}\psi_{j_k}^2\, dz + C\tilde\epsilon^{-1}\left|\log\left(\tilde\rho\right)\right| + C.
\end{align}

An easy argument then implies that for all $\tilde \rho \in (0,1/2)$ we have
\begin{equation}
\int_{\mathbb{R}}\psi_{j_k}^2\left(\tilde\rho,z\right)\, dz  \leq C\left|\log\left(\tilde\rho\right)\right|.
\end{equation}
In particular,
\begin{equation}\label{escapetorho0}
\limsup_{k\to\infty}\int_0^{1/k}\int_{\mathbb{R}}\psi_{j_k}^2\left(\rho,z\right)\, dz\, d\rho = 0.
\end{equation}

Combining with~(\ref{tocontradict}), we conclude that
\begin{equation}\label{tocontradict2}
\int_{\mathbb{R}\setminus[0,k]}\int_{\mathbb{R}\setminus [-k,k]}\psi_{j_k}^2e^{\lambda}\sigma\ dz\ d\rho \geq \hat{\epsilon} \qquad \forall\ k \in \mathbb{Z}_{> 0},
\end{equation}
i.e.~we can only lose mass in the large $r$ limit.

Next, let $\tilde{\chi}(x)$ be a cut-off function which is identically $1$ for $\left|x\right| < 1$ and identically $0$ for $\left|x\right| > 2$. Then define
\[\tilde{\psi}_{j_k}(\rho,z) \doteq \tilde{\chi}\left(\frac{10\rho}{k}\right)\tilde{\chi}\left(\frac{10z}{k}\right)\psi_{j_k}.\]
It follows from the compact support of the negative part of the integrand of $\mathscr{L}_{\hat{\mu}^2}$ and~(\ref{tocontradict2}) that for sufficiently large $j_k$, we have
\begin{equation}\label{normtildef}
0 < b \leq \left(\int_0^{\infty}\int_{-\infty}^{\infty}\tilde{\psi}_{j_k}^2e^{2\lambda}\sigma\ dz\ d\rho\right)^{1/2} \leq 1-\tilde{\epsilon},
\end{equation}
for some positive constants $b$ and $\tilde{\epsilon}$ (independent of $j_k$).

Keeping in mind still where the negative part of the integrand of $\mathscr{L}_{\hat{\mu}^2}$ is supported, we find that for sufficiently large $j_k$
\begin{align}\label{tildefcontradiction}
\mathscr{L}_{\hat{\mu}^2}\left(\frac{\tilde{\psi}_{j_k}}{ \left(\int_0^{\infty}\int_{-\infty}^{\infty}\tilde{\psi}_{j_k}^2e^{2\lambda}\sigma\ dz\ d\rho\right)^{1/2}}\right) &= \left(\int_0^{\infty}\int_{-\infty}^{\infty}\tilde{\psi}_{j_k}^2e^{2\lambda}\sigma\ dz\ d\rho\right)^{-1}\mathscr{L}_{\hat{\mu}^2}\left(\tilde{\psi}_{j_k}\right)
\\ \nonumber &\leq \left(1-\tilde{\epsilon}\right)^{-1}\nu + \frac{C}{k^2}.
\end{align}

If we take $k$ sufficiently large, then this contradicts the definition of $\nu$. We thus conclude that~(\ref{nomasslost}) holds.

By definition of $\nu$ we immediately obtain that
\[\nu \leq \int_0^{\infty}\int_{-\infty}^{\infty}\left[(\partial_{\rho}\psi)^2+(\partial_z\psi)^2 + e^{2\lambda}\left(\hat{\mu}^2 + X^{-1}m^2 - \sigma^{-2}X^{-1}\left(X\omega + Wm\right)^2\right)\psi^2\right]\sigma\ dz\ d\rho.\]
Combined with~(\ref{Llessnu}) we conclude that
\begin{equation}\label{psigivesnu}
\nu = \int_0^{\infty}\int_{-\infty}^{\infty}\left[(\partial_{\rho}\psi)^2+(\partial_z\psi)^2 + e^{2\lambda}\left(\hat{\mu}^2 + X^{-1}m^2 - \sigma^{-2}X^{-1}\left(X\omega + Wm\right)^2\right)\psi^2\right]\sigma\ dz\ d\rho.
\end{equation}

Setting $\mu^2 \doteq \hat{\mu}^2 - \nu$ (keep in mind that $\nu$ is negative), the verification of $\psi$'s equation concludes with a standard Euler--Lagrange equation calculation for $\psi$.

Finally, we show that $\psi$ may be taken to be positive. Recall the standard facts that $\psi \in H^1_{\rm axi}\left(\mathbb{R}^3\right)$ implies that $\left|\psi\right| \in H^1_{\rm axi}\left(\mathbb{R}^3\right)$ and that $\left|\nabla\left|\psi\right|\right| = \left|\nabla \psi\right|$. These imply that we have
\[\mathscr{L}_{\hat{\mu}^2}\left(\left|\psi\right|\right) \leq \nu.\]
In particular, $\left|\psi\right|$ solves the same minimization problem as $\psi$, and we thus conclude that $\left|\psi\right|$ is also a solution to~(\ref{theeqn10}) with the same $\mu^2$ as $\psi$.

Now we observe that by continuity (note that Schauder theory immediately implies that $\psi$ is continuous away from the axis), the set of zeros of $\left|\psi\right|$ is closed. However, since $\left|\psi\right|$ is non-negative, $\left|\psi\right|$ satisfies a Harnack inequality (Theorem 8.20 in~\cite{giltru}) which immediately implies that the set of zeros is open. Thus, since $\left|\psi\right|$ is not identically $0$ and $\mathscr{B}$ is connected, we conclude that $\left|\psi\right|$ cannot vanish. Since $\psi$ is continuous, we conclude that $\psi$ may be taken to be positive.
\end{proof}
\begin{remark}\label{defmuk}Running Proposition~\ref{itexists} on the exact Kerr spacetime, i.e.~with vanishing renormalized quantities, produces a Klein--Gordon mass which we define to be the number $\mu^2_K$ used in the definition of $\mathring{\mu}^2$.
\end{remark}
\subsection{Regularity, Decay at the Axis, and Decay as $r\to\infty$}\label{regularitydecay}
In this section we will establish some H\"{o}lder regularity for $\psi$, and establish strong decay of $\psi$ towards the axis and towards the asymptotically flat end.

We start with a weak estimate.
\begin{lemma}\label{psireg}Consider the $(\psi,\mu^2)$ produced by Proposition~\ref{itexists}. There exists $\hat{\alpha} \in (0,1)$ such that we have the following estimate:
\begin{equation}\label{c0apsi}
\left\Vert \psi\right\Vert_{\hat{C}^{0,\hat{\alpha}}\left(\overline{\mathscr{B}}\right)} \leq  C\delta\left[1 + \left\Vert \left(\mathring{\sigma},\left(B^{(N)}_{\chi},B^{(S)}_{\chi'},B^{(A)}_z\right),\left(\mathring{X},\mathring{Y}\right),\mathring{\Theta},e^{2\mathring{\lambda}}\right) \right\Vert_{\mathcal{L}_{\sigma}\times\cdots\times\mathcal{L}_{\lambda}}\right].
\end{equation}

The H\"{o}lder constant $\hat{\alpha}$ can be taken to depend on an upper bound for 
\[\left\Vert \left(\mathring{\sigma},\left(B^{(N)}_{\chi},B^{(S)}_{\chi'},B^{(A)}_z\right),\left(\mathring{X},\mathring{Y}\right),\mathring{\Theta},e^{2\mathring{\lambda}}\right) \right\Vert_{\mathcal{L}_{\sigma}\times\cdots\times\mathcal{L}_{\lambda}}.\]
\end{lemma}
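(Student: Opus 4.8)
\textbf{Proof strategy for Lemma \ref{psireg}.}

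The plan is to combine the variational characterization of $\psi$ from Proposition~\ref{itexists} with De Giorgi--Nash--Moser iteration to bootstrap from the $\dot H^1$-bound to a uniform $\hat C^{0,\hat\alpha}$-bound, exploiting the ``axisymmetric lift'' trick (reinterpreting equations on $\mathscr{B}$ as equations on $\RR^3$, or $\RR^4$ near the poles) so that the equation for $\psi$ becomes uniformly elliptic with bounded coefficients in the appropriate ambient dimension. First I would record the a priori bounds that fall out of the construction: from \eqref{psigivesnu}, the normalization $\int \psi^2 e^{2\lambda}\sigma = \delta^2$, and the bound \eqref{boundneg} (which is exactly where the dependence on $\left\Vert(\mathring\sigma,\dots,\mathring\lambda)\right\Vert$ enters, via the size of the potential $\sigma^{-2}X^{-1}(X\omega+Wm)^2$ and of $e^{2\lambda}, X^{-1}m^2$), one gets
\[
\int_0^\infty\int_{-\infty}^\infty \left[(\partial_\rho\psi)^2 + (\partial_z\psi)^2\right]\rho\, dz\, d\rho \leq C\,\delta^2\left[1 + \left\Vert(\mathring\sigma,\dots,\mathring\lambda)\right\Vert_{\mathcal{L}_\sigma\times\cdots\times\mathcal{L}_\lambda}\right]^{?},
\]
i.e.\ control of $\left\Vert\psi\right\Vert_{\dot H^1_{\rm axi}(\RR^3)}$ that is \emph{linear} in $\delta$ (since $\psi$ scales linearly with $\delta$) and controlled by the renormalized-quantity norm. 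The key structural point, used throughout, is that $X\omega + Wm = O(\rho^2)$ near the axis (noted in the remark after \eqref{func}), so the potential term $\sigma^{-2}X^{-1}(X\omega+Wm)^2$ is in fact bounded, not merely integrable, and likewise $X^{-1}m^2$ combines with the good structure of $X_K\sim\rho^2$ to give a bounded-coefficient elliptic problem after lifting.

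Next I would run the iteration in the three regimes separately, as in the proofs of Lemmas~\ref{lem:xK-regularity} and the $X,Y$ estimates: (i) away from the axis and poles, \eqref{theeqn10} is a uniformly elliptic divergence-form equation in $(\rho,z)$ with smooth coefficients, so classical De Giorgi--Nash gives interior $C^{0,\hat\alpha}$ bounds on balls of fixed size, with $\hat\alpha$ depending only on ellipticity constants and hence on an upper bound for the renormalized norm; (ii) near the axis $\mathscr{A}$ but away from $p_N,p_S$, I would view $\psi(\rho,z)$ as an axisymmetric function on $\RR^3$ via cylindrical coordinates --- the Laplacian $\sigma^{-1}\partial_\rho(\sigma\partial_\rho\cdot)+\sigma^{-1}\partial_z(\sigma\partial_z\cdot)$ becomes, modulo lower-order terms with bounded coefficients (since $\sigma = \rho(1+\mathring\sigma)$), a uniformly elliptic operator on $\RR^3$, so De Giorgi--Nash on $\RR^3$ applies and yields Hölder continuity up to and across the axis, which is precisely what the $\hat C^{0,\hat\alpha}_{\rm axi}$ norm demands; (iii) near $p_N$ and $p_S$, I would pass to $(s,\chi)$ (resp.\ $(s',\chi')$) coordinates, use the conformal relation $d\rho^2+dz^2 = (s^2+\chi^2)(ds^2+d\chi^2)$ to rewrite \eqref{theeqn10} as a uniformly elliptic equation with a conformal factor $(s^2+\chi^2)$ multiplying the potential (which stays bounded), lift to $\RR^4$ via the double-polar-coordinate identification of Section~\ref{funcspace}, and again apply De Giorgi--Nash on $\RR^4$; the boundedness of $(s^2+\chi^2)$ times the potential near $p_N$ is where one uses that $X_K$ has the expansion $X_K = 2\log s + \hat C^\infty$-type behavior from Lemma~\ref{lem:xK-regularity} together with $W_K + \Omega X_K = O(\chi^2)$ near $p_N$. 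In each regime the resulting $C^{0,\hat\alpha}$ bound is controlled by the $L^2$-norm (hence by $\delta$ and the renormalized norm) plus the potential's sup-norm, and a finite cover of $\overline{\mathscr{B}}$ by such regions, glued with the cut-offs $\xi_N,\xi_S$ of Definition~\ref{cutoffs}, assembles these into the claimed estimate \eqref{c0apsi}; the exponent $\hat\alpha$ is the minimum of the finitely many exponents produced, each depending only on the ellipticity constants, hence on an upper bound for $\left\Vert(\mathring\sigma,\dots,\mathring\lambda)\right\Vert$.

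The main obstacle I anticipate is not the interior estimate but the behavior at the \emph{poles} $p_N, p_S$: there the natural coordinates degenerate, $X_K$ has logarithmically divergent leading behavior, and one must verify carefully that after the conformal change and $\RR^4$-lift the coefficients are genuinely bounded and the De Giorgi--Nash hypotheses (in particular the lower-order terms being in the right Lebesgue classes relative to dimension $4$) are met --- this is exactly the kind of delicate checking that Lemmas~\ref{lem:xK-regularity} and~\ref{somestuffthatisuseful} are designed to supply, so I would lean heavily on those and on the companion paper \cite{HBH:geometric}. A secondary technical point is tracking the $\delta$-linearity cleanly: since $\psi \mapsto \delta^{-1}\psi$ solves the same equation with normalization $\int(\delta^{-1}\psi)^2e^{2\lambda}\sigma = 1$, it is cleanest to prove the estimate for the normalized solution and then multiply through by $\delta$ at the end, so that no spurious powers of $\delta$ or of the renormalized norm appear. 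With these in hand the lemma follows by the standard covering-and-gluing argument.
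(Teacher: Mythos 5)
Your overall strategy (De~Giorgi--Nash iteration in three regimes, glued with the cut-offs) is the right shape, but there is a genuine gap at the key singular point: the potential $e^{2\lambda}m^2 X^{-1}\psi$, which near the axis behaves like $m^2\rho^{-2}\psi$, is \emph{not} absorbed by the axisymmetric lift alone. Lifting $\psi(\rho,z)$ to an axisymmetric function on $\mathbb{R}^3$ turns $\sigma^{-1}\partial_\rho(\sigma\partial_\rho\cdot)$ into (approximately) $\Delta_{\mathbb{R}^3}$, but it does nothing to $m^2 X^{-1}$. The resulting Schr\"odinger operator has a potential $\sim\rho^{-2}$ that is not even in $L^1_{\rm loc}(\mathbb{R}^3)$ near the axis (since $\int_0^1\rho^{-2}\cdot\rho\,d\rho=\infty$), so no version of the De~Giorgi--Nash theory you invoke applies as stated. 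Your sentence ``$X^{-1}m^2$ combines with the good structure of $X_K\sim\rho^2$ to give a bounded-coefficient elliptic problem after lifting'' is precisely where the argument breaks: the $\mathbb{R}^3$-lift of an \emph{axisymmetric} function does not produce a bounded potential.

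The paper's proof handles this by introducing $\psi_m(\rho,z,\phi)\doteq e^{im\phi}\psi(\rho,z)$ and treating $\psi_m$ as a genuine (non-axisymmetric) function on $\mathbb{R}^3$. The singular zeroth-order term $-e^{2\lambda}m^2X^{-1}\psi$ is then exactly $e^{2\lambda}X^{-1}\partial_\phi^2\psi_m$, i.e.\ it becomes part of the \emph{principal} part of a second-order operator (the Laplace--Beltrami operator of the metric $Xd\phi^2+e^{2\lambda}(d\rho^2+dz^2)$), and after a further splitting into a divergence-form main piece and a first-order remainder with $\hat C^{0,\alpha}$ coefficients (using Lemma~\ref{firstdecay}), one gets a uniformly elliptic divergence-form equation on $\mathbb{R}^3$ with merely bounded measurable coefficients — to which Theorem~8.22 of Gilbarg--Trudinger applies after checking that $\psi_m$ is a weak $H^1_{\rm loc}$ solution across $\{\rho=0\}$. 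The analogous $e^{im\phi}$-trick in $(s,\chi)$ coordinates with the $\mathbb{R}^4$ double-polar lift is what handles the poles $p_N,p_S$ (where $X\sim s^2$ produces the same $1/s^2$ singularity). Near the horizon interior, where $X$ is bounded away from zero, your approach does go through without this trick. You should revise your proposal to incorporate the $\psi_m$ device in regimes (ii) and (iii); the rest of your structure (including the linear-in-$\delta$ normalization and the dependence of $\hat\alpha$ on an upper bound for the renormalized norm via the ellipticity constants) is consistent with the paper.
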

\begin{proof}Without loss of generality we take $\delta = 1$.

Away from the axis, the estimate~(\ref{c0apsi}) immediately follows from Schauder estimates. We thus focus on the regions near the axis $\mathscr{A}$. As we have seen before, separate arguments will be required at the noth and south axis $\mathscr{A}_N$ and $\mathscr{A}_S$ and where the axis and horizon meet.

We start with $\mathscr{A}_N \cap \overline{\mathscr{B}_A}$. We define a new function $\psi_m : \overline{\mathscr{B}_A} \times (0,2\pi) \to \mathbb{C}$ in by
\[\psi_m\left(\rho,z,\phi\right) \doteq e^{im\phi}\psi\left(\rho,z\right).\]
It is easy to see that we then have
\begin{equation}\label{psimeqn}
\sigma^{-1}\partial_{\rho}\left(\sigma\partial_{\rho}\psi_m\right) + \sigma^{-1}\partial_z\left(\sigma\partial_z\psi_m\right) + \frac{e^{2\lambda}}{X}\partial_{\phi}^2\psi_m + e^{2\lambda}\left(\sigma^{-2}X^{-1}\left(X\omega + Wm\right)^2 - \mu^2\right)\psi_m = 0.
\end{equation}
Our plan is to now to replace the variables $(\rho,\phi)$ with Cartesian coordinates $(x,y)$ defined, as usual, by
\[x \doteq \rho\cos\phi,\qquad y \doteq \rho\sin\phi.\]
We are interested in the regularity of the coefficients of~(\ref{psimeqn}) when expressed in Cartesian coordinates. Instead of doing an explicit calculation we will reason as follows: First of all we write
\[\sigma^{-1}\partial_{\rho}\left(\sigma\partial_{\rho}\psi_m\right) + \sigma^{-1}\partial_z\left(\sigma\partial_z\psi_m\right) = \] \[X^{-1/2}\partial_{\rho}\left(X^{1/2}\partial_{\rho}\psi_m\right) + X^{-1/2}\partial_z\left(X^{1/2}\partial_z\psi_m\right) + \left(\frac{X^{1/2}}{\sigma}\right)\partial_{\rho}\left(\frac{\sigma}{X^{1/2}}\right)\partial_{\rho}\psi_m + \left(\frac{X^{1/2}}{\sigma}\right)\partial_z\left(\frac{\sigma}{X^{1/2}}\right)\partial_z\psi_m.\]
The following formulas are easily established
\begin{equation}\label{formrhophi}
\partial_{\rho} = \frac{x}{\sqrt{x^2+y^2}}\partial_x + \frac{y}{\sqrt{x^2+y^2}}\partial_y,\qquad \partial_{\phi} = -y\partial_x + x\partial_y.
\end{equation}
Next, we note that it follows from our estimates for $\mathring{X}$ and $\mathring{\sigma}$, ~(\ref{formrhophi}), and Lemma~\ref{firstdecay} that, as a differential operator in Cartesian coordinates $(x,y) \in \mathbb{R}^2$, both $\left(\frac{X^{1/2}}{\sigma}\right)\partial_{\rho}\left(\frac{\sigma}{X^{1/2}}\right)\partial_{\rho}$ and $\left(\frac{X^{1/2}}{\sigma}\right)\partial_z\left(\frac{\sigma}{X^{1/2}}\right)\partial_z$ are first order differential operators with $\hat{C}^{0,\alpha}$ coefficients. Next we turn to the terms
\[X^{-1/2}\partial_{\rho}\left(X^{1/2}\partial_{\rho}\psi_m\right) + X^{-1/2}\partial_z\left(X^{1/2}\partial_z\psi_m\right) + e^{2\lambda}X^{-1}\partial_{\phi}^2\psi_m.\]
For these terms, we begin by noting that
\[e^{-2\lambda}X^{-1/2}\partial_{\rho}\left(X^{1/2}\partial_{\rho}\psi_m\right) + e^{-2\lambda}X^{-1/2}\partial_z\left(X^{1/2}\partial_z\psi_m\right) + X^{-1}\partial_{\phi}^2\psi_m,\]
is the Laplacian associated to the metric
\[Xd\phi^2 + e^{2\lambda}\left(d\rho^2+dz^2\right).\]
In Cartesian coordinates, this becomes
\[\left(\frac{\left(\frac{X}{\rho^2}\right)y^2+e^{2\lambda}x^2}{x^2+y^2}\right)dx^2 + \left(\frac{\left(\frac{X}{\rho^2}\right)x^2+e^{2\lambda}y^2}{x^2+y^2}\right)dy^2 + \frac{2xy}{x^2+y^2}\left(e^{2\lambda}-\frac{X}{\rho^2}\right)dxdy.\]
The key observation is that even though this metric may not even be continuous at $(x,y) = (0,0)$, the coefficients are bounded away from $0$ and $\infty$, and the determinant of the metric is bounded away from $0$ and $\infty$. Thus the corresponding Laplacian is uniformly elliptic as an operator on $\mathbb{R}^3$. In particular, when written in Cartesian coordinates, the equation~(\ref{psimeqn}) is in a form where we can apply the classical De Giorgi--Nash regularity theory. Furthermore, since the set $\{\rho = 0\}$ is co-dimension $2$ in $\mathbb{R}^3$, a standard argument\footnote{To show this, we can multiply equation \eqref{psimeqn} by the product of test function compactly supported in $\mathbb{R}^{3}$ and a function cutting off (between $\rho=0$ and $\rho=2\epsilon$) at $\rho=0$. Integrating by parts and using that $\psi_{m}$ is in $\dot H^{1}_{\rm axi}(\mathbb{R}^{3})$, we find that the resulting error term tends to zero as $\epsilon\to 0$.} shows that $\psi_m$ extends to a weak $H^1_{\rm loc}$ solution along $\{\rho = 0\}$. In particular, taking $\hat{\alpha}$ sufficiently small, we can apply Theorem 8.22 of~\cite{giltru} to establish a $C^{0,\hat{\alpha}}$ estimate for $e^{im\phi}\psi$. Note that the constant $\hat{\alpha}$ can be taken to depend on an upper bound for \[\left\Vert \left(\mathring{\sigma},\left(B^{(N)}_{\chi},B^{(S)}_{\chi'},B^{(A)}_z\right),\left(\mathring{X},\mathring{Y}\right),\mathring{\Theta},e^{2\mathring{\lambda}}\right) \right\Vert_{\mathcal{L}_{\sigma}\times\cdots\times\mathcal{L}_{\lambda}}.\] 

One easily checks that also implies a $C^{0,\hat{\alpha}}$ estimate for $\psi$. Of course, a similar argument works for $\mathscr{A}_S \cap \overline{\mathscr{B}_A}$.

Now we consider the region $\mathscr{H} \cap \overline{\mathscr{B}_H}$. Here, due to the non-vanishing of $X$, we may easily interpret~(\ref{theeqn10}) as an elliptic equation in $\mathbb{R}^3$ written in cylindrical coordinates. Arguing as in the previous paragraph, we also obtain $C^{0,\hat{\alpha}}$ bounds for $\psi$ (in fact we obtain $C^{2,\hat{\alpha}}$ bounds).

Now consider the regions $\overline{\mathscr{B}_N}$ and $\overline{\mathscr{B}_S}$ where the horizon and axis meet. In $\overline{\mathscr{B}_N}$, we may introduce $(s,\chi)$ coordinates where~(\ref{theeqn10}) becomes
\begin{equation}\label{schipsi}
\sigma^{-1}\partial_s\left(\sigma\partial_s\psi\right) + \sigma^{-1}\partial_{\chi}\left(\sigma\partial_{\chi}\psi\right) + \left(\chi^2+s^2\right)e^{2\lambda}\left(\sigma^{-2}X^{-1}\left(X\omega+Wm\right)^2 -m^2X^{-1}\psi- \mu^2\right) \psi = 0.
\end{equation}
Next, we observe that
\[\sigma^{-1}\partial_s\left(\sigma\partial_s\psi\right) + \sigma^{-1}\partial_{\chi}\left(\sigma\partial_{\chi}\psi\right) = (s\chi)^{-1}\partial_s\left((s\chi)\partial_s\psi\right) + (s\chi)^{-1}\partial_{\chi}\left((s\chi)\partial_{\chi}\psi\right) + D\psi,\]
for some $1$st order differential operator $D$ with coefficients in $\hat{C}^{0,\alpha_0}\left(\overline{\mathscr{B}}\right)$. Observe that \[(s\chi)^{-1}\partial_s\left((s\chi)\partial_s\psi\right) + (s\chi)^{-1}\partial_{\chi}\left((s\chi)\partial_{\chi}\psi\right)\] is the Laplacian for the metric
\[ds^2 + s^2d\phi_1^2 + d\chi^2 + \chi^2d\phi_2^2,\]
when applied to a function which only depends on $s$ and $\chi$. One may now check that~(\ref{schipsi}) may be treated with the same trick we used in the region $\mathscr{A}_N \cap \overline{\mathscr{B}_A}$. A similar argument works for $\overline{\mathscr{B}_S}$.
\end{proof}

\begin{remark}\label{fullregpsi}Note that if we knew that $e^{2\lambda}$ and $X$ satisfied the appropriate compatibility conditions along the axis then, after passing to Cartesian coordinates, we would obtain an elliptic equation with coefficients as regular as $e^{2\lambda}$ and $X/\rho^2$ (see the formulas for the metric in Appendix A in \cite{HBH:geometric}). Then we could just apply standard Schauder theory. However, at this stage of the argument, we have \underline{not} established these compatibility conditions.

It is possible that a more careful fixed point scheme could encode these compatibility conditions in some way. This would allow for the argument to be done with only standard Schauder theory. Here, we rely on De Giorgi--Nash regularity theory as a way to avoid such an issue. 
\end{remark}

In this next lemma, we will show that if $m$ is taken sufficiently large, then $\psi$ must decay quickly as one approaches the axis $\mathscr{A}$.
\begin{lemma}\label{decayax}Let $m$ be sufficiently large and $(\psi,\mu^2)$ be produced from Proposition~\ref{itexists}. Suppose that
\[\left\Vert \left(\mathring{\sigma},\left(B^{(N)}_{\chi},B^{(S)}_{\chi'},B^{(A)}_z\right),\left(\mathring{X},\mathring{Y}\right),\mathring{\Theta},e^{2\mathring{\lambda}}\right) \right\Vert_{\mathcal{L}_{\sigma}\times\cdots\times\mathcal{L}_{\lambda}}\]
is sufficiently small.

Then there exists a choice of $\alpha_0$ such that we have
\[\left\Vert \rho^{-10}\psi\right\Vert_{\hat{C}^{0,\alpha_0}\left(\overline{\mathscr{B}_A}\right)} + \left\Vert s^{-10}\psi\right\Vert_{\hat{C}^{0,\alpha_0}\left(\overline{\mathscr{B}_N}\right)}+ \left\Vert (s')^{-10}\psi\right\Vert_{\hat{C}^{0,\alpha_0}\left(\overline{\mathscr{B}_S}\right)} \leq \]
\[C\delta\left[1 + \left\Vert \left(\mathring{\sigma},\left(B^{(N)}_{\chi},B^{(S)}_{\chi'},B^{(A)}_z\right),\left(\mathring{X},\mathring{Y}\right),\mathring{\Theta},\mathring{\lambda}\right) \right\Vert_{\mathcal{L}_{\sigma}\times\cdots\times\mathcal{L}_{\lambda}}\right].\]
\end{lemma}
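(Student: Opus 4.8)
The plan is to show that $\psi$ decays like $\rho^{10}$ near the axis (and like $s^{10}$, $(s')^{10}$ near $p_N$, $p_S$) by running a barrier/iteration argument on the elliptic equation \eqref{theeqn10}, using the largeness of $m$ to generate a strong decay rate and the De Giorgi--Nash regularity from Lemma \ref{psireg} as the starting input. First I would recall from the proof of Lemma \ref{psireg} that, after passing to Cartesian coordinates in the $(\rho,\phi)$ plane, $\psi_m = e^{im\phi}\psi$ solves a uniformly elliptic equation \eqref{psimeqn} on a neighborhood of $\mathscr{A}_N$ in $\mathbb{R}^3$, with the only non-regular feature being that the principal part has bounded-but-not-continuous coefficients while the zeroth order coefficient contains the term $-m^2 e^{2\lambda}X^{-1}$. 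Since $X \sim \rho^2$ near the axis (with the $\hat C^{2,\alpha_0}$ estimates for $\mathring X$ guaranteeing $X^{-1} \ge c\rho^{-2}$ once $\epsilon$ is small), this zeroth order term is \emph{large and negative} near the axis, of size comparable to $-m^2/\rho^2 = -m^2/(x^2+y^2)$. This is precisely the structure that forces rapid decay: the equation resembles $\Delta_{\mathbb{R}^3}\psi_m \sim m^2(x^2+y^2)^{-1}\psi_m$ near $\{\rho=0\}$, whose solutions vanish at a rate governed by the indicial roots of the operator $\partial_{xx}+\partial_{yy} - m^2(x^2+y^2)^{-1}$ in the two transverse variables, namely $\rho^{|m|}$.

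The main steps, in order, would be: (i) rewrite \eqref{psimeqn} in Cartesian $(x,y,z)$ coordinates and extract the leading behavior of the zeroth-order coefficient, recording that it equals $-m^2 e^{2\lambda}X^{-1} + (\text{bounded})$, so it is bounded above by $-c_0 m^2 (x^2+y^2)^{-1} + C$ for small $\epsilon$; (ii) construct a comparison function: on a small ball $B_r(p)$ with $p\in\mathscr{A}_N$, use $\Phi_\beta \doteq A\,\big((x^2+y^2)^{\beta/2} + \eta\big)$ with a suitable exponent $\beta$ (to be taken $\le |m|$, and certainly $\ge 10$ once $|m|$ is large) as a supersolution, checking via a direct computation that $\Delta_{\rm cart}\Phi_\beta - (\text{zeroth order coeff})\Phi_\beta \le 0$ on $B_r \setminus \{\rho = \eta'\}$ for appropriate constants once $m$ is large enough relative to $\beta$; (iii) apply the maximum principle on $B_r(p)\setminus\{\rho\le\eta'\}$ comparing $|\psi_m|$ with $\Phi_\beta$, using on the inner boundary the crude $\hat C^{0,\hat\alpha}$ bound from Lemma \ref{psireg} and letting $\eta,\eta'\to 0$, to conclude $|\psi| \le C\rho^{10}$ pointwise on $\overline{\mathscr{B}_A}$; (iv) upgrade this pointwise decay to the $\hat C^{0,\alpha_0}$ (Hölder) decay claimed, by rescaling: on a ball at distance $d$ from the axis, rescale by $d$, apply interior Schauder/De Giorgi estimates to the rescaled (uniformly elliptic, bounded-coefficient) equation, and translate back — this converts the sup bound $\rho^{10}$ into the weighted Hölder bound $\|\rho^{-10}\psi\|_{\hat C^{0,\alpha_0}(\overline{\mathscr{B}_A})}$; (v) repeat the entire argument near $p_N$ and $p_S$ using the $(s,\chi)$-coordinate form \eqref{schipsi}, where the Laplacian becomes the $\mathbb{R}^4$ double-polar Laplacian and the analogous large negative zeroth-order term $-m^2 e^{2\lambda}(\chi^2+s^2)X^{-1} \sim -m^2 s^{-2}$ (using $X\sim s^2 \cdot(\text{smooth})$ near $p_N$, cf. Lemma \ref{lem:xK-regularity}) forces the $s^{10}$, respectively $(s')^{10}$, decay by the same barrier construction adapted to the four transverse variables.

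The hard part will be step (ii)–(iii): making the barrier argument robust against the fact that the principal part of \eqref{psimeqn} in Cartesian coordinates is genuinely discontinuous at the axis, so one cannot naively invoke a smooth-coefficient maximum principle and must instead work with the divergence-form structure and a weak comparison principle, or alternatively quotient out the $\phi$-dependence and argue directly in the $(\rho,z)$ half-plane with the degenerate weight $\sigma\, d\rho\, dz$ — keeping careful track that the constants in the barrier depend only on the \emph{smallness} of the renormalized data (so that "$m$ sufficiently large'' and "$\epsilon$ sufficiently small'' can be chosen independently). A secondary technical point is ensuring the chosen $\alpha_0$ is compatible with the De Giorgi exponent $\hat\alpha$ from Lemma \ref{psireg} and with the Hölder exponents appearing in all the other function spaces; since $\alpha_0\in(0,1)$ remains unfixed until Remark \ref{fixmalpha0}, this is purely a matter of taking $\alpha_0$ small enough at the end. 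The decay power $10$ itself is not special — it is simply a convenient fixed number below $|m|$, and all the estimates go through verbatim with $10$ replaced by any integer once $|m|$ is correspondingly large, which is exactly the regime in which Proposition \ref{solveforpsi} operates.
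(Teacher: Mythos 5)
Your proposal and the paper's proof aim at the same mechanism---the large negative potential $-m^2 e^{2\lambda}X^{-1}\sim -m^2\rho^{-2}$ near the axis forces rapid decay when $m\gg 1$---but they execute it very differently. The paper multiplies \eqref{tointbyparts} (and its $(s,\chi)$ analogue \eqref{tointbyparts2}) by a \emph{compactly supported} weighted test function $\xi_0\rho(\rho+\epsilon_0)^{-\eta+2}\psi$ and integrates by parts, absorbs the $\eta$-weight errors using $m\gg\eta$, takes $\epsilon_0\to 0$, then converts the weighted-$L^2$ decay to pointwise decay via Moser's mean-value inequality and to H\"older decay by interpolating against the De Giorgi exponent from Lemma \ref{psireg}. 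You instead propose a comparison with a supersolution $\Phi_\beta = A\bigl((x^2+y^2)^{\beta/2}+\eta\bigr)$ on an annular region and invoke the maximum principle.

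The gap in your version is the outer boundary of the comparison domain. You specify the barrier values on the inner cylinder $\{\rho=\eta'\}$ via the $\hat C^{0,\hat\alpha}$ bound, but say nothing about $\partial B_r(p)$. On the portion of $\partial B_r(p)$ that approaches the axis, the De Giorgi estimate only gives $|\psi_m|\lesssim \rho^{\hat\alpha}$ with $\hat\alpha\in(0,1)$ small, whereas the barrier there is $A\rho^\beta + A\eta$ with $\beta\ge 10$, and $\rho^\beta\ll\rho^{\hat\alpha}$; so one must rely on $A\eta\gtrsim\rho^{\hat\alpha}$, which forces $A\gtrsim\eta^{-1}$ and the $\eta\to0$ limit yields nothing. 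A naive iteration of the barrier (``improve $\rho^{\hat\alpha}$ to $\rho^{2\hat\alpha}$, etc.'') does not help, since the same mismatch reappears at each step. The obstruction is structural: a one-shot comparison-function argument that jumps from exponent $\hat\alpha$ to exponent $10$ requires the barrier to dominate $\psi$ on an outer boundary where $\psi$ decays only like $\rho^{\hat\alpha}$, and there is no function simultaneously a supersolution of the equation, $\gtrsim\rho^{\hat\alpha}$ on the outer boundary near the axis, and $\lesssim\rho^{10}$ in the interior, without introducing $z$- (or $r$-) dependence that one would then have to control as the domain parameters degenerate. This is exactly what the paper's cut-off energy method sidesteps: compactly supported test functions mean there is no outer boundary at all, only error terms proportional to $|\nabla\xi_0|$ that land in a region where the other cut-off $\xi_1$ takes over. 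To repair your barrier argument you would need a supersolution that also captures decay (or growth) in the $z$- and $r$-directions, e.g.\ with a $\cosh(kz)$ or $r^{-k}$ factor plus a vanishing $(\eta')^{\hat\alpha+\beta}\rho^{-\beta}$ term for the inner boundary, and then a careful account of the region near $p_N,p_S$ where the coordinates switch; this is doable but substantially more elaborate than what you wrote, and at that level of effort the paper's energy estimate is simpler and more robust.
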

\begin{proof}Without loss of generality we take $\delta = 1$.

Near the set $\overline{\mathscr{B}_A} \cap \mathscr{A}_N$ we can write~(\ref{theeqn10}) as
\begin{equation}\label{tointbyparts}
\rho^{-1}\partial_{\rho}\left(\rho\partial_{\rho}\psi\right) + \rho^{-1}\partial_z\left(\rho\partial_z\psi\right)- e^{2\lambda_K}m^2X_K^{-1}\psi = O\left(\rho\right)\partial\psi + O\left(1\right)\psi.
\end{equation}
Furthermore, in the region $\overline{\mathscr{B}_A}\cap\mathscr{A}_N$, we have that $X_K$ and $\rho^2$ are comparable.

Now let $\epsilon_0 > 0$ be a small positive number, $\eta$ be a large positive number,  and $\xi_0\left(\rho,z\right)$ be a cut-off which is identically $1$ near $\mathscr{A}$, $0$ outside an open set of $\mathscr{A}$, and such that the support of $\partial_z\xi_0$ is contained in $\overline{\mathscr{B}_N}$. Next multiply~(\ref{tointbyparts}) by $\xi_0\rho\left(\rho+\epsilon_0\right)^{-\eta+2}\psi$, and integrate by parts. One obtains
\begin{align}\label{firstpass}
\int_{\mathscr{B}}&\xi_0\left[\rho\left(\rho+\epsilon_0\right)^{-\eta+2}\left(\partial_{\rho}\psi\right)^2 + \xi_0\rho\left(\rho+\epsilon_0\right)^{-\eta+2}\left(\partial_z\psi\right)^2 + \rho m^2\left(\rho+\epsilon_0\right)^{-\eta}\psi^2\right] \leq
\\ \nonumber &C\left(\eta\right)\left[1 + \left\Vert \left(\mathring{\sigma},\left(B^{(N)}_{\chi},B^{(S)}_{\chi'},B^{(A)}_z\right),\left(\mathring{X},\mathring{Y}\right),\mathring{\Theta},e^{2\mathring{\lambda}}\right) \right\Vert_{\mathcal{L}_{\sigma}\times\cdots\times\mathcal{L}_{\lambda}}\right] + \\ \nonumber &C\int_{\mathscr{B}}\left[\left|\partial_z\xi_0\right|\rho\left(\rho+\epsilon_0\right)^{-\eta+2}\left|\partial_z\psi\psi\right| + \left|\eta\right|\xi_0\rho\left(\rho+\epsilon_0\right)^{-\eta+1}\left|\partial_{\rho}\psi\psi\right|\right].
\end{align}
If $m$ is sufficiently large relative to $\eta$, then it is easy to see that~(\ref{firstpass}) implies
\begin{align}\label{secondpass}
\int_{\mathscr{B}}&\xi_0\left[\rho\left(\rho+\epsilon_0\right)^{-\eta+2}\left(\partial_{\rho}\psi\right)^2 + \xi_0\rho\left(\rho+\epsilon_0\right)^{-\eta+2}\left(\partial_z\psi\right)^2 + \rho m^2\left(\rho+\epsilon_0\right)^{-\eta}\psi^2\right] \leq
\\ \nonumber &C\left(\eta\right)\left[1 + \left\Vert \left(\mathring{\sigma},\left(B^{(N)}_{\chi},B^{(S)}_{\chi'},B^{(A)}_z\right),\left(\mathring{X},\mathring{Y}\right),\mathring{\Theta},e^{2\mathring{\lambda}}\right) \right\Vert_{\mathcal{L}_{\sigma}\times\cdots\times\mathcal{L}_{\lambda}}\right] +
\\ \nonumber &C\int_{\mathscr{B}}\left[\left|\partial_z\xi_0\right|\rho\left(\rho+\epsilon_0\right)^{-\eta+2}\left|\partial_z\psi\psi\right|\right].
\end{align}

Next, we move to the set $\overline{\mathscr{B}_N}$ where we write the equation in $(s,\chi)$ coordinates as
\begin{equation}\label{tointbyparts2}
s^{-1}\partial_s\left(s\partial_s\psi\right) + \chi^{-1}\partial_z\left(\chi\partial_z\psi\right)- \left(\chi^2+s^2\right)e^{2\lambda_K}m^2X_K^{-1}\psi = O\left(s\right)\partial\psi + O\left(1\right)\psi.
\end{equation}
In $\overline{\mathscr{B}_N}$ we have that $X_K$ is comparable to $s^2$. Let $\xi_1$ be a cut-off function which is identically $1$ in $\overline{\mathscr{B}_N}$, vanishes outside an open set containing $\overline{\mathscr{B}_N}$, and such that $\partial_{\chi}\xi_1$ is supported in $\overline{\mathscr{B}_A} \cap \mathscr{A}$. Analogously to the above we obtain
\begin{align}\label{thirdpass}
\int_{\mathscr{B}}&\xi_1\left[s\chi\left(s+\epsilon_0\right)^{-\eta+2}\left(\partial_s\psi\right)^2 + \xi_1s\chi\left(s+\epsilon_0\right)^{-\eta+2}\left(\partial_{\chi}\psi\right)^2 + s\chi m^2\left(s+\epsilon_0\right)^{-\eta}\psi^2\right] \leq
\\ \nonumber &C\left(\eta\right)\left[1 + \left\Vert \left(\mathring{\sigma},\left(B^{(N)}_{\chi},B^{(S)}_{\chi'},B^{(A)}_z\right),\left(\mathring{X},\mathring{Y}\right),\mathring{\Theta},e^{2\mathring{\lambda}}\right) \right\Vert_{\mathcal{L}_{\sigma}\times\cdots\times\mathcal{L}_{\lambda}}\right] +
\\ \nonumber &C\int_{\mathscr{B}}\left[\left|\partial_\chi\xi_1\right|s\chi\left(s+\epsilon_0\right)^{-\eta+2}\left|\partial_\chi\psi\psi\right|\right].
\end{align}
Now we observe that along the support of $\nabla\xi_0$ and $\nabla\xi_1$, we have that $s$ is comparable to $\rho$. Thus, it is easy to see that if we add~(\ref{secondpass}) and~(\ref{thirdpass}) together, and then take $\epsilon_0 \to 0$, we may obtain
\begin{align}\label{fourthpass}
&\int_{\mathscr{B}}\left[\xi_0\rho^{-\eta}\psi^2\rho + \xi_1s^{-\eta}\psi^2s\chi\right] \leq
 \\ \nonumber &C\left(\eta\right)\left[1 + \left\Vert \left(\mathring{\sigma},\left(B^{(N)}_{\chi},B^{(S)}_{\chi'},B^{(A)}_z\right),\left(\mathring{X},\mathring{Y}\right),\mathring{\Theta},e^{2\mathring{\lambda}}\right) \right\Vert_{\mathcal{L}_{\sigma}\times\cdots\times\mathcal{L}_{\lambda}}\right].
\end{align}
Analogous estimates may be proved along $\mathscr{A}_S$.

Let $(x_0,y_0,z_0)$ be a point in Cartesian coordinates near the axis and set $\rho_0 \doteq \sqrt{x_0^2+y^2}$. Moser's mean-value inequality (see Theorem 8.17 from~\cite{giltru}, also remember that in Lemma~\ref{psireg} we observed that $\psi$ satisfies an equation in an open set around $\mathscr{A}$ for which the De Giorgi-Nash regularity theory holds) yields
\[\sup_{B_{\rho_0/20}\left((x_0,y_0,z_0)\right)}\left|\psi\right| \leq C\rho_0^{-3/2}\left(\int_{B_{\rho_0/10}((x_{0},y_{0},z_{0}))}\psi^2\right)^{1/2}.\]
A similar estimate holds in Cartesian coordinates near $p_N$ and $p_S$.

Combining these mean-value inequalities with the weighted $L^2$-estimates (which give decaying $L^{2}$-estimates on $B_{\rho_{0}/10}((x_{0},y_{0},z_{0}))$) leads to the following $L^{\infty}$ estimates
\[\left\Vert \rho^{-\eta/2+C_0}\psi\right\Vert_{L^{\infty}\left(\overline{\mathscr{B}_A}\right)} + \left\Vert s^{-\eta/2+C_0}\psi\right\Vert_{L^{\infty}\left(\overline{\mathscr{B}_N}\right)}+ \left\Vert (s')^{-\eta/2+C_0}\psi\right\Vert_{L^{\infty}\left(\overline{\mathscr{B}_S}\right)} \leq \]
\[C\left(\eta\right)\left[1 + \left\Vert \left(\mathring{\sigma},\left(B^{(N)}_{\chi},B^{(S)}_{\chi'},B^{(A)}_z\right),\left(\mathring{X},\mathring{Y}\right),\mathring{\Theta},e^{2\mathring{\lambda}}\right) \right\Vert_{\mathcal{L}_{\sigma}\times\cdots\times\mathcal{L}_{\lambda}}\right],\]
where we emphasize that the constant $C_0$ is independent of $\eta$.

Interpolating with the already proved $\hat{C}^{0,\hat{\alpha}}$ estimate and taking $\eta$ large enough finishes the proof with $\alpha_0 = \hat{\alpha}/2$.
\end{proof}
\begin{remark}\label{fixmalpha0} We now consider  \index{Metric Quantities!$m$}$m$ and \index{Miscellaneous!$\alpha_{0}$}$\alpha_0$ to be fixed.
\end{remark}

Finally, we obtain the full Schauder estimate.
\begin{lemma}\label{psiregfull}Consider the $(\psi,\mu^2)$ produced by Proposition~\ref{itexists}. Suppose that
\[\left\Vert \left(\mathring{\sigma},\left(B^{(N)}_{\chi},B^{(S)}_{\chi'},B^{(A)}_z\right),\left(\mathring{X},\mathring{Y}\right),\mathring{\Theta},e^{2\mathring{\lambda}}\right) \right\Vert_{\mathcal{L}_{\sigma}\times\cdots\times\mathcal{L}_{\lambda}}\]
is sufficiently small.

Then
\begin{equation}\label{c2apsi}
\left\Vert \psi\right\Vert_{\hat{C}^{2,\alpha_0}\left(\overline{\mathscr{B}}\right)} \leq  C\delta\left[1 + \left\Vert \left(\mathring{\sigma},\left(B^{(N)}_{\chi},B^{(S)}_{\chi'},B^{(A)}_z\right),\left(\mathring{X},\mathring{Y}\right),\mathring{\Theta},\mathring{\lambda}\right) \right\Vert_{\mathcal{L}_{\sigma}\times\cdots\times\mathcal{L}_{\lambda}}\right].
\end{equation}
\end{lemma}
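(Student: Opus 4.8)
The strategy is the standard bootstrap: having established in Lemma~\ref{psireg} the weak H\"older estimate $\Vert\psi\Vert_{\hat C^{0,\hat\alpha}}\lesssim\delta(1+\Vert\cdots\Vert)$, and having upgraded this in Lemma~\ref{decayax} to rapid vanishing of $\psi$ at the axis (so that $\rho^{-10}\psi$, $s^{-10}\psi$, $(s')^{-10}\psi$ are all controlled in $\hat C^{0,\alpha_0}$), I now feed these into local Schauder estimates to recover the full $\hat C^{2,\alpha_0}(\overline{\mathscr{B}})$ bound. The key point, already flagged in Remark~\ref{fullregpsi}, is that although $\psi$ a priori solves only a De~Giorgi--Nash-type equation near the axis (whose coefficients need not be continuous in Cartesian coordinates since we have not yet verified the axis compatibility conditions on $e^{2\lambda}$ and $X/\rho^2$), the \emph{rapid vanishing} of $\psi$ compensates: the problematic lower-order terms in \eqref{theeqn10}, when multiplied by the large power of $\rho$ that $\psi$ carries, become harmless.

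\textbf{Key steps.} First, away from a fixed neighborhood of the axis $\mathscr{A}$ and the poles $p_N,p_S$---i.e.\ in the interior of $\overline{\mathscr{B}_H}$ and in the asymptotically flat region---equation \eqref{theeqn10} is uniformly elliptic in the cylindrical $\mathbb{R}^3$-coordinates with $\hat C^{0,\alpha_0}$ coefficients (here one uses the bounds on $\mathring\sigma$, $\mathring X$, $\mathring\Theta$, $\mathring\lambda$ from their respective $\mathcal{L}$-norms, together with the non-vanishing of $X$ on $\mathscr{H}$ and the Kerr regularity from \cite[Lemma~3.0.1]{HBH:geometric}); interior Schauder estimates (Theorem~6.6 in \cite{giltru}) applied on balls of fixed size give the $\hat C^{2,\alpha_0}$ bound, with the right-hand side absorbing the lower-order terms via the already-established $\hat C^{0,\hat\alpha}$ estimate. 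Second, near $\mathscr{A}_N\cap\overline{\mathscr{B}_A}$ (and symmetrically $\mathscr{A}_S$), one passes to the function $\psi_m=e^{im\phi}\psi$ on $\overline{\mathscr{B}_A}\times(0,2\pi)$ as in the proof of Lemma~\ref{psireg}, obtaining \eqref{psimeqn}; writing the principal part as the Laplacian of the metric $Xd\phi^2+e^{2\lambda}(d\rho^2+dz^2)$ expressed in Cartesian $(x,y)$-coordinates, the coefficients are merely bounded (not continuous), but the zeroth-order term $e^{2\lambda}\big(\sigma^{-2}X^{-1}(X\omega+Wm)^2-\mu^2\big)$ and the first-order terms $\big(\tfrac{X^{1/2}}{\sigma}\big)\partial_\rho\big(\tfrac{\sigma}{X^{1/2}}\big)\partial_\rho$, etc., are bounded coefficients against which $\psi_m$ decays like $\rho^{10}$; I would differentiate \eqref{theeqn10}, or equivalently localize and rescale, and use the weighted estimate of Lemma~\ref{decayax} as the input to Moser/Schauder-type estimates on dyadic annuli $\{\rho\sim 2^{-k}\}$---on each annulus the equation becomes uniformly elliptic after rescaling, so one obtains scale-invariant $C^{2,\alpha_0}$ bounds that, summed against the $\rho^{10}$-weight, are finitely controlled. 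Third, near $p_N$ (and $p_S$), one uses $(s,\chi)$-coordinates and \eqref{schipsi}; the principal part is, up to a first-order operator with $\hat C^{0,\alpha_0}$ coefficients, the $\mathbb{R}^4$-Laplacian of $ds^2+s^2d\phi_1^2+d\chi^2+\chi^2d\phi_2^2$, and the same dyadic-rescaling argument against the $s^{-10}$-weight from Lemma~\ref{decayax} yields the $\hat C^{2,\alpha_0}(\overline{\mathscr{B}_N})$ bound. Finally, decay as $r\to\infty$ follows from the weighted structure of \eqref{theeqn10} together with the exponential falloff forced by $\mu^2>\omega^2$ (recall $\mu^2=\hat\mu^2-\nu$ with $\nu<0$ and the effective potential tending to $\mu^2-\omega^2>0$), but since $\hat C^{2,\alpha_0}(\overline{\mathscr{B}})$ as defined does not enforce any growth rate at infinity, for this lemma it suffices to note that $\psi\in\dot H^1_{\rm axi}(\mathbb{R}^3)$ together with local Schauder estimates on unit balls already gives uniform $\hat C^{2,\alpha_0}$ control out to infinity.

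\textbf{Main obstacle.} The essential difficulty is precisely the one isolated in Remark~\ref{fullregpsi}: near the axis we genuinely lack the axis compatibility conditions that would make the Cartesian-coordinate equation have continuous (let alone H\"older) principal coefficients, so one cannot simply invoke Schauder theory directly. The resolution---and the step requiring the most care---is to exploit the rapid vanishing of $\psi$ at the axis quantitatively: on the dyadic annulus $\{\rho\sim\lambda\}$ the solution has size $O(\lambda^{10})$ in $\hat C^{0,\alpha_0}$, the equation rescales to a uniformly elliptic equation on a unit-size region whose lower-order coefficients are $O(1)$, so interior Schauder gives $C^{2,\alpha_0}$ estimates on that annulus of size $O(\lambda^{10})$; the weighted norms $\Vert\rho^{-10}\psi\Vert_{\hat C^{2,\alpha_0}}$ etc.\ are then bounded by the supremum over $\lambda$ of these rescaled estimates, and one must check that the nonlinear/error terms (which carry the $\mathcal{L}$-norms of the other renormalized quantities) enter quadratically or linearly in a way consistent with the claimed bound. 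One must also be attentive that the H\"older exponent $\alpha_0$ used here is the one fixed in Remark~\ref{fixmalpha0} (half of $\hat\alpha$), so that interpolation between the weighted $L^\infty$-decay and the $\hat C^{0,\hat\alpha}$-estimate---exactly as at the end of the proof of Lemma~\ref{decayax}---produces the $\hat C^{0,\alpha_0}$ input with the correct constants before the Schauder upgrade is applied.
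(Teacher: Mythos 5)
Your proof is correct, but you take a genuinely different (and heavier) route to the Schauder upgrade than the paper. You keep $\psi_m=e^{im\phi}\psi$ and retain, as principal part, the Laplacian of the metric $Xd\phi^2+e^{2\lambda}(d\rho^2+dz^2)$ in Cartesian coordinates, whose coefficients are (as Remark~\ref{fullregpsi} warns) bounded but possibly discontinuous on the axis; you then resolve the discontinuity by a dyadic-rescaling argument on annuli $\{\rho\sim\lambda\}$, feeding in the $\rho^{10}$-decay of Lemma~\ref{decayax}, and gluing the scale-invariant Schauder estimates across scales. The paper instead moves \emph{all} of the potentially discontinuous and singular contributions to the right-hand side: the observation is that the decay of Lemma~\ref{decayax} is so strong that $\psi/X^{2}\in\hat C^{0,\alpha_0}(\overline{\mathscr{B}})$, so the angular term $-e^{2\lambda}m^{2}X^{-1}\psi_m$, the zeroth-order terms with merely bounded coefficients, and the mismatch $m^{2}\rho^{-2}\psi_m(1-e^{2\lambda}\rho^{2}/X)$ can all be treated as a $\hat C^{0,\alpha_0}$ inhomogeneity. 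What remains on the left is the flat $\mathbb{R}^3$-Laplacian (respectively the flat $\mathbb{R}^4$-Laplacian near $p_N,p_S$) for $\psi$ axisymmetric, plus first-order terms of the form $(1+\mathring\sigma)^{-1}\nabla_{\mathbb{R}^3}\mathring\sigma\cdot\nabla_{\mathbb{R}^3}$, which are genuinely $C^{2,\alpha_0}$ in Cartesian coordinates since $\mathring\sigma\in\mathcal{L}_\sigma$. A single interior Schauder estimate on unit balls then closes the lemma, with no rescaling argument and no need to track the discontinuity at all. Both routes succeed; the paper's is shorter because it realizes that one can discard the discontinuous metric Laplacian entirely, whereas your rescaling argument fights the discontinuity directly and therefore has to verify uniformity of Schauder constants across scales and gluing of the Hölder seminorm across the axis. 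Two minor points worth tightening in your write-up: the phrase ``differentiate \eqref{theeqn10}, or equivalently localize and rescale'' is loose (these are not equivalent operations, and only the latter is what you actually use), and for the $r\to\infty$ control you should cite the $L^\infty$-bound coming from Lemma~\ref{psireg}'s $\hat C^{0,\hat\alpha}$ estimate rather than $\dot H^1_{\rm axi}$, since the latter does not directly give uniform pointwise bounds at infinity.
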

\begin{proof}Without loss of generality we take $\delta = 1$.

The key point is that Lemma~\ref{decayax} implies that $\psi/X^2 \in \hat{C}^{0,\alpha_0}\left(\overline{\mathscr{B}}\right)$. Thus, we may repeat the analysis in the proof of Lemma~\ref{psiregfull} except that we just treat the $\psi/X^2$ term as an error and then use standard Schauder estimates.
\end{proof}

We close the section establishing decay in the large $r$ region.
\begin{lemma}\label{lem:poly-decay-scalar}Consider the $(\psi,\mu^2)$ produced by Proposition~\ref{itexists}. Suppose that
\[\left\Vert \left(\mathring{\sigma},\left(B^{(N)}_{\chi},B^{(S)}_{\chi'},B^{(A)}_z\right),\left(\mathring{X},\mathring{Y}\right),\mathring{\Theta},e^{2\mathring{\lambda}}\right) \right\Vert_{\mathcal{L}_{\sigma}\times\cdots\times\mathcal{L}_{\lambda}}\]
is sufficiently small.

We have
\[\left\Vert r^{10}\psi\right\Vert_{\hat{C}^{2,\alpha_0}\left(\overline{\mathscr{B}}\right)} \leq C\delta\left[1 + \left\Vert \left(\mathring{\sigma},\left(B^{(N)}_{\chi},B^{(S)}_{\chi'},B^{(A)}_z\right),\left(\mathring{X},\mathring{Y}\right),\mathring{\Theta},\mathring{\lambda}\right) \right\Vert_{\mathcal{L}_{\sigma}\times\cdots\times\mathcal{L}_{\lambda}}\right].\]
\end{lemma}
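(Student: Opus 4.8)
The plan is to exploit the fact that, by our choice of the Klein--Gordon mass, the zeroth order coefficient in $\psi$'s equation is \emph{positive} at spatial infinity, so that $\psi$ decays faster than any polynomial; since the statement only asks for decay of order $r^{-10}$, a soft argument modelled on the one used in Lemma~\ref{decayax} will suffice. First I would reduce to the asymptotically flat end: on $\overline{\mathscr{B}_H}$, $\overline{\mathscr{B}_N}$, $\overline{\mathscr{B}_S}$ the function $r$ is bounded, so there the claimed bound is already contained in Lemma~\ref{psiregfull}, and it remains to prove decay in $\overline{\mathscr{B}_A}$ as $r\to\infty$. Next I would record the crucial sign: by Proposition~\ref{itexists} we have $\mu^2=\hat\mu^2-\nu$ with $\hat\mu^2>\omega^2$ (Lemma~\ref{neg}) and $\nu<0$, so $\mu^2>\omega^2$, and using the Kerr asymptotics from the proof of Lemma~\ref{neg}, the estimates of Section~\ref{kerr}, and the smallness of the renormalized quantities one checks that
\[
e^{2\lambda}\Big(\mu^2 + X^{-1}m^2 - \sigma^{-2}X^{-1}(X\omega+Wm)^2\Big)\;\geq\;c_0\;>\;0\qquad\text{for }r\geq R_0 ,
\]
the point being that $X^{-1}m^2\geq0$ only helps and $\sigma^{-2}X^{-1}(X\omega+Wm)^2=\sigma^{-2}X(\omega+X^{-1}Wm)^2\to\omega^2$. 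Thus, for $r\geq R_0$, equation~\eqref{theeqn10} is a uniformly elliptic equation (viewed, as in Lemma~\ref{psireg}, either in cylindrical coordinates on $\mathbb{R}^3$ or, near the axis, for $\psi_m\doteq e^{im\phi}\psi$ on $\mathbb{R}^3$) with a uniformly positive zeroth order term; and $\psi\to0$ as $r\to\infty$, since $\psi\in\dot H^1_{\rm axi}(\mathbb{R}^3)\subset L^6(\mathbb{R}^3)$ and Moser's mean value inequality (Theorem~8.17 of~\cite{giltru}) converts this into pointwise decay, exactly as in Lemma~\ref{decayax}.

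I would then obtain polynomial decay in one of two equivalent ways. The first is to repeat the weighted energy argument of Lemma~\ref{decayax} with an additional weight $\langle r\rangle^{-2\beta}$ inserted into the multiplier: testing~\eqref{tointbyparts} (and its analogues near $p_N,p_S$) against $\xi_0\,\rho(\rho+\epsilon_0)^{-\eta+2}\langle r\rangle^{-2\beta}\psi$, the new error terms generated by differentiating $\langle r\rangle^{-2\beta}$ are of lower order and are absorbed, for $r\geq R_0$, by the good positive term coming from the zeroth order coefficient; after letting $\epsilon_0\to0$ this yields a weighted $L^2$ bound, and Moser's inequality gives $|\psi|\leq C\rho^{\eta/2-C_0}\langle r\rangle^{-\beta+C_1}$ with $C_0,C_1$ independent of $\eta,\beta$, so taking $\eta,\beta$ large (with $m$ large, as already fixed in Remark~\ref{fixmalpha0}) we obtain $|\psi|\leq C\langle r\rangle^{-N}$ for any prescribed $N$. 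The second, essentially the same, is a comparison argument: $w\doteq Ar^{-N}$ is a supersolution of~\eqref{theeqn10} on $\{r>R_1\}$ once $R_1$ is large (the second order terms contribute $O(r^{-N-2})$ while the positive zeroth order term contributes $\geq c_0Ar^{-N}$), and applying the weak maximum principle on $\{R_1<r<R_2\}$ — near the axis to $\psi=|\psi_m|$, using Kato's inequality to see that $|\psi_m|$ is a subsolution of the uniformly elliptic, positive-potential equation governing $\psi_m$ there — together with $\psi\leq\sup_{r=R_1}\psi$ on the inner boundary and $\psi,w\to0$, gives $|\psi|\leq Cr^{-N}$ in $\{r>R_1\}$ as $R_2\to\infty$.

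Finally I would upgrade the pointwise decay to the $\hat C^{2,\alpha_0}$ bound by the standard rescaling argument: for $q\in\overline{\mathscr{B}_A}$ with $R\doteq r(q)$ large, rescale a ball of radius $\sim R$ about $q$ to unit size in the coordinates that make the equation uniformly elliptic with $\hat C^{0,\alpha_0}$ lower order terms — away from the axis these coefficients are regular, and near the axis one follows the proof of Lemma~\ref{psiregfull}, moving the term $e^{2\lambda}m^2X^{-1}\psi$ to the right-hand side (it lies in $\hat C^{0,\alpha_0}$, with the decay just established, since $\psi/X^2\in\hat C^{0,\alpha_0}$ by Lemma~\ref{decayax}) and applying interior Schauder estimates (Theorem~6.6 of~\cite{giltru}) on the rescaled ball. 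Since the $L^\infty$-decay had rate $r^{-N}$ with $N$ at our disposal, this produces decay of $\psi$, $\hat\partial\psi$, $\hat\partial^2\psi$ and the $\hat C^{0,\alpha_0}$-seminorm of $\hat\partial^2\psi$ of order $r^{-10}$ (or faster), i.e.\ the asserted bound; all constants depend only on $R_0$, $m$, $\alpha_0$ and an upper bound for the (small) norm of the renormalized quantities, so restoring the overall factor $\delta$ by the linearity of~\eqref{theeqn10} and the normalization~\eqref{normalization}, and the $\Vert\cdot\Vert$-dependence through Lemmas~\ref{psireg}--\ref{psiregfull} and Proposition~\ref{solveforpsi}, completes the proof.

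The step I expect to be the main obstacle is controlling $\psi$ near the axis at large $r$: there the equation either degenerates (in the $\psi$-picture, with singular coefficient $m^2X^{-1}\sim m^2\rho^{-2}$) or has only bounded measurable principal part (in the $\psi_m$-picture), so the comparison/energy argument must be set up carefully — via the $\psi_m$-trick and Kato's inequality, or equivalently by combining the $\rho$-weights of Lemma~\ref{decayax} with the $\langle r\rangle$-weight — and one must check that the extra $\langle r\rangle^{-2\beta}$ weight does not interfere with the large-$m$ mechanism that produces the $\rho$-decay in the first place.
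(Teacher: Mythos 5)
Your proposal is correct, and at the structural level it is the same as the paper's proof: the whole lemma turns on the observation that $\mu^2>\omega^2$ forces the zeroth-order coefficient in \eqref{theeqn10} to be bounded away from zero (with the favourable sign) once $r$ is large, and you record this with the correct sign, $e^{2\lambda}\bigl(\mu^2 + X^{-1}m^2 - \sigma^{-2}X^{-1}(X\omega+Wm)^2\bigr)\geq c_0>0$. (The first displayed inequality in the paper's proof of Lemma~\ref{lem:poly-decay-scalar} is in fact a sign typo; the subsequent display makes clear the intended $\leq -b(1+\rho^{-2})$.) Where you diverge is only in the mechanics used to extract decay from this positivity: the paper multiplies by a radial cutoff $\xi_R$, establishes the weighted energy bound \eqref{decayestpoly} with $R^{-1}$ on the right, and iterates to improve the power; you instead propose either inserting a $\langle r\rangle^{-2\beta}$ weight into the multipliers from Lemma~\ref{decayax} (where the positive zeroth-order term absorbs the errors from differentiating the weight), or a barrier comparison with $w=Ar^{-N}$, using Kato's inequality/the $\psi_m$-trick near the axis. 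The barrier argument in particular is a clean alternative that produces arbitrary polynomial rates in one pass rather than by iteration, at the modest cost of needing the (soft) a priori pointwise smallness of $\psi$ at infinity from $\dot H^1_{\rm axi}\subset L^6$ plus Moser, and a careful treatment of the degeneracy at the axis — both of which you correctly flag. Your final step, upgrading the $L^\infty$ decay to $\hat C^{2,\alpha_0}$ decay via rescaled interior Schauder estimates (moving the singular but now negligible $e^{2\lambda}m^2X^{-1}\psi$ term to the right-hand side as in Lemma~\ref{psiregfull}), is the same as the paper's appeal to "the same Schauder estimate scheme."
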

\begin{proof}Without loss of generality we take $\delta = 1$.

This is straightforward. The key point is the lower order term in~(\ref{theeqn10}) is comparable to $1$ when $r$ is sufficiently large, i.e.~$r$ sufficiently large implies
\[e^{2\lambda}\sigma^{-2}X^{-1}\left(X\omega+Wm\right)^2 - e^{2\lambda}m^2X^{-1}- e^{2\lambda}\mu^2 \geq b\left(1+\rho^{-2}\right).\]
In particular, for all $R$ sufficiently large, we may define a cut-off $\xi_R$ which is identically $0$ for $r \leq R/2$ and identically $1$ for $r \geq R$ and such that $\left|\nabla\xi_R\right| \leq CR^{-1}$. Multiplying~(\ref{theeqn10}) through by $\xi_R$ and using the above positivity property, we easily may establish
\[-\sigma^{-1}\partial_{\rho}\left(\sigma\partial_{\rho}\psi\right) - \sigma^{-1}\partial_z\left(\sigma\partial_z\psi\right) + b\left(1+\rho^{-2}\right)\psi \leq C\left[R^{-1}\left|\partial\psi\right| + R^{-2}\left|\psi\right|\right].\]
A straightforward elliptic estimate then establishes
\begin{equation}\label{decayestpoly}
\int_{\mathscr{B}\cap \{r \geq R\}}\left[\left(\partial_{\rho}\psi\right)^2 + \left(\partial_z\psi\right)^2 + \left(1+\rho^{-2}\right)\psi^2\right]\rho\, d\rho\, dz \leq CR^{-1}\left[1+\left\Vert(\mathring{X},\mathring{Y},\mathring{W},\mathring{\sigma},\mathring{\lambda})\right\Vert_{\mathcal{L}}\right].
\end{equation}

It is clear that this process can be iterated so that the $R^{-1}$ on the right hand side of~(\ref{decayestpoly}) may be replaced by $R^{-k}$. Carrying the out the same Schauder estimate scheme we have done before is easily seen to finish the proof.
\end{proof}
\subsection{Continuous Dependence on the Renormalized Quantities}
In this section, we will show that $\psi$ and $\mu^2$, as produced by Proposition~\ref{itexists}, depend continuously on the renormalized quantities.

Let's introduce some convenient notation. Throughout this section, we let $(\psi_1,\mu_1^2)$ and $(\psi_2,\mu_2^2)$ be two pairs of scalar fields and Klein--Gordon masses associated to two sets of renormalized quantities\index{Metric Quantities!$\mathfrak{a}$}
\[\mathfrak{a}_1 \doteq \left(\mathring{\sigma}^{(1)},\left(B^{(N),(1)}_{\chi},B^{(S),(1)}_{\chi'},B^{(A),(1)}_z\right),\left(\mathring{X}^{(1)},\mathring{Y}^{(1)}\right),\mathring{\Theta}^{(1)},\mathring{\lambda}^{(1)}\right)\] and \[\mathfrak{a}_2 \doteq \left(\mathring{\sigma}^{(2)},\left(B^{(N),(2)}_{\chi},B^{(S),(2)}_{\chi'},B^{(A),(2)}_z\right),\left(\mathring{X}^{(2)},\mathring{Y}^{(2)}\right),\mathring{\Theta}^{(2)},\mathring{\lambda}^{(2)}\right).\] Then we set
\index{Metric Quantities!$\mathfrak{A}$}\[\mathfrak{A} \doteq \left\Vert \mathfrak{a}_1-\mathfrak{a}_2\right\Vert_{\mathcal{L}_{\sigma}\times\cdots\times\mathcal{L}_{\lambda}}.\]

Finally we denote the metric quantities associated to the renormalized quantities by
\[(X_i,Y_i,W_i,\sigma_i,\lambda_i).\]

We start by proving that the Klein--Gordon mass depends continuously on the renormalized quantities.
\begin{lemma}\label{itisindeedquitecontinuous}We have
\[\left|\mu^2_1-\mu^2_2\right| \leq C\mathfrak{A}.\]
\end{lemma}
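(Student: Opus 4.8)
The quantities $\mu_i^2$ are produced by Proposition~\ref{itexists} as $\mu_i^2 = \hat\mu^2 - \nu_i$, where $\hat\mu^2$ is the fixed constant and $\nu_i$ is the infimum of the functional $\mathscr{L}_{\hat\mu^2}$ (built from the $i$-th metric data) over the sphere $\{\int f^2 e^{2\lambda_i}\sigma_i = 1\}$ in $\dot H^1_{\rm axi}(\mathbb{R}^3)$. Since $\hat\mu^2$ is the same for both, it suffices to bound $|\nu_1 - \nu_2| \leq C\mathfrak{A}$. The plan is to use the variational characterization directly: take a near-minimizer for one problem, modify it slightly so that it is admissible for the other problem, and estimate the change in the functional. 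By the symmetry of the argument this yields a two-sided bound.

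First I would record that the minimizers $\psi_i$ (which exist by Proposition~\ref{itexists}, with $\delta=1$ say) satisfy uniform bounds: from \eqref{nuControl} we have $-C < \nu_i < 0$ with $C$ depending only on an upper bound for $\|\mathfrak{a}_i\|$, and from the energy estimate in the proof of Proposition~\ref{itexists} together with Lemmas~\ref{psiregfull} and~\ref{lem:poly-decay-scalar} the $\psi_i$ are bounded in $\dot H^1_{\rm axi}$ and indeed in the weighted $\hat C^{2,\alpha_0}$-norm with polynomial decay (again uniformly, once $\|\mathfrak{a}_i\|$ is small). Next, given the minimizer $\psi_1$ for the first problem, I set $\tilde\psi \doteq c\,\psi_1$ where $c > 0$ is chosen so that $\int \tilde\psi^2 e^{2\lambda_2}\sigma_2 = 1$; since $\int \psi_1^2 e^{2\lambda_1}\sigma_1 = 1$, the constant $c$ satisfies $|c - 1| \leq C\,\|e^{2\lambda_1}\sigma_1 - e^{2\lambda_2}\sigma_2\|_{L^\infty_{\rm loc,\,\psi_1\text{-weighted}}} \leq C\mathfrak{A}$, using that $\lambda_i = \lambda_K + \mathring\lambda^{(i)}$ and $\sigma_i = \rho(1 + \mathring\sigma^{(i)})$ and that $\psi_1$ has rapid decay at the axis and at infinity to make the relevant integral of $\psi_1^2$ against bounded quantities finite. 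Then
\[
\nu_2 \leq \mathscr{L}^{(2)}_{\hat\mu^2}(\tilde\psi) = \mathscr{L}^{(1)}_{\hat\mu^2}(\psi_1) + \big(\mathscr{L}^{(2)}_{\hat\mu^2}(\tilde\psi) - \mathscr{L}^{(1)}_{\hat\mu^2}(\psi_1)\big) = \nu_1 + \mathcal{E},
\]
and the core of the proof is to show $|\mathcal{E}| \leq C\mathfrak{A}$.

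The error $\mathcal{E}$ splits into (i) the contribution of the rescaling $c$, controlled by $|c-1| \leq C\mathfrak{A}$ and the uniform bound on $\mathscr{L}^{(1)}_{\hat\mu^2}(\psi_1) = \nu_1$ together with the uniform $\dot H^1$-bound on $\psi_1$; and (ii) the difference of the two functionals evaluated on the same test function $\psi_1$. For (ii) one compares the integrands term by term: the gradient term changes by $\int |\partial\psi_1|^2(\sigma_1 - \sigma_2)$, and the potential terms change through the differences of $e^{2\lambda_i}$, $\sigma_i$, $X_i$, $W_i$, and $\omega_i$; each of these differences is controlled in the appropriate weighted norm by $\mathfrak{A}$ (recall $X_i = X_K(1+\mathring X^{(i)})$, $X_i^{-1}W_i = \mathring\Theta^{(i)} + X_K^{-1}W_K$, and $\omega$ is determined through \eqref{mathringomega} and Corollary~\ref{omconst}, so $|\omega_1 - \omega_2| \leq C\|\mathring\Theta^{(1)} - \mathring\Theta^{(2)}\|_{\mathcal{L}_\Theta} \leq C\mathfrak{A}$), and is then integrated against $\psi_1^2$ or $|\partial\psi_1|^2$ using the uniform weighted bounds on $\psi_1$. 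The only mild subtlety—and the step I expect to be the main obstacle—is bookkeeping the weights near the axis: near $\mathscr{A}$ and the poles $p_N, p_S$ the metric quantities are singular in $\rho$ (e.g. $X_K \sim \rho^2$, $\partial_\rho\log X_K \sim \rho^{-1}$), so one must verify that the rapid vanishing of $\psi_1$ there (the $\rho^{10}$, $s^{10}$, $(s')^{10}$ decay from Lemma~\ref{decayax}, encoded in $\mathcal{L}_\psi$) exactly absorbs the inverse powers of $\rho$ appearing in the potential coefficients and in the difference estimates; this is the same mechanism used throughout Sections~\ref{section:solving-for-sigma}--\ref{secpsi} and goes through without essential difficulty. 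Swapping the roles of $1$ and $2$ gives $\nu_1 \leq \nu_2 + C\mathfrak{A}$, hence $|\mu_1^2 - \mu_2^2| = |\nu_1 - \nu_2| \leq C\mathfrak{A}$, as claimed.
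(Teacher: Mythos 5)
Your proposal is correct and follows essentially the same variational comparison strategy as the paper: use the minimizer of one problem as a (near-)admissible test function for the other and bound the resulting error by $C\mathfrak{A}$, then conclude by symmetry. The one place you are more explicit than the paper's terse presentation is the rescaling constant $c$ needed because the two constraint sets $\{\int f^{2}e^{2\lambda_{i}}\sigma_{i}=1\}$ differ; the paper buries this in the "easy estimate" $|\mathscr{L}^{(1)}_{\hat\mu^{2}}(\psi_{i}) - \mathscr{L}^{(2)}_{\hat\mu^{2}}(\psi_{i})|\leq C\mathfrak{A}$, whereas you make it a separate term, but the underlying mechanism — rapid decay of $\psi_{1}$ near the axis, poles, and infinity absorbing the singular coefficients while the $\mathcal{L}_{\sigma},\dots,\mathcal{L}_{\lambda}$ norms control the differences of the metric data and of $\omega$ — is identical.
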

\begin{proof}Without loss of generality we take $\delta = 1$.

We follow arguments from~\cite{shlapgrow}. Let $\mathscr{L}^{(1)}_{\hat{\mu}^2}$ and $\mathscr{L}^{(2)}_{\hat{\mu}^2}$ denote the functionals from Proposition~\ref{itexists} corresponding to $\psi_1$ and $\psi_2$ respectively. Observe the following easy estimate:
\[\left|\mathscr{L}^{(1)}_{\hat{\mu}^2}\left(\psi_1\right) - \mathscr{L}^{(2)}_{\hat{\mu}^2}\left(\psi_1\right)\right| + \left|\mathscr{L}^{(1)}_{\hat{\mu}^2}\left(\psi_2\right) - \mathscr{L}^{(2)}_{\hat{\mu}^2}\left(\psi_2\right)\right| \leq C\mathfrak{A}.\]
Using $\psi_1$ and $\psi_2$ as test functions in the variational formulations of $\mu_1$ and $\mu_2$, we immediately obtain
\[\mu^2_1 \leq \mu^2_2 + C\mathfrak{A},\]
\[\mu^2_2 \leq \mu^2_1 + C\mathfrak{A}.\]
The lemma then easily follows.
\end{proof}

The following result is the technical heart of the paper and is very closely related to estimates from~\cite{agmon}. At this point, it may be useful to review the model problem in the introduction, Proposition \ref{modelprop}. 
\begin{proposition}\label{agmax}Continuing to assume that $\psi_{1},\psi_{2}$ solve their respective equations, as above, we have that for every $q > 0$, then
\begin{equation}\label{agest}
\int_{\mathscr{B}}\psi_2^2\left|\partial\left(\frac{\psi_1}{\psi_2}\right)\right|^2\, \rho\, d\rho\, dz \leq \delta^2C(q)\mathfrak{A}^2 + q\delta^2\int_{\mathscr{B}}\left(\psi_1-\psi_2\right)^2\, \rho\, d\rho\, dz.
\end{equation}
\end{proposition}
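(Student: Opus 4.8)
The strategy mirrors the contraction argument in the model problem (Proposition~\ref{modelprop}, Proposition~\ref{agmax} is its genuine analogue). The key identity is the one obtained by testing $\psi_2$'s equation against $\frac{v^2}{\psi_2}$ where $v \doteq \psi_1 - \psi_2$. Concretely, I would start from the weak formulation of~(\ref{theeqn10}) for $\psi_2$ (with metric data and Klein--Gordon mass indexed by $2$), multiply by the test function $\frac{v^2}{\psi_2}$, and integrate by parts against the weight $\sigma_2\, d\rho\, dz$. Here one must know that $\psi_2$ is bounded below by a positive multiple of $\delta$ uniformly away from the axis and horizon, and decays like a fixed power of $\rho$ (resp.\ $s$, $s'$) near the axis/poles, so that $\frac{v^2}{\psi_2}$ is an admissible test function; this comes from combining the Harnack estimate used in the proof of Proposition~\ref{itexists} (positivity of $\psi$), the De Giorgi--Nash lower bound, and the decay Lemmas~\ref{decayax} and~\ref{lem:poly-decay-scalar}, exactly as in the model problem where De Giorgi--Nash iteration gave $u_i \in [c\delta, C\delta]$.

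Using the algebraic identity $\nabla\left(\frac{v^2}{\psi_2}\right)\cdot\nabla\psi_2 = |\nabla v|^2 - \psi_2^2\left|\nabla\left(\frac{\psi_1}{\psi_2}\right)\right|^2$ (valid pointwise where $\psi_2 > 0$), the weak equation for $\psi_2$ rearranges to
\[
\int_{\mathscr{B}}\psi_2^2\left|\partial\left(\frac{\psi_1}{\psi_2}\right)\right|^2\sigma_2 \leq \int_{\mathscr{B}}|\partial v|^2\sigma_2 + \int_{\mathscr{B}}Q_2\, v^2\, \sigma_2,
\]
where $Q_2 \doteq e^{2\lambda_2}\left(\mu_2^2 + X_2^{-1}m^2 - \sigma_2^{-2}X_2^{-1}(X_2\omega + W_2 m)^2\right)$ is the zeroth-order coefficient in $\psi_2$'s equation. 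Now I would use \emph{both} equations: subtracting $\psi_1$'s equation (tested against $\psi_1 - v = \psi_2$, or more symmetrically expanding) produces on the right-hand side only differences of coefficients times $\psi_1 v$. This is the step where the continuous dependence of the metric data on the renormalized quantities, the continuous dependence of $\mu^2$ (Lemma~\ref{itisindeedquitecontinuous}), and the uniform bounds $\|\psi_i\|_{\mathcal{L}_\psi} \lesssim \delta$ all feed in: the total becomes
\[
\int_{\mathscr{B}}\psi_2^2\left|\partial\left(\frac{\psi_1}{\psi_2}\right)\right|^2\sigma_2 = \int_{\mathscr{B}}\big[(\mu_2^2 - \mu_1^2) + (\text{metric-data differences})\big]\psi_1 v\, \sigma_2 + (\text{error from }\sigma_1 \neq \sigma_2),
\]
and each metric-data difference is bounded in the relevant Hölder norm by $C\mathfrak{A}$, while $\|\psi_1\|_{L^2(\rho\, d\rho\, dz)} \le C\delta$ by the normalization~(\ref{howbigpsi}). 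A Cauchy--Schwarz (or Young) split with parameter $q$ then yields $\leq \delta^2 C(q)\mathfrak{A}^2 + q\delta^2\int_{\mathscr{B}}v^2\rho\, d\rho\, dz$, after using $\sigma_i \sim \rho$. The comparability $\sigma_2 \sim \rho$ (valid since $\mathring\sigma$ is small in $\mathcal{L}_\sigma$) lets one freely replace $\sigma_2$ by $\rho$ in all the integrals at the cost of the claimed constants.

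\textbf{Main obstacle.} The delicate point is the integration by parts with the test function $\frac{v^2}{\psi_2}$: near the axis $\mathscr{A}$ and the poles $p_N$, $p_S$ the weight $\sigma_2$ degenerates and $\psi_2$ vanishes (to high order, by Lemma~\ref{decayax}), so one has to check that no boundary terms are generated and that the ratio $\frac{v^2}{\psi_2}$ genuinely lies in the right weighted space. The cleanest way to handle this is to do the computation with a cutoff $\chi_{\epsilon}$ vanishing near $\{\rho = 0\}$ and the horizon, integrate by parts, and then send $\epsilon \to 0$ using the quantitative decay $|\psi_i| \lesssim \rho^{10}$ on $\overline{\mathscr{B}_A}$, $|\psi_i| \lesssim s^{10}$ on $\overline{\mathscr{B}_N}$, etc., together with the matching lower bounds from the Harnack/De Giorgi--Nash machinery; this is precisely the argument sketched in the footnote in the proof of Lemma~\ref{psireg} for the co-dimension-two set $\{\rho = 0\}$. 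A secondary technical nuisance is that $\psi_1$ and $\psi_2$ solve equations with \emph{different} metric data, so the Poincaré-type absorption (needed to make sense of the $q$-term and later to close the contraction in the subsequent use of this proposition) must be performed carefully with the common weight $\rho$ rather than $\sigma_1$ or $\sigma_2$; this is routine given the smallness of the $\mathring\sigma^{(i)}$ but must be tracked.
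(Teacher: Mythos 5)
Your proposal is correct and follows essentially the same route as the paper: test $\psi_2$'s equation against $v^2/\psi_2$ with $v=\psi_1-\psi_2$, use the pointwise identity $\nabla(v^2/\psi_2)\cdot\nabla\psi_2=|\nabla v|^2-\psi_2^2|\nabla(v/\psi_2)|^2$, and then control the resulting $\int(|\partial v|^2+Q_2 v^2)\sigma_2$ by exploiting that $v$ satisfies $\psi_2$'s operator equation up to an $L^2(\rho\,d\rho\,dz)$ error of size $C\mathfrak{A}$ (coming from $\psi_1$'s equation with the $1$-labeled coefficients), followed by Young's inequality with the small parameter $q$ and the comparability $\sigma_2\sim\rho$. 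The paper's proof reduces to $\delta=1$ and is terse about the justification of the test function near the axis and poles; your discussion of the cutoff/limiting argument and the De Giorgi--Nash lower bound there is a reasonable way to fill in exactly what the paper leaves implicit.
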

\begin{proof}Without loss of generality we take $\delta = 1$.  It will be convenient to set $v \doteq \psi_1-\psi_2$.

We now multiply $\psi_2$'s equation~(\ref{theeqn10}) by $\sigma_2 v^2/\psi_2$ and integrating by parts. We obtain
\begin{equation}\label{howitallstarted}
\int_{\mathscr{B}}\left[\partial_{\rho}\psi_2\partial_{\rho}\left(\frac{v^2}{\psi_2}\right) + \partial_z\psi_2\partial_z\left(\frac{v^2}{\psi_2}\right) -e^{2\lambda_2}\left(\sigma_2^{-2}X_2^{-1}\left(X_2\omega_2+W_2m\right)^2 - m^2X_2^{-1}-\mu_2^2\right)v^2 \right]\sigma_2\, d\rho\, dz = 0.
\end{equation}

Observe the following identity which holds for any two functions $f(x)$ and $g(x)$ with $f > 0$
\begin{align*}
f'\left(\frac{g^2}{f}\right)' &= \frac{2f'gg'}{f} - \frac{(f')^2g^2}{f^2}
\\ \nonumber &= (g')^2 - (g')^2 + \frac{2f'gg'}{f} - \frac{(f')^2g^2}{f^2}
\\ \nonumber &= (g')^2 -f^2\left(\frac{(g')^2}{f^2} - \frac{2f'gg'}{f^3} + \frac{(f')^2g^2}{f^4}\right)
\\ \nonumber &= (g')^2 -f^2\left(\left(\frac{g}{f}\right)'\right)^2.
\end{align*}

In particular,
\begin{align*}
\partial_{\rho}&\psi_2\partial_{\rho}\left(\frac{v^2}{\psi_2}\right) + \partial_z\psi_2\partial_z\left(\frac{v^2}{\psi_2}\right) = (\partial_{\rho}v)^2 + (\partial_zv)^2 - \psi_2^2\left[\left(\partial_{\rho}\left(\frac{v}{\psi_2}\right)\right)^2 + \left(\partial_z\left(\frac{v}{\psi_2}\right)\right)^2\right].
\end{align*}

Thus,~(\ref{howitallstarted}) implies
\begin{align}\label{whathappenednext}
\int_{\mathscr{B}}&\psi_2^2\left[\left(\partial_{\rho}\left(\frac{v}{\psi_2}\right)\right)^2 + \left(\partial_z\left(\frac{v}{\psi_2}\right)\right)^2\right]\sigma_2\, d\rho\, dz =
\\ \nonumber &\int_{\mathscr{B}}\left[(\partial_{\rho}v)^2 + (\partial_zv)^2 -e^{2\lambda_2}\left(\sigma_2^{-2}X_2^{-1}\left(X_2\omega_2+W_2m\right)^2 - m^2X_2^{-1}-\mu_2^2\right)v^2 \right]\sigma_2\, d\rho\, dz.
\end{align}

Next, we note that using our previous estimates, one may easily establish
\[\int_{\mathscr{B}}\left[\left|\sigma_2^{-1}\partial_{\rho}\left(\sigma_2\partial_{\rho}v\right) + \sigma_2^{-1}\partial_z\left(\sigma_2\partial_zv\right) -e^{2\lambda_2}\left(\sigma_2^{-2}X_2^{-1}\left(X_2\omega_2+W_2m\right)^2 - m^2X_2^{-1}-\mu_2^2\right)v \right|^2\right]\rho\, d\rho\, dz \leq C\mathfrak{A}^2.\]

In particular, a straightforward elliptic estimate yields
\[\int_{\mathscr{B}}\left[(\partial_{\rho}v)^2 + (\partial_zv)^2 -e^{2\lambda_2}\left(\sigma_2^{-2}X_2^{-1}\left(X_2\omega_2+W_2m\right)^2 - m^2X_2^{-1}-\mu_2^2\right)v^2 \right]\sigma_2\, d\rho\, dz \leq C(q)\mathfrak{A}^2 + q\int_{\mathscr{B}}v^2\, \rho\, d\rho\, dz.\]

Using that $\rho$ and $\sigma_2$ are comparable, we immediately conclude
\begin{equation}\label{thefallout}
\int_{\mathscr{B}}\psi_2^2\left[\left(\partial_{\rho}\left(\frac{v}{\psi_2}\right)\right)^2 + \left(\partial_z\left(\frac{v}{\psi_2}\right)\right)^2\right]\rho\, d\rho\, dz \leq C(q)\mathfrak{A}^2+ q\int_{\mathscr{B}}v^2\, \rho\, d\rho\, dz.
\end{equation}
Equivalently,
\[\int_{\mathscr{B}}\psi_2^2\left|\partial\left(\frac{\psi_1-\psi_2}{\psi_2}\right)\right|^2\, \rho\, d\rho\, dz \leq C(q)\mathfrak{A}^2+ q\int_{\mathscr{B}}(\psi_1-\psi_2)^2\, \rho\, d\rho\, dz.\qedhere\]
\end{proof}

Now we are ready for the main result of the section.
\begin{proposition}We have
\[\left\Vert \psi_1-\psi_2\right\Vert_{\mathcal{L}_{\psi}} \leq D\delta\mathfrak{A}.\]
\end{proposition}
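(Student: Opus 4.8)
The plan is to bootstrap from the almost-orthogonality estimate in Proposition~\ref{agmax} to a genuine $L^2$-difference bound for $\psi_1 - \psi_2$, and then upgrade that $L^2$-bound to the full $\mathcal{L}_\psi$-norm via the elliptic theory already developed in Section~\ref{regularitydecay}. This is exactly the structure of the contraction argument in the model problem, Proposition~\ref{modelprop}, with $v = \psi_1 - \psi_2$ playing the role there, $\psi_2$ playing the role of $u_2$, the normalization $\Vert\psi_i\Vert^2 = \delta^2$ (with respect to the measure $e^{2\lambda_i}\sigma_i\,d\rho\,dz$) replacing the $L^2(B_1)$-normalization, and $\mathfrak{A}$ replacing $\Vert f_1 - f_2\Vert_{\mathcal{L}}$.

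First I would record the pointwise lower and upper bounds $c\delta \leq \psi_i \leq C\delta$ on compact subsets of $\mathscr{B}$ away from the axis and horizon (and the appropriate weighted analogues $\psi_i \sim \delta\rho^{10}$ near the axis, $\psi_i \sim \delta s^{10}$ near $p_N$, etc.), which follow from combining the Harnack inequality used in the proof of Proposition~\ref{itexists} with the decay estimates of Lemmas~\ref{decayax}, \ref{psiregfull}, and~\ref{lem:poly-decay-scalar}; the De Giorgi--Nash theory gives the two-sided bound on the interior and the weighted bounds carry over because the same mean-value inequalities were used to establish them. Then, following the model problem, set $A \doteq |\mathscr{B}_R|^{-1}\int_{\mathscr{B}_R} \psi_1/\psi_2$ over a large but fixed compact region $\mathscr{B}_R$ where both $\psi_i$ are comparable to $\delta$; apply the Poincar\'e inequality on $\mathscr{B}_R$ to $\psi_1/\psi_2 - A$ and Proposition~\ref{agmax} to get
\[
\int_{\mathscr{B}_R} |\psi_1 - A\psi_2|^2\,\rho\,d\rho\,dz \leq c\delta^2 \int_{\mathscr{B}_R}\psi_2^2 \left|\partial\left(\tfrac{\psi_1}{\psi_2}\right)\right|^2 \rho\,d\rho\,dz \leq \delta^2 C(q)\mathfrak{A}^2 + q\delta^2 \int_{\mathscr{B}}(\psi_1-\psi_2)^2\rho\,d\rho\,dz.
\]
Next, the normalization $\Vert\psi_1\Vert = \Vert\psi_2\Vert = \delta$ together with Lemma~\ref{itisindeedquitecontinuous} (giving $|\mu_1^2 - \mu_2^2| \leq C\mathfrak{A}$) and the continuity of $e^{2\lambda_i}\sigma_i$ in $\mathfrak{a}_i$ forces $\delta^2|1-A|^2 \leq $ the same right-hand side type bound (the triangle inequality argument of the model problem, with the error from the difference of measures absorbed into $C(q)\mathfrak{A}^2$). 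Combining, $\int_{\mathscr{B}_R}(\psi_1-\psi_2)^2\rho\,d\rho\,dz \leq \delta^2 C(q)\mathfrak{A}^2 + q\delta^2\int_{\mathscr{B}}(\psi_1-\psi_2)^2\rho\,d\rho\,dz$. To close, I still need to control the $L^2$-norm of $v$ on $\mathscr{B}\setminus\mathscr{B}_R$, i.e., near infinity and near the axis/horizon: here Proposition~\ref{agmax} by itself does not immediately control $\int v^2$ where $\psi_2$ is small, so one uses instead the exponential (resp. polynomial) decay estimates of Lemmas~\ref{decayax} and~\ref{lem:poly-decay-scalar} applied to the equation satisfied by $v$ (which has a coercive lower-order term for $r$ large and rapidly vanishing $\psi_i$ near the axis), so that $\int_{\mathscr{B}\setminus\mathscr{B}_R}v^2\rho\,d\rho\,dz \leq \delta^2 C\mathfrak{A}^2 + $ (small)$\cdot\int v^2$ for $R$ large. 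Adding and absorbing, and finally choosing $q$ small, yields $\Vert \psi_1-\psi_2\Vert_{L^2_\rho(\mathscr{B})} \leq C\delta\mathfrak{A}$.

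Finally, to pass from this $L^2$-bound to the $\mathcal{L}_\psi$-bound, I would note that $v = \psi_1 - \psi_2$ solves an equation of the form $L_2 v = (\text{coefficients of } \mathfrak{a}_1 - \mathfrak{a}_2)\cdot(\text{bounded terms involving } \psi_1)$, whose right-hand side is controlled in the appropriate weighted H\"older/Sobolev spaces by $\delta\mathfrak{A}$ thanks to the estimates of Section~\ref{regularitydecay} for $\psi_1$ and Lemma~\ref{itisindeedquitecontinuous}; then running the identical Schauder/De Giorgi--Nash/weighted-decay scheme used to prove Lemmas~\ref{psireg}--\ref{lem:poly-decay-scalar} (with the $L^2$-bound just established as the seed estimate in place of the normalization) upgrades the control on $v$ to $\hat{C}^{2,\alpha_0}$ with the prescribed polynomial and axial weights, i.e., to $\Vert v\Vert_{\mathcal{L}_\psi} \leq D\delta\mathfrak{A}$.

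The main obstacle is the handling of the region where $\psi_2$ degenerates (near the axis, the poles $p_N, p_S$, and spatial infinity): Proposition~\ref{agmax} controls $\psi_2^2|\partial(\psi_1/\psi_2)|^2$, which is weak exactly where $\psi_2$ is small, so the $L^2$-control of $v = \psi_1 - \psi_2$ there must come from an independent mechanism. The resolution is that $v$ satisfies the \emph{same} structural equation as each $\psi_i$ up to source terms of size $\mathfrak{A}$, so the degenerate-region decay arguments of Lemmas~\ref{decayax} and~\ref{lem:poly-decay-scalar} apply verbatim to $v$; one has to be slightly careful that the $\eta$-weighted integration-by-parts scheme near the axis produces error terms that are genuinely $O(\mathfrak{A})$ plus a small multiple of $\int v^2$, but this is exactly parallel to the already-completed proofs and involves no new ideas. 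A secondary technical point is bookkeeping the difference of the two measures $e^{2\lambda_1}\sigma_1$ versus $e^{2\lambda_2}\sigma_2$ throughout, which contributes only lower-order errors bounded by $\mathfrak{A}$ by the continuous dependence of $\sigma$ and $\lambda$ on the renormalized data.
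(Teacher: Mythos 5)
Your proposal captures the essential ingredients of the paper's argument: combine Proposition~\ref{agmax}, a lower bound on $\psi_{i}$ over a fixed compact set, a Poincar\'e inequality for $\psi_{1}/\psi_{2}-A$, and the $L^{2}$-normalizations to control $|1-A|$, then bootstrap to $\mathcal{L}_{\psi}$ via the Schauder/De Giorgi--Nash machinery of Section~\ref{regularitydecay}. The one place where you diverge from the paper is in how you pass from the compact-region $L^{2}$-bound to the global one. You propose to control $\int_{\mathscr{B}\setminus\mathscr{B}_{R}}v^{2}$ separately by re-running the axial and radial decay arguments of Lemmas~\ref{decayax} and~\ref{lem:poly-decay-scalar} for the sourced equation satisfied by $v$. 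The paper instead closes in one step: it writes down the $L_{2}$-residual of $\psi_{1}-A\psi_{2}$, observes that it is $O(\mathfrak{A})$, and integrates by parts once against $\psi_{1}-A\psi_{2}$; the potential $e^{2\lambda_{2}}(m^{2}X_{2}^{-1}+\mu_{2}^{2}-\sigma_{2}^{-2}X_{2}^{-1}(X_{2}\omega_{2}+W_{2}m)^{2})$ is nonnegative outside the fixed set $K_{\minus}$ (which is where the mass-lower-bound is obtained), so the only loss is over $K_{\minus}$, which is already controlled by the Poincar\'e step. This gives the global G\r{a}rding-type inequality $\int_{\mathscr{B}}[|\partial(\psi_{1}-A\psi_{2})|^{2}+(m^{2}/X_{K}^{2}+1)(\psi_{1}-A\psi_{2})^{2}]\rho \leq C(q)\mathfrak{A}^{2}+q\int v^{2}$ in a single stroke, and avoids any separate bookkeeping of the three degenerate regions and the annular transition terms that your split would require. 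Your route is not wrong, but it duplicates work and introduces additional cutoff error terms that the paper's structure sidesteps.

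One small inaccuracy worth flagging: you claim two-sided bounds ``$\psi_{i}\sim\delta\rho^{10}$ near the axis, $\psi_{i}\sim\delta s^{10}$ near $p_{N}$,'' citing Harnack plus Lemmas~\ref{decayax}, \ref{psiregfull}, \ref{lem:poly-decay-scalar}. Those lemmas give only the \emph{upper} bounds $\psi\leq C\delta\rho^{10}$, etc.; a matching lower bound with exponent $10$ is generically false, since $e^{im\phi}\psi$ is smooth across the axis and thus $\psi$ should vanish like $\rho^{|m|}$ with $m$ taken large, so in fact $\psi=o(\rho^{10})$ there. The paper only proves (and only uses) a lower bound for $\psi_{i}$ on the fixed set $K_{\minus}$ well away from the axis, via the reverse mean-value inequality and the variational fact that $\nu<0$ forces positive $L^{2}$-mass on $K_{\minus}$. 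Happily, your own argument never actually invokes the near-axis lower bound, so this does not break the proof; but the claim as stated is unsupported and should be dropped.
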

\begin{proof}Without loss of generality we take $\delta = 1$.

Let \index{Coordinates!$K_{\minus}$}$K_{\minus}$ be an open set which contains the the support of the negative part of both \[e^{2\lambda_2}\left(m^2X_2^{-1} + \mu^2_2-\sigma_2^{-2}X_2^{-1}\left(X_2\omega_2+W_2m\right)^2\right) \quad  \text{and}  \quad e^{2\lambda_1}\left(m^2X_1^{-1} + \mu^2_1-\sigma_1^{-2}X_1^{-1}\left(X_1\omega_1+W_1m\right)^2\right).\]
We may easily arrange for the size of this set to be uniform in $\epsilon$ and for $K_{\minus}$ to be bounded away from $\{\rho = 0\}$.

It follows immediately from the variational proof of Lemma~\ref{itexists} and the fact that $\nu < 0$, that there exists a constant \index{Miscellaneous!$b$}$b > 0$, uniform in $\epsilon > 0$, such that
\[\int_{K_{\minus}}\psi_2^2\, \rho\, d\rho\, dz \geq b,\]
\[\int_{K_{\minus}}\psi_1^2\, \rho\, d\rho\, dz \geq b.\]

It then follows immediately from the reverse mean-value inequality (see Theorem 8.18 of~\cite{giltru}) that
\begin{equation}\label{lower}
\inf_{K_{\minus}}\psi_2^2 \geq b,
\end{equation}
\[\inf_{K_{\minus}}\psi_1^2 \geq b,\]
for a possibly different constant $b$.

Combining~(\ref{lower}) with Proposition~\ref{agmax}, we obtain
\[\int_{K_{\minus}}\left|\partial\left(\frac{\psi_1}{\psi_2}\right)\right|^2\, \rho\, d\rho\, dz \leq C(q)\mathfrak{A}^2 + q\int_{\mathscr{B}}(\psi_1-\psi_2)^2\, \rho\, d\rho\, dz.\]
Set\index{Miscellaneous!$A$}
\[A \doteq \frac{1}{\left|K_{\minus}\right|}\int_{K_{\minus}}\frac{\psi_1}{\psi_2}\, \rho\, d\rho\, dz.\]

A Poincar\'{e} inequality yields
\[\int_{K_{\minus}}\left|\frac{\psi_1}{\psi_2} - A\right|^2\, \rho\, d\rho\, dz \leq C(q)\mathfrak{A}^2+q\int_{\mathscr{B}}(\psi_1-\psi_2)^2\, \rho\, d\rho\, dz.\]
Applying~(\ref{lower}) again (and previous estimates on the scalar field) yields
\begin{equation}\label{negativepart}
\int_{K_{\minus}}\left|\psi_1- A\psi_2\right|^2\, \rho\, d\rho\, dz \leq C(q)\mathfrak{A}^2+q\int_{\mathscr{B}}(\psi_1-\psi_2)^2\, \rho\, d\rho\, dz.
\end{equation}

Next, we observe
\begin{align*}
\int_{\mathscr{B}}&\Big[\Big|\sigma_2^{-1}\partial_{\rho}\left(\sigma_2\partial_{\rho}\left(\psi_1- A\psi_2\right)\right) + \sigma_2^{-1}\partial_z\left(\sigma_2\partial_z\left(\psi_1- A\psi_2\right)\right)
\\ \nonumber &-e^{2\lambda_2}\left(\sigma_2^{-2}X_2^{-1}\left(X_2\omega_2+W_2m\right)^2 - m^2X_2^{-1}-\mu_2^2\right)\left(\psi_1- A\psi_2\right) \Big|\left|\psi_1- A\psi_2\right|\Big] \leq
\\ \nonumber &C(q)\mathfrak{A}^2 + q\int_{\mathscr{B}}(\psi_1-\psi_2)^2\, \rho\, d\rho\, dz.
\end{align*}

Thus, a straightforward elliptic estimate and~(\ref{negativepart}) easily imply
\[\int_{\mathscr{B}}\left[\left(\partial_{\rho}\left(\psi_1- A\psi_2\right)\right)^2 + \left(\partial_z\left(\psi_1- A\psi_2\right)\right)^2 + \left(\frac{m^2}{X_K^2}+1\right)\left(\psi_1- A\psi_2\right)^2\right]\, \rho\, d\rho\, dz \leq\]
 \[C(q)\mathfrak{A}^2+ q\int_{\mathscr{B}}(\psi_1-\psi_2)^2\, \rho\, d\rho\, dz.\]

Using the normalizations~(\ref{normalization}) of $\psi_1$ and $\psi_2$ in \eqref{negativepart} combined with the triangle inequality, we find that $\left|A-1\right| \leq C(q)\mathfrak{A}^2+ q\int_{\mathscr{B}}(\psi_1-\psi_2)^2\, \rho\, d\rho\, dz$. Finally, the remaining estimates may be established in a straightforward manner by repeating the arguments of Section~\ref{regularitydecay} (where we eventually absorb the term $q\int_{\mathscr{B}}(\psi_1-\psi_2)^2\, \rho\, d\rho\, dz$ into the left hand side and then fix $q$).
\end{proof}

Note that putting all of the results of the section we have proved together finishes the proof of Proposition~\ref{solveforpsi}.
\subsection{The Fixed Point}\label{fixthepsinow}
In this section we will use Proposition~\ref{solveforpsi} to solve for $\psi$ and all of the other unknowns in terms of $\mathring{\lambda}$.
\begin{proposition}\label{fixpsi}There exists $\epsilon > 0$ sufficiently small so that given
\[\mathring{\lambda} \in B_{\epsilon}\left(\mathcal{L}_{\lambda}\right),\]
and $\delta \geq 0$ sufficiently small,
we may find
\[\left(\mathring{\sigma},\left(B^{(N)}_{\chi},B^{(S)}_{\chi'},B^{(A)}_z\right),\left(\mathring{X},\mathring{Y}\right),\mathring{\Theta},\psi,\mathring{\mu}^2 \right)\in B_{\epsilon}\left(\mathcal{L}_{\sigma}\right)\times \cdots \times B_{\epsilon}\left(\mathcal{L}_{\psi}\right)\times B_{\epsilon}\left(\mathcal{L}_{\mu^2}\right),\]
which solve~(\ref{renormalsigmaeqn}), (\ref{Btripleeqn}), ~(\ref{X1}) and~(\ref{Y1}),~(\ref{thetaringeqnrho}) and~(\ref{thetaringeqnz}), and~(\ref{theeqn10}) respectively.

Furthermore, $\mathring{\sigma}$, $\left(B^{(N)}_{\chi},B^{(S)}_{\chi'},B^{(A)}_z\right)$, $(\mathring{X},\mathring{Y})$, and $\mathring{\Theta}$  have ``continuous nonlinear dependence on the parameters'' $\mathring\lambda$ and $\delta$ in the following sense: for\index{Metric Quantities!$\mathfrak{b}$}
\[\mathfrak{b}_1 \doteq \left(\mathring{\sigma}^{(1)},\left(B^{(N),(1)}_{\chi},B^{(S),(1)}_{\chi'},B^{(A),(1)}_z\right),\left(\mathring{X}^{(1)},\mathring{Y}^{(1)}\right),\mathring{\Theta}^{(1)}\right)\] and \[\mathfrak{b}_2 \doteq \left(\mathring{\sigma}^{(2)},\left(B^{(N),(2)}_{\chi},B^{(S),(2)}_{\chi'},B^{(A),(2)}_z\right),\left(\mathring{X}^{(2)},\mathring{Y}^{(2)}\right),\mathring{\Theta}^{(2)}\right)\]
corresponding to $(\mathring\lambda_{1},\delta_{1})$ and $(\mathring\lambda_{2},\delta_{2})$, we have that
\[
\Vert \mathfrak{b}_{i} \Vert_{\mathcal{L}_{\sigma}\times\cdots\times\mathcal{L}_{\Theta}} \leq D \left( \delta_{i}^{2} + \Vert \mathring\lambda_{i}\Vert_{\mathcal{L}_{\lambda}}^{2}\right)\]
and
\[ 
\left\Vert \mathfrak{b}_1-\mathfrak{b}_2\right\Vert_{\mathcal{L}_{\sigma}\times\cdots\times\mathcal{L}_{\Theta}}\leq D (\delta_{1}+ \delta_{2} + \Vert\mathring\lambda_{1}\Vert_{\mathcal{L}_{\lambda}}  + \Vert\mathring\lambda_{2}\Vert_{\mathcal{L}_{\lambda}} )\left( |\delta_{1}-\delta_{2}| + \Vert \mathring\lambda_{1}-\mathring \lambda_{2}\Vert_{\mathcal{L}_{\lambda}} \right)
\]

Finally, for $(\psi_{i},\mu_{i}^{2})$, we have
\[\left\Vert \psi_1-\psi_2\right\Vert_{\mathcal{L}_{\psi}} \leq D(\delta_{1}+\delta_{2})\left\Vert\mathring{\lambda}_1 - \mathring{\lambda}_2\right\Vert_{\mathcal{L}_{\lambda}} + D|\delta_{1}-\delta_{2}| \]
\[\left|\mu_1^2-\mu_2^2\right| \leq D\left\Vert \mathring{\lambda}_1 - \mathring{\lambda}_2 \right\Vert_{\mathcal{L}_{\lambda}} + D|\delta_{1}-\delta_{2}|.\]
where $(\psi_1,\mu_1^2)$ and $(\psi_2,\mu_2^2)$ correspond to $\mathring{\lambda}_1$ and $\mathring{\lambda}_2$ respectively. 
\end{proposition}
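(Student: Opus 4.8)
The plan is to perform one further fixed-point argument, now in the pair $(\psi,\mathring\mu^{2})$, with $\mathring\lambda$ and $\delta$ as the remaining external parameters. Chaining Propositions~\ref{fixsig}, \ref{fixb}, \ref{fixxy} and~\ref{fixth} produces a map $\Phi$ sending $(\psi,\mathring\mu^{2},\mathring\lambda)\in B_{\epsilon}(\mathcal{L}_{\psi})\times B_{\epsilon}(\mathcal{L}_{\mu^{2}})\times B_{\epsilon}(\mathcal{L}_{\lambda})$ to the unique solution
\[
\mathfrak{b}\doteq\big(\mathring\sigma,(B^{(N)}_{\chi},B^{(S)}_{\chi'},B^{(A)}_{z}),(\mathring X,\mathring Y),\mathring\Theta\big)\in B_{\epsilon}(\mathcal{L}_{\sigma})\times\cdots\times B_{\epsilon}(\mathcal{L}_{\Theta})
\]
of~\eqref{renormalsigmaeqn}, \eqref{Btripleeqn}, \eqref{X1}--\eqref{Y1}, and~\eqref{thetaringeqnrho}--\eqref{thetaringeqnz} (which, via Corollary~\ref{omconst}, also fixes the constant $\omega$). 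By Lemma~\ref{fixit} this $\Phi$ obeys the quadratic bound $\Vert\mathfrak{b}\Vert\le D\Vert(\psi,\mathring\mu^{2},\mathring\lambda)\Vert^{2}$ together with the matching Lipschitz bound whose prefactor is $D\big(\Vert(\psi_{1},\mathring\mu_{1}^{2},\mathring\lambda_{1})\Vert+\Vert(\psi_{2},\mathring\mu_{2}^{2},\mathring\lambda_{2})\Vert\big)$. Composing with Proposition~\ref{solveforpsi} applied to $\mathfrak{a}\doteq(\mathfrak{b},\mathring\lambda)$ yields a new pair $(\psi',\mu'^{2})$, and setting $\mathring\mu'^{2}\doteq\mu'^{2}-\mu_{K}^{2}$ defines the self-map $\mathcal{T}_{\mathring\lambda,\delta}(\psi,\mathring\mu^{2})\doteq(\psi',\mathring\mu'^{2})$. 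A fixed point of $\mathcal{T}_{\mathring\lambda,\delta}$, together with $\mathfrak{b}=\Phi(\psi,\mathring\mu^{2},\mathring\lambda)$, is precisely a solution of the full system depending only on $\mathring\lambda$ and $\delta$.

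First I would check that $\mathcal{T}_{\mathring\lambda,\delta}$ preserves a small ball. The estimate $\Vert\psi'\Vert_{\mathcal{L}_{\psi}}\le C\delta(1+\Vert\mathfrak{a}\Vert)$ of Proposition~\ref{solveforpsi} gives $\Vert\psi'\Vert\le\epsilon$ once $\delta$ is small. For the mass slot the crucial input is Remark~\ref{defmuk}: running Proposition~\ref{itexists} on exact Kerr (all renormalized quantities and $\delta$ vanishing) gives $\mu^{2}=\mu_{K}^{2}$, i.e.\ $\mathring\mu^{2}=0$ there; quantitatively, Lemma~\ref{itisindeedquitecontinuous} with Kerr as the reference solution yields $|\mathring\mu'^{2}|=|\mu'^{2}-\mu_{K}^{2}|\le C\Vert\mathfrak{a}\Vert\le C(\Vert\mathfrak{b}\Vert+\Vert\mathring\lambda\Vert)$, which is small because $\Vert\mathfrak{b}\Vert$ is quadratically small. (A minor bookkeeping point: $\mathring\mu'^{2}$ a priori lands in a ball whose radius is a fixed multiple of that of $\mathring\lambda$; this is harmless since in the subsequent section $\mathring\lambda$ is itself solved for in terms of $\delta$ and is small, so only a small $\mathring\lambda$-ball is ever used.) For the contraction property I would combine the Lipschitz bound of $\Phi$, whose prefactor is $O(\delta+\epsilon)$ on the relevant balls, with the bounds $\Vert\psi_{1}'-\psi_{2}'\Vert\le D\delta\Vert\mathfrak{a}_{1}-\mathfrak{a}_{2}\Vert$ and $|\mu_{1}'^{2}-\mu_{2}'^{2}|\le D\Vert\mathfrak{a}_{1}-\mathfrak{a}_{2}\Vert$ of Proposition~\ref{solveforpsi}; the resulting Lipschitz constant of $\mathcal{T}_{\mathring\lambda,\delta}$ on $(\psi,\mathring\mu^{2})$ is $O(\delta+\epsilon)<1$ for $\epsilon,\delta$ small, so there is a unique fixed point.

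With the fixed point in hand, the claimed output bound $\Vert\mathfrak{b}_{i}\Vert_{\mathcal{L}_{\sigma}\times\cdots\times\mathcal{L}_{\Theta}}\le D(\delta_{i}^{2}+\Vert\mathring\lambda_{i}\Vert_{\mathcal{L}_{\lambda}}^{2})$ follows by substituting $\Vert\psi\Vert\lesssim\delta$ and $|\mathring\mu^{2}|\lesssim\delta^{2}+\Vert\mathring\lambda\Vert$ back into $\Vert\mathfrak{b}\Vert\le D\Vert(\psi,\mathring\mu^{2},\mathring\lambda)\Vert^{2}$ (using $\delta<1$ to absorb the $\delta^{4}$ and $|\mathring\mu^2|^2$ terms). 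For the continuous dependence on $(\mathring\lambda,\delta)$ I would invoke the standard perturbation estimate for fixed points of a uniform contraction, $\Vert(\psi_{1},\mathring\mu_{1}^{2})-(\psi_{2},\mathring\mu_{2}^{2})\Vert\le(1-L)^{-1}\Vert\mathcal{T}_{\mathring\lambda_{1},\delta_{1}}(\psi_{2},\mathring\mu_{2}^{2})-\mathcal{T}_{\mathring\lambda_{2},\delta_{2}}(\psi_{2},\mathring\mu_{2}^{2})\Vert$, and bound the right-hand side through the $(\mathring\lambda,\delta)$-continuity already recorded in Propositions~\ref{solveforpsi} and~\ref{fixth}. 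Two elementary facts about Proposition~\ref{solveforpsi} make the $\delta$-dependence explicit: $\psi$ depends linearly on $\delta$ for fixed metric data (equation~\eqref{theeqn10} is linear and the normalization is $\delta^{2}$), so $\Vert\psi^{(\delta_{1})}-\psi^{(\delta_{2})}\Vert\lesssim|\delta_{1}-\delta_{2}|$; and the eigenvalue $\mu^{2}$ does not depend on $\delta$ at all. Feeding these, together with the Lipschitz-in-$\mathfrak{a}$ estimates and the small-prefactor Lipschitz estimate of $\Phi$, into the perturbation inequality yields the three asserted bounds.

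The main obstacle is the back-reaction of the Klein--Gordon mass. In contrast to every earlier stage, $\mu^{2}$ is not prescribed but produced as a variational eigenvalue by Proposition~\ref{solveforpsi}, and it re-enters the metric equations through $\mu^{2}=\mathring\mu^{2}+\mu_{K}^{2}$; closing the loop requires both that the $\mathring\mu^{2}$-dependence of $\mathfrak{b}$ be genuinely higher order — which holds because every $\mu^{2}$-carrying source term is multiplied by $\psi^{2}=O(\delta^{2})$, so $\Phi$ contracts in the $\mathring\mu^{2}$-slot — and that the a priori position of $\mathring\mu^{2}$ be pinned near $0$, for which the Kerr normalization $\mu_{K}^{2}$ of Remark~\ref{defmuk} and the eigenvalue continuity of Lemma~\ref{itisindeedquitecontinuous} anchored at Kerr are exactly what is needed. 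The rest is careful propagation of the (already established) quadratic and Lipschitz estimates through the composition of Proposition~\ref{solveforpsi} with $\Phi$.
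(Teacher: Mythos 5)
Your proposal matches the paper's proof essentially step for step: define $\mathfrak{L}$ (your $\Phi$) via Proposition~\ref{fixth}, compose with $\mathfrak{M}$ from Proposition~\ref{solveforpsi} to get the self-map $\mathfrak{T}$ on $(\psi,\mathring\mu^2)$, run the contraction argument with the same quadratic/Lipschitz bounds, and obtain the $\delta$-dependence via the scaling trick $\frac{\delta_1}{\delta_2}\psi_2$ (which is exactly your observation that $\psi$ is linear in $\delta$ for fixed metric data and $\mu^2$ is $\delta$-independent). Your parenthetical caveat — that the bound on $\mathring\mu'^{2}$ has a contribution linear in $\Vert\mathring\lambda\Vert$ coming from $\Vert\mathfrak{a}\Vert$, so the mass slot lands only in an $O(\epsilon)$ ball rather than $O(\epsilon^2)$ — is a careful reading that the paper's displayed inequality glosses over slightly, but as you note it is harmless for ball invariance once constants are arranged; no gap results.
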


\begin{remark}
We note that the $\lambda_{i}$ dependence in the above estimates for $\mathfrak{b}_{i}$ could be improved by a more detailed analysis to
\[
\Vert \mathfrak{b}_{i} \Vert_{\mathcal{L}_{\sigma}\times\cdots\times\mathcal{L}_{\Theta}} \leq D  \delta_{i}^{2} 
\]
and
\[ 
\left\Vert \mathfrak{b}_1-\mathfrak{b}_2\right\Vert_{\mathcal{L}_{\sigma}\times\cdots\times\mathcal{L}_{\Theta}}\leq D (\delta_{1} + \delta_{2})\left( |\delta_{1}-\delta_{2}| + \Vert \mathring\lambda_{1}-\mathring \lambda_{2}\Vert_{\mathcal{L}_{\lambda}} \right).
\]
This is due to the way $\lambda$ enters into the non-linear terms (in every equation besides the scalar field equation, it enters in the form $e^{2\lambda}\psi^{2}$). The estimates proved above suffice here. 
\end{remark}

\begin{proof} For now, we consider $\delta\geq 0$ fixed small. Below, we will consider the dependence of the solutions on $\delta$. 

First we apply Proposition~\ref{fixth} to define a map
\[\mathfrak{L} : B_{\epsilon}\left(\mathcal{L}_{\psi}\right) \times B_{\epsilon}\left(\mathcal{L}_{\mu^2}\right)\times B_{\epsilon}\left(\mathcal{L}_{\lambda}\right) \to B_{\epsilon}\left(\mathcal{L}_{\sigma}\right)\times \cdots \times B_{\epsilon}\left(\mathcal{L}_{\Theta}\right),\]
which solves for
\[\left(\mathring{\sigma},\left(B^{(N)}_{\chi},B^{(S)}_{\chi'},B^{(A)}_z\right),\left(\mathring{X},\mathring{Y}\right),\mathring{\Theta}\right),\]
in terms of
\[\left(\psi,\mathring{\mu}^2,\mathring{\lambda}\right).\]

By Proposition~\ref{fixth}, we have that
\[
\Vert \mathfrak{L}(\psi,\mathring\mu^{2},\mathring\lambda)\Vert_{\mathcal{L}_{\sigma}\times\dots\mathcal{L}_{\Theta}} \leq D \left( \Vert \psi \Vert_{\mathcal{L}_{\psi}}^{2} + \mathring \mu^{4} + \Vert \mathring\lambda \Vert_{\mathcal{L}_{\lambda}}^{2}\right)  
\]
and 
\begin{equation}\label{eqn:diff-data-from-scalar-field-lambda}
\Vert \mathfrak{L}(\psi_{1},\mathring\mu^{2}_{1},\mathring\lambda_{1}) - \mathfrak{L}(\psi_{2},\mathring\mu^{2}_{2},\mathring\lambda_{2}) \Vert_{\mathcal{L}_{\sigma}\times\dots\mathcal{L}_{\Theta}} \leq D \epsilon  \left( \Vert \psi_{1} -  \psi_{2} \Vert_{\mathcal{L}_{\psi}} + |\mathring \mu^{2}_{1} - \mathring\mu_{2}^{2}| + \Vert \mathring\lambda_{1}-  \mathring\lambda_{2} \Vert_{\mathcal{L}_{\lambda}} \right).
\end{equation}
We will use below that this holds true even if we allowed $\mathfrak{L}$ to depend on $\delta$ (since changing $\delta$ does not directly affect any of the unknowns besides $\psi$).

Now, using Proposition~\ref{solveforpsi} we define a map
\[\mathfrak{M} : B_{\epsilon}\left(\mathcal{L}_{\sigma}\right)\times \cdots \times B_{\epsilon}\left(\mathcal{L}_{\lambda} \right) \to B_{\epsilon}\left(\mathcal{L}_{\psi}\right) \times B_{\epsilon}\left(\mathcal{L}_{\mu^2}\right),\]
which takes a set of renormalized quantities and solves for the scalar field $\psi$ and Klein--Gordon mass $\mu^2$. Proposition~\ref{solveforpsi} implies that if $\mathfrak{M}(\mathfrak{a}_{i}) = ( \psi_{i}^{\mathfrak{M}}, (\mathring \mu^{2}_{i})^{\mathfrak{M}})$ for $i=1,2$, then
\[
\Vert \psi_{i}^{\mathfrak{M}}\Vert_{\mathcal{L}_{\psi}} \leq C \delta, \qquad  ( \mathring \mu_{i}^{2})^{\mathfrak{M}} \leq D \Vert \mathfrak{a}_{i}\Vert_{\mathcal{L}_{\sigma}\times\dots\times \mathcal{L}_{\lambda}}
\]
and
\[
\Vert \psi_{1}^{\mathfrak{M}}- \psi_{2}^{\mathfrak{M}}\Vert_{\mathcal{L}_{\psi}} \leq C \delta  \Vert \mathfrak{a}_{1} - \mathfrak{a}_{2}\Vert_{\mathcal{L}_{\sigma}\times\dots\times \mathcal{L}_{\lambda}} , \qquad  |(\mathring \mu_{1}^{2})^{\mathfrak{M}} - (\mathring \mu_{2}^{2})^{\mathfrak{M}}| \leq D \Vert \mathfrak{a}_{1} - \mathfrak{a}_{2}\Vert_{\mathcal{L}_{\sigma}\times\dots\times \mathcal{L}_{\lambda}}
\]

 Finally, we define
\[\mathfrak{T} : B_{\epsilon}\left(\mathcal{L}_{\psi}\right) \times B_{\epsilon}\left(\mathcal{L}_{\mu^2}\right) \times B_{\epsilon}\left(\mathcal{L}_{\lambda}\right) \to B_{\epsilon}\left(\mathcal{L}_{\psi}\right) \times B_{\epsilon}\left(\mathcal{L}_{\mu^2}\right)\]
by $\mathfrak{T} = \mathfrak{M}\circ\left(\mathfrak{L},\mathring{\lambda}\right)$. 

Note that if $\mathfrak{T}(\psi_{i},\mathring \mu_{i},\mathring\lambda_{i}) = (\psi_{i}^{\mathfrak{T}}, (\mathring\mu_{i}^{2})^{\mathfrak{T}})$ for $i=1,2$, by combining the estimates above we conclude that 
\[
\Vert \psi_{i}^{\mathfrak{T}}\Vert_{\mathcal{L}_{\psi}} \leq C \delta, \qquad  ( \mathring \mu_{i}^{2})^{\mathfrak{T}} \leq D \left( \Vert \psi_{i} \Vert_{\mathcal{L}_{\psi}}^{2} + (\mathring \mu_{i}^{2})^{2} + \Vert \mathring\lambda_{i} \Vert_{\mathcal{L}_{\lambda}}^{2}\right) \leq D \epsilon^{2}
\]
as well as
\[
\Vert \psi_{1}^{\mathfrak{T}}- \psi_{2}^{\mathfrak{T}}\Vert_{\mathcal{L}_{\psi}} \leq C \delta \epsilon \left( \Vert \psi_{1} -  \psi_{2} \Vert_{\mathcal{L}_{\psi}} + |\mathring \mu^{2}_{1} - \mathring\mu_{2}^{2}| + \Vert \mathring\lambda_{1}-  \mathring\lambda_{2} \Vert_{\mathcal{L}_{\lambda}} \right)
\]
and
\[
|(\mathring \mu_{1}^{2})^{\mathfrak{M}} - (\mathring \mu_{2}^{2})^{\mathfrak{M}}| \leq D \epsilon \left( \Vert \psi_{1} -  \psi_{2} \Vert_{\mathcal{L}_{\psi}} + |\mathring \mu^{2}_{1} - \mathring\mu_{2}^{2}| + \Vert \mathring\lambda_{1}-  \mathring\lambda_{2} \Vert_{\mathcal{L}_{\lambda}} \right).
\]
Taking first $\epsilon>0$ sufficiently small, we can arrange that $\mathring \mu^{2} \in B_{\epsilon}(\mathcal{L}_{\mu^{2}})$ and so that the constants in the previous two equations (i.e., the contraction map estimate) are strictly less than $1$ (for $\delta \leq 1$). Now, choosing $\delta$ small enough, we can guarantee that $\psi^{\mathfrak{T}} \in B_{\epsilon}(\mathcal{L}_{\psi})$. Thus, for $\mathring\lambda_{1} = \mathring \lambda_{2}\in B_{\epsilon}(\mathcal{L}_{\lambda})$, the map $\mathfrak{T}(\cdot,\mathring\lambda)$ is a contraction map and thus has a unique fixed point. 

The fixed point defines $\psi$ and $\mathring \mu^{2}$ (depending on $\delta,\mathring\lambda$). Then using the map $\mathfrak{L}$ above, we may solve for the remaining unknowns. Now, the asserted estimates with $\delta_{1}=\delta_{2}$ all follow immediately from those established above. 

It thus remains to establish the dependence of the estimates on $\delta$. Consider $(\mathring \lambda_{i},\delta_{i})$ with associated solutions $\mathfrak{b}_{i},\psi_{i},\mu_{i}^{2}$ for $i=1,2$. We may assume that $\delta_{2} \not = 0$ (if $\delta_{1}=\delta_{2}=0$ the estimates follow trivially). Note that $\frac{\delta_{1}}{\delta_{2}}\psi_{2}$ solves the scalar field equation with $\delta=\delta_{1}$ and data $\mathfrak{b}_{2}$. Thus, Proposition~\ref{solveforpsi} implies that 
\[
\left\Vert \psi_{1}- \frac{\delta_{1}}{\delta_{2}} \psi_{2}\right\Vert_{\mathcal{L}_{\psi}} \leq D \delta_{1} \left( \Vert \mathfrak{b}_{1} - \mathfrak{b}_{2}\Vert_{\mathcal{L}_{\sigma}\times \dots \mathcal{L}_{\Theta}} + \Vert \mathring \lambda_{1} - \mathring \lambda_{2}\Vert_{\mathcal{L}_{\lambda}}\right)
\]
and
\[
|\mathring\mu_{1}^{2} - \mathring \mu_{2}^{2}| \leq D \left( \Vert \mathfrak{b}_{1} - \mathfrak{b}_{2}\Vert_{\mathcal{L}_{\sigma}\times \dots \mathcal{L}_{\Theta}} + \Vert \mathring \lambda_{1} - \mathring \lambda_{2}\Vert_{\mathcal{L}_{\lambda}}\right)
\]
We can remove the fraction in the $\psi$ bound as follows:
\begin{align*}
\Vert \psi_{1} - \psi_{2} \Vert_{\mathcal{L}_{\psi}} & \leq \left\Vert \psi_{1}- \frac{\delta_{1}}{\delta_{2}} \psi_{2}\right\Vert_{\mathcal{L}_{\psi}} + \delta_{2}^{-1}|\delta_{2}-\delta_{1}| \Vert \psi_{2}\Vert_{\mathcal{L}_{\psi}}\\
& \leq D \delta_{1} \left( \Vert \mathfrak{b}_{1} - \mathfrak{b}_{2}\Vert_{\mathcal{L}_{\sigma}\times \dots \mathcal{L}_{\Theta}} + \Vert \mathring \lambda_{1} - \mathring \lambda_{2}\Vert_{\mathcal{L}_{\lambda}}\right) + D |\delta_{1}-\delta_{2}|. 
\end{align*}
Combined with \eqref{eqn:diff-data-from-scalar-field-lambda} (and after taking $\epsilon$ smaller if necessary, to absorb the $\psi,\mu^{2}$ terms back into the left hand side), we conclude the desired estimates for $\psi,\mu^{2}$. Finally, the bounds for the remaining quantities follow by combining the bounds on $\psi,\mu^{2}$ with bounds with the ``nonlinear dependence on parameters'' proven in Proposition~\ref{fixth}. 
\end{proof}
\section{Solving for $\mathring{\lambda}$}\label{asectionforlambda}
The goal of this section is to prove the following proposition.
\begin{proposition}\label{fixlam}Let $\epsilon > 0$ be sufficiently small. Then, given $\delta \geq 0$ sufficiently small,
we may find
\[\left(\mathring{\sigma},\left(B^{(N)}_{\chi},B^{(S)}_{\chi'},B^{(A)}_z\right),\left(\mathring{X},\mathring{Y}\right),\mathring{\Theta},\psi,\mathring{\mu}^2 ,\mathring{\lambda}\right)\in B_{\epsilon}\left(\mathcal{L}_{\sigma}\right)\times \cdots \times B_{\epsilon}\left(\mathcal{L}_{\psi}\right)\times B_{\epsilon}\left(\mathcal{L}_{\lambda}\right),\]
which solve~(\ref{renormalsigmaeqn}), (\ref{Btripleeqn}), ~(\ref{X1}) and~(\ref{Y1}),~(\ref{thetaringeqnrho}) and~(\ref{thetaringeqnz}),~(\ref{theeqn10}), and~\eqref{lamdef-rho} and~\eqref{lambdef-z}.

Furthermore, $\mathring{\sigma}$, $\left(B^{(N)}_{\chi},B^{(S)}_{\chi'},B^{(A)}_z\right)$, $(\mathring{X},\mathring{Y})$, $\mathring{\Theta}$, $\mathring{\lambda}$  have ``continuous nonlinear dependence on the parameter'' $\delta$, in the sense of Lemma~\ref{fixit} and Remark~\ref{contnonlindep}, and, the scalar field $\psi$ and Klein--Gordon mass $\mu^2$ are Lipschitz continuous in $\delta$. 
\end{proposition}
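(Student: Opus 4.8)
The plan is to close the system by applying the linearization fixed-point framework of Lemma~\ref{fixit}, with $\mathcal{L} = \mathcal{L}_\lambda$, with $\tilde{\mathcal{L}} = \mathcal{N}_\lambda$ a suitable Banach space of closed $1$-forms (components in $\hat C^{0,\alpha_0}(\overline{\mathscr{B}})$ with appropriate decay at infinity and vanishing $\rho$-component along the axis), and with the ``genuinely nonlinear parameter'' $q$ taken to be $\delta \in \mathbb{R}$. After shrinking $\epsilon$ so that Proposition~\ref{fixpsi} applies on $B_\epsilon(\mathcal{L}_\lambda)\times B_\epsilon(\mathbb{R})$, composing its solution map with substitution into the right-hand side $N_\lambda$ of~\eqref{lamdef-rho}--\eqref{lambdef-z} defines a map $(\mathring\lambda,\delta)\mapsto N_\lambda(\mathring\lambda,\delta)\in\tilde{\mathcal{L}}$, and we seek a fixed point of $\mathfrak{T}(\mathring\lambda)\doteq L_\lambda^{-1}(N_\lambda(\mathring\lambda,\delta))$, where $L_\lambda = d$ acting on functions. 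The linear step is the trivial one, identical in spirit to the linear estimates already used for $B$ and for $\mathring\Theta$: a closed $1$-form in $\mathcal{N}_\lambda$ is integrated (in the $\rho,z$ chart, then reexpressed in $(s,\chi)$ and $(s',\chi')$ coordinates near $p_N, p_S$) to produce an $\hat C^{1,\alpha_0}(\overline{\mathscr{B}}) = \mathcal{L}_\lambda$ function with norm controlled by that of the $1$-form.

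There are two substantive points. The first is that $N_\lambda$ is indeed a \emph{closed} $1$-form, so that $d\mathring\lambda = N_\lambda$ is solvable. This is precisely the classical integrability property of the Carter--Robinson reduction recalled in Theorem~\ref{thm:HBH-geometric-main}(5): the compatibility condition for the first-order $\lambda$-system follows from the remaining field equations, here the equations for $\sigma$, $X$, $Y$ and the scalar field, all of which hold by Proposition~\ref{fixpsi}. It is cleanest to verify $dN_\lambda = 0$ at the level of the unrenormalized metric data $(X,W,\theta,\sigma,\lambda)$, where it is the identity of \cite{HBH:geometric} (cf.\ the vacuum version in \cite{weinstein}), and then to observe that undoing the renormalizations does not affect closedness. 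One also checks a posteriori that the second-order (Liouville) equation of Theorem~\ref{thm:HBH-geometric-main}(5) holds: it is recovered by taking the divergence of~\eqref{lamdef-rho}--\eqref{lambdef-z} and substituting the equation for $\sigma$, so that the full converse of Theorem~\ref{thm:HBH-geometric-main} applies to the solution produced here.

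The second, and the genuine obstacle, is the quadratic estimate $\|N_\lambda(\mathring\lambda,\delta)\|_{\mathcal{N}_\lambda}\le D(\delta^2 + \|\mathring\lambda\|_{\mathcal{L}_\lambda}^2)$ together with its two-variable Lipschitz analogue. The structural reason this holds --- exactly as in Proposition~\ref{modelprop} --- is that $N_\lambda$ is quadratically small: the scalar-field terms in $\alpha_\rho,\alpha_z$ (see~\eqref{eq:alpha-rho-defn}--\eqref{eq:alpha-z-defn}) are quadratic in $\psi$ and hence $O(\delta^2)$ by $\|\psi\|_{\mathcal{L}_\psi}\lesssim\delta$ from Proposition~\ref{fixpsi}, while the remaining contributions are the differences $\alpha_\rho - (\alpha_K)_\rho$, $\alpha_z - (\alpha_K)_z$ and the term $-\tfrac12\partial_\rho\log(1+\mathring X)$ (resp.\ its $z$-analogue), which all vanish on the exact Kerr data and are therefore controlled by $\|\mathring\sigma\|_{\mathcal{L}_\sigma} + \|(B^{(N)}_\chi,B^{(S)}_{\chi'},B^{(A)}_z)\|_{\mathcal{L}_B} + \|(\mathring X,\mathring Y)\|_{\mathcal{L}_{X,Y}} + \|\mathring\Theta\|_{\mathcal{L}_\Theta}$, which is itself $\lesssim \delta^2 + \|\mathring\lambda\|_{\mathcal{L}_\lambda}^2$ again by Proposition~\ref{fixpsi}. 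The delicate part is that $\alpha_\rho$ carries the factor $X^{-2}\sim\rho^{-2}$ near the axis, while $(\alpha_K)_\rho$ and $\partial_\rho\log X_K$ are individually $\sim\rho^{-1}$ there; the renormalizations $\mathring\lambda = \lambda - \lambda_K$ and $\mathring X = X_K^{-1}(X-X_K)$ are chosen precisely so that these singular pieces cancel, leaving expressions of schematic form $\rho^{-1}\partial_\rho\mathring\sigma$ and $\rho^{-1}\partial_\rho\mathring X$ (and likewise $s^{-1}\partial_s(\cdot)$ near the poles), which are bounded in the relevant Hölder spaces by Lemma~\ref{firstdecay} upon passing to $\mathbb{R}^3$ (resp.\ $\mathbb{R}^4$). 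The factor $(|\partial\sigma|^2)^{-1}$ in $\alpha$ is harmless since $|\partial\sigma|\ge\tfrac12$ for $\epsilon$ small, and the Kerr quantities $\partial Y_K$ are handled using Lemma~\ref{somestuffthatisuseful}; the $\mathring\lambda$- and $\delta$-Lipschitz dependence of the intermediate unknowns needed for the Lipschitz estimate is exactly the ``continuous nonlinear dependence'' conclusion of Proposition~\ref{fixpsi}.

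With these inputs, Lemma~\ref{fixit} yields, for $\epsilon$ and then $\delta$ small, the fixed point $\mathring\lambda = \mathring\lambda(\delta) \in B_\epsilon(\mathcal{L}_\lambda)$, and feeding it back through Proposition~\ref{fixpsi} produces the full tuple solving all of the listed equations. The quantitative claims are then immediate: the fixed-point identity gives $\|\mathring\lambda\|_{\mathcal{L}_\lambda} = \|L_\lambda^{-1}N_\lambda(\mathring\lambda,\delta)\|_{\mathcal{L}_\lambda}\le D(\delta^2 + \|\mathring\lambda\|_{\mathcal{L}_\lambda}^2)$, and absorbing the quadratic term (using $\|\mathring\lambda\|_{\mathcal{L}_\lambda}\le\epsilon$) yields $\|\mathring\lambda(\delta)\|_{\mathcal{L}_\lambda}\le D\delta^2$; the ``continuous nonlinear dependence on $\delta$'' for $\mathring\lambda$ --- hence, via Proposition~\ref{fixpsi}, for $\mathring\sigma$, $(B^{(N)}_\chi,B^{(S)}_{\chi'},B^{(A)}_z)$, $(\mathring X,\mathring Y)$, $\mathring\Theta$ --- follows from the Lipschitz estimate for $N_\lambda$ and the standard fixed-point comparison in the sense of Remark~\ref{contnonlindep}; and the Lipschitz-in-$\delta$ dependence of $\psi$ and $\mu^2$ follows from the corresponding Lipschitz bounds of Proposition~\ref{fixpsi}, the non-quadratic rate reflecting that $\psi$ itself is $O(\delta)$.
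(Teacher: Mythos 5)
Your proposal has a genuine gap at what is in fact the crux of the argument: the closedness of the right-hand side of the first-order system~\eqref{lamdef-rho}--\eqref{lambdef-z}.

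You assert that ``$N_\lambda$ is indeed a closed $1$-form'' and that ``the compatibility condition for the first-order $\lambda$-system follows from the remaining field equations.'' This is false at this stage of the argument, and the paper says so explicitly at the start of Section~\ref{asectionforlambda}: one is \emph{unable} to directly check $d\alpha=0$. What the Bianchi-type identity from \cite{HBH:geometric} (Lemma~\ref{bianch-eqns-satisfied}) actually gives, assuming only that $\mathring\sigma, B, (\mathring X,\mathring Y),\mathring\Theta,\psi,\mathring\mu^2$ solve their equations for a \emph{given} $\mathring\lambda$, is
\[
d\alpha = \beta_{2}\wedge\left(d\lambda - \alpha - \tfrac12\, d\log X\right),
\]
where $\beta_2$ is built out of the second and third derivatives of $\sigma$ and vanishes only in the vacuum case $\sigma\equiv\rho$. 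The right-hand side of this identity vanishes precisely when $\lambda$ already solves the first-order equation you are trying to integrate — so ``closedness follows from the other field equations'' is circular here. Your appeal to ``the classical integrability property of the Carter--Robinson reduction'' implicitly uses $\beta_2=0$, which holds for vacuum but not for the coupled Einstein--Klein--Gordon system, where $\sigma$ deviates from $\rho$ by $O(\delta^2)$. Your secondary observation about recovering the Liouville equation from the divergence of the first-order system is a distinct (and weaker) compatibility condition and does not repair the issue.

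Because of this, the fixed-point scheme you propose — defining $\mathfrak{T} = L_\lambda^{-1}\circ N_\lambda$ with $L_\lambda = d$ acting into a Banach space $\mathcal{N}_\lambda$ of closed $1$-forms — cannot even be set up: $N_\lambda(\mathring\lambda,\delta)$ does not land in that space. The paper's resolution is to solve a deliberately \emph{modified} system \eqref{lamdef-rho-doable}--\eqref{lambdef-z-doable}, in which only the $\rho$-component is used directly to define the fixed-point map $\mathfrak{L}$ (by integrating from $\rho=\infty$), while the $z$-component is augmented by the integral correction term involving $\beta_2$; this makes the modified system automatically consistent. After obtaining the fixed point, one then shows \emph{a posteriori} (Section~9.3) that the auxiliary quantity $f(\rho) \doteq \partial_z\mathring\lambda - (\alpha_z - (\alpha_K)_z) - \tfrac12\partial_z\log(1+\mathring X)$ satisfies a Volterra integral inequality with kernel $(\beta_2)_\rho = O(r^{-3})$, whose iterated solution is forced to vanish by a Gronwall/Stirling argument, thereby recovering the original $z$-equation~\eqref{lambdef-z}. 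Your treatment of the quadratic estimate — the role of $\|\psi\|_{\mathcal{L}_\psi}\lesssim\delta$, the cancellation of the $\rho^{-1}$-singularities against $\alpha_K$ and $\log X_K$, and the use of Lemma~\ref{firstdecay} near the axis and poles — is essentially correct and matches what is done in Lemma~\ref{lemm:lam-bound-mathfrakL}, but the missing modified-system-plus-Gronwall step is indispensable.
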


\subsection{A modified equation for $\mathring{\lambda}$}

We would like to solve for $\mathring \lambda$ by integrating the first order equation $d\lambda = \alpha - \frac 12 d\log X$, cf.\ \eqref{lamdef-rho} and \eqref{lambdef-z}. However, (at this stage of the argument) we are unable to directly check that $d\alpha = 0$ (see Theorem 1.2 in \cite{HBH:geometric}). Instead, we will first solve the following system
\begin{align}
\label{lamdef-rho-doable} \partial_{\rho}\mathring\lambda & = \alpha_{\rho} - (\alpha_{K})_{\rho} - \frac 12 \partial_{\rho}\log (1+\mathring X) \\
 \label{lambdef-z-doable} \partial_{z}\mathring\lambda & = \alpha_{z} - (\alpha_{K})_{z} - \frac 12 \partial_{z} \log(1+\mathring X)- \int_{\rho}^{\infty} \left(\beta_{2}\wedge \left(d\mathring\lambda - (\alpha-(\alpha_{K}) - \frac 12 d \log (1+\mathring X) \right)\right)_{\rho,z}(\tau,z) d\tau,
\end{align}
\index{Metric Quantities!$\beta_{2}$}where the $1$-form $\beta_{2} = (\beta_{2})_{\rho}d\rho + (\beta_{2})_{z}dz$
\begin{align*}
\left( (\partial_{\rho}\sigma)^{2} + (\partial_{z}\sigma)^{2} \right) (\beta_{2})_{\rho}  & = \frac{1}{2} \left((\partial_{z}\sigma)(\partial^{3}_{\rho,z,z}\sigma - \partial^{3}_{\rho}\sigma) +(\partial_{\rho}\sigma)(\partial^{3}_{z}\sigma - \partial^{3}_{\rho,\rho,z}\sigma + 2\partial^{2}_{\rho}\sigma+ 2\partial^{2}_{z}\sigma)\right), \\
\left( (\partial_{\rho}\sigma)^{2} + (\partial_{z}\sigma)^{2} \right) (\beta_{2})_{z}  & = \frac{1}{2} \left((\partial_{\rho}\sigma)(\partial^{3}_{\rho,\rho,z}\sigma - \partial^{3}_{z}\sigma) +(\partial_{z}\sigma)(\partial^{3}_{\rho}\sigma - \partial^{3}_{\rho,z,z}\sigma + 2\partial^{2}_{\rho}\sigma+ 2\partial^{2}_{z}\sigma)\right)
\end{align*}
comes from the compatibility condition, Theorem 1.2 in \cite{HBH:geometric}.

\subsection{Integrating the first order equations} The goal of this subsection is to show that after taking $\delta > 0$ sufficiently small, there is $\mathring \lambda \in B_{\epsilon}(\mathcal{L}_{\lambda})$ so that if we use Proposition~\ref{fixpsi} to solve for
\[
\left( \mathring\sigma, B,(\mathring X,\mathring Y),\mathring \Theta,\psi,\mathring{\mu}^2\right),
\]
then $\mathring\lambda$ solves \eqref{lamdef-rho-doable} and \eqref{lambdef-z-doable}. We will always assume that we have taken $\delta>0$ sufficiently small so that $|\partial\sigma| \not = 0$ on $\mathscr{B}$. The following lemma is an immediate consequence of Theorem 1.2 in \cite{HBH:geometric}.
\begin{lemma}\label{bianch-eqns-satisfied}
Suppose that for $\mathring\lambda$ fixed,
\[
\left( \mathring\sigma, B,(\mathring X,\mathring Y),\mathring \Theta,\psi,\mathring{\mu}^2\right),
\]
solve their respective equations. Then the $1$-form $\alpha$ defined in Section \ref{subsubsection-mathring-lambda} satisfies
\[
d\alpha = \beta_{2}\wedge \left(d\lambda -\alpha - \frac 12 d\log X\right),
\]
where $\beta_{2}$ is defined above.
\end{lemma}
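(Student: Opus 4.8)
The statement of Lemma~\ref{bianch-eqns-satisfied} is itself a direct quotation of the compatibility identity recorded in Theorem 1.2 of \cite{HBH:geometric}, so the real content of the proof is to verify that the hypotheses of that theorem are met by the quantities we have constructed. The plan is as follows. First I would recall that Theorem 1.2 in \cite{HBH:geometric} asserts precisely that whenever the metric data $(X,W,\theta,\sigma,\lambda)$ together with the reduced scalar field $\psi$ satisfy the equations (1)--(4) and (6) of Theorem~\ref{thm:HBH-geometric-main} (i.e.\ everything \emph{except} the pair of first-order equations for $\lambda$ itself), then the $1$-form $\alpha$ built from $(X,\sigma,\psi,\theta)$ via the formulas \eqref{eq:alpha-rho-defn} and \eqref{eq:alpha-z-defn} is not closed in general, but its exterior derivative is given by the stated wedge product with $\beta_{2}$. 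So the task reduces to checking that the hypothesis ``equations (1)--(4) and (6) hold'' is in force.

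Second, I would check exactly this. Under the standing assumption that $\mathring\lambda$ is fixed and that $(\mathring\sigma, B, (\mathring X,\mathring Y), \mathring\Theta, \psi, \mathring\mu^{2})$ solve their respective renormalized equations, I unwind the renormalizations (Remark~\ref{unrenormalize}) to recover $(X,Y,W,\sigma,\lambda,\mu^{2})$ and set $\theta \doteq dY + B$. Then: equation~\eqref{renormalsigmaeqn} for $\mathring\sigma$ is, after unrenormalizing, precisely equation (4) for $\sigma$ in Theorem~\ref{thm:HBH-geometric-main}; the equations~\eqref{X1} for $\mathring X$ together with \eqref{Y1} for $\mathring Y$ (plus the relation $\theta = dY + B$ and the definition \eqref{dthatB} of $dB$, as noted in the remark after Definition~\ref{defB}) reassemble into equation (1) for $X$ and the two equations (3) for $\theta$ — here one uses that $\theta - B = dY$ is closed so $d\theta = dB = 2\sigma^{-1}e^{2\lambda}(X\omega m + Wm^{2})\psi^{2}\,d\rho\wedge dz$, matching (3); the equations~\eqref{thetaringeqnrho}--\eqref{thetaringeqnz} for $\mathring\Theta$ unrenormalize to equation (2) for $W$; and equation~\eqref{theeqn10} is verbatim equation (6) for $\psi$. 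One subtlety worth a sentence: at this stage of the argument we have replaced $\omega$ by the \emph{constant} $\omega$ — by Corollary~\ref{omconst} the function $\mathring\omega$ is constant along $\mathscr{H}$ and hence (by the extension construction and $\rho$-independence) constant on all of $\overline{\mathscr{B}}$, so writing $\omega$ in place of $\mathring\omega$ is legitimate and the equations above are genuinely the ones in Theorem~\ref{thm:HBH-geometric-main}. I would also note that we have arranged $\delta > 0$ small enough that $|\partial\sigma|\neq 0$ on $\mathscr{B}$ (since $\sigma = \rho(1+\mathring\sigma)$ is a small perturbation of $\rho$ and $|\partial\rho|\equiv 1$), so $\alpha$ is well-defined pointwise on $\mathscr{B}$ and the computation of $d\alpha$ in \cite{HBH:geometric} applies.

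Third, with all of equations (1)--(4), (6) verified, I invoke Theorem 1.2 of \cite{HBH:geometric} directly to conclude
\[
d\alpha = \beta_{2}\wedge\left(d\lambda - \alpha - \tfrac12 d\log X\right),
\]
with $\beta_{2}$ the $1$-form given by the displayed formulas, which is exactly the assertion of the lemma. The main obstacle here is essentially bookkeeping rather than analysis: one must be careful that the renormalizations $\mathring X = X_{K}^{-1}(X - X_{K})$, $\mathring Y = X_{K}^{-1}(Y - Y_{K})$, $\mathring\Theta = X^{-1}W - X_{K}^{-1}W_{K}$, $\mathring\sigma = (\sigma-\rho)/\rho$ are undone consistently and that the gauge split $\theta = dY + B$ is accounted for when matching the $\theta$-equations — but since the renormalized equations in Section~\ref{subsec:XY-renorm-eqn} and the surrounding sections were \emph{derived} precisely by substituting these renormalizations into the equations of Theorem~\ref{thm:HBH-geometric-main}, this matching is automatic and requires no new computation. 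I would therefore present the proof as: ``Unrenormalizing, the hypotheses of Theorem 1.2 of \cite{HBH:geometric} are exactly the equations solved in Propositions~\ref{fixsig}, \ref{fixb}, \ref{fixxy}, \ref{fixth}, and \ref{fixpsi}, together with $|\partial\sigma|\neq 0$ on $\mathscr{B}$ and the constancy of $\omega$ from Corollary~\ref{omconst}; the conclusion is then immediate.''
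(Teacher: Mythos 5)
Your proposal is correct and follows exactly the route the paper takes: the paper's entire justification is the one-line remark that the lemma ``is an immediate consequence of Theorem 1.2 in \cite{HBH:geometric},'' and your proposal simply spells out the bookkeeping (verifying that unrenormalizing yields equations (1)--(4), (6) of Theorem~\ref{thm:HBH-geometric-main}, that $\omega$ is constant via Corollary~\ref{omconst}, and that $|\partial\sigma|\neq 0$) that the paper leaves implicit.
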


Now, for $\epsilon > 0$ fixed sufficiently small and $\delta\geq0$ sufficiently small, we define a map
\[
\mathfrak{L}(\cdot, \delta): B_{\epsilon}(\mathcal{L}_{\lambda}) \to B_{\epsilon}(\mathcal{L}_{\lambda})
\]
as follows: given $\mathring \lambda \in B_{\epsilon}(\mathcal{L}_{\lambda})$, we use Proposition \ref{fixpsi} to solve for the other parameters
\[
\left( \mathring\sigma, B,(\mathring X,\mathring Y),\mathring \Theta,\psi,\mathring{\mu}^2\right),
\]
then we integrate \eqref{lamdef-rho-doable} to define \index{Metric Quantities!$\mathfrak{L}(\mathring\lambda)$}$\mathfrak{L}(\mathring\lambda)$ (which will be a function of $\rho$ and $z$) by the expression
\begin{equation}\label{def-lambda-fp}
\mathfrak{L}(\mathring\lambda,\delta) \doteq -\int_{\rho}^{\infty}\left(\alpha_{\rho} - (\alpha_{K})_{\rho} - \frac12 \partial_{\rho}\log(1+\mathring X) \right)(\tau,z) d\tau.
\end{equation}
The next lemma checks, among other things, that this integral converges as long as $\epsilon>0$ is sufficiently small.

\begin{lemma}\label{lemm:lam-bound-mathfrakL}
For $\epsilon>0$ and $\delta\geq0$ sufficiently small, the integral \eqref{def-lambda-fp} converges, and the resulting expression solves both \eqref{lamdef-rho-doable} and  \eqref{lambdef-z-doable} with $\partial_{\rho}\mathring{\lambda}$ and $\partial_z\mathring{\lambda}$ replaced by $\partial_{\rho}\mathfrak{L}\left(\mathring{\lambda}\right)$ and $\partial_z\mathfrak{L}\left(\mathring{\lambda}\right)$ respectively. Moreover, we have the bounds
\[
\Vert \mathfrak{L}(\mathring\lambda,\delta) \Vert_{\mathcal{L}_{\lambda}} \leq C \left( \delta^{2}+ \Vert \mathring\lambda\Vert_{\mathcal{L}_{\lambda}}^{2} \right)
\]
and 
\[
\Vert \mathfrak{L}(\mathring\lambda_{1},\delta_{1})-\mathfrak{L}(\mathring\lambda_{2},\delta_{2})\Vert_{\mathcal{L}_{\lambda}}\leq C( \delta_{1} + \delta_{2} + \Vert \mathring\lambda_{1}\Vert_{\mathcal{L}_{\lambda}}+ \Vert \mathring\lambda_{2}\Vert_{\mathcal{L}_{\lambda}}) \left( |\delta_{1} - \delta_{2}| + \Vert \mathring\lambda_{1}-\mathring\lambda_{2}\Vert_{\mathcal{L}_{\lambda}} \right).
\]
\end{lemma}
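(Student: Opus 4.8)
The plan is to read the three assertions — convergence of the defining integral \eqref{def-lambda-fp}, the fact that $\mathfrak{L}(\mathring\lambda,\delta)$ solves \eqref{lamdef-rho-doable}--\eqref{lambdef-z-doable}, and the two bounds in $\mathcal{L}_{\lambda}=\hat C^{1,\alpha_{0}}(\overline{\mathscr{B}})$ — directly off the structure of $\alpha$, $\beta_{2}$ and the ``continuous nonlinear dependence on parameters'' supplied by Proposition~\ref{fixpsi}. The organizing observation is that every term of the integrand $\alpha_{\rho}-(\alpha_{K})_{\rho}-\tfrac12\partial_{\rho}\log(1+\mathring X)$, and every term of $\beta_{2}$, vanishes identically on exact Kerr (where $\sigma=\rho$, $\psi=0$, $\mathring X=\mathring Y=\mathring\Theta=B=0$); hence, after expanding $\sigma=\rho(1+\mathring\sigma)$, $X=X_{K}(1+\mathring X)$, $Y=Y_{K}+X_{K}\mathring Y$, $\theta=dY+B$, each term is either proportional to $\psi^{2}$ or $|\partial\psi|^{2}$, or is at worst linear in the derivatives of one of $\mathring\sigma,\mathring X,\mathring Y,\mathring\Theta,B$. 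Since Proposition~\ref{fixpsi} gives $\Vert\psi\Vert_{\mathcal{L}_{\psi}}\le C\delta$ and $\Vert\mathfrak{b}\Vert_{\mathcal{L}_{\sigma}\times\cdots\times\mathcal{L}_{\Theta}}\le D(\delta^{2}+\Vert\mathring\lambda\Vert_{\mathcal{L}_{\lambda}}^{2})$, even the terms linear in $\mathfrak{b}$ are bounded by $\delta^{2}+\Vert\mathring\lambda\Vert_{\mathcal{L}_{\lambda}}^{2}$, which is precisely the asserted right-hand side (the sharper $O(\delta^{2})$ bound from the remark would require tracking that $\mathfrak{b}$'s $\mathring\lambda$-dependence enters only through $e^{2\lambda}\psi^{2}$; the stated bound suffices here). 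The difference estimate comes from repeating the same term-by-term analysis on $\mathfrak{L}(\mathring\lambda_{1},\delta_{1})-\mathfrak{L}(\mathring\lambda_{2},\delta_{2})$ and feeding in the Lipschitz bounds of Proposition~\ref{fixpsi} for $\psi_{1}-\psi_{2}$, $\mu_{1}^{2}-\mu_{2}^{2}$ and $\mathfrak{b}_{1}-\mathfrak{b}_{2}$ against $|\delta_{1}-\delta_{2}|+\Vert\mathring\lambda_{1}-\mathring\lambda_{2}\Vert_{\mathcal{L}_{\lambda}}$.

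Next I would treat convergence and regularity together. The weighted norms $\mathcal{L}_{\sigma},\mathcal{L}_{X},\mathcal{L}_{Y},\mathcal{L}_{\Theta},\mathcal{L}_{\psi}$ force polynomial decay in $r$ of all the derivatives entering $\alpha$ and $\beta_{2}$ (e.g.\ $\partial^{2}\mathring\sigma,\partial^{3}\mathring\sigma=O(r^{-4}\log(4r))$ in $C^{0,\alpha_{0}}$, $\partial\mathring X=O(r^{-2})$, $\psi=O(r^{-10})$, and, via Lemma~\ref{somestuffthatisuseful}, $\partial Y_{K}=O(\rho^{3}r^{-4})$); consequently the integrand of \eqref{def-lambda-fp}, and that of the $\beta_{2}$-correction in \eqref{lambdef-z-doable}, decays in $\tau$ at least like $\tau^{-1-\alpha_{0}}$ (much faster for the $\psi$-terms), so the $\tau$-integrals converge and $\mathfrak{L}$ is a bounded function. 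For the $\hat C^{1,\alpha_{0}}(\overline{\mathscr{B}})$ bound, $\partial_{\rho}\mathfrak{L}$ is by the fundamental theorem of calculus exactly the integrand of \eqref{lamdef-rho-doable}, and $\partial_{z}\mathfrak{L}$ will be shown below to equal the right-hand side of \eqref{lambdef-z-doable}, so it is enough to bound $\alpha-(\alpha_{K})-\tfrac12 d\log(1+\mathring X)$ and the $\beta_{2}$-correction in $\hat C^{0,\alpha_{0}}(\overline{\mathscr{B}})$. Away from $\mathscr{A}$ and the poles this is routine Schauder bookkeeping; near $\mathscr{A}$ and $p_{N},p_{S}$ the apparently singular factors $\partial_{\rho}\log X_{K}\sim 2/\rho$ (and $\partial_{s}\log X_{K}\sim 2/s$) multiplying derivatives of $\mathring\sigma,\mathring X$ are handled exactly as in the proof of Proposition~\ref{fixsig}: Lemma~\ref{firstdecay} controls $\rho^{-1}\partial_{\rho}\mathring\sigma$, $\rho^{-1}\partial_{\rho}\mathring X$ (and their $s$-analogues) by the $\mathbb{R}^{3}$/$\mathbb{R}^{4}$-Hessians lying in $\mathcal{L}_{\sigma}$, $\mathcal{L}_{X}$; and the $X_{K}^{-2}(\partial X_{K})^{2}\sim\rho^{-2}$ pieces of $\alpha_{\rho}$ cancel against those of $(\alpha_{K})_{\rho}$ upon expanding, leaving only terms of the controllable types (the residual $\mathring\sigma$-pieces near $\mathscr{A}$ use that the right-hand side of $\mathring\sigma$'s equation \eqref{renormalsigmaeqn} is $\propto\psi^{2}$ and vanishes to high order at $\rho=0$).

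Finally, the verification that $\mathfrak{L}(\mathring\lambda,\delta)$ solves \eqref{lamdef-rho-doable}--\eqref{lambdef-z-doable}: the first is immediate from the fundamental theorem of calculus. For the second I would differentiate \eqref{def-lambda-fp} in $z$ (legitimate by the decay established above), rewrite $\partial_{z}\alpha_{\rho}=\partial_{\rho}\alpha_{z}-(d\alpha)_{\rho z}$, use the closedness of $\alpha_{K}$ (so $\partial_{z}(\alpha_{K})_{\rho}=\partial_{\rho}(\alpha_{K})_{z}$) and equality of mixed partials of $\log(1+\mathring X)$, and integrate the resulting total-$\tau$-derivative terms by the fundamental theorem of calculus, using that $\alpha_{z},(\alpha_{K})_{z},\partial_{z}\log(1+\mathring X)$ vanish as $\rho\to\infty$; this yields $\partial_{z}\mathfrak{L}=\alpha_{z}-(\alpha_{K})_{z}-\tfrac12\partial_{z}\log(1+\mathring X)+\int_{\rho}^{\infty}(d\alpha)_{\rho z}\,d\tau$. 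Substituting the twice-contracted-Bianchi compatibility identity of Lemma~\ref{bianch-eqns-satisfied}, which writes $(d\alpha)_{\rho z}$ as the $(\rho,z)$-component of $\beta_{2}$ wedged with the obstruction $1$-form $d\lambda-(\alpha-\tfrac12 d\log X)$, and noting that this obstruction $1$-form equals $d\mathfrak{L}-(\alpha-(\alpha_{K})-\tfrac12 d\log(1+\mathring X))$, whose $\rho$-component vanishes precisely because $\mathfrak{L}$ solves \eqref{lamdef-rho-doable}, the integral $\int_{\rho}^{\infty}(d\alpha)_{\rho z}\,d\tau$ collapses to exactly the $\beta_{2}$-term appearing in \eqref{lambdef-z-doable} (at the fixed point one may in addition run a one-dimensional Gr\"onwall argument in $\rho$ to see that this term in fact vanishes identically, so $\lambda$ will genuinely satisfy $d\lambda=\alpha-\tfrac12 d\log X$). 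This compatibility/sign bookkeeping — which is exactly why the modified equation \eqref{lambdef-z-doable}, rather than the naive $z$-equation, was introduced — together with checking that the decay is strong enough both to differentiate under the integral and to make $\int_{\rho}^{\infty}(d\alpha)_{\rho z}$ converge, is the step I expect to be the main obstacle; everything else is an application of Proposition~\ref{fixpsi} and the estimates of the preceding sections.
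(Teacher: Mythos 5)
Your high-level strategy matches the paper's: you correctly observe that $\partial_{\rho}\mathfrak{L}$ is the integrand by the fundamental theorem of calculus, that $\partial_{z}\mathfrak{L}$ is obtained by differentiating under the integral and trading $\partial_{z}\alpha_{\rho}$ for $\partial_{\rho}\alpha_{z}-(d\alpha)_{\rho z}$, that Lemma~\ref{bianch-eqns-satisfied} converts $(d\alpha)_{\rho z}$ into the $\beta_{2}$-wedge term, and that the singular $X_{K}^{-2}(\partial X_{K})^{2}\sim\rho^{-2}$ pieces of $\alpha$ cancel against $\alpha_{K}$, leaving residuals controllable via Lemma~\ref{firstdecay}. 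This is all consistent with the paper.

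However, there is a genuine gap in your treatment of the correction integral
\[
\mathfrak{I}=\left(\int_{\rho}^{\infty}\left(\beta_{2}\wedge\left(d\mathring\lambda-(\alpha-\alpha_{K})-\tfrac12 d\log(1+\mathring X)\right)\right)_{\rho z}(\tau,z)\,d\tau\right)dz
\]
near the poles $p_{N},p_{S}$. You mention its decay in $\tau$ at infinity, and you assert that ``near $\mathscr{A}$ and $p_{N},p_{S}$ the singular factors are handled exactly as in Proposition~\ref{fixsig},'' but the difficulty here is of a different kind. The integral $\int_{\rho}^{\infty}\cdots\,d\tau$ is along lines of constant $z$, which near a pole are curves in the $(s,\chi)$-plane of the form $\chi^{2}-s^{2}=\mathrm{const}$; this integral contributes to $\partial_{z}\mathfrak{L}$, and one must control its $\hat C^{0,\alpha_{0}}(\overline{\mathscr{B}_{N}})$ norm (recall $\hat C$ means regularity as a rotationally symmetric function on $\RR^{4}$), which requires the $s$- and $\chi$-derivatives of the integral — a quantity that is not simply the integrand. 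The paper resolves this by writing the wedge product in $(s,\chi)$ coordinates as $s\,\mathfrak{f}_{1}+\chi\,\mathfrak{f}_{2}$ with $\mathfrak{f}_{i}\in\hat C^{0,\alpha_{0}}(\overline{\mathscr{B}_{N}})$, then expressing $\mathfrak{I}=(\mathfrak{I}_{1}+\mathfrak{I}_{2})(s\,ds-\chi\,d\chi)$ where $\mathfrak{I}_{1},\mathfrak{I}_{2}$ solve the transport equations $\chi\partial_{s}\mathfrak{I}_{1}+s\partial_{\chi}\mathfrak{I}_{1}=s\mathfrak{f}_{1}$, $\chi\partial_{s}\mathfrak{I}_{2}+s\partial_{\chi}\mathfrak{I}_{2}=\chi\mathfrak{f}_{2}$, and explicitly integrating these along the curves $\{\chi^{2}-s^{2}=c\}$ to get $\hat C^{0,\alpha_{0}}(\overline{\mathscr{B}_{N}})$ bounds. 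The extra factors of $s$ and $\chi$ in the form $s\,ds-\chi\,d\chi$ are essential to obtain the required Cartesian regularity; none of this is automatic from ``Lemma~\ref{firstdecay} plus decay.'' A related, smaller omission: you do not explain how $\alpha$ itself is computed in $(s,\chi)$ coordinates (the paper does this by observing the metric ansatz in the conformally rescaled $(s,\chi)$ frame has the same form, so one may read off $\alpha_{s},\alpha_{\chi}$ by replacing $\rho,z$ with $s,\chi$ in Theorem 1.3 of \cite{HBH:geometric}), nor that the $\frac12 d\log(s^{2}+\chi^{2})$ shift cancels against $\alpha_{K}$. These coordinate-change computations are what makes the cancellation argument at the corners rigorous.
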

\begin{proof}
For now, we will consider the parameters
\[
\left( \mathring\sigma, B,(\mathring X,\mathring Y),\mathring \Theta,\psi,\mathring{\mu}^2\right),
\]
simply as a fixed element of $B_{\epsilon}(\mathcal{L}_{\sigma})\times \dots B_{\epsilon}(\mathcal{L}_{\mu^{2}})$. We claim that the following estimate holds
\begin{align*}
& \Vert r^{2}(\alpha -\alpha_{K}) \Vert_{\hat C^{1,\alpha_{0}}(\overline{\mathscr{B}_{A}}\cup\overline{\mathscr{B}_{H}})} + \Vert r^{2}d \log (1+\mathring X) \Vert_{\hat C^{1,\alpha_{0}}(\overline{\mathscr{B}_{A}}\cup\overline{\mathscr{B}_{H}})} \\
&  \leq C \left\Vert \left( \mathring\sigma, B,(\mathring X,\mathring Y),\mathring \Theta\right)\right\Vert_{\mathcal{L}_{\sigma}\times\dots \mathcal{L}_{\Theta}} + C \Vert \psi\Vert_{\mathcal{L}_{\psi}}^2.
\end{align*}
We consider the bounds for $\alpha - \alpha_{K}$ first. We explain how to bound the $\rho$-component, since the $z$ component is similar. The only difficult term in $\alpha_{\rho}$ is
\[
\frac 14 |\partial\sigma|^{-2} (\partial_{\rho}\sigma)\sigma X^{-2} (\partial_{\rho}X)^{2} = \frac 14 |\partial\sigma|^{-2}  (\partial_{\rho}\log X)^{2},
\]
since $\partial_{\rho}\log X \sim \rho^{-1}$. Note, however, that $(\alpha_{K})_{\rho}$ contains a corresponding term. Thus, it suffices to bound 
\begin{align*}
& \frac 14 |\partial\sigma|^{-2} (\partial_{\rho}\sigma)\sigma (\partial_{\rho}\log X)^{2} - \frac 14 (\partial_{\rho}\log X_{K})^{2} \\
& = \frac 1 4 (1- \rho |\partial\sigma|^{-2} (1+\mathring\sigma + \rho |\partial\mathring\sigma|^{2})) \rho (\partial_{\rho} \log (1+\mathring X))^{2} - \frac 1 4  |\partial\sigma|^{-2} (1+\mathring\sigma + \rho |\partial\mathring\sigma|^{2}) \rho^{2} (\partial_{\rho} \log X_{K} )^{2}
\end{align*} 
The first term is easily bounded in $\hat C^{1,\alpha_{0}}(\overline{\mathscr{B}_{A}}\cup\overline{\mathscr{B}_{H}})$ as claimed, while the second term is similarly bounded after observing that the $\rho^{2}$ cancels the divergent $\partial_{\rho}\log X_{K}\sim \frac 2\rho$ contribution. The remaining estimates easily follow (the terms involving $\partial_{\rho}X$ are handled in a similar manner). 

Note that this shows that the integral defining $\mathfrak{L}(\mathring\lambda)$ converges, and justifies differentiating under the integral sign. Given this, \eqref{lamdef-rho-doable}, with $\partial_{\rho}\mathring{\lambda}$ replaced by $\partial_{\rho}\mathfrak{L}\left(\mathring{\lambda}\right)$, is automatically satisfied and we have the bounds
\begin{equation}\label{boundawaypNpS}
\Vert \mathfrak{L}(\mathring\lambda) \Vert_{\hat C^{1,\alpha_{0}}(\overline{\mathscr{B}_{A}} \cup \overline{\mathscr{B}_{H}} )} \leq C \left\Vert \left( \mathring\sigma, B,(\mathring X,\mathring Y),\mathring \Theta \right)\right\Vert_{\mathcal{L}_{\sigma}\times\dots \mathcal{L}_{\Theta}} + C \Vert \psi\Vert_{\mathcal{L}_{\psi}}^2.
\end{equation}

From now on, we will assume that the parameters
\[
\left( \mathring\sigma, B,(\mathring X,\mathring Y),\mathring \Theta,\psi,\mathring{\mu}^2\right),
\]
solve their respective equations for fixed $\mathring\lambda$ and $\delta>0$. Then, by Lemma \ref{bianch-eqns-satisfied}, we find that
\[
d(\alpha-\alpha_{K}) = \beta_{2} \wedge \left(d\mathring\lambda - (\alpha-\alpha_{K}) - \frac 12 d \log (1+\mathring X)\right).
\]
Since~\eqref{boundawaypNpS} is easily seen to hold on the complement of any compact set containing $p_N$ and $p_S$, we see that, away from the intersection of the axis and the horizon, $\mathfrak{L}(\mathring\lambda)$ solves \eqref{lambdef-z-doable}, with $\partial_z\mathring{\lambda}$ replaced by $\partial_z\mathfrak{L}\left(\mathring{\lambda}\right)$:
\begin{align*}
\partial_{z}\mathfrak{L}(\mathring\lambda)(\rho,z) & = -\int_{\rho}^{\infty}\partial_{z}\left(\alpha_{\rho} - (\alpha_{K})_{\rho} - \frac12 \partial_{\rho}\log (1+\mathring X)\right)(\tau,z)d\tau\\
& = \alpha_{z}(\rho,z)-(\alpha_{K})_{z} (\rho,z)  - \frac 12 \partial_{z}\log(1+\mathring X)- \int_{\rho}^{\infty} (d\alpha)_{\rho z}(\tau,z) d\tau\\
& = \alpha_{z}(\rho,z)-(\alpha_{K})_{z} (\rho,z) - \frac12 \partial_{z}\log(1+\mathring X) \\
& \qquad - \int_{\rho}^{\infty} \left(\beta_{2}\wedge \left(d\mathring\lambda - (\alpha-(\alpha_{K}) - \frac 12 d\log(1+\mathring X) \right)\right)_{\rho,z}(\tau,z) d\tau.
\end{align*}

We now turn to extending the estimate~\eqref{boundawaypNpS} to the region $\overline{\mathscr{B}_{N}}$. It is sufficient to estimate $d\mathfrak{L}\left(\mathring{\lambda}\right)$ in $\hat C^{0,\alpha_{0}}(\overline{\mathscr{B}_{N}})$. The first step is to compute the change of coordinates from $(\rho,z)$ to $(s,\chi)$ for the form $\alpha-\alpha_{K}$. This may be carried out directly, or alternatively, one may notice that if we consider this as a coordinate change for the full metric ansatz, we obtain
\[
g =  -V dt^{2} + 2W dtd\phi + X d\phi^{2} + (s^{2}+\chi^{2})e^{2\lambda}(ds^{2}+d\chi^{2}),
\]
and thus we may read off the expression for $\alpha$ (shifted by $\frac 12 \log(s^{2}+\chi^{2})$)  from Theorem 1.3 in \cite{HBH:geometric} by simply replacing $\rho$ by $s$ and $z$ by $\chi$. Carrying this out, we find
\begin{align*}
\left( (\partial_{s}\sigma)^{2} + (\partial_{\chi}\sigma)^{2} \right) \left( \alpha_{s} - \frac{s}{s^{2}+\chi^{2}} \right)& = \frac 14 (\partial_{s}\sigma)\sigma X^{-2} \left((\partial_{s}X)^{2} - (\partial_{\chi}X)^{2} \right) +\frac 12 (\partial_{\chi}\sigma)\sigma X^{-2} \left((\partial_{s}X)(\partial_{\chi}X)\right)\\
& + \partial_{s}\sigma (\partial^{2}_{s}\sigma - \partial^{2}_{\chi}\sigma) + \partial_{\chi}\sigma (\partial^{2}_{s,\chi}\sigma)\\
& + \frac 14 (\partial_{s}\sigma)\sigma X^{-2} \left((\theta_{s})^{2} - (\theta_{\chi})^{2} \right) +\frac 12 (\partial_{\chi}\sigma)\sigma X^{-2} \left( (\theta_{s})(\theta_{\chi}) \right)\\
& + \frac 12 (\partial_{s}\sigma)\sigma ((\partial_{s}\psi)^{2}-(\partial_{\chi}\psi)^{2}) + (\partial_{\chi}\sigma)\sigma \left((\partial_{s}\psi)(\partial_{\chi}\psi) \right),
\end{align*}
as well as
\begin{align*}
\left( (\partial_{s}\sigma)^{2} + (\partial_{\chi}\sigma)^{2} \right) \left( \alpha_{\chi} - \frac{\chi}{s^{2}+\chi^{2}}\right) & = \frac 1 4 (\partial_{\chi}\sigma) \sigma X^{-2} \left( (\partial_{\chi}X)^{2} - (\partial_{s}X)^{2} \right) + \frac 12 (\partial_{s}\sigma)\sigma X^{-2}\left( (\partial_{s}X)(\partial_{\chi}X) \right)\\
& + \partial_{\chi}\sigma (\partial^{2}_{\chi}\sigma - \partial^{2}_{s}\sigma) + \partial_{s}\sigma ( \partial^{2}_{s,\chi}\sigma)\\
& + \frac 1 4 (\partial_{\chi}\sigma) \sigma X^{-2} \left( (\theta_{\chi})^{2} - (\theta_{s})^{2} \right) + \frac 12 (\partial_{s}\sigma)\sigma X^{-2}\left( (\theta_{s})(\theta_{\chi}) \right)\\
& + \frac 12 (\partial_{\chi}\sigma)\sigma\left( (\partial_{\chi}\psi)^{2} - (\partial_{s}\psi)^{2} \right) + (\partial_{s}\sigma)\sigma \left( (\partial_{s}\psi) (\partial_{\chi}\psi) \right).
\end{align*}
Note that $\alpha_{K}$ will cancel the $\frac 12 d\log(s^{2}+\chi^{2})$ terms, so these are not relevant to $\mathring\lambda$. We claim that
\[
\Vert \alpha -\alpha_{K}\Vert_{\hat C^{0,\alpha_{0}}(\overline{\mathscr{B}_{N}})} + \Vert d\log(1+\mathring X)\Vert_{\hat C^{0,\alpha_{0}}(\overline{\mathscr{B}_{N}})} \leq C \left\Vert \left( \mathring\sigma, B,(\mathring X,\mathring Y),\mathring \Theta\right)\right\Vert_{\mathcal{L}_{\sigma}\times\dots \mathcal{L}_{\mu^2}} + C \Vert \psi\Vert_{\mathcal{L}_{\psi}}^2.
\]
This follows from a similar argument as before. Note that again the terms in $\alpha$ involving $\partial_{s} \log X$ are singular, but are cancelled by the corresponding terms in $\alpha_{K}$. The computation is nearly identical as above (except now in $s,\chi$ coordinates) so we omit it. 

Finally, we need to also control the $1$-form\index{Miscellaneous!$\mathfrak{I}$}
\[
\mathfrak{I} \doteq \left( \int_{\rho}^{\infty} \left(\beta_{2}\wedge \left(d\mathring\lambda - (\alpha-(\alpha_{K}) - \frac 12 d\log(1+\mathring X) \right)\right)_{\rho,z}(\tau,z) d\tau \right)dz.
\]
We have seen that the $\rho,z$ derivatives $\partial\mathring\sigma,\partial^{2}\mathring\sigma,\partial^{3}\sigma$ are all controlled in $\hat C^{0,\alpha_{0}}(\overline{\mathscr{B}_{N}})$ by the $\mathcal{L}_{\sigma}$ norm. Thus the $\rho,z$ coefficients of $\beta_{2}$ are controlled in $\hat C^{0,\alpha_{0}}(\overline{\mathscr{B}_{N}})$. In particular, note that
\[
\beta_{2} = (\chi (\beta_{2})_{\rho} - s(\beta_{2})_{z}) ds + (s(\beta_{2})_{\rho} + \chi (\beta_{2})_{z})d\chi
\]
Combining this with the above bounds, we have that\index{Miscellaneous!$\mathfrak{f}$}
\[
\left(\beta_{2}\wedge \left(d\mathring\lambda - (\alpha-(\alpha_{K}) - \frac 12 d\log(1+\mathring X) \right)\right)_{s,\chi} = s \mathfrak{f}_{1} + \chi \mathfrak{f}_{2},
\]
where
\[
\Vert \mathfrak{f}_{1} \Vert_{\hat C^{0,\alpha_{0}}(\overline{\mathscr{B}_{N}})} + \Vert \mathfrak{f}_{2} \Vert_{\hat C^{0,\alpha_{0}}(\overline{\mathscr{B}_{N}})} \leq C \Vert \mathring\sigma \Vert_{\mathcal{L}_{\sigma}}\Vert \mathring\lambda\Vert_{\mathcal{L}_{\lambda}} + C \left\Vert \left( \mathring\sigma, B,(\mathring X,\mathring Y),\mathring \Theta\right)\right\Vert_{\mathcal{L}_{\sigma}\times\dots \mathcal{L}_{\mu^2}} + C \Vert \psi\Vert_{\mathcal{L}_{\psi}}^2.
\]
Observing that
\[
\left(\beta_{2}\wedge \left(d\mathring\lambda - (\alpha-(\alpha_{K}) - \frac 12 d\log(1+\mathring X) \right)\right)_{\rho,z} = \frac{s}{s^{2}+\chi^{2}} \mathfrak{f}_{1} + \frac{\chi}{s^{2}+\chi^{2}} \mathfrak{f}_{2},
\]
if we set
\[
\mathfrak{I}_{1} \doteq \int_{\rho}^{\infty} \frac{s}{s^{2}+\chi^{2}} \mathfrak{f}_{1} (\tau,z) d\tau, \qquad \mathfrak{I}_{1} \doteq \int_{\rho}^{\infty} \frac{\chi }{s^{2}+\chi^{2}} \mathfrak{f}_{2} (\tau,z) d\tau,
\]
we see that (because $\partial_{\rho} = (s^{2}+\chi^{2})^{-1}(\chi\partial_{s}+s\partial_{\chi})$),
\[
\chi\partial_{s}\mathfrak{I}_{1} + s\partial_{\chi}\mathfrak{I}_{1} = s \mathfrak{f}_{1}, \qquad \chi\partial_{s}\mathfrak{I}_{2} + s\partial_{\chi}\mathfrak{I}_{2} = \chi \mathfrak{f}_{2}.
\]
These expressions are easily integrated using the observation that
\[
\partial_{\tau} \left( \mathfrak{I}_{1}(\sqrt{\tau^{2} + c},\tau) \right) = \left(\frac \chi s \partial_{s} \mathfrak{I}_{1} + \partial_{\chi}\mathfrak{I}_{1} \right)\Big|_{(s,\chi) = (\sqrt{\tau^{2}+c},\tau)} = \mathfrak{f}_{1}(\sqrt{\tau^{2}+c},\tau),
\]
with a similar form for $\mathfrak{I}_{2}$.

Putting this together, we obtain
\begin{align*}
\mathfrak{I}_{1}(s,\chi) & = \mathfrak{I}_{1}(\sqrt{\chi_{0}^{2}+s^{2}-\chi^{2}},\chi_{0}) - \int_{\chi}^{\chi_{0}} \mathfrak{f}_{1}(\sqrt{\tau^{2}+s^{2}-\chi^{2}},\tau) d\tau,\\
\mathfrak{I}_{2}(s,\chi) & = \mathfrak{I}_{2}(s_{0},\sqrt{s_{0}^{2}+\chi^{2}-s^{2}}) - \int_{s}^{s_{0}} \mathfrak{f}_{2}(\tau,\sqrt{\tau^{2}+\chi^{2}-s^{2}}) d\tau.
\end{align*}
Choosing $(s_{0},\chi_{0})$ appropriately, we may assure that $(\sqrt{\chi_{0}^{2}+s^{2}-\chi^{2}},\chi_{0})$ and $(s_{0},\sqrt{s_{0}^{2}+\chi^{2}-s^{2}})$ are bounded away from $\{s=\chi = 0\}$. Then, the bounds for $\mathfrak{f}_{1},\mathfrak{f}_{2}$ given above allow us to bound both terms in $\mathfrak{I}_{1}$ and $\mathfrak{I}_{2}$ in $\hat C^{0,\alpha_{0}}(\overline{\mathscr{B}_{N}})$. Finally, thanks to the extra factor of $s$ and $\chi$ in the formula
\[
\mathfrak{I} = (\mathfrak{I}_{1}+\mathfrak{I}_{2}) (sds - \chi d\chi),
\]
we may bound the contribution to the (Cartesian) first-derivative of $\mathring\lambda$. Putting this all together, we obtain
\[
\Vert \mathfrak{L}(\mathring\lambda) \Vert_{\hat C^{1,\alpha_{0}}(\overline{\mathscr{B}})} \leq C \Vert \mathring\sigma \Vert_{\mathcal{L}_{\sigma}}\Vert \mathring\lambda\Vert_{\mathcal{L}_{\lambda}} + C \left\Vert \left( \mathring\sigma, B,(\mathring X,\mathring Y),\mathring \Theta  \right)\right\Vert_{\mathcal{L}_{\sigma}\times\dots \mathcal{L}_{\Theta}} + C \Vert \psi\Vert_{\mathcal{L}_{\psi}}^2.
\]
Combined with Proposition \ref{fixpsi}, the first claimed estimate follows. The second estimate follows from a similar argument.
\end{proof}

Thus, we have seen for $\epsilon> 0$ and $\delta\geq0$ sufficiently small, $\mathfrak{L} : B_{\epsilon}(\mathcal{L}_{\lambda}) \to B_{\epsilon}(\mathcal{L}_{\lambda})$ is a contraction map. Thus, we may find, for each $\delta\geq 0$ fixed small, $\mathring\lambda$, and associated (by Proposition \ref{fixpsi}) parameters
\[
\left( \mathring\sigma, B,(\mathring X,\mathring Y),\mathring \Theta,\psi,\mathring{\mu}^2\right)
\]
so that $\mathring\lambda$ solves \eqref{lamdef-rho-doable} and \eqref{lambdef-z-doable}.

\subsection{Upgrading the equations for $\mathring\lambda$}
Now that we have solved the fixed point, we would like to conclude that $\mathring\lambda$ satisfies the correct equations given in \eqref{lamdef-rho} and \eqref{lambdef-z}, i.e.
\[
d\mathring\lambda = \alpha - \alpha_{K} - \frac 12 d\log(1+\mathring X).
\]
For $z$ fixed, we set $f(\rho) \doteq \left(\partial_{z} \mathring\lambda - (\alpha_{z}-(\alpha_{K})_{z}) -\frac 12 \partial_{z}\log(1+\mathring X)\right)(\rho,z)$. Then, we see that
\[
f(\rho) = \int_{\rho}^{\infty} (\beta_{2})_{\rho} (\tau,z) f(\tau) d\tau
\]
and because $\mathring\sigma \in \mathscr{L}_{\sigma}$, we have that $|(\beta_{2})_{\rho}|\leq Cr ^{-3}$. For $\rho\geq \rho_{0}$ (and $z$ considered fixed) we have $c_{0} = c_{0}(\rho_{0})$ so that $|(\beta_{2})_{\rho}|\leq c_{0}\rho^{-3}$. Moreover, we have that $|f(\rho)|\leq c_{1}$ for some $c_{1}$. Thus, for $\rho\geq \rho_{0}$, we find
\[
|f(\rho)|\leq c_{0}\int_{\rho}^{\infty} |f(\tau)|  \tau^{-3} d\tau.
\]
Hence,
\[
|f(\rho)| \leq \frac{c_{0}c_{1}}{2} \rho^{-2}.
\]
Iterating this, we find
\[
|f(\rho)| \leq \frac{c_{0}^{j}c_{1}}{2 \cdot 4 \cdot \dots \cdot (2j)} \rho^{-2j} =  c_{1} \left( \frac{c_{0}}{2\rho^{2}} \right)^{j} \frac{1}{j!}.
\]
By Stirling's approximation, this tends to $0$ as $j\to\infty$. Thus $f(\rho) = 0$ for $\rho\geq \rho_{0}$ (which was arbitrary). Thus, we see that
\begin{equation}\label{eq:lam-full-eqn}
d\mathring\lambda = \alpha -\alpha_{K} - \frac 12 d\log(1+\mathring X).
\end{equation}
on $\overline{\mathscr{B}}$, as desired. 

\subsection{Lipschitz continuity with respect to $\delta$} Here we record the dependence of the solution 
\[
\index{Metric Quantities!$\mathfrak{c}$} \mathfrak{c}_{i} = \left( \mathring\sigma_{i}, B,(\mathring X_{i},\mathring Y_{i}),\mathring \Theta_{i},\mathring\lambda_{i}\right)
\]
and $(\psi_{i},\mathring \mu_{i}^{2})$ on $\delta_{i}\geq 0$ (small) for $i=1,2$ (we have written the expression in this form, since the scalar field behaves differently with respect to $\delta$ than the other parameters). 
\begin{lemma}
For $\epsilon>0$ and $\delta_{i}\geq 0$ sufficiently small, we have 
\[
\Vert \mathfrak{c}_{i} \Vert_{\mathcal{L}_{\sigma}\times\dots\times \mathcal{L}_{\lambda}} \leq D\delta_{i}^{2}, \qquad \Vert\psi_{i} \Vert_{\mathcal{L}_{\psi}} \leq D \delta_{i}, \qquad \mathring\mu_{i}^{2} \leq D \delta_{i}
\]
and
\[
\Vert \mathfrak{c}_{1} - \mathfrak{c}_{2} \Vert_{\mathcal{L}_{\sigma}\times\dots\times \mathcal{L}_{\lambda}} \leq D(\delta_{1}+\delta_{2})|\delta_{1} - \delta_{2}|, \qquad \Vert\psi_{1}-\psi_{2} \Vert_{\mathcal{L}_{\psi}} \leq D |\delta_{1}-\delta_{2}|, \qquad |\mathring\mu_{1}^{2} - \mathring\mu_{2}^{2}|\leq D |\delta_{1}-\delta_{2}|
\]
\end{lemma}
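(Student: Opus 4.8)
The plan is to deduce this lemma entirely from the contraction estimates already in hand, namely the bounds for the map $\mathfrak{L}$ in Lemma~\ref{lemm:lam-bound-mathfrakL} and the ``nonlinear dependence on parameters'' bounds for $(\mathring\sigma,B,(\mathring X,\mathring Y),\mathring\Theta,\psi,\mathring\mu^2)$ as functions of $(\mathring\lambda,\delta)$ from Proposition~\ref{fixpsi}. Recall that $\mathring\lambda_i\in B_\epsilon(\mathcal{L}_\lambda)$ is the fixed point $\mathring\lambda_i=\mathfrak{L}(\mathring\lambda_i,\delta_i)$ constructed at the end of the previous subsection, so the only real task is to convert the estimates for $\mathfrak{L}$ and for the other solution maps into estimates in terms of $\delta_i$ alone.

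First I would extract the a priori bound for $\mathring\lambda_i$. Using the fixed point identity together with the first estimate of Lemma~\ref{lemm:lam-bound-mathfrakL},
\[
\|\mathring\lambda_i\|_{\mathcal{L}_\lambda}=\|\mathfrak{L}(\mathring\lambda_i,\delta_i)\|_{\mathcal{L}_\lambda}\le C\big(\delta_i^2+\|\mathring\lambda_i\|_{\mathcal{L}_\lambda}^2\big)\le C\delta_i^2+C\epsilon\,\|\mathring\lambda_i\|_{\mathcal{L}_\lambda},
\]
so after choosing $\epsilon$ with $C\epsilon\le\tfrac12$ and absorbing, $\|\mathring\lambda_i\|_{\mathcal{L}_\lambda}\le 2C\delta_i^2$; in particular $\|\mathring\lambda_i\|_{\mathcal{L}_\lambda}\le 2C\delta_i$ once $\delta_i\le 1$. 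Next I would apply the fixed point identity to both $\mathring\lambda_1$ and $\mathring\lambda_2$ and invoke the second estimate of Lemma~\ref{lemm:lam-bound-mathfrakL}:
\[
\|\mathring\lambda_1-\mathring\lambda_2\|_{\mathcal{L}_\lambda}\le C\big(\delta_1+\delta_2+\|\mathring\lambda_1\|_{\mathcal{L}_\lambda}+\|\mathring\lambda_2\|_{\mathcal{L}_\lambda}\big)\big(|\delta_1-\delta_2|+\|\mathring\lambda_1-\mathring\lambda_2\|_{\mathcal{L}_\lambda}\big).
\]
By the previous step the prefactor is at most $C'(\delta_1+\delta_2)$, which is $\le\tfrac12$ after shrinking $\delta_1,\delta_2$; absorbing $\tfrac12\|\mathring\lambda_1-\mathring\lambda_2\|_{\mathcal{L}_\lambda}$ into the left side gives $\|\mathring\lambda_1-\mathring\lambda_2\|_{\mathcal{L}_\lambda}\le D(\delta_1+\delta_2)|\delta_1-\delta_2|$.

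Then I would feed these two bounds into Proposition~\ref{fixpsi}. Writing $\mathfrak{b}_i=(\mathring\sigma_i,B_i,(\mathring X_i,\mathring Y_i),\mathring\Theta_i)$, its first estimate gives $\|\mathfrak{b}_i\|_{\mathcal{L}_\sigma\times\cdots\times\mathcal{L}_\Theta}\le D(\delta_i^2+\|\mathring\lambda_i\|_{\mathcal{L}_\lambda}^2)\le D'\delta_i^2$, and together with the bound for $\|\mathring\lambda_i\|_{\mathcal{L}_\lambda}$ this yields $\|\mathfrak{c}_i\|_{\mathcal{L}_\sigma\times\cdots\times\mathcal{L}_\lambda}\le D'\delta_i^2$. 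Its difference estimate gives
\[
\|\mathfrak{b}_1-\mathfrak{b}_2\|_{\mathcal{L}_\sigma\times\cdots\times\mathcal{L}_\Theta}\le D\big(\delta_1+\delta_2+\|\mathring\lambda_1\|_{\mathcal{L}_\lambda}+\|\mathring\lambda_2\|_{\mathcal{L}_\lambda}\big)\big(|\delta_1-\delta_2|+\|\mathring\lambda_1-\mathring\lambda_2\|_{\mathcal{L}_\lambda}\big)\le D'(\delta_1+\delta_2)|\delta_1-\delta_2|,
\]
and combining with the $\mathring\lambda$-difference bound produces $\|\mathfrak{c}_1-\mathfrak{c}_2\|_{\mathcal{L}_\sigma\times\cdots\times\mathcal{L}_\lambda}\le D'(\delta_1+\delta_2)|\delta_1-\delta_2|$. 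Finally the last two inequalities of Proposition~\ref{fixpsi} give $\|\psi_1-\psi_2\|_{\mathcal{L}_\psi}\le D(\delta_1+\delta_2)\|\mathring\lambda_1-\mathring\lambda_2\|_{\mathcal{L}_\lambda}+D|\delta_1-\delta_2|\le D|\delta_1-\delta_2|$ and $|\mathring\mu_1^2-\mathring\mu_2^2|\le D\|\mathring\lambda_1-\mathring\lambda_2\|_{\mathcal{L}_\lambda}+D|\delta_1-\delta_2|\le D|\delta_1-\delta_2|$, while $\|\psi_i\|_{\mathcal{L}_\psi}\le D\delta_i$ and $\mathring\mu_i^2\le D\delta_i$ follow directly from Proposition~\ref{fixpsi} (indeed the $\mathring\mu^2$ bound there, combined with $\|\psi_i\|_{\mathcal{L}_\psi}\le C\delta_i$, $\|\mathring\lambda_i\|_{\mathcal{L}_\lambda}\le D\delta_i^2$, and one more absorption, gives the sharper $\mathring\mu_i^2=O(\delta_i^2)$).

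I do not expect a serious analytic obstacle here: the lemma is purely a corollary of the contraction estimates established earlier in the section. The only points requiring care are the ordering of the smallness choices --- first $\epsilon$ small enough for the a priori absorption, then $\delta_1,\delta_2$ small enough for the difference absorption --- and keeping track of the distinct orders in $\delta$, namely that $\psi$ is $O(\delta)$ whereas all the renormalized metric unknowns (including $\mathring\lambda$ and $\mathring\mu^2$) are $O(\delta^2)$; conflating these would produce the wrong powers of $\delta$ in the difference estimates.
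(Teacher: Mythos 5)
Your proof is correct and takes the same approach the paper intends: the paper's one-line proof simply asserts that the lemma follows by combining the $\delta$-dependence from Proposition~\ref{fixpsi} with Lemma~\ref{lemm:lam-bound-mathfrakL}, and your argument spells out exactly that chain (a priori absorption for $\mathring\lambda_{i}$, absorption for the difference, then substitution into the Proposition~\ref{fixpsi} bounds). Your side remark that $\mathring\mu_{i}^{2}=O(\delta_{i}^{2})$ is also consistent with the paper, whose Lemma only asserts the weaker $O(\delta_{i})$.
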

\begin{proof}
This follows immediately by combining the the $\delta$-dependence proven in Proposition \ref{fixpsi} with Lemma \ref{lemm:lam-bound-mathfrakL}. 
\end{proof}

This completes the proof of Proposition~\ref{fixlam}.

\section{Proof of the Main Result}
Proposition~\ref{fixlam} and Theorem \ref{thm:HBH-geometric-main} (cf.\ Theorem 1.3 in \cite{HBH:geometric}) immediately imply that we have actually solved the Einstein--Klein--Gordon equations in $\mathcal{M}$.
\begin{proposition}Let $\delta \geq 0$ be sufficiently small and let $\left( \mathring\sigma, B,(\mathring X,\mathring Y),\mathring \Theta,\psi,\mathring{\mu}^2,\mathring{\lambda}\right)$ be produced by Proposition~\ref{fixlam}. Then, the corresponding metric $g_{\delta}$ (see Remark~\ref{unrenormalize}) is a solution to the Einstein--Klein--Gordon equations on $\mathcal{M}$.
\end{proposition}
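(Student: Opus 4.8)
The strategy is essentially to assemble the pieces already in hand and invoke the converse direction of Theorem~\ref{thm:HBH-geometric-main}. By Proposition~\ref{fixlam} we have, for $\delta\geq 0$ sufficiently small, renormalized unknowns $\left(\mathring\sigma,\left(B^{(N)}_{\chi},B^{(S)}_{\chi'},B^{(A)}_z\right),(\mathring X,\mathring Y),\mathring\Theta,\psi,\mathring\mu^2,\mathring\lambda\right)$ lying in the appropriate $\epsilon$-balls, and these solve the full renormalized system~(\ref{renormalsigmaeqn}), (\ref{Btripleeqn}), (\ref{X1})--(\ref{Y1}), (\ref{thetaringeqnrho})--(\ref{thetaringeqnz}), (\ref{theeqn10}), and~\eqref{lamdef-rho}--\eqref{lambdef-z}. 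First I would undo the renormalizations as in Remark~\ref{unrenormalize} to recover the unrenormalized quantities $(\sigma,B,(X,Y),W,\psi,\mu^2,\lambda)$ and hence the ``metric data'' $(X,W,\theta,\sigma,\lambda)$, where $\theta=dY+B$ and $V=(\sigma^2-W^2)/X$; this gives a metric $g_\delta$ of the form~(\ref{eq:metric-ansatz}) on $\mathcal{M}$ together with a scalar field $\Psi_\delta=e^{-it\omega}e^{im\phi}\psi$.

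The key step is to verify the hypotheses of the converse part of Theorem~\ref{thm:HBH-geometric-main}, namely that the metric data and reduced scalar field solve \emph{each} of the equations (1)--(6) in that theorem (stated in terms of $\theta$, $\omega$, $\mu^2$ rather than the renormalized variables), and that $|\partial\sigma|\neq 0$ on $\mathscr{B}$. The equations for $\mathring\sigma$, $\mathring X$, $\mathring Y$, $\mathring\Theta$, $\psi$, and $\mathring\lambda$ were \emph{derived} from those in Theorem~\ref{thm:HBH-geometric-main} by substituting the renormalizations and writing $\theta=dY+B$, so each renormalized equation is equivalent to the corresponding equation of Theorem~\ref{thm:HBH-geometric-main}; this equivalence must be run in reverse. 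Two points require genuine (though already-established) input. First, one must know that $\mathring\omega$ is actually the \emph{constant} $\omega$: this is precisely the content of Corollary~\ref{omconst} together with the discussion following it, so that the equations we have solved---which were written with $\mathring\omega$ in place of $\omega$---are in fact the equations of Theorem~\ref{thm:HBH-geometric-main}. Second, $dB=\theta-dY$ must satisfy the $d\theta$ equation of Theorem~\ref{thm:HBH-geometric-main}; this follows from~\eqref{dthatB} (which says $dB$ equals the right-hand side of the $d\theta$ equation) together with $d(dY)=0$. The condition $|\partial\sigma|\neq 0$ is immediate since $\mathring\sigma\in B_\epsilon(\mathcal{L}_\sigma)$ forces $\sigma$ to be a small perturbation of $\rho$ in $C^1$, so $\partial_\rho\sigma$ is close to $1$ and in particular nonvanishing on $\mathscr{B}$; this was already remarked after Theorem~\ref{thm:HBH-geometric-main}.

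Having checked these hypotheses, the converse direction of Theorem~\ref{thm:HBH-geometric-main} yields directly that $(\mathcal{M},g_\delta,\Psi_\delta)$ solves the Einstein--Klein--Gordon equations on all of $\mathscr{B}$, hence on $\mathcal{M}$. I would also note, as in the excerpt's closing paragraph of the ``Metric quantities and Equations'' section, that Theorem~\ref{thm:HBH-geometric-main} only asserts this on the dense set $\mathscr{B}$ (i.e.\ away from the axis and horizon), which is all that is needed here---the extension to a regular spacetime with bifurcate Killing horizon is the subject of the companion result \cite[Proposition 2.2.1]{HBH:geometric} and is carried out separately, using the function-space memberships $\mathring\sigma\in\mathcal{L}_\sigma$, $(\mathring X,\mathring Y)\in\mathcal{L}_{X,Y}$, etc., which were built precisely to guarantee the necessary regularity and compatibility conditions. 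I expect the main (and only real) obstacle to be bookkeeping: carefully tracking that no equation of Theorem~\ref{thm:HBH-geometric-main} was altered except by the replacement $\omega\rightsquigarrow\mathring\omega$, and then citing Corollary~\ref{omconst} to eliminate that discrepancy. Everything else is a direct substitution plus the already-proven membership in the relevant Banach spaces.
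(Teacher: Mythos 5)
Your proposal is correct and follows essentially the same route as the paper: the paper dispatches this proposition in a single sentence, noting that Proposition~\ref{fixlam} together with the converse direction of Theorem~\ref{thm:HBH-geometric-main} ``immediately imply'' the result. Your write-up simply fleshes out that sentence with the same ingredients the paper has already put in place --- the unrenormalization of Remark~\ref{unrenormalize}, Corollary~\ref{omconst} to replace $\mathring\omega$ by the constant $\omega$, the identity~\eqref{dthatB} for $dB$, and the smallness of $\mathring\sigma$ to guarantee $|\partial\sigma|\neq 0$ --- so there is no divergence from the paper's argument, only a more explicit accounting of it.
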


There are three remaining steps in the proof of Theorem~\ref{timeperiodicsoln}:
\begin{enumerate}
    \item We must show that the metric $g_{\delta}$ is smooth and that $(\mathcal{M},g_{\delta})$ is ``extendable to a regular black hole spacetime'' in the sense of Definition 2.5 in \cite{HBH:geometric}.
    \item We must show that the extended solution $(\tilde{\mathcal{M}},\tilde{g}_{\delta})$ is asymptotically flat.
    \item We must show that the scalar field $\psi$ decays exponentially with respect to $r$ and smoothly extends to $(\tilde{\mathcal{M}},\tilde{g}_{\delta})$.
    \item By introducing variations in the Kerr parameters $(a,M)$, we must arrange for the Klein--Gordon mass $\mu^2$ to be constant in $\delta$.
\end{enumerate}
The next four sections will resolve each of these issues. We remark that arguments similar to those used to handle (1) and (2) can be found in \cite{chrusciellopes}.

\subsection{Boundary Conditions and Regularity}
We begin by checking that $\mathring{\lambda}$ satisfies the desired compatibility conditions along $\{\rho = 0\}$. Recall that in Corollary \ref{omconst} we have seen that $W/X$ is constant on $\mathscr{H}$. We denote this constant by $-\Omega$.
\begin{proposition}\label{compatible}Let $\delta$ be sufficiently small and $\left( \mathring\sigma, B,(\mathring X,\mathring Y),\mathring \Theta,\psi,\mathring{\mu}^2,\mathring{\lambda}\right)$ be produced by Proposition~\ref{fixlam}. Then there exists a constant \index{Metric Quantities!$\kappa$}$\kappa > 0$ such that $\mathring{\lambda}$ satisfies the following:
\begin{enumerate}
    \item $\left(e^{2\lambda} - \rho^{-2}X\right)|_{\mathscr{A}} = 0$.
    \item There exists $\kappa > 0$ such that $\left(e^{2\lambda} - \kappa^{-2}\rho^{-2}\left(V-2\Omega W - \Omega^2X\right)\right)|_{\mathscr{H}} = 0$.
    \item On the set $\overline{\mathscr{B}_N}$ we have $\left(\left(\chi^2+s^2\right)e^{2\lambda} - s^{-2}X\right)|_{\{s=0\}\cup \{\chi = 0\}} = 0$.
    \item On the set $\overline{\mathscr{B}_N}$ we have $\left(\left(\chi^2+s^2\right)e^{2\lambda} - \kappa^{-2}\chi^{-2}\left(V-2\Omega W - \Omega^2X\right)\right)|_{\{s=0\}\cup \{\chi = 0\}} = 0$.
    \item On the set $\overline{\mathscr{B}_S}$ we have $\left(\left((\chi')^2+(s')^2\right)e^{2\lambda} - (s')^{-2}X\right)|_{\{s'=0\}\cup \{\chi' = 0\}} = 0$.
    \item On the set $\overline{\mathscr{B}_S}$ we have $\left(\left((\chi')^2+(s')^2\right)e^{2\lambda} - \kappa^{-2}(\chi')^{-2}\left(V-2\Omega W - \Omega^2X\right)\right)|_{\{s'=0\}\cup \{\chi' = 0\}} = 0$.
\end{enumerate}
\end{proposition}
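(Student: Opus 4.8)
\textbf{Proof plan for Proposition~\ref{compatible}.}
The plan is to exploit the fact that each compatibility condition is a statement \emph{only on the boundary} $\{\rho = 0\}$ (respectively $\{s = 0\}\cup\{\chi=0\}$ and its primed analogue), and that the Kerr metric itself satisfies all of these conditions exactly. Write $e^{2\lambda} = e^{2\lambda_K}e^{2\mathring\lambda}$ and $X = X_K(1+\mathring X)$. Since the Kerr data satisfies (1)--(6) with some constant $\kappa_K$ (this is where $\kappa$ comes from: we will take $\kappa$ to be a $\delta$-dependent perturbation of $\kappa_K$), it suffices to control the \emph{difference} quantities, e.g.\ for (1) it suffices to show that $\left(e^{2\lambda_K}(e^{2\mathring\lambda}-1) - \rho^{-2}X_K\mathring X\right)$ vanishes on $\mathscr{A}$, and similarly for the others. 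Both $\mathring\lambda$ and $\mathring X$ lie in $\hat C^{1,\alpha_0}$-type spaces and vanish in the limit $\delta\to 0$, so the point is purely to verify that the \emph{restriction} to the axis/horizon of the relevant combination is zero, not that it is small.

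The key input is the first-order equation \eqref{eq:lam-full-eqn}, $d\mathring\lambda = \alpha - \alpha_K - \tfrac12 d\log(1+\mathring X)$, established at the end of Section~\ref{asectionforlambda}. First I would integrate this equation along the axis: parametrize $\mathscr{A}_N$ by $z$, use that $\mathring\lambda \to 0$ as $z\to+\infty$ (the asymptotically flat end, which follows from $\mathring\lambda\in\mathcal{L}_\lambda$ and the decay built into $\mathfrak L$), and integrate $\partial_z\mathring\lambda = \alpha_z - (\alpha_K)_z - \tfrac12\partial_z\log(1+\mathring X)$ from $+\infty$ down to the point in question. The crucial step is to examine the restriction of $\alpha_z - (\alpha_K)_z$ to $\{\rho = 0\}$: using the explicit formulas \eqref{eq:alpha-rho-defn}, \eqref{eq:alpha-z-defn} for $\alpha$ together with the fact that $\sigma = \rho(1+\mathring\sigma)$ with $\mathring\sigma$ vanishing rapidly at the axis (it is $O(\psi^2)$ and $\psi = O(\rho^{10})$), and that $\theta = dY + B$ with $Y = Y_K + X_K\mathring Y$, $B$ decaying rapidly, one reads off that $\alpha_z|_{\mathscr{A}}$ combines with $\tfrac12\partial_z\log X_K$ in exactly the way needed for $\rho^{-2}X = e^{2\lambda}$ to hold on the axis. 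Concretely, the claim $\partial_z(\lambda - \tfrac12\log(\rho^{-2}X)) = 0$ along $\mathscr{A}$ should drop out of $\partial_z\mathring\lambda + \tfrac12\partial_z\log X = \alpha_z - (\alpha_K)_z + \tfrac12\partial_z\log X_K$ once the $\sigma$-dependent and $\theta$-dependent terms are seen to vanish on $\{\rho=0\}$; since $\lambda_K - \tfrac12\log(\rho^{-2}X_K) \equiv 0$ on $\mathscr{A}$ for Kerr, and both sides agree at $z = +\infty$, this gives (1). The same integration argument, now in $(s,\chi)$ coordinates and using the $(s,\chi)$-form of $\alpha$ written out in Lemma~\ref{lemm:lam-bound-mathfrakL}, handles (3) and (5); one integrates along $\{s = 0\}$ (resp.\ $\{\chi = 0\}$) starting from a point where the $(\rho,z)$-based argument already applies, and uses continuity at $p_N$ (resp.\ $p_S$).

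For the horizon conditions (2), (4), (6), the plan is analogous but one must also produce the constant $\kappa$. Here I would use that on $\mathscr{H}$ we have $W/X = -\Omega$ constant (Corollary~\ref{omconst}), so $V - 2\Omega W - \Omega^2 X = (XV + W^2)/X \cdot (\text{correction}) $; more precisely $V = (\sigma^2 - W^2)/X$ gives $V - 2\Omega W - \Omega^2 X = \sigma^2/X - (W/X + \Omega)^2 X = \sigma^2/X$ on $\mathscr{H}$ since $W/X = -\Omega$ there. Thus condition (2) becomes $e^{2\lambda}|_{\mathscr{H}} = \kappa^{-2}\rho^{-2}\sigma^2/X|_{\mathscr{H}} = \kappa^{-2}\rho^{-2}\sigma^2 X^{-1}|_{\mathscr{H}}$, and since $\sigma = \rho(1+\mathring\sigma)$ this is $e^{2\lambda}X|_{\mathscr{H}} = \kappa^{-2}(1+\mathring\sigma)^2|_{\mathscr{H}}$. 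The function $e^{2\lambda}X$ restricted to $\mathscr{H}$ is, by the first-order equation, seen to be \emph{constant} along $\mathscr{H}$ (its logarithmic $z$-derivative is $2\partial_z\lambda + \partial_z\log X = 2\alpha_z - 2(\alpha_K)_z + 2\cdot\tfrac12\partial_z\log X_K$, which one checks vanishes on $\mathscr{H}$ by the same term-by-term analysis, using that $\partial_\rho Y + B_\rho$ vanishes on the horizon — this is precisely the computation behind Corollary~\ref{omconst}); call this constant $c_\delta$. Then \emph{define} $\kappa = \kappa(\delta)$ by $\kappa^2 = (1+\mathring\sigma|_{\mathscr{H}})^2/c_\delta$ — one checks $1+\mathring\sigma$ is also constant on $\mathscr{H}$, or absorbs any non-constancy into noting both sides are genuinely functions; since $e^{2\lambda_K}X_K|_{\mathscr{H}} = \kappa_K^2 > 0$ and all renormalized quantities are small, $\kappa(\delta) > 0$ for $\delta$ small, and $\kappa(\delta) \to \kappa_K$. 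This choice makes (2) hold by construction; then (4) and (6) follow by integrating along $\{\chi = 0\}$ (resp.\ $\{s' = 0\}$) from a point covered by (2), using continuity at the poles and the $(s,\chi)$-form of $\alpha$ as above.

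\textbf{Main obstacle.} The delicate part is the term-by-term verification that the singular contributions to $\alpha_z - (\alpha_K)_z$ — those involving $\partial_z\log X$ (which behaves like nothing at the axis in the $z$-direction but whose $\rho$-partner $\partial_\rho\log X \sim 2/\rho$ interacts through the $|\partial\sigma|^{-2}$ prefactor) and the $\theta$-squared terms — exactly cancel their Kerr counterparts on the boundary, so that the boundary restriction of $d(\mathring\lambda + \tfrac12\log X - \tfrac12\log X_K - \lambda_K)$ vanishes. This requires knowing the precise vanishing orders at $\{\rho = 0\}$ of $\mathring\sigma$ (from its equation, $O(\psi^2)$, hence $O(\rho^{20})$), of $B$ (built to vanish like $\rho^{10}$ or $s^{10}$), and of the $z$-component of $\partial Y_K$ (the improved estimate $X_K^{-2}|\partial_z Y_K| \leq Cr^{-5}$ from Lemma~\ref{somestuffthatisuseful}, together with regularity at the axis), as well as the subtle point — already needed throughout Section~\ref{solvingforb} — that $\rho^{-1}\partial_\rho$ of a $\hat C^{\,\cdot}$ function on the axis is controlled by a $\partial^2$ via Lemma~\ref{firstdecay}. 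Once these vanishing orders are assembled, each compatibility condition reduces to the statement that a certain $\hat C^{1,\alpha_0}$ function with zero boundary trace at the asymptotically flat end and zero tangential derivative along the boundary must have zero boundary trace everywhere, which is elementary. The corresponding statement for the Kerr metric, which anchors all of these (and supplies $\kappa_K$), is recorded in \cite[Lemma 3.0.1]{HBH:geometric} and the regularity discussion of \cite[Section 2]{HBH:geometric}.
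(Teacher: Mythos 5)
Your overall strategy matches the paper's: integrate the first-order equation $d\mathring\lambda = \alpha-\alpha_K-\tfrac12 d\log(1+\mathring X)$ along the boundary components, use $W/X=-\Omega$ and $V=(\sigma^2-W^2)/X$ to rewrite the horizon condition, and repeat in $(s,\chi)$, $(s',\chi')$ coordinates for $(3)$--$(6)$. For the axis condition $(1)$, your plan is essentially the paper's proof (the paper phrases it as $\partial_z(\lambda - \tfrac12 u)|_{\mathscr{A}}=0$ with $u=\log(\rho^{-2}X)$, plus the vanishing at $z=\pm\infty$).

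There is, however, a concrete error in your treatment of the horizon condition $(2)$. You assert that ``the function $e^{2\lambda}X$ restricted to $\mathscr{H}$ is, by the first-order equation, seen to be \emph{constant} along $\mathscr{H}$,'' on the basis that $2\alpha_z - 2(\alpha_K)_z + \partial_z\log X_K$ vanishes on $\mathscr{H}$. This is false. From \eqref{eq:alpha-z-defn}, evaluating on $\mathscr{H}$ (where $\sigma=0$, $\partial_z\sigma=0$, $\partial_\rho\sigma=1+\mathring\sigma$), the surviving term is $(\partial_\rho\sigma)(\partial^2_{\rho,z}\sigma)$ and one finds $\alpha_z|_{\mathscr{H}}=\partial_z\log(1+\mathring\sigma)$, which does not vanish in general. (In particular, vanishing of $\partial_\rho Y + B_\rho$ on the horizon — the fact you cite — controls $\mathring\Theta$'s equation but does not kill the $\sigma$-dependent terms in $\alpha_z$.) The correct statement, and what the paper proves, is that $e^{2\lambda}X/(1+\mathring\sigma)^2$ is constant on $\mathscr{H}$. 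This is actually good news for you: you had already reduced $(2)$ to the identity $e^{2\lambda}X|_{\mathscr{H}}=\kappa^{-2}(1+\mathring\sigma)^2|_{\mathscr{H}}$, which is exactly $e^{2\lambda}X/(1+\mathring\sigma)^2 = \kappa^{-2}$. So the $(1+\mathring\sigma)^2$ factor you were worried about (``one checks $1+\mathring\sigma$ is also constant on $\mathscr{H}$, or absorbs any non-constancy\dots'') does not need to be constant — it appears symmetrically on both sides and cancels. You should drop the false claim and simply record $\partial_z\log\bigl(e^{2\lambda}X/(1+\mathring\sigma)^2\bigr)|_{\mathscr{H}}=0$, from which $(2)$ follows directly and $\kappa>0$ is guaranteed for small $\delta$ by comparison with Kerr.
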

\begin{proof}
Note that for $\delta$ sufficiently small, it follows by comparison with $X_K$ that $\rho^{-2}X|_{\mathscr{A}} > 0$. We then define a function $u \doteq \log\left(\rho^{-2}X\right)$. Observe that it follows easily from our previous estimates for $X$ that $\partial_zu$ continuously extends to $\mathscr{A}$.

Then, evaluating the $z$-component of $\lambda$'s equation from Theorem 1.3 of \cite{HBH:geometric} along the axis $\mathscr{A}$ (keep in mind that the right hand side of $\lambda$'s equation does not vanish identically on $\mathscr{A}$) yields
\[\partial_z\left(\lambda - \frac{1}{2}u\right)\Big|_{\mathscr{A}} = 0.\]
Since $\lambda - \frac{1}{2}u$ is easily seen to vanish at $z = \pm\infty$, we conclude that $\lambda- \frac{1}{2}u$ vanishes everywhere along $\mathscr{A}$. In terms of $X$ we thus have $\left(e^{2\lambda} - \rho^{-2}X\right)|_{\mathscr{A}} = 0$.

Now we turn to the horizon. In this case evaluating the $z$-component of $\lambda$'s equation \eqref{eq:lam-full-eqn} yields
\[\partial_z\left(\lambda + \frac{1}{2}\log(X)-\frac{1}{2}\log\left(\left(1+\mathring{\sigma}\right)^2\right)\right)\Big|_{\mathscr{H}} = 0,\]
where we recall that $\sigma = \rho\left(1+\mathring{\sigma}\right)$. We conclude that there exists a constant $c$ such that
\[e^{2\lambda}|_{\mathscr{H}} = c\frac{\left(1+\mathring{\sigma}\right)^2}{X}\Big|_{\mathscr{H}}.\]

From the definition of $\sigma$, it follows that
\[V = \frac{\sigma^2}{X} - \frac{W^2}{X}.\]
In particular,
\[V - 2\Omega W - \Omega^2X = \frac{\sigma^2}{X} - X^{-1}\left(W + \Omega X\right)^2.\]
It then is easy to establish
\[\rho^{-2}\left(V - 2\Omega W - \Omega^2X\right)|_{\mathscr{H}} = \frac{\left(1+\mathring{\sigma}\right)^2}{X}\Big|_{\mathscr{H}},\]
which implies
\[\left(e^{2\lambda} - c\rho^{-2}\left(V-2\Omega W - \Omega^2X\right)\right)|_{\mathscr{H}} = 0.\]
The fact that $c$ is positive for sufficiently small $\delta$ follows by comparison with its value on the Kerr spacetime.

The remaining statements are easily proved by repeating the above strategies in $(s,\chi)$ and $(s',\chi')$ coordinates. We omit the details.
\end{proof}

The following two lemmas will be useful when we study the regularity of our solutions.
\begin{lemma}\label{regu}Let $\mathbb{R}^2$ be covered by Cartesian coordinates $(x,y)$ and polar coordinates $(r,\theta)$. For $m\in\ZZ$ and $k\in\mathbb{N}$, if the function defined in polar coordinates
\[
u_{k,m}(r,\theta) \doteq r^{k}e^{im\theta}
\]
is in $C^{k}_{loc}(\mathbb{R}^{2})$, then $k \geq |m|$.
\end{lemma}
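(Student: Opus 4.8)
The plan is to expand $u_{k,m}$ in Cartesian coordinates and extract a sharp obstruction from its Taylor polynomial at the origin. First I would write, using $e^{i\theta} = (x+iy)/r$ and $e^{-i\theta} = (x-iy)/r$, the identity
\[
u_{k,m}(r,\theta) = r^{k} e^{im\theta} = r^{k-|m|} (x + i\,\mathrm{sgn}(m)\, y)^{|m|},
\]
valid away from the origin (and for $m=0$ this is just $r^k$). The factor $(x+i\,\mathrm{sgn}(m)y)^{|m|}$ is a genuine polynomial in $(x,y)$, homogeneous of degree $|m|$, so the whole function is $r^{k-|m|}$ times a polynomial. The claim is then that this is not $C^{k}_{\mathrm{loc}}$ near $0$ unless $k \geq |m|$.

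The key step is a homogeneity/blow-up argument at the origin. Suppose for contradiction that $k < |m|$ but $u_{k,m} \in C^{k}_{loc}(\mathbb{R}^2)$. Then its degree-$k$ Taylor polynomial $P$ at the origin exists, and $u_{k,m}(v) - P(v) = o(|v|^{k})$ as $v \to 0$. But $u_{k,m}$ is positively homogeneous of degree $k$: $u_{k,m}(tv) = t^{k} u_{k,m}(v)$ for $t > 0$. Each homogeneous piece $P_j$ of $P$ (degree $j \leq k$) must then satisfy, after rescaling $v \mapsto tv$, letting $t \to 0^{+}$, and comparing powers of $t$: all $P_j$ with $j < k$ vanish, and $P_k = u_{k,m}$ on all of $\mathbb{R}^2$. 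So $u_{k,m}$ would have to be a polynomial of degree exactly $k$. On the other hand $u_{k,m}(x,y) = r^{k-|m|}(x + i\,\mathrm{sgn}(m) y)^{|m|}$ with $k - |m| < 0$; restricting to the ray $y = 0$, $x > 0$ gives $x^{k-|m|} \cdot x^{|m|} = x^{k}$, which is fine, but restricting to the ray $x = 0$, $y>0$ gives $y^{k-|m|}\cdot (i\,\mathrm{sgn}(m))^{|m|} y^{|m|} = (\pm i)^{|m|} y^{k}$. Comparing with the polynomial expression along these two rays — a polynomial homogeneous of degree $k$ restricted to a line through the origin is a monomial $c\, s^{k}$ — one sees the ``$r^{k-|m|}$'' factor is illusory and one may as well argue directly: a degree-$k$ homogeneous polynomial $Q$ on $\mathbb{R}^2$ is a trigonometric polynomial $Q(r,\theta) = r^k \sum_{|j|\le k,\ j\equiv k\ (2)} c_j e^{ij\theta}$, whereas $u_{k,m}(r,\theta) = r^k e^{im\theta}$ has a single Fourier mode $e^{im\theta}$ with $|m| > k$, which is impossible by orthogonality of the $e^{ij\theta}$. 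This contradiction forces $k \geq |m|$.

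So concretely the order of operations is: (i) rewrite $u_{k,m}$ in Cartesian form and record its positive homogeneity of degree $k$; (ii) observe that any $C^k_{loc}$ function that is positively homogeneous of degree $k$ must equal its degree-$k$ Taylor polynomial, hence be a homogeneous polynomial of degree $k$ (this is the clean packaging of the blow-up step); (iii) expand a degree-$k$ homogeneous polynomial on $\mathbb{R}^2$ in the angular Fourier basis $\{e^{ij\theta}: |j|\le k,\ j \equiv k \pmod 2\}$ and note $u_{k,m}$ has the single mode $e^{im\theta}$, forcing $|m| \le k$ (and in fact $m \equiv k \pmod 2$, though only $|m|\le k$ is needed). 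I expect step (ii) to be the only place requiring any care — one must justify that the $o(|v|^k)$ remainder, combined with exact homogeneity, kills all lower-order Taylor terms and pins down the top one; this is a short scaling argument ($t\to 0^+$ in $t^{-j}[u_{k,m}(tv) - \sum_{i\le j} P_i(tv)]$) but it is the conceptual heart. Everything else — the Cartesian rewriting and the Fourier orthogonality — is routine.
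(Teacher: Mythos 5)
Your argument is correct and takes a genuinely different route from the paper's. The paper exploits the identity $\Delta u_{k,m} = (k^2-m^2)u_{k-2,m}$ and iterates: applying $\Delta$ a total of $\lfloor k/2\rfloor$ times drops the regularity down to $C^0$ or $C^1$ while dropping the index to $u_{0,m}$ or $u_{1,m}$, reducing to the base cases $k=0,1$ where the failure of continuity/differentiability of $e^{im\theta}$ and $r e^{im\theta}$ for $|m| > k$ is checked directly. You instead observe that $u_{k,m}$ is positively homogeneous of degree $k$, so if it is $C^k_{\mathrm{loc}}$ near the origin the scaling/blow-up argument (comparing powers of $t$ in $u_{k,m}(tv) = t^k u_{k,m}(v)$ against the degree-$k$ Taylor expansion with $o(|v|^k)$ Peano remainder) forces $u_{k,m}$ to coincide with its top-degree Taylor piece, hence to be a homogeneous polynomial of degree $k$. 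The angular Fourier content of such a polynomial lies in $\{e^{ij\theta} : |j|\le k,\ j\equiv k\ (\mathrm{mod}\ 2)\}$, so the single mode $e^{im\theta}$ forces $|m|\le k$. Both proofs are sound; yours has the advantage of being purely Taylor-theoretic and not requiring the polar Laplacian computation, and it also immediately gives the parity constraint $m \equiv k\ (\mathrm{mod}\ 2)$ as a byproduct. The scaling step you flag as the ``conceptual heart'' is indeed the one place needing care, and your sketch of it — peeling off the homogeneous pieces $P_0, P_1, \dots$ by successive division by $t$ and sending $t\to 0^+$ — is the right way to package it.
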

\begin{proof}
This is easily checked for $k=0,1$. In general, observe that
\[
\Delta u_{k,m} = \frac 1 r \frac{\partial}{\partial r} \left( r \frac{\partial u_{k,m}}{\partial r} \right) + \frac{1}{r^{2}} \frac{\partial^{2}u_{k,m}}{\partial \theta^{2}} = (k^{2}-m^{2}) u_{k-2,m}.
\]
Thus, if $u_{k,m}\in C^{k}_{loc}(\RR^{2})$, then we see that
\[
C^{k-2\lfloor\frac k 2 \rfloor}_{loc}(\mathbb{R}^{2}) \ni (\Delta)^{\lfloor \frac k 2 \rfloor} u_{k,m} =  \begin{cases} (k^{2}-m^{2})((k-2)^{2}-m^{2})\cdots(-m)^{2} r e^{im\theta} & k \text{ even}\\(k^{2}-m^{2})((k-2)^{2}-m^{2})\cdots(1-m^{2})e^{im\theta} & k \text{ odd} . \end{cases}
\]
From the $k=0,1$ case, we see that one of the terms $(k-2j)^{2}-m^{2}$ must vanish. Thus, $k\geq |m|$.
\end{proof}

\begin{lemma}\label{usefullemma}Let $\mathbb{R}^2$ be covered by Cartesian coordinates $(x,y)$ and polar coordinates $(r,\theta)$. Consider a function $g(x,y) \doteq e^{im\theta}f(r)$. Then $g \in C^{k,\alpha}\left(\mathbb{R}^2\right)$ implies that
\begin{enumerate}
    \item $f \in C^{k,\alpha}\left(\mathbb{R}\right)$.
    \item $\frac{d^jf}{d^jr}(0) = 0$ for all $0 \leq j \leq {\rm min}\left(|m| - 1,k\right)$.
    \item $\frac{d^jf}{d^jr}(0) = 0$ for all $|m| + 1 \leq j \leq k$ such that $j-|m|$ is odd.
\end{enumerate}
\end{lemma}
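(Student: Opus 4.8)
The plan is to reduce the statement to an elementary fact about Taylor expansions of smooth functions on $\RR^2$ which are invariant under the $\mathbb{Z}$-action $\theta \mapsto \theta + 2\pi/|m|$ in the sense encoded by the factor $e^{im\theta}$. Write $g(x,y) = e^{im\theta}f(r)$ with $(r,\theta)$ polar coordinates. Assertion (1) is immediate: restricting $g$ to the ray $\theta = 0$ gives $g(x,0) = f(x)$ for $x \geq 0$, and restricting to $\theta = \pi$ gives $g(-x,0) = e^{im\pi}f(x)$; combining these (and using that $g \in C^{k,\alpha}$ means all partials up to order $k$ are $\alpha$-H\"older) shows $f$ extends to a $C^{k,\alpha}$ function on a neighbourhood of $0$ in $\RR$, after possibly replacing $f(-x)$ by $e^{im\pi}f(x)$; in particular the one-sided derivatives $\frac{d^jf}{dr^j}(0)$ for $0 \le j \le k$ all exist.

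For (2) and (3), the idea is to exploit the structure of the Taylor polynomial of $g$ at the origin. Since $g \in C^{k,\alpha}(\RR^2)$, we may write $g(x,y) = \sum_{j=0}^{k} P_j(x,y) + o(r^k)$ where $P_j$ is a homogeneous polynomial of degree $j$ in $(x,y)$; in polar coordinates $P_j(x,y) = r^j Q_j(\theta)$ where $Q_j$ is a trigonometric polynomial containing only the frequencies $e^{i\ell\theta}$ with $|\ell| \le j$ and $\ell \equiv j \pmod 2$ (this is the standard fact that a homogeneous degree-$j$ polynomial in $x = r\cos\theta$, $y = r\sin\theta$ is a combination of $e^{i\ell\theta}$, $\ell = -j,-j+2,\dots,j$). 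On the other hand, by hypothesis $g(x,y) = e^{im\theta}f(r)$, and expanding $f(r) = \sum_{j} \tfrac{1}{j!}\frac{d^jf}{dr^j}(0)\, r^j + o(r^k)$ gives $g = \sum_j \tfrac{1}{j!}\frac{d^jf}{dr^j}(0)\, r^j e^{im\theta} + o(r^k)$. Matching the $r^j$ coefficients of the two expansions: for each $0 \le j \le k$ we must have $\tfrac{1}{j!}\frac{d^jf}{dr^j}(0)\, e^{im\theta} = Q_j(\theta)$. But $Q_j$ only contains frequencies $\ell$ with $|\ell| \le j$ and $\ell \equiv j \pmod 2$, whereas $e^{im\theta}$ has frequency $m$. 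Hence if $\frac{d^jf}{dr^j}(0) \neq 0$ we are forced to have $|m| \le j$ and $m \equiv j \pmod 2$. Contrapositively: $\frac{d^jf}{dr^j}(0) = 0$ whenever $j < |m|$ (this is (2)), and $\frac{d^jf}{dr^j}(0) = 0$ whenever $j \equiv |m| + 1 \pmod 2$, i.e. whenever $j - |m|$ is odd (this is (3), for $|m|+1 \le j \le k$).

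The one point requiring a little care — and the place I expect the main (though still modest) obstacle — is making the ``matching the $r^j$ coefficients'' step rigorous when $g$ is only $C^{k,\alpha}$ rather than $C^\infty$: one does not literally have a convergent series, only a Taylor polynomial with a controlled remainder. The clean way is to argue one order at a time. Suppose inductively that $P_0 = \dots = P_{j-1} = 0$ as trigonometric expressions and that $\frac{d^if}{dr^i}(0) = 0$ for $i < j$; then $g(x,y) = P_j(x,y) + o(r^j) = \tfrac{1}{j!}\frac{d^jf}{dr^j}(0)\,r^j e^{im\theta} + o(r^j)$, so $r^{-j}P_j(x,y) \to \tfrac{1}{j!}\frac{d^jf}{dr^j}(0) e^{im\theta}$ as $r \to 0$ along each fixed $\theta$, forcing the trigonometric polynomial $Q_j(\theta) = \tfrac{1}{j!}\frac{d^jf}{dr^j}(0) e^{im\theta}$; comparing frequencies as above gives the conclusion, and one then knows $P_j \equiv 0$ precisely when $\frac{d^jf}{dr^j}(0) = 0$, which feeds the induction. (Alternatively, and perhaps more slickly, one can invoke Lemma~\ref{regu}: if $\frac{d^jf}{dr^j}(0) \neq 0$ then $g$ agrees to leading order with a nonzero multiple of $u_{j,m} = r^j e^{im\theta}$ plus strictly higher-order terms, and a direct computation of the order-$j$ difference quotients of $g$ shows $u_{j,m}$ would have to lie in $C^j_{loc}(\RR^2)$, whence $j \ge |m|$ by Lemma~\ref{regu}; the parity refinement (3) then comes from applying the same reasoning to $\Delta^{\lfloor (j-|m|)/2\rfloor}$ applied to the leading term, exactly as in the proof of Lemma~\ref{regu}.) Either route is short; the only thing to be careful about is not over-claiming regularity of $f$ beyond $C^{k,\alpha}$ and keeping the remainder estimates honest.
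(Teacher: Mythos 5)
Your proposal is correct, and your primary route (matching Taylor coefficients against the frequency decomposition of homogeneous polynomials) differs from the paper's. The paper proves (2) by an induction on $j$: assuming $f^{(i)}(0)=0$ for $i<j$, it writes $f(r)=r^j\tilde f(r)$ via Taylor's theorem and argues that $\tilde f(0)\neq 0$ would force $r^j e^{im\theta}$ to be in $C^j_{loc}(\mathbb{R}^2)$, contradicting Lemma~\ref{regu}; (3) is noted to follow by a similar argument. Your main approach instead writes the degree-$j$ Taylor piece of $g$ as $P_j(x,y)=r^j Q_j(\theta)$, observes that $Q_j$ can only contain frequencies $e^{i\ell\theta}$ with $|\ell|\le j$ and $\ell\equiv j\pmod 2$ (since $(x\pm iy)^a$ are the basic degree-one pieces), and uses the uniqueness of the Taylor polynomial along rays to identify $Q_j(\theta)=\tfrac{1}{j!}f^{(j)}(0)e^{im\theta}$; both (2) and (3) then drop out simultaneously by comparing frequencies. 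The two arguments encode the same degree--frequency constraint — the paper's Lemma~\ref{regu} gets it from iterating the Laplacian, your version gets it from the $(x\pm iy)$-decomposition of homogeneous polynomials — but yours is somewhat more self-contained: it isolates the constraint at the level of Taylor polynomials and handles both parity cases in one stroke, whereas the paper's argument needs a separate iteration for (3). You were right to flag the remainder-matching step as the place needing care; the inductive one-order-at-a-time argument you give (or the equivalent ``subtract all lower Taylor pieces and divide by $r^j$'' computation) closes it cleanly. Your briefly sketched alternative via Lemma~\ref{regu} is essentially the paper's approach, so you have both routes in hand.
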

\begin{proof}Assume that $g \in C^{k,\alpha}\left(\mathbb{R}^2\right)$. First of all, since we have $g(x,0) = f(x)$, we immediately conclude that $f \in C^{k,\alpha}\left(\mathbb{R}\right)$.

We now turn to the second assertion. Fix $|m|$ and $k$; we proceed by induction in $j$. For $j=0$, the assertion easily follows from the continuity of $f$ and $g$. Now, assuming that the second assertion holds for $j-1$, then Taylor's theorem implies that there is $\tilde f \in C^{k,\alpha}(\RR)$ with $f(r) = r^{j}\tilde f(r)$. If $\frac{d^{j}f}{dr^{j}}(0) \not = 0$, then $\tilde f(0) \not = 0$, implying that
\[
e^{im\theta}r^{j} \in C^{k,\alpha}_{loc}(\RR^{2}).
\]
This contradicts Lemma \ref{regu}.

The third assertion is proved with a similar argument; we omit the details.
\end{proof}

We need one last definition before we are ready for the full regularity argument.
\begin{definition}We say that \index{Function Spaces!$\hat{C}^{k,\alpha}_{\rm m,azi}\left(\overline{\mathscr{B}}\right)$}$f \in \hat{C}^{k,\alpha}_{\rm m,azi}\left(\overline{\mathscr{B}}\right)$ if
\begin{enumerate}
    \item $f \in \hat{C}^{k,\alpha}\left(\overline{\mathscr{B}}\right)$
    \item After the introduction of Cartesian coordinates $(x,y,z)$ in $\overline{\mathscr{B}_A}$ by $x = \rho\cos\phi$ and $y = \rho\sin\phi$, the function $e^{im\phi}f|_{\overline{\mathscr{B}_A}} \in C^{k,\alpha}\left(\mathbb{R}^3\right)$.
    \item After the introduction of Cartesian coordinates $(x,y,w,v)$ in $\overline{\mathscr{B}_N}$ by $x = s\cos\phi$, $y = s\sin\phi$, $w = \chi\sin\phi$, $v = \chi\cos\phi$, the function $e^{im\phi}f|_{\overline{\mathscr{B}_N}} \in C^{k,\alpha}\left(\mathbb{R}^4\right)$.
    \item After the introduction of Cartesian coordinates $(x,y,w,v)$ in $\overline{\mathscr{B}_S}$ by $x = s'\cos\phi$, $y = s'\sin\phi$, $w = \chi'\sin\phi$, $v = \chi'\cos\phi$, the function $e^{im\phi}f|_{\overline{\mathscr{B}_N}} \in C^{k,\alpha}\left(\mathbb{R}^4\right)$.
\end{enumerate}
Recall that the value of $m$ has been fixed in Remark~\ref{fixmalpha0}.
\end{definition}

Now we show how the regularity for each of the renormalized unknowns can be upgraded.
\begin{proposition}\label{prop:full-reg}Let $\delta$ be sufficiently small and $\left( \mathring\sigma, B,(\mathring X,\mathring Y),\mathring \Theta,\psi,\mathring{\mu}^2,\mathring{\lambda}\right)$ be produced by Proposition~\ref{fixlam}. Then
\[\left( \mathring\sigma, B,(\mathring X,\mathring Y),\mathring \Theta,\mathring{\lambda}\right) \in \hat{C}^{\infty}\left(\overline{\mathscr{B}}\right)\times \cdots\times \hat{C}^{\infty}\left(\overline{\mathscr{B}}\right),\]
\[\psi \in \hat{C}^{\infty}_{\rm m,azi}\left(\overline{\mathscr{B}}\right)\]
\end{proposition}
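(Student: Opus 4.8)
\textbf{Proof strategy for Proposition~\ref{prop:full-reg}.} The plan is to bootstrap from the finite-regularity bounds already established (the solution lies in the various $\mathcal{L}_\ast$ spaces, which encode $\hat C^{2,\alpha_0}$ or $\hat C^{1,\alpha_0}$ control) to full $\hat C^\infty$ regularity, treating each renormalized unknown \emph{in the order they were solved}---$\mathring\sigma$, $B$, $(\mathring X,\mathring Y)$, $\mathring\Theta$, $\psi$, $\mathring\lambda$---since the equation for each later unknown has coefficients built out of the earlier (now smoother) unknowns. For $\mathring\sigma$ the equation is $\Delta_{\RR^4}\mathring\sigma=N_\sigma$ where $N_\sigma$ is a smooth expression in the metric data multiplied by $\psi^2$; since $\psi$ already has many $\hat C^{2,\alpha_0}$ derivatives that vanish to high order at the poles and axis, and since away from $\{\rho=0\}$ everything is manifestly smooth, standard interior Schauder estimates in the relevant chart (the $\RR^4$ chart near $p_N,p_S$, the $\RR^4$/$\RR^3$ charts near the axis/horizon) upgrade $\hat C^{k,\alpha_0}$ to $\hat C^{k+2,\alpha_0}$ at each stage; iterating gives $\mathring\sigma\in\hat C^\infty(\overline{\mathscr{B}})$. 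The $1$-form $B$ is then immediate: it is \emph{defined} by integrating an expression (Definition~\ref{defB}) which, once $\mathring\sigma$, $\psi$, and the other data are smooth, is a smooth function vanishing to high order, so $B\in\hat C^\infty$.

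For $(\mathring X,\mathring Y)$ one argues as in the proof of Proposition~\ref{c2alpha} but now without stopping at $\hat C^{2,\alpha_0}$: the nonlinearities $N_X, N_Y$ are smooth functions of $(\mathring X,\mathring Y,\mathring\sigma,B,\psi,\mathring\lambda)$ and their first derivatives, so once those inputs are controlled in $\hat C^{k,\alpha_0}$ the right-hand sides lie in $\hat C^{k-1,\alpha_0}$ (in the appropriate weighted/higher-dimensional sense), and the linear operators $L_X,L_Y$---recast as uniformly elliptic operators on $\RR^3$, $\RR^7_A$, $\RR^4_N$, $\RR^8_N$ (and the $S$ analogues) exactly as in Section~\ref{subsec:XY-renorm-eqn}, with \emph{smooth} coefficients in those charts by Lemma~\ref{lem:xK-regularity} and Lemma~\ref{somestuffthatisuseful}---give back $\hat C^{k+1,\alpha_0}$ via interior Schauder. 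Iterating yields $\mathring X,\mathring Y\in\hat C^\infty(\overline{\mathscr{B}})$ (note $\tilde Y=X_K^{-1}\mathring Y$ is the right quantity near the poles, and the conformal recasting makes its equation regular there). Then $\mathring\Theta$ is again obtained by integrating a first-order equation \eqref{thetaringeqnrho}--\eqref{thetaringeqnz} whose right-hand side is now a smooth function (of smooth data), so $\mathring\Theta\in\hat C^\infty$; and similarly $\mathring\lambda$ is recovered by integrating $d\mathring\lambda=\alpha-\alpha_K-\tfrac12 d\log(1+\mathring X)$ as in \eqref{eq:lam-full-eqn}, whose right-hand side is a smooth function of $\sigma$, $X$, $\theta$, $\psi$ and their derivatives (here one uses that $|\partial\sigma|\neq 0$ for $\delta$ small so the division in the definition of $\alpha$ is harmless, and that the singular pieces $\partial_\rho\log X_K\sim\rho^{-1}$ cancel between $\alpha$ and $\alpha_K$ exactly as in Lemma~\ref{lemm:lam-bound-mathfrakL}), giving $\mathring\lambda\in\hat C^\infty$.

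The scalar field $\psi$ requires the extra care encoded in the target space $\hat C^\infty_{\mathrm{m,azi}}(\overline{\mathscr{B}})$. Away from $\{\rho=0\}$, equation \eqref{theeqn10} is a uniformly elliptic equation with now-smooth coefficients, so $\psi$ is smooth there. Near the axis and the poles one repeats the device of the proof of Lemma~\ref{psireg}: passing to $\psi_m(\rho,z,\phi)=e^{im\phi}\psi(\rho,z)$ (resp.\ the $4$-dimensional analogue near $p_N,p_S$), equation \eqref{psimeqn} (resp.\ \eqref{schipsi}) becomes, in Cartesian coordinates, a uniformly elliptic equation on $\RR^3$ (resp.\ $\RR^4$) whose coefficients are now genuinely smooth---this uses the compatibility conditions of Proposition~\ref{compatible}, which were \emph{not} available at the earlier $\hat C^{0,\hat\alpha}$ stage (cf.\ Remark~\ref{fullregpsi}) but are available now; concretely, $(e^{2\lambda}-\rho^{-2}X)|_{\mathscr{A}}=0$ and its $(s,\chi)$-analogues ensure the metric $Xd\phi^2+e^{2\lambda}(d\rho^2+dz^2)$, written in Cartesian coordinates, has $C^\infty$ coefficients across $\{\rho=0\}$. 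Standard elliptic regularity then gives $e^{im\phi}\psi\in C^\infty$ in each Cartesian chart, which is exactly the statement $\psi\in\hat C^\infty_{\mathrm{m,azi}}(\overline{\mathscr{B}})$. \textbf{The main obstacle} is precisely this last point: one must verify that the compatibility conditions from Proposition~\ref{compatible} hold to the \emph{right order} (not merely leading order) at the axis, horizon, and at $p_N,p_S$---i.e.\ that all the Taylor coefficients of $e^{2\lambda}-\rho^{-2}X$ etc.\ that obstruct Cartesian smoothness vanish---and to organize the bootstrap so that the regularity of $\lambda$ (which feeds into $\psi$'s coefficients) and the regularity of $\psi$ (which feeds into everything else's source terms) are upgraded in a consistent, non-circular loop; Lemmas~\ref{regu} and~\ref{usefullemma} are the tools that translate ``$e^{im\phi}f\in C^{k,\alpha}$'' into vanishing-order statements for $f$, and they must be applied to each metric component in turn to close the argument.
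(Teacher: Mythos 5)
Your overall bootstrap strategy---increasing regularity one step at a time, treating the unknowns in the order of their equations, and noting that the coupling between $\psi$ and $\mathring\lambda$ forces the regularity of all unknowns to be upgraded simultaneously by induction---matches the paper's approach, and your identification of the compatibility conditions (Proposition~\ref{compatible}) as the key ingredient enabling the jump from De Giorgi--Nash H\"older control of $\psi$ to full Schauder theory in Cartesian charts is correct.

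However, there is a genuine gap in your handling of the singular coefficient $\sigma^2 X^{-2}\psi^2 \sim \rho^{-2}\psi^2$ appearing in $N_\sigma$ (and, similarly, in $N_X$). You argue that $\psi$ ``vanishes to high order at the poles and axis'' and invoke interior Schauder estimates, but the decay order built into $\mathcal{L}_\psi$ is \emph{fixed} at $\rho^{10}$ and does not improve as $j$ increases. Writing $\psi=\rho^{10}h$ with $h\in\hat C^{2,\alpha_0}$ gives $\rho^{-2}\psi^2=(x^2+y^2)^9 h^2$, and multiplying a $\hat C^{2,\alpha_0}$ function by a smooth polynomial is still only $\hat C^{2,\alpha_0}$---so this reasoning can only carry the induction a bounded number of steps. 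The paper's device is the identity
\[
\rho^{-2}m^2\psi^2 = \rho^{-2}\left|\partial_\phi\psi_m\right|^2 = \left|\partial_x\psi_m\right|^2 + \left|\partial_y\psi_m\right|^2 - \left(\partial_\rho\psi\right)^2,
\]
which converts the singular factor $\rho^{-2}$ into first Cartesian derivatives of $\psi_m=e^{im\phi}\psi$; since $\psi_m\in C^{2+j,\alpha_0}$ by the inductive hypothesis, the right-hand side is in $\hat C^{1+j,\alpha_0}$ for \emph{every} $j$, so the induction closes. Without this observation (or an equivalent one), your bootstrap cannot reach $\hat C^\infty$.

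A secondary remark: for $\mathring\lambda$, you propose to keep integrating the first-order equation $d\mathring\lambda=\alpha-\alpha_K-\tfrac12 d\log(1+\mathring X)$. The paper switches at this stage to the \emph{second-order} Liouville equation for $\lambda$ (which holds once the reduced Einstein equations are known to be satisfied), applies two-dimensional Schauder theory in $(\rho,z)$, and then separately verifies by induction that the odd $\rho$-derivatives $\partial^{2k+1}_\rho\partial^l_z\mathring\lambda\vert_{\{\rho=0\}}$ vanish, which is what one needs to conclude $\hat C^{k,\alpha_0}$ regularity on $\overline{\mathscr{B}}$. Your first-order route is not obviously wrong, but you would need to track the regularity of $\alpha$---which involves up to second derivatives of $\sigma$---through the integral, and then independently establish the parity conditions at the axis and poles; the Liouville equation packages both steps more cleanly.
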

\begin{proof}The proof is by induction. The induction hypothesis is that for every natural number $j \geq 0$, we have
\[\left( \mathring\sigma, B,(\mathring X,\mathring Y),\mathring \Theta,\mathring{\lambda}\right) \in\] \[\hat{C}^{2+j,\alpha_0}\times \left(\hat{C}^{1+j,\alpha_0}\times\cdots\times\hat{C}^{1+j,\alpha_0}\right)\times\left(\hat{C}^{2+j,\alpha_0}\times \hat{C}^{2+j,\alpha_0}\right) \times \hat{C}^{2+j,\alpha_0}\times \hat{C}^{1+j,\alpha_0},\]
\[\psi \in \hat{C}_{\rm m,azi}^{2+j,\alpha_0}.\]
where we have dropped, and will continue to, the $\left(\overline{\mathscr{B}}\right)$ from the function spaces for typographical reasons.

We start with the base case $j=0$. Proposition~\ref{fixlam} establishes everything except for the statement $\psi \in \hat{C}_{\rm m,azi}^{2,\alpha_0}$. For $\psi$ we simply need to revisit the proof of Lemma~\ref{psireg}, with the results of Proposition~\ref{compatible} in hand. In particular, the beginning of the proof of Lemma~\ref{psireg} along with Proposition~\ref{compatible} and Lemma~\ref{usefullemma} now imply (see Remark \ref{fullregpsi}) that when~(\ref{psimeqn}) is expressed in Cartesian coordinates near the axis, the corresponding equation is a uniformly elliptic equation with $C^{0,\alpha_0}$ coefficients. Thus, near the axis, the desired estimate follows immediately from Schauder theory. One argues similarly near the points $p_N$ and $p_S$, near the horizon, and in the region $\{\rho > 0\}$.

Let's assume the induction hypothesis holds for some $j \geq 0$, and we will show that it also holds for $j+1$. We start $\mathring{\sigma}$. We recall the equation:
\[\Delta_{\mathbb{R}^4}\mathring{\sigma} = \rho^{-1}\sigma^{-1}Xe^{2\lambda}\left(\left(\omega+X^{-1}Wm\right)^2 - \sigma^2\left(\frac{\mu^2}{X} + \frac{m^2}{X^2}\right)\right)\psi^2.\]
Away from $\{\rho > 0\}$ the desired regularity is trivial to obtain. Let's consider $\mathring{\sigma}$ near $\{\rho = 0\}$ but away from the points $p_N$ and $p_S$. We have $\rho^{-1}\sigma^{-1}X \in \hat{C}^{2+j,\alpha_0}$. However, near the axis, the term $\sigma^2X^{-2} \sim \rho^{-2}$, prevents a naive application of Schauder theory. In order to deal with this singularity, we simply observe that
\begin{equation}\label{regpsiyes}
\rho^{-2}m^2\psi^2 = \rho^{-2}\left|\partial_{\phi}\psi_m\right|^2 = \left|\partial_x\psi_m\right|^2 + \left|\partial_y\psi_m\right|^2 - \left(\partial_{\rho}\psi\right)^2,
\end{equation}
where $(x,y)$ are the Cartesian coordinates associated to the regularity argument of $\psi_m$. The right hand side of~(\ref{regpsiyes}) is easily seen to be $\hat{C}^{1+j,\alpha_0}$. Near the points $p_N$ and $p_S$, one may use $(s,\chi)$ or $(s',\chi')$ coordinates and argue in the same fashion. We eventually conclude,  via a standard Schauder estimate, that $\mathring{\sigma} \in \hat{C}^{3+j,\alpha_0}$.

It immediately follows from $B$'s equations and the arguments from Section~\ref{solvingforb} that $B \in \left(\hat{C}^{2+j,\alpha_0}\times\cdots\times\hat{C}^{2+j,\alpha_0}\right)$. It is also straightforward (keeping~(\ref{rhotos}) in mind) to establish
\begin{equation}\label{B111}
\left(1-\xi_N-\xi_S\right)\frac{B_{\rho}\partial_{\rho}X_K + B_z\partial_zX_K}{X_K^3}\in \hat{C}^{1+j,\alpha_0},
\end{equation}
\begin{equation}\label{B222}
\left(\chi^2+s^2\right)\xi_N\frac{B_{\rho}\partial_{\rho}X_K+B_z\partial_zX_K}{X_K^3} \in \hat{C}^{1+j,\alpha_0},
\end{equation}
\begin{equation}\label{B333}
\left((\chi')^2+(s')^2\right)\xi_S\frac{B_{\rho}\partial_{\rho}X_K+B_z\partial_zX_K}{X_K^3} \in \hat{C}^{1+j,\alpha_0}.
\end{equation}

Now we turn to $(\mathring{X},\mathring{Y})$. Using~(\ref{B111}), (\ref{B222}), (\ref{B333}), the observation~(\ref{regpsiyes}), and our already established renormalization scheme, one easily shows that $(\mathring{X},\mathring{Y}) \in \hat{C}^{3+j,\alpha_0}\times \hat{C}^{3+j,\alpha_0}$.

The desired regularity statement for $\mathring{\Theta}$ and $\psi$ are straightforward given the analysis we have already carried out. We omit the details. It remains to consider $\mathring\lambda$. As above, the only issue is at $\partial\overline{\mathscr{B}}$. Instead of the first-order equation used in the bootstrap, we now rely on the Liouville equation
\begin{align*}
2\partial^{2}_{\rho}\mathring\lambda + 2\partial^{2}_{z}\mathring\lambda & = - \partial^{2}_{\rho}\log (1+\mathring X) - \partial^{2}_{z}\log (1+\mathring X) +  \sigma^{-1}(\partial_{\rho}^{2}\sigma + \partial^{2}_{z}\sigma)\\
& - e^{2\lambda} \mu^{2} \psi^{2} - (\partial_{\rho}\psi)^{2} - (\partial_{z}\psi)^{2} - X^{-1}\left( 2m^{2} + \mu^{2} X\right)e^{2\lambda}\psi^2\\
& - \frac 12 X^{-2}\left[ (\partial_{\rho}X)^{2} + (\partial_{z}X)^{2} + (\theta_{\rho})^{2} + (\theta_{z})^{2} \right]\\
& + \frac 12 X_{K}^{-2}\left[ (\partial_{\rho}X_{K})^{2} + (\partial_{z}X_{K})^{2} + (\partial_{\rho}Y_{K})^{2} + (\partial_{z}Y_{K})^{2} \right],
\end{align*}
which holds in $\overline{\mathscr{B}_{A}}\cup\overline{\mathscr{B}_{H}}$ thanks to Theorem 1.2 in \cite{HBH:geometric}. Arguing as above, we may easily see that the right hand side of this equation is in $\hat C^{1+j,\alpha_{0}}$. In particular, we may obtain $C^{3+j,\alpha_{0}}$-regularity with respect to the $\rho,z$ coefficients by two-dimensional elliptic regularity. Finally, we may check by induction that $\partial^{2k+1}_{\rho}\partial^{l}_{z}\mathring\lambda |_{\{\rho = 0\}} =0$ for $2k+l\leq j$. When $j=0$, this follows immediately from the proof of Proposition \ref{fixlam}. The $\partial^{2k+1}_{\rho}\partial^{l}_{z}$ derivatives of the right hand side of $\mathring\lambda$'s equation vanish by the regularity of the other coefficients proven above. Thus, differentiating and evaluating at $\{\rho =0 \}$, we find that $\partial^{2(k+1)+1}_{\rho}\partial^{l}_{z}\mathring\lambda |_{\{\rho = 0\}} =0$ for $2k+l\leq j$. Combined with a similar argument near $p_{N}$ and $p_{S}$, this shows $\mathring\lambda \in \hat C^{2+j,\alpha_{0}}$.
\end{proof}

Now we may check that $(\mathcal{M},g_{\delta})$ is ``extendable to a regular black hole spacetime.''

\begin{lemma}\label{lemm:rot-sym-extend}
Suppose that $f(\rho,z)$ is a smooth function on $\overline{\mathscr{B}_{A}}$, resp.\ $\overline{\mathscr{B}_{H}}$, that is smooth when considered as a function on $\RR^{n}$ with the metric $d\rho^{2}+\rho^{2}d\mathbb{S}^{n-2}+dz^{2}$ for some $n>2$. Then, $f(\rho,z) = g(\rho^{2},z)$ for some smooth function on $\overline{\mathscr{B}_{A}}$, resp.\ $\overline{\mathscr{B}_{H}}$.

Similarly, if $f(s,\chi)$ is a smooth function on $\overline{\mathscr{B}_{N}}$ that is smooth when considered as a function on $\RR^{n}$ with the metric $ds^{2} + s^{2}d\phi_{1}^{2} + d\chi^{2} +\chi^{2}d\phi_{2}^{2}$, then $f(s,\chi) = g(s^{2},\chi^{2})$ for some smooth function $g$ on $\overline{\mathscr{B}_{N}}$. An analogous statement holds for $\overline{\mathscr{B}_{S}}$.
\end{lemma}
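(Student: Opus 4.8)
\textbf{Proof proposal for Lemma~\ref{lemm:rot-sym-extend}.}

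The plan is to reduce the statement to the standard fact (sometimes attributed to Whitney and Glaeser, or provable directly via Taylor expansion) that a smooth $SO(n-1)$-invariant function on $\RR^{n-1}$ is a smooth function of the squared radial variable. Concretely, fix a point $(\rho_0,z_0) \in \overline{\mathscr{B}_A}$ (the cases of $\overline{\mathscr{B}_H}$, $\overline{\mathscr{B}_N}$, $\overline{\mathscr{B}_S}$ being handled identically, with the obvious modifications at $\overline{\mathscr{B}_N}$ and $\overline{\mathscr{B}_S}$ where there are two radial variables). Away from $\{\rho = 0\}$ the assertion is trivial: there $\rho \mapsto \rho^2$ is a local diffeomorphism onto its image, so $g(u,z) \doteq f(\sqrt{u},z)$ is manifestly smooth. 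Thus the only work is near $\{\rho = 0\}$, and there the hypothesis that $f_{\RR^n}(\rho,z,\omega) = f(\rho,z)$ (for $\omega \in \mathbb{S}^{n-2}$) is smooth on $\RR^n$ is exactly the statement that $f$, viewed as a function of $z$ and of the $(n-1)$ Cartesian coordinates $y = (y^1,\dots,y^{n-1})$ with $|y| = \rho$, is a smooth $SO(n-1)$-invariant function of $y$.

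The key step is then the following classical lemma: if $h : \RR^{n-1} \to \RR$ is smooth and invariant under the standard $SO(n-1)$ action, then there is a smooth $\tilde h : [0,\infty) \to \RR$ (i.e.\ smooth in the sense of admitting a smooth extension across $0$) with $h(y) = \tilde h(|y|^2)$. I would prove this by writing the Taylor expansion of $h$ at $0$ in the $y$ variables; $SO(n-1)$-invariance forces all odd-degree homogeneous parts to vanish and each even-degree homogeneous part to be a multiple of $|y|^{2k}$ (the only $SO(n-1)$-invariant homogeneous polynomials being polynomials in $|y|^2$), so $h(y) = \sum_{k=0}^{N} c_k |y|^{2k} + R_N(y)$ with $R_N$ vanishing to order $2N+1$. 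Borel summation of the $c_k$ produces a smooth function $\tilde h_0$ on $\RR$ with $h(y) - \tilde h_0(|y|^2)$ flat to infinite order along $\{y=0\}$; dividing the (smooth, invariant, infinitely flat) remainder by $|y|^2$ repeatedly — each quotient remaining smooth and invariant by induction, using Hadamard's lemma along the invariant direction — shows $h(y) = \tilde h(|y|^2)$ for a genuinely smooth $\tilde h$ on $[0,\infty)$. Applying this with parameters, i.e.\ to $h_z(y) \doteq f(|y|,z)$ jointly smooth in $(y,z)$, yields $g(u,z)$ smooth on $\overline{\mathscr{B}_A} = \{u \geq 0\}$ with $f(\rho,z) = g(\rho^2,z)$, which is the claim.

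For the $\overline{\mathscr{B}_N}$ case one runs the same argument in two stages: the metric $ds^2 + s^2 d\phi_1^2 + d\chi^2 + \chi^2 d\phi_2^2$ exhibits $f(s,\chi)$ as an $SO(2)\times SO(2)$-invariant smooth function of Cartesian coordinates $(y_1,y_2) \in \RR^2$ and $(w_1,w_2) \in \RR^2$ with $|y| = s$, $|w| = \chi$; first apply the classical lemma in the $y$-variables (with $(\chi,w,\phi_2)$ as smooth parameters, after noting invariance is preserved) to get $f = g_1(s^2,\chi)$ smooth, then apply it again in the $w$-variables to get $g_1(s^2,\chi) = g(s^2,\chi^2)$. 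I do not expect any genuine obstacle here; the only point requiring a little care — and the thing I would write out carefully rather than wave at — is the "division by $|y|^2$'' step, ensuring at each stage that the quotient is still smooth (this is where infinite flatness of the remainder, rather than finite-order flatness, is used) and still invariant (automatic, since $|y|^2$ is invariant and one is dividing an invariant function by it), so that the induction closes. Uniqueness of $g$ is immediate from density of $\{\rho > 0\}$ (resp.\ $\{s\chi > 0\}$) together with continuity.
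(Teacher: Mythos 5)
Your proposal takes essentially the same route as the paper: the paper's one-line proof observes that smoothness of $f_{\RR^n}$ forces $\partial_\rho^{2j+1}\partial_z^k f(0,z)=0$ and then invokes a "straightforward argument using Taylor series," which is exactly the Whitney/Glaeser theorem you spell out via the vanishing of odd-degree Taylor coefficients, Borel summation of the even ones, and handling of the infinitely-flat remainder. You have simply written out the classical lemma that the paper takes as known (with the two-stage application for $\overline{\mathscr{B}_N}$ also matching the paper's "similar arguments"), so the approaches coincide.
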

\begin{proof}
In the first case, the assumption implies that $\partial^{2j+1}_{\rho}\partial^{k}_{z} f(0,z) = 0$ for all integers $j,k$. The claim follows from this via a straightforward argument using Taylor series. The other cases follow similar arguments.
\end{proof}

\begin{proposition}Let $\delta$ be sufficiently small and $\left( \mathring\sigma, B,(\mathring X,\mathring Y),\mathring \Theta,\psi,\mathring{\mu}^2,\mathring{\lambda}\right)$ be produced by Proposition~\ref{fixlam}. Then the corresponding spacetime $(\mathcal{M},g_{\delta})$ is ``extendable to a regular black hole spacetime'' in the sense of Definition 2.5 in \cite{HBH:geometric}.
\end{proposition}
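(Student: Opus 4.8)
The plan is to verify the conditions of Definition 2.5 in \cite{HBH:geometric} one by one, using the regularity and boundary-condition results already established: Proposition~\ref{prop:full-reg} gives that the renormalized unknowns lie in $\hat C^\infty$ (with $\psi \in \hat C^\infty_{\rm m,azi}$), and Proposition~\ref{compatible} gives the matching conditions between $e^{2\lambda}$, $X$, and $V-2\Omega W - \Omega^2 X$ along the various pieces of $\{\rho=0\}$. The overall strategy is that ``extendable to a regular black hole spacetime'' is, by the companion paper, a list of smoothness and compatibility requirements on the metric coefficients $(V,W,X,\lambda)$ written in $(\rho,z)$, $(s,\chi)$, and $(s',\chi')$ coordinates near the axis components $\mathscr A_N$, $\mathscr A_S$, the horizon $\mathscr H$, and the corner points $p_N$, $p_S$; we must check each of these is implied by our function-space membership.

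First I would reconstruct the metric coefficients from the renormalized unknowns following Remark~\ref{unrenormalize}: $X = X_K(1+\mathring X)$, $W = X(\mathring\Theta + X_K^{-1}W_K)$, $\sigma = \rho(1+\mathring\sigma)$, $V = (\sigma^2 - W^2)/X$, and $\lambda = \lambda_K + \mathring\lambda$. Using Lemma~\ref{lem:xK-regularity} (so $\log X_K - h \in \hat C^\infty$) together with $\mathring X \in \hat C^\infty(\overline{\mathscr B})$, and the known structure of the Kerr coefficients, one reads off that each of $X$, $W$, $V$ has the prescribed singular-plus-smooth form at the axis and horizon. The key structural inputs are: (i) near $\mathscr A$, $X \sim \rho^2$ and $X/\rho^2$ is a smooth function of $(\rho^2,z)$ (this is where Lemma~\ref{lemm:rot-sym-extend} enters, converting ``smooth as an axisymmetric function on $\RR^3$'' into ``smooth function of $\rho^2$''), and the compatibility condition (1) of Proposition~\ref{compatible}, $(e^{2\lambda} - \rho^{-2}X)|_{\mathscr A} = 0$, which is precisely the elementary-flatness/no-conical-singularity condition along the axis; (ii) near $\mathscr H$, $X$, $W$, $e^{2\lambda}$ are all bounded and smooth, $V - 2\Omega W - \Omega^2 X$ vanishes on $\mathscr H$ to the correct order, and condition (2) of Proposition~\ref{compatible} fixes the surface gravity $\kappa$ and guarantees the horizon is \emph{non-degenerate} since $\kappa>0$ (by comparison with Kerr for $\delta$ small); (iii) near $p_N$ and $p_S$, one works in $(s,\chi)$ resp.\ $(s',\chi')$ coordinates, uses $\psi \in \hat C^\infty_{\rm m,azi}$ and the $\hat C^\infty$-membership of the other unknowns, invokes the second half of Lemma~\ref{lemm:rot-sym-extend} to get smoothness in $(s^2,\chi^2)$, and applies conditions (3)--(6) of Proposition~\ref{compatible} to see that the axis and horizon meet orthogonally with matching regularity. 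Finally one must record that $W/X = -\Omega$ is \emph{constant} on $\mathscr H$ (Corollary~\ref{omconst}) and $\omega = -\Omega m$ (the relation \eqref{mathringomega} once $\mathring\omega$ is constant), so the horizon is a Killing horizon of $T + \Omega\Phi$ and the scalar field $\Psi = e^{-i\omega t}e^{im\phi}\psi$ is invariant under it — this is what makes $\Psi$ extend smoothly and have vanishing horizon flux.

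I would then assemble these facts into a verification that $(\mathcal M, g_\delta)$ together with $\Psi_\delta$ satisfies every clause of Definition 2.5 of \cite{HBH:geometric}; the proof is essentially a bookkeeping exercise once Propositions~\ref{prop:full-reg} and~\ref{compatible}, Lemma~\ref{lemm:rot-sym-extend}, Corollary~\ref{omconst}, and Lemma~\ref{usefullemma} (controlling the angular behaviour of $\psi$ so that $e^{im\phi}\psi$ is genuinely smooth, not merely formally so) are in hand. The main obstacle — or rather the only place requiring genuine care rather than citation — is the regularity of the metric and scalar field at the corner points $p_N$ and $p_S$: here the $(s,\chi)$-coordinate description of the Kerr background itself is delicate (cf.\ the computation of $\log(\chi^2/\Delta) \in \hat C^\infty(\overline{\mathscr B_N})$ inside the proof of Lemma~\ref{lem:xK-regularity}), and one must carefully combine the vanishing orders of $\mathring\lambda$, $\mathring X$, $\mathring\sigma$ at $\{s=0\}\cup\{\chi=0\}$ coming from Proposition~\ref{prop:full-reg} with the compatibility conditions (3)--(6) and the double-polar-coordinate version of Lemma~\ref{lemm:rot-sym-extend} to confirm that the metric extends \emph{across} the bifurcation sphere as a smooth Lorentzian metric on the enlarged manifold $\tilde{\mathcal M}$. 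All other clauses (smoothness on $\{\rho>0\}$, the axis regularity away from $p_N,p_S$, the non-degeneracy of $\mathscr H$) follow directly from the already-proven statements.
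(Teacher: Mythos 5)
Your proposal follows the same route as the paper's proof: use Proposition~\ref{prop:full-reg} together with Lemma~\ref{lemm:rot-sym-extend} to express the reconstructed metric coefficients $V,W,X,e^{2\lambda}$ as smooth functions of $\rho^{2}$ near $\mathscr{A}$ (and of $s^{2},\chi^{2}$ near $p_{N},p_{S}$), then invoke Proposition~\ref{compatible} for the axis/horizon matching conditions and comparison with Kerr for the positivity and non-degeneracy statements. The paper's proof treats only the axis case explicitly and dismisses the horizon and pole cases as ``similar,'' whereas you correctly flag the corner points as the place requiring genuine care; this is a sound reading, not a deviation.
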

\begin{proof}
Using Proposition \ref{prop:full-reg} and Lemma \ref{lemm:rot-sym-extend}, we immediately see that in an open set around the axis, we can write $\mathring \sigma(\rho,z) = \mathring\sigma_{\mathscr{A}}(\rho^{2},z)$, $\mathring\Theta (\rho,z) = \mathring\Theta_{\mathscr{A}}(\rho^{2},z)$, $\mathring X(\rho,z) = \mathring X_{\mathscr{A}}(\rho^{2},z)$, $\mathring\lambda(\rho,z) = \mathring \lambda_{\mathscr{A}}(\rho^{2},z)$, for smooth functions $ \mathring\sigma_{\mathscr{A}}, \mathring\Theta_{\mathscr{A}}, \mathring X_{\mathscr{A}}, \mathring \lambda_{\mathscr{A}}$. In the remainder of the proof, when we write an expression of the form $f(x,y) = g(x^{2},y)$, the smoothness of $g$ will always be implied.

The equation
\[
V = \frac{\rho^{2}}{X_{K}} \frac{(1+\mathring\sigma)^{2}}{1+\mathring X} - \left( \frac{W_{K}}{X_{K}} + \mathring\Theta \right)^{2} X_{K}(1+\mathring X)
\]
immediately implies that $V(\rho,z) = V_{\mathscr{A}}(\rho^{2},z)$. That $V_{\mathscr{A}}(0,z) > 0$ for $\delta > 0$ sufficiently small follows from the corresponding fact for Kerr (see Section 3 of \cite{HBH:geometric}). Similarly, using
\[
W = X_{K}\mathring\Theta (1+\mathring X) + W_{K}(1+\mathring X),
\]
we see that $W(\rho,z)=W_{\mathscr{A}}(\rho^{2},z)$, by using the above properties, as well as the properties of $X_{K},W_{K}$ near the axis (see Section 3 in \cite{HBH:geometric}). That $X(\rho,z) = \rho^{2}X_{\mathscr{A}}(\rho^{2},z)$ with $X_{\mathscr{A}}(0,z)>0$ follows by the same argument. Finally, because we have shown in Proposition \ref{compatible} that $e^{2\lambda}-\rho^{-2}X$ vanishes at the axis, and we know that each term can be written as a smooth function of $\rho^{2}$, the fact that
\[
e^{2\lambda} = X_{\mathscr{A}}(\rho^{2},z) + \rho^{2}\Sigma_{\mathscr{A}}(\rho^{2},z)
\]
near the axis follows easily. Putting this together, we see that $(\mathcal{M},g)$ is extendible across the axis in the sense of Definition 2.2 in \cite{HBH:geometric}.

Extendability across the horizon, as well as the north and south poles follows by a similar argument, using the other parts of Proposition \ref{prop:full-reg} and Lemma \ref{lemm:rot-sym-extend}.
\end{proof}

Given this, Proposition 2.2.1 in \cite{HBH:geometric} implies that we may extend $(\mathcal{M},g)$ to a Lorentzian manifold with corners \index{Coordinates!$\tilde{\mathcal{M}}$}$(\tilde{\mathcal{M}},\tilde{g})$, which is stationary and axisymmetric, whose boundary corresponds to a bifurcate Killing event horizon.

\subsection{Asymptotic Flatness}

\begin{proposition}
The manifold $(\tilde{\mathcal{M}},\tilde g)$ is asymptotically flat in the sense that in the coordinates from Appendix A.1 in \cite{HBH:geometric}, $\tilde g$ has the smooth expansion\footnote{By smooth expansion, we mean that after every application of a $(t,x,y,z)$ derivative, the error decays one power faster.}
\[
\tilde g = \left(1+O\left(r^{-1}\right)\right)\left(-dt^{2} + dx^{2}+dy^{2}+dz^{2}\right) + O\left(r^{-2}\right)\left(dtdx + dtdy+dxdy\right).
\]
as $r\to\infty$.
\end{proposition}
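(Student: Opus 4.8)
The plan is to reduce the claimed asymptotic expansion of $\tilde g$ to the asymptotic behavior of the renormalized unknowns already controlled by Proposition~\ref{fixlam}, together with the known asymptotics of the Kerr metric quantities $(X_K, W_K, \lambda_K, Y_K)$ recorded in Section~\ref{kerr}. First I would recall that the coordinates from Appendix~A.1 of \cite{HBH:geometric} are (up to lower-order corrections) the standard asymptotically flat coordinates associated to $(\rho,z,\phi)$ as cylindrical coordinates on $\mathbb{R}^3$, in which the Kerr metric $g_{(a,M)}$ in the form \eqref{kerrmetric2} is manifestly asymptotically flat with the stated decay rates. Thus it suffices to show that the \emph{difference} $g_\delta - g_{(a,M)}$ decays one power faster than needed, i.e.\ that the perturbation contributes only to the $O(r^{-1})$ and $O(r^{-2})$ error terms and is compatible with the ``smooth expansion'' requirement (each $(t,x,y,z)$-derivative gains a power of $r^{-1}$).

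The key steps, in order. (1) Express each metric coefficient in terms of the renormalized unknowns via Remark~\ref{unrenormalize}: $X = X_K(1+\mathring X)$, $W = X_K\mathring\Theta(1+\mathring X) + W_K(1+\mathring X)$, $e^{2\lambda} = e^{2\lambda_K}e^{2\mathring\lambda}$, and $V = \sigma^2/X - W^2/X$ with $\sigma = \rho(1+\mathring\sigma)$. (2) Read off from the function-space definitions in Section~\ref{functhespace} the pointwise decay of each renormalized unknown and its derivatives towards the asymptotically flat end: $|\mathring X| \lesssim r^{-1}$, $|\hat\partial \mathring X|\lesssim r^{-2}$ (from $\mathcal{L}_X$); $|\mathring\sigma|\lesssim r^{-2}$, $|\partial\mathring\sigma|\lesssim r^{-3}$ (from $\mathcal{L}_\sigma$); $|\mathring\Theta|\lesssim r^{-2}$ (from $\mathcal{L}_\Theta$); $|X_K^{-1}\mathring Y|\lesssim r^{-3}$ (from $\mathcal{L}_Y$); $|\mathring\lambda|\lesssim r^{-1}$, $|\hat\partial\mathring\lambda|\lesssim r^{-2}$ (from $\mathcal{L}_\lambda$, noting $\|f\|_{\hat C^{1,\alpha_0}}$ does not itself encode decay, so here one must invoke the improved bound $\Vert r^2 d\mathring\lambda\Vert_{\hat C^{1,\alpha_0}(\overline{\mathscr{B}_A}\cup\overline{\mathscr{B}_H})}$ established inside the proof of Lemma~\ref{lemm:lam-bound-mathfrakL} and integrate it from infinity, using $\mathring\lambda\to 0$ as $r\to\infty$, to get $|\mathring\lambda|\lesssim r^{-1}$). (3) Substitute these into the formulas of step (1) and combine with the Kerr asymptotics $X_K = \rho^2(1+O(r^{-1}))$, $W_K = O(\rho^2 r^{-3})$, $V_K = 1 + O(r^{-1})$, $e^{2\lambda_K} = 1 + O(r^{-1})$ from Section~\ref{kerr} and the proof of Lemma~\ref{neg}, together with $\sigma^2/X = \rho^2(1+\mathring\sigma)^2/(X_K(1+\mathring X))$. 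This yields $V = 1 + O(r^{-1})$, $X = \rho^2(1+O(r^{-1}))$, $e^{2\lambda} = 1 + O(r^{-1})$, and $W = O(\rho^2 r^{-2})$ (the $X_K\mathring\Theta$ term, which is $O(\rho^2 r^{-2})$, dominates the Kerr $W_K = O(\rho^2 r^{-3})$ contribution, but both are absorbed into the allowed $O(r^{-2})$ off-diagonal term after converting from cylindrical to Cartesian coordinates where $W\,dt\,d\phi$ becomes $\rho^{-2}W(x\,dt\,dy - y\,dt\,dx)$). (4) Translate the $(t,\phi,\rho,z)$ form of the metric into the Cartesian $(t,x,y,z)$ coordinates of Appendix~A.1 of \cite{HBH:geometric}, using $x = \rho\cos\phi$, $y = \rho\sin\phi$; the identities $d\rho^2 + \rho^2 d\phi^2 = dx^2 + dy^2$, and $X\,d\phi^2 = (X/\rho^2)(x\,dy - y\,dx)^2$, combined with $X/\rho^2 = 1 + O(r^{-1})$, give the diagonal part $(1+O(r^{-1}))(dx^2+dy^2)$ plus an $O(r^{-1})(x\,dy-y\,dx)^2/\rho^2$ piece, which is $O(r^{-1})$ and of the stated form. (5) Finally, upgrade every $O(\cdot)$ to a \emph{smooth} expansion: this follows because the function-space norms in Section~\ref{functhespace} control not just $f$ but also $r\cdot\partial f$, $r^2\cdot\partial^2 f$ etc.\ (and likewise for the Kerr quantities by Lemma~\ref{lem:xK-regularity} and Lemma~\ref{somestuffthatisuseful}), so differentiating in $(t,x,y,z)$ indeed gains a power of $r^{-1}$ at each step; one cites Proposition~\ref{prop:full-reg} for the smoothness of the coefficients and the weighted norms for the decay of derivatives of all orders (bootstrapping beyond the two derivatives explicitly bounded via the equations, exactly as in the proof of Proposition~\ref{prop:full-reg}).

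The main obstacle I anticipate is step (2)--(3) for $\mathring\lambda$ and for $W$: the Banach space $\mathcal{L}_\lambda$ carries \emph{no} decay weight, so one genuinely has to go back to the sharper interior bound $\Vert r^2 d\mathring\lambda\Vert$ proved during the construction of $\mathfrak{L}$ in Lemma~\ref{lemm:lam-bound-mathfrakL} (where the integrand $\alpha - \alpha_K - \tfrac12 d\log(1+\mathring X)$ was shown to decay like $r^{-3}$, which integrates to $r^{-2}$ decay for $d\mathring\lambda$ and hence $r^{-1}$ for $\mathring\lambda$ after using $\mathring\lambda\to 0$), and then verify carefully that this is enough to place $e^{2\lambda}$ in $1 + O(r^{-1})$ with smooth expansion. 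For $W$ one must check that the $\mathring\Theta$-contribution, while only $O(r^{-2})$ rather than the $O(r^{-3})$ of Kerr, still lands in the allowed $O(r^{-2})$ off-diagonal slot after the coordinate change, and that it does not spoil the diagonal $(1+O(r^{-1}))$ structure through the $-W^2/X$ term in $V$ (here $W^2/X = O(\rho^2 r^{-4})$, which is harmless). The remaining calculations are routine bookkeeping with the weighted Hölder norms and the explicit Kerr asymptotics, so I would present them compactly, invoking Section~\ref{kerr}, the function-space definitions, and Propositions~\ref{fixlam} and~\ref{prop:full-reg} rather than re-deriving the decay rates.
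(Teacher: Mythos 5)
Your proposal correctly identifies the overall structure (decompose $g_\delta$ as Kerr plus a perturbation, read off decay from the Banach spaces, change to Cartesian coordinates), but it contains a genuine quantitative gap that the paper's proof addresses explicitly and that you do not. First, the $W\,dt\,d\phi$ term: you compute $W\rho^{-2} = O(r^{-2})$ from $\mathring\Theta\in\mathcal{L}_\Theta$ (which only controls $r^2\mathring\Theta$) and claim this ``lands in the allowed $O(r^{-2})$ off-diagonal slot,'' but the Cartesian $dt\,dx$-coefficient is $-W\rho^{-2}y$, and near the equatorial plane $|y|\sim r$, so this gives only $O(r^{-1})$ — one full power short. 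You genuinely need $\mathring\Theta=O(r^{-3})$, which is \emph{not} encoded in $\mathcal{L}_\Theta$; the paper obtains it by first upgrading $\tilde Y = X_K^{-1}\mathring Y$ from the $r^{-3}$ decay of $\mathcal{L}_Y$ to $r^{-4}$ via a Newton-potential bootstrap (Lemma~\ref{newt1} applied to \eqref{tildeeqnY}, using $|e_A|\lesssim r^{-2}$ from \eqref{hsing}), and then integrating $\mathring\Theta$'s first-order equation.

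Second, and more seriously, for $\lambda$ you only extract $|\mathring\lambda|\lesssim r^{-1}$ and $|\partial\mathring\lambda|\lesssim r^{-2}$. This is insufficient for the $xx$-, $yy$-, $xy$-components. Writing $\tilde g_{xy}=2\rho^{-2}xy\,(e^{2\lambda}-X/\rho^{2})$ and noting $|xy|/\rho^{2}=O(1)$, you need $e^{2\lambda}-X/\rho^{2}=O(r^{-2})$, not $O(r^{-1})$; moreover, when you differentiate the angular prefactors like $x^{2}/\rho^{2}$ you \emph{lose} a factor of $\rho^{-1}$ (not $r^{-1}$), so near the axis the naive bounds degrade further. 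The rough estimates $\mathring\lambda=O(r^{-1})$ and $\mathring X = O(r^{-1})$, used separately, only give $e^{2\lambda}-X/\rho^{2}=O(r^{-1})$; the required cancellation is exactly the content of the extra estimate the paper extracts by inspecting the first-order $\lambda$-equation term by term: $|\partial(\lambda-\tfrac{1}{2}u)|\leq C\rho r^{-4}$ (with $u=\log(\rho^{-2}X)$), which, integrated from the axis where $\lambda-\tfrac{1}{2}u$ vanishes by Proposition~\ref{compatible}, gives $|e^{2\lambda}-X/\rho^{2}|\lesssim\rho^{2}r^{-4}$. Without these two upgrades — the Newton-potential bootstrap for $\tilde Y$/$\mathring\Theta$ and the refined $\lambda-u/2$ estimate — the computation you outline in steps (3)--(5) does not close, so the proposal as written does not prove the proposition.
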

\begin{proof}
We will only prove the expansion holds for in $C^{1}$; the higher derivatives follow via an analogous argument.

First, note that we may upgrade the asymptotic falloff for $\tilde Y$ to $|\tilde Y| + r|\partial \tilde Y| \leq Cr^{-4}$ by using Lemma \ref{newt1} and \eqref{tildeeqnY}, along with the fact that the vector field $e_{A}$ in $\partial h = \frac{2}{\rho} \partial_{\rho} + e_{A}$ decays like $|e_{A}|\leq Cr^{-2}$ (see \eqref{hsing}). This easily yields the improved estimate $|\mathring\Theta| + r |d\mathring\Theta| \leq C r^{-3}$.

We now to turn to $\mathring{\lambda}$. First of all, one may easily establish $|\mathring\lambda| + r|d\mathring\lambda| \leq Cr^{-1}$ simply by re-running the argument used in the proof of Lemma \ref{lemm:lam-bound-mathfrakL}; however, this is not quite sufficient to establish the desired control of $\tilde g_{xx}$, $\tilde g_{yy}$ and $\tilde g_{xy}$. Fortunately, a further term by term inspection of the first order equation for $\lambda$ from Theorem 1.3 of~\cite{HBH:geometric} yields the additional estimate $\left|\partial\left(\lambda- \frac{u}{2}\right)\right| \leq C\rho r^{-4}$ when $r$ is sufficiently large. (Recall that $u \doteq \log\left(\rho^{-2}X\right)$ was introduced in the proof of Proposition~\ref{compatible}.)

 Putting this together with the decay $|\mathring\sigma| + r|\partial \mathring\sigma|\leq Cr^{-2}$ and the decay estimates for the various Kerr quantities proven in Section 3 of \cite{HBH:geometric}, the claim easily follows.
\end{proof}

\subsection{Exponential Decay and Smooth Horizon Extension of the Scalar Field}

By iterating Lemma \ref{lem:poly-decay-scalar}, it is not hard to show that $\psi$ decays faster than any polynomial power as $r\to\infty$. Here, we upgrade this to show that $\psi$ decays exponentially fast as $r\to\infty$.
\begin{proposition}
The scalar field $\psi$ decays exponentially for large $r$.
\end{proposition}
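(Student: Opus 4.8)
The plan is to derive from the $\psi$-equation~\eqref{theeqn10} a differential inequality of the form $\Delta_{\mathbb{R}^3}(e^{c r}\psi) \leq 0$ (up to controllable error terms) in the region $\{r \geq R\}$ for a suitable constant $c>0$ and $R$ large, and then invoke the maximum principle. First I would recall, exactly as in the proof of Lemma~\ref{lem:poly-decay-scalar}, that for $r$ sufficiently large the zeroth-order coefficient in~\eqref{theeqn10} is bounded below:
\[
e^{2\lambda}\sigma^{-2}X^{-1}\left(X\omega+Wm\right)^2 - e^{2\lambda}m^2X^{-1} - e^{2\lambda}\mu^2 \geq b\left(1+\rho^{-2}\right)
\]
for some $b>0$ (this uses that $\omega^2 < \mu^2$, which is exactly the threshold condition baked into the choice of $\hat\mu^2$ in Lemma~\ref{neg}, together with $\sigma \sim \rho$, $X \sim \rho^2 (1+O(r^{-1}))$ and $X\omega + Wm = O(\rho^2)$; here it is crucial that $\mu^2 - \omega^2 \geq b_0 > 0$ rather than merely $\geq 0$). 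Since $\sigma$ is uniformly comparable to $\rho$ on $\{r \geq R\}$, equation~\eqref{theeqn10} then gives, on that region,
\[
\Delta_{\mathbb{R}^3}\psi = \rho^{-1}\partial_\rho(\rho\,\partial_\rho\psi) + \partial_z^2\psi \geq \left(b + \text{l.o.t.}\right)\psi,
\]
where the lower-order terms come from $\sigma^{-1}\partial\sigma - \rho^{-1}\partial_\rho$ acting on $\psi$ and decay like $r^{-2}|\partial\psi|$; in particular the effective potential is $\geq b/2$ once $R$ is large enough (after absorbing the gradient term, which can be done since we already control $\partial\psi$ in $\hat C^{2,\alpha_0}$ with polynomial decay from Lemma~\ref{lem:poly-decay-scalar}).

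Next I would construct an exponential supersolution. Fix $c \in (0,\sqrt{b/2})$ and set $w \doteq A e^{-c r}$ where $r = \sqrt{1+\rho^2+z^2}$ as in~\eqref{rdef}; a direct computation gives $\Delta_{\mathbb{R}^3} w = (c^2 - \tfrac{2c}{|x|} + cr^{-1}\cdot(\cdots)) w \leq \tfrac{b}{2} w$ for $r$ large, since $c^2 < b/2$ and the remaining terms are $O(r^{-1})$. Hence $v \doteq w - \psi$ satisfies $\Delta_{\mathbb{R}^3} v - \tilde V v \geq 0$ on $\{r \geq R\}$ for the nonnegative potential $\tilde V = b/2 + \text{l.o.t.} \geq 0$, i.e. $v$ is a subsolution of a Schrödinger operator with nonnegative potential. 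Using the polynomial decay of $\psi$ already established (so $\psi \to 0$ at infinity while $w>0$) together with choosing $A$ large enough that $w \geq \psi$ on the compact boundary $\{r = R\}$, the maximum principle on the unbounded region $\{r \geq R\}$ (applicable precisely because $\tilde V \geq 0$ and $v$ decays at infinity — one can compare against $\epsilon e^{-c'r}$ for $c' < c$ and let $\epsilon \to 0$) forces $v \geq 0$, i.e. $\psi \leq A e^{-c r}$. Since $\psi > 0$, this is the desired exponential decay; the same argument applied to $-\psi$, or simply positivity of $\psi$, closes the two-sided bound. Finally, exponential decay of all derivatives $\hat\partial^k \psi$ follows by the now-standard interior Schauder estimate scheme on balls of radius $1$ (as used repeatedly in Section~\ref{secpsi}), applied to~\eqref{theeqn10} rewritten in the appropriate regular coordinates near the axis, since the coefficients are uniformly bounded there.

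The main obstacle is the degeneracy of the equation at the axis $\{\rho = 0\}$: equation~\eqref{theeqn10} is not the flat $\mathbb{R}^3$ Laplacian but $\sigma^{-1}\partial_\rho(\sigma\partial_\rho\cdot)+\sigma^{-1}\partial_z(\sigma\partial_z\cdot)$ with a lower-order term blowing up like $\rho^{-2}$ through the $m^2 X^{-1}$ piece. However, the $\rho^{-2}$ term enters with a \emph{favorable} sign in the coefficient lower bound above (it makes $\tilde V$ larger, not smaller), so it only helps the supersolution comparison; and the identification of $\psi$, via $\psi_m = e^{im\phi}\psi$, with a function solving a uniformly elliptic equation on $\mathbb{R}^3$ (carried out in Lemma~\ref{psireg} and upgraded in Proposition~\ref{prop:full-reg}) lets one run the maximum principle and the Schauder bootstrap honestly across the axis. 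One must take a little care that the comparison function $w = Ae^{-cr}$, being a function of $r$ alone, is itself regular across the axis in the relevant coordinates — it is, since $r^2 = 1+\rho^2+z^2$ is a smooth function of $(\rho^2,z)$. With these points addressed the argument is routine.
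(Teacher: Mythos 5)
Your overall strategy — construct an exponentially decaying supersolution for the scalar-field operator on $\{r\geq R\}$ and invoke a comparison principle — is the same as the paper's. The paper uses $\mathfrak{W}=S(\cos\theta)e^{-k'\hat r}\hat r^{-1}$, where $S$ is an associated Legendre function (so that $\mathfrak{W}e^{im\phi}$ is a genuine spherical harmonic times a radial profile, and $\mathfrak{W}\sim\rho^{|m|}$ near the axis, matched against $\psi$ via Lemma~\ref{usefullemma}); your purely radial barrier $w=Ae^{-cr}$ is simpler and, conceptually, also works, since it is bounded away from zero on the axis while $\psi$ vanishes there, and the singular term $-\rho^{-2}m^2 w$ only helps the supersolution inequality. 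That said, as written your proposal has two genuine gaps.

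First, the step ``$\Delta_{\mathbb{R}^3}\psi \geq (b+\text{l.o.t.})\psi$, effective potential $\geq b/2$ after absorbing the gradient term'' does not go through. Converting $\sigma^{-1}\partial_\rho(\sigma\partial_\rho\cdot)+\sigma^{-1}\partial_z(\sigma\partial_z\cdot)$ to $\Delta_{\mathbb{R}^3}$ introduces an error of the form $E\cdot\partial\psi$ with $|E|=O(r^{-3})$, and this is a first-order term, not a modification of the potential; no amount of polynomial decay of $\partial\psi$ lets you absorb $|E||\partial\psi|$ into $\tfrac{b}{2}\psi$ pointwise, because you have no pointwise lower bound on $\psi$ in terms of $|\partial\psi|$. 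The correct fix — and this is what the paper does — is to keep the $\sigma$-operator for $\psi$, so that $\psi$ satisfies the clean inequality \eqref{getreadytomax} with no gradient error, and only compute the $\Delta_{\mathbb{R}^3}$ piece for the barrier $w$; the gradient correction for $w$ is harmless because $|\partial w|\lesssim w$ automatically for an exponential.

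Second, the Phragm\'en--Lindel\"of barrier $\epsilon e^{-c'r}$ (with $c'<c$) is the wrong order. To use it you would need $\liminf_{r\to\infty}\bigl(w+\epsilon e^{-c'r}-\psi\bigr)\geq 0$, but at this stage you only know $\psi=O(r^{-N})$ for every $N$, not $\psi=o(e^{-c'r})$; that is precisely what you are trying to prove. What actually works is the constant barrier $\epsilon\cdot 1$: since the zeroth-order coefficient $\tilde V$ is strictly positive near infinity, $L(1)=-\tilde V<0$, and since $v=w-\psi\to 0$ (both terms decay), for every $\epsilon>0$ one has $v+\epsilon\geq 0$ for $r$ large, $\geq 0$ on $\Gamma$ and along the axis, and hence $v+\epsilon\geq 0$ throughout by the minimum principle on compact exhaustions; sending $\epsilon\to 0$ gives $v\geq 0$. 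Equivalently, one can note directly that $\liminf_{r\to\infty}v\geq 0$ together with $v\geq 0$ on $\Gamma\cup\mathscr{A}$ and the strict positivity of $\tilde V$ forces $v\geq 0$. With these two corrections your argument is sound and in fact a little leaner than the paper's, since you avoid the Legendre-function asymptotics near the axis entirely.
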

\begin{proof}This is a relatively straighforward application of the maximum principle: Pick $k > 0$ satisfying $0 < k^2 < \mu^2-\omega^2$. We begin by observing that when $r$ is sufficiently large (depending on $k$), $\psi$'s equation and the fact that $\psi > 0$ imply
\begin{equation}\label{psilarger}
\sigma^{-1}\partial_{\rho}\left(\sigma\partial_{\rho}\psi\right) + \sigma^{-1}\partial_z\left(\sigma\partial_z\psi\right) - \rho^{-2}m^2\psi - k^2\psi > 0.
\end{equation}

Next, let $S\left(x\right)$ denote the unique solution to the ODE
\[\frac{d}{dx}\left(\left(1-x^2\right)\frac{dS}{dx}\right) - \left(\frac{m^2}{1-x^2}-|m|(|m|+1)\right)S = 0,\]
which satisfies
\[S \sim \left(1\pm x\right)^{|m|/2}\text{ as }x \to \mp 1.\]
(Recall that $e^{im\phi}S\left(\cos\theta\right)$ is a spherical harmonic and that $S > 0$.)

Now define coordinates $(\hat{r},\theta)$ in terms of $(\rho,z)$ coordinates by considering $(\hat{r},\theta,\phi)$ as spherical coordinates and $(\rho,z,\phi)$ as cylindrical coordinates.  Then, for $k'$ satisfying $0 < k' < k$, define \index{Metric Quantities!$\mathfrak{W}$}$\mathfrak{W} : \mathscr{B} \to \mathbb{R}$ by
\[\mathfrak{W}\left(\rho,z\right) \doteq S\left(\cos\theta\right)\exp\left(-k'\hat{r}\right)\hat{r}^{-1}.\]
Using that $S\left(\cos\theta\right)e^{im\phi}$ is a spherical harmonic and the asymptotics of $\sigma$ as $\rho \to 0$, one may check that for $r$ sufficiently large we have
\[\sigma^{-1}\partial_{\rho}\left(\sigma\partial_{\rho}\mathfrak{W}\right) + \sigma^{-1}\partial_z\left(\sigma\partial_z\mathfrak{W}\right) - \rho^{-2}m^2\mathfrak{W} - k^2\mathfrak{W} < 0.\]

In particular, for any constant $C > 0$,  we have
\begin{equation}\label{getreadytomax}
\sigma^{-1}\partial_{\rho}\left(\sigma\partial_{\rho}\left(\psi - C\mathfrak{W}\right)\right) + \sigma^{-1}\partial_z\left(\sigma\partial_z\left(\psi - C\mathfrak{W}\right)\right) - \rho^{-2}m^2\left(\psi-C\mathfrak{W}\right) - k^2\left(\psi - C\mathfrak{W}\right) > 0.
\end{equation}

Next, let $\Gamma$ denote a curve of constant $\rho^2+z^2 = R$. We note that the asymptotic behavior of $S$ is easily seen to imply that
\[\mathfrak{W}|_{\{\rho \leq 1\} \cup \Gamma} \geq b\left(\Gamma\right)\rho^{|m|}.\]
Using Lemma~\ref{usefullemma} one finds that when $r$ is sufficiently large,
\[\psi|_{\{\rho \leq 1\} \cup \Gamma} \leq B\left(\Gamma\right)\rho^{|m|}.\]
In particular, picking $R$ sufficiently large, so that~(\ref{getreadytomax}) holds for $r \geq R$, and then $C$ sufficiently large so that $\left(\psi - C\mathfrak{W}\right)|_{\Gamma \cup \{\rho \leq 1\}} < 0$, it follows immediately from the maximum principle that $\psi \leq C\mathfrak{W}$ for $r \geq R$.
\end{proof}
\begin{remark}The argument given above does not produce the sharp decay rates. Though we will not pursue this here, it would be interesting to see if the actual decay rate is as fast as that of the time periodic solutions to Klein--Gordon on the exact Kerr background (where one can exploit separation of variables and o.d.e. techniques to easily find the sharp decay rate).
\end{remark}

Finally, one easily checks that $e^{-it\omega}e^{im\phi}\psi$ extends smoothly to the extended spacetime $(\tilde{\mathcal{M}},\tilde g_{\delta})$.
\begin{proposition}The function $\Psi\left(t,\phi,\rho,z\right) = e^{-it\omega}e^{im\phi}\psi$ extends smoothly to the extended spacetime $(\tilde{\mathcal{M}},\tilde g_{\delta})$.
\end{proposition}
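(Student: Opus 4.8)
The plan is to verify smoothness of $\Psi = e^{-it\omega}e^{im\phi}\psi$ in each of the coordinate charts covering $(\tilde{\mathcal{M}},\tilde g_\delta)$, matching the previously established regularity of $\psi$ against the coordinate changes used to build the extension. The key input is Proposition~\ref{prop:full-reg}, which gives $\psi \in \hat C^\infty_{\rm m,azi}(\overline{\mathscr{B}})$; by definition of $\hat C^\infty_{\rm m,azi}$, the function $e^{im\phi}\psi$, written in the Cartesian coordinates $(x,y,z)$ near the axis (with $x=\rho\cos\phi$, $y=\rho\sin\phi$), is in $C^\infty(\RR^3)$, and similarly $e^{im\phi}\psi$ is smooth in the four-dimensional Cartesian coordinates near $p_N$ and $p_S$. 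Since the extension charts of $(\tilde{\mathcal{M}},\tilde g_\delta)$ provided by \cite[Proposition 2.2.1]{HBH:geometric} are built precisely from these Cartesian coordinates (together with a regular coordinate on the event horizon that replaces $\phi$ by $\phi - \Omega t$, with $\Omega = -W/X|_{\mathscr{H}}$ the constant from Corollary~\ref{omconst}), the claim amounts to chasing the factor $e^{-it\omega}$ through these coordinate changes.

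First I would treat the asymptotically flat region and the regions away from the axis and horizon: there $(t,\phi,\rho,z)$ are already good coordinates, $\psi$ is smooth (indeed decays exponentially by the preceding proposition), and $e^{-it\omega}e^{im\phi}$ is manifestly smooth, so $\Psi$ is smooth. Next I would handle the axis $\mathscr{A}$ away from the poles: here the relevant chart keeps $t$ and introduces $(x,y,z)$; $e^{im\phi}\psi$ is smooth in $(x,y,z)$ by the $\hat C^\infty_{\rm m,azi}$ membership, and multiplying by the smooth function $e^{-it\omega}$ of the remaining coordinate preserves smoothness. The poles $p_N$, $p_S$ are handled identically using the four-dimensional Cartesian charts $(t,x,y,w,v)$. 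The one case requiring slightly more care is the event horizon: there the regular coordinate system replaces $\phi$ with $\tilde\phi \doteq \phi - \Omega t$ (and adjusts $t$ to a coordinate regular across the bifurcate horizon, e.g. an ingoing-type coordinate). I would write $e^{-it\omega}e^{im\phi} = e^{i(m-\omega/\Omega)\phi}\,e^{i(\omega/\Omega)(\phi-\Omega t)}$ — wait, more cleanly: using $\omega = \Omega m$ on the horizon (this is exactly the content of the vanishing horizon energy flux, cf.~\eqref{mathringomega} and Corollary~\ref{omconst}, where $\mathring\omega|_{\mathscr{H}} = -m(\mathring\Theta + X_K^{-1}W_K)|_{\mathscr{H}} = \Omega m$), one has $e^{-it\omega}e^{im\phi} = e^{im(\phi - \Omega t)} = e^{im\tilde\phi}$, which is independent of the (bad) $t$-coordinate and depends only on $\tilde\phi$; combined with smoothness of $e^{im\tilde\phi}\psi$ in the Cartesian coordinates adapted to the horizon (again from $\hat C^\infty_{\rm m,azi}$ and Proposition~\ref{compatible}), $\Psi$ extends smoothly across the horizon.

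\textbf{The main obstacle} is the bifurcate event horizon case: one must confirm that the identity $\omega = \Omega m$ (equivalently, the vanishing of the energy flux of $\Psi$ along $\mathscr{H}$) is exactly what is needed to cancel the $t$-dependence of $\Psi$ in the horizon-regular coordinate system, since the Boyer--Lindquist-type time $t$ is singular there and any residual $e^{-i\omega t}$ factor would destroy smoothness. This is precisely the role played by the compatibility condition discussed in Section~\ref{subsubsec:compat-cond}, and it is already built into our construction via the definition of $\mathring\omega$ on $\mathscr{H}$ and Corollary~\ref{omconst}; so the work is to unwind the coordinate change from \cite[Proposition 2.2.1]{HBH:geometric} and observe that $\Psi$, expressed there, is the smooth function $e^{im\tilde\phi}$ times a function in $\hat C^\infty_{\rm m,azi}(\overline{\mathscr{B}})$, hence smooth. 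All remaining regions are routine and I would dispatch them briefly.
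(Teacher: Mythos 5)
Your argument is correct and unpacks the same mechanism the paper uses: one passes to the corotating Kruskal-type coordinates of the appendix of \cite{HBH:geometric} and observes that the identity $\omega = \Omega m$ (from Corollary~\ref{omconst} and the definition of $\mathring\omega$) converts $e^{-i\omega t}e^{im\phi}$ into $e^{im\tilde\phi}$, after which smoothness reduces to the $\hat C^\infty_{\rm m,azi}$ regularity of $\psi$ established in Proposition~\ref{prop:full-reg} together with the compatibility conditions of Proposition~\ref{compatible}. The paper compresses this into a one-line reference to the companion paper; your write-up is the same argument spelled out region by region.
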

\begin{proof}This is immediate after expressing $\Psi$ in Kruskal coordinates, see the appendix of~\cite{HBH:geometric}.
\end{proof}
\subsection{Arranging for a Constant Klein--Gordon Mass}\label{arrange}
The analysis up to this point establishes the following theorem.
\begin{theorem}\label{whatwehavesofar}For every choice of $(a,M)$ satisfying $0 < |a| < M$, there exists a Lipschitz continuous $1$-parameter family of smooth spacetimes $(\mathcal{M},g_{\delta})$, scalar fields $\Psi_{\delta} : \mathcal{M} \to \mathbb{C}$, and Klein--Gordon masses $\mu^2\left(\delta\right)$ indexed by $\delta \in [0,\epsilon)$ such that
\begin{enumerate}
    \item For each $\delta \geq 0$ the pair $(\mathcal{M},g_{\delta})$ and $\Psi_{\delta}$ yields a solution to the Einstein--Klein--Gordon equations with mass $\mu^2\left(\delta\right)$.
    \item The spacetimes $(\mathcal{M},g_{\delta})$ are all stationary, axisymmetric, asymptotically flat, and posses a non-degenerate bifurcate event horizon.
    \item For $\delta > 0$ the scalar field $\Psi_{\delta}$ is non-zero, \underline{time-periodic}, and decays exponentially along any asymptotically flat Cauchy hypersurface.
    \item The $1$-parameter family bifurcates off the Kerr family in the sense that $(\mathcal{M},g_0)$ is isometric to the sub-extremal Kerr exterior spacetime of mass $M$ and angular momentum $aM$, and $\lim_{\delta\to 0}\delta^{-1}\Psi_{\delta} = \hat{\Psi}$, where \index{Metric Quantities!$\hat\Psi$}$\hat{\Psi}$ is a non-zero time-periodic solution to the Klein--Gordon equation~(\ref{kg}) of mass $\mu^2(0)$ on $(\mathcal{M},g_0)$.
\end{enumerate}
\end{theorem}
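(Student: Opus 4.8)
\textbf{Proof proposal for Theorem~\ref{whatwehavesofar}.}

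The plan is to assemble, essentially for free, all four assertions from the work already carried out in the preceding sections. The construction proper is finished by Proposition~\ref{fixlam}, which for each sufficiently small $\delta \geq 0$ produces renormalized unknowns $\left(\mathring\sigma,B,(\mathring X,\mathring Y),\mathring\Theta,\psi,\mathring\mu^{2},\mathring\lambda\right)$ in the appropriate balls, solving each renormalized equation, together with the stated $\delta$-dependence; un-renormalizing (Remark~\ref{unrenormalize}) yields the metric data $(X,W,\theta,\sigma,\lambda)$ and hence the metric $g_{\delta}$ and the scalar field $\Psi_{\delta} \doteq e^{-it\omega}e^{im\phi}\psi$, with Klein--Gordon mass $\mu^{2}(\delta) = \mathring\mu^{2}(\delta) + \mu_{K}^{2}$. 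Item (1) then follows from Theorem~\ref{thm:HBH-geometric-main}: since $|\partial\sigma| \neq 0$ on $\mathscr{B}$ (as $\sigma$ is a small perturbation of $\rho$), the converse direction of that theorem gives a genuine solution of the Einstein--Klein--Gordon equations on $\mathcal{M}$, and by the density remark following Theorem~\ref{thm:HBH-geometric-main} this extends to the axis and horizon once those extensions are known. Here I would invoke Corollary~\ref{omconst} to replace $\mathring\omega$ by the constant $\omega$, so that the system actually solved is the one from Theorem~\ref{thm:HBH-geometric-main} rather than the modified one used in the fixed point scheme.

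For item (2), stationarity and axisymmetry are built into the metric ansatz~\eqref{eq:metric-ansatz} (the Killing fields $T = \partial_{t}$ and $\Phi = \partial_{\phi}$). Asymptotic flatness is exactly the proposition in Section~``Asymptotic Flatness,'' which gives the smooth expansion of $\tilde g$ in the Cartesian-type coordinates of \cite[Appendix~A.1]{HBH:geometric}. The non-degenerate bifurcate event horizon comes from the regularity proposition (Proposition~\ref{prop:full-reg}) together with the compatibility conditions of Proposition~\ref{compatible} and Lemma~\ref{lemm:rot-sym-extend}, which verify that $(\mathcal{M},g_{\delta})$ is ``extendable to a regular black hole spacetime'' in the sense of \cite[Definition~2.5]{HBH:geometric}; \cite[Proposition~2.2.1]{HBH:geometric} then produces the extension $(\tilde{\mathcal{M}},\tilde g_{\delta})$ with a bifurcate Killing horizon, and the function spaces (which control $\mathring\lambda$ and hence the surface gravity $\kappa > 0$ of Proposition~\ref{compatible}) guarantee non-degeneracy. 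For item (3): that $\psi \not\equiv 0$ for $\delta > 0$ is the normalization $\int_{\mathscr{B}}\psi^{2}e^{2\lambda}\sigma\, d\rho\, dz = \delta^{2} > 0$ (equivalently the positivity $\psi|_{\mathscr{B}} > 0$ from Proposition~\ref{itexists}); time-periodicity is the form $\Psi_{\delta} = e^{-it\omega}e^{im\phi}\psi$ with $\omega$ a fixed nonzero real constant (nonzero for $\delta$ small by comparison with the Kerr value, as noted after Corollary~\ref{omconst}), so $\Psi_{\delta}$ has period $2\pi/|\omega|$ in $t$; exponential decay along an asymptotically flat Cauchy slice follows from the proposition ``The scalar field $\psi$ decays exponentially for large $r$,'' combined with smooth extension to $(\tilde{\mathcal{M}},\tilde g_{\delta})$.

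For item (4), $\delta = 0$ gives vanishing renormalized unknowns, hence $g_{0} = g_{a,M}$ is the Kerr exterior (by construction of the ansatz and Remark~\ref{defmuk}), and $\mu^{2}(0) = \mu_{K}^{2}$; Lipschitz dependence on $\delta$ is the last clause of Proposition~\ref{fixlam}. For the limit $\lim_{\delta\to 0}\delta^{-1}\Psi_{\delta}$: since the Klein--Gordon equation~\eqref{theeqn10} is linear in $\psi$, one can normalize $\hat\psi \doteq \lim_{\delta\to 0}\delta^{-1}\psi$, where the limit exists in $\mathcal{L}_{\psi}$ by the Lipschitz bounds $\left\Vert\psi_{1}-\psi_{2}\right\Vert_{\mathcal{L}_{\psi}} \leq D|\delta_{1}-\delta_{2}|$ of Proposition~\ref{fixlam} applied along a sequence $\delta_{j}\to 0$ (and the a priori bound $\left\Vert\psi\right\Vert_{\mathcal{L}_{\psi}}\leq C\delta$); then $\hat\psi$ solves~\eqref{theeqn10} on the exact Kerr background with $\mu^{2} = \mu_{K}^{2}$, is nonzero (its $\delta$-rescaled $L^{2}$ normalization is $1$), and $\hat\Psi \doteq e^{-it\omega}e^{im\phi}\hat\psi$ is the desired nonzero time-periodic solution to~\eqref{kg} on $(\mathcal{M},g_{0})$. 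I would also remark that differentiability of the family in $\delta$ at $\delta = 0$, in the sense of a family of sections of $\Sym^{2}T^{*}\mathcal{M}$, follows because all renormalized unknowns other than $\psi$ are $O(\delta^{2})$ (Proposition~\ref{fixlam}) while $\psi$ is Lipschitz with the stated linearization. The only genuinely new point beyond quoting earlier results is checking that the limit defining $\hat\Psi$ exists and is nonzero; this is the main (but minor) obstacle, and it is handled by the linearity of the $\psi$-equation together with the already-established continuous dependence estimates. Everything else is bookkeeping, assembling the conclusions of Proposition~\ref{fixlam}, Theorem~\ref{thm:HBH-geometric-main}, Proposition~\ref{prop:full-reg}, Proposition~\ref{compatible}, and the asymptotic-flatness and exponential-decay propositions.
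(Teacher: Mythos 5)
Your proposal correctly identifies that the theorem is essentially a summary of the preceding analysis, and the bookkeeping for items (1), (2), (3), and most of (4) matches how the paper assembles Propositions~\ref{fixlam}, \ref{compatible}, \ref{prop:full-reg}, the asymptotic-flatness and exponential-decay propositions, Corollary~\ref{omconst}, and Theorem~\ref{thm:HBH-geometric-main}; indeed the paper itself offers no proof beyond the single sentence ``the analysis up to this point establishes the following theorem.''

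There is, however, a genuine gap in your justification of the existence of $\lim_{\delta\to 0}\delta^{-1}\Psi_{\delta}$. You claim this limit exists ``by the Lipschitz bounds $\Vert \psi_{1}-\psi_{2}\Vert_{\mathcal{L}_{\psi}} \leq D|\delta_{1}-\delta_{2}|$ \ldots and the a priori bound $\Vert \psi\Vert_{\mathcal{L}_{\psi}} \leq C\delta$.'' This is not a valid deduction. A Lipschitz curve $\delta \mapsto \psi_{\delta}$ with $\psi_{0}=0$ and $\Vert \psi_{\delta}\Vert \leq C\delta$ need not have a limit for $\delta^{-1}\psi_{\delta}$: take, for instance, $\psi_{\delta} = \delta\sin(\log\delta)\, v$ for a fixed nonzero $v$ in the Banach space; this satisfies both hypotheses (its derivative in $\delta$ is $\sin(\log\delta)+\cos(\log\delta)$, which is bounded, hence Lipschitz), yet $\delta^{-1}\psi_{\delta} = \sin(\log\delta)\,v$ has no limit. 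What you need, and what the paper actually proves inside Proposition~\ref{fixpsi}, is an estimate at the level of the \emph{rescaled} scalar field: since the equation is linear in $\psi$, $\frac{\delta_{1}}{\delta_{2}}\psi_{2}$ solves the equation at parameter $\delta_{1}$ with data $\mathfrak{b}_{2}$, and Proposition~\ref{solveforpsi} then gives
\[
\left\Vert \psi_{1}-\frac{\delta_{1}}{\delta_{2}}\psi_{2}\right\Vert_{\mathcal{L}_{\psi}} \leq D\,\delta_{1}\left(\Vert \mathfrak{b}_{1}-\mathfrak{b}_{2}\Vert + \Vert \mathring\lambda_{1}-\mathring\lambda_{2}\Vert_{\mathcal{L}_{\lambda}}\right).
\]
Dividing by $\delta_{1}$ and invoking the $O\left((\delta_{1}+\delta_{2})|\delta_{1}-\delta_{2}|\right)$ bound on $\Vert \mathfrak{b}_{1}-\mathfrak{b}_{2}\Vert + \Vert \mathring\lambda_{1}-\mathring\lambda_{2}\Vert$ from the final lemma of Section~\ref{asectionforlambda} yields $\Vert \delta_{1}^{-1}\psi_{1}-\delta_{2}^{-1}\psi_{2}\Vert \to 0$, so $\delta^{-1}\psi_{\delta}$ is Cauchy; that is the correct justification. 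You were right to invoke linearity, but the specific Lipschitz bound you cited does not close the argument; the rescaled Cauchy estimate above is the missing ingredient.
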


Our goal in this section is to vary $a$ and $M$ with $\delta$ so $\mu^2\left(a\left(\delta\right),M\left(\delta\right),\delta\right)$ is constant as a function of $\delta$.

First we discuss the regularity of our spacetimes, scalar fields, and Klein--Gordon masses upon varying $a$ and $M$.
\begin{lemma}Recall that the renormalized unknowns are all functions of $a$, $M$, and $\delta$ (we have often suppressed the dependence on $a$ and $M$ in the notation). Pick a value $\gamma^2 > 0$, define $M\left(a\right) \doteq \sqrt{\gamma^2+a^2}$, and then, noting that this choice of $M$ fixes the value of $\sqrt{M^2-a^2}$, consider the renormalized unknowns as functions of $a$ and $\delta$ all defined on the same conformal manifold with corners $\overline{\mathscr{B}^{(\gamma)}}$ (see~\cite{HBH:geometric}). Then the renormalized unknowns are Lipschitz functions of both $a$ and $\delta$.
\end{lemma}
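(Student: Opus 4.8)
The plan is to run the \emph{entire} fixed-point construction of Sections \ref{section:solving-for-sigma}--\ref{asectionforlambda} once more, but now keeping track of the dependence of every operator on the Kerr parameters $(a,M)$ (equivalently, after fixing $\gamma^{2}=M^{2}-a^{2}$, on $a$ alone, with $M=M(a)=\sqrt{\gamma^{2}+a^{2}}$), rather than treating $(a,M)$ as frozen as was done throughout. The key preliminary observation is that for every admissible $\gamma^{2}>0$ the surface with corners $\overline{\mathscr{B}}$ depends \emph{only} on $\gamma$ (the coordinate $s,\chi$ changes of variable involve $\gamma$ but not $a$, $M$ separately), so all the function spaces $\mathcal{L}_{\sigma},\mathcal{L}_{B},\mathcal{L}_{X,Y},\dots,\mathcal{L}_{\lambda}$ and their $\mathcal{N}$-counterparts are literally the same Banach spaces for every $a$ in the relevant range. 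Thus the only $a$-dependence sits inside the reference Kerr quantities $X_{K},W_{K},Y_{K},\lambda_{K},h,\mu_{K}^{2}$ and in the nonlinear right-hand sides; what must be shown is that these depend on $a$ in a Lipschitz way in the relevant norms.

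First I would record the $a$-regularity of the Kerr data. From the closed-form expression for $\tilde r(\rho,z)$ (and the companion expansions in \cite[Lemma 3.0.1]{HBH:geometric}), the quantities $X_{K},W_{K},x_{K}=\log X_{K}-h$, $\lambda_{K}$, $Y_{K}$, and their derivatives are real-analytic in $a$ for $0<|a|<M(a)$, uniformly on $\overline{\mathscr{B}}$ in the weighted $\hat C^{k,\alpha}$ norms that appear in Section \ref{kerr} (Lemmas \ref{lem:xK-regularity} and \ref{somestuffthatisuseful}); in particular they are Lipschitz in $a$. Likewise the function $h$ is independent of $a$ once $\gamma$ is fixed, so the Poincaré and Hardy inequalities of Section \ref{variational} hold with constants uniform in $a$. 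Finally, $\hat\mu^{2}$ was already chosen in its definition to be Lipschitz in $(a,M)$, and the variational argument of Proposition \ref{itexists} shows $\mu_{K}^{2}$ depends continuously — indeed, by the same Lipschitz-stability estimate as in Lemma \ref{itisindeedquitecontinuous}, Lipschitz — on $a$. This gives uniform (in $a$) versions of all the linear estimates: Propositions \ref{prop:lin-est-sigma}, \ref{c2alpha}, the linear estimates for $B$, $\Theta$, $\lambda$, and the coercivity Lemma \ref{coercive}.

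Next I would upgrade the contraction-mapping arguments to produce Lipschitz dependence on $a$. Abstractly, each fixed point was obtained from Lemma \ref{fixit} or Lemma \ref{basic}, where $a$ enters as an additional ``parameter'' in the space $\mathcal{P}$ of Lemma \ref{basic}: one checks that the nonlinear operators $N_{\sigma},N_{B},N_{X,Y},N_{\Theta},N_{\lambda}$ and the scalar-field solution map of Proposition \ref{solveforpsi} satisfy, in addition to the quadratic and Lipschitz estimates in the genuinely nonlinear unknowns already verified, a bound of the form $\Vert N(l,q,a_{1})-N(l,q,a_{2})\Vert \leq D|a_{1}-a_{2}|$; this follows term-by-term from the Lipschitz-in-$a$ control of the Kerr quantities established in the previous step, exactly as the $p$-dependence was handled in Lemma \ref{basic}. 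Then the conclusion of Lemma \ref{basic} gives that the solution map $\mathfrak{S}$ is Lipschitz in $a$, and composing the chain of solution maps ($\mathfrak{S}_{\sigma}\circ\mathfrak{S}_{B}\circ\mathfrak{S}_{X,Y}\circ\mathfrak{S}_{\Theta}$, then the $\psi$-fixed point of Proposition \ref{fixpsi}, then the $\mathring\lambda$-fixed point of Lemma \ref{lemm:lam-bound-mathfrakL}) preserves Lipschitz dependence. Hence all renormalized unknowns, as elements of their respective Banach spaces over the fixed manifold $\overline{\mathscr{B}^{(\gamma)}}$, are Lipschitz jointly in $(a,\delta)$. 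The Klein--Gordon mass $\mu^{2}(a,\delta)=\mathring\mu^{2}(a,\delta)+\mu_{K}^{2}(a)$ is then Lipschitz in $(a,\delta)$ as well, by Lemma \ref{itisindeedquitecontinuous} together with the Lipschitz control of $\mu_{K}^{2}$.

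The main obstacle is bookkeeping of \emph{uniformity}: one must check that all the implicit constants in the linear estimates and nonlinear estimates — the constants $D(\alpha_{0})$ in Propositions \ref{prop:lin-est-sigma}, \ref{c2alpha}, the coercivity constant in Lemma \ref{coercive}, the lower bound $b$ in Proposition \ref{itexists} and its continuity version, the smallness thresholds $\epsilon,\delta_{0}$ — can be taken uniform as $a$ ranges over a compact subinterval of $(0,\sqrt{\gamma^{2}+a^{2}})$, equivalently over a compact set of admissible $a$ with $\gamma$ fixed; since $|a|$ stays bounded away from $0$ and from $M(a)$ on such a set, the Kerr quantities and all their weighted norms are uniformly bounded and uniformly nondegenerate, so this uniformity is automatic, but it is the point that needs to be verified carefully. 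A secondary technical point is that the conformal manifold $\overline{\mathscr{B}^{(\gamma)}}$ genuinely does not move with $a$ — this is what lets us compare the unknowns for different $a$ in the \emph{same} norm — so I would state this explicitly at the outset, citing the construction in \cite{HBH:geometric}, before running the argument.
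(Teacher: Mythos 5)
Your proposal is correct and is essentially an expanded version of the paper's own proof, which simply asserts that one re-runs the entire fixed-point construction treating $a$ as an additional parameter (after noting the $\delta=0$ case follows from smoothness of the Kerr family). The details you supply — that $\overline{\mathscr{B}^{(\gamma)}}$ and hence the Banach spaces are $a$-independent once $\gamma$ is fixed, that the Kerr reference quantities are Lipschitz in $a$ in the relevant weighted norms, and that $a$ can be slotted into the parameter space $\mathcal{P}$ of Lemma~\ref{basic} — are exactly what the paper's terse proof leaves implicit.
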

\begin{proof}When $\delta = 0$, the Lipschitz continuity in $a$ simply follows from the fact that the Kerr family is a smooth $2$-parameter family. For $\delta > 0$, one simply re-runs the proof of Theorem~\ref{whatwehavesofar} adding $a$ as a parameter in all of the equations. 
\end{proof}

Next, we note that all of our unknowns are easily seen to be differentiable in $\delta$ at $\delta = 0$.
\begin{lemma}\label{derdelta} All of the renormalized unknowns are differentiable in $\delta$ at $\delta = 0$; in fact, all of these derivatives vanish except for the derivative of the scalar field.
\end{lemma}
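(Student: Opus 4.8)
The statement to prove is Lemma~\ref{derdelta}: all renormalized unknowns are differentiable in $\delta$ at $\delta=0$, with all $\delta$-derivatives vanishing except that of the scalar field. The plan is to exploit the fixed-point structure established in Propositions~\ref{fixsig} through~\ref{fixlam}, together with the quadratic estimates recorded there. Recall that the fixed-point argument produces, for each small $\delta\geq 0$, a solution
\[
\mathfrak{d}_{\delta} \doteq \left(\mathring{\sigma}_{\delta},\left(B^{(N)}_{\chi},B^{(S)}_{\chi'},B^{(A)}_z\right)_{\delta},\left(\mathring{X}_{\delta},\mathring{Y}_{\delta}\right),\mathring{\Theta}_{\delta},\psi_{\delta},\mathring{\mu}^2_{\delta},\mathring{\lambda}_{\delta}\right),
\]
and that when $\delta = 0$ one has $\mathfrak{d}_{0} = 0$ (every renormalized unknown vanishes on exact Kerr; this was the content of Remark~\ref{defmuk} and the base case of Proposition~\ref{prop:full-reg}). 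So ``differentiable at $\delta=0$ with vanishing derivative'' means precisely that each renormalized unknown is $o(\delta)$ as $\delta\to 0$ in its Banach norm, while the scalar field, which is normalized so that $\int_{\mathscr{B}}\psi_{\delta}^2e^{2\lambda}\sigma\, d\rho\, dz = \delta^2$, is $O(\delta)$ and has a genuine nonzero derivative $\hat\Psi$.

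First I would treat the scalar field and the Klein--Gordon mass. By the $\delta$-continuity estimates at the end of Proposition~\ref{fixpsi} and the final lemma of Section~\ref{asectionforlambda}, we have $\Vert \psi_{\delta_1}-\psi_{\delta_2}\Vert_{\mathcal{L}_\psi} \leq D|\delta_1-\delta_2|$ and $|\mathring\mu^2_{\delta_1}-\mathring\mu^2_{\delta_2}| \leq D|\delta_1-\delta_2|$; in particular $\Vert\psi_\delta\Vert_{\mathcal{L}_\psi}\leq D\delta$ and $|\mathring\mu^2_\delta|\leq D\delta$. To get differentiability (not merely Lipschitz continuity) of $\delta^{-1}\psi_\delta$ at $0$, I would observe that $\delta^{-1}\psi_\delta$ solves the linear equation~\eqref{theeqn10} with metric data $\mathfrak{c}_\delta = O(\delta^2)$ and with the normalization $\int (\delta^{-1}\psi_\delta)^2 e^{2\lambda}\sigma = 1$; since the coefficients of that equation converge (at rate $O(\delta^2)$) to the exact-Kerr coefficients, and the limiting problem has a unique normalized positive ground state $\hat\psi$ (by the uniqueness portion of the variational argument in Proposition~\ref{itexists} applied on Kerr, i.e.\ simplicity of the first eigenvalue), the continuous-dependence estimate of Proposition~\ref{solveforpsi} applied with $\mathfrak{a}_1$ the data at parameter $\delta$ and $\mathfrak{a}_2$ the zero (Kerr) data gives $\Vert \delta^{-1}\psi_\delta - \hat\psi\Vert_{\mathcal{L}_\psi} \leq D\Vert\mathfrak{c}_\delta\Vert \leq D\delta^2 \to 0$. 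Thus $\psi_\delta = \delta\,\hat\psi + o(\delta)$, which is exactly differentiability at $\delta=0$ with derivative $\hat\Psi = e^{-it\omega}e^{im\phi}\hat\psi$ at the spacetime level; and $\mathring\mu^2_\delta = O(\delta^2) = o(\delta)$ gives differentiability of $\mu^2$ with vanishing derivative.

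Next I would propagate this to all the other renormalized unknowns, using the ``nonlinear dependence on parameters'' bounds, which are all \emph{quadratic}. By Proposition~\ref{fixpsi} (and the sharpened remark following it), the tuple $\mathfrak{c}_\delta = (\mathring\sigma_\delta, B_\delta, (\mathring X_\delta,\mathring Y_\delta),\mathring\Theta_\delta,\mathring\lambda_\delta)$ satisfies $\Vert\mathfrak{c}_\delta\Vert_{\mathcal{L}_\sigma\times\cdots\times\mathcal{L}_\lambda} \leq D\delta^2$. Since each of these quantities vanishes at $\delta=0$, this $O(\delta^2)$ bound is already the statement that the difference quotient $\delta^{-1}\mathfrak{c}_\delta \to 0$ in norm, i.e.\ each of $\mathring\sigma,B,\mathring X,\mathring Y,\mathring\Theta,\mathring\lambda$ is differentiable in $\delta$ at $0$ with derivative zero. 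One then unwinds Remark~\ref{unrenormalize}: the actual metric coefficients $V,W,X,\lambda$ (hence the section of $\Sym^2 T^*\mathcal{M}$ defining $g_\delta$) depend on the renormalized unknowns through fixed smooth (indeed real-analytic) algebraic operations involving the $\delta$-independent Kerr quantities, so differentiability of $g_\delta$ in $\delta$ at $0$ follows from the chain rule, and the derivative is again zero since all the renormalized derivatives vanish (the Kerr background itself does not move here; $a,M$ are fixed in this lemma). The one subtlety to be careful about is that the bound $\Vert\mathfrak{c}_\delta\Vert\leq D\delta^2$ really does use the quadratic structure throughout the chain $\sigma\to B\to (X,Y)\to\Theta\to\psi\to\lambda$: at each stage the relevant nonlinearity $N_{(\cdot)}$ is controlled by $\Vert\psi\Vert_{\mathcal{L}_\psi}^2$ plus squares of the other renormalized unknowns (this is the point emphasized in the remarks about the chosen ordering of the equations), so an induction along that chain converts $\Vert\psi_\delta\Vert = O(\delta)$ into $O(\delta^2)$ for everything else.

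The main obstacle is the third point: making precise that the $O(\delta^2)$ bounds for the renormalized unknowns, which Propositions~\ref{fixpsi} and~\ref{fixlam} state, genuinely hold with the Kerr solution as the ``base point'' $\delta=0$ and not just relative to some other small-$\delta$ solution. This is where one must invoke that at $\delta=0$ the whole tuple is exactly $0$ (Remark~\ref{defmuk} and the base case of Proposition~\ref{prop:full-reg}), and that the difference estimates in those propositions are uniform down to $\delta=0$. Given that, the argument is essentially bookkeeping: collect the $O(\delta^2)$-in-norm bounds for $(\mathring\sigma,B,\mathring X,\mathring Y,\mathring\Theta,\mathring\lambda,\mathring\mu^2)$ and the $\delta\hat\psi + o(\delta)$ expansion for $\psi$, then invoke the algebraic reconstruction of $g_\delta$ and $\Psi_\delta$ from the renormalized data to transfer differentiability (with the stated vanishing of all derivatives except the scalar field's) to the spacetime level.
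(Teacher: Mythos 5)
Your proof is correct and follows essentially the same approach as the paper's: extract the $O(\delta^2)$ bound on the tuple $(\mathring\sigma,B,\mathring X,\mathring Y,\mathring\Theta,\mathring\lambda)$ from the quadratic (``nonlinear dependence on parameters'') estimates established in the fixed-point chain, note that $\mathfrak{c}_0 = 0$ so $O(\delta^2)$ already gives a vanishing derivative, deduce $\mathring\mu^2 = O(\delta^2)$ from the continuous-dependence argument of Lemma~\ref{itisindeedquitecontinuous} applied with the Kerr data as base point, and obtain differentiability of $\psi$ from the $\delta$-scaling together with the Lipschitz dependence on the other unknowns. You spell out the scalar-field step somewhat more explicitly than the paper (which delegates it to Theorem~\ref{whatwehavesofar}), by invoking Proposition~\ref{solveforpsi} directly with $\mathfrak{a}_2 = 0$; this is a clean way to see $\Vert\delta^{-1}\psi_\delta - \hat\psi\Vert_{\mathcal{L}_\psi} \le D\Vert\mathfrak{c}_\delta\Vert = O(\delta^2)$, but it is the same underlying mechanism.
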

\begin{proof}It already follows immediately from Theorem~\ref{whatwehavesofar} that $\frac{\partial\psi}{\partial\delta}|_{\delta = 0}$ exists. Next, we recall that the estimates we established in the course of the proof of Theorem~\ref{whatwehavesofar} have shown that, with the exception of $\mathring{\mu}^2$, all of the renormalized unknowns are  $O\left(\delta^2\right)$ as $\delta \to 0$. In particular, they are differentiable in $\delta$ at $\delta = 0$, and the derivative vanishes.

Lastly, we turn to the Klein--Gordon mass $\mu^2$. Using that the renormalized unknowns, except for $\psi$, are all $O\left(\delta^2\right)$, it follows immediately from the proof of Lemma~\ref{itisindeedquitecontinuous} that $\mathring{\mu}^2$ is also $O\left(\delta^2\right)$; hence it is differentiable at $\delta = 0$, and the derivative there with respect to $\delta$ vanishes.
\end{proof}

Finally we are ready to complete the proof of our main result, Theorem~\ref{timeperiodicsoln}.
\begin{proof}
We begin by observing that
\begin{equation}\label{tozero}
\lim_{a\to 0} \mu^2(a,M\left(a\right),0) = 0,
\end{equation}
which essentially encodes the fact that there is no supperradiance on Schwarzschild. In the context of this paper, one may see this by tracing through the proof of Lemma \ref{neg}, after observing that on Kerr, the parameter $\omega \to 0$ as $a\to 0$. On the other hand, for $a>0$, we have seen that $\mu^2(a,\sqrt{\gamma^{2}+a^{2}},0) > \omega^2 > 0$.

Because $a\mapsto \mu^2(a,M(a),0)$ is Lipchitz, and thus absolutely continuous, it is differentiable almost everywhere, and the fundamental theorem of calculus holds. In particular, it follows from~(\ref{tozero}) that we can find an set of positive measure $\mathfrak{D} \subset (0,\infty)$ so that $a \in \mathfrak{D}$ implies that $\frac{d}{da}\left(\mu^2\left(a,M(a),0\right) \right)$ exists and is non-zero. Using also Lemma~\ref{derdelta} we may appeal to a version of the implicit function theorem valid for Lipschitz functions (see Theorem~\ref{weakimp} in the appendix) so that for every $a \in \mathfrak{D}$, we may find a function $a(\delta)$ defined for $\delta\geq 0$ sufficiently small, so that $a(0) = a$, $\mu(a(\delta),M(a(\delta)),\delta)$ is constant with respect to $\delta$, and so that $a(\delta)$ is differentiable at $0$. Using this choice of $a(\delta)$ in the previously constructed solutions from Theorem~\ref{whatwehavesofar} yields a $1$-parameter family of metrics with the properties asserted in Theorem~\ref{timeperiodicsoln}.
\end{proof}

\appendix
\section{Newton Potential Estimates}
The following basic Newton potential estimates are quite useful and are easily proven using standard techniques. See~\cite{giltru}.
\begin{lemma}\label{newt1}Let $n \geq 3$ and $F : \mathbb{R}^n \to \mathbb{R}$ satisfy $\left|F(x)\right| \leq C\langle x\rangle^{-k}$ for some $k > 2$. Then let $u : \mathbb{R}^n \to \mathbb{R}$ be the corresponding Newton potential, i.e.~
\[u\left(x\right) \doteq \int_{\mathbb{R}^n}\left|x-y\right|^{2-n}F\left(y\right)\, dy.\]
Then
\[\left|u(x)\right| \leq C\sup_{y \in \mathbb{R}^n}\left|\langle y\rangle^kF(y)\right|\left[\langle x\rangle^{2-n} + \langle x\rangle^{2-k}\right],\]
\[\left|\partial u(x)\right| \leq C\sup_{y \in \mathbb{R}^n}\left|\langle y\rangle^kF(y)\right|\left[\langle x\rangle^{1-n} + \langle x\rangle^{1-k}\right].\]
\end{lemma}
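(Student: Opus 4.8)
\textbf{Proof proposal for Lemma \ref{newt1}.}

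The plan is to estimate the Newton potential integral directly by splitting the domain of integration into regions according to the distance from $x$, exactly as in the standard proof of interior estimates for the Poisson equation (cf.\ \cite[Lemma 4.2 and its proof]{giltru}). Write $A \doteq \sup_{y \in \mathbb{R}^n}|\langle y\rangle^k F(y)|$, which is finite by hypothesis, so that $|F(y)| \leq A \langle y\rangle^{-k}$. Fix $x \in \mathbb{R}^n$ and set $R \doteq \langle x \rangle$. I would decompose $\mathbb{R}^n$ into the three regions $B_{R/2}(x)$ (the ball near $x$ where the kernel $|x-y|^{2-n}$ is singular but $\langle y \rangle \sim R$), the ``far'' region $\{|y| \geq 2R\}$ (where $\langle y \rangle \sim |y|$ and $|x - y| \sim |y|$), and the ``intermediate'' region (the remainder, where $|y| \lesssim R$ and $|x-y| \gtrsim R$, or more carefully the complement of the first two inside $B_{2R}(0)$).

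The key steps, in order: (1) On $B_{R/2}(x)$ one has $\langle y \rangle \geq cR$, hence $|F(y)| \leq C A R^{-k}$, and $\int_{B_{R/2}(x)} |x-y|^{2-n}\, dy \leq C R^2$ since $n \geq 3$; this contributes $\leq C A R^{2-k}$. (2) On $\{|y| \geq 2R\}$ one has $|x - y| \geq \tfrac12 |y|$ and $\langle y \rangle \geq \tfrac12 |y|$, so the integrand is bounded by $C A |y|^{2-n-k}$, and since $k > 2$ we have $2 - n - k < -n$, making the integral over $|y| \geq 2R$ convergent and of size $\leq C A R^{2-k}$. (3) On the intermediate region, $|x - y| \geq cR$ gives $|x-y|^{2-n} \leq C R^{2-n}$, and $\int_{\{|y| \leq 2R\}} \langle y \rangle^{-k}\, dy$ is either bounded (if $k > n$) or $\leq C R^{n-k}$ (if $k \leq n$, using $k > 2$), and in either case the product is $\leq C A (R^{2-n} + R^{2-k})$. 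Summing the three contributions yields the stated bound for $|u(x)|$. For the gradient bound one differentiates under the integral sign, $\partial u(x) = (2-n)\int_{\mathbb{R}^n} |x-y|^{-n}(x-y) F(y)\, dy$ (justified by the decay of $F$ and the local integrability of $|x-y|^{1-n}$), and repeats the identical three-region decomposition: now the near-region integral $\int_{B_{R/2}(x)} |x-y|^{1-n}\, dy \leq C R$ replaces $C R^2$, lowering every power of $\langle x \rangle$ by one and producing $\langle x\rangle^{1-n} + \langle x\rangle^{1-k}$.

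I do not expect any genuine obstacle here; this is a routine exercise in splitting the Newton potential integral, and the hypotheses $n \geq 3$ and $k > 2$ are precisely what make the near-field integral $\int |x-y|^{2-n}$ and the far-field integral $\int_{|y| \geq 2R} |y|^{2-n-k}$ respectively finite. The only point requiring a small amount of care is bookkeeping the intermediate region correctly in the borderline cases $k = n$ (where a logarithm could a priori appear) — but since the kernel is bounded by $R^{2-n}$ there and one only needs $\int_{|y|\leq 2R}\langle y\rangle^{-k}\,dy \leq C(R^{n-k} + 1)$, no logarithm actually arises in the final estimate; it would only appear if one tried to be sharper. I would simply absorb everything into the two stated powers $\langle x\rangle^{2-n}$ and $\langle x\rangle^{2-k}$ (resp.\ their shifts), noting that for $2 < k$ and $n \geq 3$ these two exponents already dominate any intermediate contribution.
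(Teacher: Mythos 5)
Your three-region decomposition (near ball $B_{R/2}(x)$, far region $\{|y|\geq 2R\}$, intermediate region) is the standard and natural approach, and since the paper gives no proof of its own for this lemma (it merely cites \cite{giltru}), there is no in-text argument to compare against. The near- and far-region estimates are correct, and differentiating under the integral sign for the gradient bound is justified as you say.

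However, the last paragraph of your proposal contains a genuine error. You assert
\[
\int_{|y|\leq 2R}\langle y\rangle^{-k}\,dy \leq C\left(R^{n-k}+1\right),
\]
and conclude that ``no logarithm actually arises.'' This bound holds for $k<n$ and for $k>n$, but fails precisely at $k=n$: there $\int_{|y|\leq 2R}\langle y\rangle^{-n}\,dy \sim \log(1+R)$, and the intermediate region then contributes $\sim\langle x\rangle^{2-n}\log\langle x\rangle$, which is \emph{not} dominated by $\langle x\rangle^{2-n}+\langle x\rangle^{2-k}=2\langle x\rangle^{2-n}$. This is not an artifact of the splitting: taking $F(y)=\langle y\rangle^{-n}$ and restricting the domain of integration to $\{1\leq|y|\leq|x|/2\}$ (where $|x-y|\leq \tfrac{3}{2}|x|$) gives the lower bound $u(x)\gtrsim\langle x\rangle^{2-n}\log\langle x\rangle$ for large $|x|$, so the asserted estimate genuinely fails at $k=n$, and likewise for the gradient. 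To make the argument (and the lemma as stated) correct, one must either restrict to $k\neq n$ — with the constant allowed to depend on $k,n$ and degenerate as $k\to n$ — or carry a $\log(4\langle x\rangle)$ factor at the borderline, exactly as the paper's Lemma~\ref{newt2} does for the second-derivative estimate. Away from $k=n$ your argument is complete.
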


\begin{lemma}\label{newt2}Let $n \geq 3$ and $F : \mathbb{R}^n \to \mathbb{R}$ satisfy $\left|F(x)\right| + \sup_{|x-y| \leq 1}\frac{F(x)-F(y)}{|x-y|^{\alpha_0}}\leq C\langle x\rangle^{-k}$ for some $k > 2$. Then let $u : \mathbb{R}^n \to \mathbb{R}$ be the corresponding Newton potential, i.e.~
\[u\left(x\right) \doteq \int_{\mathbb{R}^n}\left|x-y\right|^{2-n}F\left(y\right)\, dy.\]
Then
\[\left|\partial^2 u\right|(x) + \sup_x\left[\sup_{|y-x| \leq 1}\frac{\left|\partial^2u(y)-\partial^2u(z)\right|}{|y-x|^{\alpha_0}}\right]\leq \]
\[C\left(\sup_{y \in \mathbb{R}^n}\left|\langle y\rangle^kF(y)\right| + \sup_y\left(\langle y\rangle^k\left[\sup_{|y-z| \leq 1}\frac{\left|F(y)-F(z)\right|}{|y-z|^{\alpha_0}}\right]\right)\right)\log\left(4\langle x\rangle\right)\left[\langle x\rangle^{-k} + \langle x\rangle^{-n}\right].\]
\end{lemma}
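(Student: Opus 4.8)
Write $\mathcal{A}_{1}:=\sup_{y}|\langle y\rangle^{k}F(y)|$, $\mathcal{A}_{2}:=\sup_{y}\langle y\rangle^{k}\sup_{|y-z|\le1}|y-z|^{-\alpha_{0}}|F(y)-F(z)|$ and $\mathcal{A}:=\mathcal{A}_{1}+\mathcal{A}_{2}$, so that $\mathcal{A}$ is the quantity multiplying the right-hand side. Recall that $\Delta u=c_{n}F$ for the dimensional constant $c_{n}$ attached to the kernel $|x-y|^{2-n}$. First I would dispose of the bounded region $\langle x\rangle\le 4$: there Lemma~\ref{newt1} gives $\|u\|_{L^{\infty}(B_{2}(x))}\le C\mathcal{A}_{1}$, and standard interior Schauder estimates for Poisson's equation on $B_{2}(x)$ bound $\|\partial^{2}u\|_{L^{\infty}(B_{1}(x))}+[\partial^{2}u]_{C^{0,\alpha_{0}}(B_{1}(x))}$ by $C\mathcal{A}$; since all the weights $\langle x\rangle^{-k},\langle x\rangle^{-n},\log(4\langle x\rangle)$ are comparable to $1$ here, this is the claim. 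Hence assume $\langle x\rangle\ge4$ and set $R:=\langle x\rangle/4\ge1$, so that $\langle y\rangle\sim\langle x\rangle$ on $B_{2R}(x)$.

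The H\"older-seminorm term is handled by \emph{scaled} interior Schauder alone. On $B_{2R}(x)$ one has $\|F\|_{L^{\infty}(B_{2R}(x))}\le C\mathcal{A}_{1}\langle x\rangle^{-k}$ and, crucially, $[F]_{C^{0,\alpha_{0}}(B_{2R}(x))}\le C\mathcal{A}\langle x\rangle^{-k}$ — for pairs at distance $\le1$ this is the hypothesis with weight $\langle y\rangle^{-k}\sim\langle x\rangle^{-k}$, and for pairs at distance in $[1,4R]$ the quotient is bounded by $(|F(y)|+|F(z)|)\le 2\mathcal{A}_{1}\langle x\rangle^{-k}$. Feeding these together with $\|u\|_{L^{\infty}(B_{2R}(x))}\le C\mathcal{A}_{1}(\langle x\rangle^{2-n}+\langle x\rangle^{2-k})$ from Lemma~\ref{newt1} into the scaled estimate
\[
R^{2+\alpha_{0}}[\partial^{2}u]_{C^{0,\alpha_{0}}(B_{R}(x))}\le C\big(\|u\|_{L^{\infty}(B_{2R}(x))}+R^{2}\|F\|_{L^{\infty}(B_{2R}(x))}+R^{2+\alpha_{0}}[F]_{C^{0,\alpha_{0}}(B_{2R}(x))}\big)
\]
yields $[\partial^{2}u]_{C^{0,\alpha_{0}}(B_{1}(x))}\le[\partial^{2}u]_{C^{0,\alpha_{0}}(B_{R}(x))}\le C\mathcal{A}(\langle x\rangle^{-n-\alpha_{0}}+\langle x\rangle^{-k})\le C\mathcal{A}(\langle x\rangle^{-n}+\langle x\rangle^{-k})$, in fact without the logarithm.

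The delicate part is the pointwise bound on $\partial^{2}u(x)$, since the same Schauder estimate only gives $|\partial^{2}u(x)|\lesssim\langle x\rangle^{\alpha_{0}-k}$, losing $\alpha_{0}$ powers of decay; here one must fall back on the Calder\'on--Zygmund representation of $\partial^{2}\Gamma$. Decompose $F=F_{1}+F_{2}$ with $F_{1}:=F\,\mathbf{1}_{B_{2R}(x)}$ and $F_{2}:=F\,\mathbf{1}_{\mathbb{R}^{n}\setminus B_{2R}(x)}$, and $u=u_{1}+u_{2}$ accordingly. Since $F_{2}\equiv0$ on $B_{2R}(x)$, $u_{2}$ is harmonic there and we differentiate under the integral: $|\partial^{2}u_{2}(x)|\le C\mathcal{A}_{1}\int_{|x-y|\ge2R}|x-y|^{-n}\langle y\rangle^{-k}\,dy$, which splitting into $\{|y|\le\langle x\rangle/2\}$ (where $|x-y|\sim\langle x\rangle$) and $\{|y|>\langle x\rangle/2\}$ (where $\langle y\rangle\sim|y|$) is $\le C\mathcal{A}_{1}(\langle x\rangle^{-k}\log(4\langle x\rangle)+\langle x\rangle^{-n})$, the logarithm entering only through $\int_{|y|\le\langle x\rangle/2}\langle y\rangle^{-k}\,dy$ in the borderline case $k=n$. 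For $u_{1}$ — whose density is bounded, compactly supported and locally H\"older — I invoke the classical representation (Gilbarg--Trudinger, Lemma~4.2),
\[
\partial_{ij}u_{1}(x)=\int_{B_{2R}(x)}\partial_{ij}\Gamma(x-y)\big(F(y)-F(x)\big)\,dy-F(x)\int_{\partial B_{2R}(x)}\partial_{i}\Gamma(x-y)\nu_{j}(y)\,dS_{y},
\]
with $\Gamma(z)=c_{n}^{-1}|z|^{2-n}$ and $\partial_{ij}\Gamma$ homogeneous of degree $-n$. The boundary integral is scale-invariant, hence $O(1)$, so that term is $\le C\mathcal{A}_{1}\langle x\rangle^{-k}$; in the bulk integral, split $B_{2R}(x)$ into $\{|y-x|\le1\}$, where $|F(y)-F(x)|\le\mathcal{A}_{2}\langle x\rangle^{-k}|y-x|^{\alpha_{0}}$ and $\int_{|z|\le1}|z|^{\alpha_{0}-n}\,dz<\infty$, contributing $\le C\mathcal{A}_{2}\langle x\rangle^{-k}$, and $\{1\le|y-x|\le2R\}$, where $|F(y)-F(x)|\le C\mathcal{A}_{1}\langle x\rangle^{-k}$ and $\int_{1\le|z|\le2R}|z|^{-n}\,dz=|S^{n-1}|\log(2R)\le C\log(4\langle x\rangle)$, contributing $\le C\mathcal{A}_{1}\langle x\rangle^{-k}\log(4\langle x\rangle)$. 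Summing the three pieces of $\partial_{ij}u_{1}(x)$ and adding the bound for $\partial^{2}u_{2}(x)$ gives $|\partial^{2}u(x)|\le C\mathcal{A}\big(\langle x\rangle^{-k}\log(4\langle x\rangle)+\langle x\rangle^{-n}\big)$, as required. A density argument (replacing $F$ by $F\,\mathbf{1}_{B_{j}}$ and letting $j\to\infty$, all integrals converging because $k>2$) legitimizes both representation formulas for the non-compactly-supported $F$.

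The only genuinely non-mechanical point I expect is precisely this last step: recognizing that the naive Schauder estimate is lossy for the pointwise second-derivative bound, and then keeping careful track of which dyadic shell produces the single logarithmic factor — the near shell $1\le|y-x|\le\langle x\rangle$ always, and, only when $k=n$, also the far field. Everything else (the reduction to $\langle x\rangle\ge4$, the far-field integral, and the H\"older-seminorm estimate) is routine.
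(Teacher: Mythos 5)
Your proof is correct, and since the paper offers no proof of this lemma (it simply defers to ``standard techniques, see \cite{giltru}''), your argument --- scaled interior Schauder for the H\"older seminorm, plus the Gilbarg--Trudinger representation formula for $\partial_{ij}u_{1}$ together with the near/far splitting of the kernel integral to capture the single logarithm --- is exactly the intended route, and you correctly identify the one non-mechanical point (Schauder alone loses $\alpha_{0}$ powers in the pointwise bound). The only spot written more tersely than it deserves is the far-field estimate for $\partial^{2}u_{2}$ on $\{|y|>\langle x\rangle/2,\ |x-y|\ge 2R\}$, where one cannot simply bound $\langle y\rangle^{-k}\le C\langle x\rangle^{-k}$ and integrate $|x-y|^{-n}$ (that diverges logarithmically at infinity); a further split into $|y|\le 2|x|$ (where the annulus in $|x-y|$ has bounded logarithmic length) and $|y|>2|x|$ (where $|x-y|\gtrsim |y|$) closes it and in fact yields $C\langle x\rangle^{-k}$ there with no logarithm.
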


\section{An Implicit Function Theorem for Lipschitz Functions}
The following version of the implicit function theorem is proven in~\cite{halkin}; we provide the proof (of the form that we use above) for completeness.
\begin{theorem}\label{weakimp}
Suppose that $f(x,y) : \{(x,y) : x \in (-1,1), y \in [0,1)\}\to \RR$ is Lipschitz and differentiable at $(0,0)$ with $f(0,0) = 0$ and $\partial_{x}f(0,0) \not = 0$. Then, for some $\eta_{0}>0$, there is a function $X :[0,\eta_{0})\to (-1,1)$, so that $X(0) = 0$, $f(X(y),y) = 0$ for $y\in [0,\eta_{0})$, and so that $X$ is differentiable at $0$.
\end{theorem}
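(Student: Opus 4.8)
The plan is to reduce the statement to a one-dimensional mean-value-theorem argument, since the ``implicit function'' $X$ will be constructed by hand rather than via a fixed-point iteration. First I would normalize: replacing $f$ by $\mp f$ if necessary, assume $a_0 \doteq \partial_x f(0,0) > 0$, and write the differentiability of $f$ at $(0,0)$ as
\[
f(x,y) = a_0 x + b_0 y + R(x,y), \qquad R(x,y) = o\bigl(|x|+|y|\bigr) \text{ as } (x,y)\to(0,0),
\]
where $b_0 \doteq \partial_y f(0,0)$. The idea is that for each small fixed $y \geq 0$, the function $x \mapsto f(x,y)$ is Lipschitz (hence continuous) and, because the linear part $a_0 x$ dominates the remainder $R$ near the origin, it is strictly positive at $x = K y$ and strictly negative at $x = -K y$ for a suitable large constant $K$ (depending on $a_0, b_0$ and the modulus of continuity of $R$). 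Concretely, one picks $\eta_0$ small enough that $|R(x,y)| \leq \tfrac{a_0}{4}(|x|+|y|)$ whenever $|x|+|y| < (K+1)\eta_0$, and then $K$ large enough that $a_0 K - |b_0| - \tfrac{a_0}{4}(K+1) > 0$; with these choices $f(Ky,y) > 0 > f(-Ky,y)$ for $0 < y < \eta_0$. By the intermediate value theorem there is at least one zero of $f(\cdot,y)$ in $(-Ky,Ky)$; define $X(y)$ to be one such zero (e.g. $X(y) \doteq \inf\{x \in (-Ky,Ky) : f(x,y) = 0\}$, which is well-defined by continuity), and set $X(0) \doteq 0$.

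The remaining points are then routine. The identity $f(X(y),y) = 0$ holds by construction for $y \in (0,\eta_0)$ and for $y=0$ since $f(0,0)=0$. The inclusion $X(y) \in (-1,1)$ holds after further shrinking $\eta_0$ so that $K\eta_0 < 1$. For differentiability at $0$: the bound $|X(y)| \leq K y$ already gives $X(y)/y$ bounded, so I would extract the exact derivative from the expansion. Evaluating the expansion along the curve $(X(y),y)$ and using $f(X(y),y)=0$ gives
\[
a_0 X(y) + b_0 y + R(X(y),y) = 0,
\]
hence $X(y) = -\tfrac{1}{a_0}\bigl(b_0 y + R(X(y),y)\bigr)$. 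Since $|X(y)| \leq Ky$, we have $|X(y)| + |y| \leq (K+1)|y|$, so $R(X(y),y) = o(|X(y)|+|y|) = o(|y|)$ as $y \to 0^+$. Dividing by $y$ yields $X(y)/y \to -b_0/a_0$, i.e.\ $X$ is differentiable at $0$ with $X'(0) = -\partial_y f(0,0)/\partial_x f(0,0)$.

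I do not expect a serious obstacle here; the only subtlety — and the one place to be careful — is that without an assumption on $\partial_y f$ (only Lipschitz continuity in $y$ is available, not differentiability, and indeed $X$ itself need not be more than differentiable-at-a-point, let alone $C^1$) one cannot run the usual Banach fixed point / contraction argument to get a continuously differentiable branch. The trick is precisely to avoid that: use only the one-sided sign change plus the intermediate value theorem for existence of the branch, and then extract differentiability at the single point $0$ directly from the first-order expansion of $f$ at $(0,0)$, exploiting the a priori linear bound $|X(y)| \leq Ky$ to control the error term $R(X(y),y)$. One should also note explicitly that the choice of zero $X(y)$ need not be unique and the constructed $X$ need not be continuous away from $0$ — but none of that is needed: the statement only asserts existence of \emph{some} such function, continuity at $0$ (which follows from $|X(y)|\le Ky$), and differentiability at $0$. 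This is exactly the level of regularity used in the application to $\mu^2(a,M(a),\delta)$ in Section~\ref{arrange}, where $a(\delta)$ is only claimed to be differentiable at $\delta = 0$.
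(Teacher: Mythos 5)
Your proof is correct and takes essentially the same approach as the paper's: both establish existence of $X(y)$ by an intermediate-value argument in $x$ (the paper phrases this as a Brouwer fixed point for the one-dimensional self-map $F_y(x)=x-f(x,y)$ on a fixed interval $[-x_0,x_0]$, whereas you exhibit a sign change of $f(\cdot,y)$ on the shrinking interval $[-Ky,Ky]$), and both then deduce differentiability of $X$ at $0$ directly from the first-order expansion of $f$ at the origin, using the a priori linear bound $|X(y)|\lesssim |y|$ to kill the remainder term. The only minor difference is that your shrinking-interval construction yields $|X(y)|\leq Ky$ automatically, whereas the paper derives this linear bound a posteriori from the same expansion before iterating it once more to get $o(|y|)$.
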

\begin{proof}
By scaling, we may ensure that $\partial_{x}f(0,0) = 1$. Then, by definition of the derivative, and Lipschitz continuity, there is $x_{0},y_{0} \in (0,1)$ so that for $x \in (-x_{0},x_{0})$, $y\in[0,y_{0})$
\[
|f(x,0) - x| \leq \frac {x_{0}}{2} \qquad \text{and} \qquad |f(x,0)-f(x,y)| \leq \frac {x_{0}}{2}.
\]
For $y \in [0,y_{0})$, let $F_{y}(x) = x - f(x,y)$. Note that $F_{y}$ is continuous and maps $F_{y} : (-x_{0},x_{0}) \to (-x_{0},x_{0})$, since
\[
|F_{y}(x)| \leq |x-f(x,0)| + |f(x,0) - f(x,y)| \leq a.
\]
The Brouwer fixed point theorem guarantees a fixed point $X(y)$ for $F_{y}$, which clearly satisfies $f(X(y),y) = 0$.

Because $f$ is differentiable at $(0,0)$, we have that
\[
|X(y) + \partial_{y}f(0,0) y|  = |X(y) + \partial_{y}f(0,0) y - f(X(y),y) | \leq o(|X(y)| + |y|)
\]
as $y\to 0$. This implies that $X(y)$ is Lipschitz at $0$, and then using this in the inequality, we find that
\[
|X(y) + \partial_{y}f(0,0) y|  \leq o(|y|),
\]
showing that $X(y)$ is differentiable at $0$.
\end{proof}


\begin{theindex}

  \item Coordinates
    \subitem $K_{\minus}$, \hyperpage{59}
    \subitem $\beta $, \hyperpage{13}
    \subitem $\chi$, \hyperpage{14}
    \subitem $\chi'$, \hyperpage{15}
    \subitem $\mathcal{M}$, \hyperpage{13}
    \subitem $\mathscr{A}$, \hyperpage{13}
    \subitem $\mathscr{A}_{N}$, \hyperpage{13}
    \subitem $\mathscr{A}_{S}$, \hyperpage{13}
    \subitem $\mathscr{B}$, \hyperpage{13}
    \subitem $\mathscr{H}$, \hyperpage{13}
    \subitem $\overline  {\mathscr  {B}_{A}}$, \hyperpage{15}
    \subitem $\overline  {\mathscr  {B}_{H}}$, \hyperpage{15}
    \subitem $\overline  {\mathscr  {B}_{N}}$, \hyperpage{15}
    \subitem $\overline  {\mathscr  {B}_{S}}$, \hyperpage{15}
    \subitem $\overline{\mathscr{B}}$, \hyperpage{13}
    \subitem $\rho$, \hyperpage{12, 13}
    \subitem $\tilde r$, \hyperpage{12}
    \subitem $\tilde{\mathcal{M}}$, \hyperpage{71}
    \subitem $f_{\mathbb{R}^{4}}$, \hyperpage{16}
    \subitem $p_{N}$, \hyperpage{15}
    \subitem $p_{S}$, \hyperpage{15}
    \subitem $r$, \hyperpage{17}
    \subitem $s$, \hyperpage{14}
    \subitem $s'$, \hyperpage{15}
    \subitem $z$, \hyperpage{12, 13}
    \subitem Boyer--Lindquist, \hyperpage{12}

  \indexspace

  \item Function Spaces
    \subitem $C^{k,\alpha}_{0,\rm axi}\left(\overline{\mathscr{B}}\right) $, 
		\hyperpage{15}
    \subitem $C^{k,\alpha}_{\rm axi}\left(\overline{\mathscr{B}}\right)$, 
		\hyperpage{15}
    \subitem $L_{u}^{p}$, \hyperpage{37}
    \subitem $L_{v}^{p}$, \hyperpage{37}
    \subitem $W^{k,p}_{\rm axi}\left(\overline{\mathscr{B}}\right)$, 
		\hyperpage{15}
    \subitem $\dot{W}^{k,p}_{\rm axi}\left(\overline{\mathscr{B}}\right)$, 
		\hyperpage{15}
    \subitem $\hat{C}^{\infty}$, \hyperpage{16}
    \subitem $\hat{C}^{\infty}_{0}$, \hyperpage{16}
    \subitem $\hat{C}^{k,\alpha}_{0,\rm axi}\left(\overline{\mathscr{B}}\right)$, 
		\hyperpage{16}
    \subitem $\hat{C}^{k,\alpha}_{\rm axi}\left(\overline{\mathscr{B}}\right)$, 
		\hyperpage{16}
    \subitem $\hat{C}^{k,\alpha}_{\rm m,azi}\left(\overline{\mathscr{B}}\right)$, 
		\hyperpage{69}
    \subitem $\hat{C}^{k}$, \hyperpage{16}
    \subitem $\hat{C}^{k}_{0}$, \hyperpage{16}
    \subitem $\hat{W}^{k,p}_{\rm axi}\left(\overline{\mathscr{B}}\right)$, 
		\hyperpage{16}
    \subitem $\hat{\dot{W}}_{\rm axi}^{k,p}\left(\overline{\mathscr{B}}\right)$, 
		\hyperpage{16}
    \subitem $\mathcal{L}_{B}$, \hyperpage{30}
    \subitem $\mathcal{L}_{X,Y}$, \hyperpage{30}
    \subitem $\mathcal{L}_{X}$, \hyperpage{30}
    \subitem $\mathcal{L}_{Y}$, \hyperpage{30}
    \subitem $\mathcal{L}_{\Theta}$, \hyperpage{31}
    \subitem $\mathcal{L}_{\lambda}$, \hyperpage{31}
    \subitem $\mathcal{L}_{\mu^{2}}$, \hyperpage{31}
    \subitem $\mathcal{L}_{\psi}$, \hyperpage{31}
    \subitem $\mathcal{L}_{\sigma}$, \hyperpage{29}
    \subitem $\mathcal{N}_{B}$, \hyperpage{30}
    \subitem $\mathcal{N}_{X,Y}$, \hyperpage{30}
    \subitem $\mathcal{N}_{X}$, \hyperpage{30}
    \subitem $\mathcal{N}_{Y}$, \hyperpage{30}
    \subitem $\mathcal{N}_{\Theta}$, \hyperpage{31}
    \subitem $\mathcal{N}_{\sigma}$, \hyperpage{29}

  \indexspace

  \item Metric Quantities
    \subitem $B$, \hyperpage{24}
    \subitem $B^{(A)}$, \hyperpage{25}
    \subitem $B^{(N)}$, \hyperpage{25}
    \subitem $B^{(S)}$, \hyperpage{25}
    \subitem $H^{(1)}_{B}$, \hyperpage{33}
    \subitem $H^{(2)}_{B}$, \hyperpage{33}
    \subitem $H^{(3)}_{B}$, \hyperpage{33}
    \subitem $H_{X}$, \hyperpage{35}
    \subitem $H_{Y}$, \hyperpage{35}
    \subitem $H_{\Theta}$, \hyperpage{46}
    \subitem $H_{\sigma}$, \hyperpage{31}
    \subitem $M$, \hyperpage{26}
    \subitem $T$, \hyperpage{12}
    \subitem $V$, \hyperpage{22}
    \subitem $V_{K}$, \hyperpage{12}
    \subitem $W$, \hyperpage{22}
    \subitem $W_{K}$, \hyperpage{12}
    \subitem $X$, \hyperpage{22}
    \subitem $X_{K}$, \hyperpage{12}
    \subitem $Y$, \hyperpage{23, 24}
    \subitem $\Delta $, \hyperpage{12}
    \subitem $\Phi$, \hyperpage{12}
    \subitem $\Pi $, \hyperpage{12}
    \subitem $\Psi$, \hyperpage{22}
    \subitem $\Sigma ^{2}$, \hyperpage{12}
    \subitem $\beta_{2}$, \hyperpage{62}
    \subitem $\check{d}_{A}$, \hyperpage{38}
    \subitem $\gamma$, \hyperpage{13}
    \subitem $\hat\Psi$, \hyperpage{73}
    \subitem $\hat{\mu}^{2}$, \hyperpage{50}
    \subitem $\kappa$, \hyperpage{67}
    \subitem $\lambda _{K}$, \hyperpage{13}
    \subitem $\lambda$, \hyperpage{22}
    \subitem $\mathfrak{A}$, \hyperpage{57}
    \subitem $\mathfrak{L}(\mathring\lambda)$, \hyperpage{63}
    \subitem $\mathfrak{W}$, \hyperpage{72}
    \subitem $\mathfrak{a}$, \hyperpage{48}, \hyperpage{57}
    \subitem $\mathfrak{b}$, \hyperpage{60}
    \subitem $\mathfrak{c}$, \hyperpage{66}
    \subitem $\mathring X$, \hyperpage{25}
    \subitem $\mathring Y$, \hyperpage{25}
    \subitem $\mathring \Theta$, \hyperpage{25}
    \subitem $\mathring \omega$, \hyperpage{26}, \hyperpage{48}
    \subitem $\mathring{\sigma}$, \hyperpage{25}
    \subitem $\mu^{2}$, \hyperpage{25}
    \subitem $\mu_{K}$, \hyperpage{25}
    \subitem $\omega$, \hyperpage{22}
    \subitem $\psi$, \hyperpage{25}
    \subitem $\sigma$, \hyperpage{22}
    \subitem $\theta$, \hyperpage{22}
    \subitem $\tilde Y$, \hyperpage{38}
    \subitem $\tilde r_{\pm}$, \hyperpage{12}
    \subitem $\tilde{d}_{A}$, \hyperpage{38}
    \subitem $\tilde{e}_N$, \hyperpage{39}
    \subitem $\tilde{e}_{A}$, \hyperpage{38}
    \subitem $a$, \hyperpage{26}
    \subitem $g_{(a,M)}$, \hyperpage{12}
    \subitem $h$, \hyperpage{17}
    \subitem $m$, \hyperpage{22}, \hyperpage{26}, \hyperpage{56}
    \subitem $x_{k}$, \hyperpage{18}
    
   \indexspace
    
  \item Miscellaneous
    \subitem $A$, \hyperpage{59}
    \subitem $B_{r}(\mathcal{B})$, \hyperpage{20}
    \subitem $\alpha_{0}$, \hyperpage{29}, \hyperpage{56}
    \subitem $\chi_{N}$, \hyperpage{16}
    \subitem $\chi_{S}$, \hyperpage{16}
    \subitem $\delta$, \hyperpage{26}
    \subitem $\hat{\partial}f$, \hyperpage{16}
    \subitem $\langle r \rangle$, \hyperpage{36}
    \subitem $\mathbb{R}^7_A$, \hyperpage{38}
    \subitem $\mathbb{R}^{4}_{N}$, \hyperpage{39}
    \subitem $\mathcal{L}(\mathring X,\mathring Y)$, \hyperpage{35}
    \subitem $\mathcal{L}_{0}(\mathring X,\mathring Y)$, \hyperpage{36}
    \subitem $\mathfrak{I}$, \hyperpage{65}
    \subitem $\mathfrak{f}$, \hyperpage{65}
    \subitem $\mathscr{L}_{\mu^{2}}$, \hyperpage{49}
    \subitem $\nu$, \hyperpage{50}
    \subitem $\partial f$, \hyperpage{16}
    \subitem $\underline{\partial}f$, \hyperpage{16}
    \subitem $b$, \hyperpage{59}

\end{theindex}

\bibliography{bib}

\providecommand{\bysame}{\leavevmode\hbox to3em{\hrulefill}\thinspace}
\providecommand{\MR}{\relax\ifhmode\unskip\space\fi MR }
\providecommand{\MRhref}[2]{%
  \href{http://www.ams.org/mathscinet-getitem?mr=#1}{#2}
}
\providecommand{\href}[2]{#2}
\begin{thebibliography}{10}

\bibitem{agmon}
S.~Agmon, \emph{Bounds on exponential decay of eigenfunctions of
  {S}chr{\"o}dinger operators}, Schr{\"o}dinger operators, Springer, 1985,
  pp.~1--38.

\bibitem{aik}
S.~Alexakis, A.~Ionescu, and S.~Klainerman, \emph{Hawking's local rigidity
  theorem without analyticity}, Geometric and Functional Analysis \textbf{20}
  (2010), no.~4, 845--869.

\bibitem{aik2}
\bysame, \emph{Uniqueness of smooth stationary black holes in vacuum: small
  perturbations of the {K}err spaces}, Communications in Mathematical Physics
  \textbf{299} (2010), no.~1, 89--127.

\bibitem{periodic}
S.~Alexakis and V.~Schule, \emph{Non-existence of time-periodic vacuum
  spacetimes}, to appear in J.\ Diff.\ Geom., available at
  \url{http://arxiv.org/abs/1504.04592} (2015).

\bibitem{unique}
Spyros Alexakis, Volker Schlue, and Arick Shao, \emph{Unique continuation from
  infinity for linear waves}, Adv. Math. \textbf{286} (2016), 481--544.
  \MR{3415691}

\bibitem{anblue}
Lars Andersson and Pieter Blue, \emph{Hidden symmetries and decay for the wave
  equation on the {K}err spacetime}, Ann. of Math. (2) \textbf{182} (2015),
  no.~3, 787--853. \MR{3418531}

\bibitem{aretakisKerr}
S.~Aretakis, \emph{Decay of axisymmetric solutions of the wave equation on
  extreme {K}err backgrounds}, Journal of Functional Analysis \textbf{263}
  (2012), no.~9, 2770--2831.

\bibitem{aretakisHor}
Stefanos Aretakis, \emph{Horizon instability of extremal black holes}, Adv.
  Theor. Math. Phys. \textbf{19} (2015), no.~3, 507--530. \MR{3418509}

\bibitem{bartnik}
R.~Bartnik, \emph{The mass of an asymptotically flat manifold}, Comm. Pure and
  Appl. Math. \textbf{39} (1986), no.~5, 661--693.

\bibitem{EYMnum}
R.~Bartnik and J.~McKinnon, \emph{Particlelike solutions of the
  {E}instein--{Y}ang--{M}ills equations}, Phys. Rev. Lett. \textbf{61} (1988),
  no.~2, 141.

\bibitem{bchr}
C.~Benone, L.~Crispino, C.~Herdeiro, and E.~Radu, \emph{Kerr--{N}ewman scalar
  clouds}, Phys. Rev. D \textbf{90} (2014).

\bibitem{bw}
P.~Bizo{\'n} and A.~Wasserman, \emph{On existence of mini-boson stars}, Comm.
  Math Phys. \textbf{215} (2000), no.~2, 357--373.

\bibitem{bhr}
Y.~Brihaye, C.~Herdeiro, and E.~Radu, Phys. Lett. B \textbf{739} (2014), 1--7.

\bibitem{super}
R.~Brito, V.~Cardoso, and P.~Pani, \emph{Superradiance}, 2015.

\bibitem{bunting}
G.~Bunting, \emph{Proof of the uniqueness conjecture for black holes}, Ph.D.
  thesis, University of New England, 1983.

\bibitem{carter3}
B.~Carter, \emph{Axisymmetric black hole has only two degrees of freedom},
  Phys. Rev. Lett. \textbf{26} (1971), no.~6, 331.

\bibitem{HBH:geometric}
O.~Chodosh and Y.~Shlapentokh-Rothman, \emph{Stationary axisymmetric black
  holes with matter}, preprint (2015).

\bibitem{cklinear}
D.~Christodoulou and S.~Klainerman, \emph{Asymptotic properties of linear field
  equations in {M}inkowski space}, Comm. Pure and Appl. Math. \textbf{43}
  (1990), no.~2, 137--199.

\bibitem{ck}
\bysame, \emph{The global nonlinear stability of the {M}inkowski space},
  Princeton University Press, Princeton, 1993.

\bibitem{chrusciellopes}
P.~Chru{\'s}ciel and J.~Costa, \emph{On uniqueness of stationary vacuum black
  holes}, Ast\'erisque (2008), no.~321, 195--265.

\bibitem{reviewunique}
P.~Chru{\'s}ciel, J.~Costa, and M.~Heusler, \emph{Stationary black holes:
  uniqueness and beyond}, Living Rev. Relativity \textbf{15} (2012), no.~7.

\bibitem{costa}
J.~Costa, \emph{On black hole uniqueness theorems}, Ph.D. thesis, Oxford
  University, 2010.

\bibitem{chrr}
Pedro V.~P. Cunha, Carlos A.~R. Herdeiro, Eugen Radu, and Helgi~F. R\'unarsson,
  \emph{Shadows of kerr black holes with scalar hair}, Phys. Rev. Lett.
  \textbf{115} (2015), 211102.

\bibitem{bholescatter}
M.~Dafermos, G.~Holzegel, and I.~Rodnianski, \emph{A scattering theory
  construction of dynamical vacuum black holes}, to appear in J.\ Diff.\ Geom.,
  available at \url{http://arxiv.org/abs/1306.5364} (2013).

\bibitem{dr7}
M.~Dafermos and I.~Rodnianski, \emph{Decay for solutions of the wave equation
  on {K}err exterior spacetimes {I--II}: {T}he cases $|a|\ll {M}$ or
  axisymmetry}, preprint, available at \url{http://arxiv.org/abs/1010.5132}
  (2010).

\bibitem{icmp}
\bysame, \emph{A new physical-space approach to decay for the wave equation
  with applications to black hole spacetimes}, X{VI}th {I}nternational
  {C}ongress on {M}athematical {P}hysics, World Sci. Publ., Hackensack, NJ,
  2010, pp.~421--432.

\bibitem{stabi}
\bysame, \emph{The black hole stability problem for linear scalar
  perturbations}, Proceedings of the Twelfth Marcel Grossmann Meeting on
  General Relativity, T. Damour et al (ed.), World Scientific, Singapore,
  available at \url{http://arxiv.org/abs/1010.5137} (2011), 132--189.

\bibitem{scatter}
M.~Dafermos, I.~Rodnianski, and Y.~Shlapentokh-Rothman, \emph{A scattering
  theory for the wave equation on {K}err black hole exteriors}, to appear in
  Ann.\ Sci.\ \'ec.\ Norm. Sup\'er., available at
  \url{http://arxiv.org/abs/1412.8379} (2014).

\bibitem{waveKerr}
Mihalis Dafermos, Igor Rodnianski, and Yakov Shlapentokh-Rothman, \emph{Decay
  for solutions of the wave equation on {K}err exterior spacetimes {III}: {T}he
  full subextremal case {$|a|<M$}}, Ann. of Math. (2) \textbf{183} (2016),
  no.~3, 787--913. \MR{3488738}

\bibitem{dain1}
S.~Dain, \emph{Angular-momentum-mass inequality for axisymmetric black holes},
  Phys. Rev. Lett. \textbf{96} (2006), 101101.

\bibitem{dain2}
\bysame, \emph{Proof of the angular momentum-mass inequality for axisymmetric
  black holes}, J. Diff. Geom. \textbf{79} (2008), 33--67.

\bibitem{ddr}
T.~Damour, N.~Deruelle, and R.~Ruffini, \emph{On {Q}uantum {R}esonances in
  {S}tationary {G}eometries}, Lett. Al Nuovo Cimento \textbf{15} (1976), no.~8,
  257--262.

\bibitem{det}
S.~Detweiler, \emph{Klein--{G}ordon equation and rotating black holes}, Phys.
  Rev. D \textbf{22} (1980), no.~10, 2323--2326.

\bibitem{dyatlov2}
S.~Dyatlov, \emph{Exponential energy decay for {K}err--de {S}itter black holes
  beyond event horizons}, Math. Res. Lett. \textbf{18} (2011), no.~5,
  1023--1035.

\bibitem{dyatlov1}
\bysame, \emph{Quasi-normal modes and exponential energy decay for the
  {K}err-de {S}itter black hole}, Comm. Math. Phys. \textbf{306} (2011), no.~1,
  119--163.

\bibitem{dyatlov-last}
\bysame, \emph{Asymptotics of linear waves and resonances with applications to
  black holes}, Comm. Math. Phys. \textbf{335} (2015), no.~3, 1445--1485.

\bibitem{eelslemaire}
J.~Eells and L.~Lemaire, \emph{Selected topics in harmonic maps}, vol.~50,
  American Mathematical Society, Providence, RI, 1983.

\bibitem{FKSY}
F.~Finster, N.~Kamran, J.~Smoller, and S.-T. Yau, \emph{A rigorous treatment of
  energy extraction from a rotating black hole}, Comm. Math. Phys. \textbf{287}
  (2009), no.~3, 829--847.

\bibitem{flp}
R.~Friedberg, T.-D. Lee, and Y.~Pang, \emph{Mini-soliton stars}, Phys. Rev. D
  \textbf{35} (1987), 3640--3657.

\bibitem{flp2}
\bysame, \emph{Scalar soliton stars and black holes}, Phys. Rev. D \textbf{35}
  (1987), 3658--3677.

\bibitem{giltru}
D.~Gilbarg and N.~Trudinger, \emph{Elliptic partial differential equations of
  second order}, Springer-Verlag, Berlin, 2001.

\bibitem{halkin}
H.~Halkin, \emph{Implicit functions and optimization problems without
  continuous differentiability of the data}, SIAM J. Control \textbf{12}
  (1974), 229--236, Collection of articles dedicated to the memory of Lucien W.
  Neustadt. \MR{0406524 (53 \#10311)}

\bibitem{hawking}
S.~Hawking, \emph{Black holes in general relativity}, Comm. Math. Phys.
  \textbf{25} (1972), 152--166.

\bibitem{hr}
C.~Herdeiro and E.~Radu, \emph{Ergosurfaces for {K}err black holes with scalar
  hair}, Phys. Rev. D \textbf{89} (2014).

\bibitem{hairy}
\bysame, \emph{Kerr black holes with scalar hair}, Phys. Rev. Lett.
  \textbf{112} (2014), 221101.

\bibitem{hrr}
C.~Herdeiro, E.~Radu, and H.~R\'{u}narsson, \emph{Non-linear {Q}-clouds around
  {K}err black holes}, Phys. Lett. B \textbf{739} (2014).

\bibitem{holz-smul}
G.~Holzegel and J.~Smulevici, \emph{Decay properties of {K}lein-{G}ordon fields
  on {K}err-{A}d{S} spacetimes}, Comm. Pure Appl. Math. \textbf{66} (2013),
  no.~11, 1751--1802.

\bibitem{holz-smul2}
\bysame, \emph{Quasimodes and a lower bound on the uniform energy decay rate
  for {K}err-{A}d{S} spacetimes}, Anal. PDE \textbf{7} (2014), no.~5,
  1057--1090.

\bibitem{ionkla}
Alexandru~D. Ionescu and Sergiu Klainerman, \emph{On the global stability of
  the wave-map equation in {K}err spaces with small angular momentum}, Ann. PDE
  \textbf{1} (2015), no.~1, Art. 1, 78. \MR{3479066}

\bibitem{israel}
W.~Israel, \emph{Event horizons in static vacuum space-times}, Phys. Rev.
  \textbf{164} (1967), 1776--1779.

\bibitem{israel2}
\bysame, \emph{Event horizions in static electrovac space-times}, Comm. Math.
  Phys. \textbf{8} (1968), 245--260.

\bibitem{kaup}
D.~Kaup, \emph{Klein--{G}ordon {G}eons}, Phys. Rev. \textbf{172} (1968),
  1331--1342.

\bibitem{lp}
T.-D. Lee and Y.~Pang, \emph{Stability of {Mini-Boson} stars}, Nuclear Phys. B
  \textbf{315} (1989), 477--516.

\bibitem{bosonreview}
S.~Liebling and C.~Palenzuela, \emph{Dynamical boson stars}, Living Rev.
  Relativity \textbf{15} (2012).

\bibitem{mazur}
P.~Mazur, \emph{Proof of uniqueness of the {Kerr-Neuman} black hole solution},
  J. Math. Phys. \textbf{15} (1982), 3173--3180.

\bibitem{EYMprove}
J.~Mcleod, J.~Smoller, A.~Wasserman, and S.-T. Yau, \emph{Smooth static
  solutions of the {E}instein/{Y}ang-{M}ills equations}, Comm. Math. Phys.
  \textbf{143} (1991), no.~1, 115--147.

\bibitem{pressteuk}
W.~Press and S.~Teukolsky, \emph{Floating {O}rbits, {S}uperradiant {S}cattering
  and the {B}lack-hole {B}omb}, Nature \textbf{238} (1972), 211--212.

\bibitem{robinson}
D.~Robinson, \emph{Uniqueness of the {K}err black hole}, Phys. Rev. Lett.
  \textbf{34} (1975), 905--906.

\bibitem{rufbon}
R.~Ruffini and S.~Bonazzola, \emph{Systems of self-gravitating particles in
  general relativity}, Phys. Rev. \textbf{187} (1969), 1767--1783.

\bibitem{schoenzhou}
R.~Schoen and X.~Zhou, \emph{Convexity of reduced energy and mass angular
  momentum inequalities}, Ann. Henri Poincar\'e \textbf{14} (2013), no.~7,
  1747--1773.

\bibitem{rotbos}
F.~Schunk and E.~Mielke, \emph{Rotating boson stars}, iRelativity and
  Scientific Computing: Computer Algebra, Numerics, Visualization, 152nd
  WE-Heraeus seminar on Relativity and Scientific computing, Bad Honnef,
  Germany, Septeber 18 \textbf{22} (1995), 138--151.

\bibitem{shlapgrow}
Y.~Shlapentokh-Rothman, \emph{Exponentially growing finite energy solutions for
  the {K}lein-{G}ordon equation on sub-extremal {K}err spacetimes}, Comm. Math.
  Phys. \textbf{329} (2014), no.~3, 859--891.

\bibitem{realmodes}
\bysame, \emph{Quantitative mode stability for the wave equation on the {K}err
  spacetime}, Ann. Henri Poincar\'e \textbf{16} (2015), no.~1, 289--345.

\bibitem{EYMblack}
J.~Smoller, A.~Wasserman, and S.-T. Yau, \emph{Existence of black hole
  solutions for the {E}instein-{Y}ang/{M}ills equations}, Comm. Math. Phys.
  \textbf{154} (1993), no.~2, 377--401.

\bibitem{tattoh}
D.~Tataru and M.~Tohaneanu, \emph{A local energy estimate on {K}err black hole
  backgrounds}, Int. Math. Res. Not. IMRN (2011), no.~2, 248--292.

\bibitem{vasy}
A.~Vasy, \emph{Microlocal analysis of asymptotically hyperbolic and {K}err-de
  {S}itter spaces (with an appendix by {S}emyon {D}yatlov)}, Invent. Math.
  \textbf{194} (2013), no.~2, 381--513.

\bibitem{wald}
R.~Wald, \emph{General relativity}, University of Chicago Press, Chicago, IL,
  1984.

\bibitem{weinstein}
G.~Weinstein, \emph{On rotating black holes in equilibrium in general
  relativity}, Comm. Pure Appl. Math. \textbf{43} (1990), no.~7, 903--948.

\bibitem{weinstein2}
\bysame, \emph{The stationary axisymmetric two-body problem in general
  relativity}, Comm. Pure Appl. Math. \textbf{45} (1992), no.~9, 1183--1203.

\bibitem{wongyu}
W.~Wong and P.~Yu, \emph{Non-existence of multiple-black-hole solutions close
  to {K}err-{N}ewman}, Comm. Math. Phys. \textbf{325} (2014), no.~3, 965--996.

\bibitem{zeldovich}
Y.~Zeldovich, \emph{Generating of {W}aves by a {R}otating {B}ody}, ZhETF
  \textbf{14} (1971), 180--181.

\bibitem{ze}
T.~Zourous and D.~Eardley, \emph{Instabilities of massive scalar perturbations
  of a rotating black hole}, Ann. of Phys. \textbf{118} (1979), no.~1,
  139--155.

\end{thebibliography}
\bibliographystyle{amsplain}

\end{document}